\newcommand{\bra}[1]{{\left\langle{#1}\right\vert}}
\newcommand{\ket}[1]{{\left\vert{#1}\right\rangle}}
\newcommand{\qw}[1][-1]{\ar @{-} [0,#1]}
\newcommand{\qwx}[1][-1]{\ar @{-} [#1,0]}
\newcommand{\gate}[1]{*+<.6em>{#1} \POS ="i","i"+UR;"i"+UL **\dir{-};"i"+DL **\dir{-};"i"+DR **\dir{-};"i"+UR **\dir{-},"i" \qw}
\newcommand{\meter}{*=<1.8em,1.4em>{\xy ="j","j"-<.778em,.322em>;{"j"+<.778em,-.322em> \ellipse ur,_{}},"j"-<0em,.4em>;p+<.5em,.9em> **\dir{-},"j"+<2.2em,2.2em>*{},"j"-<2.2em,2.2em>*{} \endxy} \POS ="i","i"+UR;"i"+UL **\dir{-};"i"+DL **\dir{-};"i"+DR **\dir{-};"i"+UR **\dir{-},"i" \qw}
\newcommand{\control}{*!<0em,.025em>-=-<.2em>{\bullet}}
\newcommand{\ctrl}[1]{\control \qwx[#1] \qw}
\newcommand{\targ}{*+<.02em,.02em>{\xy ="i","i"-<.39em,0em>;"i"+<.39em,0em> **\dir{-}, "i"-<0em,.39em>;"i"+<0em,.39em> **\dir{-},"i"*\xycircle<.4em>{} \endxy} \qw}
\newcommand{\rstick}[1]{*!L!<-.5em,0em>=<0em>{#1}}
\newcommand{\lstick}[1]{*!R!<.5em,0em>=<0em>{#1}}
\newcommand{\Qcircuit}{\xymatrix @*=<0em>}
\newcommand{\href}[1]{#1} 
\let\origdoublepage\cleardoublepage
\newcommand{\clearemptydoublepage}{%
  \clearpage{\pagestyle{empty}\origdoublepage}}
\let\cleardoublepage\clearemptydoublepage
\newtheorem{theorem}{Theorem}
\numberwithin{theorem}{chapter}
\newtheorem{corollary}[theorem]{Corollary}
\newtheorem{lemma}[theorem]{Lemma}
\newtheorem{definition}[theorem]{Definition}
\newtheorem{alg}[theorem]{Algorithm}
\newtheorem{problem}[theorem]{Problem}
\newcommand{\myparagraph}[1]{\paragraph{#1.}}
\newcommand{\eps}{\varepsilon}
\newcommand{\epssdp}{\varepsilon_{\rm sdp}}
\newcommand{\braket}[2]{\langle #1|#2\rangle}
\newcommand{\ketbra}[2]{\ket{#1}{\bra{#2}}}
\newcommand{\lmin}[1] {\lambda_{\operatorname{min}}(#1)}
\newcommand{\CQ}{\mathcal{CQ}}
\newcommand{\lh}{\operatorname{LH}}
\newcommand{\flh}{\operatorname{5-LH}}
\newcommand{\klhh}{\operatorname{k-LH}}
\newcommand{\qma}{\operatorname{QMA}}
\newcommand{\enc}[1]{\left<#1\right>}
\newcommand{\beq}{\begin{equation}}
\newcommand{\eeq}{\end{equation}}
\newcommand{\trace}{{\rm Tr}}
\newcommand{\norm}[1]{\left\|\,#1\,\right\|}       
\newcommand{\pnorm}[1]{\left\|\,#1\,\right\|_p}       
\newcommand{\onorm}[1]{\norm{#1}_{\mathrm{1}}}      
\newcommand{\enorm}[1]{\norm{#1}_{\mathrm{2}}}      
\newcommand{\trnorm}[1]{\norm{#1}_{\mathrm {tr}}}  
\newcommand{\fnorm}[1]{\norm{#1}_{\mathrm {F}}}    
\newcommand{\snorm}[1]{\norm{#1}_{\mathrm {\infty}}}    
\newcommand{\set}[1]{{\left\{#1\right\}}}    
\newcommand{\ve}[1]{\mathbf{#1}}
\newcommand{\abs}[1]{\left\lvert #1 \right\rvert}
\newcommand{\optprod}{\OPT_P}
\newcommand{\optt}{\operatorname{OPT_2}}
\newcommand{\poly}{\operatorname{poly}}
\newcommand{\cc}{d^{\frac{k}{2}}}
\newcommand{\OPT}{{\rm OPT}}
\newcommand{\QMA}{{\rm QMA}}
\newcommand{\NP}{{\rm NP}}
\newcommand{\PH}{{\rm PH}}
\newcommand{\BPP}{{\rm BPP}}
\newcommand{\BQP}{{\rm BQP}}
\newcommand{\complex}{{\mathbb C}}
\newcommand{\reals}{{\mathbb R}}
\newcommand{\ints}{{\mathbb Z}}
\newcommand{\nats}{{\mathbb N}}
\newcommand{\spa}[1]{\mathcal{#1}}
\newcommand{\dens}{\mathcal{D}(\spa{A}\otimes\spa{B})}
\newcommand{\LL}{\mathcal{L}}
\newcommand{\DD}{\mathcal{D}}
\newcommand{\HH}{\mathcal{H}}
\newcommand{\UU}{\mathcal{U}}
\newcommand{\klh}{MAX-$k$-local Hamiltonian}
\mathchardef\mhyphen="2D
\newcommand{\ayes}{A_{\rm yes}} 
\newcommand{\ano}{A_{\rm no}} 
\newcommand{\nl} {\mathcal{L}_1}
\newcommand{\nll} {\mathcal{L}_2}
\newcommand{\SSC}{{\rm SUCCINCT~SET~COVER}}
\newcommand{\QSSC}{{\rm QUANTUM~SUCCINCT~SET~COVER}}
\newcommand{\IRR}{{\rm IRREDUNDANT}}
\newcommand{\QIRR}{{\rm QUANTUM~IRREDUNDANT}}
\newcommand{\QMMWW}{{\rm QUANTUM~MONOTONE~MINIMUM~WEIGHT~WORD}}
\newcommand{\cqsn}[1]{{\rm cq}\mhyphen\Sigma_{#1}}
\newcommand{\qcsn}[1]{{\rm qc}\mhyphen\Sigma_{#1}}
\newcommand{\cqs}{\cqsn{2}}
\newcommand{\cqslh}[1]{{\rm cq}\mhyphen\Sigma_2{\rm LH}}
\newcommand{\ssat}[1]{{\Sigma_{#1} {\rm SAT}}}
\newcommand{\qcs}{\qcsn{2}}
\newcommand{\ccs}{{\rm cc}\mhyphen\Sigma_2}
\newcommand{\qqs}{{\rm qq}\mhyphen\Sigma_2}
\newcommand{\st}{\s{2}}
\newcommand{\s}[1]{\Sigma_{#1}^p}
\newcommand{\XX}{\rm cQMA}
\newcommand{\hin}{H_{\rm in}}
\newcommand{\hprop}{H_{\rm prop}}
\newcommand{\hout}{H_{\rm out}}
\newcommand{\hstab}{H_{\rm stab}}
\newcommand{\hist}{\Pi_{\rm hist}}
\newcommand{\shist}{\spa{S}_{\rm hist}}
\newcommand{\class}[1]{\textup{#1}}
\newcommand{\ppoly}{\class{P}_{\rm/poly}}
\newcommand{\histstate}{\ket{\psi}_{\rm hist}}
\newcommand{\histstatecq}[2]{\ket{#1,#2}_{\rm hist}}
\newcommand{\histstateketbra}{\ketbra{\psi}{\psi}_{\rm hist}}
\newcommand{\histstatecqketbra}[2]{\ketbra{#1,#2}{#1,#2}_{\rm hist}}
\newcommand{\cqma}{\rm cQMA}
\newcommand{\cqslhmin}{{\rm cq}\mhyphen\Sigma_2{\rm LH}\mhyphen{\rm HW}}
\newenvironment{mylist}[1]{\begin{list}{}{
	\setlength{\leftmargin}{#1}
	\setlength{\rightmargin}{0mm}
	\setlength{\labelsep}{2mm}
	\setlength{\labelwidth}{8mm}
	\setlength{\itemsep}{0mm}}}
	{\end{list}}
\newcommand{\ip}[2]{\left\langle #1 , #2\right\rangle}
\newcommand{\tr}{\trace}
\newcommand{\setft}[1]{\mathrm{#1}}
\newcommand{\density}[1]{\setft{D}\left(#1\right)}
\newcommand{\pos}[1]{\setft{Pos}\left(#1\right)}
\newcommand{\sep}[1]{\setft{Sep}\left(#1\right)}
\newcommand{\inner}[2]{\langle #1, #2 \rangle}
\newcommand{\calA}{\mathcal{A}}
\newcommand{\R}{\mathbb{R}}
\newcommand{\Span}{\mathrm{Span}}
\newcommand{\interior}{\mathrm{int}}
\newcommand{\Pos}{\mathrm{Pos}}
\newcommand{\Sep}{\mathrm{Sep}}
\newcommand{\E}{\mathbb{E}}
\def\I{I}
\def\({\left(}
\def\){\right)}
\def\X{\mathcal{X}}
\def\Y{\mathcal{Y}}
\def\Z{\mathcal{Z}}
\def\yes{\text{yes}}
\def\no{\text{no}}
\def\poly{\textup{poly}}
\def\blog{\textup{log}}
\newcommand{\be}{\begin{equation}}
\newcommand{\ee}{\end{equation}}
\newcommand{\ben}{\begin{eqnarray}}
\newcommand{\een}{\end{eqnarray}}
\newcommand{\bes}{\begin{subequations}}
\newcommand{\ees}{\end{subequations}}
\newcommand{\dg}{\dagger}
\newcommand{\pmsmt}{\{\Pi^A_j\}}
\newcommand{\ro}{\rho}
\newcommand{\rof}{\ro_f}
\newcommand{\dqc}{\rho_{DQC1}}
\newcommand{\disc}{\delta}
\newcommand{\discm}{\delta(\ro)}
\newcommand{\gdisc}{\delta_G(\ro)}
\newcommand{\rdisc}{\delta_R(\ro)}
\newcommand{\fu}{\operatorname{d}(\ro,U_A)}
\newcommand{\fua}[2]{\operatorname{d}(#1,#2)}
\newcommand{\fum}{\operatorname{d}_{\operatorname{max}}(\ro)}
\newcommand{\fuma}[1]{\operatorname{d}_{\operatorname{max}}(#1)}
\newcommand{\fumdqc}{\operatorname{d}_{\operatorname{max}}(\dqc)}
\newcommand{\proj}[1]{\mbox{$|#1\rangle \!\langle #1 |$}}
\newcommand{\pr}{\operatorname{Pr}}
\newcommand{\IM}{\mathcal{I}}
\newcommand{\JM}{\mathcal{J}}
\newcommand{\ox}{\otimes}
\newcommand{\II}{I}
\newcommand{\wek}[1]{{{#1}}}
\newcommand{\ot}[0]{\otimes}
\newcommand{\Tr}{{\rm Tr}}
\newcommand{\cB}{b}
\newcommand{\cBB}{\mathcal{B}}
\newcommand{\cN}{\mathcal{N}}
\newcommand{\CC}{{{\mathbb C}}}
\newcommand{\EE}{{{\mathbb E}}}
\newcommand{\A}{\spa{A}}
\newcommand{\B}{\spa{B}}
\newcommand{\UA}{U_A}
\newcommand{\UAv}{U^A_{\ve{v}}}
\newcommand{\denstn}{\mathcal{D}(\complex^2\otimes\complex^N)}
\newcommand{\denstt}{\mathcal{D}(\complex^2\otimes\complex^2)}
\newcommand{\densdd}{\mathcal{D}(\complex^d\otimes\complex^d)}
\newcommand{\HERM}{\mathcal{H}(\A\otimes\B)}
\newcommand{\proja}{\set{\Pi_j^A}}
\newcommand{\projv}{\set{\Pi_j^A}_{\ve{v}}}
\newcommand{\rous}{{\rm RU}(\A)}
\begin{document}

\pagestyle{empty}
\pagenumbering{roman}

\begin{titlepage}
        \begin{center}
        \vspace*{1.0cm}

        \Huge
        {\bf Approximation, Proof Systems, and Correlations in a Quantum World }

        \vspace*{1.0cm}

        \normalsize
        by \\

        \vspace*{1.0cm}

        \Large
        Sevag Gharibian \\

        \vspace*{3.0cm}

        \normalsize
        A thesis \\
        presented to the University of Waterloo \\
        in fulfillment of the \\
        thesis requirement for the degree of \\
        Doctor of Philosophy \\
        in \\
        Computer Science \\

        \vspace*{2.0cm}

        Waterloo, Ontario, Canada, 2012 \\

        \vspace*{1.0cm}

        \copyright\ Sevag Gharibian 2012 \\
        \end{center}
\end{titlepage}

\pagestyle{plain}
\setcounter{page}{2}

\cleardoublepage 

  \noindent
I hereby declare that I am the sole author of this thesis. This is a true copy of the thesis, including any required final revisions, as accepted by my examiners.

  \bigskip

  \noindent
I understand that my thesis may be made electronically available to the public.

\cleardoublepage


\begin{center}\textbf{Abstract}\end{center}

This thesis studies three topics in quantum computation and information: (1) The approximability of ``inherently quantum'' problems, (2) quantum proof systems, and (3) non-classical correlations in quantum systems. Our results in each area are summarized as follows.

Our first area of study concerns the approximability of computational problems which are complete for quantum complexity classes. In the classical setting, the study of approximation algorithms and hardness of approximation is one of the main research areas of theoretical computer science. Yet, little is known regarding approximability in the setting of quantum computational complexity. Our first result (joint work with Julia Kempe) is a polynomial-time approximation algorithm for dense instances of the canonical QMA-complete quantum constraint satisfaction problem, the local Hamiltonian problem. Our second result (joint work with Julia Kempe) goes in the opposite direction by first introducing a quantum generalization of the polynomial-time hierarchy. We then introduce problems which are not only complete for the second level of this hierarchy, but are in fact hard to approximate.

Our second area of study concerns quantum proof systems. Here, an interesting question which remains open despite much effort is whether a proof system with multiple unentangled quantum provers is equal in expressive power to a proof system with a single quantum prover (i.e.\ is QMA(poly) equal to QMA?). Our results here (joint work with Jamie Sikora and Sarvagya Upadhyay) study variants of this question. We first show that if each unentangled prover has logarithmic size proofs, then this is equivalent to having a single quantum prover which sends a classical proof. We then show that a variant of the class BellQMA(poly) collapses to QMA. Finally, we give an alternate proof of the fact [Harrow and Montanaro, FOCS, p. 633--642 (2010)] that the class SepQMA(m) (which is equivalent to QMA(m)) admits perfect parallel repetition. Our alternate proof is novel in that it is based on cone programming duality.

Our final area of study concerns non-classical correlations in quantum systems. Specifically, in recent years it has come to light that there appear to be genuinely quantum correlations in mixed quantum states beyond entanglement which may nevertheless prove useful from a computing and information theoretic perspective. Our first result in this area (joint work with Animesh Datta) motivates the study of such correlations by exploring possible connections to the quantum task of locking of classical correlations [DiVincenzo \emph{et al.}, PRL 92, 067902 (2004)] and the DQC1 model of mixed-state quantum computing [Knill and Laflamme, PRL 81, 5672 (1998)]. Our second result in this area introduces a novel scheme for quantifying non-classical correlations based on the use of local unitary operations. Our third result (joint work with Marco Piani, Gerardo Adesso, John Calsamiglia, Pawe\l~Horodecki, and Andreas Winter) introduces and studies a protocol through which non-classical correlations in a starting system can be ``activated'' into distillable entanglement with an ancilla system. Surprisingly, we find that, according to the non-classicality measures derived from our protocol, mixed entangled states can be ``more non-classical'' than pure entangled states. Finally, our last result (joint work with Marco Piani, Gerardo Adesso, John Calsamiglia and Pawe\l~Horodecki) continues the study of the activation protocol above by determining when the entanglement generated with the ancilla can be mapped back onto the starting state via entanglement swapping.

\cleardoublepage


\begin{center}\textbf{Acknowledgements}\end{center}

\begin{quote}\emph{I'd like to congratulate myself, and thank myself, and give myself a big pat on the back.}\\--- Dee Dee Ramone, Rock and Roll Hall of Fame induction ceremony, 2002~\cite{DeeDee}.
\end{quote}

There are many greats in this world who have the ability to inspire and support us, whether they be artists, academics, or those we hold dear. I am indebted to the following people who have played such a role during the course of my graduate studies, without whom this thesis would not have been possible.

First, I would like to thank the readers of my thesis: Richard Cleve, Debbie Leung, Ashwin Nayak, Barbara Terhal, and John Watrous. Thank you for agreeing to take on this task; I hope it does not prove too painful.

I would like to thank my thesis advisory committee, Richard Cleve, Ashwin Nayak, and John Watrous, for their guidance and feedback, particularly in times when I have been wrong, and stubbornly so at that. I have always appreciated their constructive comments, and contrary to popular belief, feel that the more embarassing the mistake revealed by their criticism, the less likely I am to repeat the blunder in the future.

I am indebted to my supervisor, Richard Cleve, for his unfailing support over the years, whether in terms of research or at a personal level. His demand for research excellence, precision, and moral steadfastness has greatly inspired and helped guide me over the years. I may (hopefully) be leaving Waterloo having gained a Ph.D., but I will be missing a good friend.

I am also ever grateful to Julia Kempe, who has in many ways acted as a second unofficial advisor for me. Her unwavering belief in me and constant push for success has had a profound effect on my development. Coupled with her sincere hospitality, I could not imagine asking for a better host for a student on exchange. In this vein, I must also thank Oded Regev, who has also played the great host and conversation partner; his input into research projects and conference talk preparations has proven invaluable.

Though neither official nor unofficial supervisors of mine, I am also indebted to Marco Piani and John Watrous. I cannot recall any instance in which either of them has turned down an opportunity to answer one of my many questions; in this and other ways, their perspectives on research have been a significant influence on me.

I would like to thank my co-authors who have been a part of the research behind this thesis: Gerardo Adesso, John Calsamiglia, Animesh Datta, Pawe\l~Horodecki, Julia Kempe, Marco Piani, Jamie Sikora, Sarvagya Upadhyay, and Andreas Winter. It has been an honor working with and learning from you.

Over my time at Waterloo, I have been lucky enough to have had a circle of great friends. At some point it was decided that, having used the words ``Hamiltonian'' and ``ground state energy'' one time too many, that I had become a physicist, and a doodle of ``photon Sev'' mysteriously appeared on my office wall. Thank you for the great times, they will be sorely missed.

I am always grateful to my family, who has tirelessly supported and believed in me. Without their love and care, I would not and could not be where I am today.

Finally, words cannot express my gratitude to my wife, Mareike M\"{u}ller. Together we lived in a ``rabbit box'' on campus for four years working on our Ph.D.'s. With any other person in such constantly close proximity, I think I would have lost my mind. But with her, it was a joy. Thank you for the wonderful experience, love, and support.

\vspace{2mm}
\noindent\emph{Financial support.} I would like to thank the following agencies and programs for their funding support over the course of my Ph.D. studies: Natural Sciences and Engineering Research Council of Canada (NSERC), NSERC Michael Smith Foreign Study Supplement program, David R. Cheriton Scholarship program, EU-Canada Exchange program, the Institute for Quantum Computing at the University of Waterloo, and the Graduate Studies Office at the University of Waterloo.

\vspace{2mm}
\noindent\emph{The reader is referred to the end of each chapter for chapter-specific acknowledgements.}

\cleardoublepage


\begin{center}\textbf{Dedication}\end{center}

To my family for their love and support, the foundation upon which all other success can be built.

\cleardoublepage

\renewcommand\contentsname{Table of Contents}
\tableofcontents
\cleardoublepage
\phantomsection





\pagenumbering{arabic}

\chapter{Introduction}\label{chap:intro}

\begin{quote}
    \emph{The ``paradox'' is only a conflict between reality and your feeling of what reality ``ought to be.''} --- Richard Feynman, 1964~\cite{F64}.
\end{quote}

From its earliest days, the theory of quantum mechanics puzzled its inventors. In 1935, for example, Einstein, Podolsky, and Rosen published their now famous paper rejecting quantum mechanics as a complete physical theory~\cite{EPR35}. The problem? The mathematical theory of quantum mechanics predicts certain physical phenomena which are completely at odds with our everyday understanding of the world around us. To put this into everyday language, in 1935 Schr\"{o}dinger proposed~\cite{S35} a thought experiment now known as \emph{Schr\"{o}dinger's cat}, in which under certain circumstances, a cat in a closed box is predicted by quantum mechanics to be both alive and dead, \emph{at the same time}. What could this mean? And how much did it trouble the discoverers of quantum mechanics, if it led them to ask questions such as:
\begin{quote}
    \emph{I recall that during one walk Einstein suddenly stopped, turned to me and asked whether I really believed that the moon exists only when I look at it.}\\--- Abraham Pais~\cite{einstein}.
\end{quote}
Clearly, quantum mechanics was not an easy pill to swallow, even for the fathers of the theory, many of whom rejected their beautiful child at the time.

Fast forwarding to the end of the 20th century, however, physicists and computer scientists came to a startling realization: As strange as quantum mechanics may seem, if its peculiarities could somehow be computationally harnessed, then the possibility of outperforming classical computers with so-called \emph{quantum} computers may indeed exist. In 1982, for example, physicist Richard Feynman proposed~\cite{F82} the notion of building a quantum computer in order to simulate physical quantum systems faster then apparently possible with a classical computer (see also Benioff~\cite{B80,B82_1,B82_2}). On the computer science side, in 1985 David Deutsch demonstrated a quantum algorithm which outperformed the best possible classical deterministic algorithms for what is now referred to as \emph{Deutsch's problem}~\cite{D85}. Thus, the roots of the field of quantum computation were sown. Two and a half decades later, we now have a number of good reasons for seriously devoting research effort to the field of quantum computing, which we now discuss.

\paragraph{Relevance.} We now state three reasons which, in our opinion, justify the study of quantum computation and information. The first is from an engineering-oriented perspective. Up until 2005, the speed of microprocessors increased rapidly, primarily through the brute force approach of increasing the number of transistors able to fit on a single microchip. Indeed, Intel's original Pentium P5 processor, released in 1993, had a clock speed of 60 MHz, and consisted of 3.1 million transistors~\cite{intel}. By 2005, Intel's Pentium 4E Prescott processor was up to 3.8 GHz, and packed in a whopping 169 million transistors. Yet, in 2005, something curious happened: Intel introduced its first \emph{dual-core} chip, the Pentium D Smithfield, which clocked in not at 3.8 GHz, but at a \emph{slower} 3.2 Ghz. What happened? It turns out that the brute force approach to building faster processors has a number of seemingly fundamental problems, such as excess heat production and energy loss~\cite{wierd}; however, the primary problem of interest in this thesis is that at the scale current microchip components are approaching, the pertinent laws of physics are no longer those of classical mechanics, but rather those of \emph{quantum} mechanics~\cite{berk}. This raises the natural question: \emph{Why not just build a computer which works based on the laws of quantum mechanics to begin with, i.e.\ a \emph{quantum} computer?}

The second motivation for studying quantum computing, and perhaps the most commonly cited one, came with a startling discovery: Peter Shor's quantum factoring algorithm of 1994~\cite{S97}. As whether the question of whether factoring large integers can be done efficiently on a classical computer has long been open, Shor's algorithm is in itself arguably a strong indication that the quantum computational model is indeed one deserving of study. Further, since the algorithm's inception, a number of other instances of quantum speedup have been uncovered, from Grover's algorithm for unstructured search~\cite{G96} (which yields a square root speedup for NP-complete problems over the brute force approach) to the evaluation of NAND trees~\cite{FGG08,ACRSZ07,CCJY09} to estimating quantities related to solving systems of linear equations~\cite{HHL09}, among others.


The reasons stated thus far, however, are rather ``selfish'', aiming to exploit quantum mechanics to serve the purpose of the computer science community. There is another view regarding the study of quantum computing which follows the converse mantra: \emph{Ask not what quantum mechanics can do for you, but what you can do for quantum mechanics.} Indeed, as computation is inherently physical, it follows that understanding the limits of quantum computation yields new tools for studying the properties of quantum mechanics itself. A primary example of this, discussed further in Section~\ref{0_sscn:LH}, is that via quantum complexity theory, one can give a rigorous proof that a significant problem in quantum mechanics, that of estimating the ground state energy of a given local Hamiltonian, cannot be solved efficiently (modulo standard complexity theoretic conjectures). Thus, the third reason for studying quantum computation is that it not only allows us to learn about the limits of computing, but also of physics itself. Moreover, there has even been a \emph{pedagogical} benefit to physics from quantum computing; apparently, there is a growing movement to replace the teaching of introductory quantum mechanics using, say, the model of the hydrogen atom, with the simpler model of quantum bits and quantum computation~\cite{AB90} (see Chapter Notes and History for Chapter 10 therein).

In closing, we have provided three motivations for studying quantum computing from engineering, computer science, and physics standpoints. In practice, however, it is of course not until a thorough study of quantum computing is undertaken that we will know the precise extent to which the field will prove relevant, particularly from a practical technological perspective. Such uncertainty lies unfortunately (or fortunately, for the adventurous type) at the very heart of the nature of our work as researchers. In the words of one of our greats:
\begin{quote}
    \emph{If we knew what it was we were doing, it wouldn't be called `research', would it?} --- Albert Einstein~\cite{einstein}.
\end{quote}
\paragraph{Focus of this thesis.} The field of quantum computation and information nowadays covers a broad expanse of topics, with research areas ranging from computer-science-motivated topics such as quantum algorithms and quantum proof systems, to engineering or experimental physics-oriented topics such as how to actually build a quantum computer in a lab, to theoretical-physics-motivated topics such as the limits of physical theories and the correlations between systems they allow. In this thesis, we focus on three particular areas of interest: Approximation of quantum problems, quantum proof systems, and quantum correlations. We briefly describe each area below. As each (research) chapter is intended to be as self-contained as possible, we defer more in-depth introductions to the beginning of each relevant chapter.

Our first area of interest is that of approximating quantum problems. Here, by a \emph{quantum} problem, we are referring to a computational problem which is in some sense intrinsically related to physical quantum systems in nature. From a complexity theoretic perspective, we define such problems as those which are complete for quantum complexity classes. (Relevant quantum complexity classes are defined in Section~\ref{0_scn:complexity}.) In particular, the canonical quantum problem generalizing classical constraint satisfaction which we are interested in here is called the local Hamiltonian problem, and it is complete for a quantum generalization of NP. (This problem is important from both a quantum complexity theoretic and physics point of view, and as such is given a thorough treatment in Section~\ref{0_def:localhamiltonian}.) The primary aim of our research in this area is to ask how well such problems can be \emph{approximated rigorously}, in the well-studied classical sense of \emph{approximation algorithms} and \emph{hardness of approximation}~\cite{V01}. In the quantum complexity theoretic setting, this approach to approximating physically relevant quantum problems is very much in its infancy, and it complements decades of effort by the physics community on similar problems using different tools involving heuristics (see e.g.~\cite{O11} for a brief survey). Based on joint work with Julia Kempe, Chapters~\ref{chap:approx} and~\ref{chap:hardnessapprox} discuss our results in this area, the first of which is a positive result regarding approximation algorithms for the local Hamiltonian problem, and the second of which is a negative result involving hardness of approximation for a new quantum complexity class generalizing the second level of the well-known polynomial-time hierarchy, $\st$.

Our second area of interest deals with quantum proof systems. In the classical setting, proof systems are one of the cornerstones of complexity theory, with wide-ranging impact from the theory of NP-completeness~\cite{C72,L73} to the stunning PCP theorem~\cite{AS98,ALMSS98} of the early 1990's. It is thus natural to consider studying \emph{quantum} proof systems, beginning with a quantum generalization of NP called Quantum Merlin Arthur (QMA). However, just as quantum mechanics offers new quantum phenomena to be harnessed for the purpose of computation, such phenomena now play intriguing roles in quantum proof systems. In particular, their presence can turn trivial questions in the classical setting into highly non-trivial questions in the quantum setting. For example, in the classical setting, modifying NP to allow multiple provers is straightforwardly equivalent in expressive power to the original definition of NP, since a single prover can straightforwardly simulate multiple provers. However, the question of whether QMA with multiple provers is equal to QMA is very challenging, due to the possible presence of strong correlations between quantum systems known as \emph{quantum entanglement}. In joint work with Jamie Sikora and Sarvagya Upadhyay, Chapter~\ref{chap:qmapoly} studies variants of this stubbornly open question.

Our final area of interest is the study of quantum correlations. As mentioned when discussing quantum proof systems above, a pair of quantum systems can display very strong correlations known as entanglement, which is a purely quantum phenomenon; such correlations are not possible in the classical setting. As a testament to the mysterious nature of quantum mechanics, however, after nearly a century of study, it has only been in recent years that a new type of purely quantum correlation has been identified, known simply as \emph{non-classical} correlations. Some of the biggest questions in this area are how to quantify and provide operational interpretations for such correlations, as well as to understand whether and how they may be exploited for computational gain. In joint work with Animesh Datta, Chapter~\ref{chap:dqc} studies the role of such correlations in quantum computation. Chapter~\ref{chap:localunitary} then proposes and studies a novel approach for quantifying such non-classical correlations. Finally, Chapters~\ref{chap:activation} (joint work with Marco Piani, Gerardo
Adesso, John Calsamiglia, Pawe\l~Horodecki, and Andreas Winter) and~\ref{chap:entanglementswap} (joint work with Marco Piani, Gerardo
Adesso, John Calsamiglia, and Pawe\l~Horodecki) introduce and study a new protocol which provides an operational interpretation for non-classical correlations by \emph{activating} them into entanglement.

\section{Organization} This thesis is organized as follows. In the remainder of this section, we provide background on the basics of quantum computation and information (Section~\ref{0_scn:basics}), and follow with brief technical expositions of the various topics studied in this thesis: Quantum computational complexity theory (Section~\ref{0_scn:complexity}) and quantum entanglement and non-classical correlations (Section~\ref{0_scn:nonclassicality}). 

The remaining chapters are focused as follows. Chapters~\ref{chap:approx} and~\ref{chap:hardnessapprox} study the approximability of quantum complexity theoretic problems, such as the local Hamiltonian problem and its variants. Specifically, Chapter~\ref{chap:approx} presents our approximation algorithm for the local Hamiltonian problem. Chapter~\ref{chap:hardnessapprox} then introduces our quantum generalization of $\st$, and shows completeness and hardness of approximation for it with respect to new local Hamiltonian-like quantum covering problems we define.

Chapter~\ref{chap:qmapoly} discusses our results regarding multi-prover quantum proof systems, showing that in a certain setting, multiple quantum provers are no more powerful than a single prover.

Chapters~\ref{chap:dqc},~\ref{chap:localunitary},~\ref{chap:activation},~\ref{chap:entanglementswap} discuss non-classical correlations in quantum systems beyond entanglement. Specifically, Chapter~\ref{chap:dqc} first motivates this direction of work by studying models of quantum computing and communication where entanglement does not seem to explain the advantage gained in the quantum setting over classical computation. Chapter~\ref{chap:localunitary} then presents a novel approach for quantifying non-classical correlations in quantum systems based on local unitary operations. Chapter~\ref{chap:activation} gives an operational interpretation to such non-classical correlations by demonstrating an explicit protocol through which such correlations can be ``activated'' into entanglement. Chapter~\ref{chap:entanglementswap} further studies and attempts to extend the framework of the activation protocol of Chapter~\ref{chap:activation}.

We now begin in Section~\ref{0_scn:notation} by collecting common notation used throughout this thesis.

\section{Notation}\label{0_scn:notation}
The following notation is assumed throughout this thesis. The symbols $\complex$, $\reals$, $\ints$, and $\nats$ denote the sets of complex, real, integer, and natural numbers, respectively. For $m$ a positive integer, the notation $[m]$ indicates the set $\{1, \ldots, m\}$. The terms $\LL(\spa{X})$, $\HH(\spa{X})$, $\pos{\X}$, and $\DD(\spa{X})$ denote the sets of linear, Hermitian, positive semidefinite, and density operators acting on complex Euclidean space $\spa{X}$, respectively. The projector onto space $\spa{X}$ is denoted $\Pi_{\spa{X}}$. We sometimes use the shorthand $\B:=\complex^2$. The notation $A\succeq B$ means operator $A-B$ is positive semidefinite. The smallest (largest) eigenvalue of $A\in \HH(\X)$ is given by $\lambda_{\operatorname{min}}(A)$ ($\lambda_{\operatorname{max}}(A)$). The trace, Frobenius, and spectral (or operator) norms of $A\in \LL(\spa{X})$ are defined as
\begin{equation}
 \trnorm{A}:=\trace\left(\sqrt{A^\dagger A}\right),\quad\quad
 \fnorm{A}:=\sqrt{\trace(A^\dagger A)},\quad\quad
 \snorm{A}:=\max_{\ket{x}\in\spa{X}\mbox{ s.t. } \enorm{x}= 1}\enorm{A\ket{x}},
\end{equation}
respectively, where $:=$ denotes a definition. The $(m,n)$th entry of matrix $A$ is given by $A(m,n)$. We define the \emph{encoding} or \emph{description} of a matrix $A$ as a classical description of the entries of $A$. Specifically, let $\enc{\Delta}$ denote the number of bits used to encode $\Delta\in\complex$ to some desired precision. Then, we define the length of the encoding of $A$ by $\enc{A} := \sum_{m,n} \enc{A(m,n)}$. We extend this straightforwardly to sums of matrices; for example, $\enc{\sum_i A_i}=\sum_i\enc{A_i}$. The notation $\ve{v}$ denotes a vector. Unless otherwise noted, all logarithms are taken to base two. We sometimes use the shorthand $\poly(n)$ to mean $p(n)$ for some {fixed} polynomial $p$.

\section{Linear algebra}\label{0_scn:linalg}

We now briefly review basic concepts from linear algebra crucial to the content of this thesis. Parts of this section follow the course notes of Watrous~\cite{W08_2,W08_3}; the reader is also referred to the text of Horn and Johnson~\cite{HJ90} for further details. Those familiar with basic linear algebra can safely skim over this section or refer to it as needed.

\paragraph{Complex Euclidean spaces.} The setting in which all the excitement takes place is that of a complex Euclidean space $\spa{X}$, defined as follows. Let $\Sigma$ be a finite, non-empty set. Consider the set of all functions from $\Sigma$ to the complex numbers $\complex$, denoted $\complex^\Sigma$. Then, define for any $\ve{u},\ve{v}\in\complex^\Sigma$ and $\alpha\in\complex$ the addition and scalar multiplication operations in the standard way: The addition $\ve{u}+\ve{v}\in\complex^\Sigma$ obeys $(\ve{u}+\ve{v})(i)=\ve{u}(i)+\ve{v}(i)$ for all $i\in \Sigma$, and scalar multiplication $\alpha\ve{u}\in\complex^\Sigma$ obeys $(\alpha \ve{u})(i)=\alpha\ve{u}(i)$ for all $i\in\Sigma$. Then, the set $\complex^\Sigma$ along with these operations is known as a complex Euclidean space, which we denote as $\spa{X}$. The dimension of $\spa{X}$ is given by $\abs{{\Sigma}}$, the cardinality of $\Sigma$. For concreteness, we henceforth assume $\Sigma=[d]$ for $[d]:=\set{1,2,\ldots,d}$, and use the simplified notation $\complex^\Sigma=\complex^d$.

We think of (column) vectors $\ve{v}\in\spa{X}$ as $d$-tuples, i.e.\
\begin{equation}
    \ve{v}=\left(
             \begin{array}{c}
               v(1) \\
               \vdots \\
               v(d) \\
             \end{array}
           \right)
\end{equation}
for $v(i)\in\complex$. In quantum computation, $\ve{v}$ is commonly denoted using $\ket{v}$. Here, $\ket{\cdot}$ is called Dirac notation, also sometimes affectionately known as ``dog-houses'' for vectors~\cite{Wj99}. A remark about vector notation: Generally, our choice of notation $\ve{v}$ or $\ket{v}$ will be dictated by context. For example, when a vector is to be interpreted as a quantum state, we shall use Dirac notation $\ket{v}$; otherwise, we typically revert to the notation $\ve{v}$. An exception to this rule, even in purely linearly algebraic contexts, is when it is more convenient to use Dirac notation, such as when vectors are to be labeled by complicated expressions. In much of the introductory discussion on linear algebra that follows, we assume $\ve{v}=\ket{v}$ holds for the pedagogic purpose of familiarizing the reader with Dirac notation. However, in general this equality is not assumed to hold; for example, the zero vector $\ve{0}$ is not equal to $\ket{0}=(1,0)^T$. We hope the distinction will be clear from context.


Continuing, the conjugate transpose of $\ve{v}$ is denoted $\ve{v}^\dagger$, or $\bra{v}$ in Dirac notation, and is the row vector
\begin{equation}
    \ve{v}^\dagger = \bra{v} = \left( \overline{v(1)}, \overline{v(2)},\ldots, \overline{v(d)} \right),\label{0_eqn:bra}
\end{equation}
for $\overline{a}$ the complex conjugate of $a\in\complex$.

\paragraph{Vector norms.} For any two vectors $\ve{v},\ve{w}\in\spa{X}$, we define their inner product as
\begin{equation}
    \langle\ve{v},\ve{w}\rangle=\ve{v}^\dagger\ve{w}=\braket{v}{w}=\sum_{i=1}^d \overline{v(i)}w(i).
\end{equation}
Then, we measure the length of $\ve{v}\in\complex^d$ via the \emph{Euclidean norm}, defined as $\enorm{\ve{v}}=\sqrt{\langle\ve{v},\ve{v}\rangle}$. The Euclidean norm is just one of an entire class of norms known as \emph{p}-norms, defined for $p\in[1,\infty)$ such that
\begin{equation}
    \norm{\ve{v}}_p := \left(\sum_{i=1}^d \abs{v(i)}^p\right)^{\frac{1}{p}},
\end{equation}
and for $p=\infty$ as
$
    \norm{\ve{v}}_\infty := \left(\max_{i\in[d]}\abs{v(i)}\right).
$
Note that setting $p=2$ yields the Euclidean norm. The $p$-norms have the following properties:
\begin{enumerate}
    \item (Positive scalability) $\norm{a\ve{v}}_p= \abs{a}\pnorm{\ve{v}}$ for $a\in\complex$.
    \item (Triangle inequality) For any $\ve{v},\ve{w}\in\spa{X}$, $\norm{\ve{v}+\ve{w}}_p\leq \norm{\ve{v}}_p+\norm{\ve{w}}_q$.
    \item For $\ve{v}\in\spa{X}$, if $\norm{\ve{v}}=0$, then $\ve{v}=\ve{0}$, where $\ve{0}$ denotes the zero vector whose entries are all zero.
\end{enumerate}
From the first two properties, we conclude that for all $\ve{v}\in\spa{X}$, $\pnorm{\ve{v}}\geq 0$, since
\begin{equation}
    0 = \abs{0}\pnorm{\ve{0}}=\pnorm{0\cdot\ve{0}}=\pnorm{\ve{0}}=\norm{\ve{v}-\ve{v}}_p \leq \norm{\ve{v}}_p+\pnorm{-\ve{v}}\leq 2\pnorm{\ve{v}}.
\end{equation}

A useful inequality regarding inner products is the H\"{o}lder inequality, which states that for any $\ve{v},\ve{w}\in\X$,
\begin{equation}
    \abs{\langle\ve{v},\ve{w}\rangle}\leq \pnorm{\ve{v}}\norm{\ve{w}}_q
\end{equation}
for $\frac{1}{p} + \frac{1}{q}=1$. (For $p=1$, $q=\infty$.) When $p=q=2$, we recover the Cauchy-Schwarz inequality. As a testament to the applicability of the latter, we show that $\onorm{v}\leq\sqrt{d}\enorm{v}$, a frequently useful inequality. Let $\ve{j}$ be the $d$-dimensional all-ones vector and $\abs{\ve{v}}$ the entry-wise absolute value of $\ve{v}$. Then:
\begin{equation}
    \onorm{v}=\langle\ve{j},\abs{\ve{v}}\rangle\leq \abs{\langle\ve{j},\abs{\ve{v}}\rangle}\leq \enorm{\ve{j}}\enorm{\abs{\ve{v}}}=\sqrt{d}\enorm{\ve{v}}.
\end{equation}
It also holds that $\enorm{\ve{v}}\leq\sqrt{d}\snorm{\ve{v}}$, and conversely that $\onorm{\ve{v}}\geq \enorm{\ve{v}}\geq\snorm{\ve{v}}$.

\paragraph{Orthonormal bases.} A set of vectors $\set{\ve{v}_i}\subseteq\spa{X}$ is \emph{orthogonal} if for all $i\neq j$, $\langle\ve{v}_i,\ve{w}_j\rangle=0$, and \emph{orthonormal} if $\langle\ve{v}_i,\ve{w}_j\rangle=\delta_{ij}$. Here, $\delta_{ij}$ is the Kroenecker delta, whose value is $1$ if $i=j$ and $0$ otherwise. Every complex Euclidean space $\spa{X}$ of dimension $d$ has an orthonormal \emph{basis} consisting of $d$ elements, where a basis is a set of vectors $\set{\ve{v}_i}\subseteq\spa{X}$ such that any $\ve{w}\in\spa{X}$ can be expressed as
\begin{equation}
    \ve{w}=\sum_{i=1}^d \alpha_i\ve{v}_i
\end{equation}
for some $\set{\alpha_i}\subseteq \complex$. A common basis for $\spa{X}$ is the \emph{computational} or \emph{standard} basis $\set{\ve{e}_i}$, defined such that $\ve{e}_i(j)=\delta_{ij}$. In Dirac notation, we frequently denote this basis simply as $\set{\ket{i}}_{i=1}^d$.

\paragraph{Linear operators and matrices.} Given two complex Euclidean spaces $\spa{X}$ and $\spa{Y}$, a \emph{linear operator} or \emph{linear map} from $\spa{X}$ to $\spa{Y}$ is a map $\Phi:\spa{X}\mapsto\spa{Y}$ with the property that
\begin{equation}
\Phi\left(\sum_i\alpha_i\ve{v}_i\right)=\sum_i\alpha_i \Phi(\ve{v}_i),
\end{equation}
where $\set{\ve{v_i}}\subseteq\spa{X}$. The set of all such linear maps from $\spa{X}$ to $\spa{Y}$ is denoted $\mathcal{L}(\spa{X},\spa{Y})$, which when coupled with operations for addition and scalar multiplication in the standard way, yields a vector space of dimension $\dim(\spa{X})\dim(\spa{Y})$. Here, $\dim(\spa{X})$ is the dimension of $\spa{X}$. For brevity, we use the shorthand $\mathcal{L}(\spa{X})$ to mean $\mathcal{L}(\spa{X},\spa{X})$.

A convenient way to represent and study linear maps is via their matrix representation. Here, an $m\times n$ \emph{matrix} $A$ is a two-dimensional array of complex numbers whose $(i,j)$th entry is denoted $A(i,j)\in\complex$ for $i\in[m]$, $j\in[n]$. To represent a linear map $\Phi:\complex^n\mapsto\complex^m$ as an $m\times n$ matrix $A_\Phi$, recall that the action of a map is completely specified by its action on a basis. Specifically, the $i$th column of $A_\Phi$ is given by $\Phi(\ve{e}_i)$ for $\set{\ve{e}_i}$ the standard basis for $\complex^n$, or
\begin{equation}
    A_\Phi = \left[
               \begin{array}{cccc}
                 \Phi(\ve{e}_1), \Phi(\ve{e}_2), \ldots, \Phi(\ve{e}_m) \\
               \end{array}
             \right].
\end{equation}
Recovering $\Phi$ from $A_\Phi$ thus also follows immediately from this view. When we henceforth discuss $A\in\LL(\spa{X})$, we are implicitly referring to the matrix representation of map $A$.

The product $AB$ of two $d\times d$ matrices $A$ and $B$ is defined such that
\begin{equation}
    AB(i,j)=\langle\overline{\ve{r}}^A_{i},\ve{c}^{B}_{j}\rangle
\end{equation}
for $\ve{r}^A_{i}$ the $i$th row of $A$ and $\ve{c}^B_{j}$ the $j$th column of $B$. In general, it is not true that $AB=BA$. The difference $AB-BA$ is called the \emph{commutator} $[A,B]$ of $A$ and $B$, and the \emph{anti-commutator} is $\set{A,B}=AB+BA$.

The \emph{rank} of $A\in\LL(\X,\Y)$ is the dimension of its \emph{image}, where the latter is defined as $\operatorname{Im}(A):=\set{\ve{y}\in\Y\mid \ve{y}=A\ve{x}\mbox{ for some }\ve{x}\in\X}$. The rank satisfies
\begin{equation}
    \operatorname{rank}(AB)\leq\min\set{\operatorname{rank}(A),\operatorname{rank}(B)}.
\end{equation}
Defining the \emph{null space} or \emph{kernel} of $A\in\LL(\X)$ as $\operatorname{Ker}(A):=\set{\ve{v}\in\X\mid A\ve{v}=0}$, it holds that $\operatorname{dim}(\operatorname{Ker}(A))+\operatorname{dim}(\operatorname{Im}(A))=d$.

\paragraph{Eigenvalues and eigenvectors.} For any $A\in \LL(\spa{X})$, we say $\ve{v}$ is an \emph{eigenvector} of $A$ with \emph{eigenvalue} $\lambda$ if $\ve{v}\neq\ve{0}$ and $A\ve{v}=\lambda A$. The multiset of eigenvalues of $A$ (with multiplicity) is known as its \emph{spectrum}. The eigenvalues of $A$ arise as the roots of the degree-$d$ \emph{characteristic polynomial} of $A$, $p_A$, defined such that
\begin{equation}
    p_A(x) := \det(x I-A),
\end{equation}
where $I(i,j):=\delta_{ij}$ is the \emph{Identity} matrix and $\det$ is the \emph{determinant}. One way to define the latter, known as the Laplace expansion, is via the recursive definition
\begin{equation}
    \det(A) = \sum_{j=1}^d (-1)^{i+j}A(i,j)\operatorname{det}(A_{ij}).
\end{equation}
Here, $A_{ij}$ is the matrix obtained from $A$ by deleting row $i$ and column $j$, and we define the base case of this recursion (i.e.\ a $1\times 1$ matrix $[c]$) as $\operatorname{det}([c])=c$. This equation holds for any $i\in[d]$.

\paragraph{Matrix operations.} A number of operations on matrices $A\in\spa{X}$ arise repeatedly in quantum computing. First, the \emph{complex conjugate}, \emph{transpose} and \emph{adjoint} operations are respectively defined via
\begin{equation}
    \overline{A}(i,j):=\overline{(A(i,j))}\quad\quad\quad A^T(i,j):=A(j,i)\quad\quad\quad A^\dagger:=(\overline{A})^T.
\end{equation}
These operations apply to vectors as well so that $\bra{v}$, defined in Equation~(\ref{0_eqn:bra}), is simply $\ket{v}^\dagger$.

The $\emph{trace}$ of $A$ is a linear function defined as $\trace(A):=\sum_{i=1}^dA(i,i)=\sum_{i=1}^d\lambda_i(A)$, where $\set{\lambda_i(A)}\subseteq\complex$ are the eigenvalues of $A$. Henceforth, when clear from context, we simply write $\lambda_i$ for the latter. The trace has the useful property of being \emph{cyclic}, i.e.\ $\trace(ABC)=\trace(CAB)$. With the trace in hand, we can define an inner product on $\LL(\X)$ as $\langle A,B\rangle = \trace(A^\dagger B)$.

The \emph{tensor product} is an important operation through which joint quantum systems can be described. Specifically, for complex Euclidean spaces $\X$ and $\Y$, their tensor product is $\X\otimes \Y=\complex^{d_x\times d_y}$. For vectors $\ve{u}\in\X$ and $\ve{y}\in\Y$, we define for all $i\in[d_x]$ and $j\in[d_y]$
\begin{equation}
    (\ve{u}\otimes \ve{v})(i,j):=u(i)v(j).
\end{equation}
For linear operators $A\in\LL(\X)$, $B\in\LL(\Y)$, $A\otimes B$ yields a complex matrix whose index sets are given by $([d_x]\times [d_y],[d_x]\times [d_y])$, such that
\begin{equation}
    (A\otimes B)((i_1,j_1),(i_2,j_2)):=A(i_1,i_2)B(j_1,j_2)
\end{equation}
for all $i_1,i_2\in[d_x]$ and $j_1,j_2\in[d_y]$. The tensor product has the following properties for any $A,C\in \X$, $B,D\in\Y$, $c\in\complex$:
\begin{eqnarray}
    (A+C)\otimes B &=& A\otimes B + C\otimes B\\
    A\otimes (B+D) &=& A\otimes B + A\otimes D\\
    c(A\otimes B) &=& (cA)\otimes B = A\otimes (cB)\\
    (A\otimes B)(C\otimes D) &=& AC\otimes BD\\
    \trace(A\otimes B) &=& \trace(A)\trace(B)\\
    (A\otimes B)^\dagger &=& A^\dagger\otimes B^\dagger.
\end{eqnarray}
These properties hold analogously in the vector setting.

Given the composition of two spaces $\X$ and $\Y$ via the tensor product, we also require an operation in the reverse direction for removing one of these spaces. For this, we define the linear \emph{partial trace} map. Specifically, for $A\otimes B\in \LL(\X\otimes \Y)$, the partial trace $\trace_\X (A\otimes B)\in\Y$ is defined as
\begin{equation}
    \trace_\X (A\otimes B):=\trace(A)B.
\end{equation}
Alternatively, for any orthonormal basis $\set{\ve{v}_i}_{i=1}^d$ for $\X$, we can write for $A\in\LL(\X\otimes \Y)$
\begin{equation}
    \trace_\X(A) = \sum_{i=1}^d \left(\ve{v}_i^\dagger \otimes I\right) A \left(\ve{v}_i \otimes I\right).
\end{equation}

\paragraph{Special classes of operators.} A few classes of linear operators play important roles in quantum computing. The first of these is the class of \emph{Hermitian} operators $\HH(\X)\subseteq\LL(\X)$, defined as the set of $A\in\LL(\X)$ satisfying $A^\dagger =A$. As the set of Hermitian operators is closed under addition and real scalar multiplication, and since $\langle A,B\rangle\in\reals$ for all $A,B\in\HH(\X)$, it follows that $\HH(\X)$ forms a real inner product space of dimension $d^2$.

The eigenvalues of Hermitian operators are real. If the eigenvalues of Hermitian $A$ are in $\set{0,1}$, then equivalently $A^2=A$, and $A$ is called an (orthogonal) projection. (Non-Hermitian $A$ satisfying $A^2=A$ are called \emph{oblique} projections, and are not used here.)

More generally, a Hermitian matrix $A\in\HH(\X)$ whose eigenvalues are all non-negative is called  \emph{positive semidefinite}, denoted $A\succeq 0$ (more generally, the notation $A\succeq B$ means $A-B\succeq 0$). Positive semidefinite matrices $A\in\HH(\X)$ can equivalently be characterized as follows:
\begin{itemize}
    \item $\ve{x}^\dagger A\ve{x}\geq 0$ for all $\ve{x}\in\X$.
    \item $A = B^\dagger B$ for some $B\in\LL(\X)$.
\end{itemize}
The set of positive semidefinite operators acting on $\X$ is denoted $\pos{\X}$.

Next, a \emph{unitary} operator $U\in\UU(\X)$ is defined as satisfying $UU^\dagger=U^\dagger U=I$. The eigenvalues of $U$ are complex numbers of modulus $1$. All unitary operators preserve the length of any vector $\ve{v}$, i.e.\ $\langle U\ve{v} ,U \ve{v}\rangle =\langle \ve{v} , \ve{v}\rangle$. More generally, any $U\in \LL(\X,\Y)$ with $U^\dagger U=I_{\X}$ is called an \emph{isometry}.

Hermitian, positive semidefinite, and unitary matrices are in fact all special cases of \emph{normal} matrices $A$, defined such that $AA^\dagger=A^\dagger A$. Normal matrices are important due to the Spectral Decomposition theorem, which we discuss next.

\paragraph{Matrix decompositions.} An extremely useful property of normal matrices $A$ acting on $\spa{X}$ is that they can be written in terms of their \emph{spectral decomposition}, i.e.\
\begin{equation}
    A = \sum_{i=1}^d\lambda_i\ket{\lambda_i}\bra{\lambda_i}= UDU^\dagger,
\end{equation}
where recall $\lambda_i$ are the eigenvalues of $A$, the set $\set{\ket{\lambda_i}}_{i=1}^d$ is a corresponding orthonormal set of eigenvectors of $A$, $D=\operatorname{diag}(\set{\lambda_i})$ is a diagonal operator with entries $D(i,i)=\lambda_i$, and $U$ is a unitary matrix whose $i$th column is $\ket{\lambda_i}$. Here we have switched to Dirac notation to highlight, in our opinion, one of its strengths --- the ability to label vectors easily by complicated expressions. Note that if $\lambda_i\neq \lambda_j$ for all $i,j$, then the set of eigenvectors above is unique.

A common problem in quantum mechanics is to analyze the spectrum of a sum of two matrices $A,B\in\LL(\X)$. In general, this is a difficult problem. However, if the matrices are normal and they \emph{commute}, i.e.\ $[A,B]=0$, then this task is made easier by the fact that $A$ and $B$ must \emph{simultaneously diagonalize}. In other words for normal $A$ and $B$, $[A,B]=0$ if and only if there exists an orthonormal basis $\set{\ket{b_i}}\subseteq\X$ such that
\begin{equation}
    A = \sum_{i=1}^d \lambda_i(A)\ketbra{b_i}{b_i},\quad\quad\quad\quad B = \sum_{i=1}^d \lambda_i(B)\ketbra{b_i}{b_i}.
\end{equation}

While the spectral decomposition holds only for normal matrices, a more general decomposition known as the \emph{singular value} decomposition exists even for non-square matrices. The latter says that for any $d_y\times d_x$ matrix $A\in\LL(\X,\Y)$, we have
\begin{equation}
    A = U D V^\dagger
\end{equation}
for $d_y\times d_y$ unitary $U$, $d_x\times d_x$ unitary $V$, and $d_y\times d_x$ diagonal matrix $D$ whose entries $D(i,i)$ are non-negative real numbers called the \emph{singular values} of $A$.

\paragraph{Operator functions.} With the spectral decomposition in hand, we can now apply functions $f:\complex\mapsto\complex$ to normal operators $A\in\X$ as follows. Let $A$ have spectral decomposition $A = \sum_{i=1}^d\lambda_i\ket{\lambda_i}\bra{\lambda_i}$. Then, assuming $\set{\lambda_i}$ is a subset of the domain of $f$,
\begin{equation}
   f(A) := \sum_{i=1}^df(\lambda_i)\ket{\lambda_i}\bra{\lambda_i}.
\end{equation}

Three common functions $f$ encountered in this thesis are $f(x)=e^x$, $f(x)=\log x$, and $f(x)=\sqrt{x}$, the operator functions of which are denoted as $e^A$, $\log A$, and $\sqrt{A}$, respectively. Here, the logarithm is taken to base two.

\paragraph{Operator norms.} Similar to the $p$-norms we defined for vectors, a useful class of norms for measuring the ``length'' or ``magnitude'' of a matrix are the Schatten $p$-norms. Their definition is simple: For any $p\in[1,\infty]$, let $\ve{\sigma}(A)$ denote the vector of singular values of $A\in\X$. Then,
\begin{equation}
    \pnorm{A}:=\pnorm{\ve{\sigma}(A)}.
\end{equation}
A particularly nice aspect of this definition is that for Hermitian operators, $\sigma_i(A)=\abs{\lambda_i(A)}$. Moreover, properties of the vector $p$-norms carry over straightforwardly to the Schatten $p$-norms, such as the H\"{o}lder inequality, positive scalability, and the triangle inequality.

Some further important properties of the $p$-norms for any $A\in\LL(\X)$ are:
\begin{enumerate}
    \item $\pnorm{A}=\pnorm{\overline{A}}=\pnorm{A^T}$, from which also $\pnorm{A}=\pnorm{A^\dagger}$.
    \item (Invariance under isometries) $\pnorm{UAV^\dagger}=\pnorm{A}$ for any isometries $U$ and $V$ for which $UAV^\dagger$ is well-defined.
    \item $\pnorm{ABC}\leq\snorm{A}\pnorm{B}\snorm{C}$.
    \item (Submultiplicativity) $\pnorm{AB}\leq\pnorm{A}\pnorm{B}$. This follows from Property 3.
\end{enumerate}

There are three specific values of $p$ of interest here: $p=1$, $p=2$, and $p=\infty$. They correspond to the \emph{trace}, \emph{Frobenius}, and \emph {spectral} (or \emph{operator}) norms, respectively, and can alternatively be defined as
\begin{equation}
 \trnorm{A}:=\trace\left(\sqrt{A^\dagger A}\right),\quad\quad
 \fnorm{A}:=\sqrt{\trace(A^\dagger A)},\quad\quad
 \snorm{A}:=\max_{\ket{x}\in\spa{X}\mbox{ s.t. } \enorm{x}= 1}\enorm{A\ket{x}}.
\end{equation}
The trace norm has two further properties of interest: First, it is non-increasing under the partial trace, meaning that for $A\in \LL(\X\otimes \Y)$, $\trnorm{\trace_{\Y}(A)}\leq\trnorm{A}$. Second, for unit vectors $\ve{u},\ve{v}\in\X$ we have
\begin{equation}\label{0_eqn:traceeuclid}
    \trnorm{\ve{u}\ve{u}^\dagger - \ve{v}\ve{v}^\dagger}=2\sqrt{1-\abs{\langle\ve{u},\ve{v}\rangle}^2}\leq 2\enorm{\ve{u}-\ve{v}}.
\end{equation}
The second inequality follows by expanding the definition of the Euclidean norm and applying the identity $1-x^2\leq 2(1-x)$. The first equality follows~\cite{W08_2} by noting that $A:=\ve{u}\ve{u}^\dagger - \ve{v}\ve{v}^\dagger$ is Hermitian, and so its trace norm is a function of the absolute values of its eigenvalues, which we now analyze. Since $\operatorname{rank}(A)\leq 2$ and $\trace(A)=0$, its spectrum must be $\set{\lambda,-\lambda, 0,\ldots,0}$ for some $\lambda \in\reals$. Thus, $\trace(A^2)=2\lambda^2$. However, a direct evaluation of $\trace(A^2)$ from the definition of $A$ also reveals $\trace(A^2)=2-2\abs{\langle\ve{u},\ve{v}\rangle}^2$. Combining these two expressions for $\trace(A^2)$, the claim follows.

\paragraph{Linear super-operators.} We have discussed (linear) operators $\Phi:\X\mapsto\X$ and $\Phi:\Y\mapsto\Y$. Moving a step up the ladder, we can also discuss linear operators $\Phi:\LL(\X)\mapsto\LL(\Y)$. Such maps are called linear \emph{super-operators}. Bestowed with the standard definitions of addition and scalar multiplication, the set of super-operators, denoted $T(\X,\Y)$, forms a linear space. The tensor product operation applies analogously to super-operators as it did to operators.

The \emph{adjoint} of super-operator $\Phi\in T(\X,\Y)$, $\Phi^*\in T(\Y,\X)$, is uniquely defined by the equation
\begin{equation}
    \langle A,\Phi(B)\rangle=\langle \Phi^*(A),B\rangle,
\end{equation}
which holds for all $B\in\X$ and $A\in\Y$.

\paragraph{Special classes of super-operators.} From a quantum computing perspective, we are most interested in super-operators which are {trace-preserving} and {completely positive} (TPCP). A \emph{trace-preserving} super-operator $\Phi\in T(\X,\Y)$ is defined as satisfying
\begin{equation}
    \trace(A)=\trace(\Phi(A))
\end{equation}
for any $A\in\LL(\X)$. To define a completely positive map, we first define a \emph{positive} map $\Phi\in T(\X,\Y)$ as satisfying $\Phi(A)\succeq 0$ for any $A\in\LL(\X)$ such that $A\succeq 0$. Then, a map $\Phi\in T(\X,\Y)$ is called \emph{completely positive} if $I_{\LL(\X)}\otimes \Phi$ is a positive map. Intuitively, a completely positive map $\Phi$ sends positive semidefinite operators to positive semidefinite operators, even if $\Phi$ acts on only part of a larger composite system.

\paragraph{Matrix representations of super-operators.} Just as we discussed a matrix representation for linear operators, there are a number of useful matrix representations for linear super-operators. (See the notes of Watrous~\cite{W08_5} for an excellent exposition.) Here, we discuss two particular representations used in this thesis, known as the \emph{Stinespring} and \emph{Kraus} representations.

The Strinespring representation lends a nice interpretation to admissible quantum maps later. Specifically, it says that the action of any TPCP map $\Phi\in T(\X,\Y)$ on arbitrary $X\in \LL(\X)$ can be written as
\begin{equation}
    \Phi(X)=\trace_{\Z}(AXA^\dagger),
\end{equation}
for some complex Euclidean space $\Z$ and some linear isometry $A\in \LL(\X,\Y\otimes \Z)$. Moreover, $\dim(\Z)$ can be taken as $\dim(\Z)\leq\dim(\X)\dim(\Y)$. In the context of quantum computation, it will be particularly useful to note that this is equivalent~\cite{AKN98} to saying $\Phi(X)$ can be written as, for $\Y=\Y_1=\Y_2$,
\begin{equation}
    \Phi(X)=\trace_{\X\otimes\Y_2}\left[U(X_\X\otimes\ketbra{0}{0}_{\Y_1\otimes \Y_2})U^\dagger\right],\label{0_eqn:church}
\end{equation}
for some unitary $U\in\UU(\X\otimes \Y_1\otimes\Y_2)$.

We now define the Kraus representation, which is sometimes also known as the \emph{operator-sum} representation~\cite{NC00}. The Kraus representation says that any TPCP map $\Phi\in T(\X,\Y)$ can be expressed in terms of a set of \emph{Kraus operators} $\set{K_i}_{i=1}^k\subseteq \LL(\X,\Y)$ such that
\begin{equation}
    \Phi(X)=\sum_{i=1}^k K_i X K_i^\dagger,
\end{equation}
where $\sum_{i=1}^k K_i^\dagger K_i=I_{\X}$ and $k\leq \dim(\X)\dim(\Y)$.

\section{Basics of quantum computation}\label{0_scn:basics}

We now introduce the basics of quantum computation. For further details, the interested reader is referred to the texts of Nielsen and Chuang~\cite{NC00}, Kitaev, Shen, and Vyalyi~\cite{KSV02}, and Kaye, Laflamme, and Mosca~\cite{KLM07}. From a computer scientist's perspective, note that the primary background required is \emph{not} quantum physics, but rather linear algebra~\cite{HJ90}. This is because, just as with any (say) sports game, in order to play the game, you simply have to learn the \emph{rules} of the game. Quantum mechanics, in particular, has four simple rules, and they are all based on linear algebra. These rules govern the following four intuitively logical concepts: How a quantum state is described, how does one ``read'' or measure a quantum state, what operations can be performed on a quantum state, and finally, how does one describe multiple quantum systems jointly.

\subsection{Describing quantum states}\label{0_sscn:describe}
Let $\spa{X}$ denote a complex Euclidean space. Then, in a nutshell, any $\rho\in\pos{\spa{X}}$ with trace $1$ describes a valid quantum state. Let us now provide some intuition as to how this statement comes about.

In classical computing, the basic unit of information is a bit, which takes on values in the set $\set{0,1}$. One can equivalently encode a bit using the set $\set{\ket{0},\ket{1}}$, where $\set{\ket{0},\ket{1}}\subseteq\complex^2$ is the standard basis for $\complex^2$, i.e.\ $\ket{0}=(1,0)^T$ and $\ket{1}=(0,1)^T$. The key difference between classical bits and qubits is that in the quantum world, one can interpolate between the two discrete values $\ket{0}$ and $\ket{1}$ by taking a \emph{superposition}, i.e.\ the vector
\begin{equation}
    \ket{\psi}=\alpha\ket{0}+\beta\ket{1}
\end{equation}
describes a valid quantum state if $\abs{\alpha}^2+\abs{\beta}^2=1$. In other words, any unit vector in $\complex^2$ describes a quantum bit, or \emph{qubit}.

More generally, assume $\spa{X}$ has dimension $d$. Then, any unit vector $\ket{\psi}\in\spa{X}$ describes a $d$-dimensional quantum state, sometimes dubbed a qu\emph{d}it. Such vectors are called \emph{pure} states, and do not yet capture the set of all possible $d$-dimensional quantum states. To complete the picture, we simply allow probabilistic mixtures of such pure states, more generally referred to as \emph{mixed} states. Such probabilistic mixtures are described in the following straightforward manner, known as the \emph{density matrix} formalism.

Associated with any probabilistic mixture is an \emph{ensemble},
\begin{equation}
\set{\set{p_i}_{i=1}^k,\set{\ketbra{\psi_i}{\psi_i}}_{i=1}^k},
\end{equation}
where $\set{p_i}_{i=1}^k$ forms a probability distribution and $\set{\ket{\psi_i}}\subseteq\spa{X}$ is a set of unit vectors. The corresponding mixed quantum state $\rho$ is thus:
\begin{equation}
    \rho = \sum_{i=1}^k p_i \ketbra{\psi_i}{\psi_i}.\label{0_eqn:density}
\end{equation}
Here, $\rho$ is called the \emph{density matrix} describing the underlying quantum state. We denote the set of density operators acting on $\X$ as $\DD(\X)$.

Let us now tie this back into the statement made at the beginning of this subsection. Note that since in Equation~(\ref{0_eqn:density}), $\rho$ is a non-negative sum of positive semidefinite operators, we must have $\rho\succeq 0$. Moreover, by applying the cyclic property of the trace, we have $\trace(\rho)=1$, as claimed. Indeed, based on the exposition above, we can now intuitively see why any $\rho\in\mathcal{X}$ with $\rho\succeq 0$ and $\trace(\rho)=1$ describes a valid quantum state --- simply take the spectral decomposition of $\rho$ to recover an ensemble $\set{\set{p_i}_{i=1}^k,\set{\ketbra{\psi_i}{\psi_i}}_{i=1}^k}$.

We remark that although here we have attempted to present a simple exposition of how quantum states are classically described, in reality the precise interpretation of what such a classical description means is highly non-trivial and continues to be debated after decades of research.

\subsection{Measuring quantum states}\label{0_sscn:measurement}
Now that we have a mathematical description of quantum states, we require a formalism for modeling how a quantum state is ``observed'', or measured. For this, let $\rho\in\DD(\X)$ be a density matrix. Then, a quantum \emph{measurement} is formalized by a set of operators $\Pi:=\set{M_i}\subseteq\LL(\X)$ satisfying
\begin{equation}
    \sum_i M_i^\dagger M_i=I,
\end{equation}
where the latter is called the \emph{completeness relation}. The act of measuring $\rho$ with $\Pi$ is in general an inherently probabilistic process, \emph{even if} $\rho$ corresponds to a pure state (unlike in the classical case of bits). Specifically, when measuring $\rho$ with respect to $\Pi$, we obtain outcome $i$ with probability given by
\begin{equation}
    \pr(\mbox{outcome }i | \rho)=\trace(M_i\rho M_i^\dagger).
\end{equation}
Once a particular outcome $i$ is observed, the state $\rho$ ``collapses'' to a new state $\rho'$ consistent with this outcome, i.e.\
\begin{equation}
    \rho' = \frac{M_i\rho M_i^\dagger}{\pr(\mbox{outcome }i | \rho)}.
\end{equation}
Note that the denominator above serves the role of renormalizing $\rho'$ so that $\trace(\rho')=1$.

We have thus far described general measurements. Often, we are interested in the special case when each $M_i$ is an orthogonal projection operator (not necessarily of rank one), such that $M_iM_j=\delta_{ij}M_i$. Such measurements are called \emph{projective} or \emph{von Neumann} measurements. A common way to represent a projective measurement is via an \emph{observable} $M\in\HH(\X)$. Via the spectral decomposition, we can write $M=\sum_i\lambda_i\Pi_i$, where $\lambda_i\neq \lambda_j$ for $i\neq j$ and each $\Pi_i$ is a projection operator (of rank possibly greater than one). Then, each eigenvalue $\lambda_i$ corresponds to a distinct label for a measurement outcome, and the measurement operators are $M_i=\Pi_i$. An advantage of using observables is that the expected value of the measurement, denoted $\mathbb{E}_M$, takes a very simple form:
\begin{equation}
    \mathbb{E}_M(\rho)=\sum_i\lambda_i\pr(\mbox{outcome }i | \rho)=\sum_i\lambda_i \trace(\Pi_i\rho\Pi_i^\dagger)=\sum_i\lambda_i \trace(\Pi_i\rho)=\trace(M\rho).
\end{equation}

Finally, note that the framework above for general measurements $\Pi=\set{M_i}$ allows one to determine both the probability of outcome $i$, as well as the output state of the measurement process once $i$ is read. If we only care about the former, as is the case in situations where the quantum system is only to be measured once and subsequently discarded, then this formalism is often simplified by defining positive semidefinite $E_i:=M_i^\dagger M_i$ with $\sum_i E_i = I$. We hence have:
\begin{equation}
    \pr(\mbox{outcome }i | \rho)=\trace(M_i\rho M_i^\dagger)=\trace(M_i^\dagger M_i\rho)=\trace(E_i\rho).
\end{equation}
The set $\set{E_i}$ is called a \emph{Positive Operator-Valued Measure (POVM)}. An advantage of using POVMs, for example, is that since the POVM elements $E_i$ are positive semidefinite, optimizations over the set of all POVMs can be handled via semidefinite programming techniques.

\subsection{Evolution of quantum states}\label{0_sscn:evolution}

We now know how to describe a quantum state $\rho\in\DD(\X)$, as well how to model a measurement or observation of $\rho$. The next question we ask is: What kind of operations can we perform on $\rho$? For example, to a classical bit, we can apply a NOT gate to flip its value. What can we do to a \emph{qu}bit?

In the quantum setting, the set of valid operations on a \emph{closed} (defined shortly) quantum system with state $\rho\in\DD(\X)$ is the set of unitary operators $U\in\UU(\X)$. Specifically, $U$ maps $\rho$ to
\begin{equation}
    \rho' := U\rho U^\dagger.
\end{equation}

For example, for $\rho\in\DD(\complex^2)$, i.e.\ a single qubit, a frequently used set of unitary operators are the \emph{Pauli} operators (where $i:=\sqrt{-1}\in\complex$)
\begin{equation}
    X=\left(
        \begin{array}{cc}
          0 & 1 \\
          1 & 0 \\
        \end{array}
      \right)\quad\quad\quad
    Y=\left(
        \begin{array}{cc}
          0 & -i \\
          i & 0 \\
        \end{array}
      \right)\quad\quad\quad
    Z=\left(
        \begin{array}{cc}
          1 & 0 \\
          0 & -1 \\
        \end{array}
      \right).
\end{equation}
Note, for example, that the Pauli $X$ plays the role of a quantum NOT gate, i.e.\ $X\ket{0}=\ket{1}$ and $X\ket{1}=\ket{0}$.

We said that unitary operations describe the evolution of {closed} quantum systems above --- let us elaborate on this further. A \emph{closed} quantum system is one which does not interact with its environment. Conversely, if a system is not closed, it is called \emph{open}. In this latter case, the set of allowed operations strictly contains $\UU(\X)$, and is in fact the set of TPCP maps, which we henceforth refer to as \emph{admissible} maps or operations. Despite this, there is a sense in which discussing unitary operations is without loss of generality --- this is implied by the Stinespring representation of super-operators and specifically Equation~(\ref{0_eqn:church}), which states that any valid TPCP operation on a quantum system $A$ can be simulated by moving to a larger joint system $AB$, evolving $AB$ via a unitary operator, and subsequently tracing out part of $AB$. (We discuss joint systems $AB$ further in Section~\ref{0_sscn:composite}.)

For example, let us consider the process of performing a measurement on $A$. In order to measure or observe a quantum state in $A$, one introduces a measurement apparatus, which we think of as system $B$. To complete the actual measurement, $B$ must interact with $A$, implying $A$ is an open system. Thus, if we look at $A$ alone, the action of the measurement on $A$ is not described by a unitary operator, but by a TPCP map. However, if we instead look at $AB$ as a whole, this joint system is now closed, and hence its evolution is described by a unitary operator.

\paragraph{Hamiltonians, and the connection to unitary operations.} We said above that the evolution of a (closed) quantum system is described by a unitary operator. Although this is a great abstract description for mathematicians and computer scientists to work with, one should ask the question: Why \emph{unitary} operations? The answer lies, not surprisingly, in physics. Here we define the notion of a \emph{Hamiltonian}, which will play an important role in later chapters such as those involving Hamiltonian complexity.

First, note that any unitary $U\in\UU(\X)$ can be written as $U=\exp(i H)$ for some $H\in\HH(\X)$. This is easily seen by taking the spectral decomposition $U=\sum_j e^{i\theta_j}\ketbra{\psi_j}{\psi_j}$, and observing that defining
\begin{equation}
    H=\sum_j \theta_j\ketbra{\psi_j}{\psi_j}
\end{equation}
yields $U=e^{iH}$ (see the discussion on operator functions in Section~\ref{0_scn:linalg}). The operator $H$ is called a \emph{Hamiltonian}.

Thus, corresponding to each $U\in\UU(\X)$, there exists an $H\in\HH(\X)$. Where does $H$ then come from? It turns out that the time evolution of a closed quantum system $\ket{\psi}$ according to $H$ is given by the famous Schr\"{o}dinger equation,
\begin{equation}
    i\hbar\frac{d\ket{\psi}}{dt}=H\ket{\psi},
\end{equation}
where $\hbar$ denotes Planck's constant (whose value is not of interest here). For a quantum system evolving from time $t_1$ to $t_2$, the solution to this equation is given by
\begin{equation}\label{0_eqn:schroedinger}
    \ket{\psi(t_2)}=\exp\left(i\frac{t_1-t_2}{\hbar}H\right)\ket{\psi(t_1)},
\end{equation}
from which we now see the connection to unitary operators directly.

For this reason, Hamiltonians have been the object of intense study, and there is nowadays an entire field devoted to Hamiltonian complexity (see Section~\ref{0_scn:complexity}). The eigenstates $\set{\ket{\lambda}}$ of a Hamiltonian are referred to as its \emph{energy eigenstates}, and the eigenvalue $\lambda$ corresponding to $\ket{\lambda}$ is the \emph{energy} of state $\ket{\lambda}$. The smallest eigenvalue $\lambda_{\min}$ of $H$ is called the \emph{ground state energy}, and $\ket{\lambda_{\min}}$ the \emph{ground state} of $H$. Determining the ground state energy of a given $H$ is in general a very difficult problem, as we shall soon see in Section~\ref{0_scn:complexity}.

Before closing, we make two final remarks. First, there is another interpretation of the Hamiltonian versus unitary pictures of time evolution presented here which is of interest. The application of any fixed unitary $U$ can be thought of as a \emph{discrete-time} evolution, since by Equation~(\ref{0_eqn:schroedinger}) it corresponds to evolution by some fixed time $t$. In the Hamiltonian picture, however, for any fixed Hamiltonian $H$, one can in principle vary the time of evolution $t$ as desired, resulting in a notion of \emph{continuous-time} evolution.

Finally, in our discussion here we have focused on time-\emph{independent} Hamiltonians. More generally, one can also consider evolution under time-\emph{dependent} Hamiltonians which are allowed to change with time.

\subsection{Composite quantum systems}\label{0_sscn:composite}

Thus far, we have discussed the basics of how to mathematically discuss single quantum systems. Suppose now we have two quantum systems $A$ and $B$ --- how do we describe their joint state $AB$? It turns out that if $A$ and $B$ correspond to complex Euclidean spaces $\X$ and $\Y$, then the joint system $AB$ corresponds to the space $\X\otimes \Y$. In other words, if, for example, $\X=\Y=\complex^2$, then any $\rho\in\DD(\X\otimes \Y)$ defines a valid two-qubit quantum system.

The simplest examples of two-party systems $AB$ are given by \emph{product states}, which for any given $\rho_A\in\DD(\X)$ and $\rho_B\in\DD(\Y)$, are given by $\rho_A\otimes \rho_B$. Such states are uncorrelated between systems $A$ and $B$. For example, two classical bits in state $00$ can be embedded in such a two-qubit quantum state as $\ket{0}\otimes\ket{0}$. For brevity, when discussing pure states, we simply denote this state as $\ket{0}\ket{0}$ or $\ket{00}$. More generally, one can also consider joint states $\ket{\phi}\in\complex^2\otimes\complex^2$ such as
\begin{equation}
    \ket{\phi^+}=\frac{1}{\sqrt{2}}\ket{00}+\frac{1}{\sqrt{2}}\ket{11}.
\end{equation}
This state is referred to as a \emph{Bell state}, and possesses a strong degree of \emph{quantum} correlations between systems $A$ and $B$ known as \emph{quantum entanglement}, as discussed further in Section~\ref{0_scn:nonclassicality}.

Given a description $\rho$ of the state of a joint system $AB$, we now require a method for describing the marginal state on $A$ (or $B$) alone. Specifically, given a composite system $\rho\in\DD(\X\otimes\Y)$, the \emph{reduced state} $\rho_A$ on $A$ (analogously, $\rho_B$ on $B$) is given by the partial trace operation described in Section~\ref{0_scn:linalg}. In other words,
\begin{equation}
    \rho_A = \trace_B(\rho).
\end{equation}
For example, $\trace_B(\rho_A\otimes\rho_B)$ is simply $\rho_A$, and $\trace_B(\ketbra{\phi^+}{\phi^+})=I/2$. The partial trace is employed here as it is the unique function which correctly produces the measurement statistics for arbitrary observables $M$ measured on $A$ alone.

We close by remarking that our description of two-party composite systems straightforwardly extends to multiple parties:  For systems $A_1$ through $A_n$ corresponding to complex Euclidean spaces $\X_1$ through $\X_n$, the corresponding joint space is given by $\bigotimes_{i=1}^n\X_i$.

\subsection{Quirks of quantum mechanics}\label{0_sscn:quirks}

Marking a drastic departure from the classical setting, a fundamental result in quantum mechanics is that an unknown quantum state $\ket{\psi}\in\X$ cannot be copied or \emph{cloned}. This is called the \emph{No-Cloning} Theorem~\cite{D82,WZ82}. To give a brief flavor of why this holds, we demonstrate a simple proof from Nielsen and Chuang~\cite{NC00} (Box 12.1) for the case regarding the non-existence of a unitary $U\in\UU(\X\otimes\X)$ achieving the mapping
\begin{equation}
    \ket{\psi}_{\X}\otimes\ket{s}_{\X}\mapsto  \ket{\psi}_{\X}\otimes\ket{\psi}_{\X},
\end{equation}
where $\ket{s}$ is some fixed starting state. For sake of contradiction, suppose such a $U$ does exist. Then for vectors $\ket{\psi_1}$,$ \ket{\psi_2}$, let
\begin{eqnarray}
\ket{\phi_1} &:=& U(\ket{\psi_1}\otimes\ket{s})=  \ket{\psi_1}\otimes\ket{\psi_1}\\
\ket{\phi_2} &:=& U(\ket{\psi_2}\otimes\ket{s})=  \ket{\psi_2}\otimes\ket{\psi_2}.
\end{eqnarray}
Then, $\braket{\phi_1}{\phi_2}=\braket{\psi_1}{\psi_2}=(\braket{\psi_1}{\psi_2})^2$. But the equation $x=x^2$ only has solutions $0$ and $1$, implying that for general $\ket{\psi_1}$ and $\ket{\psi_2}$, such a $U$ cannot exist. We remark that using the Stinespring representation, this proof is easily adapted to show that even TPCP maps cannot clone non-orthogonal states~\cite{Y86}.

\section{Quantum computational complexity}\label{0_scn:complexity}

With the basics of linear algebra and quantum computing under our belts, we can now begin discussing the first central area this thesis studies: Computational complexity theory. This field aims to rigorously classify computational problems based on the inherent difficulty of solving them. Specifically, the central idea here is to ask:

\emph{Given a set of resources $S$, such as a certain amount of space or time in which a computation is to run, what is the class of computational problems which can be solved?}

\noindent This approach has led to an entire zoo of such \emph{complexity classes} (literally, a zoo~\cite{AKG}), including the ubiquitous classes P and NP. In this section, we review the extension of some of these concepts to the quantum setting. This includes defining the standard quantum circuit model our work is based on, introducing relevant quantum complexity classes, and presenting an exposition of the quantum version of the Cook-Levin theorem~\cite{C72,L73}. The content of this section is based partly on the excellent surveys of Aharonov and Naveh~\cite{AN02} and Watrous~\cite{W09_2}, as well as the text of Nielsen and Chuang~\cite{NC00}. We assume background knowledge of basic (classical) computational complexity; the interested reader is referred to the text of Arora and Barak for an introduction~\cite{AB90}.

\paragraph{Notation and definitions specific to this section.} Throughout our discussion, we encode all computational problems over the binary alphabet $\Sigma:=\set{0,1}$. We say a function $f:\Sigma^*\mapsto\Sigma^*$ is \emph{polynomial-time computable} if there exists a polynomial time deterministic Turing machine which, given any input $x\in\Sigma^*$, outputs $f(x)$.  A function $f:\nats\mapsto\nats$ is called \emph{polynomially-bounded} if there exists a polynomial-time deterministic Turing machine which, on any input $x\in\nats$, outputs $1^{f(x)}$. A \emph{language} is a partitioning $\Sigma^*=\ayes\cup\ano$ such that $\ayes\cap\ano=\emptyset$, for $\emptyset$ the empty set. If, more generally, $\ayes\cup\ano\subseteq \Sigma^*$, then we have a \emph{promise problem}. In a promise problem, one assumes the input $x$ satisfies $x\in\ayes$ or $x\in\ano$; if an algorithm solving this promise problem is given input $x\not\in\ayes\cup\ano$, we adopt the convention that the algorithm is allowed to err. We remark that promise problems are particularly natural in the quantum setting, as quantum computations are inherently probabilistic processes, and as such, some ``margin of error'' appears to be needed separating $\ayes$ from $\ano$. This is clarified further when introducing our relevant quantum complexity classes.

\subsection{Quantum circuit model}\label{0_sscn:circuit}

In Section~\ref{0_sscn:evolution}, we discussed the general types of admissible operations on quantum systems. In the context of complexity theory, however, we require a formal model for specifying and analyzing such operations, for which we employ the standard \emph{quantum circuit model}.  To begin, suppose we have a quantum system consisting of $n$ qubits, whose associated complex Euclidean space is $\X = (\complex^2)^{\otimes n}$. A \emph{quantum circuit} can be thought of as a directed acyclic graph with $n$ \emph{input} nodes of in-degree zero and out-degree one (i.e.\ $n$ sources), $n$ \emph{output} nodes of in-degree one and out-degree zero (i.e.\ $n$ sinks), and a set of ``intermediate'' nodes or \emph{gates}, each of which has matching in- and out-degree $c$ for some $c\in\Theta(1)$ (where each gate can have a different value of $c$). Intuitively, the input (output) nodes are the $n$ input (output) qubits to the circuit, and the intermediate notes are unitary gates acting on $\Theta(1)$ qubits. The edges of the graph correspond to \emph{wires} in the circuit, the direction of which are indicative of the direction of data flow.

For example, three common single-qubit unitary gates mentioned in Section~\ref{0_sscn:evolution} are the Pauli $X$, $Y$, and $Z$ operators, which are specified in the circuit model as:
\begin{eqnarray}
\Qcircuit @C=1em @R1em {
    \lstick{\ket{b}} & \gate{X} & \rstick{\ket{b\oplus 1}} \qw \\
}
\\
\Qcircuit @C=1em @R1em {
    \lstick{\ket{b}} & \gate{Y} & \rstick{(-1)^bi\ket{b\oplus 1}} \qw \\
}
\\
\Qcircuit @C=1em @R1em {
    \lstick{\ket{b}} & \gate{Z} & \rstick{(-1)^b\ket{b}} \qw \\
}
\end{eqnarray}
Here, we assume $b\in\set{0,1}$; the action of each gate is extended to all single qubit states by linearity. The notation $\oplus$ denotes the XOR operation (i.e.\ addition modulo $2$). On the left of each gate is the input qubit, and on the right is the output qubit.

Two other single-qubit gates, whose importance is discussed shortly, are the \emph{Hadamard} and $T$ (also referred to as $\pi/8$) gates, defined below.
\begin{eqnarray}
H &=& \frac{1}{\sqrt{2}}\left(
      \begin{array}{cc}
        1 & 1 \\
        1 & -1 \\
      \end{array}
    \right)
\hspace{10mm}\equiv\hspace{15mm}\Qcircuit @C=1em @R1em {
    \lstick{\ket{b}} & \gate{H} & \rstick{\frac{1}{\sqrt{2}}(\ket{0}+(-1)^b\ket{1})} \qw \\
}\hspace{25mm}\\
T &=& \left(
      \begin{array}{cc}
        1 & 0 \\
        0 & e^{i\pi/4}\\
      \end{array}
    \right)\hspace{15mm}\equiv\hspace{15mm}
    \Qcircuit @C=1em @R1em {
    \lstick{\ket{b}} & \gate{T} & \rstick{e^{i\frac{\pi\cdot b}{4}}\ket{b}} \qw \\
}\hspace{25mm}
\end{eqnarray}
\noindent A ubiquitous two-qubit gate is the \emph{Controlled-NOT} gate, shown below.
\begin{equation}
    CNOT = \left(
             \begin{array}{cccc}
               1 & 0 & 0 & 0 \\
               0 & 1 & 0 & 0 \\
               0 & 0 & 0 & 1 \\
               0 & 0 & 1 & 0 \\
             \end{array}
           \right)\hspace{10mm}\equiv\hspace{20mm}
    \Qcircuit @C=1em @R=.7em {
    \lstick{\ket{b_1}} & \ctrl{1}& \rstick{\ket{b_1}}\qw\\
    \lstick{\ket{b_2}} & \targ & \rstick{\ket{b_2\oplus b_1}}\qw
    }\hspace{10mm}
\end{equation}
Finally, a measurement (in the computational basis) in this model is specified by the following.
\begin{equation}
\Qcircuit @C=1em @R1em {
    & \qw & \meter
}
\end{equation}

\paragraph{Universal gate sets.} When it comes to quantifying the \emph{cost} of a circuit, it seems \emph{a priori} that we are in a bind: How do we quantify the cost of an arbitrary gate if there is a continuum of unitary gates to choose from? It would be preferable to have a fixed finite set of gates, each of which is assigned unit cost, and with which we could simulate all other gates. This would yield a rigorous framework in which to quantify the cost of a circuit. Such a set of unitaries is called a \emph{universal} set, and indeed exists: The set $S:=\set{H, T, CNOT}$ is universal. To show this (see, e.g.,~\cite{NC00}), one first demonstrates that the $CNOT$ coupled with the set of all one-qubit unitaries is universal in an \emph{exact} sense --- any unitary $U\in\UU(\X)$ can be represented \emph{exactly} using CNOT and single-qubit gates. One then applies the Solovay-Kitaev theorem~\cite{K97}, which yields that for any $U\in\UU(\complex^2)$ and any $\epsilon>0$, there exists a $V\in\UU(\complex^2)$ consisting of the composition of $O(\log^c(1/\epsilon))$ gates from $\set{H, T}$ such that $\snorm{U-V}\leq \epsilon$ (here, $c\in\Theta(1)$).

What does such a bound on the spectral norm buy us? Suppose we can substitute the original unitaries $U=U_m\cdots U_1$ in a circuit with unitaries $V=V_m\cdots V_1$ with the promise that $\snorm{U_i-V_i}\leq \epsilon$ for all $i\in[m]$ and for $\epsilon$ to be chosen as needed. Since we are typically interested in running $U$ on some input $\ket{\psi}$, followed by a measurement according to some POVM, we would like the probability of obtaining any measurement outcome to deviate by at most $\delta$ when substituting $V$ for $U$, where $\delta>0$ can be chosen as desired. In other words, for all POVM elements $M$, pure states $\ket{\psi}$, and error parameters $\delta>0$, we would like that setting $\epsilon$ small enough yields that the probability of obtaining outcome $M$ when measuring $U\ket{\psi}$ versus $V\ket{\psi}$ differs by at most $\delta$. Indeed, this is achieved by setting $\epsilon =\delta/(2m)$ and combining the facts that
\begin{equation}
    \abs{\trace(M U\ketbra{\psi}{\psi}U^\dagger)-\trace(M V\ketbra{\psi}{\psi}V^\dagger)}\leq 2\snorm{U-V},
\end{equation}
and
\begin{equation}
    \snorm{U_m\cdots U_1 - V_m \cdots V_1}\leq \sum_{j=1}^m \snorm{U_j-V_j}.
\end{equation}
We refer the reader to~\cite{NC00} for further details.

We close this section by remarking that here we have assumed that quantum circuits are unitary and act on pure state inputs $\ket{\psi}\in\X$; recall from Section~\ref{0_sscn:evolution} that by the Stinespring representation and Equation~(\ref{0_eqn:church}), this is without loss of generality. We refer the reader to the work of Aharonov, Kitaev, and Nisan~\cite{AKN98} for a more general model of quantum circuits which directly operates on mixed states, and which explicitly harnesses this connection with the Stinespring representation.

\paragraph{Oracles.} A commonly used construct in the setting of quantum circuits is that of an \emph{oracle}. An oracle $Q_n$ (where we more precisely deal with a family of oracles $\set{Q_n}$) can be thought of as a black-box unitary operation encoding some predicate $f:\Sigma^n\mapsto\Sigma$. In the quantum circuit model, this is formalized via the action
\begin{equation}
    Q_n\ket{x}\ket{y} = \ket{x}\ket{y\oplus f(x)},
\end{equation}
for $x\in\Sigma^n$ and $y\in\Sigma$. Each such application of $Q_n$ is called a \emph{query} to the oracle, and we typically think of each query as having unit cost.

Suppose now that we wish to compute some property P of the predicate $f$; the number of queries to $Q_n$ required to do so is called the \emph{query complexity} of P (relative to $Q_n$). Perhaps the most well-known example of this in the quantum setting is Grover's algorithm~\cite{G93}, which shows how to compute the OR function $\bigvee_{i=1}^{2^n}f(i)$ with high probability using $O(\sqrt{2^n})$ queries to $Q_n$, a quadratic improvement over the classical setting. Although the query model may \emph{a priori} seem restricted, the model is nevertheless important; Shor conceived his factoring algorithm~\cite{S97}, for example, by studying Simon's algorithm~\cite{Si94,Si97} from the quantum query model.

\subsection{Standard quantum complexity classes: BQP and QMA}\label{0_sscn:classes}

Recall that in complexity theory, we classify computational problems into \emph{complexity classes} depending on the resources capable of solving them. The classes P and NP are two such classes, forming two cornerstones of classical complexity theory. We now discuss the natural quantum analogues of these classes, BQP and QMA. (More precisely, BQP and QMA are generalizations of BPP and MA.) For completeness, we recall the definitions of P and NP below.
\begin{definition}[P]
     A promise problem $A=(\ayes,\ano)$ is in P if and only if there exists a deterministic polynomial-time Turing machine $M$ which on input $x\in\ayes$, accepts, and on input $x\in\ano$, rejects.
\end{definition}

\begin{definition}[NP]
     A promise problem $A=(\ayes,\ano)$ is in NP if and only if there exists a deterministic polynomial-time Turing machine $M$ and a polynomial $p$, such that on input $x\in\Sigma^*$:
      \begin{itemize}
        \item If $x\in\ayes$, then there exists a proof $y\in\Sigma^{p(\abs{x})}$ such that $M$ accepts $(x,y)$.
        \item If $x\in\ano$, then for all proofs $y\in\Sigma^{p(\abs{x})}$, $M$ rejects $(x,y)$.
      \end{itemize}
\end{definition}

Now, since we have defined our complexity theoretic model for quantum computing based on the quantum circuit model, we next require the notion of a polynomial-time uniform family of quantum circuits. Specifically, since the length of input $x\in\Sigma^*$ to a computational problem is allowed to vary, whereas the input size to a given circuit is fixed, we require a method for ``scaling'' our circuits up to match the length of arbitrary input $x\in\Sigma^*$.

\begin{definition}[Polynomial-time uniform family of quantum circuits] A set of quantum circuits $\set{Q_n}$ is polynomial-time uniform if there exists a polynomial-time deterministic Turing machine, which on input $1^n$, outputs a description of $Q_n$.
\end{definition}

We now define BQP~\cite{BV97}, which stands for \emph{Bounded-Error Quantum Polynomial Time}, and which is intuitively the set of promise problems which can be efficiently solved with high probability on a quantum computer. For both BQP and QMA, we henceforth say a quantum circuit $Q$ \emph{accepts} input $x$ (where $x$ can be either a classical string or quantum state) if running $Q$ on input $x$ and subsequently measuring a designated output qubit of $Q$ in the computational basis yields outcome $1$.

\begin{definition}[BQP]
    A promise problem $A=(\ayes,\ano)$ is in BQP if and only if there exists a polynomial $q$ and a polynomial-time uniform family of quantum circuits $\set{Q_n}$, where $Q_n$ takes as input a string $x\in\Sigma^*$ with $\abs{x}=n$, and $q(n)$ ancilla qubits in state $\ket{0}^{\otimes q(n)}$, such that:
    \begin{itemize}
    \item (Completeness) If $x\in\ayes$, then $Q_{n}$ accepts input $x$ with probability at least $2/3$.
    \item (Soundness) If $x\in\ano$, then $Q_{n}$ accepts input $x$ with probability at most $1/3$.
    \end{itemize}
\end{definition}

Note that if we replace the uniform quantum circuit family above with a uniform \emph{classical} circuit family which takes as input both $x$ and a polynomial-size string $y$ chosen uniformly at random, then we are reduced to BPP. Like BPP, the completeness and soundness parameters $2/3$ and $1/3$ above can straightforwardly be amplified to values exponentially close to $1$ and $0$ simply by running the verification procedure $Q$ independently polynomially many times in parallel, accepting if and only if the majority of runs accepted, and applying the Chernoff bound. We remark that $\class{BPP}\subseteq\class{BQP}$ follows since probabilistic classical computations can be simulated with quantum circuits (see, e.g.~\cite{W09_2}). The decision versions of the factoring and discrete logarithm problems are, for example, not known to be in BPP, but are in BQP due to Shor's algorithm~\cite{S97}.

We next define QMA, or \emph{Quantum Merlin Arthur}, a quantum generalization of NP.

\begin{definition}[QMA]\label{0_def:QMA}
    A promise problem $A=(\ayes,\ano)$ is in QMA if and only if there exist polynomials $p$, $q$ and a polynomial-time uniform family of quantum circuits $\set{Q_n}$, where $Q_n$ takes as input a string $x\in\Sigma^*$ with $\abs{x}=n$, a quantum proof $\ket{y}\in (\complex^2)^{\otimes p(n)}$, and $q(n)$ ancilla qubits in state $\ket{0}^{\otimes q(n)}$, such that:
    \begin{itemize}
    \item (Completeness) If $x\in\ayes$, then there exists a proof $\ket{y}\in (\complex^2)^{\otimes p(n)}$ such that $Q_n$ accepts $(x,\ket{y})$ with probability at least $2/3$.
    \item (Soundness) If $x\in\ano$, then for all proofs $\ket{y}\in (\complex^2)^{\otimes p(n)}$, $Q_n$ accepts $(x,\ket{y})$ with probability at most $1/3$.
    \end{itemize}
\end{definition}

\noindent It is often helpful to think of $\ket{y}$ above as a proof sent by an all-powerful but untrustworthy prover Merlin, who claims $x\in\ayes$, and to correspondingly interpret $\set{Q_n}$ as an honest but computationally bounded verifier Arthur, whose job it is to verify the correctness of Merlin's proof. We are not overly fond of the names Merlin and Arthur, and as such, prefer to simply refer to both parties in this interpretation as being the prover and verifier, respectively. As an aside, we remark that QMA was originally known as Bounded Error Quantum NP (BQNP)~\cite{KSV02}.

Note now that if we instead ask in the definition of QMA that $y\in\Sigma^{p(\abs{x})}$, then the corresponding complexity class is known as quantum-classical Merlin-Arthur (QCMA)~\cite{AN02,JW06,A06,AK07,Beigi08,ABBS08,WY08}. (QCMA is also known by the name Merlin-Quantum-Arthur (MQA), as suggested by Watrous~\cite{W09_2}.) Finally, if $y$ is classical \emph{and} we replace $\set{Q_n}$ with a classical circuit family of the type used in defining BPP, then the class we obtain is Merlin-Arthur (MA)~\cite{B85}.

\paragraph{Error reduction for QMA.} Like BQP, the completeness and soundness parameters in the definition of QMA can be amplified to values exponentially close to $1$ and $0$, respectively. However, the arguments employed here are not as straightforward as in the case of BQP. For QMA, there are two approaches for achieving error reduction, which we refer to as \emph{weak} and \emph{strong} error reduction, and which we now discuss.

\emph{Weak} or \emph{standard} error reduction runs analogously to the case of BQP, i.e.\ by running the verification protocol some number of times $m$ in parallel and taking a majority vote. However, since from Section~\ref{0_sscn:quirks}, we know that unknown quantum states cannot be cloned, the verifier must ask the prover for multiple copies of the proof $\ket{y}$, one for each of the $m$ parallel runs of the protocol. If the verifier is honest, the proof sent for the new protocol is a product state $\ket{y'}=\ket{y}^{\otimes m}\in(\complex^2)^{\otimes p(\abs{x})\cdot m}$, in which case the $m$ runs of the verification protocol are independently and identically distributed Bernoulli trials, and the Chernoff bound can be applied. However, if we have a NO-instance, i.e.\ $x\in\ano$, then in a desperate attempt to trick the verifier into thinking $x\in\ayes$, the prover may elect to cheat by sending a proof $\ket{y'}$ which deviates from this product state structure. Can we still apply the Chernoff bound argument here?

It turns out the answer is \emph{yes}, the intuition for which we now sketch. (A detailed proof can be found in~\cite{AN02}.) Specifically, let $V$ denote the original verification protocol. Then, given any $\ket{y'}\in(\complex^2)^{\otimes p(\abs{x})\cdot m}$, we adopt the following view: On the first $p(\abs{x})$ proof qubits, we run the first copy of $V$, measure and read the output qubit, and subsequently discard these $p(\abs{x})$ qubits. Note that the reduced state of $\ket{y'}$ on these first $p(\abs{x})$ qubits before running $V$ is simply a convex mixture of proofs $\ket{y}\in(\complex^2)^{\otimes p(\abs{x})}$; thus, by the soundness property of the QMA protocol, the probability of acceptance in this first run is at most $1/3$. We can iterate this argument over each of the remaining $m-1$ copies of $V$, each time obtaining a probability of accepting of at most $1/3$. It follows that a majority vote, coupled with the Chernoff bound, yields the desired error reduction.

Finally, although weak error reduction is simple, its disadvantage is that it requires an increase in the proof size, since the prover must send multiple copies of the original proof. Is it possible to reduce the error \emph{without} increasing the proof length? Remarkably, Marriot and Watrous have shown~\cite{MW05} that the answer is \emph{yes}. The rough idea here is best illustrated in the case of a zero-error verifier $V$, i.e.\ where the completeness and soundness parameters are $1$ and $0$, respectively. Specifically, let $V$ be a zero-error verifier $V$, and $\ket{y}$ the prover's proof for some instance $x\in\ayes$. Then, if we run $V$ on $(x,\ket{y})$ and measure the output qubit, we will see outcome $1$ with certainty. Thus, the measurement does not alter the output state of $V$. Further, if we now run $V$ in reverse and measure the ancillary qubits of $V$, they should read all zeroes with certainty, implying this second measurement also does not alter the state being measured. In fact, we can repeat this back and forth process as many times as we like, each time obtaining the same ``good'' measurement outcomes.

What happens now if we do not have a zero-error QMA verifier $V$, and have a NO instance $x\in\ano$? In this case, the output qubit of $V\ket{x}\otimes\ket{y}$ must yield outcome $1$ with probability at most $1/3$ --- in other words, measuring this qubit now disturbs the state $V\ket{x}\otimes\ket{y}$. Moreover, when we next apply $V^\dagger$ and measure the ancilla qubits, since $V$ is unitary, the outcome cannot be the all-zeroes string with non-negligible probability, again disturbing the state. Intuitively, by repeating this back-and-forth procedure, we thus quickly amplify the likelihood of obtaining ``bad'' measurement outcomes in this process. In our opinion, the entire process can be thought of as analogous to a spinning top --- if the top wobbles badly enough to begin with (if $x\in\ano$), the spinning motion (the back and forth measurement process) quickly sends the top out of control.

\subsection{BQP and QMA in further depth}~\label{0_scn:qma}
As QMA plays an important role in this thesis, we now further discuss its properties, variants, and complete problems. Along the way, we also mention some further properties of BQP.

First, we have
\begin{equation}\label{0_eqn:temp1}
    \class{NP}\subseteq\class{MA}\subseteq\class{QCMA}\subseteq\class{QMA}\subseteq\class{PP}.
\end{equation}
Here, PP is defined analogously to BPP, except that when input $x\in\ayes$, then the verifier accepts with probability strictly larger then $1/2$; if $x\in\ano$, the verifier accepts with probability at most $1/2$. The second of the containments above follows since a QCMA verifier can choose to act classically. The third containment holds since a QMA verifier can force a given quantum proof to encode a classical string by preceding the verification procedure with a measurement in the computational basis. Finally, the fourth containment has an elegant proof via the strong error reduction technique of Marriott and Watrous~\cite{MW05}, and was originally proven by Kitaev and Watrous~\cite{KW00}.

Regarding BQP, we have that
\begin{equation}\label{0_eqn:temp2}
    \class{BPP}\subseteq\class{BQP}\subseteq\class{QMA},
\end{equation}
where the second containment follows since the verifier can simple flush the prover's proof down the toilet and run the BQP circuit instead. Combining Equations~(\ref{0_eqn:temp1}) and~(\ref{0_eqn:temp2}) yields $\class{BQP}\subseteq\class{PP}$; we remark that this containment was directly proven by Adleman, DeMarrais, and Huang~\cite{ADH97} and Fortnow and Rogers~\cite{FR99}. Marriott and Watrous have shown that $\class{BQP}=\class{QMA}_{\log}$~\cite{MW05}, where $\class{QMA}_{\log}$ is QMA with a logarithmic size proof. The classical version of this equality might be written $\class{P}=\class{NP}_{\log}$, i.e.\ NP with logarithmic size proofs is contained in P. Finally, it is well-known that in the classical setting, $\BPP\subseteq\st$~\cite{S83,L83}, for $\st$ the second level of the polynomial hierarchy $\PH$. Whether $\BQP\subseteq\PH$, however, remains a major open question~\cite{Aa10,FU10,Aa11}.

\paragraph{One-sided error.} Next, we discuss the one-sided error versions of MA, QCMA, and QMA. Specifically, let $\class{MA}_1$, $\class{QCMA}_1$, and $\class{QMA}_1$ be defined as MA, QCMA, and QMA, respectively, except with completeness $1$ in each case. In other words, if $x\in\ayes$, the verifier for the new classes accepts with certainty. Zachos and Furer have shown that $\class{MA}=\class{MA}_1$~\cite{ZF87} (see also Goldreich and Zuckerman~\cite{GZ97}), and more recently, Jordan, Kobayashi, Nagaj, and Nishimura have proven that $\class{QCMA}_1=\class{QCMA}$~\cite{JKNN12}. Whether $\class{QMA}_1=\class{QMA}$, however, remains an interesting open question, particularly since both QCMA and $\class{QIP}(3)$ in the chain $\class{QCMA}\subseteq\class{QMA}\subseteq\class{QIP}(3)$ allow one-sided error~\cite{KW00}. Here, $\class{QIP}(k)$ is the class of promise problems having \emph{Quantum Interactive Proofs} with $k$ rounds, meaning it is a generalized version of QMA in which $k$ quantum messages are passed back and forth between prover and verifier. For example, $\class{BQP}=\class{QIP(0)}$, $\class{QMA}=\class{QIP}(1)$, and $\class{QIP}(3)$ consists of a message from prover to verifier, followed by a message from verifier to prover, and a final message back from the  prover to the verifier. Aaronson has demonstrated a quantum oracle relative to which $\class{QCMA}_1\subset \class{QCMA}$ and $\class{QMA}_1\subset \class{QMA}$~\cite{A09}.

\paragraph{Complete problems.} We now move to arguably one of the most important questions for any complexity class: \emph{What problems {characterize}, or are complete for QMA?} In general, the set of QMA-complete problems is not yet nearly as rich as that for its classical cousin, NP. The historically first QMA-complete problem was the local Hamiltonian problem (first presented by Kitaev at~\cite{K99}, and later written up in~\cite{KSV02}), which is a natural generalization of the NP-complete problem of classical constraint satisfaction, and relevant from a physics perspective. In fact, we devote Section~\ref{0_sscn:LH} entirely to this problem and its variants, and thus do not discuss it further here.

Perhaps the second-most studied and natural QMA-complete problem is the Consistency problem for local density matrices of Liu~\cite{L06}. In this problem, one is given a classical description of a set of density matrices $\rho_S$, each acting on a subset $S\subseteq [n]$ qubits for $\abs{S}=k$ and $k\in\Theta(1)$. The question is whether there exists a globally consistent $n$-qubit state $\rho$ such that $\trace_{[n]\backslash S}(\rho) = \rho_S$ for all $S$. The proof of QMA-hardness for $k=2$ follows via a polynomial-time Turing or Cook reduction involving convex programming from the $2$-local Hamiltonian problem~\cite{L06}; the reduction in the reverse direction was later given by Liu in~\cite{L07}, and goes via a strong theorem of alternatives in semidefinite programming. Other physically motivated variants of the Consistency problem have also been shown to be QMA-complete: The variant involving fermions, known as the N-representability problem, was shown QMA-complete by Liu, Christandl, and Verstraete~\cite{LCV07}, as well as its bosonic counterpart by Wei, Mosca, and Nayak~\cite{WMN10}.

What other QMA-complete problems are known? Given a classical description of a quantum circuit, the problem of determining whether it is ``close'' to the identity, known as the Identity Check problem, was shown QMA-complete by Janzing, Wocjan, and Beth~\cite{JWB03}. Rosgen~\cite{R09} has shown that a similar problem where one is asked whether a given quantum circuit is close to a linear isometry is QMA-complete. Finally, Beigi and Shor~\cite{BS07} have proposed a QMA-complete quantum generalization of the Clique problem, which asks: Given an (entanglement-breaking) channel $\Phi$, do there exist $k$ quantum states $\rho$ which are distinguishable without error after passing through the channel?

\paragraph{Multiple provers.} QMA is a proof system with a single prover and verifier. A curiosity emerges when we ask the question: What happens to the power of the proof system if we introduce a \emph{second} prover? In other words, what if there are two provers, $P_1$ and $P_2$, who send a joint proof of the form $\ket{\psi_P}\otimes\ket{\psi_{Q}}$ to the verifier? Interestingly, unlike the classical setting where having two provers is trivially equivalent to having a single prover, in the quantum setting, the possibility of \emph{entanglement} between the two proofs (entanglement is introduced in Section~\ref{0_scn:nonclassicality}) makes this a non-trivial question. This class is called $\class{QMA}(2)$~\cite{KMY03}. Why should it be of any interest? Perhaps surprisingly, Blier and Tapp~\cite{BT10} have shown that all languages in NP have very short proofs in this model; specifically, it suffices for $P_1$ and $P_2$ to send proofs $\ket{\psi_{P_1}}$ and $\ket{\psi_{P_2}}$, respectively, consisting of just $O(\log n)$ qubits each. The reader is referred to Chapter~\ref{chap:qmapoly} for formal definitions and details regarding this model, where it is studied in further depth.

\subsection{Local Hamiltonian complexity: An overview}\label{0_sscn:LH}

In Section~\ref{0_scn:qma}, we initiated our discussion of QMA-complete problems, and stated that the first known such problem was the local Hamiltonian problem. As this problem features heavily in Chapters~\ref{chap:approx} and~\ref{chap:hardnessapprox}, we now discuss it in further depth. We begin by defining the problem, and follow by demonstrating how it generalizes the canonical NP-complete problem MAX-SAT. We then discuss some of its variants and its history with respect to the field of complexity theory. Later in Section~\ref{0_sscn:5LH}, we give Kitaev's proof that the $5$-local Hamiltonian problem is QMA-complete.

Beginning with definitions, the local Hamiltonian problem ($\lh$) was introduced by Alexei Kitaev~\cite{K99,KSV02}, and can intuitively be thought of as follows: Given a ``succint'' representation of a ``large'' Hamiltonian $H$, what is $H$'s smallest eigenvalue? Of course, the obvious approach to answering this question is to diagonalize $H$ --- however, the catch is that while $H$ is a $2^n\times 2^n$-dimensional matrix, the succinct encoding we are given of $H$ consists of $\poly(n)$ bits. In other words, a simple diagonalization approach would take time exponential in the input size.

Let us now define $\lh$ more formally. To do so, we first define the term \emph{$k$-local Hamiltonian}. 

\begin{definition}\label{0_def:localhamiltonian}
    An operator $H\in \HH(\B^{\otimes n})$ is called a $k$-local Hamiltonian if it can be written
    \begin{equation}\label{0_eqn:temp3}
        H = \sum_{j=1}^r H_j,
    \end{equation}
    where $\set{H_j}_{j=1}^r\subseteq \HH(\B^{\otimes k})$ is a collection of local Hamiltonian terms, such that each $H_j$ acts non-trivially on some subset ${S_j}\subseteq[n]$ of at most $k$ qubits and satisfies $0	\preceq H_j \preceq I$. Note: In Equation~(\ref{0_eqn:temp3}), we adopt the convention that each $H_j$ acts as the identity on all qubits in the set $[n]\backslash S_j$.
\end{definition}
\noindent Note that although we define $H$ as acting on qubits above, the definition extends straightforwardly to the case of higher-dimensional local systems. Intuitively, the definition above says that a $k$-local Hamiltonian $H$ can be expressed as a sum of ``smaller'' Hermitian operators $H_j$, each of which is restricted to act non-trivially on at most $k$ out of $n$ qubits. 

We now phrase the $\klhh$ problem. We remark that later, in Chapter~\ref{chap:approx}, we shall formulate $\klhh$ in a slightly different manner; the definition below is, however, arguably more natural and thus better suited to an introductory section.

\begin{problem}[$k$-Local Hamiltonian ($\klhh$)~\cite{KSV02}] \label{0_def:localhamiltonianproblem}Given as input:
    \begin{compactenum}
        \item A $k$-local Hamiltonian $H$ acting on $n$ qubits, specified as a collection of local Hamiltonian terms $\set{H_j}_{j=1}^r\subseteq \HH(\B^{\otimes k})$ (i.e. as a collection of $(2^k\times 2^k)$-dimensional matrices $H_j$) where $k\in \Theta(1)$,
        \item Threshold parameters $a,b\in\reals$, such that $0\leq a < b$ and $(b-a)\geq 1$,
    \end{compactenum}
    decide, with respect to the complexity measure $\enc{H} + \enc{a} + \enc{b}$:
    \begin{compactenum}
        \item If $\lmin{H}\leq a$, output YES.
        \item If $\lmin{H} \geq b$, output NO.
    \end{compactenum}
\end{problem}
\noindent Note that often $\klhh$ is phrased with $(b-a)\geq 1/p(n)$ for some polynomial $p$; such an inverse polynomial gap can straightforwardly be boosted to the constant $1$ above by defining $H$ to have $p(n)$ many copies of each local term $H_j$~\cite{W09_2}.

Although it may not be \emph{a priori} obvious, $\klhh$ generalizes the canonical NP-complete problem MAX-k-CSP, where CSP stands for \emph{Constraint Satisfaction Problem} (of which a special case is the more familiar problem MAX-k-SAT). To see this, recall that in MAX-k-CSP, one is given a set of Boolean functions, $c_i:\set{0,1}^k\mapsto\set{0,1}$ (note the $c_i$ are not restricted to be of any particular form such as conjunctive normal form), where each $c_i$ acts on $k$ out of $n$ possible bits. We then ask: What is the largest number of clauses $c_i$ we can satisfy with a Boolean assignment to the $n$ bits? To embed this problem into $\klhh$, we design a $k$-local Hamiltonian $H$ acting on $n$ qubits as follows. For each clause $c_i$, define a $2^k\times 2^k$-dimensional diagonal matrix $H_{c_i}\in\HH(\B^{\otimes k})$ such that $H_{c_i}(m,m)=0$ if the binary representation of $m$ is a satisfying assignment for clause $c_i$; otherwise, $H_{c_i}(m,m)=1$. In other words, for $x\in\set{0,1}^k$, $\trace(H_{c_i}\ketbra{x}{x})=0$ if $x$ satisfies $c_i$, and $\trace(H_{c_i}\ketbra{x}{x})=1$ otherwise, i.e.\ failing assignments are given an \emph{energy penalty}. To now see that the optimal value of our MAX-$k$-CSP instance corresponds to the smallest eigenvalue of $H=\sum_c H_{c_i}$, we use the fact that since all the $H_{c_i}$ are diagonal, they commute and thus simultaneously diagonalize. Hence, $H$ has integer eigenvalues. Moreover, since the $H_{c_i}$ are simultaneously diagonal in the computational basis, the smallest eigenvalue of $H$ equals the minimum number of unsatisfied clauses over all $n$-qubit computational basis states. It follows that  $\klhh$ generalizes MAX-k-CSP, and thus $\klhh$ is NP-hard. This raises the natural question: \emph{Could $\klhh$ be a canonical QMA-complete quantum constraint satisfaction problem?}

\paragraph{Variants of $\klhh$ and a brief history.} It turns out that $\klhh$ is indeed QMA-complete; Kitaev~\cite{K99,KSV02} showed the problem to be in QMA for $k\geq 1$ and QMA-hard for $k\geq 5$. The proof of QMA-hardness was inspired by earlier ideas of Feynman~\cite{KSV02,F85}, and can be thought of as exploiting Feynman's ideas to adapt the classical Cook-Levin theorem in a non-trivial fashion to the quantum setting. The fact that $3$-$\lh$ is also QMA-complete was shown subsequently by Kempe and Regev~\cite{KR03} (an alternate proof was later also given by Nagaj and Mozes~\cite{NM06}). Finally, Kempe, Kitaev, and Regev showed~\cite{KKR06} that even $2$-$\lh$ is QMA-complete. Note that $1$-$\lh$ is in P, since one can simply optimize for each $1$-local term independently. Although these results are interesting from a complexity theoretic perspective, a more natural question from a physics perspective is whether such QMA-hardness results can be shown even if the QMA-hard classes of local Hamiltonians arising in the reductions employed correspond to \emph{physical} quantum systems in nature~\cite{OT05,AGIK09,N08,SV09}. Along these lines, Oliveira and Terhal next showed~\cite{OT05} that $2$-$\lh$ with the Hamiltonians restricted to nearest-neighbor interactions on a 2D grid is still QMA-complete. Furthermore, in stark contrast to the classical case of MAX-2-CSP on the line (which is in P), Aharanov, Gottesman, Irani and Kempe~\cite{AGIK09} showed that $2$-LH with nearest-neighbor interactions on the line is also QMA-complete if the local systems have dimension at least $12$ (Nagaj later improved this to $11$ states per particle~\cite{N08}).

Although this thesis focuses on the general local Hamiltonian problem as defined in Definition~\ref{0_def:localhamiltonianproblem}, for completeness, we now mention a few interesting variants of LH which have also been studied. First, Bravyi and Vyalyi showed that the variant of $2$-$\lh$ (with local systems of arbitrary, but constant, dimension) in which all local Hamiltonian terms $H_j$ pairwise commute is in NP. This result was extended to the case of $3$-$\lh$ on qubits by Aharonov and Eldar~\cite{AE11}. Bravyi~\cite{B06} introduced a variant of $\klhh$ known as Quantum $k$-SAT, in which each local Hamiltonian term $H_j$ is a projector, and in which the threshold $a$ is set to $0$. We remark that in the YES case of such a setup, the local Hamiltonian is referred to as \emph{frustration-free}, since the optimal assignment lies in the null space of every interaction term. Bravyi then showed that, like classical 2-SAT, Quantum 2-SAT is in P (whereas recall 2-$\lh$ is QMA-complete)~\cite{B06}. In contrast, Quantum 4-SAT is $\class{QMA}_1$-complete (recall $\class{QMA}_1$ is the one-sided error analog of QMA)~\cite{B06}. Whether Quantum 3-SAT on qubits is $\class{QMA}_1$-complete remains an intriguing open question (see Reference~\cite{NM06}). Next, there has been a line of work on so-called \emph{stoquastic} local Hamiltonians~\cite{BBT06,BDOT08,BT09,L07,JGL10}. Specifically, the Stoquastic $k$-SAT problem, defined the same as Quantum $k$-SAT except that all local projectors have real non-negative matrix elements when expressed in the computational basis, was shown to be in MA for $k\geq 1$, and MA-complete for $k\geq 6$~\cite{BBT06,BT09}. (Incidentally, this was the first non-trivial example of an MA-complete promise problem.) The problem Stoquastic LH-MIN, defined as $\klhh$ except where each local Hamiltonian constraint $H_j$ has real non-positive off-diagonal matrix elements in the computational basis, was shown complete for the class StoqMA~\cite{BBT06} for $k\geq 2$. Here, StoqMA is a variant of QMA in which the verifier is restricted to preparing qubits in the states $\ket{0}$ and $\ket{+}$, performing classical reversible gates, and measuring in the Hadamard (i.e.\ $\ket{+},\ket{-}$) basis. Note that $\class{MA}\subseteq\class{StoqMA}\subseteq\class{QMA}$. Finally, variations of LH with symmetry constraints have been studied from a complexity theoretic perspective in, for example,~\cite{GI09,K07}.

\paragraph{Connection to physics.} Although we have primarily discussed LH from a complexity theoretic viewpoint involving quantum constraint satisfaction, the initial motivation for studying LH comes of course from physics. Indeed, the study of the local Hamiltonian problem is part of the more general field of Hamiltonian Complexity, whose aim is to understand how difficult it is to simulate physical systems. In particular, LH can be phrased as a special case of the more general Simulation Problem~\cite{O11}, which roughly asks the following: Given a description of a Hamiltonian $H$, an initial state $\rho$, an observable $M$, and a time $t\in\complex$, estimate the expectation
\begin{equation}
    \trace\left[M \frac{(e^{iHt})^\dagger \rho e^{iHt}}{\trace\left((e^{iHt})^\dagger \rho e^{iHt}\right)}\right].
\end{equation}
The local Hamiltonian problem is recovered by choosing $H$ as a local Hamiltonian, setting $M=H$, $\rho=I/\trace(I)$, and considering $t=i\beta$ for $\beta\in\reals$ and $\beta\rightarrow\infty$. We refer the reader to the survey of Osborne for further details~\cite{O11}.

\subsection{Kitaev's quantum Cook-Levin theorem}\label{0_sscn:5LH}

In Section~\ref{0_sscn:LH}, we discussed the local Hamiltonian problem (LH) and its variants. As Chapter~\ref{chap:hardnessapprox} heavily exploits the structure and details of Kitaev's quantum version of the Cook-Levin theorem, i.e.\ his proof that $5$-LH is QMA-complete, we present the latter here. This requires two steps: One first shows that $\klhh\in\class{QMA}$ for $k\geq 1$. One then shows that $\klhh$ is QMA-hard for $k\geq 5$. Our discussion is based on a project completed by the present author for a graduate course on quantum complexity theory at the University of Waterloo~\cite{course09}, and follows the text of Kitaev, Shen, and Vyalyi~\cite{KSV02} closely. The reader is referred to the survey of Aharonov and Naveh for an alternate exposition~\cite{AN02}.

\subsubsection{Local Hamiltonian is in QMA}
We begin by showing that $\klhh\in\qma$ for any constant $k$. Specifically, for any YES-instance $(H,a,b)$ of $\klhh$ with $k$-local Hamiltonian $H=\sum_{j=1}^r H_j\in \LL(\B^{\otimes n})$, we show that there exists a poly-size quantum proof $\ket{\psi}$ and a poly-size quantum verification circuit $V$, such that a single-qubit measurement on $V\ket{\psi}$ yields $1$ with high probability.

First, the quantum proof is constructed as $\ket{\psi}\in\complex^r\otimes\B^{\otimes n}\otimes\B$ as:
\begin{equation}
    \ket{\psi}=\left(\frac{1}{\sqrt{r}}\sum_{j=1}^{r}\ket{j}\right)\otimes\ket{\eta}\otimes\ket{0},
\end{equation}
for $\set{\ket{j}}_{j=1}^r$ an orthonormal basis for $\complex^r$, and $\ket{\eta}$ an eigenvector corresponding to some eigenvalue $\lambda$ of $H$. We call the first register of $\ket{\psi}$ the \emph{index} register, the second the \emph{proof} register, and the last the \emph{answer} register.

To define the verification procedure $V$, recall that $H=\sum_{j=1}^{r}H_j$, where each $H_j$ acts on the set of qubits denoted by $S_j$. Suppose $H_j$ has spectral decomposition $H_j=\sum_{s}\lambda_s \ketbra{\lambda_s}{\lambda_s}$. Then, define unitary $W_j$ acting on the proof and answer registers, i.e.\ $W_j\in \UU (\B^n\otimes\B)$, such that
\begin{equation}
    W_j\left(\ket{\lambda_s}\otimes\ket{0}\right) = \ket{\lambda_s}\otimes\left(\sqrt{\lambda_s}\ket{0} + \sqrt{1-\lambda_s} \ket{1}\right).
\end{equation}
Observe that one can implement this operation as follows. First, run phase estimation on $\exp(iH_j)$ to extract $\lambda_s$ to some ancilla register. Despite the fact that simulating $\exp(iH_j)$ can in general be costly, in our case, since $\abs{S_j}$ is constant, the simulation can be done efficiently. Conditioned on the value of the ancilla, we then rotate the answer register to obtain the desired superposition, and finally uncompute $\lambda_s$ in the ancilla. Define now unitary $V:=\sum_{j=1}^r\ketbra{j}{j}\otimes W_j$.

Having defined $\ket{\psi}$ and $V$, the verification procedure now proceeds as follows:
\begin{enumerate}
    \item Apply $V$ to $\ket{\psi}$.
    \item Measure the answer register and return the result.
\end{enumerate}

Let us analyze the probability of measuring $1$ in the answer register with this procedure. If we assume the index register is implicitly measured at the end of the verification, then we can think of Step 1 above as using the index register to choose an index $j$ uniformly at random, followed by applying $W_j$ to the proof register. Then, we can analyze the probability that this procedure returns $1$ as follows:
\begin{equation}
    \pr(\text{output 1}) = \sum_{j=1}^r\frac{1}{r}\pr(\text{output 1}\mid W_j\text{ is applied}),\label{eqn:2}
\end{equation}
where one has
\begin{eqnarray}
    \pr(\text{output 1}\mid W_j\text{ is applied}) &=& \trace\left[(I_{\B^{\otimes n}}\otimes \ketbra{1}{1})W_j(\ketbra{\eta}{\eta}\otimes\ketbra{0}{0})W_j^\dagger\right]\\
        &=&(\bra{\eta}\otimes\bra{0}) W_j^\dagger (I_{\B^{\otimes n}}\otimes \ketbra{1}{1})W_j (\ket{\eta}\otimes\ket{0}).\label{eqn:1}
\end{eqnarray}
The projector $\ketbra{1}{1}$ above acts on the answer register. To simplify this, rewrite $\ket{\eta}$ in the eigenbasis of $H_j$, i.e.\ $\ket{\eta}=\sum_s \alpha_s\ket{\lambda_s}$, and observe that
\begin{eqnarray}
    (I_{\B^{\otimes n}}\otimes \bra{1})W_j (\ket{\eta}\otimes\ket{0})&=&
        (I_{\B^{\otimes n}}\otimes \bra{1})W_j \left(\sum_s \alpha_s \ket{\lambda_s}\otimes\ket{0}\right)\\
        &=&(I_{\B^{\otimes n}}\otimes \bra{1}) \left[\sum_s \alpha_s \ket{\lambda_s}\otimes\left(\sqrt{\lambda_s}\ket{0}+\sqrt{1-\lambda_s}\ket{1}\right)\right]\nonumber\\
        &=&\sum_s \alpha_s \left(\sqrt{1-\lambda_s}\right)\ket{\lambda_s}.
\end{eqnarray}
Substituting this into Equation~(\ref{eqn:1}), we obtain:
\begin{eqnarray}
    \pr(\text{output 1}\mid W_j) &=& \left(\sum_t \alpha_t^\ast \left(\sqrt{1-\lambda_t}\right)\bra{\lambda_t}\right)\left(\sum_s \alpha_s \left(\sqrt{1-\lambda_s}\right)\ket{\lambda_s}\right)\nonumber\\
    &=& \sum_s(1-\lambda_s)\abs{\alpha_s}^2\\
    &=& 1- \sum_s\lambda_s\abs{\alpha_s}^2\\
    &=&1- \bra{\eta}H_j\ket{\eta},
\end{eqnarray}
where we have used the fact that $\sum_s\abs{\alpha_s}^2=1$. Substituting this into Equation~(\ref{eqn:2}) finally yields:
\begin{equation}
    \pr(\text{output 1}) = \sum_{j=1}^r\frac{1}{r}1- \bra{\eta}H_j\ket{\eta}
                = 1- \frac{1}{r}\bra{\eta}\left(\sum_{j=1}^rH_j\right)\ket{\eta}
                = 1- \frac{1}{r}\bra{\eta} H\ket{\eta}.
\end{equation}
Recalling that we chose $\eta$ to be an eigenvector of $H$ with some eigenvalue $\lambda$, we have that if $H$ corresponds to a YES instance (i.e.\ there exists $\lambda\leq a$), it follows that we can choose $\eta$ such that our verification procedure returns $1$ with probability $1-r^{-1}\lambda\geq 1-r^{-1}a$. On the other hand, if $H$ corresponds to a NO instance (i.e.\ for all $\lambda$, we have $\lambda \geq b$), we have $\pr(\text{output 1})\leq 1-r^{-1}b$. Since the probabilities in the YES and NO cases differ by an inverse polynomial in the input size, we can apply the error reduction techniques for QMA discussed in Section~\ref{0_sscn:classes} to conclude that $\lh\in\qma$.

\subsubsection{5-local Hamiltonian is hard for QMA}
We next show that $5$-local Hamiltonian is QMA-hard. To do so, we show a polynomial-time many-one or Karp reduction from an arbitrary problem in QMA to $\flh$.

To begin, let $P$ be a promise problem in $\qma$, and let $V=V_LV_{L-1}\dots V_1$ be a verification circuit for $P$ composed of unitaries $V_k$. Without loss of generality, we assume each $V_k$ acts on pairs of qubits. We assume $V\in \UU(\B^{\otimes m}\otimes \B^{\otimes N-m})$, where the $m$-qubit register contains the proof $V$ verifies, and the remaining qubits are ancilla qubits.

Our goal is to define a $5$-local Hamiltonian $H$ that will have a small eigenvalue if and only if there exists a proof $\ket{\psi}\in\B^{\otimes m}$ causing $V$ to accept with high probability. Kitaev's idea~\cite{KSV02} was to exploit the structure of $V$ by forcing the minimizing eigenvector of $H$ to ``simulate'' the action of $V$. To do so, let $H$ act on $\B^{\otimes m}\otimes \B^{\otimes N-m}\otimes\complex^{L+1}$, which is simply the initial space $V$ acts on, tensored with an $(L+1)$-dimensional \emph{counter} or \emph{clock} register. This clock register will ``keep track of time'' in the simulation, i.e.\ a value of $k$ in the register will correspond to having ``applied'' $V_1\dots V_k$. For clarity of exposition, where necessary, we label the three registers $H$ acts on as $p$ for proof, $a$ for ancilla, and $c$ for clock, respectively.

Having defined the space $H$ acts on, we now define $H$ itself:
\begin{equation}
    H := \hin + \hprop + \hout,
\end{equation}
with the terms $\hin$, $\hprop$, and $\hout$ defined as follows (intuitive explanations to follow). Let
\begin{equation}
    \hin:=I_p\otimes
    \left(I_a - \ketbra{0\dots0}{0\dots0}_a\right)\otimes \ketbra{0}{0}_c.
\end{equation}
Note that the projector $(I_a - \ketbra{0\dots0}{0\dots0}_a)$ is used here for simplicity of exposition; the same analysis holds if we instead use the $1$-local constraint $\sum_{i=1}^{N-m}(\ketbra{1}{1}_{i})_c$ (where the $i$th projector acts on the $i$th ancilla qubit) --- hence, we do not violate the constraint that $H$ be $5$-local. Next, $\hout$ is defined as
\begin{equation}
    \hout:=\left(\ketbra{0}{0}\otimes I_{\B^{\otimes m-1}}\right)_p\otimes
     I_a\otimes \ketbra{L}{L}_c.
\end{equation}
Finally, define $\hprop$ as
\begin{eqnarray}
    \hprop &:=& \sum_{j=1}^{L} H_j \text{,~~~~where}\nonumber\\
    H_j &:=& -\frac{1}{2}V_j\otimes\ketbra{j}{j-1}_c -\frac{1}{2}V_j^\dagger\otimes\ketbra{j-1}{j}_c +\\&&\frac{1}{2}I\otimes(\ketbra{j}{j}+\ketbra{j-1}{j-1})_c.\label{eqn:H_j}
\end{eqnarray}

\noindent Each of the terms $\hin$, $\hout$, and $\hprop$ allow us to ``force'' the minimizing eigenvector of $H$ to ``simulate'' V as follows. Recall that our goal is to have $\bra{\eta}H\ket{\eta}$ for some $\ket{\eta}\in\B^{\otimes m}\otimes \B^{\otimes N-m}\otimes\complex^{L+1}$ be small if and only if $V$ outputs $1$ with high probability on some proof $\ket{\psi}\in\B^{\otimes m}$. Suppose such a $\ket{\psi}$ exists. Then, for $\hin$, note that when one runs $V$ on $\ket{\psi}$, the initial state should be $\ket{\psi}_p\otimes\ket{0}^{\otimes N-m}_a$, i.e.\ all ancilla qubits should be set to $0$, with the purported proof in the proof register. But $\hin$ enforces \emph{precisely} this constraint for any $\ket{\eta}$. In particular, if the clock register of $\ket{\eta}$ is in state $\ketbra{0}{0}_c$ and the ancilla register is \emph{not} all zeroes, then we have $\bra{\eta}\hin\ket{\eta}>0$, i.e.\ $\ket{\eta}$ incurs an energy penalty. In other words, if $\ket{\eta}$ does not simulate the initial state of the verification procedure $V$, $\hin$ penalizes $\ket{\eta}$. Next, for $\hout$, note that after running $V$ on $\ket{\psi}$, we expect the first qubit in the proof register to be a $1$ with high probability. Again, observe that $\hout$ enforces exactly this constraint on $\ket{\eta}$ --- if the clock register is in state $\ketbra{L}{L}_c$ and the first qubit reads $0$, we again have $\bra{\eta}\hout\ket{\eta}>0$. Finally, $\hprop$ follows the same idea by forcing $\ket{\eta}$ to encode in superposition a simulation of each step of the verification procedure $V$. It follows that the minimizing vector $\ket{\eta}$ is of the following form, often called a \emph{history state}:
\begin{equation}\label{eqn:eta}
    \ket{\eta} := \frac{1}{\sqrt{L+1}}\sum_{j=0}^{L}\left(V_j\dots V_1 \ket{\psi}_p\otimes\ket{0}^{\otimes N-m}_a\right)\otimes \ket{j}_c.
\end{equation}

\noindent To recap, if there exists a $\ket{\psi}$ such that $V$ accepts with high probability, then the history state $\ket{\eta}$ corresponds to a small eigenvalue of $H$. On the other hand, if no such $\ket{\psi}$ exists, either $\ket{\eta}$ will be of the form in Equation~(\ref{eqn:eta}) (i.e.\ will faithfully simulate $V$), in which case we are hit with a large penalty by $\hout$ since the answer qubit cannot be $1$ with high probability, or $\ket{\eta}$ ``cheats'' by deviating from either the initial conditions or the intermediate steps of the protocol, in which case the terms $\hin$ and $\hout$ hit $\ket{\eta}$ with an energy penalty, respectively. Thus, the corresponding energy of $\ket{\eta}$ would be large. Of course, it remains to show that this intuition is indeed correct!

Before we begin, we first apply the following change of basis operator to $\hprop$, which greatly simplifies the analysis (intuition to follow):
\begin{equation}
    W=\sum_{j=0}^{L} V_j\dots V_1\otimes\ketbra{j}{j}_c.
\end{equation}
Thus, instead of $\ket{\eta}$ and $H$, we consider $\ket{\hat{\eta}}:=W\ket{\eta}$ and $\hat{H}:=W^\dagger H W$. To see what $\hat{H}$ looks like, we analyze the action of $W$ on each of $\hin$, $\hout$, and $\hprop$ separately. Observe first that $\hat{H}_{\rm in}:=W^\dagger \hin W=\hin$, since at time $0$, $W$ implicitly applies the identity to the proof and ancilla registers. Second, for $\hout$, we have
\begin{equation}
    \hat{H}_{\rm out}:=W^\dagger \hout W=V^\dagger \left[\left(\ketbra{0}{0}\otimes I_{\B^{\otimes m-1}}\right)_p\otimes
     I_a \right]V\otimes\ketbra{L}{L}_c= (V^\dagger\otimes I_c)\hout(V\otimes I_c),
\end{equation}
since at time $L$, $W$ applies the entire circuit $V$. Finally, for $\hprop$, considering the effect of $W$ on each component of $H_j$ in Equation~(\ref{eqn:H_j}) separately and using simple algebra, one finds
\begin{equation}
    W^\dagger H_j W = I_{p,a}\otimes\frac{1}{2}(\ketbra{j-1}{j-1}-\ketbra{j-1}{j}-\ketbra{j}{j-1}+\ketbra{j}{j})_c=I_{p,a}\otimes\frac{1}{2}\left(
                                                                                                     \begin{array}{cc}
                                                                                                       1 &-1 \\
                                                                                                       -1 &1 \\
                                                                                                     \end{array}
                                                                                                   \right)_c.
\end{equation}
It follows that $\hat{H}_{\rm prop}=\sum_j W^\dagger H_j W$ is tridiagonal and of the form
\begin{equation}
    \hat{H}_{\rm prop}=I_{p}\otimes I_a\otimes\left(
      \begin{array}{cccccc}
        \frac{1}{2} & -\frac{1}{2} & 0 & 0 & 0 & \dots \\
        -\frac{1}{2} & 1 & -\frac{1}{2} & 0 & 0 & \dots \\
        0 & -\frac{1}{2} & 1 & -\frac{1}{2} & 0 & \dots \\
        0 & 0 & -\frac{1}{2} & 1 & -\frac{1}{2} & \dots \\
        0 & 0 & 0 & -\frac{1}{2} & \ddots & \ddots \\
        \vdots & \vdots & \vdots & \vdots & \ddots & \ddots \\
      \end{array}
    \right)=:I_{p}\otimes I_a\otimes E_c,
\end{equation}
where we have let $E$ denote the tridiagonal matrix acting on the clock register for later reference. Intuitively, one can think of the change of basis $W$ as ``flushing out'' the computation $V$, so that it is pushed to the very end to time step $L$ (hence $V$ only appears in $\hout$). This has the effect of simplifying $\hat{H}_{\rm prop}$ to a nice tridiagonal form, since it no longer needs to keep track of the unitaries $V_i$.

Finally, observe that since $W$ is unitary, $\hat{H}$ and $H$ have precisely the same set of eigenvalues. We can thus work with $\hat{H}$ instead of $H$ in our eigenvalue analysis. Hence, for the remainder of this section, by $\ket{\eta}$ we shall mean $\ket{\hat{\eta}}$, and by $H$, we mean $\hat{H}$. We now show that $H$ has the correct spectral properties for both YES and NO instances of $5$-LH.

\subsubsection{YES case: $H$ has a small eigenvalue}
We have thus far set up a Hamiltonian $H\in \HH(\B^{\otimes m}\otimes \B^{\otimes N-m}\otimes\complex^{L+1})$ corresponding to the verification procedure $V\in \UU(\B^{\otimes m}\otimes \B^{\otimes N-m})$. We now show that if there exists such a $\ket{\psi}\in\B^{\otimes m}$ which causes $V$ to output $1$ with high probability, then $H$ must have a small eigenvalue.

Suppose there exists $\ket{\psi}$ such that a measurement of the first qubit of $V\ket{\psi}$ yields $1$ with probability at least $1-\epsilon$. To demonstrate that $H$ has a small eigenvalue, we explicitly construct a vector $\ket{\eta}\in \B^{\otimes m}\otimes \B^{\otimes N-m}\otimes\complex^{L+1}$ such that $\bra{\eta}H\ket{\eta}$ is small. Let \begin{equation}
\ket{\eta}=\ket{\psi}_p\otimes\ket{0}^{\otimes N-m}_a\otimes\ket{\gamma}_c,
\end{equation}
where
\begin{equation}
    \ket{\gamma} := \frac{1}{\sqrt{L+1}}\sum_{j=0}^L\ket{j}.\label{eqn:gamma}
\end{equation}

We analyze $\bra{\eta}H\ket{\eta}$ by considering $\hin$, $\hprop$, and $\hout$ separately. First, observe that $\bra{\eta}\hin\ket{\eta}=0$, since the ancilla register of $\ket{\eta}$ is in the all zeroes state. For $\hprop$, we have that
\begin{equation}
    \bra{\eta}\hprop\ket{\eta} = \bra{\eta}I_{p,a}\otimes E_c\ket{\eta} = \bra{\gamma}E\ket{\gamma}=0,
\end{equation}
where in the last equality we have used the fact that the sum of each row and column of $E$ is $0$, implying $\ket{\gamma}$ is a $0$-eigenvector of $E$. Note that we have not used the probability of $V$ answering $1$ yet --- this now comes in handy for $\hout$, where
\begin{eqnarray}
    \bra{\eta}\hout\ket{\eta} &=& \bra{\eta}\left(V^\dagger \left[\left(\ketbra{0}{0}\otimes I_{\B^{\otimes m-1}}\right)_p\otimes
     I_a \right]V\otimes\ketbra{L}{L}_c\right)\ket{\eta}\\
        &=& \frac{1}{L+1}\left[\bra{\psi}_p\otimes\bra{0}^{\otimes N-m}_aV^\dagger\right] \left[\left(\ketbra{0}{0}\otimes I_{\B^{\otimes m-1}}\right)_p\otimes
     I_a \right]\left[V\ket{\psi}_p\otimes\ket{0}^{\otimes N-m}_a \right].\nonumber
\end{eqnarray}
Observe, however, that this expression corresponds to the probability that we begin with the proof $\ket{\psi}_p\otimes\ket{0}^{\otimes N-m}_a$, apply the verification $V$, and then measure the first qubit and obtain $0$. By our assumption at the beginning of this section, this probability is at most $\epsilon$. Hence,
\begin{equation}
    \bra{\eta}\hout\ket{\eta} \leq \frac{1}{L+1}\epsilon,
\end{equation}
implying there must exist an eigenvalue for $H$ of value at most $\epsilon/(L+1)$. Thus, if we have a YES-instance of our QMA problem $P$, then $H$ has a small eigenvalue, as required.

\subsubsection{NO case: $H$ has no small eigenvalues}
We now show that if there does not exist such a proof $\ket{\psi}$ which causes verification procedure $V$ to output $1$ with high probability, then $H$ must have no small eigenvalues.

Suppose that for all proofs $\ket{\psi}$, $V$ does not output $1$ with probability more than $\epsilon$. To lower bound the eigenvalues of $H$, we play a game of divide-and-conquer by letting $H=A_1+A_2$, where $A_1 := \hin + \hout$, and $A_2 := \hprop$, and analyzing the eigenvalues of $A_1$ and $A_2$ separately. The challenge arises in combining these separate eigenvalue estimates into eigenvalue estimates for $H$, since unfortunately, $[A_1,A_2]\neq0$, implying that $A_1$ and $A_2$ do not diagonalize in a common basis. To surmount this obstacle, Kitaev uses the following approach~\cite{KSV02}:

\begin{enumerate}
    \item We first prove Kitaev's Geometric Lemma (Lemma~\ref{l:pluslemma}), which takes as input operators $B$ and $C$, as well as a set of parameters $S$ dependent on $B$ and $C$, and outputs a lower bound on the eigenvalues of $B+C$.
    \item We compute the parameters $S$ relevant to our specific operators $A_1$ and $A_2$, and plug them into Lemma~\ref{l:pluslemma} to show that $H=A_1+A_2$ has no small eigenvalues.
\end{enumerate}

We now state and prove Kitaev's Geometric Lemma.

\begin{lemma}[Kitaev, Shen, Vyalyi~\cite{KSV02}, Geometric Lemma, Lemma 14.4]\label{l:pluslemma}
    Let $A_1,A_2\succeq 0$, such that the minimum \emph{non-zero} eigenvalue of both operators is lower bounded by $v$. Assume that the null spaces $\nl$ and $\nll$ of $A_1$ and $A_2$, respectively, have trivial intersection, i.e.\ $\nl\cap\nll=\set{\ve{0}}$. Then
    \begin{equation}
        A_1+A_2 \succeq 2v\sin^2\frac{\alpha(\nl,\nll)}{2}I\enspace,
    \end{equation}
    where the \emph{angle} $\alpha(\spa{X},\spa{Y})$ between $\spa{X}$ and $\spa{Y}$ is defined over unit vectors $\ket{x}$ and $\ket{y}$ as $\cos\left[\angle(\spa{X},\spa{Y})\right] := \max_{\ket{x}\in\spa{X},\ket{y}\in\spa{Y}}\abs{\braket{x}{y}}$.
\end{lemma}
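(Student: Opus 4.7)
The plan is to use the spectral gap of $A_1$ and $A_2$ to reduce the claim to an operator-norm inequality on $\Pi_{\nl} + \Pi_{\nll}$, and then derive that inequality from the definition of $\alpha$ via a short planar calculation. Write $\Pi_i$ for the orthogonal projector onto $\mathcal{L}_i$ and $Q_i := I - \Pi_i$. Since $A_i \succeq 0$ with every nonzero eigenvalue at least $v$, its spectral decomposition immediately gives $A_i \succeq v Q_i = v(I - \Pi_i)$, so $A_1 + A_2 \succeq 2vI - v(\Pi_1 + \Pi_2)$. It therefore suffices to prove
\[
\Pi_1 + \Pi_2 \;\preceq\; (1 + \cos\alpha)\, I,
\]
since the identity $1 - \cos\alpha = 2\sin^2(\alpha/2)$ then yields the desired bound $A_1 + A_2 \succeq 2v\sin^2(\alpha/2)\, I$.

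To establish this operator-norm inequality, fix an arbitrary unit vector $\ket{\eta}$ and set $a := \|\Pi_1 \ket{\eta}\|$ and $b := \|\Pi_2 \ket{\eta}\|$, so that $\bra{\eta}(\Pi_1+\Pi_2)\ket{\eta} = a^2 + b^2$. The case $ab = 0$ is immediate since then $a^2 + b^2 \leq 1$. Otherwise, define $\ket{\psi_1} := \Pi_1\ket{\eta}/a \in \nl$ and $\ket{\psi_2} := \Pi_2\ket{\eta}/b \in \nll$; the definition of $\alpha$ forces $\gamma := |\braket{\psi_1}{\psi_2}| \leq \cos\alpha$, and the hypothesis $\nl \cap \nll = \{\ve{0}\}$ ensures $\gamma < 1$ by a compactness argument. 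I would then decompose $\ket{\eta} = \alpha_1 \ket{\psi_1} + \alpha_2\ket{\psi_2} + \ket{\eta^\perp}$, with $\ket{\eta^\perp}$ orthogonal to $\mathrm{span}(\ket{\psi_1}, \ket{\psi_2})$; solving the $2\times 2$ linear system $\braket{\psi_1}{\eta} = a$ and $\braket{\psi_2}{\eta} = b$ (with phases absorbed into the $\psi_i$ so that $\braket{\psi_1}{\psi_2} = \gamma \in [0,1)$ and $a,b \geq 0$) gives $\alpha_1 = (a-\gamma b)/(1-\gamma^2)$ and $\alpha_2 = (b-\gamma a)/(1-\gamma^2)$. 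Plugging these into $1 = \|\eta\|^2 \geq \|\alpha_1\psi_1 + \alpha_2\psi_2\|^2$ simplifies, after a short expansion that takes advantage of the cancellations in $(a-\gamma b)^2 + (b-\gamma a)^2 + 2\gamma(a-\gamma b)(b-\gamma a)$, to
\[
a^2 + b^2 - 2\gamma a b \;\leq\; 1 - \gamma^2.
\]
The AM--GM bound $2ab \leq a^2+b^2$ then yields $(1-\gamma)(a^2+b^2) \leq (1-\gamma)(1+\gamma)$, and dividing by the positive factor $1-\gamma$ gives $a^2 + b^2 \leq 1 + \gamma \leq 1 + \cos\alpha$, as required.

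The main obstacle is this planar calculation. A tempting shortcut is to expand $(\Pi_1+\Pi_2)^2 = \Pi_1 + \Pi_2 + \Pi_1\Pi_2 + \Pi_2\Pi_1$ and use $\|\Pi_1\Pi_2\|_\infty = \cos\alpha$, but this only produces the quadratic inequality $\lambda^2 - \lambda \leq 2\cos\alpha$ on the eigenvalues $\lambda$ of $\Pi_1 + \Pi_2$, which is not tight and loses the clean factor of $2\sin^2(\alpha/2)$. The argument above buys tightness by restricting attention to the two-dimensional subspace most relevant to the test vector $\ket{\eta}$ and working out the Gram-matrix relation there explicitly; this is exactly where the angle between $\nl$ and $\nll$ enters the bound optimally.
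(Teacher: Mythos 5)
Your proof is correct, and its skeleton is the same as the paper's: both pass from the spectral gap to $A_i \succeq v(I-\Pi_i)$ and reduce the lemma to the operator inequality $\Pi_{\nl}+\Pi_{\nll} \preceq (1+\cos\alpha)I$. Where you diverge is in how that inequality is proved. The paper takes $\ket{\zeta}$ to be an \emph{eigenvector} of $\Pi_{\nl}+\Pi_{\nll}$ with eigenvalue $\lambda$, computes $\lambda = u_1^2+u_2^2$ and $\lambda^2 = u_1^2+u_2^2+2u_1u_2\operatorname{Re}\braket{x_1}{x_2}$, and combines the two so that the difference factors as $\abs{\operatorname{Re}\braket{x_1}{x_2}}(u_1\pm u_2)^2 \geq 0$; you instead bound the quadratic form at an \emph{arbitrary} unit vector by projecting it onto $\operatorname{span}(\ket{\psi_1},\ket{\psi_2})$ and inverting the $2\times 2$ Gram matrix. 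Both routes close with the same $2ab \leq a^2+b^2$ step and deliver the same tight constant; yours trades the eigenvector trick (which silently uses $(\Pi_1+\Pi_2)\ket{\zeta}=\lambda\ket{\zeta}$ to get the second identity) for a slightly longer but more self-contained planar computation. One small imprecision: you cannot simultaneously rephase $\ket{\psi_1},\ket{\psi_2}$ so that $\braket{\psi_1}{\psi_2}=\gamma\geq 0$ \emph{and} keep $\braket{\psi_1}{\eta}=a$, $\braket{\psi_2}{\eta}=b$ real and nonnegative, since the latter are pinned by the definitions $\ket{\psi_i}=\Pi_i\ket{\eta}/\|\Pi_i\ket{\eta}\|$. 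This is harmless: carrying the complex overlap $c=\braket{\psi_1}{\psi_2}$ through the same computation gives $a^2+b^2-2ab\operatorname{Re}(c) \leq 1-\abs{c}^2$, and $\operatorname{Re}(c)\leq\abs{c}=\gamma$ together with $ab\geq 0$ recovers exactly the inequality you use, so the argument stands as written up to that one-line repair.
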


Note that if $\spa{X}$ and $\spa{Y}$ have non-trivial intersection, i.e.\ there exists $\ket{x}\neq \ve{0}$ such that $\ket{x}\in\spa{X}$ and $\ket{x}\in\spa{Y}$, then $\alpha(\spa{X},\spa{Y})$ is trivially $0$. Also, note that demanding  $\spa{X}\cap\spa{Y}=\set{\ve{0}}$ is not equivalent to demanding $\spa{X}$ and $\spa{Y}$ be orthogonal --- for example, the spaces $\spa{X}=\operatorname{span}\set{\ket{0}}$ and $\spa{Y}=\operatorname{span}\set{\ket{+}}$ contain elements which have non-zero overlap, but the sets have trivial intersection.

We now tackle step 1 of Kitaev's approach by proving Lemma~\ref{l:pluslemma}.
\begin{proof}
    By the definition of $v$, we have $A_1\succeq v(I-\Pi_{\nl})$ and $A_2\succeq v(I-\Pi_{\nll})$, where $\Pi_{\spa{X}}$ denotes the projector onto $\spa{X}$. Combining the latter two, it follows that it suffices to show $v(I-\Pi_{\nl})+v(I-\Pi_{\nll})\succeq 2v\sin^2(\alpha(\nl,\nll)/2)$. By rearranging terms and using the identity $\cos(2\theta)=1-2\sin^2\theta$, this is the equivalent of showing
    \begin{equation}
        \Pi_{\nl}+\Pi_{\nll} \preceq \left[1+\cos(\alpha(\nl,\nll))\right]I.\label{eqn:5}
    \end{equation}

    To upper bound the eigenvalues of $\Pi_{\nl}+\Pi_{\nll}$, suppose we have some eigenvector $\ket{\zeta}$ with corresponding eigenvalue $\lambda > 0$. Let $\ket{x_1}\in\nl$ and $\ket{x_2}\in\nll$ be unit vectors such that $\Pi_{\nl}\ket{\zeta}=u_1\ket{x_1}$ and     $\Pi_{\nll}\ket{\zeta}=u_2\ket{x_2}$ for some real $u_1,u_2>0$. Then:
    \begin{equation}
        \lambda = \bra{\zeta}(\Pi_{\nl}+\Pi_{\nll})\ket{\zeta}=u_1\braket{\zeta}{x_1}+u_2\braket{\zeta}{x_2}=u_1^2+u_2^2.\label{eqn:3}
    \end{equation}
    Further, since $\lambda\ket{\zeta} = (\Pi_{\nl}+\Pi_{\nll})\ket{\zeta} = u_1\ket{x_1}+u_2\ket{x_2}$, we can also derive a non-equivalent expression for $\lambda^2$, i.e.\
    \begin{equation}
        \lambda^2 = [\bra{\zeta}\lambda][\lambda\ket{\zeta}]=(u_1\bra{x_1}+u_2\bra{x_2})(u_1\ket{x_1}+u_2\ket{x_2})=u_1^2 + u_2^2 + 2u_1u_2\operatorname{Re}\braket{x_1}{x_2},\label{eqn:4}
    \end{equation}
    where $\operatorname{Re}(x)$ denotes the real part of $x\in\complex$. Combining Eqns.~(\ref{eqn:3}) and~(\ref{eqn:4}) by taking the following linear combination, we have:
    \begin{eqnarray}
        (1+\abs{\operatorname{Re}\braket{x_1}{x_2}})\lambda-\lambda^2 &=& (1+\abs{\operatorname{Re}\braket{x_1}{x_2}})(u_1^2+u_2^2) - (u_1^2 + u_2^2 + 2u_1u_2\operatorname{Re}\braket{x_1}{x_2})\nonumber\\
        &=&u_1^2\abs{\operatorname{Re}\braket{x_1}{x_2}} + u_2^2\abs{\operatorname{Re}\braket{x_1}{x_2}} - 2u_1u_2\operatorname{Re}\braket{x_1}{x_2}\\
        &=&\abs{\operatorname{Re}\braket{x_1}{x_2}} \left(u_1^2 + u_2^2 \pm 2u_1u_2\right)\\
        &=&\abs{\operatorname{Re}\braket{x_1}{x_2}} \left(u_1\pm u_2\right)^2\\
        &\geq& 0.
    \end{eqnarray}
    Moving $\lambda^2$ to the right side of the last inequality and dividing through by $\lambda$ hence gives
    \begin{equation}
    \lambda \leq (1+\abs{\operatorname{Re}\braket{x_1}{x_2}})\leq 1+\cos(\alpha(\nl,\nll)),
    \end{equation}
    where the latter inequality follows straightforwardly from the definition of $\alpha(\nl,\nll)$. We thus have that all eigenvalues of $\Pi_{\nl}+\Pi_{\nll}$ are upper bounded by $1+\cos(\alpha(\nl,\nll))$, which by Equation~(\ref{eqn:5}) implies the desired lower bound on $A_1 + A_2$.
\end{proof}

We now move to step 2 of Kitaev's approach, i.e.\ we now use Lemma~\ref{l:pluslemma} to lower bound the eigenvalues of $H=A_1+A_2$. To do so, we must determine the values of parameters $v$ and $\alpha(\nl,\nll)$ for $A_1$ and $A_2$ used in Lemma~\ref{l:pluslemma}. Recall that Lemma~\ref{l:pluslemma} also requires $\nl\cap\nll=\set{\ve{0}}$ --- we handle this constraint at the end of the section (at which point it will be obvious, given the analysis to come).

We start with $v$, which is the lower bound on the positive eigenvalues of both $A_1$ and $A_2$. Note that since $A_1=\hin+\hout$ is simply a sum of commuting projectors, its eigenvalues must be non-negative integers. In particular, its smallest \emph{positive} eigenvalue is at least $1$. For $A_2$, since $A_2=\hprop=I_{p,a}\otimes E_c$, its eigenvalues will be determined by those of $E_c$. The eigenvalues of the latter are~\cite{KSV02} $\lambda_k=1-\cos[\pi k/(L+1)]$ for $0\leq k\leq  L$. This expression is clearly minimized when $k=1$ (note $k=0$ would yield a zero eigenvalue), implying the smallest positive eigenvalue of $A_2$ is at least
\begin{equation}
1-\cos(\pi/(L+1))\geq c/L^2
\end{equation}
for some constant $c$. To see why this inequality holds, use the Taylor series expansion for $\cos x$ to show that whenever $x\leq 1$, one has
\begin{equation}
    \cos x = 1 - \frac{x^2}{2!} + \frac{x^4}{4!} - \frac{x^6}{6!} +\ldots \leq 1-\frac{x^2}{2!} + \frac{x^4}{4!}\leq 1 - \left(\frac{1}{2!} - \frac{1}{4!}\right)x^2 = 1-c x^2.
\end{equation}
Taking the minimum of our lower bounds for $A_1$ and $A_2$ thus yields that $v\in\Omega(1/L^2)$.

We next estimate the angle $\alpha(\nl,\nll)$ between the null spaces $\nl$ and $\nll$ of $A_1$ and $A_2$, respectively. This can be done by exploiting the structure of $\nl$ and $\nll$. In particular, we have that
\begin{eqnarray}
    \nl &=& \left[(\B^{\otimes m})_p\otimes \ket{0}^{\otimes N-m}_a\otimes \ket{0}_c\right]\oplus
    \nonumber\\
                    &&\left[ (\B^{\otimes N})_{p,a}\otimes \operatorname{span}(\ket{1},\ldots \ket{L-1})_c\right]\oplus\nonumber\\
                    && \left[
                    V^\dagger (\ket{1}\otimes\B^{\otimes N-1})_{p,a}\otimes \ket{L}_c\right],\label{eqn:nl}
\end{eqnarray}
where each of the three terms in this expression follow directly from the definitions of $\hin$ and $\hout$ (e.g.\ any state with the clock register set to $0$ and all zeroes in the ancilla is a $0$-eigenvector of both $\hin$ and $\hout$). Similarly, we have
\begin{equation}
    \nll = (\B^{\otimes N})_{p,a}\otimes \ket{\gamma}_c,\label{eqn:nll}
\end{equation}
which follows straightforwardly if we recall that $\hprop=I_{p,a}\otimes E_c$ and $E\ket{\gamma}=\ve{0}$, for $\ket{\gamma}$ defined in Equation~(\ref{eqn:gamma}).

To exploit this structure, instead of estimating $\alpha(\nl,\nll)$, we estimate $\cos^2\alpha(\nl,\nll)$, which can be rewritten in the form (where the maximization is over unit vectors):
\begin{equation}
    \cos^2\alpha(\nl,\nll)=\operatorname{max}_{\ket{x}\in\nl,\ket{y}\in\nll}\abs{\braket{x}{y}}^2
    =\operatorname{max}_{\ket{x}\in\nl,\ket{y}\in\nll}\braket{y}{x}\braket{x}{y}
    =\operatorname{max}_{\ket{y}\in\nll}\bra{y}\Pi_{\nl}\ket{y}.\label{eqn:6}
\end{equation}
The last equality holds without loss of generality since the maximum for $\bra{y}\Pi_{\nl}\ket{y}$ is achieved by projecting onto a pure state $\ketbra{x}{x}$ for some $\ket{x}\in\nl$. Let us upper bound the rightmost term in the equation above. Observe that by Equation~(\ref{eqn:nll}), any $\ket{y}\in\nll$ has the form $\ket{y}=\ket{\zeta}_{p,a}\otimes\ket{\gamma}_c$ for some $\ket{\zeta}\in\B^{\otimes m}\otimes\B^{\otimes {N-m}}$. Since by Equation~(\ref{eqn:nl}), $\Pi_{\nl}$ breaks down into a sum of three projections, we can bound $\bra{y}\Pi_{\nl}\ket{y}$ by determining the contribution of each projector separately when sandwiched by $\ket{y}$.

The contribution of the second projection is easiest to see --- it is simply $(L-1)/(L+1)$, since every term in $\ket{\gamma}$ except $\ket{0}$ and $\ket{L}$ contribute $1/(L+1)$ to the sum.

As for the first and third projections, let $\mathcal{K}_1=\B^{\otimes m}\otimes \ket{0}^{\otimes N-m}$ and $\mathcal{K}_2=V^\dagger \ket{1}\otimes\B^{\otimes N-1}$. Then the contribution of the first and third projections is given by:
\begin{equation}
    \bra{y}(\Pi_{\mathcal{K}_1}\otimes\ketbra{0}{0}_c + \Pi_{\mathcal{K}_2}\otimes\ketbra{L}{L}_c)\ket{y}=\frac{1}{L+1}\bra{\zeta}(\Pi_{\mathcal{K}_1} + \Pi_{\mathcal{K}_2})\ket{\zeta}.\label{eqn:8}
\end{equation}
If we let $\varphi(\mathcal{K}_1,\mathcal{K}_2)$ denote the angle between $\mathcal{K}_1$ and $\mathcal{K}_2$, we can straightforwardly use Equation~(\ref{eqn:5}) to bound the quantity above by
\begin{equation}
    \frac{1}{L+1}\bra{\zeta}(\Pi_{\mathcal{K}_1} + \Pi_{\mathcal{K}_2})\ket{\zeta} \leq \frac{1}{L+1}\left(1+\cos\varphi(\mathcal{K}_1,\mathcal{K}_2)\right).
\end{equation}
Observe, however, that
\begin{equation}
    \cos^2\varphi(\mathcal{K}_1,\mathcal{K}_2)=\operatorname{max}_{\ket{k}\in\mathcal{K}_1,\ket{l}\in\mathcal{K}_2}\abs{\braket{k}{l}}^2,
\end{equation}
where $\mathcal{K}_1$ is just the set of initial states with all-zero ancilla for the verification procedure $V$, and $\mathcal{K}_2$ is the set of initial states for which applying $V$ yields a $1$ in the first qubit with certainty. Hence, the maximum overlap between vectors in $\mathcal{K}_1$ and $\mathcal{K}_2$ is directly tied to the maximum probability with which we can obtain outcome $1$ with an initial state with all-zero ancilla. In particular, we have $\cos^2\varphi(\mathcal{K}_1,\mathcal{K}_2)$ equals the maximum probability of outputting $1$. Since in this section we are dealing with the NO case, however, meaning no proof can cause an output of $1$ with probability greater than $\epsilon$, we have $\cos^2\varphi(\mathcal{K}_1,\mathcal{K}_2)\leq\epsilon$, implying:
\begin{equation}
    \frac{1}{L+1}\left(1+\cos\varphi(\mathcal{K}_1,\mathcal{K}_2)\right)\leq \frac{1}{L+1}\left(1+\sqrt{\epsilon}\right).
\end{equation}
Adding the contributions of all three projections thus yields:
\begin{equation}
    \cos^2\alpha(\nl,\nll)
    =\operatorname{max}_{\ket{y}\in\nll}\bra{y}\Pi_{\nl}\ket{y} \leq \left(\frac{L-1}{L+1}\right) + \left(\frac{1+\sqrt{\epsilon}}{L+1}\right)=1-\frac{1-\sqrt{\epsilon}}{L+1}.
\end{equation}
Using the identity $\sin^2 x + \cos^2 x=1$, this implies $\sin^2\alpha(\nl,\nll)\geq (1-\sqrt{\epsilon})/(L+1)$. Then, since $\sin^2 \frac{x}{2}\geq \frac{1}{4}\sin^2 x$ (shown using the identity $\sin (2x)=2\sin x\cos x$), we have
\begin{equation}
    \sin^2\frac{\alpha(\nl,\nll)}{2}\geq \frac{1}{4}\sin^2\alpha(\nl,\nll)\geq \frac{1-\sqrt{\epsilon}}{4(L+1)}.
\end{equation}

Finally, we have all estimates required to use Lemma~\ref{l:pluslemma}: $v=\Delta/L^2$ for some constant $\Delta$ and $\sin^2[\alpha(\nl,\nll)/2]\geq (1-\sqrt{\epsilon})/[4(L+1)]$. In addition, given Equations~(\ref{eqn:nl}) and~(\ref{eqn:nll}), it is now easy to see that $\nl\cap\nll=\set{\ve{0}}$ (as required by Lemma~\ref{l:pluslemma}), since any state of the tensor product form $\ket{\psi}_{p,a}\otimes\ket{\gamma}_c$ cannot live in $\nl$. Plugging everything into Lemma~\ref{l:pluslemma}, we conclude that in the NO case, the minimum eigenvalue of $H$ is of the order $\Omega((1-\sqrt{\epsilon})/L^3)$ (i.e.\ $H$ has no ``small'' eigenvalues). As required by Definition~\ref{0_def:localhamiltonianproblem}, note that this lower bound is inverse polynomially separated from the upper bound on the smallest eigenvalue of $H$ from the YES case if we first apply error reduction to $V$ to bring $\epsilon$ inverse polynomially close to $0$.

\subsubsection{Is the Hamiltonian $H$ $5$-local?}

We have so far set up a Hamiltonian $H$ whose eigenvalues are small or large, depending on whether we have a YES or NO instance of our QMA problem $P$, respectively. We now ask: Is $H$ $5$-local?

The answer is \emph{almost}. Recall that $H\in \HH(\B^{\otimes m}\otimes \B^{\otimes N-m}\otimes\complex^{L+1})$, where the counter register is $\complex^{L+1}$. If we implement the counter straightforwardly using $O(\log{L})$ qubits, the resulting operations on it, such as incrementing the counter, could require updating all $O(\log{L})$ qubits, making $H$ $(\log {L})$-local at best. In order to circumvent this, Kitaev~\cite{KSV02} uses a different representation for the counter for which any operation requires acting on at most $3$ qubits of the counter. Specifically, we let $H$ act on $\B^{\otimes m}\otimes \B^{\otimes N-m}\otimes\B^{L}$, where the counter register is now given in \emph{unary}, i.e.\ $\ket{j}\in\complex^{L+1}$ is represented as
\begin{equation}
    |\underbrace{1,\ldots,1}_{j},0,\ldots,0\rangle.\label{eqn:7}
\end{equation}

The operator basis $\ketbra{i}{j}$ for $\LL(\complex^{L+1})$ translates to this new representation as follows. Operator $\ketbra{j}{j}\in\LL(\complex^{L+1})$ is mapped to $\ketbra{1}{1}_j\otimes\ketbra{0}{0}_{j+1}$ in the new space, i.e.\ being in state $\ket{j}$ in the old encoding is equivalent to having the $j$th qubit set to 1 and the $(j+1)$-th qubit set to $0$ in the new encoding. Similarly, operator $\ketbra{j-1}{j}$ is mapped to $\ketbra{1}{1}_{j-1}\otimes\ketbra{0}{1}_j\otimes\ketbra{0}{0}_{j+1}$, i.e.\ if we think of $\ketbra{j-1}{j}$ as moving us from state $\ket{j}$ to $\ket{j-1}$, this is equivalent in the new encoding to flipping the $j$th bit to $0$, followed by a safety check that qubits $j-1$ and $j+1$ are $1$ and $0$, respectively. The remaining basis elements are defined analogously. These operations are at most $3$-local. Combined with the fact that $H$ is based on the verification circuit $V$, which itself is composed of $2$-qubit unitaries $V_i$, we have that $H$ is $5$-local Hamiltonian, as desired.

With $H$ being $5$-local, there is one final issue to be addressed --- since the counter is now represented using a larger space, one must deal with the possibility of \emph{invalid} settings to the counter register. To discourage such behavior, a fourth penalty term is added to $H$ acting only on the counter space, namely
\begin{equation}
    \hstab := I_{p,a}\otimes \sum_{j=1}^{L-1}\ketbra{0}{0}_j\otimes\ketbra{1}{1}_{j+1}.
\end{equation}
Hence, the new $H$ is given by $H=\hin+\hprop+\hout+\hstab$. Note that $\hstab$ discourages counter states which are not of the form in Equation~(\ref{eqn:7}), i.e.\ states containing the subsequence $01$ are given an energy penalty.

Does the previous analysis of the smallest eigenvalue of $H$ still hold when $\hstab$ is added to the picture? The answer is \emph{yes}. The YES case is easy to see, since all valid counter states are in the null space of $\hstab$. Thus, an honest proof receives no energy penalty from $\hstab$, as desired.

For the NO case, let $\spa{S}= \B^{\otimes m}\otimes \B^{\otimes N-m}\otimes\complex^{L+1}$ (the original space we had defined $H$ as acting on). Observe that $\hin+\hprop+\hout$ and $\hstab$ both act invariantly on $\spa{S}$, meaning they map operators in $\spa{S}$ to operators in $\spa{S}$. Thus, we can split our analysis into two independent cases: when $H$ acts on $\spa{S}$, and when $H$ acts on the orthogonal complement of $\spa{S}$, denoted $\spa{S}^\perp$. In the former case, $\hstab$ is just the zero operator with respect to $\spa{S}$; thus, the previous eigenvalue analysis goes through unscathed, yielding an eigenvalue lower bound on $H$ of $\Omega((1-\sqrt{\epsilon})/L^3)$. As for the second case when $H$ is restricted to $\spa{S}^\perp$, observe that $\hstab$ always administers an energy penalty, since $\spa{S}^\perp$ contains only invalid counter states. Since $\hstab$ is a sum of commuting projectors, its eigenvalues will be non-negative integers --- in particular, its smallest non-zero eigenvalue is at least $1$. Since $\hin+\hprop+\hout\succeq 0$, it follows that when restricted to $\spa{S}^\perp$, we have $H\succeq 1$. Taking the minimum of the estimates for the two cases of $\spa{S}$ and $\spa{S}^\perp$ yields the desired bound that the smallest eigenvalue of $H$ is still in $\Omega((1-\sqrt{\epsilon})/L^3)$, despite the new representation for the counter. This concludes Kitaev's proof that $5$-local Hamiltonian is complete for QMA.


\section{Quantum correlations}\label{0_scn:nonclassicality}

As mentioned earlier, the growing field of quantum computation and information has positively impacted both computer science and physics. The next area this thesis studies has in particular benefited greatly from this cross-fertilization, and is the study of \emph{quantum correlations}. Here, we are interested in understanding correlations between individual quantum subsystems of a larger composite system. Specifically, we shall introduce and discuss two notions of quantum correlations: \emph{quantum entanglement} and \emph{non-classical correlations}.

\paragraph{Motivation.} We mention two reasons why the study of quantum correlations is important. The first is that the existence of certain correlations predicted by quantum theory, specifically quantum entanglement, has long troubled physicists. In a letter to Max Born in 1947, for example, Einstein dubs entanglement as ``spukhafte Fernwirkung'', or ``spooky action at a distance''~\cite{BE71}. This mentality was moreover the basis for the rejection of quantum mechanics as a complete physical theory a decade earlier by the famous Einstein, Podolsky, and Rosen (EPR) paper of 1935~\cite{EPR35}. Thus, a better understanding of quantum correlations appears to be key to understanding both the nature of our world around us, as well as our theories describing this world. The second reason is that quantum correlations are generally believed to be required for quantum computers to outperform their classical counterparts. It has been rigorously shown, for example, that in the pure-state setting, the amount of entanglement present in a quantum system must grow with the problem size if a quantum computation is to achieve an exponential speedup over classical computers~\cite{jozsa03a}. Thus, a better understanding of quantum correlations may prove advantageous for designing quantum algorithms, as well as for uncovering the boundary between classical and quantum computing.

\subsection{Quantum entanglement}\label{0_sscn:entanglement}

The canonical notion of quantum correlations between quantum systems dates back to the EPR paper of 1935~\cite{EPR35}, and is called \emph{quantum entanglement}. The name ``entanglement'' was coined by physicist Erwin Schr\"{o}dinger, who used the term ``Vershr\"{a}nkung'' in 1935~\cite{S35}, which in colloquial ``non-physicist'' German means ``folding of the arms''~\cite{B01}. Much has been discovered in the field of \emph{entanglement theory} over the last two decades, from its quantification and characterization, to its manipulation and use for quantum computational and information theoretic tasks. In particular, what was once considered ``spooky action at a distance'' is now regarded as a valuable resource in quantum information (see, e.g.~\cite{HHHH09}). In this thesis, entanglement is not a primary focus, but rather has important connections to non-classical correlations in the results of Chapters~\ref{chap:activation} and~\ref{chap:entanglementswap}. We give a brief introduction to entanglement here; the reader is referred to the surveys of Bru\ss~\cite{B01} and Horodecki$^{\otimes 4}$~\cite{HHHH09} for further details.

To begin, the canonical example of an entangled state is the two-qubit EPR pair,
\begin{equation}
    \ket{\phi^+}=\frac{1}{\sqrt{2}}(\ket{00}+\ket{11}).
\end{equation}
By observing that $\trace_1(\ketbra{\phi^+}{\phi^+})=\trace_2(\ketbra{\phi^+}{\phi^+})=I/2$, we have one of the characteristic traits of quantum mechanics --- that for quantum systems, knowledge of the whole quantum system does not imply knowledge of its parts. Since entangled (pure) states, such as the EPR pair, cannot be written as a \emph{product state} $\ket{\psi_1}\otimes\ket{\psi_2}$ of single qubit states $\ket{\psi_1},\ket{\psi_2}$, a primary area of study in quantum information has been the quantification of ``how far'' an entangled state is from product form. (Note that all classical states, by which we mean bit strings, are of product form.)

The answer to this question varies greatly depending on context. For bipartite \emph{pure} states $\ket{\psi_{AB}}\in\DD(\complex^m\otimes \complex^n)$, the canonical measure of entanglement is given by the \emph{entropy of entanglement}~\cite{HHHH09},
\begin{equation}
    E(\ket{\psi_{AB}})=S(\trace_B{\ketbra{\psi_{AB}}{\psi_{AB}}})=S(\trace_A{\ketbra{\psi_{AB}}{\psi_{AB}}}),
\end{equation}
where $S(\rho):=-\trace(\rho\log(\rho))$ is the von Neumann entropy of $\rho$. It holds that $0\leq E(\ket{\psi_{AB}})\leq \log (\min(m,n))$, where the lower bound is achieved if and only if a state is of product form, and the upper bound is achieved if and only if a state is \emph{maximally entangled}, such as the EPR pair.

The definition of $E(\ket{\psi_{AB}})$ is perhaps better motivated by the fact that any bipartite $\ket{\psi_{AB}}\in\complex^m\otimes \complex^n$ can be written in terms of the \emph{Schmidt decomposition}, such that
\begin{equation}
    \ket{\psi_{AB}}=\sum_{i=1}^{\min(m,n)}\alpha_i\ket{\psi_i}\otimes\ket{\phi_i}.
\end{equation}
Here, the real $\alpha_i\geq 0$ are called \emph{Schmidt coefficients}, and the sets $\set{\ket{\psi_i}}$ and $\set{\ket{\phi_i}}$ are orthonormal bases for $\complex^m$ and $\complex^n$, respectively, known as the \emph{Schmidt bases}. The Schmidt decomposition is extremely useful in quantum information; some of our results in Chapter~\ref{chap:approx}, for example, depend heavily on it. A proof of existence for the Schmidt decomposition is straightforward, and makes use of the $\operatorname{vec}$ mapping (defined in the proof of Corollary~\ref{5_cor:neg} here) and singular value decomposition for operators; we refer the reader to~\cite{W08_2} for details. To now see the connection between $E$ and the Schmidt decomposition, let $\ve{p}\in\reals^{\min(m,n)}$ with $\ve{p}(i)=\alpha_i^2$. Then, $E(\ket{\psi})=H(\ve{p})$, where $H(\ve{p}):=-\sum_i p(i)\log p(i)$ is the Shannon entropy of probability distribution $\ve{p}$. In the other words, the more ``tightly concentrated'' the Schmidt coefficients of $\ket{\psi_{AB}}$ are, the less entangled $\ket{\psi_{AB}}$ is. Note that a state is product if and only if it has a Schmidt coefficient $\alpha_i=1$, and a state is maximally entangled if and only if all its Schmidt coefficients are $1/\sqrt{d}$ for $d=\min(m,n)$.

Moving to the mixed state case, the quantification of entanglement becomes much more complex. Most generally, we say operator $\rho\in\pos{\X\otimes \Y}$ is \emph{separable} (i.e.\ unentangled) if and only if it can be written~\cite{W08_14}
\begin{equation}\label{0_eqn:sep}
    \rho = \sum_i A_i\otimes B_i
\end{equation}
for $A_i\in\pos{\X}$ and $B_i\in\pos{\Y}$. This definition of separability (with the added trace one constraint) was first given by Werner~\cite{W89}. We denote the set of separable operators acting on $\X\otimes \Y$ as $\sep{\X,\Y}$. Note that $\sep{\X,\Y}$ is a convex cone; this property is vital to the results of Section~\ref{3_scn:parrep}. Here, a \emph{cone} is a set $S\subseteq\X$ such that $\lambda x\in S$ for all $x\in S$ and all $\lambda\geq 0$. If  additionally $u+v\in S$ for $u,v\in S$, then $S$ is called a \emph{convex cone}. When we restrict ourselves to the set of separable \emph{density} operators in $\sep{\X,\Y}$ (i.e.\ we impose the trace one constraint), we obtain a \emph{convex set}. (A set $S\subset \X$ is called \emph{convex} if $p x + (1-p) y\in S$ for all $x,y\in S$ and $0\leq p\leq 1$.) The set of separable density operators has the following properties, which prove useful in Section~\ref{3_scn:parrep}: It is compact and contains a ball around the maximally mixed state (which is, of course, separable)~\cite{GB02,GB03,GB05}.

The problem of determining whether a given density operator $\rho\in\DD(\X\otimes\Y)$ is in $\sep{\X,\Y}$ (where one is allowed to work in time polynomial in the dimension), known as the \emph{Quantum Separability Problem}, was shown NP-hard to solve within inverse exponential precision by Gurvits~\cite{G03} (see also the work of Ioannou~\cite{I07}). This was later extended to inverse polynomial precision by the present author~\cite{G10}, and shortly thereafter independently by Beigi~\cite{Beigi08}. Recently, a breakthrough result of Christandl, Brand{\~{a}}o, and Yard~\cite{BCY11} has shown that the problem is \emph{quasi-polynomial}-time solvable for the case of \emph{constant} precision; the result goes via a powerful new de Finetti-type theorem for the Frobenius (and LOCC, where LOCC stands for \emph{local operations and classical correlations}) norms.

Thus, as suggested by the NP-hardness of Quantum Separability Problem, in the mixed-state case there is no known efficient test for separability, unlike the pure-state case. To this end, there have been many mixed state entanglement measures proposed to date; the reader is referred to the survey of Horodecki$^{\otimes 4}$~\cite{HHHH09} for an in-depth look.

Here, we mention two entanglement detection schemes used in this thesis. The first is the popular approach proposed by Peres~\cite{peresPT} known as the \emph{positive partial transpose} (PPT) test, which plays a role in Chapters~\ref{chap:activation} and~\ref{chap:entanglementswap}. Specifically, consider the super-operator $I\otimes T$ acting on space $\LL(\X\otimes\Y)$, where $T$ denotes the transpose map. Then, given any $\rho\in\DD(\X\otimes \Y)$, if $(I\otimes T)(\rho)\not\succeq 0$, then $\rho$ is not separable. This follows since for any separable operator $\sum_i A_i\otimes B_i$,
\begin{equation}
    (I\otimes T)\left(\sum_i A_i\otimes B_i\right)=\sum_i A_i\otimes T(B_i)\succeq 0.
\end{equation}
Above, we have used the fact that the transpose map does not change the spectrum of an operator. The PPT test is known to be necessary and sufficient for pure states of all dimensions, and for mixed states of $(2\times 2)$ and $(2\times 3)$-dimensional systems~\cite{peresPT,HorodeckiPT}. In higher dimensions, however, we remark that there exist mixed entangled states which nevertheless have a positive partial transpose; such states are called \emph{bound entangled}~\cite{H97_2,HHH98}. Bound entangled states have the property that they cannot be \emph{distilled}, meaning roughly that in the asymptotic limit, given many copies of a bound entangled state $\rho$, there does not exist an LOCC (local operations and classical communication) protocol which can extract the entanglement present in the copies of $\rho$ into pure EPR pairs. The quantification of just how much entanglement can be distilled in this sense is given by another entanglement measure, the \emph{distillable entanglement}~\cite{reviewplenio}; this makes a brief appearance in Chapter~\ref{chap:activation}.

Finally, there is an easy way to compute the partial transpose given a matrix representation of state $\rho\in\DD(\complex^m\otimes\complex^n)$: Namely, partition the matrix into $(m\times n)$-dimensional blocks, and take the tranpose of each block individually. For example, for the EPR pair $\ket{\phi^+}=(\ket{00}+\ket{11})/\sqrt{2}$, we have
\begin{equation}
    (I\otimes T)(\ketbra{\phi^+}{\phi^+})=(I\otimes T)\left(
                  \begin{array}{cccc}
                    \frac{1}{2} & 0 & 0 & \frac{1}{2} \\
                    0 & 0 & 0 & 0 \\
                    0 & 0 & 0 & 0 \\
                    \frac{1}{2} & 0 & 0 & \frac{1}{2} \\
                  \end{array}
                \right)=
                \left(
                  \begin{array}{cccc}
                    \frac{1}{2} & 0 & 0 & 0 \\
                    0 & 0 & \frac{1}{2} & 0 \\
                    0 & \frac{1}{2} & 0 & 0 \\
                    0 & 0 & 0 & \frac{1}{2} \\
                  \end{array}
                \right)\not\succeq 0.
\end{equation}

The second entanglement detection scheme we define here is the \emph{relative entropy of entanglement}~\cite{PhysRevA.57.1619,HHHOSSS05}. Specifically, define for $\rho,\sigma\in\DD(\X)$ the \emph{relative entropy} as
\begin{equation}
    S(\rho||\sigma):=-\trace(\rho\log\sigma)-S(\rho).
\end{equation}
Then, for $\rho \in\DD(\X\otimes \Y)$, the relative entropy of entanglement is defined as
\begin{equation}\label{0_eqn:REE}
    E_R(\rho) = \min_{\sigma\in \Sep(\X,\Y)}S(\rho||\sigma).
\end{equation}
The following properties regarding $E_R$ hold~\cite{PhysRevA.57.1619}: It takes value $0$ if and only if $\rho\in\Sep(\X,\Y)$, is invariant under local unitary operations, is convex, reduces to the entropy of entanglement for pure states, and is an upper bound on the distillable entanglement (see also~\cite{R01}). It is further non-increasing under LOCC, which follows since $S(\rho||\sigma)\geq S(\Phi(\rho)||\Phi(\sigma))$ for any TPCP map $\Phi$~\cite{V02}. In fact, a stronger and physically more relevant statement holds --- that even if we allow \emph{post-selection} after performing an LOCC measurement, the value of $E_R$ does not increase on average~\cite{PhysRevA.57.1619}. In other words, let $\set{K_i}$ be a complete set of Kraus operators for a TPCP map, i.e.\ $\sum_i K_i^\dagger K_i=I$. Then, letting $\rho_i := K_i\rho K_i^\dagger$, it holds that
\begin{equation}
    E_R(\rho)\geq \sum_i \trace(\rho_i) E_R\left(\frac{\rho_i}{\trace(\rho_i)}\right).
\end{equation}

We close this section by noting that the definition of separability presented here extends straightforwardly to the multipartite setting. The structure of multipartite entanglement, however, is markedly more daunting than in the bipartite case.

\subsection{Non-classical correlations}\label{0_sscn:nonclassical}

Having discussed quantum entanglement, we now turn our attention to another form of quantum correlations, called simply \emph{non-classical correlations}. Such correlations have attracted much attention in the last decade or so, both in terms of their characterization and quantification, as well as with respect to their use as a resource in quantum information. We begin by motivating the study of non-classical correlations, and follow with definitions. We then discuss the role of such correlations in quantum information processing tasks, and close by surveying a number of known non-classicality measures. The reader is referred to Modi \emph{et al.}~\cite{MBCPV11} for a more comprehensive survey of the topic.

\paragraph{Motivation.} As mentioned earlier, it is known that in the case of pure-state quantum computation, entanglement is a necessary resource for exponential speedup over classical computers~\cite{jozsa03a}. What happens, however, if we instead consider \emph{mixed}-state quantum computing? This is a particularly relevant question, as typically one deals with mixed states in a laboratory setting due to noise from the environment. In 1998, Knill and Laflamme~\cite{kl98} proposed a model of computing known as \emph{Deterministic Quantum Computing with one clean qubit (DQC1)} (see Chapter~\ref{chap:dqc}), wherein all but one qubit of the computation are initialized to the maximally mixed state --- in other words, the quantum computation acts on a highly mixed state. (Note that this model is motivated experimentally by nuclear-magnetic resonance (NMR) information processing, in which states are highly mixed.) Yet, this model can perform the task of (normalized) trace estimation of a given unitary exponentially faster than the best known classical algorithm. This raises the question: Is entanglement also the root of the believed speedup in DQC1? (This is a natural question since ``very highly mixed'' states are separable due to a ball around the maximally mixed state in the set of separable quantum states~\cite{GB02,GB03,GB05}.) Or are there other correlations possibly at play? Recent work has suggested that although Erwin Schr\"{o}dinger once wrote that entanglement is \emph{``not just one of many traits, but
the characteristic trait of quantum physics''}~\cite{S35} (as quoted in~\cite{MBCPV11}), that between purely classical correlations and entanglement, there lies another form of quantum correlations whose nature is only now beginning to be understood. Such correlations are known simply as \emph{non-classical correlations}.

\paragraph{Defining non-classical correlations.} We now define what we mean by non-classical correlations. To begin, we say that a quantum state $\rho_{AB}\in\DD(\X\otimes \Y)$, henceforth denoted as $\rho$ to avoid clutter, is \emph{strictly classically correlated} or \emph{classical} if it can be diagonalized in a local product basis. In other words, $\rho$ is classical if there exist local orthonormal bases $\set{\ket{\psi_i}},\set{\ket{\phi_i}}$ for $\X$ and $\Y$, respectively, such that
\begin{equation}\label{eqn:NCdef1}
    \rho =\sum_{ij}\lambda_i\ketbra{\psi_i}{\psi_i}\otimes\ketbra{\phi_j}{\phi_j},
\end{equation}
for $\set{\lambda_i}$ the eigenvalues of $\rho$. Note that such a state is simply an embedding of a classical bipartite distribution into the quantum formalism. Any state not satisfying this definition is called \emph{non-classical}. We remark that this definition of classicality extends straightforwardly to the multipartite setting.

Continuing in the bipartite setting, a particularly interesting class of states which subsume the classical states are the so-called \emph{classical-quantum (CQ)} states, which are only classical in system A. Specifically, a state $\rho\in\DD (\X\otimes\Y)$ is CQ if there exists a local orthonormal basis $\set{\ket{\psi_i}}$ for $\X$ such that
\begin{equation}
    \rho =\sum_{i}p_i\ketbra{\psi_i}{\psi_i}\otimes\rho_i,
\end{equation}
for $\set{p_i}$ a probability distribution and for arbitrary $\rho_i\in\DD(\Y)$. Note that system A in $\rho$ simply plays the role of a classical label: Upon measuring it in basis $\set{\ket{\psi_i}}$ and obtaining outcome $i$, we know the induced state $\rho_i$ in B. Also, observe that CQ states are separable. An analogous definition straightforwardly yields the similar class of quantum-classical (QC) states. As an aside, note that neither classical nor CQ states form a convex set, unlike the set of separable quantum states.

\paragraph{Non-classical correlations and quantum information processing.}

A number of connections are known between non-classical correlations and quantum information processing tasks, involving for example local broadcasting~\cite{pianietal2008nolocalbrodcast,pianietal2009broadcastcopies}, extended state merging~\cite{CABMPW10}, the locking of classical correlations~\cite{dhlst04,DG08,WPM09,BACMW11} (see Chapter~\ref{chap:dqc}), assisted optimal state discrimination~\cite{RRA11,BFWF12}, remote state preparation~\cite{DLMRKBPVZBW12},  entanglement distribution~\cite{SKB12,CMMPPP12}, and activation of non-classical correlations into entanglement~\cite{PGACHW11} (see Chapter~\ref{chap:activation}, and also related work by Streltsov, Kampermann and Bru\ss~\cite{streltsov2010}). We now discuss two of these tasks: local broadcasting and entanglement distribution.

We begin with the task of \emph{local broadcasting}. Specifically, generalizing the no-cloning theorem of Section~\ref{0_sscn:quirks} is the following statement. Given a state $\rho\in\DD(\X)$, we say $\rho'\in\DD(\X\otimes \X)$ is a \emph{broadcast state} for $\rho$ if \begin{equation}
    \trace_1(\rho')=\trace_2(\rho')=\rho,
\end{equation}
where the $1:2$ split is across the two copies of $\X$. Now, suppose we are given a set of density operators $\set{\rho_{i}}\subseteq\DD(\X)$, and some arbitrary starting state $\sigma$. Then, the statement we are interested in is that there exists a TPCP map $\Lambda\in T(\X\otimes\X)$ which, for all $i$, achieves the mapping $\rho_i\otimes\sigma\mapsto \rho'_i\in \DD(\X\otimes\X)$ for $\rho'_i$ a broadcast state for $\rho_i$ if and only if the $\rho_i$ pairwise commute. This is called the \emph{no-broadcasting theorem}~\cite{BCFJZ96,BBLW07}. With respect to non-classical correlations, a variant of this theorem is the \emph{no-local-broadcasting} theorem of Piani \emph{et al.}~\cite{pianietal2008nolocalbrodcast,pianietal2009broadcastcopies}, which states that for any bipartite state $\rho_{AB}\in\DD(\X\otimes \Y)$, there exist local TPCP maps $\Theta_A\in T(\X,\X\otimes \X)$ and $\Theta_B\in T(\Y,\Y\otimes \Y)$ such that $\Theta_A\otimes\Theta_B(\rho_{AB})$ is a broadcast state if and only if $\rho_{AB}$ is strictly classical. Thus, the classicality of correlations in $\rho$ is strongly tied to how well one can carry out the information theoretic task of local broadcasting.

We next discuss the task of entanglement distribution. Consider a tripartite system ABC consisting of Alice, Bob, and a carrier system C. Roughly, the goal of entanglement distribution is for Alice and Bob to increase the entanglement between their systems A and B by having Alice send Bob the {carrier} system C. More specifically, we imagine Alice holds systems A and C to start, and Bob holds system B. Alice applies some encoding operation jointly to A and C. She then sends C to Bob. Bob finally applies some decoding operation to B and C. We now ask: Is the entanglement in the $AC:B$ cut before the protocol was run strictly smaller than the entanglement in the $A:BC$ cut after Bob receives the carrier $C$? What is perhaps most surprising about this task is that the answer to this question can be yes even if the carrier C is \emph{not} entangled with A and B throughout the protocol~\cite{CVDC03}! Motivated by the question of whether non-classical correlations could be the resource behind this phenomenon, Streltsov \emph{et al.}~\cite{SKB12} and Chuan \emph{et al.}~\cite{CMMPPP12} (both works appeared concurrently and independently) showed that (definitions to follow)
\begin{equation}
    \abs{{E}_R^{AC|B}(\rho_{ABC})-{E}_R^{A|BC}(\sigma_{ABC})}\leq \disc_R^{AB|C}(\sigma_{ABC}),
\end{equation}
where $\rho_{ABC}$ is the state before the protocol is run, $\sigma_{ABC}$ is the state once Bob receives $C$ from Alice, and where we measure non-classicality by the relative entropy of discord (RED) $\disc_R^{AB|C}$ of Equation~(\ref{0_eqn:RED}) (to be defined shortly) across the $AB:C$ cut, and we measure entanglement by the relative entropy of entanglement ${E}_R^{AC|B}$ (${E}_R^{A|BC}$) across the $AC:B$ ($A:BC$) cut . In other words, the amount of entanglement which can be transferred from Alice to Bob is bounded by the amount of non-classical correlations between the carrier C and AB (after Alice has applied her encoding operation). Note thus that this upper bound can be non-zero even if C is unentangled with A and B throughout the protocol (and in fact must be non-zero for the example of Cubitt \emph{et al.}~\cite{CVDC03} mentioned above).

\paragraph{Quantifying non-classical correlations.} Finally, we close this section by discussing a number of known non-classicality measures.

The formal notion of CQ states first arose with the works of Ollivier and Zurek~\cite{ollivier01a} and Henderson and Vedral~\cite{henderson01a}, where a measure of quantum correlations dubbed the \emph{quantum discord} was proposed. The aim of this measure is to quantify purely \emph{quantum} correlations in a bipartite state $\rho$. To define the discord, recall first that the (classical) \emph{mutual information} is a measure of correlation between (classical) random variables $A$ and $B$, i.e.\
\begin{equation}
    \IM(A:B)=H(A)+H(B)-H(A,B),
\end{equation}
where $H$ is the Shannon entropy defined in Section~\ref{0_sscn:entanglement} and $H(A,B)=-\sum_{a,b}\pr(A=a \cap B=b)\log\pr(A=a \cap B=b)$. Using the fact that $\pr(B|A)=\pr(A\cap B)/\pr(A)$, one can straightforwardly also express the mutual information as
\begin{equation}
    \JM(A:B)=H(B)-H(B|A),
\end{equation}
where $H(B|A):=\sum_a \pr(A=a) H(B|A=a)$. Although $\IM$ and $\JM$ are equivalent in the classical setting, their quantum counterparts no longer share the same relationship. Specifically, the quantum mutual information can be defined as
\begin{equation}
    \IM(\rho_{AB})=S(\rho_A)+S(\rho_B)-S(\rho_{AB}),
\end{equation}
where recall $\rho_A=\trace_B(\rho_{AB})$. However, a quantum variant of $\JM$ is non-trivial to define, since it requires specifying a value for $B$ for the conditional entropy $H(A|B)$ --- in particular, unlike the classical setting, quantumly the choice of measurement basis is non-trivial. To this end, for rank-one projective measurement $\set{\Pi_j^A}$, one defines~\cite{ollivier01a} a quantum conditional entropy
\begin{equation}
        S\left(\rho_{B|\set{\Pi_j^A}}\right) := \sum_jp_jS\left((\Pi_j^A\otimes I^B)\rho(\Pi_j^A\otimes I^B)\Big/p_j\right),
\end{equation}
where $p_j = \tr(\Pi_j^A\otimes I^B\rho)$. Then, a quantum version of $\JM$ for given measurement basis $\set{\Pi_j^A}$ can be defined as
 \be
\mathcal{J}_{\set{\Pi_j^A}}(\rho) =
S(\rho_B)-S\left(\rho_{B|\set{\Pi_j^A}}\right).
 \ee
Note that $\JM_{\set{\Pi_j^A}}(\rho)$ quantifies the amount of classical correlations which can be extracted from $\rho_{AB}$ via a projective measurement on one party; since we are in the end interested in purely quantum correlations, intuitively one would thus choose the \emph{optimum} measurement $\set{\Pi_j^A}$ here so as to extract all purely classical correlations, leaving only quantum correlations behind. With this in mind, the quantum discord is now defined as
 \ben \label{4_eqn:discord_def}
    \discm := \mathcal{I}(\rho)-\max_{\set{\Pi_j^A}}\mathcal{J}_{\set{\Pi_j^A}}(\rho)
           =
           S(\rho_A)-S(\rho_{AB})+\min_{\set{\Pi_j^A}}S\left(\rho_{B|\set{\Pi_j^A}}\right).
 \een
The discord is~\cite{MBCPV11} non-negative, non-symmetric with respect to exchange of systems $A$ and $B$, invariant under local unitaries, and most importantly for our discussion here, takes value zero if and only if $\rho$ is CQ~\cite{ollivier01a,dattathesis}. Moreover, there exist \emph{separable} states, such as the two-qubit state
\begin{equation}
    \frac{1}{2}\ketbra{0}{0}\otimes\ketbra{0}{0} + \frac{1}{2}\ketbra{+}{+}\otimes\ketbra{1}{1},
\end{equation}
which have non-zero discord, thus showing that discord quantifies correlations beyond entanglement. (Aside: The state above is studied further in Chapters~\ref{chap:localunitary},~\ref{chap:activation}, and~\ref{chap:entanglementswap}.)

The next measure of non-classical correlations we discuss is the \emph{geometric quantum discord}~\cite{DVB10}. Let $\CQ\subseteq\DD(\X\otimes\Y)$ denote the set of classical-quantum states. Then for $\rho\in\DD(\X\otimes\Y)$ the geometric discord is defined as
\begin{equation}\label{0_eqn:gdisc}
    \gdisc := \min_{\sigma\in\CQ}\fnorm{\rho-\sigma}^2 = \min_{\set{\Pi_j^A}}\fnorm{\rho-\sum_j\Pi^A_j \rho\Pi^A_j}^2,
\end{equation}
where $\fnorm{\cdot}$ is the Frobenius norm and the second equality was shown by Luo and Fu~\cite{LF10}. The name \emph{geometric} derives from the fact that the measure attempts to quantify distance from $\CQ$ via a metric. We have included the right-most expression in Equation~(\ref{0_eqn:gdisc}) as it offers another intuitive interpretation of non-classical correlations involving \emph{disturbance under measurement}. Namely, recall that in the classical world, there always exists a choice of measurement basis $\set{\Pi_j^A}$ (the computational basis) leaving the target state undisturbed. In the quantum setting, however, this is in general not the case. For example, this is an intuitive reason why CQ states are considered classical in A; there exists a measurement basis acting invariantly on A. The second expression for $\gdisc$ in Equation~(\ref{0_eqn:gdisc}) thus attempts to understand how much $\rho$ must be disturbed in a (rank one projective) measurement, regardless of the choice of local measurement basis for A.

The next non-classicality measure we discuss is similar to the geometric discord, but replaces the Frobenius norm with the \emph{relative entropy}. We thus arrive at the \emph{relative entropy of discord (RED)}~\cite{MPSVW10},
\begin{equation}\label{0_eqn:RED}
    \rdisc = \min_{\sigma\in \CQ}S(\rho||\sigma).
\end{equation}
An analogous definition for the case of general strictly classically correlated states goes under the name of the \emph{relative entropy of quantumness (REQ)}~\cite{Bravyi2003,PhysRevA.71.062307,groismanquantumness,PhysRevA.77.052101,MPSVW10}; this is studied further in Chapters~\ref{chap:activation} and~\ref{chap:entanglementswap}.

Interestingly, the RED turns out to be equal to (a variant of) another measure of non-classical correlations we discuss next, the \emph{quantum deficit}~\cite{HHHOSSS05}. The latter's definition is motivated by work extraction from quantum systems coupled to a heat bath. Roughly, the idea here is that a state is strictly classically correlated if and only if the same amount of work can be drawn from the global state versus from the local subsystems after allowing a suitably restricted subset of local operations and classical communication (LOCC) known as \emph{closed LOCC}. The variant of the deficit which is equal~\cite{HHHOSSS05} to the RED is the \emph{one-way deficit} $\Delta^\rightarrow$, given by (simplified from the original definition):
\begin{equation}
    \Delta^\rightarrow := \min_{\set{\Pi_j^A}} S\left(\sum_j\Pi^A_j \rho\Pi^A_j\right)-S(\rho_{AB}).
\end{equation}
Here, $\set{\Pi_j^A}$ again denotes a rank-one projective measurement. The correspondence between RED and the deficit does not stop here, however; the two-sided analogue of the RED, the REQ, is equal~\cite{HHHOSSS05} to the so-called \emph{zero-way deficit} $\Delta^{\emptyset}$:
\begin{equation}
    \Delta^\emptyset := \min_{\set{\Pi_i^A},\set{\Pi_j^B}} S\left(\sum_{ij}\Pi^A_i\otimes\Pi^B_j \rho\Pi^A_i\otimes\Pi^B_j\right)-S(\rho_{AB}).
\end{equation}

We have discussed a number of non-classicality measures here. Later in Chapter~\ref{chap:localunitary}, we introduce a novel measure of non-classical correlations based on local unitary operations, which for $(2\times N)$-dimensional quantum states turns out to coincide with the geometric discord. Chapters~\ref{chap:activation} and~\ref{chap:entanglementswap} then introduce and study a protocol for ``activating'' non-classical correlations into entanglement, while also providing an operational interpretation for the REQ.

\chapter{Approximation algorithms for QMA-complete problems}\label{chap:approx}

%


\emph{This chapter is based on~\cite{GK11}:}\\

\vspace{-4mm}
\noindent S. Gharibian and J. Kempe. Approximation algorithms for QMA-complete problems.
In \emph{Proceedings of 26th IEEE Conference on Computational Complexity},
pages 178-–188, 2011, DOI: 10.1109/CCC.2011.15, \copyright~2011 IEEE, ieeexplore.ieee.org.

\vspace{3mm}
\noindent Approximation algorithms for classical constraint satisfaction problems are one of the main research areas in
theoretical computer science. In this chapter, we define a natural approximation version of the QMA-complete local Hamiltonian
problem and initiate its study. We present two main results. The first shows that a non-trivial approximation ratio can be obtained in the class NP using product states. The second result (which builds on the first one), gives a
polynomial time (classical) algorithm providing a similar approximation ratio  for dense instances of the problem. The
latter result is based on an adaptation of the ``exhaustive sampling method'' by Arora \emph{et al.}~\cite{AKK99} to the quantum setting, and might be of independent interest.

\section{Introduction and results}\label{1_scn:intro}
In the last few years, the quantum analog of the class NP, the class QMA~\cite{KSV02}, has been extensively studied,
and several QMA-complete problems have been found~\cite{L06,B06,LCV07,BS07,R09,JGL10,SV09,WMN10}. Arguably the
most important (and historically first) QMA-complete problem is the $k$-local Hamiltonian problem~\cite{KSV02,KR03,OT05,KKR06,AGIK09}. Recall from Section~\ref{0_sscn:LH} that here, the
input is a set of Hamiltonians (Hermitian matrices), each acting on at most $k$-qubits each. The task is to determine
the largest eigenvalue of the sum of these Hamiltonians. This problem generalizes  the central NP-hard problem
MAX-$k$-CSP, where we are given a set of Boolean constraints on $k$ variables each, with the goal to satisfy as many
constraints as possible. The local Hamiltonian problem is of significant interest to complexity theorists and to
physicists studying properties of physical systems alike (e.g.~\cite{BV05,ADKLLR07,BDOT08,AALV09,CV09,LLMSS10,SC10}).

Moving to the classical scenario, the theory of NP-completeness is one of the great success stories of classical
computational complexity~\cite{AB90}. It was soon realized that many natural optimization problems are NP-hard, and are hence unlikely to have polynomial time algorithms. A natural question (both in theory and in practice) is to look for polynomial time algorithms that produce
solutions that are close to optimum. More precisely, one says that an algorithm achieves an \emph{approximation ratio}
of $c \in [0,1]$ for a certain maximization problem if on all inputs, the value of the algorithm's output is at least
$c$ times that of the optimum solution (the output value should also be at most the optimal solution). The closer $c$ is to $1$, the better the approximation. The
investigation of approximation algorithms is, after decades of heavy research, still a very active area (e.g.,~\cite{H97,V01}). For many central NP-hard problems, tight polynomial time approximation algorithms
are known.

In the context of QMA-complete problems, it is thus natural to search for approximation algorithms for these
problems, and in particular for the local Hamiltonian problem. The question we address here is: \emph{How well can one efficiently approximate the $k$-local Hamiltonian problem?}

It should be noted that a large host of heuristics has been developed in the physics community to
approximate properties of local Hamiltonian systems (see, e.g.,~\cite{CV09} for a survey) and this area
is extremely important in the study of physical systems. However, the systematic complexity theoretic study of
approximation algorithms for QMA-complete problems is still very much in its infancy, and our work is one of the first
steps in this research direction. We note that there has been a lot of interest in recent years~\cite{AALV09,A06} in
establishing a so-called quantum PCP theorem~\cite{AS98,ALMSS98}, which amounts to showing that for some
constant $c<1$ close enough to $1$, approximating the $k$-local Hamiltonian (or related problems) to within $c$ is
QMA-hard. Our results can also be seen as a natural continuation of that investigation.

\paragraph{Our results:}

Let us start by precisely defining the optimization version of the local
Hamiltonian problem, which is parameterized by two integers $k$ and $d$, which we always think of as constants. Note that the definition below differs slightly from that given in Section~\ref{0_sscn:LH}, Definition~\ref{0_def:localhamiltonianproblem}; we discuss the differences after stating the definition.

\begin{definition}[MAX-$k$-local Hamiltonian problem on $d$-level systems (qudits)]  An instance of the problem
consists of a collection of $\binom{n}{k}$ Hermitian matrices, one for each subset of $k$ qudits. The matrix
$H_{i_1,\ldots,i_k}$ corresponding to some $1 \leq i_1 \le \cdots \le i_k \le n$ is assumed to act on those
qudits (terms acting on less than $k$ qudits can be
incorporated by tensoring them with the identity), to be positive semidefinite, and to have operator norm at most $1$.
We call any pure or mixed state $\rho$ on $n$ qudits an \emph{assignment} and define its {\em value} to be $\trace (H \rho)$
where $H=\sum_{i_1, \ldots ,i_k} H_{i_1,\ldots,i_k}$. The goal is to find the largest eigenvalue of $H$ (denoted
$\OPT$), or equivalently, the maximum value obtained by an assignment. We say that an algorithm provides an
\emph{approximation ratio} of $c \in [0,1]$ if for all instances, it outputs a value that is between $c \cdot \OPT$ and
$\OPT$.
\end{definition}

This definition, we believe, is the natural quantum analog of the MAX-$k$-CSP problem. We note that it differs slightly
from the usual definition of the $k$-local Hamiltonian problem. Namely, we consider maximization (as opposed to
minimization), and also restrict the terms of $H$ to be positive semidefinite, and have norm at most $1$ (the latter two contraints are also common to Definition~\ref{0_def:localhamiltonianproblem}; more generally, the local terms of $H$ can be arbitrary Hermitian operators). As long as
one considers the \emph{exact} problem, these assumptions are without loss of generality, and do not affect the
definition, as seen by simply scaling the Hamiltonians and adding multiples of identity as necessary. However,
when dealing with the \emph{approximation} version, these assumptions are important for the problem to make sense; for
instance, one cannot meaningfully talk about approximation ratios if the optimum can take both negative and positive
values. That is why we require the terms to be positive semidefinite. The requirement that the terms have operator norm
at most $1$ does not affect the problem and later allows us to conveniently define dense instances.
Finally, changing the maximization to a minimization would lead to an entirely different approximation problem: the
quantum analogue of MIN-CSP (e.g.~\cite{KSTW01}). Minimization problems
are, generally speaking, harder than maximization problems, and we leave this research direction for future work. 

Before stating our results, we state a trivial way to get a $d^{-k}$-approximation for MAX-$k$-local
Hamiltonian. Observe that the maximally mixed state has at least $d^{-k}$ overlap with the reduced
density matrix of the optimal assignment on any $k$ particles. A similar property holds classically, where a random
assignment gives (in expectation) a $d^{-k}$ approximation of MAX-$k$-CSP. We now describe our two main results.
\myparagraph{Approximation by product states} One inherently quantum property of the local Hamiltonian problem is the
fact that the optimal state might in general be highly entangled (and hence not efficiently describable in polynomial
time or space). This is why we do not require  outputting the assignment itself in the above definition. If, however, the
optimal assignment (or some other good assignment) was guaranteed to be a {\em product state}, then we could describe it efficiently. The following theorem shows just that.

\begin{theorem}\label{1_thm:approxprodstate}
For an instance of MAX-$k$-local Hamiltonian with optimal value $\OPT$, there is a (pure) product state assignment that has
value at least $\OPT/d^{k-1}$.
\end{theorem}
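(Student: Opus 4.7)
The plan is to exhibit a distribution over pure product states whose expectation on $H$ is at least $\OPT/d^{k-1}$; any state in the support attaining at least this mean then witnesses the theorem. Since $\rho\mapsto\trace(H\rho)$ is linear, its maximum over density operators is attained at an extreme point, so I assume $\rho^*=\ketbra{\phi^*}{\phi^*}$. The distribution arises from an \emph{iterated Schmidt measurement} of $\ket{\phi^*}$: sequentially for $t=1,\ldots,n$, form the Schmidt decomposition of the current pure residual state across qudit $t$ versus qudits $t+1,\ldots,n$, measure qudit $t$ in the resulting Schmidt basis to obtain the outcome $\ket{\psi_t}$, and continue with the (pure) post-measurement residual on the remaining qudits. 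The final sample is $\ket{\vec\psi}=\ket{\psi_1}\otimes\cdots\otimes\ket{\psi_n}$, and I set $\rho_{\mathrm{meas}}:=\E[\ketbra{\vec\psi}{\vec\psi}]$.

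The core technical claim is the following Key Lemma: for every subset $T\subseteq[n]$, the marginal satisfies $\rho_{\mathrm{meas},T}\succeq \rho^*_T/d^{|T|-1}$. Granting it, specializing $T=S$ for each $k$-subset $S$ in $H=\sum_S H_S$ and using $H_S\succeq 0$ yields
\[ \E[\bra{\vec\psi}H\ket{\vec\psi}] = \sum_S \trace(H_S\,\rho_{\mathrm{meas},S}) \geq \frac{1}{d^{k-1}}\sum_S\trace(H_S\,\rho^*_S) = \frac{\OPT}{d^{k-1}},\]
so the desired product state exists in the support. I prove the lemma by induction on $n$ (the case $n=1$ is immediate). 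Writing the first Schmidt decomposition $\ket{\phi^*}=\sum_j\sqrt{\lambda_j}\ket{e_j}_1\otimes\ket{f_j}_{2\cdots n}$, the mixture splits as $\rho_{\mathrm{meas}}=\sum_j\lambda_j\ketbra{e_j}{e_j}\otimes\rho^{(j)}_{\mathrm{meas}}$, where $\rho^{(j)}_{\mathrm{meas}}$ is the iterated-Schmidt mixture of $\ket{f_j}$ on the remaining $n-1$ qudits. When $1\notin T$, the inductive hypothesis $\rho^{(j)}_{\mathrm{meas},T}\succeq (f_j)_T/d^{|T|-1}$ combined with the Schmidt identity $\sum_j\lambda_j(f_j)_T=\rho^*_T$ closes the case immediately.

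The main obstacle is the case $1\in T$, written $T=\{1\}\cup T'$, which reduces (after using the inductive bound $\rho^{(j)}_{\mathrm{meas},T'}\succeq (f_j)_{T'}/d^{|T'|-1}$) to establishing the operator inequality
\[ \sum_j \lambda_j \ketbra{e_j}{e_j}\otimes (f_j)_{T'} \;\succeq\; \frac{1}{d}\,\rho^*_T. \]
To handle this, I plan to test against an arbitrary vector $\ket{u}=\sum_j\ket{e_j}\otimes\ket{u_j}$ and introduce the partial-inner-product vectors $\ket{v_j}:=(\bra{u_j}\otimes I)\ket{f_j}$ on the complement of $T'$; using the Schmidt form of $\ket{\phi^*}$, a direct calculation gives $\bra{u}(\sum_j\lambda_j\ketbra{e_j}{e_j}\otimes(f_j)_{T'})\ket{u}=\sum_j\lambda_j\|v_j\|^2$ and $\bra{u}\rho^*_T\ket{u}=\|\sum_j\sqrt{\lambda_j}\ket{v_j}\|^2$. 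Cauchy--Schwarz (applied first via the triangle inequality to reduce to scalars and then in the form $|\sum_j 1\cdot a_j|^2\leq r\sum_j a_j^2$) yields $\|\sum_j\sqrt{\lambda_j}\ket{v_j}\|^2\leq r\sum_j\lambda_j\|v_j\|^2$, where $r$ is the Schmidt rank at this step; crucially, since one side of the bipartition is a single qudit, $r\leq d$, and this single factor of $d$ is precisely what produces the $d^{k-1}$ rather than the trivial $d^k$ improvement over the maximally-mixed-state bound.
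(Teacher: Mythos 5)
Your proof is correct and is essentially the paper's argument: the iterated Schmidt measurement produces exactly the mixture of product states from the paper's recursive Schmidt decomposition, your Key Lemma is the operator-inequality reformulation of the paper's Theorem~\ref{1_thm:approxguarantee} (trace against every $k$-local projector), and the crucial case $1\in T$ with its Cauchy--Schwarz bound of the cross terms by the Schmidt rank $r\leq d$ is precisely the paper's Mixing Lemma (Lemma~\ref{1_lem:mixinglemma}). The only differences are presentational (phrasing the construction as a measurement and the guarantee as $\rho_{\mathrm{meas},T}\succeq\rho^*_T/d^{|T|-1}$ rather than as trace inequalities against projectors), plus the trivial edge case $T=\{1\}$ which you should note is handled by exact marginal preservation.
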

This result is {\em tight} for product states in the case of $2$-local Hamiltonians (we remark that $2$-local Hamiltonians are often the most relevant case from a physics perspective). For example, consider the Hamiltonian on $2$-qubits
that projects onto the EPR state $\frac{1}{\sqrt{2}}(\ket{00}+\ket{11})$. It is easy to see that no product state
achieves value more than $1/2$. For general $d$ and $k$, we can only show that product states cannot achieve an approximation
ratio greater than $1/d^{\lfloor k/2 \rfloor}$ (see Section~\ref{1_scn:prodratio}, where better bounds in more specific cases are also discussed).


If we could efficiently find the best product state assignment, we would obtain an algorithm achieving a non-trivial
$d^{-k+1}$ approximation ratio. Unfortunately, this problem is NP-complete, since it would allow one to solve (e.g.)
the special case of MAX-$k$-SAT (as discussed in Section~\ref{0_sscn:LH}, for each clause $C$ acting on variables $\set{i_1,\ldots,i_k}$ in an instance of MAX-$k$-SAT, define the corresponding Hamiltonian term $H_{i_1,\ldots,i_k}$ diagonal in the computational basis and projecting onto the satisfying assignments for $C$. Then, without loss of generality, the optimal product state assignment can be taken to be a computational basis state), implying such an algorithm cannot exist unless $\class{P}=\class{NP}$. Still, the theorem has the following
interesting implication: It shows that unless $\NP=\QMA$, approximating the local Hamiltonian problem to within a factor less than $d^{-k+1}$ is not QMA-hard. This follows simply because product states have polynomial size
classical descriptions. (More accurately, since one uses a polynomial number of classical bits to approximately specify a product state in NP, the ratio in the implication above is $d^{-k+1}-f(m)$ for some function $f$ which scales inverse exponentially in the input size $m$.)

%

\myparagraph{A polynomial time approximation algorithm for dense instances}

Our second result gives a classical polynomial time approximation algorithm for {\em dense} instances of the local
Hamiltonian problem. This result is perhaps our technically most challenging one, and we hope the techniques we develop
might turn out useful elsewhere.

Dense instances of classical constraint satisfaction problems have been studied in depth \cite{V96,FK96,GGR98,AKK99,VK00,AVKK02,BVK03,VKKV05}. Our result is inspired by work of Arora \emph{et al.}~\cite{AKK99}
who provide a polynomial time approximation scheme, or PTAS (i.e., an efficient $1-\eps$ approximation algorithm for
any fixed $\eps>0$), for several types of dense constraint satisfaction problems. In the classical case, dense (for
$2$-local constraints) simply means that the average degree in the constraint graph is $\Omega(n)$, or equivalently,
that the optimum is $\Omega(n^2)$.  In
analogy, we define an instance of MAX-$k$-local Hamiltonian to be {\em dense} if $\OPT=\Omega(n^k)$, or equivalently,
if $\trace(H \frac{I}{d^n})=\Omega(n^k)$ (the equivalence follows from the fact that the mixed state assignment $I/d^n$ has value between $\OPT$ and $\OPT/d^k$).

It is not hard to see that the (exact) dense local Hamiltonian problem remains QMA-hard (see Section~\ref{1_sscn:finalptas}).
We hope the
dense case might be of practical interest to physicists who study systems of particles by incorporating all possible
interactions between them. Our second main result is the following:

\begin{theorem}\label{1_thm:approxdense}
For all $\eps >0$ there is a polynomial time $(1/d^{k-1} -\eps)$-approximation algorithm for the dense MAX-$k$-local
Hamiltonian problem over qudits.
\end{theorem}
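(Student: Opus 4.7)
The plan is to reduce the problem to efficiently approximating the best \emph{product state} assignment, and then adapt the exhaustive sampling method of Arora, Karger, and Karpinski (AKK)~\cite{AKK99} to that continuous optimization. By Theorem~\ref{1_thm:approxprodstate}, some pure product state $\bigotimes_{i=1}^n \ket{\psi^*_i}$ achieves value at least $\OPT/d^{k-1}$. Hence, if I can efficiently produce a product state whose value is within an additive $\delta \cdot n^k$ of the optimum over all product states, then in the dense regime $\OPT=\Omega(n^k)$ this additive error is equivalent to multiplicative error $O(\delta)$, and the overall approximation ratio becomes at least $1/d^{k-1}-O(\delta)$, as required.

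First I handle the continuous nature of each local factor by fixing a $\delta_1$-net $\mathcal{N}$ on the pure states of $\mathbb{C}^d$, of size $|\mathcal{N}|=(O(1/\delta_1))^{O(d)}$. Since $\trace(H_S\bigotimes_{i\in S}\rho_i)$ is a degree-$k$ multilinear function in the $\rho_i$ with each $\snorm{H_S}\leq 1$, rounding every $\ket{\psi^*_i}$ to its nearest element $\ket{\tilde\psi^*_i}\in\mathcal{N}$ perturbs the objective by at most $O(\delta_1 n^k)$. So it suffices to optimize over product states whose factors lie in $\mathcal{N}$.

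Next I adapt AKK. Sample a uniformly random subset $T\subseteq[n]$ of size $t=\poly(\log n,1/\delta)$, and for each assignment $\tau\in\mathcal{N}^T$ execute the following subroutine. Regard $\tau$ as a guess for the restriction $\tilde\psi^*|_T$. For each $i\notin T$ the optimal one-qudit observable for the $i$th factor is
\begin{equation}
M_i := \sum_{S\ni i}\trace_{S\setminus\{i\}}\Bigl(H_S\bigotimes_{j\in S\setminus\{i\}}\ketbra{\tilde\psi^*_j}{\tilde\psi^*_j}\Bigr),
\end{equation}
a sum of $\binom{n-1}{k-1}$ operators of norm at most $1$. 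Since the $\tilde\psi^*_j$ for $j\notin T$ are unknown, I replace $M_i$ by the subsampled estimator
\begin{equation}
\hat M_i(\tau) := \frac{\binom{n-1}{k-1}}{\binom{t}{k-1}}\sum_{\substack{S\ni i\\ S\setminus\{i\}\subseteq T}}\trace_{S\setminus\{i\}}\Bigl(H_S\bigotimes_{j\in S\setminus\{i\}}\ketbra{\tau_j}{\tau_j}\Bigr),
\end{equation}
which is computable from $\tau$ alone and, for the ``correct'' guess $\tau=\tilde\psi^*|_T$, is an unbiased estimator of $M_i$. I then set $\ket{\varphi_i}$ to the net element closest to the top eigenvector of $\hat M_i(\tau)$, and output the product state $\bigotimes_i\ket{\varphi_i}$ of largest objective value across all guesses $\tau$. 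The total number of subroutine calls is $|\mathcal{N}|^t=n^{\poly(1/\delta,d)}$, polynomial in $n$ for constant $\delta,d,k$.

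The main obstacle is the concentration analysis. A Hoeffding-type argument, made uniform over $\mathcal{N}$ (the resulting $\log|\mathcal{N}|$ factor is absorbed into $t$), should yield $\snorm{\hat M_i(\tau^*)-M_i}\leq \delta_2 n^{k-1}$ simultaneously for all $i\notin T$ with high probability over $T$, where $\tau^*=\tilde\psi^*|_T$. Provided $\binom{t}{k-1}$ is sufficiently large---which, for $k\geq 3$, may force slightly superlogarithmic $t$ and possibly an AKK-style iterative refinement in which successive rounds bootstrap better marginal estimates---the greedy step then produces a product state within additive $O((\delta_1+\delta_2)n^k)$ of the net-restricted optimum. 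Combining all error terms and invoking density to convert additive errors to multiplicative ones yields the claimed $(1/d^{k-1}-\epsilon)$-approximation.
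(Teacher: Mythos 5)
Your high-level plan---reduce to product-state optimization via Theorem~\ref{1_thm:approxprodstate}, discretize the local factors with a $\delta$-net, exhaustively enumerate net assignments on a random subset $T$, and use density to convert additive error into multiplicative error---is exactly the paper's strategy, and your concentration step is essentially the paper's Sampling Lemma (the paper sidesteps your worry about $\binom{t}{k-1}$ for $k\geq 3$ by estimating degree-$b$ inner products \emph{recursively}, one level at a time, so $|T|=O(\log n)$ suffices throughout). The genuine gap is the final greedy step. Choosing each $\ket{\varphi_i}$ independently as the top eigenvector of $\hat M_i(\tau)$, where $M_i$ is the marginal observable of qudit $i$ computed against the \emph{optimal} assignment on the remaining qudits, does not control the value of the joint output $\bigotimes_i\ketbra{\varphi_i}{\varphi_i}$: that value is determined by interactions among the new factors $\varphi_i,\varphi_j$, not between $\varphi_i$ and $\tilde\psi^*_j$. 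Concretely, embed MAX-CUT on the complete graph as a dense $2$-local instance with $H_{ij}=\ketbra{01}{01}+\ketbra{10}{10}$: the optimal product state is a balanced computational-basis string, every marginal $M_i$ is within $O(1)$ of a multiple of the identity, and your rule may legitimately set every $\varphi_i=\ket{0}$, yielding value $0$ against $\OPT=\Theta(n^2)$. No hybrid/telescoping argument rescues this, because the marginal relevant when replacing the $i$th factor depends on the already-replaced factors $\varphi_{j<i}$, not on $\tilde\psi^*$.

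This is precisely why the paper (following Arora \emph{et al.}) does not stop at estimating marginals: it solves a \emph{linearized semidefinite program} whose objective uses the estimates $e_{ij}$ as coefficients and---crucially---whose constraints force any feasible $(\rho_1,\ldots,\rho_n)$ to itself induce marginals within $\pm\epsilon' n^{b-1}$ of those estimates at every level of the recursion. Feasibility of the optimal assignment lower-bounds the SDP value by roughly $\optprod$, and the consistency constraints guarantee that the true objective value of the SDP's output is close to its linearized value; a final pass of the method of conditional expectations converts the mixed output into a pure product state. Your proposal needs this joint, consistency-constrained optimization (or an equivalent mechanism); without it the algorithm can fail badly already for $k=2$, $d=2$.
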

Theorem~\ref{1_thm:approxdense} follows immediately by combining Theorem~\ref{1_thm:approxprodstate} with the following
theorem, which gives an approximation scheme for the problem of optimizing over the set of product states.

\begin{theorem}\label{1_thm:ptasdense}
Let $\optprod$ denote the value of the optimal product state assignment for an instance of MAX-$k$-local Hamiltonian $H$. Then, for all $\eps>0$, there is a polynomial time algorithm which outputs a product state assignment attaining value at least $\optprod-\epsilon n^k$. For all $\eps>0$, this yields an efficient $(1-\epsilon)$-approximation algorithm for computing $\optprod$ for dense MAX-$k$-local Hamiltonian.
\end{theorem}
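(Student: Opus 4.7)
The plan is to adapt the exhaustive sampling method of Arora, Karger, and Karpinski~\cite{AKK99} to optimization over product states. Observe that a product assignment $\vec{\rho} = (\rho_1,\ldots,\rho_n)$ with $\rho_i \in \DD(\complex^d)$ turns the objective into a multilinear degree-$k$ polynomial
\begin{equation}
F(\vec{\rho}) = \sum_{T \in \binom{[n]}{k}} \trace(H_T \rho_T), \qquad \rho_T := \bigotimes_{i \in T} \rho_i,
\end{equation}
whose optimum is attained at products of pure states. The partial derivative with respect to $\rho_i$ yields the \emph{effective $1$-local Hamiltonian}
\begin{equation}
h_i(\vec{\rho}_{-i}) := \sum_{T \ni i} \trace_{T \setminus \{i\}}\!\bigl(H_T\,\bigl({\textstyle\bigotimes_{j \in T \setminus \{i\}}} \rho_j\bigr) \otimes I_i\bigr) \in \HH(\complex^d),
\end{equation}
and the Euler identity for multilinear forms gives $k\,F(\vec{\rho}) = \sum_i \trace(h_i(\vec{\rho}_{-i})\,\rho_i)$.

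The algorithm first fixes an $\eps$-net $\mathcal{N}$ on pure single-qudit states with $|\mathcal{N}| \leq (C/\eps)^{2d}$, samples a uniformly random subset $S \subseteq [n]$ of size $s = \Theta(\log n / \eps^2)$, and for each of the $|\mathcal{N}|^s = \poly(n)$ candidate maps $\tau: S \to \mathcal{N}$ assembles a product state $\vec{\rho}^\tau$ as follows: set $\rho_i^\tau = \tau(i)$ for $i \in S$, and for each $i \notin S$ form the empirical estimator
\begin{equation}
\hat{h}_i^\tau := \frac{\binom{n-1}{k-1}}{\binom{s}{k-1}} \sum_{T' \in \binom{S}{k-1}} \trace_{T'}\!\bigl(H_{T' \cup \{i\}}\,{\textstyle\bigotimes_{j \in T'}} \tau(j) \otimes I_i\bigr)
\end{equation}
and set $\rho_i^\tau$ to be a top eigenvector of $\hat{h}_i^\tau$. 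Compute each $F(\vec{\rho}^\tau)$ exactly in $\poly(n)$ time and output the best candidate. The total runtime is $n^{O(d \log(1/\eps)/\eps^2)}$, polynomial for constant $\eps$, $d$, $k$.

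For the analysis, fix an optimal pure product state $\vec{\rho}^*$ and consider the ``correct'' guess $\tau^*$ sending each $i \in S$ to the closest net point of $\rho_i^*$. A matrix Hoeffding-type concentration inequality applied to the empirical estimator, combined with a union bound over $i \notin S$, shows that $s = \Omega(\log n / \eps^2)$ suffices to guarantee, with probability $1 - o(1)$ over the random $S$, that $\snorm{\hat{h}_i^{\tau^*} - h_i(\vec{\rho}^*_{-i})} \leq \eps n^{k-1}$ simultaneously for all $i \notin S$; the error from substituting $\tau^*(j)$ for $\rho_j^*$ inside the estimator is absorbed via the net together with $\trnorm{\ketbra{u}{u} - \ketbra{v}{v}} \leq 2 \enorm{\ket{u} - \ket{v}}$. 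Consequently, $\trace(h_i(\vec{\rho}^*_{-i}) \rho_i^{\tau^*}) \geq \trace(h_i(\vec{\rho}^*_{-i}) \rho_i^*) - 2\eps n^{k-1}$ for every $i \notin S$.

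The main obstacle is converting these per-qudit guarantees into the global additive bound $F(\vec{\rho}^{\tau^*}) \geq \optprod - O(\eps n^k)$, since a naive one-coordinate hybrid argument fails: each $h_i$ itself depends on the other qudits' states. I would resolve this by splitting $F$ by $|T \cap S|$. Constraints with $|T \cap S| \geq 1$ contribute in total at most $O(s\, n^{k-1}) = O(n^{k-1}\log n/\eps^2) = o(n^k)$ and are negligible for fixed $\eps$, while for constraints disjoint from $S$ the Euler identity lets me express the objective difference as an average of the controlled per-qudit inner products, which propagate through a multilinear/telescoping expansion of $F(\vec{\rho}^{\tau^*}) - F(\vec{\rho}^*)$ to give total error $O(\eps n^k)$. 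Rescaling $\eps$ yields the additive guarantee $F(\vec{\rho}^{\tau^*}) \geq \optprod - \eps n^k$; since dense instances satisfy $\OPT = \Omega(n^k)$ and Theorem~\ref{1_thm:approxprodstate} gives $\optprod \geq \OPT/d^{k-1}$, we have $\optprod = \Omega(n^k)$, from which the multiplicative $(1-\eps)$-approximation follows.
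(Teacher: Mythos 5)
Your setup---exhaustive sampling over a net on the $O(\log n)$ sampled qudits, concentration for the effective $1$-local fields $h_i$, and the Euler identity---matches the paper's strategy up to the point where the estimates must be converted into an assignment, and it is there that the proof has a genuine gap. Taking $\rho_i^{\tau^*}$ to be a top eigenvector of $\hat h_i^{\tau^*}$ gives you only the per-qudit guarantee $\trace(h_i(\vec\rho^{\,*}_{-i})\rho_i^{\tau^*}) \geq \trace(h_i(\vec\rho^{\,*}_{-i})\rho_i^*) - O(\eps n^{k-1})$, i.e.\ each new state is a near-best response to the \emph{optimal} background. This does not imply $F(\vec\rho^{\,\tau^*}) \geq \optprod - O(\eps n^k)$, and the telescoping expansion cannot bridge the gap: its $i$-th term is $\trace\bigl(h_i(\rho_1^{\tau^*},\dots,\rho_{i-1}^{\tau^*},\rho_{i+1}^*,\dots,\rho_n^*)\,(\rho_i^{\tau^*}-\rho_i^*)\bigr)$, an effective field evaluated at a \emph{hybrid} assignment about which your concentration bounds say nothing, and each such term can be as negative as $-\Omega(n^{k-1})$. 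Concretely, take $k=2$, $d=2$, and $H_{ij} = I - \ketbra{11}{11}$ for every pair. The optimal product assignment puts every qubit in $\ket{0}$ and has value $\binom{n}{2}$, yet $h_i(\vec\rho^{\,*}_{-i}) = (n-1)I$, so \emph{every} single-qubit state is an exact best response to the optimal background, and an estimate within your error bars may well have $\ket{1}$ as its top eigenvector for every $i\notin S$. The resulting (essentially) all-ones assignment has value $o(n^2)$, and the telescoping sum evaluates to $-\binom{n}{2}$ rather than $-O(\eps n^2)$. So per-qudit near-optimality against the optimal background is provably insufficient for the global claim, and the analysis of the correct guess $\tau^*$ breaks down.

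The missing ingredient is the one that makes the Arora et al.\ linearization sound: the search must be restricted to assignments that are themselves \emph{consistent} with the sampled estimates. In the paper, LINEARIZE (Algorithm~\ref{1_alg:linearize}) replaces the degree-$k$ objective by the linear one $\sum_{i,j}\trace(\sigma_j\rho_i)e_{ij}$ \emph{and}, recursively, adds linear constraints forcing any feasible $(\rho_1,\dots,\rho_n)$ to approximately regenerate the estimates $e_{ij}$; the program is then solved as an SDP. Lemma~\ref{1_lem:feasible} shows the optimal assignment remains feasible, and Lemma~\ref{1_lem:guarantee} uses precisely those consistency constraints to show that the linear value of \emph{any} feasible point is within $O(\eps n^k)$ of its true degree-$k$ value---this is the step for which your greedy eigenvector rounding has no analogue. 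To repair the argument you would need to reinstate such constraints (and hence solve a convex program rather than take top eigenvectors), after which the output can be purified by the method of conditional expectations as in the paper.
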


We remark that the algorithm of Theorem~\ref{1_thm:ptasdense} also applies in the \emph{minimization} setting, in which one is interested in computing the \emph{smallest} eigenvalue of $k$-local Hamiltonian $H$. Here, our algorithm outputs a value at most $\optprod+\epsilon n^k$.

\paragraph{Proof ideas and new tools:}
The proofs of Theorem~\ref{1_thm:approxprodstate} and Theorem~\ref{1_thm:ptasdense} are independent and employ different
techniques. To show the product state approximation guarantee, we show a slightly stronger statement: For {\em any}
assignment $\ket{\Psi}$, there is a way to construct a product assignment of at least $d^{-k+1}$ its value. The proof is constructive (given $\ket{\Psi}$): we use a type of recursive Schmidt decomposition of $\ket{\Psi}$ to obtain a mixture
of product states whose value is off by at most the desired approximation factor (see Section~\ref{1_scn:prodratio}).

Our second result is technically more challenging and introduces a few new ideas to this problem, inspired by work of
Arora \emph{et al.}~\cite{AKK99} in the classical setting. We illustrate the main ideas for MAX-$2$-local Hamiltonian on $n$
qubits. Recall that our goal is to find a PTAS for the local Hamiltonian problem {\em over product states}. The value
of the optimal product state assignment, $\optprod$, can be written
\begin{equation}
      \optprod\hspace{2mm} = \hspace{2mm}\max \quad\sum_{i=1}^n \sum_{j \in N(i)} \trace (H_{i,j} (\rho_i \otimes \rho_j))
    \quad\mbox{s.t.}\quad \rho_i\succeq 0 \mbox{ and }\trace(\rho_i)=1\quad\mbox{for } 1\leq i\leq n,\label{1_eqn:introprogram}
\end{equation}
where $N(i)$ is the set of indices $j$ for which a local Hamiltonian term $H_{i,j}$ is present. We might call this a
{\em quadratic} semidefinite program, as the maximization is quadratic in the $\rho_i$ (and as such not efficiently
solvable in general). Note, however, that if the terms in the maximization were {\em linear}, then we would obtain a semidefinite
program (SDP), which is efficiently solvable~\cite{GLS93}. To ``linearize'' our optimization, we use the ``exhaustive
sampling method" developed by Arora \emph{et al.}~\cite{AKK99} (a method which was later key in many developments in property
testing, e.g.~\cite{GGR98}). We write each Hamiltonian term in a basis that separates its two qubits, for instance the
Pauli basis $\{\sigma_0, \sigma_1, \sigma_2, \sigma_3\}$, $H_{i,j}=\sum_{k,l=0}^3 \alpha^{ij}_{kl} \sigma_k \otimes
\sigma_l$. For $i=1,\ldots,n$ and $k=0,1,2,3$, define
\begin{equation}
    c_k^i := \sum_{j \in N(i)} \sum_{l=0}^3 \alpha_{kl}^{ij} \trace (\sigma_l \rho_j).
\end{equation}
If we knew the values of $c_k^i$ for the optimal $\rho_i$, then solving the SDP below would yield the optimal $\rho_i$:
\begin{align}
    \max \quad \sum_{i=1}^n \sum_{k=0}^3 c_k^i \trace (\sigma_k \rho_i)\quad
    \mbox{s.t.}\quad& \rho_i\succeq 0\mbox{ and } \trace(\rho_i)=1\hspace{11mm} \mbox{for } 1\leq i\leq n,\label{1_eqn:introdecomp}\\
    &\hspace{-2mm}\sum_{j \in N(i)} \sum_{l=0}^3 \alpha_{kl}^{ij} \trace (\sigma_l \rho_j)=c_k^i\hspace{5mm} \mbox{for } 1\leq i\leq n\mbox{ and } 0\leq k\leq 3.\nonumber
\end{align}
Of course, this reasoning is circular, as in order to obtain the $c_k^i$ we need the optimal $\rho_i$. The crucial idea
is now to use {\em sampling} to {\em estimate} the $c_k^i$. More precisely, assume for a second that we could sample
$O(\log n)$ of the $\rho_i$ randomly from the optimal assignment. Then, by standard sampling bounds, with high
probability over the choice of the sampled qubits we can estimate all the $c_k^i$ to within an additive error $\pm \eps
n$ for some $\eps$. If we had these estimates $a_k^i$ for the $c_k^i$, we could solve the SDP above with the slight
modification that the last constraint should be $a_k^i -\eps n \leq \sum_{j \in N(i)} \sum_l \alpha_{kl}^{ij} \trace
(\sigma_l \rho_j)\leq a_k^i +\eps n$. With high probability over the sampled qubits, this SDP will give a solution that
is within an additive $\eps n^2$ of the optimal one (more subtle technicalities and all calculations can be found in
Section \ref{1_scn:sepopt}). Moreover, it is possible to derandomize the sampling procedure to obtain a deterministic
algorithm (Section~\ref{1_sscn:finalptas}).

Of course, we are still in the realm of wishful thinking, because in order to sample from the optimal solution, we
would need to know it, which is precisely what we set out to do. However, the number of qubits we  wish to sample is
only \emph{logarithmic} in the input size. Thus, to simulate the sampling procedure, we can pick a random subset of
$O(\log n)$ qubits, and simply \emph{iterate} through all
possible assignments on them (with an appropriate $\delta$-net over the density matrices, which incurs a small
additional error) in polynomial time! Our algorithm then runs the SDP for each iteration, and we are guaranteed that at
least one iteration will return a solution within $\eps n^2$ of the optimal one. Because the denseness assumption
guarantees that $\optprod$ is $\Omega(n^2)$, our additive approximation turns into a factor $(1-\eps)$-approximation,
as desired. All details, the runtime of the algorithm and error bounds for the general $k$-local case on qudits
are given in Section~\ref{1_scn:sepopt}. We remark that the approach above works analogously in the setting where the objective function involves minimization instead of maximization.

\paragraph{Previous and related work:}
We note that many heuristics have been developed in the physics community to
approximate properties of local Hamiltonian systems and this area is extremely important in the study of physical
systems~(e.g.~\cite{W92,W93,OR95,RO97,S05,PWKH98,CV09,O11}). Our focus here is, however, on \emph{rigorous} bounds (unlike a heuristic) on the
approximation guarantee of algorithms for the \emph{general} problem (we allow interactions of arbitrary types occurring on arbitrary graphs, in contrast to the more common approach of studying specific local Hamiltonian models with certain classes of allowed interactions). In this area, to our knowledge, few results are known. In the setting of \emph{relative}-error approximation, as studied here, the first and only previous result we are aware of is that of Bansal, Bravyi and Terhal~\cite{BBT09}, who give a PTAS for a
special case of the local Hamiltonian problem, so called quantum Ising spin glasses, for the case where the instance
is on a planar graph and of bounded degree. Roughly, this PTAS is obtained by dividing the graph into constant
size chunks, which can be solved directly, and ignoring the constraints between chunks (this incurs an error
proportional to the number of such constraints, which is small because the graph is planar).
In the setting of \emph{absolute}-error approximation, in 1D models, rigorous results such as Hasting's 1D area law are known for gapped systems~\cite{Ha07} (where it is also shown that the ground state is well-approximated by a Matrix Product State~\cite{V03}), and rigorous approximation methods are known for 1D~\cite{AAI10,SC10} and for 2-local Hamiltonians on qubits where the two-qubit interaction strengths are weak~\cite{BDL08}. Finally, we remark that the use of a product state ansatz is closely related to the mean-field approximation or Hartree-Fock method in physics (see, e.g.~\cite{FV06}). 

\paragraph{Discussion and open questions:}

Our two results give approximations to the local Hamiltonian problem. Although at first glance, our approximation ratio of $1/d^{k-1}$ may appear an incremental improvement over the trivial random assignment strategy, there are three important notes that should be kept in mind: The first is that many classical NP-hard problems, such as MAX-3-SAT (a special case of MAX-$k$-CSP where each constraint is the disjunction (``OR") of $k$ variables or their negation), are \emph{approximation resistant}~(e.g.~\cite{H07,AM08}), meaning that unless P$=$NP, there do not even exist non-trivial approximation ratios beyond the random assignment strategy. For example, for MAX-3-SAT it is NP-hard to do better than the approximation ratio of $7/8$ achieved by random assignment~\cite{Ha97}. Thus, showing the existence of a non-trivial approximation ratio is typically a big step in the classical setting. Moreover, it could have been conceivable that for \klh, analogously to MAX-3-SAT, outperforming the random assignment strategy would have been \emph{QMA-hard}. Yet our results show that unless NP$=$QMA, this is not the case. The second important note that should be kept in mind is that our work considers the local Hamiltonian problem in its full generality by allowing arbitrary constraints on an arbitrary interaction graph. It could be (and is the case, for example, in~\cite{BBT09}) that for more restricted classes of local Hamiltonian models, better approximation ratios are achievable. Third, the currently \emph{best} approximation algorithm for MAX-$k$-CSP gives an approximation ratio of only about $0.44k/2^k$ for $k>2$~\cite{CMM07} (for $k=2$, one can achieve $0.874$~\cite{LLZ02}. See also the work of Raghavendra~\cite{R08}) and this is, moreover, essentially the best possible under a plausible complexity
theoretic conjecture (namely, the Unique Games Conjecture~\cite{K02})~\cite{T98,H05,ST06,AM08}. This is to be contrasted with our $2/2^k$-approximation ratio for the case of $d=2$ (i.e.\ qubit systems), which we show can be achieved by product state assignments for \emph{arbitrary} (i.e.\ even non-dense) \klh~instances (in the non-dense case, however, we do not show how to  \emph{efficiently} find a product state achieving this ratio). This raises the important open question: Is our approximation ratio tight?

Our product state approximation shows that approximating the local Hamiltonian problem to within $d^{-k+1}$
is in NP. It would be interesting to know if this approximation ratio could also be achieved in polynomial time. If
not, it might lead to an intriguing state of affairs where for low approximation ratios the problem is efficiently
solvable, for medium ratios it is in NP but not efficiently solvable, and for high ratios it is QMA-hard (assuming
a quantum PCP theorem exists). Further, as mentioned earlier, our work can be viewed as negative progress towards a quantum PCP theorem in that, by Theorem~\ref{1_thm:approxprodstate}, a quantum PCP theorem with hardness ratio $c\leq d^{-k+1}$ cannot exist unless {NP}$=${QMA}.

To obtain our results for the case of dense local Hamiltonians, we have introduced the exhaustive sampling technique of
Arora \emph{et al.}~\cite{AKK99} to the setting of low-degree semidefinite programs. We linearize such programs using
exhaustive sampling in combination with a careful analysis of the error coming from working with $\delta$-nets on
density matrices. We remark that it seems we cannot simply apply the results of~\cite{AKK99} for \emph{smooth
Polynomial Integer Programs} as a black-box to our setting. This is due to our
aforementioned need for a $\delta$-net, as well as the requirement that our assignment be a positive semidefinite
operator. We address the latter issue by extending the techniques of~\cite{AKK99} to the realm of positive semidefinite
programs by introducing the notion of ``degree-$k$ inner products'' over Hermitian operators to generalize the concept of degree-$k$ polynomials over real numbers, and performing the more complex analysis that ensues.
We hope that this technique will be of much wider applicability, particularly considering the growing use of semidefinite programs in numerous areas of quantum computing and information (e.g.~\cite{doherty04a,JJUW10,LMRS10}).

Another open question is whether similar ideas can be used to approximate other QMA-complete problems,
such as the Consistency problem~\cite{L06}. Moreover, can we obtain polynomial time algorithms without the denseness
assumption? And are there special cases of the local Hamiltonian problem for which there is a PTAS (other than for planar
Ising spin glasses \cite{BBT09})? Of course, we do not expect a PTAS for all instances of the local
Hamiltonian problem, as this would contradict known hardness results for special classical cases of the problem.
However, perhaps there exist other classes of physically relevant instances of the problem for which a PTAS does exist. Finally, can our scheme be extended to work with more general classes of quantum assignments than product states, such as Matrix Product States~\cite{V03}?

\paragraph{Organization of this chapter:} In Section~\ref{1_scn:prodratio}, we prove our result on product state approximations (Theorem~\ref{1_thm:approxguarantee} and the ensuing proof of Theorem~\ref{1_thm:approxprodstate}), show its tightness in the $2$-local case and provide the upper bound of $d^{-\lfloor k/2\rfloor}$ for the best possible
approximation by product states. Section~\ref{1_scn:sepopt} gives our polynomial time approximation algorithm and develops the
general sampling and SDP-based technique we use. It also shows that the dense local Hamiltonian problem remains
QMA-complete. As some of the proofs and notation of Section~\ref{1_scn:sepopt} are rather technical, we have deferred the full proofs of this section to Section~\ref{1_app:A} in order to facilitate reading.


\section{Product states yield a $1/d^{k-1}$-approximation for qudits}\label{1_scn:prodratio}

We now show that product state assignments achieve a non-trivial approximation ratio for \klh, i.e.\ Theorem~\ref{1_thm:approxprodstate}. To do so, we first define the \emph{recursive Schmidt decomposition} (RSD, Definition~\ref{1_def:RSD}) of a state $\ket{\psi}\in(\complex^d)^{\otimes n}$, and for ease of exposition, the corresponding notion of a \emph{Schmidt cut} (Definition~\ref{1_def:SC}). We then state and prove the key to our approach, the \emph{Mixing Lemma} (Lemma~\ref{1_lem:mixinglemma}), which shows how to use the RSD to eliminate the entanglement across a particular Schmidt cut of $\ket{\psi}$ while maintaining the desired approximation ratio. Lemma~\ref{1_cor:mixinglemma} and Theorem~\ref{1_thm:approxguarantee} then expand on this by showing how to apply the Mixing Lemma to multiple Schmidt cuts. From Theorem~\ref{1_thm:approxguarantee}, a proof of Theorem~\ref{1_thm:approxprodstate} easily follows. We close with a discussion of the tightness of the approximation ratio given by Theorem~\ref{1_thm:approxprodstate}.

We first define the terms Recursive Schmidt Decomposition and Schmidt cut.

\begin{definition}[Recursive Schmidt Decomposition (RSD)]\label{1_def:RSD} \textup{Given a state $\ket{\psi}\in(\complex^d)^{\otimes n}$, we define its \emph{recursive} \emph{Schmidt} \emph{decomposition} as the expression obtained by recursively
applying the Schmidt decomposition on each qudit from $1$ to $n - 1$ inclusive. More formally, we define the RSD of $\ket{\psi}$ as follows:
\begin{itemize}
    \item (Base case) If $n=1$, then $\operatorname{RSD}(\ket{\psi})=\ket{\psi}$.
    \item (Recursive case) If $n>1$, then $\operatorname{RSD}(\ket{\psi})=\sum_{i=1}^d\alpha_i\ket{\psi_i}\otimes \operatorname{RSD}(\ket{\phi_i})$, where $\ket{\psi_i}\in\complex^d$, $\ket{\phi_i}\in(\complex^d)^{\otimes n-1}$, $\sum_{i=1}^d\alpha_i^2=1$, $\set{\ket{\psi_{i}}}$ is an orthonormal basis for the first qudit of $\ket{\psi}$, and $\set{\ket{\phi_i}}$ is a set of orthonormal vectors for the remaining $n-1$ qudits of $\ket{\psi}$.
\end{itemize}
(This definition is relative to some fixed ordering of the qudits. The specific choice of ordering is unimportant in our scenario, as any decomposition output by such a process suffices to prove Theorem~\ref{1_thm:approxprodstate}.) For example, the RSD for $3$-qubit $\ket{\psi}$ is
\begin{equation}
    \ket{\psi} = \alpha_1 \ket{a_1}\otimes \left(\beta_1\ket{b_1}\ket{c_1}+\beta_2\ket{b_2}\ket{c_2}\right)+\alpha_2 \ket{a_2}\otimes (\beta^\prime_1\ket{{b^\prime}_1}\ket{{c^\prime}_1}+\beta^\prime_2\ket{{b^\prime}_2}\ket{{c^\prime}_2}),
\end{equation}
for $\alpha_1^2+\alpha_2^2=\beta_1^2+\beta_2^2={\beta^\prime_1}^2+{\beta^\prime}_2^2=1$, $\set{\ket{a_i}}_i$ an orthonormal basis for qubit $1$, $\set{\ket{b_i}}_i$ and $\set{\ket{{b^\prime}_i}}_i$ orthonormal bases for qubit $2$, and $\set{\ket{c_i}}_i$ and $\set{\ket{{c^\prime}_i}}_i$ orthonormal bases for qubit $3$.}
\end{definition}

\begin{definition}[Schmidt cut]\label{1_def:SC}
    \textup{For any $\ket{\psi}\in(\complex^d)^{\otimes n}$ with Schmidt decomposition $\ket{\psi}=\sum_{i=1}^{d} \alpha_i \ket{w_i}\ket{v_i}$, where $\alpha_i\in\reals$ with $\sum_i \alpha_i^2=1$, $\ket{w_i}\in\complex^d$ and $\ket{v_i}\in(\complex^d)^{\otimes n-1}$, and for any $\ket{\phi}\in(\complex^d)^{\otimes m}$, we refer to the expansion $\ket{\phi}\otimes\left(\sum_{i=1}^{d} \alpha_i \ket{w_i}\ket{v_i}\right)$ as the \emph{Schmidt cut} at qudit $m+1$. We say that a projector $\Pi$ \emph{crosses} this Schmidt cut if $\Pi$ acts on qudit $m+1$ and at least one qudit $i\in\set{m+2,\ldots,m+n}$.}
\end{definition}

The heart of our approach is the following Mixing Lemma, which provides, for \emph{any} assignment $\ket{\psi}\in(\complex^d)^{\otimes n}$, an explicit construction through which the entanglement across the first Schmidt cut of $\ket{\psi}$ can be eliminated, while maintaining at least a $(1/d)$-approximation ratio relative to the value $\ket{\psi}$ achieves against any local Hamiltonian $H\in \HH((\complex^d)^{\otimes n})$.

\begin{lemma}[Mixing Lemma]\label{1_lem:mixinglemma}
Given state $\ket{\psi}$ on $n$ qudits with Schmidt cut on qudit $1$ given by $\ket{\psi} = \sum_{i=1}^{d} \alpha_i \ket{w_i}\ket{v_i}$, where $\alpha_i\in\reals$ with $\sum_i \alpha_i^2=1$, $\ket{w_i}\in\complex^d$ and $\ket{v_i}\in(\complex^d)^{\otimes n-1}$, define $\rho := \sum_{i=1}^{d}\alpha_i^2\ketbra{w_i}{w_i}\otimes\ketbra{v_i}{v_i}$. Then, given projector $\Pi$ acting on some subset $\mathcal{S}$ of the qudits, if $\Pi$ crosses the Schmidt cut, then $\trace(\Pi\rho)\geq\frac{1}{d}\trace(\Pi\ketbra{\psi}{\psi})$. Otherwise, $\trace(\Pi\rho)=\trace(\Pi\ketbra{\psi}{\psi})$.
\end{lemma}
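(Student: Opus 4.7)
The plan is to split into the two cases in the lemma statement and handle the easier (non-crossing) case by direct computation, then attack the crossing case via a two-step Cauchy--Schwarz argument on the Gram-type matrix induced by $\Pi$ in the Schmidt basis. The main conceptual shift is to recognize that both $\ket{\psi}$ and $\rho$ live in the $d$-dimensional subspace $\spa{W} := \operatorname{span}\set{\ket{e_i}}_{i=1}^d$, where $\ket{e_i}:=\ket{w_i}\ket{v_i}$, and that $\rho$ is obtained from $\ketbra{\psi}{\psi}$ by dephasing in this orthonormal basis.

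First I would dispose of the non-crossing case. If $\Pi$ acts only on qudit~$1$, write $\Pi = \Pi_1 \otimes I$; orthonormality of $\set{\ket{v_i}}$ kills all cross terms in $\bra{\psi}\Pi\ket{\psi}$ and leaves exactly $\sum_i \alpha_i^2 \bra{w_i}\Pi_1\ket{w_i}$, which is also what $\trace(\Pi\rho)$ evaluates to. If instead $\Pi$ acts only on qudits $\set{2,\ldots,n}$, the symmetric argument using orthonormality of $\set{\ket{w_i}}$ yields the same equality. This establishes the second assertion and uses nothing about $\Pi$ beyond its support structure.

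For the crossing case, define the $d\times d$ matrix $M$ by $M_{ij} := \bra{e_i}\Pi\ket{e_j}$. Since $\Pi$ is a projector (hence positive semidefinite), $M$ is also PSD, so $\abs{M_{ij}} \leq \sqrt{M_{ii} M_{jj}}$. Writing $\ve{\alpha} := (\alpha_1,\ldots,\alpha_d)^T$, the quantities of interest become
\begin{equation}
    \trace(\Pi\ketbra{\psi}{\psi}) = \ve{\alpha}^T M \ve{\alpha}, \qquad \trace(\Pi\rho) = \sum_i \alpha_i^2 M_{ii}.
\end{equation}
Applying the PSD bound on off-diagonal entries and then ordinary Cauchy--Schwarz on the vectors $(\alpha_i)$ and $(\alpha_i\sqrt{M_{ii}})$ gives
\begin{equation}
    \ve{\alpha}^T M \ve{\alpha} \leq \Bigl(\sum_i \alpha_i \sqrt{M_{ii}}\Bigr)^{2} \leq d \cdot \sum_i \alpha_i^2 M_{ii},
\end{equation}
where the final step uses $\sum_i \alpha_i^2 = 1$. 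Rearranging yields the desired $\trace(\Pi\rho) \geq \tfrac{1}{d}\trace(\Pi\ketbra{\psi}{\psi})$.

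The main obstacle I anticipate is simply identifying the right way to package the problem so that the factor $1/d$ appears naturally rather than being conjured. Two candidate approaches that look appealing but seem not to go through cleanly are (i) writing $\rho$ as a uniform mixture of $d$ conjugations by powers of the clock operator $Z$ on qudit~$1$ (which shows $\rho = (\mathcal{D}\otimes I)(\ketbra{\psi}{\psi})$ for dephasing $\mathcal{D}$, but does not on its own give the inequality), and (ii) comparing $\rho$ to the uniform mixture $\tfrac{1}{d}\sum_i \ketbra{e_i}{e_i}$ (which gives the right factor but only for the wrong state). The Cauchy--Schwarz route above sidesteps both pitfalls by working intrinsically with $M$. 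One small subtlety worth double-checking is that the argument uses only PSD-ness of $\Pi$, so the lemma in fact holds with $\Pi$ replaced by any bounded PSD operator --- a useful observation for iterating across multiple Schmidt cuts in Lemma~\ref{1_cor:mixinglemma} and Theorem~\ref{1_thm:approxguarantee}.
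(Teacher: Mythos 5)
Your proof is correct and takes essentially the same route as the paper's: both expand $\bra{\psi}\Pi\ket{\psi}$ in the basis $\set{\ket{w_i}\ket{v_i}}$, identify the diagonal part with $\trace(\Pi\rho)$, and bound the cross terms by the diagonal ones via Cauchy--Schwarz on the vectors $\sqrt{\Pi}\,\alpha_i\ket{w_i}\ket{v_i}$ --- the paper packages this pairwise as $\braket{a_i}{a_j}+\braket{a_j}{a_i}\leq\enorm{\ket{a_i}}^2+\enorm{\ket{a_j}}^2$ with $\ket{a_i}:=\alpha_i\Pi\ket{w_i}\ket{v_i}$, which is your Gram-matrix bound combined with AM--GM, and your closing observation that only positive semidefiniteness of $\Pi$ is needed is accurate. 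One small slip: the final Cauchy--Schwarz should pair the all-ones vector with $(\abs{\alpha_i}\sqrt{M_{ii}})_i$, so the factor $d$ comes from $\sum_i 1 = d$ rather than from $\sum_i\alpha_i^2=1$ (and the $\alpha_i$ should carry absolute values unless you fix the usual convention that Schmidt coefficients are non-negative).
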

\begin{proof}
    Case $2$ follows easily by noting that the given Schmidt decomposition of $\ket{\psi}$ implies $\trace_1(\rho)=\trace_1(\ketbra{\psi}{\psi})$ and $\trace_{2,\ldots,n}(\rho)=\trace_{2,\ldots,n}(\ketbra{\psi}{\psi})$. To prove case $1$, we observe by straightforward expansion that
    \begin{equation}
        \trace(\Pi\ketbra{\psi}{\psi}) = \trace(\Pi\rho) + \sum_{i< j}\alpha_i\alpha_j\bra{w_i}\bra{v_i}\Pi\ket{w_j}\ket{v_j} + \alpha_i\alpha_j\bra{w_j}\bra{v_j}\Pi\ket{w_i}\ket{v_i}.\label{1_eqn:mixlem1}
    \end{equation}
    Then, by defining for each $i$ vector $\ket{a_{i}} := \alpha_i\Pi\ket{w_i}\ket{v_i}$, we have
    \begin{equation}
        \sum_{i< j}\alpha_i\alpha_j\bra{w_i}\bra{v_i}\Pi\ket{w_j}\ket{v_j} + \alpha_i\alpha_j\bra{w_j}\bra{v_j}\Pi\ket{w_i}\ket{v_i}= \sum_{i<j}\braket{a_{i}}{a_{j}}+\braket{a_{j}}{a_{i}},
    \end{equation}
    since $\Pi^2=\Pi$. Applying the fact that $\braket{a}{b} + \braket{b}{a} \leq \enorm{\ket{a}}^2 + \enorm{\ket{b}}^2$ for $\ket{a},\ket{b}\in(\complex^d)^{\otimes n}$ thus implies
    \begin{equation}
        \sum_{i<j}\braket{a_{i}}{a_{j}}+\braket{a_{j}}{a_{i}}\leq\sum_{i< j}\enorm{\ket{a_{i}}}^2 + \enorm{\ket{a_{j}}}^2
        =(d-1)\sum_{i}\alpha_i^2\bra{w_i}\bra{v_i}\Pi\ket{w_i}\ket{v_i}
        =(d-1)\trace(\Pi\rho),
    \end{equation}
    from which the claim follows.
\end{proof}

The following simple extension of Lemma~\ref{1_lem:mixinglemma} simplifies our proof of Theorem~\ref{1_thm:approxguarantee}.

\begin{corollary}\label{1_cor:mixinglemma}
    Define $\ket{\psi^\prime}:=\ket{\phi}\otimes \ket{\psi}$, where $\ket{\phi}\in(\complex^d)^{\otimes m}$ for $m>0$ and $\ket{\psi}$ is defined as in Lemma~\ref{1_lem:mixinglemma}, and let $\rho\in \DD(\complex^d)^{\otimes n}$ be obtained from $\ket{\psi}$ as in Lemma~\ref{1_lem:mixinglemma}. Then, for any projector $\Pi$ acting on a subset $\mathcal{S}$ of the qudits, if $\Pi$ crosses the Schmidt cut of $\ket{\psi^\prime}$ at qudit $m+1$, we have  $\trace(\Pi\ketbra{\phi}{\phi}\otimes\rho)\geq\frac{1}{d}\trace(\Pi\ketbra{\psi^\prime}{\psi^\prime})$. Otherwise, $\trace(\Pi\ketbra{\phi}{\phi}\otimes\rho)=\trace(\Pi\ketbra{\psi^\prime}{\psi^\prime})$.
\end{corollary}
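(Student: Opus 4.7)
The plan is to reduce the corollary directly to the Mixing Lemma (Lemma~\ref{1_lem:mixinglemma}) by observing that $\ket{\psi'}$ admits a Schmidt-like decomposition $\ket{\psi'} = \sum_{i=1}^d \alpha_i \ket{\phi}\ket{w_i} \otimes \ket{v_i}$, whose rank is still at most $d$ because $\ket{\phi}$ is fixed and $\{\ket{w_i}\}$, $\{\ket{v_i}\}$ are each orthonormal. So the vectors $\{\ket{\phi}\ket{w_i}\}_{i=1}^d$ on the first $m+1$ qudits are also orthonormal, and we are exactly in the setting of the Mixing Lemma, just with the role of ``qudit 1'' played by the grouped system of qudits $\{1,\ldots,m+1\}$. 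The proof of Lemma~\ref{1_lem:mixinglemma} never uses that this system has local dimension $d$; it only uses that the Schmidt rank is at most $d$, which determines the combinatorial factor $(d-1)$ via $\sum_{i<j}(\|a_i\|^2 + \|a_j\|^2) = (d-1)\sum_i \|a_i\|^2$.

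I would split the argument into two cases paralleling Lemma~\ref{1_lem:mixinglemma}. First, when $\Pi$ does not cross the cut at qudit $m+1$, it either fails to act on qudit $m+1$ or fails to act on any qudit in $\{m+2,\ldots,m+n\}$. In both sub-cases one can trace out the ``uninvolved'' side of the cut in $\ket{\psi}$ and observe that the reduced states of $\ketbra{\psi}{\psi}$ and $\rho$ agree on that side (by the standard Schmidt decomposition identity $\trace_1(\ketbra{\psi}{\psi}) = \sum_i \alpha_i^2 \ketbra{v_i}{v_i} = \trace_1(\rho)$ and likewise for $\trace_{2,\ldots,n}$); equality of $\trace(\Pi\, \ketbra{\phi}{\phi}\otimes\ketbra{\psi}{\psi})$ and $\trace(\Pi\, \ketbra{\phi}{\phi}\otimes\rho)$ then follows immediately.

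Second, when $\Pi$ crosses the cut at qudit $m+1$, I would repeat the argument in the proof of Lemma~\ref{1_lem:mixinglemma} verbatim, but with $\ket{a_i} := \alpha_i \Pi \bigl(\ket{\phi}\ket{w_i}\ket{v_i}\bigr)$. Expanding $\trace(\Pi\, \ketbra{\psi'}{\psi'})$ in the Schmidt-like basis yields a diagonal contribution of exactly $\trace(\Pi\, \ketbra{\phi}{\phi}\otimes\rho)$ plus off-diagonal cross terms $\sum_{i<j}(\braket{a_i}{a_j}+\braket{a_j}{a_i})$; applying $\braket{a}{b}+\braket{b}{a}\le \|a\|^2+\|b\|^2$ together with $\Pi^2=\Pi$ bounds these cross terms by $(d-1)\trace(\Pi\, \ketbra{\phi}{\phi}\otimes\rho)$, giving the desired factor of $1/d$. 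The orthonormality of the two sets $\{\ket{\phi}\ket{w_i}\}$ and $\{\ket{v_i}\}$ across the cut is what makes this expansion clean.

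The only subtlety I anticipate is bookkeeping around the precise definition of ``crosses'': the corollary's condition (acts on qudit $m+1$ together with at least one qudit in $\{m+2,\ldots,m+n\}$) is strictly more restrictive than the analogous condition across the grouped cut $\{1,\ldots,m+1\}:\{m+2,\ldots,m+n\}$. I would therefore handle the ``missing'' sub-case (when $\Pi$ acts on some qudit in $\{1,\ldots,m\}$ and some qudit in $\{m+2,\ldots,m+n\}$ but not on qudit $m+1$) inside the non-crossing branch, where tracing out qudit $m+1$ reduces $\ketbra{\psi'}{\psi'}$ to $\ketbra{\phi}{\phi}\otimes\trace_1(\ketbra{\psi}{\psi})$, which equals $\ketbra{\phi}{\phi}\otimes\trace_1(\rho)$, yielding equality and not merely the $1/d$ bound.
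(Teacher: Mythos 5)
Your proposal is correct and follows essentially the same route as the paper: the paper's proof likewise reruns Lemma~\ref{1_lem:mixinglemma} with $\ket{a_i}:=\alpha_i\Pi\ket{\phi}\ket{w_i}\ket{v_i}$ and disposes of the non-crossing case (including the sub-case where $\Pi$ touches $\set{1,\ldots,m}$ and $\set{m+2,\ldots,m+n}$ but not qudit $m+1$) by noting $\trace_{m+1}(\ketbra{\phi}{\phi}\otimes\rho)=\trace_{m+1}(\ketbra{\psi'}{\psi'})$ via the product structure. Your explicit handling of that ``missing'' sub-case is exactly the bookkeeping the paper's modification (2) performs.
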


\begin{proof}
     Immediate by applying the proof of Lemma~\ref{1_lem:mixinglemma} with the following modifications: (1) Define $\ket{a_{i}} := \alpha_i\Pi\ket{\phi}\ket{w_i}\ket{v_i}$, and (2) if $\mathcal{S}\subseteq\set{1,\ldots,m}\cup\set{m+2,\ldots,m+n}$ (i.e.\ this is one of two ways for $\Pi$ not to cross the cut --- the other way is for $\mathcal{S}\subseteq \set{1,\ldots,m+1}$), observe that by the same arguments as in Lemma~\ref{1_lem:mixinglemma} for case $2$ and the product structure between $\ket{\phi}$ and $\ket{\psi}$ in $\ket{\psi^\prime}$ that $\trace_{m+1}(\ketbra{\phi}{\phi}\otimes\rho)=\trace_{m+1}(\ketbra{\psi^\prime}{\psi^\prime})$.
\end{proof}

Lemma~\ref{1_lem:mixinglemma} shows that the state $\rho$ obtained by \emph{mixing} the $d$ Schmidt vectors of $\ket{\psi}$, as opposed to taking their \emph{superposition}, suffices to achieve a $(1/d)$-approximation across the first Schmidt cut. By iterating this argument over \emph{all} $n-1$ Schmidt cuts, we now prove that a mixture of all (product) states appearing in the RSD of $\ket{\psi}$ achieves an approximation ratio of $1/d^{k-1}$.
\begin{theorem}\label{1_thm:approxguarantee}
    For any $n$-qudit assignment $\ket{\psi}$ with RSD $\ket{\psi}=\sum_{i=1}^{d^{n-1}}\sqrt{p_i}\ket{\phi_i}$, where $\sum_i p_i = 1$ and $\set{\ket{\phi_i}}_{i=1}^{d^{n-1}}$ is a set of orthonormal product vectors in $(\complex^d)^{\otimes n}$, define $\rho := \sum_{i=1}^{d^{n-1}} p_i \ketbra{\phi_i}{\phi_i}$. Then, for any projector $\Pi$ acting on some subset $\mathcal{S}\subseteq\set{1,\ldots,n}$ of qudits with $\abs{\mathcal{S}}= k$, we have
$        \trace(\Pi\rho)\geq\frac{1}{d^{k-1}}\trace(\Pi\ketbra{\psi}{\psi})$.
\end{theorem}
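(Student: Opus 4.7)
The plan is to prove Theorem~\ref{1_thm:approxguarantee} by induction on $n$, leveraging the Mixing Lemma (Lemma~\ref{1_lem:mixinglemma}) at the first Schmidt cut and then peeling off qudit~1 to apply the inductive hypothesis to $\ket{v_i}$ on $n-1$ qudits. The base case $n=1$ is immediate since $\rho=\ketbra{\psi}{\psi}$ and $k\leq 1$, forcing the ratio $1/d^0=1$.

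For the inductive step, write $\ket{\psi}=\sum_{i=1}^d \alpha_i\ket{w_i}\otimes\ket{v_i}$ for the first Schmidt cut, and let each $\ket{v_i}$ have RSD with associated mixture $\sigma_i$ on qudits $\{2,\ldots,n\}$. By construction of the RSD, $\rho=\sum_i \alpha_i^2 \ketbra{w_i}{w_i}\otimes\sigma_i$. Introduce the intermediate ``single-cut mixture'' $\tau:=\sum_i \alpha_i^2 \ketbra{w_i}{w_i}\otimes\ketbra{v_i}{v_i}$; by Lemma~\ref{1_lem:mixinglemma}, $\trace(\Pi\tau)\geq d^{-c_1}\trace(\Pi\ketbra{\psi}{\psi})$ with $c_1=1$ if $\Pi$ crosses cut~1 and $c_1=0$ otherwise. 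The strategy is then to bound $\trace(\Pi\rho)$ in terms of $\trace(\Pi\tau)$ using the inductive hypothesis applied to each $\ket{v_i}$, and combine with the Mixing Lemma bound.

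I would split into three cases. If $1\notin\mathcal{S}$, then $\Pi$ does not cross cut~1 ($c_1=0$), $\Pi$ acts on $k$ qudits of $\{2,\ldots,n\}$, and the inductive hypothesis gives $\trace(\Pi\sigma_i)\geq d^{-(k-1)}\trace(\Pi\ketbra{v_i}{v_i})$, which, averaged over $\alpha_i^2$, yields $\trace(\Pi\rho)\geq d^{-(k-1)}\trace(\Pi\tau)=d^{-(k-1)}\trace(\Pi\ketbra{\psi}{\psi})$. If $\mathcal{S}=\{1\}$, then both $\rho$ and $\ketbra{\psi}{\psi}$ have the same reduced state on qudit~1 by the Schmidt decomposition, so equality holds with ratio $d^0=1$. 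The remaining and most delicate case is $1\in\mathcal{S}$ with $k\geq 2$, where $\Pi$ crosses cut~1 ($c_1=1$) but does \emph{not} factor as a tensor product across the cut; this is the main obstacle.

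To handle the main obstacle, I would ``slice'' $\Pi$ by each Schmidt vector $\ket{w_i}$ on qudit~1, defining $M_i:=(\bra{w_i}\otimes I)\Pi(\ket{w_i}\otimes I)$. Then $M_i$ is positive semidefinite with $0\preceq M_i\preceq I$ and acts nontrivially only on $\mathcal{S}':=\mathcal{S}\setminus\{1\}$, a set of $k-1$ qudits; write $M_i=N_i\otimes I$ with $N_i$ on $\mathcal{S}'$, and take the spectral decomposition $N_i=\sum_\ell \mu_\ell\ketbra{\xi_\ell}{\xi_\ell}$ with $\mu_\ell\in[0,1]$ and each $\ketbra{\xi_\ell}{\xi_\ell}$ a rank-one projector on $k-1$ qudits. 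The inductive hypothesis applies to each such projector on $\ket{v_i}$ (which lives on $n-1$ qudits), giving $\trace(\ketbra{\xi_\ell}{\xi_\ell}\sigma_i)\geq d^{-(k-2)}\trace(\ketbra{\xi_\ell}{\xi_\ell}\ketbra{v_i}{v_i})$. Taking the convex combination with weights $\mu_\ell$ and averaging over $i$ with weights $\alpha_i^2$ yields $\trace(\Pi\rho)=\sum_i\alpha_i^2\trace(M_i\sigma_i)\geq d^{-(k-2)}\sum_i\alpha_i^2\trace(M_i\ketbra{v_i}{v_i})=d^{-(k-2)}\trace(\Pi\tau)$. Combining with $\trace(\Pi\tau)\geq d^{-1}\trace(\Pi\ketbra{\psi}{\psi})$ from the Mixing Lemma produces the desired factor $d^{-(k-1)}$, completing the induction. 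The key insight making this work is that $\Pi$ crosses exactly one Schmidt cut per element of $\mathcal{S}\setminus\{\max\mathcal{S}\}$, so the total loss across all cuts is $d^{-(k-1)}$, which matches the claimed ratio.
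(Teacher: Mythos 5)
Your proof is correct. It differs from the paper's mainly in how the induction is organized and in the device used to recurse. The paper keeps the projector $\Pi$ fixed throughout and inducts on the depth of the recursive Schmidt decomposition: it introduces the partial mixtures $\rho^{(k)}$ obtained by mixing only the first $k$ Schmidt cuts and shows $\trace(\Pi\ketbra{\psi}{\psi})\leq d^{c_k}\trace(\Pi\rho^{(k)})$, where $c_k$ counts the crossed cuts among the first $k$; the step from level $k$ to $k+1$ is handled by Corollary~\ref{1_cor:mixinglemma}, a version of the Mixing Lemma for states of the form $\ket{\phi}\otimes\ket{\psi}$ with a product prefix $\ket{\phi}$. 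You instead induct on the number of qudits $n$, invoke the Mixing Lemma only at the first cut (to pass from $\ketbra{\psi}{\psi}$ to your intermediate $\tau$), and then push the problem into the $(n-1)$-qudit Schmidt vectors $\ket{v_i}$. The price is that $\Pi$ must be transformed: contracting it against $\ket{w_i}$ yields $M_i=(\bra{w_i}\otimes I)\Pi(\ket{w_i}\otimes I)$, which is no longer a projector, and you recover the inductive hypothesis by spectrally decomposing $M_i$ into rank-one projectors and using linearity --- valid precisely because the same $\sigma_i$ works for \emph{every} projector, so the bound extends to all $0\preceq M_i\preceq I$ (indeed to all positive semidefinite $M_i$). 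The paper avoids this extension by never modifying $\Pi$, at the cost of proving the product-prefix corollary instead. Both arguments ultimately telescope the same $1/d$ penalty once per crossed Schmidt cut, of which there are exactly $k-1$, and your three-case split ($1\notin\mathcal{S}$, $\mathcal{S}=\{1\}$, and $1\in\mathcal{S}$ with $k\geq 2$) correctly accounts for when that penalty is incurred.
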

\begin{proof}
 Let $\Pi$ be a projector with $\abs{\mathcal{S}}= k$, and define $\ve{c}\in\set{0,1}^{n-1}$ such that $\ve{c}(j)=1$ iff $\Pi$ crosses the Schmidt cut at qudit $j$. For example, if $\Pi$ acts on qudits $\set{1,2}$, then $\ve{c}=(1,0,\ldots,0)$. Note that in general $\onorm{\ve{c}}=k-1$. Let $\ket{\psi_k}$ denote the expression obtained by taking the RSD of $\ket{\psi}$ up to the $k$th level of recursion for $1\leq k\leq n-1$, i.e. $\ket{\psi_k}$ can be written
  \begin{equation}
    \ket{\psi_k}=\sum_{i=1}^{d^k}\alpha_i \ket{\psi_i^1}\otimes\cdots\otimes\ket{\psi_i^k}\otimes\ket{\phi_i},
  \end{equation}
  where $\ket{\psi_i^j}\in\complex^d$ and $\ket{\phi_i}\in(\complex^d)^{\otimes n-k}$. (We assume $n\geq 2$, as otherwise the claim is vacuously true.) Corresponding to $\ket{\psi_k}$, define
  \begin{equation}\label{1_eqn:rhok}
    \rho^{(k)} := \sum_{i=1}^{d^k}\alpha_i^2\ketbra{\psi_i^1}{\psi_i^1}\otimes\cdots\otimes\ketbra{\psi_i^k}{\psi_i^k}\otimes\ketbra{\phi_i}{\phi_i}.
  \end{equation}
  Define $c_k:=\sum_{i=1}^k \ve{c}(i)$. To prove our claim, we show by induction that for all $1\leq k\leq n-1$, it holds that
\begin{equation}\label{1_eqn:hypothesis}
   \trace(\Pi\ketbra{\psi}{\psi})\leq d^{c_k} \trace(\Pi\rho^{(k)}).
\end{equation}
Note that the case $k=n-1$ is in particular the case we are interested in.

For the base case, let $k=1$. Consider first the Schmidt cut of $\ket{\psi}$ at qudit $1$, i.e.\ $\ket{\psi} = \sum_{i=1}^{d} \alpha_i \ket{\psi_i^1}\ket{\phi_i}$, for $\ket{\psi_i^1}\in\complex^d$ and $\ket{\phi_i}\in(\complex^d)^{\otimes n-1}$. Then, recalling that $\rho^{(1)} = \sum_{i=1}^{d}\alpha_i^2\ketbra{\psi_i^1}{\psi_i^1}\otimes\ketbra{\phi_i}{\phi_i}$, we have by Lemma~\ref{1_lem:mixinglemma} that
\begin{equation}
   \trace(\Pi\ketbra{\psi}{\psi})\leq d^{\ve{c}(1)} \trace(\Pi\rho^{(1)}),\label{1_eqn:recurse1}
\end{equation}
as desired.

For the inductive step, assume the inductive hypothesis holds for some $1\leq k\leq n-2$. We prove the claim holds for $k+1$. Note that by Equation~(\ref{1_eqn:hypothesis}), which holds due to the induction hypothesis for our specific value of $k$, it suffices to show that
\begin{equation}\label{1_eqn:goal}
    \trace(\Pi\rho^{(k)})\leq d^{\ve{c}(k+1)}\trace(\Pi\rho^{(k+1)}),
\end{equation}
since $d^{c_k+\ve{c}(k+1)}=d^{c_{k+1}}$.
To show this holds, consider the $i$th term in Equation~(\ref{1_eqn:rhok}), $\ketbra{\psi_i^1}{\psi_i^1}\otimes\cdots\otimes\ketbra{\psi_i^k}{\psi_i^k}\otimes\ketbra{\phi_i}{\phi_i}$, for arbitrary $1\leq i\leq d^k$. Observe this term satisfies the preconditions for Corollary~\ref{1_cor:mixinglemma} with $m=k$. Hence, via Corollary~\ref{1_cor:mixinglemma} there exists a state $\sigma_i$ acting on qudits $\set{k+1,\ldots,n}$ such that
\begin{equation}
\trace(\Pi\ketbra{\psi_i^1}{\psi_i^1}\otimes\cdots\otimes\ketbra{\psi_i^k}{\psi_i^k}\otimes\ketbra{\phi_i}{\phi_i})\leq d^{\ve{c}(k+1)}\trace(\Pi\ketbra{\psi_i^1}{\psi_i^1}\otimes\cdots\otimes\ketbra{\psi_i^k}{\psi_i^k}\otimes\sigma_i).
\end{equation}
Moreover, since $\sigma_i$ in Corollary~\ref{1_cor:mixinglemma} is obtained via the Mixing Lemma (Lemma~\ref{1_lem:mixinglemma}), by linearity we can express $\rho^{(k+1)}$ as
\begin{equation}
    \rho^{(k+1)}=\sum_{i=1}^{d^k}\alpha_i^2\ketbra{\psi_i^1}{\psi_i^1}\otimes\cdots\otimes\ketbra{\psi_i^k}{\psi_i^k}\otimes\sigma_i.
\end{equation}
We conclude by linearity that Equation~(\ref{1_eqn:goal}) holds, completing the proof.
\end{proof}

With Theorem~\ref{1_thm:approxguarantee} in hand, we can now show Theorem~\ref{1_thm:approxprodstate}, i.e.\ that product states achieve approximation ratio $1/d^{k-1}$.

\begin{proof}(Theorem~\ref{1_thm:approxprodstate}) Simply apply Theorem~\ref{1_thm:approxguarantee} to each projector in the spectral decompositions of each (positive semidefinite) $H_i$ in our \klh~instance $H=\sum_i H_i$, and let $\ket{\psi}$ denote the optimal assignment for $H$. It is important to note that we can exploit Theorem~\ref{1_thm:approxguarantee} in this fashion due to the fact that the $\rho$ constructed by Theorem~\ref{1_thm:approxguarantee} is \emph{independent} of the projector $\Pi$ --- i.e.\ for any fixed $\ket{\psi}$ and $k$, the state $\rho$ provides the same approximation ratio against \emph{any} $k$-local projector $\Pi$ encountered in the spectral decompositions of the $H_i$. Finally, note that one can find a \emph{pure} product state achieving this approximation guarantee since $\rho$ is a convex mixture of pure product states.
\end{proof}

\paragraph{Upper bound of $d^{-\lfloor \frac{k}{2}\rfloor}$ for product state approximations.}
Is the result of Theorem~\ref{1_thm:approxprodstate} tight? In the case of MAX-$2$-local Hamiltonian on qudits, yes --- consider a single clause projecting onto the maximally entangled state $\frac{1}{\sqrt{d}}\sum_i \ket{ii}$, for which a product state achieves value at most $1/d$. On the other hand, for MAX-$3$-local Hamiltonian on qubits, the worst case clause for a $3$-qubit product state assignment is the projector onto the state $\ket{W}=\frac{1}{\sqrt{3}}(\ket{001}+\ket{010}+\ket{100})$~\cite{TWP09}. But here product states achieve value $4/9$~\cite{WG03}, implying the bound of $1/4$ from Theorem~\ref{1_thm:approxprodstate} is not tight.

An upper bound on the true optimal ratio of $8k^2/(2^{k})$ is implied by Theorem~2 of~\cite{GFE09} for the case where $d=2$ and $k\geq 11$. For general $d$ and $k$, a simple construction shows that the optimal ratio is upper bounded by $d^{-\lfloor\frac{k}{2}\rfloor}$. To see this, consider a single clause which
is the tensor product of maximally entangled bipartite states (for odd $k$, we assume the odd qudit out projects onto the identity). For example, for $n=4$, consider the clause
$\ketbra{\phi^+}{\phi^+}\otimes\ketbra{\phi^+}{\phi^+}$, where
$\ket{\phi^+}=\frac{1}{\sqrt{2}}(\ket{00}+\ket{11})$. The maximum value
a product state can attain is $1/4$, as claimed. In the qubit setting ($d=2$), one can further improve this construction for odd $k$ by replacing the term $\ketbra{\phi^+}{\phi^+}\otimes I$ on the last three qubits with $\ketbra{W}{W}$. For example, for $k=5$, setting our instance to be the clause $\ketbra{\phi^+}{\phi^+}\otimes\ketbra{W}{W}$ yields an upper bound of $(1/2)(4/9)=2/9<1/4=d^{-\lfloor\frac{k}{2}\rfloor}$ (where we again use the value $4/9$ for $\ket{W}$ from the previous paragraph). For general odd $k>1$, this improved bound generalizes to $2^{\frac{-k+7}{2}}/9$.

\section{Optimizing over the set of separable states}\label{1_scn:sepopt}
Section~\ref{1_scn:prodratio} showed that there always \emph{exists} a product state assignment achieving a certain non-trivial approximation ratio. In this section, we show how to efficiently \emph{find} such a product state. Our main theorem of this section is the following (Theorem~\ref{1_thm:ptas}), from which Theorem~\ref{1_thm:ptasdense} follows easily (see discussion at end of Section~\ref{1_sscn:finalptas}). As the proofs and full notation of this section are rather dense, we first discuss our results below using simplified notation and without proofs. Full proofs and technical details are deferred to Section~\ref{1_app:A}.

\begin{theorem}\label{1_thm:ptas}
    Let $H$ be an instance of \klh~acting on $n$ qudits, and let $\optprod$ denote the optimum value of $\trace(H\rho)$ over all \emph{product} states $\rho\in \DD((\complex^d)^{\otimes n})$. Then, for any fixed $\epsilon>0$, there exists a polynomial time (deterministic) algorithm which outputs $\rho_1\otimes\cdots\otimes\rho_n\in \DD((\complex^d)^{\otimes n})$ such that
$
        \trace(H\rho_1\otimes\cdots\otimes\rho_n)\geq \optprod -\epsilon n^k.
$
\end{theorem}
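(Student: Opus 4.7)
The plan is to reduce the problem of optimizing $\trace(H\rho)$ over product states $\rho = \rho_1 \otimes \cdots \otimes \rho_n$, which is naively a degree-$k$ polynomial program in the $\rho_i$, to a sequence of semidefinite programs via the Arora--Karger--Karpinski ``exhaustive sampling'' technique. The starting observation is that if we expand each local Hamiltonian $H_{i_1,\ldots,i_k}$ in a local Hermitian operator basis $\{\sigma_0,\ldots,\sigma_{d^2-1}\}$ for $\HH(\complex^d)$ (say a generalized Gell-Mann basis), then the objective function becomes a sum of products of single-site ``moments'' $m^i_{\ell} := \trace(\sigma_\ell \rho_i)$. The gradient of the objective with respect to $\rho_i$ is a linear combination of the $\sigma_\ell$'s whose coefficients $c^i_\ell$ depend \emph{only} on the moments of the \emph{other} qudits $\rho_j$, $j\neq i$. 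Thus, if an oracle supplied the correct values of all $c^i_\ell$ for the optimal product assignment, we could recover each $\rho_i$ by solving an SDP that is \emph{linear} and \emph{decoupled} across sites: maximize the linearized objective subject to $\rho_i \succeq 0$, $\trace(\rho_i) = 1$, and the consistency constraints that the induced moments match the given $c^i_\ell$.

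To eliminate the oracle, I would estimate the $c^i_\ell$ by sampling. The key quantitative point is that each $c^i_\ell$ is (up to normalization) an average of $\Theta(n^{k-1})$ terms of bounded magnitude, one per $(k-1)$-subset incident to qudit $i$; by a Hoeffding-style bound, sampling $s = \Theta(\log(n)/\epsilon^2)$ qudits uniformly at random from the optimal product assignment $\rho^*_1,\ldots,\rho^*_n$ suffices to estimate every $c^i_\ell$ simultaneously to additive error $\epsilon n^{k-1}$ with high probability. Since $s$ is logarithmic in $n$, we cannot actually sample from the unknown optimum, but we can \emph{enumerate}: iterate over all $\binom{n}{s}$ subsets $S$ of qudits, and for each $S$ iterate over all assignments of single-qudit density matrices to $S$ drawn from a fixed $\delta$-net $\mathcal{N}_\delta \subset \DD(\complex^d)$ of size $(1/\delta)^{O(d^2)}$. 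For each guess, compute the corresponding plug-in estimates $a^i_\ell$ of the $c^i_\ell$, solve the resulting SDP (with the consistency constraints relaxed to $|a^i_\ell - \sum_j \ldots | \leq \epsilon n^{k-1}$ to absorb sampling error), and keep the best product assignment produced. The total number of SDPs is $n^s \cdot (1/\delta)^{O(sd^2)} = n^{O(\log n / \epsilon^2)}$, which is quasi-polynomial; to bring this to true polynomial time, I would appeal to a derandomization of the sampling step along the lines of the $2k$-wise independent construction used by Arora et al., which reduces the search to enumeration over a polynomial-size sample space.

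The correctness argument has two layers of error that must be controlled and added up. First, one shows that \emph{if} the $a^i_\ell$ are within $\epsilon n^{k-1}$ of the true $c^i_\ell$ of the optimum, then the SDP value is within $O(\epsilon n^k)$ of $\optprod$; this uses the linearity of the objective in each $\rho_i$ together with the Cauchy--Schwarz-type bound $|m^i_\ell| \leq \sqrt{d}$. Second, one shows that with constant probability over the sample $S$ (resp.\ deterministically for some sample in the enumeration) there exists a point of $\mathcal{N}_\delta^s$ whose $a^i_\ell$ are within the required tolerance; this combines the sampling concentration bound with the $\delta$-net approximation, choosing $\delta = \Theta(\epsilon/\mathrm{poly}(k,d))$ so the $\delta$-net error is dominated by the sampling error. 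Rescaling $\epsilon$ appropriately yields the additive $\epsilon n^k$ guarantee.

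The main technical obstacle, and the place where the quantum case genuinely diverges from~\cite{AKK99}, is handling the \emph{positive semidefiniteness} constraint $\rho_i \succeq 0$ when propagating errors. In the classical scalar case one simply perturbs a real assignment, but here a perturbed density matrix need not remain in $\DD(\complex^d)$, and small errors in the linear coefficients $a^i_\ell$ could in principle be amplified by the SDP into large errors in $\rho_i$. I plan to address this by working with an appropriate $\delta$-net on $\DD(\complex^d)$ rather than on the moment vector (so that feasibility is automatic), and by phrasing the linearized objective in terms of the ``degree-$k$ inner product'' of Hermitian operators alluded to in the introduction, which lets one bound the propagation of an additive $\epsilon n^{k-1}$ error on the coefficients to an additive $O(\epsilon n^k)$ error on the objective. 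Once this propagation lemma is in place, combining it with the sampling bound, the $\delta$-net discretization of $\DD(\complex^d)$, and the derandomization of exhaustive sampling yields the theorem; Theorem 1.3 then follows from Theorem 1.2 by taking the product-state assignment guaranteed by the latter as a benchmark, so that $\optprod \geq \OPT/d^{k-1}$ and the additive error $\epsilon n^k$ translates, under the denseness hypothesis $\OPT = \Omega(n^k)$, into a multiplicative $(1/d^{k-1} - \epsilon)$-approximation.
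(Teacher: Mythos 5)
Your overall strategy is the same as the paper's: expand each local term in a Hermitian operator basis, linearize the resulting degree-$k$ objective by estimating the coefficients $c^i_\ell$ induced by the optimal product assignment, solve one SDP per guess with the consistency constraints relaxed to absorb the estimation error, discretize $\DD(\complex^d)$ with a $\delta$-net so that the guesses range over a finite set, and control the two error sources (sampling and net) via an error-propagation argument. This is exactly the route taken via the procedures EVAL, LINEARIZE and APPROXIMATE, including the point that for $k>2$ the coefficients are themselves degree-$(k-1)$ inner products and must be estimated recursively.

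The one place where your argument as written does not go through is the passage from a quasi-polynomial enumeration to a polynomial-time deterministic algorithm. First, enumerating all $\binom{n}{s}$ subsets $S$ with $s=\Theta(\log n/\epsilon^2)$ already costs $n^{\Theta(\log n)}$; this enumeration is unnecessary, since it suffices to draw a \emph{single} uniformly random multiset $S$ of $O(\log n)$ indices and enumerate only the $\abs{G}^{\abs{S}}=n^{O(\log\abs{G})}$ net assignments to it --- the Hoeffding bound plus a union bound guarantees that with constant probability this single $S$ yields good estimates for \emph{all} the $c^i_\ell$ simultaneously, so the randomized algorithm is already polynomial. Second, your proposed derandomization via $2k$-wise independence would not restore polynomial time: with a sample of only $O(\log n)$ points, $t$-wise independence for constant $t$ gives tail bounds decaying only polynomially in $1/\log n$, which is far too weak to union-bound over the $\Theta(nd^2k)$ quantities that must be estimated simultaneously, and pushing $t$ up to the $\Omega(\log n/\log\log n)$ needed for the union bound blows the sample space back up to quasi-polynomial size. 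The paper instead derandomizes the single random subset by replacing it with the indices visited on a length-$O(\log n)$ walk on a constant-degree expander; there are only polynomially many such walks, they can be enumerated deterministically, and the Chernoff-type property of expander walks supplies the exponentially small failure probability that limited independence cannot. With that substitution the rest of your argument, including the final conversion of the additive $\epsilon n^k$ guarantee into a multiplicative one under the denseness hypothesis, matches the paper's.
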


We first outline our approach by generalizing the discussion in Section~\ref{1_scn:intro}, introducing tools and notation we will require along the way. The optimal value $\optprod$ over product state assignments for any \klh~instance can be expressed as the following program, denoted $P_1$:
\begin{equation}
      \optprod\hspace{2mm} = \hspace{2mm}\max \hspace{2mm}\sum_{i_1,\ldots,i_k}^n\trace(H_{i_1,\ldots,i_k}\rho_{i_1}\otimes\cdots\otimes\rho_{i_k})
    \hspace{2mm}\mbox{s.t.}\hspace{2mm} \rho_i\succeq 0\hspace{2mm} \mbox{and}\hspace{2mm}\trace(\rho_i)=1\hspace{2mm}\mbox{for } 1\leq i\leq n.\label{1_eqn:obj}
\end{equation}

\noindent As done in Equation~(\ref{1_eqn:introdecomp}), we now recursively decompose our objective function as a sequence of nested sums. Let $\set{\sigma_i}_{i=1}^{d^2}$ be a Hermitian orthogonal basis for the set of Hermitian operators acting on $\complex^d$, such that $\trace(\sigma_i\sigma_j)=2\delta_{ij}$ (for $\delta_{ij}$ the Kroenecker delta). (See, e.g.~\cite{K03}, or Equations~(\ref{7_eqn:Ugenerators}), (\ref{7_eqn:Vgenerators}), and (\ref{7_eqn:Wgenerators}) for an explicit construction of such basis elements. We remark that there is nothing special about the normalization factor of $2$ in the term $2\delta_{ij}$ above; this value is simply consistent with the specific basis construction we have chosen to employ, which generalizes the Pauli basis for a qubit system.) Then, by rewriting each $H_{i_1,\ldots,i_k}$ in terms of $\set{\sigma_i}_{i=1}^{d^2}$, our objective function becomes
\begin{align}
    \sum_{i_k,\ldots,i_1}^n\trace\left[\left(\sum_{j_k,\ldots,j_1=1}^{d^2}r_{j_1,\ldots,j_k}^{i_1,\ldots,i_k}\sigma_{j_k}\otimes\cdots\otimes\sigma_{j_1}\right)\rho_{i_k}\otimes\cdots\otimes\rho_{i_1}\right] =\hspace{45mm}\nonumber\\
    \sum_{i_k,j_k}\trace(\sigma_{j_k}\rho_{i_k})\left[\sum_{i_{k-1},j_{k-1}}\trace(\sigma_{j_{k-1}}\rho_{i_{k-1}})\left[\cdots\left[\sum_{i_1}\trace\left(\left(\sum_{j_1}r_{j_1,\ldots,j_k}^{i_1,\ldots,i_k}\sigma_{j_1}\right)\rho_{i_1}\right)\right]\right]\right],\label{1_eqn:decomposition}
\end{align}
where each $\ve{r}^{i_1,\ldots,i_k}\in\reals^{d^2}$. We henceforth think of the objective function above as a ``degree-$k$ inner product'', i.e.\ as a sequence of $k$ nested sums involving inner products, in analogy to the degree-k polynomials of Reference~\cite{AKK99}. In this sense, a degree-$1$ inner product would refer to only the innermost sums over $i_1$ and $j_1$, and a degree-$k$ inner product would denote the entire expression in Equation~(\ref{1_eqn:decomposition}). More formally, we denote a degree-$b$ inner product for $1\leq b \leq k$ using map $t_b:\HH(\complex^d)^{\times n}\mapsto\reals$, defined such that
\begin{equation}\label{1_eqn:tb}
t_b(\rho_1,\ldots,\rho_n):=    \sum_{i_b,j_b}\trace(\sigma_{j_b}\rho_{i_b})\left[\cdots\left[\sum_{i_1}\trace\left(\left(\sum_{j_1}r_{j_1,\ldots,j_k}^{i_1,\ldots,i_k}\sigma_{j_1}\right)\rho_{i_1}\right)\right]\right].
\end{equation}
Note that $t_b$ implicitly depends on parameters $i_{b+1},\ldots, i_k$ and $j_{b+1},\ldots , j_k$. (See the beginning of Section~\ref{1_app:A} for more elaborate notation used in the proofs of the claims of Section~\ref{1_scn:sepopt}.)

Our approach is to ``linearize'' the objective function of $P_1$ using exhaustive sampling and recursion to estimate its degree-$(k-1)$ inner products. To do so, we require the Sampling Lemma.

\begin{lemma}[Sampling Lemma~\cite{AKK99}]\label{1_lem:sample}
    Let $(a_i)$ be a sequence of $n$ real numbers with $\abs{a_i}\leq M$ for all $i$, and let $f,g>0$. If we choose a multiset of $s=g\log n$ of the $a_i$ at random (with replacement), then their sum $q$ satisfies
$
        \sum_i a_i - nM\sqrt{\frac{f}{g}} \leq q\times\frac{n}{s}\leq \sum_i a_i + nM\sqrt{\frac{f}{g}}
$
    with probability at least $1-n^{-f}$.
\end{lemma}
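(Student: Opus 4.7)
The plan is to recognize this as a standard concentration-of-measure statement: we are estimating the population sum $\sum_i a_i$ by a scaled sample sum, where the sample is drawn i.i.d.\ uniformly with replacement from $\{a_1,\ldots,a_n\}$. The key insight is that the random variable $q \cdot n/s$ is an unbiased estimator of $\sum_i a_i$, so the lemma is really a tail bound on the deviation of an average of bounded i.i.d.\ random variables from its mean, which is prime territory for a Hoeffding-type inequality.

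More concretely, I would let $X_1,\ldots,X_s$ denote the $s$ samples, which are i.i.d.\ with $\Pr[X_j = a_i] = 1/n$, so $\mathbb{E}[X_j] = \frac{1}{n}\sum_i a_i$ and $X_j \in [-M,M]$. Setting $q := \sum_{j=1}^s X_j$, we have $\mathbb{E}[q \cdot n/s] = \sum_i a_i$, so the statement we want is equivalent to
\begin{equation}
\Pr\!\left[\left| q - \tfrac{s}{n}\sum_i a_i \right| \ge sM\sqrt{f/g}\right] \le n^{-f}.
\end{equation}
The first step is to apply Hoeffding's inequality to the i.i.d.\ bounded sum $q$, which yields a bound of the form $2\exp(-t^2/(2sM^2))$ for deviation $t$. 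Substituting $t = sM\sqrt{f/g}$ and using $s = g \log n$ makes the exponent collapse to a constant multiple of $-f \log n$, producing a tail bound of the form $n^{-\Theta(f)}$.

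The main ``obstacle'' is really just bookkeeping of constants: depending on whether one invokes Hoeffding with the $(\beta_j-\alpha_j)^2 = (2M)^2$ constant or a tighter Bernstein/Chernoff-style bound for symmetric bounded variables, one obtains a probability bound of $2n^{-f/2}$ rather than exactly $n^{-f}$. This is cosmetic — the constant factor of $2$ or the factor of $2$ in the exponent can be absorbed either into $g$ (take $s = 2g \log n$) or into $f$ (replace $f$ by $2f$), neither of which affects downstream use of the lemma. I would therefore present the Hoeffding step as the core argument and note explicitly that the stated bound follows after absorbing an $O(1)$ factor into the choice of $g$.

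Finally, I would verify that the two-sided bound in the lemma follows from the two-sided Hoeffding estimate (as opposed to a one-sided Chernoff bound), and remark that exactly the same argument works if sampling is performed without replacement by invoking Hoeffding's inequality for sampling without replacement, which in fact gives an even sharper bound. The whole argument is short and self-contained, and all subsequent use of the lemma in Section~\ref{1_scn:sepopt} only cares about the qualitative form ``$s = \Theta(\log n)$ samples suffice to estimate $\sum_i a_i/n$ within additive error $M\sqrt{f/g}$ with probability at least $1 - n^{-f}$,'' so the exact constants are not load-bearing.
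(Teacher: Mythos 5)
Your proposal is correct and takes essentially the same route as the paper, which gives no detailed proof at all and simply remarks that the lemma ``follows from a simple application of the H\"{o}ffding bound''; your write-up just makes that one-line citation explicit. Your observation that the raw Hoeffding constant yields a tail of the form $2n^{-\Theta(f)}$ rather than exactly $n^{-f}$, and that this is absorbed into the choice of $f$ or $g$ without affecting any downstream use, is accurate and is in fact more careful than the paper's own treatment.
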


\noindent The proof of Lemma~\ref{1_lem:sample} follows from a simple application of the H\"{o}ffding bound~\cite{H64}. To use the Sampling Lemma in conjunction with exhaustive sampling, we discretize the space of $1$-qudit density operators using a $\delta$-net $G\subseteq \HH(\complex^{d})$, such that for all $\rho\in \DD(\complex^{d})$, there exists $\sigma\in G$ such that $\fnorm{\rho-\sigma}\leq \delta$. We now show how to construct $G$.

To obtain $G$, we instead construct a $\delta$-net for a subset of $\HH(\complex^d)$ which \emph{contains} $\DD(\complex^d)$, namely the set $\mathcal{A}(\complex^d):=\set{A\in \HH(\complex^d)\mid \max_{i,j} \abs{A(i,j)}\leq 1}$. (Note: A net over $\mathcal{A}(\complex^d)$ may allow non-positive assignments for a qudit. See Section~\ref{1_sscn:finalptas} for why this is of no consequence.) Creating a $\delta$-net over $\mathcal{A}(\complex^d)$ is simple: we cast a $(\delta/d)$-net over the unit disk for each of the complex $d(d-1)/2$ matrix entries above the diagonal, and likewise over $[-1,1]$ for the entries on the diagonal. Letting $m$ and $n$ denote the minimum number of points required to create such $(\delta/d)$-nets for each of the diagonal and off-diagonal entries, respectively, we have that $\abs{G}=m^{\frac{d(d-1)}{2}}n^d$. For example, simple nets of size $m\approx d/\delta$ and $n\approx d^2/\delta^2$ can be obtained by placing a 1D and 2D grid over $[-1,1]$ and the length $2$ square in the complex plane centered at $(0,0)$, respectively, implying $\abs{G}\in O(1)$ when $d\in O(1)$. To show that $G$ is indeed a $\delta$-net, we now bound the Frobenius distance between arbitrary $\rho\in \DD(\complex^d)$ and the closest $\tilde{\rho}\in G$. (We use the Frobenius norm as it allows a simple analysis. Below, one could also consider the $l_\infty$ norm bound $\snorm{A}\leq \delta/d$, where in this context $\snorm{A}=\max_{ij}\abs{A(i,j)}$). Specifically, let $A:=\rho-\tilde{\rho}$. Then:
\begin{equation}
    \fnorm{A}=\sqrt{\trace(A^\dagger A)}=\sqrt{\sum_{ij}\abs{A(i,j)}^2}\leq\sqrt{\sum_{ij}(\delta/d)^2}=\frac{\delta}{d}(d)=\delta.
\end{equation}

Finally, we remark that our \emph{dense} assumption on \klh~instances is only necessary to convert the absolute error of Theorem~\ref{1_thm:ptas} to a relative one (this conversion is detailed in Section~\ref{1_sscn:finalptas}). A dense assumption is not needed to apply the Sampling Lemma: Specifically, observe that Lemma~\ref{1_lem:sample} assumes there are $n$ terms in the sum to be estimated, and that we are able to determine $s$ of them. Looking back at Equation~(\ref{1_eqn:introprogram}) and considering, say, qudit $i$, if we wish to use the Sampling Lemma to estimate the inner sum over neighbours $N(i)$ of $i$, we might run into a problem if $i$ does \emph{not} have $\Theta(n)$ neighbours. To circumvent this~\cite{AKK99}, observe that Lemma~\ref{1_lem:sample} only gives us an estimate to within $\pm \epsilon n$. Thus, if $N(i)\leq \epsilon n/10$ (say), then we do not use the Sampling Lemma, but rather let our estimate be simply $0$, which is guaranteed to fall within the desired error bounds (observe an estimate of $0$ does not necessarily work, on the other hand, if $N(i)$ is large (say $N(i)=n-1$), since typically $f/g<1$). Throughout the remainder of our discussion, we assume this cutoff principle is implicitly present when employing Lemma~\ref{1_lem:sample}.

The remaining sections of this chapter are organized as follows: In Section~\ref{1_sscn:recurseSample}, we show how to recursively estimate degree-$b$ inner products using the Sampling Lemma. We then use this estimation technique in Section~\ref{1_sscn:linearization} to linearize our optimization problem $P_1$. Section~\ref{1_sscn:finalptas} brings everything together by presenting and analyzing the complete approximation algorithm. All technical proofs are found in Section~\ref{1_app:A}.

\subsection{Estimating degree-$b$ inner products via sampling}\label{1_sscn:recurseSample}

Our recursive procedure, EVAL, for estimating a degree-$b$ inner product using the Sampling Lemma is stated as Algorithm~\ref{1_alg:estimate}. There are two sources of error we must analyze: the Sampling Lemma, and our $\delta$-net over $\complex^d$. We claim that EVAL estimates the degree-$b$ inner product $t_b(\rho_1,\ldots,\rho_n)$ to within additive error $\pm \epsilon_{b} n^{b}$, where $\epsilon_{b}$ is defined as follows. Set $\Delta := \sqrt{2}d(1+\delta)$, for $\delta$ from our $\delta$-net. Then,
\begin{equation}
    \epsilon_{b} := \cc \left(\sqrt{\frac{f}{g}}+\delta\right)\left(\frac{\Delta^{b}-1}{\Delta-1}\right).\label{1_eqn:alg1error}
\end{equation}
The following lemma formalizes this claim. We adopt the convention of~\cite{AKK99} and let $x\in y\pm z$ denote $x\in [y-z,y+z]$. Algorithm~\ref{1_alg:estimate} is our operator analogue of the algorithm \emph{Eval} in Section 3.3 of~\cite{AKK99}. 

\begin{figure}[t]
\noindent\rule{\linewidth}{0.3mm}
\begin{alg}\rm{EVAL( }$t_{b}$ , $S$ , $\set{\tilde{\rho}_i:i\in S}$\rm{ )}.\label{1_alg:estimate}
    \begin{itemize}
        \item Input:\hspace{3mm}(1) A degree-$b$ inner product $t_{b}:\HH(\complex^d)^{\times n}\mapsto\reals$ for $1\leq b\leq k$\\
        \mbox{\hspace{14mm}}(2) A subset $S\subseteq\set{1,\ldots,n}$ of size $\abs{S}=O(\log n)$\\
         \mbox{\hspace{14mm}}(3) Sample points $\set{\tilde{\rho}_i:i\in S}$ such that $\fnorm{\tilde{\rho}_i-\rho_i}\leq \delta$ for all $i\in S$
        \item Output: $x\in\reals$ such that $x\in t_b(\rho_1,\ldots,\rho_n) \pm \epsilon_b n^{b}$ (for $\epsilon_b$ defined in Equation~(\ref{1_eqn:alg1error})).
    \end{itemize}
    \begin{compactenum}
        \item (Base Case) If $b=1$, return $\frac{n}{\abs{S}}\sum_{i_1\in S}\trace\left(\left(\sum_{j_1=1}^{d^2}r_{j_1,\ldots,j_k}^{i_1,\ldots,i_k}\sigma_{j_1}\right)\rho_{i_1}\right)$. (Note this return value depends on $i_2,\ldots,i_k,j_2,\ldots,j_k$, which are assumed to have a fixed value in the current recursive call to EVAL.)
        \item (Recurse) \hspace{3mm}For all $i\in S$ and $j= 1\ldots d^2$, set $e_{ij}$ = EVAL($t_{b-1}^{ij},S,\set{\tilde{\rho}_i:i\in S})$, where $t_{b-1}^{ij}$ is the term to the right of $\trace(\sigma_{j_b}\rho_{i_b})$ in Equation~(\ref{1_eqn:tb}).
        \item Return $\frac{n}{\abs{S}}\sum_{i\in S}\left[\sum_{j=1}^{d^2}\trace(\sigma_{j}\tilde{\rho}_{i})e_{ij}\right]$.
    \end{compactenum}
\end{alg}
\noindent\rule{\linewidth}{0.3mm}
\end{figure}

\begin{lemma}\label{1_lem:evalbound}
    Let $t_{k}:\HH(\complex^k)^{\times n}\mapsto\reals$ be defined using set $\set{H_{i_1,\ldots,i_k}}\subseteq \HH((\complex^d)^{\otimes k})$ (as in Equation~(\ref{1_eqn:decomposition})). Let $S\subseteq \set{1,\ldots , n}$ such that $\abs{S}=g\log n$ have its elements chosen uniformly at random with replacement. Let $\rho_1,\ldots ,\rho_n \in \DD(\complex^d)$ be some assignment on all $n$ qudits, and $\set{\tilde{\rho}_i:i\in S}$ a set of elements in our $\delta$-net such that $\fnorm{\rho_i-\tilde{\rho}_i}\leq \delta$ for all $i\in S$. Then, for $1\leq b\leq k$, with probability at least $1-d^{2b}n^{b-f}$, we have
$
        \operatorname{EVAL}(t_{b},S,\set{\tilde{\rho}_i:i\in S})\in t_b(\rho_1,\ldots,\rho_n) \pm \epsilon_{b} n^{b},
$
    where $\epsilon_{b}$ is defined as in Equation~(\ref{1_eqn:alg1error}).
\end{lemma}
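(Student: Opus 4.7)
The plan is to induct on $b$, treating the $\delta$-net substitution, outer sampling, and recursive sub-call errors as three compounding sources that combine into a linear recurrence $\epsilon_b \leq \Delta \epsilon_{b-1} + \cc(\sqrt{f/g}+\delta)$, whose solution (via the standard geometric-series formula) is precisely the claimed closed form $\cc(\sqrt{f/g}+\delta)(\Delta^b - 1)/(\Delta - 1)$.

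For the base case $b=1$, EVAL returns $(n/\abs{S})\sum_{i_1\in S}\trace((\sum_{j_1}r^{i_1,\ldots,i_k}_{j_1,\ldots,j_k}\sigma_{j_1})\tilde{\rho}_{i_1})$. First I would bound each summand uniformly by some $M\leq \cc$ using the basis expansion $H_{i_1,\ldots,i_k}=\sum_j r_j\sigma_j$, the constraint $\snorm{H_{i_1,\ldots,i_k}}\leq 1$, and Parseval with respect to the orthogonal basis $\{\sigma_j/\sqrt{2}\}$. Applying the Sampling Lemma to the sum over $i_1\in\{1,\ldots,n\}$ yields $(n/\abs{S})\sum_{i_1\in S}\trace((\sum_{j_1} r\sigma_{j_1})\rho_{i_1}) \in t_1(\rho_1,\ldots,\rho_n) \pm \cc\sqrt{f/g}\,n$ with probability at least $1-n^{-f}$. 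Substituting $\tilde{\rho}_{i_1}$ for $\rho_{i_1}$ contributes an additional error of at most $\cc\delta n$ via the inequality $\abs{\trace(A(\tilde{\rho}-\rho))}\leq \fnorm{A}\fnorm{\tilde{\rho}-\rho} \leq \cc\delta$. Adding these two contributions gives $\epsilon_1 n = \cc(\sqrt{f/g}+\delta)n$, as required.

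For the inductive step, assume the claim for $b-1$ and decompose the error into three intermediate terms: let $\tau' := (n/\abs{S})\sum_{i\in S}\sum_j \trace(\sigma_j\tilde{\rho}_i)\,t^{ij}_{b-1}(\rho_1,\ldots,\rho_n)$ and $\tau := (n/\abs{S})\sum_{i\in S}\sum_j \trace(\sigma_j\rho_i)\,t^{ij}_{b-1}(\rho_1,\ldots,\rho_n)$, so that
\begin{equation}
\abs{\operatorname{EVAL}(t_b,\ldots) - t_b(\rho)} \leq \abs{\operatorname{EVAL} - \tau'} + \abs{\tau' - \tau} + \abs{\tau - t_b(\rho)}.
\end{equation}
The recursive error $\abs{\operatorname{EVAL}-\tau'}$ I would bound by pulling out $\max_{ij}\abs{e_{ij} - t^{ij}_{b-1}}\leq \epsilon_{b-1}n^{b-1}$ (inductive hypothesis) and using $\sum_j \abs{\trace(\sigma_j\tilde{\rho}_i)}\leq \sqrt{d^2}\sqrt{\sum_j\trace(\sigma_j\tilde{\rho}_i)^2} = d\sqrt{2}\fnorm{\tilde{\rho}_i}\leq \sqrt{2}d(1+\delta)=\Delta$, which by Parseval gives a per-site factor of exactly $\Delta$; this yields $\Delta\epsilon_{b-1}n^b$. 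The $\delta$-substitution error $\abs{\tau' - \tau}$ is bounded by noting $\sum_j t^{ij}_{b-1}\trace(\sigma_j(\tilde{\rho}_i - \rho_i)) = \trace((\sum_j t^{ij}_{b-1}\sigma_j)(\tilde{\rho}_i - \rho_i))$ and applying Cauchy–Schwarz together with the uniform bound $\abs{t^{ij}_{b-1}}\in O(n^{b-1})$, contributing $\cc\delta n^b$. Finally, $\abs{\tau - t_b(\rho)}$ is the outer sampling error: apply the Sampling Lemma to the $n$ terms $f(i):=\sum_j\trace(\sigma_j\rho_i)t^{ij}_{b-1}$ bounded uniformly by $\cc n^{b-1}$, giving $\cc\sqrt{f/g}\,n^b$ with failure probability at most $n^{-f}$. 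Summing yields the recurrence and hence $\epsilon_b n^b$.

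For the failure probability, the outer Sampling Lemma fails with probability at most $n^{-f}$, and each of the $d^2\abs{S}$ recursive calls (one per $(i,j)\in S\times[d^2]$) fails with probability at most $d^{2(b-1)}n^{(b-1)-f}$ by induction. A union bound yields total failure probability at most $n^{-f} + d^2\abs{S}\cdot d^{2(b-1)}n^{(b-1)-f}\leq d^{2b}n^{b-f}$, absorbing the polylogarithmic factor $\abs{S}$ into the polynomial slack. The main obstacle I anticipate is the precise constant-tracking in the inductive step — specifically confirming that the per-level amplification factor is exactly $\Delta=\sqrt{2}d(1+\delta)$ (as opposed to $d^2$, which a naive union bound over $j$ would produce), and that the $\delta$-substitution term does not acquire additional factors of $d$ beyond the $\cc$ factor. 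Both require the Parseval-based Cauchy-Schwarz application on the orthogonal basis $\{\sigma_j/\sqrt{2}\}$ to be performed before, rather than after, bounding individual trace magnitudes.
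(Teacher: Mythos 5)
Your proposal is correct and follows essentially the same route as the paper's proof: the same three-way error decomposition (outer sampling via the Sampling Lemma with the uniform bound $\cc n^{b-1}$ from Lemma~\ref{1_lem:ubound}, the $\delta$-net substitution contributing $\cc\delta n^b$, and the recursive estimates amplified by the factor $\Delta=\sqrt{2}d(1+\delta)$ via the bound $\sum_j\abs{\trace(\sigma_j\tilde{\rho}_i)}\leq\sqrt{2}d(1+\delta)$), leading to the identical recurrence $\epsilon_b\leq\cc(\sqrt{f/g}+\delta)+\Delta\epsilon_{b-1}$ and the same union-bound argument for the failure probability. The caveat you flag at the end — performing the Parseval/Cauchy–Schwarz step on the whole operator $H_b$ before bounding individual coefficients so as not to pick up spurious factors of $d$ — is exactly how the paper's Lemma~\ref{1_lem:ubound} resolves it.
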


\subsection{Linearizing our optimization problem}\label{1_sscn:linearization}

Our procedure, LINEARIZE, for ``linearizing'' the objective function of $P_1$ using EVAL from Section~\ref{1_sscn:recurseSample} is stated as Algorithm~\ref{1_alg:linearize}. Algorithm~\ref{1_alg:linearize} takes as input $P_1$ and a set of sample points $\set{\tilde{\rho_i}}$, and outputs a semidefinite program (SDP) which we shall henceforth refer to as $P_2$. We remark that LINEARIZE is our version of the procedure \emph{Linearize} in Section~3.4 of~\cite{AKK99}, extended to the setting of operators and a more complex error structure. Although LINEARIZE is presented as linearizing an objective function here, the same techniques straightforwardly apply in linearizing constraints involving high-degree inner products.

\begin{figure}[t!]
\noindent\rule{\linewidth}{0.3mm}
\begin{alg}\rm{LINEARIZE( }$t_b$ , $\mathcal{N}$ , $S$, $\set{\tilde{\rho}_i:i\in S}$, $\epsilon$, $U$, $L$\rm{ )}.\label{1_alg:linearize}
    \begin{itemize}
        \item Input:\hspace{1mm}(1) A degree-$b$ inner product $t_{b}:\HH(\complex^d)^{\times n}\mapsto\reals$ for $1\leq b\leq k$.\\
               \mbox{\hspace{12mm}}(2) A set of linear constraints $\mathcal{N}$ (e.g.\ ``$\rho_i\succeq 0$'').\\
               \mbox{\hspace{12mm}}(3) A subset $S\subseteq\set{1,\ldots,n}$ of size $\abs{S}=O(\log n)$.\\
         \mbox{\hspace{12mm}}(4) \hspace{-2mm}Sample points $\set{\tilde{\rho}_i:i\in S}$ consistent with some feasible solution \mbox{\hspace{19mm}}$(\rho_1,\ldots,\rho_n)$ for $P_1$ such that $\fnorm{\tilde{\rho}_i-\rho_i}\leq \delta$ for all $i\in S$.  \\
               \mbox{\hspace{12mm}}(5) An error parameter $\epsilon>0$.\\
                \mbox{\hspace{12mm}}(6) (Optional) upper and lower bounds $U,L\in\reals$. If $U$ and $L$ are not provided, \mbox{\hspace{19mm}}we assume $U,L=\infty$.
        \item Output: (1) (Optional) A linear objective function $f:(\LL(\complex^d))^{ \times n}\rightarrow \reals$.\\
                \mbox{\hspace{16mm}}(2) An updated set of linear constraints, $\mathcal{N}$.
    \end{itemize}
    \begin{compactenum}
        \item (Base case) If $b=1$, then
        \begin{compactenum}
            \item (Trivial: Initial objective function was linear) If $U=L=\infty$, return [$t_b$, $\mathcal{N}$].
            \item (Reached bottom of recursion) Else, return [$\mathcal{N}\cup \set{``L\leq t_{b}(\rho_1,\ldots,\rho_n)\leq U"}$].
        \end{compactenum}
        \item (Recursive case) For $i=1\ldots n$ and $j=1\ldots d^2$ do
        \begin{compactenum}
            \item Set $e_{ij}:=\operatorname{EVAL}(t_{b-1}^{ij},S,\set{\tilde{\rho}_i:i\in S})$.
            \item Set $\epsilon^\prime := \epsilon - \cc\left(\sqrt{\frac{f}{g}}+\delta\right)\Delta^{b-1}$, for $\Delta$ defined in Equation~(\ref{1_eqn:alg1error}).
            \item Set $l_{ij}:=e_{ij}-\epsilon^\prime n^{b-1}$ and $u_{ij}:=e_{ij}+\epsilon^\prime n^{b-1}$.
            \item Call LINEARIZE($t_{b-1}^{ij},\mathcal{N},S,\set{\tilde{\rho}_i:i\in S}, \epsilon^\prime,u_{ij},l_{ij}$).
        \end{compactenum}
        \item (a) (Entire computation done) If $U=L=\infty$, return $\left[\sum_{ij}\trace(\sigma_{j}{\rho}_{i})e_{ij}, \mathcal{N}\right]$.\\
        (b) (Recursive call done) Else, return \\\mbox{\hspace{20mm}}$\left[\mathcal{N}\cup \set{``L-\epsilon^\prime d^2n^{b}\leq \sum_{ij}\trace(\sigma_{j}{\rho}_{i})e_{ij}\leq U+\epsilon^\prime d^2n^{b}"}\right]$.
    \end{compactenum}
\end{alg}
\noindent\rule{\linewidth}{0.3mm}
\end{figure}

We remark that the linear constraints output on each recursive call on line 3(b) of Algorithm~\ref{1_alg:linearize}~ensure the approximate consistency with our estimates from EVAL for any solution to $P_2$, as well as play a crucial role in bounding how good of an approximation $P_2$ yields to $P_1$.

To prove correctness of our final approximation algorithm, we require the following two important lemmas regarding $P_2$. The first shows that any feasible solution $(\rho_1,\ldots,\rho_n)$ for $P_1$ consistent with the sample set $\set{\tilde{\rho}_i:i\in S}$ fed into LINEARIZE is also a feasible solution for $P_2$ with high probability.

\begin{lemma}\label{1_lem:feasible}
    Let $t_{k}$, assignment $(\rho_1,\ldots,\rho_n)$, $S$, and $\set{\tilde{\rho}_i:i\in S}$ be defined as in Lemma \ref{1_lem:evalbound}. Then, for any $f,g>0$, calling LINEARIZE with parameters $t_{k}$, $\set{\tilde{\rho}_i:i\in S}$, and $\epsilon=\epsilon_k$ (for $\epsilon_k$ defined in Equation~(\ref{1_eqn:alg1error})) yields an SDP $P_2$ for which the assignment $\set{\rho_1,\ldots,\rho_n}$ is feasible with probability at least $1-d^{2k}n^{k-f}$.
\end{lemma}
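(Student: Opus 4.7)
The constraints of $P_2$ are of two kinds: the original constraints from $P_1$ (the $\rho_i$ being trace-one and positive semidefinite), which $(\rho_1,\ldots,\rho_n)$ satisfies by hypothesis; and the ``sandwich'' constraints inserted by LINEARIZE on lines 1(b) and 3(b) at every level of its recursion. I would condition on the event $\mathcal{E}$ that every EVAL invocation triggered during the execution of LINEARIZE meets the guarantee of Lemma~\ref{1_lem:evalbound}, and then verify the second kind of constraints \emph{deterministically} under $\mathcal{E}$.

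\textbf{Bounding $\Pr[\mathcal{E}^c]$.} I would count EVAL invocations across the LINEARIZE recursion tree. At recursion depth $l\in\{0,1,\ldots,k-1\}$ there are at most $(nd^2)^{l}$ LINEARIZE calls, each launching $nd^2$ EVAL calls on a degree-$(k-l-1)$ inner product. By Lemma~\ref{1_lem:evalbound}, each such call fails with probability at most $d^{2(k-l-1)}n^{(k-l-1)-f}$. A union bound shows that the contribution from every depth is $d^{2k}n^{k-f}$, and since $k\in\Theta(1)$ the total is $O(d^{2k}n^{k-f})$, matching the stated bound (constant factors can be absorbed into $f$).

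\textbf{Verifying constraints under $\mathcal{E}$, by induction on recursion level.} First, I would establish the invariant that the parameter $\epsilon$ passed into LINEARIZE at recursion level $b$ is precisely $\epsilon_b$ from Equation~(\ref{1_eqn:alg1error}): the top-level call satisfies this by choice, and the update $\epsilon^\prime=\epsilon-\cc(\sqrt{f/g}+\delta)\Delta^{b-1}$ together with the closed form of $\epsilon_b$ yield $\epsilon^\prime=\epsilon_{b-1}$ after a brief manipulation of the geometric series. Next I would prove by induction from the base case $b=1$ upward that each sandwich constraint added at level $b$ contains $t_b(\rho_1,\ldots,\rho_n)$ in its interior. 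The base case is immediate: the constraint is $L\leq t_1(\rho_1,\ldots,\rho_n)\leq U$, and the parent-level bounds $L,U$ enclose $t_1$ under $\mathcal{E}$. For the inductive step, the level-$b$ constraint has middle term $\sum_{ij}\trace(\sigma_j\rho_i)e_{ij}$, and the identity
\begin{equation*}
\sum_{ij}\trace(\sigma_j\rho_i)\,e_{ij}\;-\;t_b(\rho_1,\ldots,\rho_n)\;=\;\sum_{ij}\trace(\sigma_j\rho_i)\bigl(e_{ij}-t_{b-1}^{ij}(\rho_1,\ldots,\rho_n)\bigr)
\end{equation*}
combined with the Cauchy--Schwarz bound $|\trace(\sigma_j\rho_i)|\leq\fnorm{\sigma_j}\fnorm{\rho_i}\leq\sqrt{2}$ and the EVAL guarantee $|e_{ij}-t_{b-1}^{ij}(\rho_1,\ldots,\rho_n)|\leq\epsilon_{b-1}n^{b-1}$ shows the discrepancy is at most $\sqrt{2}\,d^2\,\epsilon_{b-1}\,n^b$. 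The parent interval $[L,U]$ is by induction within $\pm\epsilon_b n^b$ of $t_b(\rho_1,\ldots,\rho_n)$, and the extra $\pm\epsilon^\prime d^2 n^b=\pm\epsilon_{b-1}d^2n^b$ cushion in the constraint on line 3(b) should then suffice; verifying this reduces to checking $\sqrt{2}d^2\epsilon_{b-1}\leq\epsilon_b+\epsilon_{b-1}d^2$, which follows from the choice $\Delta=\sqrt{2}d(1+\delta)$ that governs the geometric growth of $\epsilon_b$.

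\textbf{Main obstacle.} The principal difficulty is the last algebraic step: confirming that the geometric factor $\Delta$ is tuned so that each propagation of error through one recursion level is absorbed by the combined slack, rather than compounding out of control across the depth-$k$ tree. Once the invariant ``$\epsilon$ at level $b$ equals $\epsilon_b$'' is isolated and the constant from the $|\trace(\sigma_j\rho_i)|$ bound is reconciled with $\Delta$, the rest of the induction is routine, and the deterministic verification completes under $\mathcal{E}$, yielding feasibility of $(\rho_1,\ldots,\rho_n)$ for $P_2$ with probability at least $1-d^{2k}n^{k-f}$.
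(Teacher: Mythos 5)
Your proposal follows the paper's proof essentially step for step: the invariant that the error parameter passed to the level-$b$ call equals $\epsilon_b$, conditioning on all EVAL invocations succeeding, verifying each sandwich constraint via the identity $\sum_{ij}\trace(\sigma_j\rho_i)e_{ij}-t_b(\rho_1,\ldots,\rho_n)=\sum_{ij}\trace(\sigma_j\rho_i)(e_{ij}-t_{b-1}^{ij}(\rho_1,\ldots,\rho_n))$, and a union bound for the probability. Your probability accounting is a direct union bound over the recursion tree and picks up an extra factor of $k$ relative to the stated $1-d^{2k}n^{k-f}$; the paper instead observes that the EVAL calls spawned by LINEARIZE estimate exactly the same terms already covered by the (deliberately loose) probability analysis in the proof of Lemma~\ref{1_lem:evalbound}, so that bound transfers verbatim. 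Since $k\in\Theta(1)$ this is immaterial, as you note.

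The step you flag as the main obstacle, however, is not closed by your proposed inequality. The constraint generated at level $b$ requires $\sum_{ij}\trace(\sigma_j\rho_i)e_{ij}\in[L-\epsilon_{b-1}d^2n^b,\,U+\epsilon_{b-1}d^2n^b]$, and all you may assume about $t_b(\rho_1,\ldots,\rho_n)$ is that it lies somewhere in $[L,U]$; in the worst case it sits exactly at the endpoint $L$, so the entire discrepancy $\abs{\sum_{ij}\trace(\sigma_j\rho_i)(e_{ij}-t_{b-1}^{ij})}$ must be absorbed by the cushion $\epsilon_{b-1}d^2n^b$ alone — the $\epsilon_b$ term in your check $\sqrt{2}d^2\epsilon_{b-1}\leq\epsilon_b+\epsilon_{b-1}d^2$ is not available. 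Moreover that inequality is itself false for $d\geq 4$, since $\epsilon_b\approx\Delta\epsilon_{b-1}=\sqrt{2}d(1+\delta)\epsilon_{b-1}$ and $\sqrt{2}d+d^2<\sqrt{2}d^2$ once $d>2+\sqrt{2}$. What is actually needed is $\max_{i,j}\abs{\trace(\sigma_j\rho_i)}\leq 1$, which makes the discrepancy exactly $\epsilon_{b-1}d^2n^b$ and matches the cushion; this is what the paper implicitly uses ("$d^2n$ terms, each yielding an additional error of at most $\epsilon_{b-2}n^{b-2}$"). That bound holds for the off-diagonal basis elements $U_{pq},V_{pq}$ (since $\abs{\rho(p,q)}\leq\tfrac12$ for a density matrix) and for $d=2$ in general, but for the diagonal elements $W_r$ with $r\geq 2$ one can have $\abs{\trace(W_r\rho)}=\sqrt{2r/(r+1)}>1$, so your Cauchy--Schwarz constant $\sqrt{2}$ is the honest one. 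You have therefore located a genuine constant-factor imprecision shared with the paper's own argument; the clean repair is not your inequality but a harmless rescaling — e.g.\ widening the cushion on line 3(b) of LINEARIZE to $\sqrt{2}\,\epsilon^\prime d^2n^b$ (or folding the $\sqrt{2}$ into the definition of $\epsilon_b$ via $\Delta$), which propagates only a constant into Lemma~\ref{1_lem:guarantee} and Theorem~\ref{1_thm:ptas}.
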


The second lemma is a bound on how far the optimal solution of $P_2$ is from the optimal solution for $P_1$. We adopt the convention of~\cite{AKK99} and write $[x,y]\pm z$ to denote interval $[x-z,y+z]$.

\begin{lemma}\label{1_lem:guarantee}
    Let $\optprod$ be the optimal value for $P_1$, obtained by assignment $\rho^{\optprod}:=(\rho^{\operatorname{opt}}_1,\ldots,\rho^{\operatorname{opt}}_n)$. Let assignment $\set{\rho_i}_{i=1}^n=\set{\rho^{\operatorname{opt}}_i}_{i=1}^n$, $S$, and $\set{\tilde{\rho}_i:i\in S}$ be defined as in Lemma~\ref{1_lem:evalbound}. Let $P_2$ denote the SDP obtained by calling LINEARIZE with $S$, and denote by $\epsilon_m$ for $1\leq m \leq k$ the error parameter passed with map $t_m$ into a (possibly recursive) call to LINEARIZE. Then, letting $\optt$ denote the optimal value of $P_2$, we have with probability at least
     $1-d^{2k}n^{k-f}$ (for parameters set as in Lemma~\ref{1_lem:feasible}) that
 $       \optt \in \optprod \pm d(d+\sqrt{2})\left[\sum_{m=1}^{k-1}(\sqrt{2}d)^{k-1-m}\epsilon_m\right]n^{k}.$
\end{lemma}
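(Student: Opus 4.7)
}

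The plan is to sandwich $\optt$ between $\optprod - \text{err}$ and $\optprod + \text{err}$, where $\text{err}$ is the stated additive quantity. Each direction is obtained by comparing the actual degree-$k$ inner product $t_k(\cdot)$ with the linearized objective $f_2(\cdot) := \sum_{ij}\trace(\sigma_j\rho_i)\,e_{ij}$ of $P_2$, at one of the two special assignments $\rho^{\optprod}$ and an optimum $\rho^{(2)}$ of $P_2$. The heart of the proof is an induction on the LINEARIZE recursion depth showing that, for any $\rho$ which is feasible for $P_2$, the ``linearized'' value at each level $b$ differs from $t_b(\rho)$ by an amount that grows by at most a multiplicative factor of $\sqrt{2}\,d$ per level, plus a fresh sampling-error contribution $\epsilon_{b-1}n^{b-1}$.

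For the inequality $\optt \ge \optprod - \text{err}$: by Lemma~\ref{1_lem:feasible}, the assignment $\rho^{\optprod}$ is feasible for $P_2$ with probability at least $1-d^{2k}n^{k-f}$, and hence $\optt \ge f_2(\rho^{\optprod})$. I would then bound
\[
|f_2(\rho^{\optprod}) - \optprod| \;=\; \Big|\sum_{ij}\trace(\sigma_j \rho^{\operatorname{opt}}_i)\bigl[e_{ij}-t_{k-1}^{ij}(\rho^{\optprod})\bigr]\Big|
\]
using Lemma~\ref{1_lem:evalbound} to control $|e_{ij}-t_{k-1}^{ij}(\rho^{\optprod})|\le \epsilon_{k-1}n^{k-1}$, and using the Cauchy--Schwarz-type bound $\sqrt{\sum_j\trace(\sigma_j\rho_i)^2}=\sqrt{2}\fnorm{\rho_i}\le\sqrt{2}$ (which follows from $\set{\sigma_j/\sqrt{2}}$ being orthonormal) to avoid paying a factor of $d^2$ in the basis sum. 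Combined with a second application of Cauchy--Schwarz over $i$, this yields a multiplicative factor of exactly $\sqrt{2}\,d$ coming from passing through one outer layer of $\sum_{ij}\trace(\sigma_j\rho_i)\cdot$, matching the coefficient of $\epsilon_{k-1}$ in the claimed bound.

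For the inequality $\optt \le \optprod + \text{err}$: let $\rho^{(2)}$ be optimal for $P_2$, so that $\optt=f_2(\rho^{(2)})$. Since $t_k(\rho^{(2)}) \le \optprod$ (as $\rho^{(2)}$ is a valid product assignment for $P_1$), it suffices to bound $|f_2(\rho^{(2)})-t_k(\rho^{(2)})|$. This is where the constraints added by LINEARIZE at line 3(b) of Algorithm~\ref{1_alg:linearize} enter: they force, for every $(i,j)$ and every level $b$, the linearized evaluation $\sum_{i'j'}\trace(\sigma_{j'}\rho^{(2)}_{i'})\,e_{i'j'}^{(b-1,ij)}$ to lie within the window $[l_{ij}^{(b)},u_{ij}^{(b)}]$ of width $2\epsilon'n^{b-1}$ determined by the EVAL estimate from the parent call, with an additional $\epsilon' d^2 n^{b-1}$ slack. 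Recursively applying these constraints lets me carry the bound $|g_b(\rho^{(2)})-t_b(\rho^{(2)})|$ up the recursion tree, collecting one fresh $\epsilon_m n^{m}$ sampling error at level $m$ and multiplying the previously accumulated error by at most $\sqrt{2}\,d$ at each outer layer, producing the geometric factor $(\sqrt{2}d)^{k-1-m}$ in the final bound.

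The main obstacle will be the bookkeeping in this induction: LINEARIZE uses two distinct error sources (the EVAL sampling error from Equation~(\ref{1_eqn:alg1error}) and the $\delta$-net discretization error on the sample points $\tilde{\rho}_i$), and the constraint slack at depth $b$ is in terms of a \emph{reduced} $\epsilon'$ (line 2(b) of Algorithm~\ref{1_alg:linearize}), not the original $\epsilon_m$. Unfolding the recursion while keeping track of which error contributes at which depth, and verifying that the factor $d(d+\sqrt{2})$ in front of the sum is the right sandwich of the inner-level Cauchy--Schwarz constant $\sqrt{2}$ with the outer-level basis-size factor $d$, is where the analysis is most delicate. Finally, the stated probability $1-d^{2k}n^{k-f}$ is inherited directly from the single invocation of Lemma~\ref{1_lem:feasible} (which itself applies Lemma~\ref{1_lem:evalbound}), so no union bound beyond what is already done there is needed.
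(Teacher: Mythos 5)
Your proposal is correct and follows essentially the same route as the paper: establish feasibility of $\rho^{\optprod}$ via Lemma~\ref{1_lem:feasible}, and run an induction up the LINEARIZE recursion showing that for any $P_2$-feasible assignment the true $t_b$ value lies within the constraint window up to an error that picks up a fresh $\epsilon_m n^m$ at each level and a multiplicative $\sqrt{2}d$ per outer layer, yielding the $d(d+\sqrt{2})\sum_m(\sqrt{2}d)^{k-1-m}\epsilon_m$ bound. The only (harmless) deviation is that you handle the direction $\optt\geq\optprod-\mathrm{err}$ by a direct application of Lemma~\ref{1_lem:evalbound} at $\rho^{\optprod}$ rather than reusing the induction, which if anything slightly tightens that side.
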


\subsection{The final algorithm}\label{1_sscn:finalptas}

We finally present our approximation algorithm, APPROXIMATE (Algorithm~\ref{1_alg:final}), in its entirety, which exploits our ability to linearize $P_1$ using LINEARIZE (Algorithm~\ref{1_alg:linearize}). This proves Theorem~\ref{1_thm:ptas}, which in turn implies Theorem~\ref{1_thm:ptasdense}. We first clarify a few points about APPROXIMATE, then analyze its runtime, and follow with further discussion, including the algorithm's derandomization and a proof that dense \klh~remains QMA-hard.

\begin{figure}[t]
\noindent\rule{\linewidth}{0.3mm}
\begin{alg}\rm {APPROXIMATE( }$H$ , $\epsilon${\rm~)}.\label{1_alg:final}
    \begin{itemize}
        \item Input: (1) A $k$-local Hamiltonian $H=\sum_{i_1, \ldots ,i_k} H_{i_1,\ldots,i_k}$ for each $H_{i_1,\ldots,i_k}\in \HH((\complex^d)^{\otimes k})$.\\
               \mbox{\hspace{13mm}}(2) An error parameter $\epsilon>0$.
        \item Output: A product assignment $\rho_1\otimes\cdots\otimes\rho_n$ that with probability at least $1/2$, has \mbox{\hspace{16mm}}value at least
              $\optprod-\epsilon n^k$, for $\optprod$ the optimal value for $H$ over all\\\mbox{\hspace{15mm}} product state assignments.
    \end{itemize}
    \begin{compactenum}
        \item Set $\epssdp:=\eps/10$.
        \item Define $h:\reals\rightarrow\reals$ such that for any error parameter $\epsilon$ input to LINEARIZE, $h(\eps)n^k$ is the absolute value of the bound on additive error given by Lemma~\ref{1_lem:guarantee}. Then, define $\epsilon^\prime$ implicitly so that $h(\epsilon^\prime) + \epssdp=\epsilon$ holds.
        \item Define constant $f$ such that $1-d^{2k}n^{k-f}>1/2$.
        \item Define constants $g$ and $\delta$ implicitly so that $\epsilon^\prime=\cc\left(\sqrt{\frac{f}{g}}+\delta\right)\left(\frac{\Delta^{k}-1}{\Delta-1}\right)$, for $\Delta$ defined in Equation~(\ref{1_eqn:alg1error}).
        \item Choose $g\log n$ indices $S\subseteq\set{1,\ldots,n}$ independently and uniformly at random.
        \item For each possible assignment $i$ from our $\delta$-net to the qudits in $S$:
        \begin{compactenum}
            \item Call LINEARIZE$(t_k,\set{P_1\mbox{'s constraints}},S,i, \epsilon^\prime)$ to obtain SDP $P_2^i$.
            \item Let $\alpha_i$ denote the value of $P_1$ obtained by substituting in the optimal solution of $P_2^i$.
        \end{compactenum}
        \item Return the assignment corresponding to the maximum over all $\alpha_i$.
    \end{compactenum}
\end{alg}
\noindent\rule{\linewidth}{0.3mm}
\end{figure}

We begin by explaining the rationale behind the constants in Algorithm~\ref{1_alg:final}. The constant $\epssdp$ is the additive error incurred when solving an SDP~\cite{GLS93}. We choose $\epsilon^\prime$ so that after running LINEARIZE and solving $P_2^i$, the total additive error is at most $\epsilon$, as desired. We choose $f$ to ensure the probability of success is at least $1/2$. Finally, we set $g$ large enough and $\delta$ (for our $\delta$-net) small enough to ensure that $\epsilon^\prime$ matches the error bounds for EVAL in Lemma~\ref{1_lem:evalbound}.

We now analyze the runtime of Algorithm~\ref{1_alg:final}. Let $\abs{G}$ denote the size of our $\delta$-net $G$ for a qudit. Then, for each of the $\abs{G}^{g\log n}$ iterations of line 6, we first take $O(n^{k-1})$ time to run LINEARIZE, outputting $O(n^{k-1})$ new linear constraints (seen via a simple inductive argument). We then solve SDP $P_2^i$, which can be done in time polynomial in $n$ and $\log(1/\epssdp)$ using the ellipsoid method~\cite{GLS93} (see, e.g.,~\cite{W09}). Let $r(n,\epssdp)$ denote the maximum runtime required to solve any of the $P_2^i$. Then, the overall runtime for Algorithm~\ref{1_alg:final} is $O(n^{g\log\abs{G}} (n^{k-1}+r(n,\epssdp)))$, which is polynomial in $n$ for $\epsilon,d,k\in O(1)$ (recall from Section~\ref{1_scn:sepopt} that $\abs{G}\in O((\frac{d}{\delta})^d)$, and that $\delta$ and $g$ are constant in our setting). Note that, due to the implicit dependence of $g$ on $\epsilon$, this runtime scales at least exponentially with varying $\epsilon$.

Before moving to further discussion, we make two remarks. First, one can efficiently convert the output of Algorithm~\ref{1_alg:final} to a \emph{pure} state with the same guarantee by adapting the standard classical \emph{method of conditional expectations}~\cite{V01}. To demonstrate, suppose $\set{\rho_i}$ is output by Algorithm~\ref{1_alg:final}. Then, set $\rho_1^\prime$ to be the eigenvector $\ketbra{\psi_j}{\psi_j}$ of $\rho_1$ for which the assignment $\ketbra{\psi_j}{\psi_j}\otimes\rho_2\otimes\cdots\otimes\rho_n$ performs best for $P_1$. (If the spectrum of $\rho_i$ is degenerate, begin by fixing an arbitrary choice of spectral decomposition for $\rho_i$.) Let our new assignment be $\rho_1^\prime\otimes\rho_2\otimes\cdots\otimes\rho_n$. Now repeat for each $\rho_i$ for $2\leq i \leq n$. The final state $\rho_1^\prime\otimes\cdots\otimes\rho_n^\prime$ is pure, and by convexity is guaranteed to perform as well as $\rho_1\otimes\cdots\otimes\rho_n$.

Second, recall from Section~\ref{1_scn:sepopt} that we constructed a $\delta$-net over a space larger than $\DD(\complex^d)$, allowing possibly non-positive assignments for a qudit. We now see that this is of no consequence, since regardless of which samples (positive or not) we use to derive our estimates with the Sampling Lemma, any feasible solution to $P_2^i$ in Algorithm~\ref{1_alg:final} is a valid assignment for $P_1$. Moreover, we know that for each optimal $\rho_i$ for $P_1$, there must be \emph{some} operator (positive or not) within distance $\delta$ in our net, ensuring our estimates obtained using the Sampling Lemma are within our error bounds.

\paragraph{Converting the absolute error of Algorithm~\ref{1_alg:final} into relative error.}
To convert the absolute error $\pm\epsilon n^k$ of Algorithm~\ref{1_alg:final} into a \emph{relative} error of $1-\epsilon^\prime$ for any $\epsilon^\prime$, define constant $c$ such that $c n^k$ is the value obtained for a \klh~instance by choosing the maximally mixed assignment $I/d^n$ (analogous to a classical random assignment). Since $I/d^n$ can be written as a mixture of computational basis states, we have $\optprod \geq c n^k$. It follows that by setting $\epsilon= c\epsilon^\prime$, Algorithm~\ref{1_alg:final} returns an assignment with value at least $\optprod-c\epsilon^\prime n^k\geq \optprod-\epsilon^\prime\optprod\geq \optprod(1-\epsilon^\prime)$, as desired.
\paragraph{Derandomizing Algorithm~\ref{1_alg:final}.}

The source of randomness in our algorithm is Lemma \ref{1_lem:sample}. By a standard argument in~\cite{AKK99} (see also~\cite{BR94,BGG93}), this randomness can be eliminated with only polynomial overhead. Specifically, we replace the random selection of $g\log n$ indices in the Sampling Lemma with the set of indices encountered on a random walk of length $O(g\log n)$ along a constant degree expander~\cite{G93}. Since the expander has constant degree, we can efficiently deterministically iterate through all $n^{O(g)}$ such walks, and since such a walk works with probability $1/n^{O(1)}$, at least one walk will work for all $\poly(n)$ sampling experiments we wish to run.

\paragraph{QMA-hardness of dense \klh.}
It is easy to see that (exact) MAX-$2$-local Hamiltonian remains QMA-hard for dense instances (a similar statement holds for MAX-$2$-SAT~\cite{AKK99}). For any MAX-$2$-local Hamiltonian instance with optimal value $\OPT$, we simply add $n$ qudits, between any two of which we place the constraint $\ketbra{00}{00}$ (no constraints are necessary between old and new qudits). Then, the new Hamiltonian has optimal value $\OPT+{n\choose{2}}$, making it dense, and the ability to solve this new instance implies the ability to solve the original one. The argument extends straightforwardly to \klh~for $k>2$.

\section{Further technical details and proofs}\label{1_app:A}

We now prove our claims in Section~\ref{1_scn:sepopt}. For this, we first require expanding on the notation we have set thus far.

\paragraph{Expanded Notation.} We now expand on our previous notation for analyzing Equation~(\ref{1_eqn:decomposition}) in order to facilitate proofs of the claims in Section~\ref{1_scn:sepopt}. First, to recursively analyze a clause $H_{i_1,\ldots,i_k}\subseteq \HH((\complex^{d})^{\otimes k})$, let $H_b\in \HH((\complex^{d})^{\otimes b})$ for any $1\leq b \leq k$ denote the action of $H_{i_1,\ldots,i_k}$ restricted to the first $b$ of its $k$ target qudits, i.e.\
\begin{equation}
H_b:=\sum^{d^2}_{j_b,\ldots,j_1=1}r_{j_1,\ldots,j_k}^{i_1,\ldots,i_k}\sigma_{j_{b}}\otimes\cdots\otimes\sigma_{j_{1}}.
\end{equation}
For example, $H_1=\sum^{d^2}_{j_1=1}r_{j_1,\ldots,j_k}^{i_1,\ldots,i_k}\sigma_{j_{1}}$ and  $H_k=H_{i_1,\ldots,i_k}$. Note that $H_b$ implicitly depends on variables ${i_1,\ldots,i_k,j_{b+1},\ldots, j_{k}}$. To reduce clutter, however, our notation does not explicitly denote this dependence unless necessary. Next, to recursively analyze a degree-$a$ inner product, we define $t_{a,b}:\HH(\complex^d)^{\times n}\mapsto\reals$ for any $0\leq a \leq k$ and $1\leq b \leq k$ such that
\begin{equation}
t_{a,b}(\rho_1,\ldots,\rho_n):=\sum^n_{i_{a},\ldots,i_1=1}\trace\left(H_b^{i_1,\ldots,i_k}\rho_{i_{b}}\otimes\cdots\otimes\rho_{i_1}\right)
\end{equation}
(where setting $a=0$ eliminates the sum over indices $i$). For example, $t_{k,k}$ is our full ``degree-$k$'' objective function in Equation~(\ref{1_eqn:obj}), and more generally, $t_{b,b}$ is the degree-b inner product in Equation~(\ref{1_eqn:decomposition}). Allowing different values for $a$ and $b$ greatly eases our technical analysis. We use the shorthand $t_b$ to denote $t_{b,b}$, and again only explicitly denote the dependence of $t_{a,b}$ on parameters $i_{a+1},\ldots, i_k$ and $j_{b+1},\ldots , j_k$ when necessary.

We now state and prove a technical lemma required for the remainder of our proofs here.

\begin{lemma}\label{1_lem:ubound}
    Let $\set{\rho_i}_{i=1}^n\subseteq \HH(\complex^d)$. For $\set{H_{i_1,\ldots,i_k}}\subseteq \HH(\complex^{d^k})$ any \klh~instance with decomposition for the $H_{i_1,\ldots,i_k}$ as given in Equation~(\ref{1_eqn:decomposition}), we have for any $0\leq a \leq k$ and $1\leq b \leq k$ that $\abs{t_{a,b}(\rho_1,\ldots,\rho_n)}\leq \left(\max_{i_{b},\ldots,i_1}\fnorm{\rho_{i_{b}}}\cdots\fnorm{\rho_{i_{1}}}\right)\cc n^{a}$.
\end{lemma}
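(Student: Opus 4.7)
The plan is to proceed via two standard inequalities and a careful Frobenius-norm bookkeeping argument on the partial expansions $H_b$. First, I would pull the sum outside the absolute value via the triangle inequality and then apply the matrix Cauchy--Schwarz (or H\"{o}lder) inequality $|\trace(AB)|\leq\fnorm{A}\fnorm{B}$ to each term, writing
\begin{equation*}
|t_{a,b}(\rho_1,\ldots,\rho_n)|\leq \sum_{i_a,\ldots,i_1=1}^n \fnorm{H_b^{i_1,\ldots,i_k}}\,\fnorm{\rho_{i_b}\otimes\cdots\otimes\rho_{i_1}}.
\end{equation*}
Multiplicativity of the Frobenius norm under tensor products lets me factor $\fnorm{\rho_{i_b}\otimes\cdots\otimes\rho_{i_1}}=\fnorm{\rho_{i_b}}\cdots\fnorm{\rho_{i_1}}$, which I then upper bound uniformly by its maximum over $i_b,\ldots,i_1\in[n]$. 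The $n^a$ factor in the target bound will appear precisely because the remaining (independent) sums over $i_1,\ldots,i_a$ contribute $n^a$ identical terms.

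The main obstacle is the bound on $\fnorm{H_b}$, since the hypothesis $0\preceq H_{i_1,\ldots,i_k}\preceq I$ is only a statement about $H_k$, not about the partial expansion $H_b$ for $b<k$. Here I would use that $\{\sigma_j\}_{j=1}^{d^2}$ is orthogonal with $\trace(\sigma_i\sigma_j)=2\delta_{ij}$, so by Pythagoras in the Hilbert--Schmidt inner product,
\begin{equation*}
\fnorm{H_k}^2=2^k\sum_{j_1,\ldots,j_k}|r^{i_1,\ldots,i_k}_{j_1,\ldots,j_k}|^2,\qquad \fnorm{H_b}^2=2^b\sum_{j_1,\ldots,j_b}|r^{i_1,\ldots,i_k}_{j_1,\ldots,j_k}|^2,
\end{equation*}
where in the second identity the indices $j_{b+1},\ldots,j_k$ are held fixed. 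Since $\snorm{H_k}\leq 1$ on a $d^k$-dimensional space, $\fnorm{H_k}^2\leq d^k$, hence $\sum_{j_1,\ldots,j_k}|r^{i_1,\ldots,i_k}_{j_1,\ldots,j_k}|^2\leq (d/2)^k$. Because each summand is non-negative, the partial sum appearing in $\fnorm{H_b}^2$ is bounded by this full sum, yielding $\fnorm{H_b}^2\leq 2^b\,(d/2)^k=d^k\cdot 2^{b-k}\leq d^k$, and therefore $\fnorm{H_b}\leq d^{k/2}=\cc$.

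Combining these two ingredients gives
\begin{equation*}
|t_{a,b}(\rho_1,\ldots,\rho_n)|\leq \cc\cdot n^a\cdot\max_{i_b,\ldots,i_1}\fnorm{\rho_{i_b}}\cdots\fnorm{\rho_{i_1}},
\end{equation*}
which is exactly the claimed bound. No induction on $a$ or $b$ is required: the argument is a direct estimate, and the only subtlety is the uniform $\cc$ bound on $\fnorm{H_b}$ across all $1\leq b\leq k$, which is what the orthogonality of the $\sigma_j$'s together with positivity of $|r|^2$-terms buys us.
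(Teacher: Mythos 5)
Your proposal is correct and follows essentially the same route as the paper's proof: triangle inequality plus the Schatten-norm H\"{o}lder (Cauchy--Schwarz) bound $\abs{\trace(AB)}\leq\fnorm{A}\fnorm{B}$, multiplicativity of the Frobenius norm over tensor products, and the key observation that orthogonality of the $\sigma_j$ basis gives $\fnorm{H_b}^2 = 2^b\sum_{j_1,\ldots,j_b}\abs{r}^2 \leq 2^{b-k}\fnorm{H_k}^2 \leq d^k$. Your phrasing of the last step as "the partial sum of non-negative terms is bounded by the full sum" is just a restatement of the paper's inequality $\fnorm{H_b}\leq 2^{(b-k)/2}\fnorm{H_k}$, so there is nothing further to reconcile.
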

\begin{proof}
     By the triangle inequality and the H\"{o}lder inequality for Schatten $p$-norms (see Section~\ref{0_scn:linalg}), we have
    \begin{eqnarray}
        \abs{t_{a,b}}=\abs{\sum^n_{i_{a},\ldots,i_1=1}\trace\left(H_b\rho_{i_{b}}\otimes\cdots\otimes\rho_{i_1}\right)}&\leq& \sum^n_{i_{a},\ldots,i_1=1}\fnorm{H_b}\fnorm{\rho_{i_{b}}\otimes\cdots\otimes\rho_{i_1}}\\
        &\leq& \left(\max_{i_{b},\ldots,i_1}\fnorm{\rho_{i_{b}}}\cdots\fnorm{\rho_{i_{1}}}\right) \sum^n_{i_{a},\ldots,i_1=1}\fnorm{H_b},\nonumber
    \end{eqnarray}
    where we have used the fact that $\fnorm{A\otimes B}=\fnorm{A}\fnorm{B}$ for all $A,B\in \LL(\complex^d)$. If we can now show that $\fnorm{H_b}\leq \fnorm{H_k}$ for all $1\leq b \leq k$, then we would be done since we would have $\sum^n_{i_{a},\ldots,i_1}\fnorm{H_b}\leq \fnorm{H_{k}}n^{a}\leq \cc n^a$, where $\fnorm{H_{k}}\leq \cc$ since $\snorm{H_k}\leq 1$ by definition. Indeed, we claim that for any fixed $1\leq b\leq k$, we have $\fnorm{H_b}\leq 2^{\frac{b-k}{2}}\fnorm{H_{k}}$. To see this, note by straightforward expansion of the Frobenius norm and the fact that $\trace(\sigma_i\sigma_j)=2\delta_{ij}$ that
    \begin{equation}
        \fnorm{H_b} = \sqrt{\trace(H_b^2)}= 2^{\frac{b}{2}}\sqrt{\sum_{j_{b},\ldots,j_k}(r_{j_1,\ldots,j_k}^{i_1,\ldots,i_k})^2}\leq2^{\frac{b}{2}}\enorm{\ve{r}^{i_1,\ldots,i_k}}=2^{\frac{b-k}{2}}\left(2^{\frac{k}{2}}\enorm{\ve{r}^{i_1,\ldots,i_k}}\right),
    \end{equation}
    where $\ve{r}^{i_1,\ldots,i_k}$ is the coordinate vector of $H_{i_1,\ldots,i_k}$ from Equation~(\ref{1_eqn:decomposition}). Note, however, then for $b=k$, the inequality in the chain above is an equality, and so $\fnorm{H_{k}}=2^{\frac{k}{2}}\enorm{\ve{r}^{i_1,\ldots,i_k}}$. Substituting this into  the chain above completes the proof of our claim.
\end{proof}

We now prove our claims of Section~\ref{1_scn:sepopt}.

\begin{proof}[\textbf{Proof of Lemma~\ref{1_lem:evalbound}}]
    We first derive the error bound of $\epsilon_b$, and subsequently prove the probability bound. We follow~\cite{AKK99}, and proceed by induction on $b$. For the base case $b=1$, $\operatorname{EVAL}(H_{1},S,\set{\tilde{\rho}_i:i\in S})$ attempts to estimate
    \begin{equation}
        t_1(\rho_1,\ldots,\rho_n)=\sum_{i_1}\left[\sum_{j_1}r_{j_1,\ldots,j_k}^{i_1,\ldots,i_k}\trace(\sigma_{j_1}\rho_{i_1})\right]
    \end{equation}
    using our flawed sample points $\set{\tilde{\rho}_i:i\in S}$. To analyze the error of its output, assume first that our sample points are exact, i.e.\ $\tilde{\rho}_i=\rho_i$ for all $i\in S$. Then, by setting ``$a_i$'' in Lemma~\ref{1_lem:sample} to $t_{0,1}^{i_1}$ for $i=i_1$, and by using Lemma~\ref{1_lem:ubound} with parameters $a=0$ and $b=1$ to obtain upper bound $M=\cc$, we have by the Sampling Lemma that (with probability at least $1-n^{-f}$)
    \begin{equation}\label{1_eqn:basecaseerror}
        \frac{n}{\abs{S}}\sum_{i_1\in S}\left[\sum_{j_1}r_{j_1,\ldots,j_k}^{i_1,\ldots,i_k}\trace(\sigma_{j_1}\rho_{i_1})\right]\in t_1(\rho_1,\ldots,\rho_n)\pm \cc\sqrt{\frac{f}{g}}n.
    \end{equation}
    (Recall that the notation $x\in y\pm z$ means here $x\in [y-z,y+z]$.) This bound holds if we sum over exact sample points. If we instead sum over flawed sample points $\set{\tilde{\rho}_i:i\in S}$, the additional error is bounded by $\frac{n}{\abs{S}}$ times
    \begin{eqnarray}
        \abs{\sum_{i_1\in S}\left[\sum_{j_1}r_{j_1,\ldots,j_k}^{i_1,\ldots,i_k}\trace(\sigma_{j_1}(\rho_{i_1}-\tilde{\rho}_{i_1}))\right]}&\leq& \sum_{i_1\in S}\abs{\sum_{j_1}r_{j_1,\ldots,j_k}^{i_1,\ldots,i_k}\trace(\sigma_{j_1}(\rho_{i_1}-\tilde{\rho}_{i_1}))}\\&\leq& \sum_{i_1\in S}(\fnorm{\rho_{i_1}-\tilde{\rho}_{i_1}}\cc)\\&\leq& \cc\delta n,\label{1_eqn:sampleerror}
    \end{eqnarray}
    where the second inequality uses Lemma~\ref{1_lem:ubound} with parameters $a=0$ and $b=1$ and the promise of our $\delta$-net. We conclude for the base case that, as desired,
    \begin{eqnarray}
        \operatorname{EVAL}(H_{1},S,\set{\tilde{\rho}_i:i\in S})&=&\frac{n}{\abs{S}}\sum_{i_1\in S}\left[\sum_{j_1}r_{j_1,\ldots,j_k}^{i_1,\ldots,i_k}\trace(\sigma_{j_1}\tilde{\rho}_{i_1})\right]\\&\in& t_1(\rho_1,\ldots,\rho_n)\pm \cc\left(\sqrt{\frac{f}{g}}+\delta\right)n.
    \end{eqnarray}

    Assume now that the inductive hypothesis holds for $1\leq m \leq b-1$. We prove the claim for $m=b$. To do so, suppose first that the recursive calls on line 1(b) of Algorithm~\ref{1_alg:estimate} return the \emph{exact} values of $t_{b-1}^{ij}(\rho_1,\ldots,\rho_n)$, and that we have exact samples $\set{{\rho}_i:i\in S}$. Then, since by calling Lemma~\ref{1_lem:ubound} with $a=b-1$ we have $\abs{\sum_{j}\trace(\sigma_{j}\rho_{i})t_{b-1}^{ij}(\rho_1,\ldots,\rho_n)}\leq \cc n^{b-1}$, it follows by the Sampling Lemma that
    \begin{equation}
        \frac{n}{\abs{S}}\sum_{i\in S}\left[\sum_{j}\trace(\sigma_{j}\rho_{i})t_{b-1}^{ij}(\rho_1,\ldots,\rho_n)\right]\in         \sum_{i=1}^n\left[\sum_{j}\trace(\sigma_{j}\rho_{i})t_{b-1}^{ij}(\rho_1,\ldots,\rho_n)\right]\pm \cc\sqrt{\frac{f}{g}}n^{b}.\label{1_eqn:reccase1}
    \end{equation}
    To first adjust for using flawed samples, observe that an analogous calculation to Equation~(\ref{1_eqn:sampleerror}) yields $\abs{\frac{n}{\abs{S}}\sum_{i\in S}\left[\sum_{j}\trace(\sigma_{j}(\rho_{i}-\tilde{\rho}_{i}))\right]}\leq \cc\delta n^{b}$,
    where we have called Lemma~\ref{1_lem:ubound} with $a=b-1$. Thus, using flawed samples, the output of Algorithm~\ref{1_alg:estimate} satisfies
    \begin{equation}
        \frac{n}{\abs{S}}\sum_{i\in S}\left[\sum_{j}\trace(\sigma_{j}\tilde{\rho}_{i})t_{b-1}^{ij}\right]\in         \sum_{i=1}^n\left[\sum_{j}\trace(\sigma_{j}\rho_{i})t_{b-1}^{ij}\right]\pm \cc\left(\sqrt{\frac{f}{g}}+\delta\right)n^{b}.\label{1_eqn:exact}
    \end{equation}
    To next drop the assumption that our estimates $e_{ij}$ on line 1(b) are exact, apply the induction hypothesis to conclude that $e_{ij}\in t_{b-1}^{ij}(\rho_1,\ldots,\rho_n) \pm \epsilon_{b-1}n^{b-1}$. Then,
    \begin{eqnarray}
        \frac{n}{\abs{S}}\sum_{i\in S}\left[\sum_{j}\trace(\sigma_{j}\tilde{\rho}_{i})e_{ij}\right]&\in&
        \frac{n}{\abs{S}}\sum_{i\in S}\left[\sum_{j}\trace(\sigma_{j}\tilde{\rho}_{ij})\left(t_{b-1}^{ij} \pm \epsilon_{b-1}n^{b-1}\right)\right]\nonumber\\
        &\subseteq&
        \frac{n}{\abs{S}}\sum_{i\in S}\left[\sum_{j}\trace(\sigma_{j}\tilde{\rho}_{i})t_{b-1}^{ij}\right] \pm \frac{\epsilon_{b-1}n^{b}}{\abs{S}}\sum_{i\in S}\left[\sum_{j=1}^{d^2}\trace(\sigma_{j}\tilde{\rho}_{i}) \right]\nonumber\\
        &\subseteq&
        \frac{n}{\abs{S}}\sum_{i\in S}\left[\sum_{j}\trace(\sigma_{j}\tilde{\rho}_{i})t_{b-1}^{ij}\right] \pm \epsilon_{b-1}\sqrt{2}d(1+\delta)n^{b}\label{1_eqn:recurse2},
    \end{eqnarray}
    where the last statement follows since
    \begin{equation}
        \abs{\sum_{j=1}^{d^2}\trace(\sigma_{j}\tilde{\rho}_{i})}=\abs{\sum_{j=1}^{d^2}\trace\left(\sigma_{j}\left(\sum_{m=1}^{d^2} \tilde{r}_m \sigma_m\right)\right)}\leq 2\sum_{m=1}^{d^2} \abs{\tilde{r}_m}\leq 2d\enorm{\ve{\tilde{r}}}\leq\sqrt{2}d(1+\delta),\label{1_eqn:yetanotherequation}
    \end{equation}
    where $\ve{\tilde{r}}$ denotes the coordinate vector of $\tilde{\rho}_{i}$ with respect to basis $\set{\sigma_m}$, and we have used the facts that $\trace(\sigma_i\sigma_j)=2\delta_{ij}$, that $\onorm{\ve{x}}\leq \sqrt{d}\enorm{\ve{x}}$ for $\ve{x}\in\complex^d$, that $\fnorm{\tilde{\rho}_{i}}=\sqrt{2}\enorm{\ve{\tilde{r}}}$ for any $\tilde{\rho}_{i}\in \HH(\complex^d)$, and that $\fnorm{\tilde{\rho}_{i}}\leq 1+\delta$ (which follows from our $\delta$-net and the triangle inequality). Thus, recalling that $\Delta=\sqrt{2}d(1+\delta)$ and substituting Equation~(\ref{1_eqn:exact}) into Equation~(\ref{1_eqn:recurse2}), we have that
    \begin{equation}
        \frac{n}{\abs{S}}\sum_{i\in S}\left[\sum_{j}\trace(\sigma_{j}\tilde{\rho}_{i})e_{ij}\right]\in
        t_b(\rho_1,\ldots,\rho_n)\pm \left[\cc\left(\sqrt{\frac{f}{g}}+\delta\right)+\epsilon_{b-1}\Delta\right]n^{b}.
    \end{equation}
    We hence have the recurrence relation $\epsilon_b\leq \cc\left(\sqrt{\frac{f}{g}}+\delta\right)+\epsilon_{b-1}\Delta$, which when unrolled yields
    \begin{equation}
        \epsilon_b\leq \cc\left(\sqrt{\frac{f}{g}}+\delta\right)\sum_{m=0}^{b-1}\Delta^m=\cc\left(\sqrt{\frac{f}{g}}+\delta\right)\left(\frac{\Delta^{b}-1}{\Delta-1}\right),
    \end{equation}
    as desired. This concludes the proof of the error bound.

    To prove the probability bound, we show a stronger bound of $1-(\sum_{m=0}^{b-1}d^{2m} n^m)n^{-f}$ by induction on $b$. The base case $b=1$ follows directly from our application of the Sampling Lemma in Equation~(\ref{1_eqn:basecaseerror}). For the inductive step, define for brevity of notation $\gamma:=d^2 n$, and apply the induction hypothesis to line 1(b) of Algorithm~\ref{1_alg:estimate} to conclude that each of the $\gamma$ calls to EVAL fails will probability at most $(\sum_{m=0}^{b-2}\gamma^m)n^{-f}$. Then, by the union bound, the probability that at least one call fails is at most $(\sum_{m=1}^{b-1}\gamma^m)n^{-f}$. Similarly, since our application of the Sampling Lemma in line 2 of Algorithm~\ref{1_alg:estimate} fails with probability at most $n^{-f}$, we arrive at our claimed stronger bound of $1-\left(\sum_{m=0}^{b-1}\gamma^m\right)n^{-f}$, as desired.
\end{proof}

\begin{proof}[\textbf{Proof of Lemma~\ref{1_lem:feasible}}]
    We begin by observing that if one sets $\epsilon=\epsilon_k$, then the value of $\epsilon^\prime$ in line 2(b) of Algorithm~\ref{1_alg:linearize} is precisely $\epsilon_{k-1}$, and more generally, the $\epsilon$ passed into the recursive call of line 2(e) on $t_b$ for any $1\leq b\leq k$ is $\epsilon_b$. Now, focus on some recursive call on $t_b$ for $b>1$ (the case of $b=1$ is straightforward by Lemma~\ref{1_lem:evalbound}). If the estimates $e_{ij}$ in line 2(a) succeed, then by Lemma~\ref{1_lem:evalbound}, we know that $e_{ij}\in t_{b-1}^{ij}(\rho_1,\ldots,\rho_n)\pm \epsilon_{b-1}n^{b-1}$, implying $t_{b-1}^{ij}(\rho_1,\ldots,\rho_n)\in[l_{ij},u_{ij}]$. Now, $l_{ij}$ and $u_{ij}$ are only incorporated into linear constraints in recursive calls on $t_{b-1}^{ij}$, yielding constraints of the form
    \begin{equation}
        l_{i_bj_b}-\epsilon_{b-2} d^2n^{b-1}\leq \sum_{i_{b-1},j_{b-1}}\trace(\sigma_{j_{b-1}}{\rho}_{i_{b-1}})e_{i_{b-1}j_{b-1}}\leq u_{i_bj_b}+\epsilon_{b-2} d^2n^{b-1}.\label{1_eqn:lemma5_1}
    \end{equation}
    But $\set{\rho_1,\ldots,\rho_n}$ must now satisfy this constraint, since recall
    \begin{equation}
        t_{b-1}(\rho_1,\ldots,\rho_n)=\sum_{i_{b-1},j_{b-1}}\trace(\sigma_{j_{b-1}}{\rho}_{i_{b-1}})t_{b-2}^{i_{b-1}j_{b-1}}(\rho_1,\ldots,\rho_n),
    \end{equation}
    and there are $d^2n$ terms $e_{i_{b-1}j_{b-1}}$ in Equation~(\ref{1_eqn:lemma5_1}) each yielding an additional error of at most $\epsilon_{b-2}n^{b-2}$ (assuming EVAL succeeded on $t_{b-2}^{i_{b-1}j_{b-1}}$ in line 2(a)) above and beyond the bounds $t_{b-1}^{ij}(\rho_1,\ldots,\rho_n)\in[l_{ij},u_{ij}]$ we established above.

    We conclude that if, for \emph{all} $b$, $i$, and $j$, EVAL succeeds in producing estimates $e_{i_b}^{ij}$, then $\set{\rho_1,\ldots,\rho_n}$ is a feasible solution for $P_2$, as desired. The probability of this happening is, by the proof of Lemma~\ref{1_lem:evalbound}, at least $1-d^{2k}n^{k-f}$, since EVAL recursively estimates precisely the same terms during its execution\footnote{This holds even though on line 1 of Algorithm~\ref{1_alg:estimate}, we only estimate $d^2\abs{S}$ of the terms $e_{ij}$ (i.e.\ EVAL does not actually estimate \emph{all} terms in the recursive decomposition of $t_k$, as it does not need to) --- this is because in our analysis of the probability bound for Algorithm~\ref{1_alg:estimate}, we actually produced a looser bound by assuming all $n$ terms $e_{ij}$ are estimated.}.
\end{proof}

\begin{proof}[\textbf{Proof of Lemma~\ref{1_lem:guarantee}}]
    We begin by proving that for any recursive call to LINEARIZE on $t_b$ with valid upper and lower bounds $U$ and $L$ (i.e.\ $U,L\neq\infty$), respectively, we have for \emph{any} feasible solution $(\rho_1,\ldots,\rho_n)$ to $P_2$ that
    \begin{equation}\label{1_eqn:miniclaim}
        t_b(\rho_1,\ldots,\rho_n) \in [L,U] \pm d(d+\sqrt{2})\left[\sum_{m=1}^{b-1}(\sqrt{2}d)^{b-1-m}\epsilon_m\right]n^{b}.
    \end{equation}

    We prove this by induction on $b$, following~\cite{AKK99}. For base case $b=1$, the claim is trivial by line 1(b) of the algorithm. Now, assume by induction hypothesis that
    \begin{equation}
        t_{b-1}^{ij}(\rho_1,\ldots,\rho_n)\in [l_{ij},u_{ij}]\pm d(d+\sqrt{2})\left[\sum_{m=1}^{b-2}(\sqrt{2}d)^{b-2-m}\epsilon_m\right]n^{b-1}.
    \end{equation}
    By substituting the values of $l_{ij}$ and $u_{ij}$ from line 2(c), we have
    \begin{equation}
        t_{b-1}^{ij}(\rho_1,\ldots,\rho_n)\in e_{ij}\pm \left(d(d+\sqrt{2})\left[\sum_{m=1}^{b-2}(\sqrt{2}d)^{b-2-m}\epsilon_m\right]+\epsilon_{b-1}\right)n^{b-1}.
    \end{equation}
    We conclude that
    \begin{eqnarray}
        t_b(\rho_1,\ldots,\rho_n) &=& \sum_{ij}\trace(\sigma_{j}\rho_{i})t_{b-1}^{ij}(\rho_1,\ldots,\rho_n)\\
        &\subseteq&        \left[\sum_{ij}\trace(\sigma_{j}\rho_{i})e_{ij}\right]+ \\ &&\left(d(d+\sqrt{2})\left[\sum_{m=1}^{b-2}(\sqrt{2}d)^{b-2-m}\epsilon_m\right]+\epsilon_{b-1}\right)\left[\sum_{ij}\trace(\sigma_{j}\rho_{i})\right]n^{b-1}\\
        &\subseteq&        \left[\sum_{ij}\trace(\sigma_{j}\rho_{i})e_{ij}\right]+\nonumber\\&& \sqrt{2}d\left(d(d+\sqrt{2})\left[\sum_{m=1}^{b-2}(\sqrt{2}d)^{b-2-m}\epsilon_m\right]+\epsilon_{b-1}\right)n^{b}\label{1_eqn:miniclaim2}\\        &\subseteq&        \left[[L,U]\pm \epsilon_{b-1}d^2 n^b\right]+ \sqrt{2}d\left(d(d+\sqrt{2})\left[\sum_{m=1}^{b-2}(\sqrt{2}d)^{b-2-m}\epsilon_m\right]+\epsilon_{b-1}\right)n^{b}\nonumber\\    &\subseteq&        [L,U] \pm d(d+\sqrt{2})\left[\sum_{m=1}^{b-1}(\sqrt{2}d)^{b-1-m}\epsilon_m\right]n^{b},
        \end{eqnarray}
    where the third statement follows from a calculation similar to Equation~(\ref{1_eqn:yetanotherequation}), and the fourth statement from line 3(b) of Algorithm~\ref{1_alg:linearize}. This proves the claim of Equation~(\ref{1_eqn:miniclaim}).

    To complete the proof of Lemma~\ref{1_lem:guarantee}, observe that by Lemma~\ref{1_lem:feasible}, the assignment $\rho^{\operatorname{opt}}$ is feasible for $P_2$ with probability at least $1-d^{2k}n^{k-f}$. Thus, plugging $\rho^{\operatorname{opt}}$ into each of the $d^2n$ linear constraints produced by the recursive calls to LINEARIZE on each $t_{k-1}^{ij}$, we have by Equations~(\ref{1_eqn:miniclaim}) and~(\ref{1_eqn:miniclaim2}) that (with probability $1-d^{2k}n^{k-f}$) for $\optprod=t_k(\rho^{\operatorname{opt}})$,
    \begin{eqnarray}
        t_k(\rho^{\operatorname{opt}})&=&\sum_{ij}\trace\left(\sigma_{j}\rho_{i}^{\operatorname{opt}}\right)t_{k-1}^{ij}(\rho^{\operatorname{opt}})\\
        &\subseteq&\left[\sum_{ij}\trace(\sigma_{j}\rho_{i}^{\operatorname{opt}})e_{ij}\right]\pm \sqrt{2}d\left(d(d+\sqrt{2})\left[\sum_{m=1}^{k-2}(\sqrt{2}d)^{k-2-m}\epsilon_m\right]+\epsilon_{k-1}\right)n^{k}\nonumber\\        &\subseteq&\optt\pm d(d+\sqrt{2})\left[\sum_{m=1}^{k-1}(\sqrt{2}d)^{k-1-m}\epsilon_m\right]n^{k},
                \end{eqnarray}
    where the last statement follows since $\rho^{\operatorname{opt}}$ is not necessarily the optimal solution to $P_2$.
\end{proof}

\noindent \emph{Acknowledgements for this chapter.} We thank Jamie Sikora and Sarvagya Upadhyay for helpful feedback, and Yi-Kai Liu for interesting discussions. We wish to especially thank Oded Regev for many helpful comments and suggestions, and Richard Cleve for bringing our attention to the method of conditional expectations, and for stimulating discussions and support. 

\chapter{Hardness of approximation for quantum problems}\label{chap:hardnessapprox}
\emph{This chapter is based on~\cite{GK12}:}\\

\vspace{-4mm}
\noindent S. Gharibian and J. Kempe. Hardness of approximation for quantum problems.
In \emph{Proceedings of 39th International Colloquium on Automata, Languages and Programming},
pages 387-398, 2012, DOI: 10.1007/978-3-642-31594-7, \copyright~2012 Springer, www.springerlink.com.

\vspace{3mm}
\noindent The polynomial hierarchy plays a central role in classical complexity theory. In this chapter, we define a quantum generalization of the polynomial hierarchy, and initiate its study. We show that not only are there natural complete problems for the second level of this quantum hierarchy, but that these problems are in fact hard to approximate. Using these techniques, we also obtain hardness of approximation for the class QCMA. Our approach is based on the use of dispersers, and is inspired by the classical results of Umans regarding hardness of approximation for the second level of the classical polynomial hierarchy~\cite{U99}. We close the chapter by showing that two variants of the local Hamiltonian problem with hybrid classical-quantum ground states are complete and hard to approximate for the second level of our quantum hierarchy, respectively.

\section{Introduction and results}\label{2_scn:intro}

Over the last decades, the Polynomial Hierarchy (PH)~\cite{MS72}, a natural generalization of the class NP, has been the focus of much study in classical computational complexity. Of particular interest is the second level of PH, denoted $\st$. Here, we say a problem is in $\st$ if it has an efficient verifier with the property that for any YES instance $x\in\set{0,1}^n$ of the problem, \emph{there exists} a polynomial length proof $y$ such that \emph{for all} polynomial length proofs $z$, the verifier accepts $x$, $y$ and $z$. Note that the \emph{alternation} from an existential quantifier over $y$ to a for-all quantifier over $z$ is crucial here -- keeping only the existential quantifier reduces us to NP.

It turns out that introducing such {alternating} quantifiers makes $\st$ a powerful class believed to be \emph{beyond} NP. For example, there exist natural and important problems known to be in $\st$ but not in NP. Such problems range from ``does the optimal assignment to a 3SAT instance satisfy \emph{exactly} $k$ clauses?'' to practically relevant problems related to circuit minimization, such as ``given a boolean formula $C$ in Disjunctive Normal Form (DNF), what is the smallest DNF formula $C'$ equivalent to $C$?'' (see, e.g.~\cite{U99}). The study of $\st$ has also led to a host of other fundamental theoretical results, such as the Karp-Lipton theorem, which states that $\class{NP}\not\subseteq \ppoly$ unless PH collapses to $\st$. $\st$ has even been used to prove that SAT cannot be solved simultaneously in linear time and logarithmic space~\cite{F97,FLMV05}. For these reasons, $\st$ and more generally PH have occupied a central role in classical complexity theoretic research.

Moving to the quantum setting, the study of quantum proof systems and a natural quantum generalization of NP, the class Quantum Merlin Arthur (QMA)~\cite{KSV02}, has been a very active area of research over the last decade. Recall from Section~\ref{0_sscn:classes} that a problem is in QMA if for any YES instance of the problem, there exists a polynomial size \emph{quantum} proof convincing a {quantum} verifier of this fact with high probability. With the notion of quantum proofs in mind, we thus ask the natural question: \emph{Can a {quantum} generalization of $\st$ be defined, and what types of problems might it contain and characterize?} Perhaps surprisingly, to date there are almost no known results in this direction.

\paragraph{Our results:} In this chapter, we introduce a quantum generalization of $\st$, which we call $\cqs$, and initiate its study. Our results include $\cqs$-completeness and $\cqs$-hardness of approximation for a number of new problems we define. Our techniques also yield hardness of approximation for the complexity class known as QCMA. We now describe these results in further detail.\\

\myparagraph{Hardness of approximation for $\cqs$} To begin, we informally define $\cqs$ (see Section~\ref{2_scn:def} for formal definitions).

\begin{definition}[$\cqs$ (informal)]
A problem $\Pi$ is in $\cqs$ if there exists an efficient quantum verifier satisfying the following property for any input $x\in\set{0,1}^n$:
\begin{itemize}
    \item If $x$ is a YES instance of $\Pi$, then {there exists} a \emph{classical} proof $y\in\set{0,1}^{\poly(n)}$ such that {for all} \emph{quantum} proofs $\ket{z}\in\B^{\otimes \poly(n)}$, the verifier accepts $x$, $y$ and $\ket{z}$ with high probability.
    \item If $x$ is a NO instance of $\Pi$, then for all \emph{classical} proofs $y\in\set{0,1}^{\poly(n)}$, there exists a \emph{quantum} proof $\ket{z}\in\B^{\otimes \poly(n)}$ such that the verifier rejects $x$, $y$ and $\ket{z}$ with high probability.
\end{itemize}
\end{definition}

\noindent (Recall here that $\B:=\complex^2$.) We believe this is a natural quantum generalization of $\st$. Here, the prefix $cq$ in $\cqs$ follows since the existential proof is classical, while the for-all proof is quantum. One can also consider variations of this scheme such as $\qqs$, $\qcs$, or $\ccs$ (with a quantum verifier), defined analogously. In this chapter, however, our focus is on $\cqs$, as it is the natural setting for the computational problems for which we wish to prove hardness of approximation. Note also that unlike for $\st$, the definition of $\cqs$ is bounded error -- this is due to the use of a quantum verifier for $\cqs$. This implies, for instance, that the quantum analogue of the classically non-trivial result $\BPP\subseteq\st$~\cite{S83,L83}, i.e.\ $\BQP\subseteq\cqs$, holds trivially. Finally, one can extend the definition of $\cqs$ to an entire hierarchy of quantum classes analogous to PH by adding further levels of alternating quantifiers, attaining presumably different classes depending on whether the quantifier at any particular level runs over classical or quantum proofs.

To next discuss hardness of approximation for $\cqs$, we recall two classical problems  crucial to our work here. First, in the NP-complete problem SET COVER, one is given a set of subsets $\set{S_i}$ whose union covers a ground set $U$, and we are asked for the smallest number of the $S_i$ whose union still covers $U$. If, however, the $S_i$ are represented \emph{succinctly} as the on-set\footnote{By \emph{on-set}, we mean the set of assignments which cause $\phi_i$ to be true.} of a $3$-DNF formula $\phi_i$, we obtain a more difficult problem known as \SSC~(SSC). SSC, along with a related problem \IRR~(IRR), are not just NP-hard, but are $\st$-complete (indeed, they are even $\st$-hard to approximate~\cite{U99}). SSC and IRR are defined as:

\begin{definition}[\SSC~(SSC)~\cite{U99}]
    Given a set $S=\set{\phi_i}$ of $3$-DNF formulae such that $\bigvee_{i\in S}\phi_i$ is a tautology, what is the size of the smallest $S'\subseteq S$ such that $\bigvee_{i\in S'}\phi_i$ a tautology?
\end{definition}

\begin{definition}[\IRR~(IRR)~\cite{U99}]
    Given a DNF formula $\phi=t_1\vee t_2\vee\cdots\vee t_n$, what is the size of the smallest $S\subseteq \set{t_i}_{i=1}^n$ such that $\phi\equiv\bigvee_{i\in S}t_i$?
\end{definition}

Our work introduces and studies quantum generalizations of SSC and IRR. In particular, analogous to the classically important task of circuit minimization, the quantum generalizations we define are arguably natural and related to what one might call ``Hamiltonian minimization'' -- given a sum of Hermitian operators $H=\sum_i H_i$, what is the smallest subset of terms $\set{H_i}$ whose sum approximately preserves certain spectral properties of $H$? We hope that such questions may be useful to physicists in a lab who wish to simulate the simplest Hamiltonian possible while retaining the desired characteristics of a complex Hamiltonian involving many interactions. We remark that at a high level, the connection to $\cqs$ for the task of Hamiltonian minimization is as follows: The classical existential proof encodes the subset of terms $\set{H_i}$, while the quantum for-all proof encodes complex unit vectors which achieve certain energies against $H$. The problem QUANTUM SUCCINCT SET COVER is now defined as follows.

\begin{definition} {\QSSC~(QSSC) (informal)}
    Given a set of local Hamiltonians $\set{H_i}$ such that $\sum_i H_i$ has smallest eigenvalue at least $\alpha$, what is the size of the smallest subset $S$ of the $H_i$ such that $\sum_{H_i\in S} H_i$ has smallest eigenvalue at least $\alpha$? Any subset satisfying this property is called a \emph{cover}.
\end{definition}

As defined in Section~\ref{0_sscn:LH}, a \emph{local Hamiltonian} is a sum of Hermitian operators, each of which acts non-trivially on at most $k\in\Theta(1)$ qubits. Intuitively, the goal in QSSC is to cover the entire Hilbert space using as few interaction terms $H_i$ as possible. Hence, we associate the notion of a ``cover'' with obtaining large eigenvalues, as opposed to small ones, making QSSC a direct quantum analogue of SSC. We remark that since SSC is a classical constraint satisfaction problem, we believe the language of \emph{quantum} constraint satisfaction, i.e.\ Hamiltonian constraints, is a natural avenue for defining QSSC. Our first result concerns QSSC, and is as follows.

\begin{theorem}\label{2_thm:QSSChard}
    QSSC is $\cqs$-complete, and moreover is $\cqs$-hard to approximate within $N^{1-\epsilon}$ for all $\epsilon>0$, where $N$ is the encoding size of the QSSC instance.
\end{theorem}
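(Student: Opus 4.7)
The proof splits naturally into two parts: (i) showing QSSC $\in \cqs$, and (ii) showing QSSC is $\cqs$-hard to approximate within $N^{1-\epsilon}$. The first part will be the warm-up; the hardness-of-approximation portion, which requires a quantum adaptation of Umans' disperser-based reduction~\cite{U99}, will be the technical heart of the argument.

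For the containment QSSC $\in \cqs$, I would have the classical prover send a candidate subset $S \subseteq \{H_i\}$ of size at most $k$, and the quantum prover send a state $|\psi\rangle$ which is intended to be a low-energy witness against $S$. The verifier then uses Kitaev's QMA verification procedure for the local Hamiltonian problem (as recalled in Section~\ref{0_sscn:5LH}), i.e.\ a phase-estimation--style circuit applied to $e^{iH_S}$ where $H_S = \sum_{H_i \in S} H_i$, to estimate $\langle \psi | H_S | \psi \rangle$ to inverse polynomial precision and accept iff this estimate is at least $\alpha - 1/\poly(n)$. In the YES case, the honest $S$ satisfies $H_S \succeq \alpha I$, so every $|\psi\rangle$ passes; in the NO case, for every candidate $S$ the smallest-eigenvector of $H_S$ has energy strictly below $\alpha$ by the promise gap, and thus some $|\psi\rangle$ causes rejection. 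Standard QMA-style error reduction (Section~\ref{0_sscn:classes}) handles the bounded-error requirement.

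The bulk of the work is establishing $\cqs$-hardness of approximation. The plan is to follow Umans' disperser-based reduction from $\st$-SAT to SSC~\cite{U99}, generalizing each step to the quantum setting. First, given an arbitrary $\cqs$ language with verification circuit $V(x, y, |\psi\rangle)$, I would apply Kitaev's circuit-to-Hamiltonian construction from Section~\ref{0_sscn:5LH} separately to each assignment $y$ of the classical proof register, obtaining a $5$-local Hamiltonian $H_y$ whose smallest eigenvalue is small if $\exists |\psi\rangle$ causing $V$ to reject on $(x,y,|\psi\rangle)$, and large otherwise. Crucially, I would work with the \emph{complement} of the output projector so that ``$\forall |\psi\rangle$ the verifier accepts'' corresponds to $\lambda_{\min}(H_y)$ being large, matching the QSSC semantics where a cover is certified by a large minimum eigenvalue. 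Second, I would lay out a family of local Hamiltonian terms indexed by pairs $(y, j)$, where $j$ ranges over nodes of an explicit disperser of appropriate parameters, in such a way that (a) for any ``good'' $y$ (an $\exists$-witness), there is a small collection of terms---essentially the disperser neighbors of $y$---whose sum inherits the large-min-eigenvalue property of $H_y$ and thus forms a cover of size $\poly(n)$; while (b) if no good $y$ exists, the disperser expansion property forces any cover to include terms from many different $y$'s, blowing up the required cover size to $N^{1-\epsilon}$. The dispersers can be taken from explicit constructions such as those in~\cite{U99} or later improvements, tuned so that the parameter trade-off yields the $N^{1-\epsilon}$ gap.

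The main obstacle, and where the argument will diverge most sharply from the classical case, is ensuring that the disperser-based aggregation of the per-$y$ Kitaev Hamiltonians preserves the spectral gap on both sides. Classically, tautology of a disjunction of DNFs is a clean combinatorial property that composes transparently with disperser structure; in the quantum setting, summing Kitaev Hamiltonians with distinct clock registers and distinct ``simulated'' circuits must be done carefully---most likely by sharing a common proof register across all $H_y$ while giving each its own clock and history qubits, and then invoking Kitaev's geometric (or ``nullspace projection'') lemma (Lemma~\ref{l:pluslemma}) to combine the local spectral analyses into a global one. I would also need a quantum analogue of the elementary observation that if \emph{some} $H_y$ in a cover has large minimum eigenvalue, then so does the sum: this is immediate since the $H_y$ are positive semidefinite, but preserving the inverse-polynomial promise gap through both the Kitaev construction and the disperser composition will require a careful accounting of parameters. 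Finally, to obtain hardness for $\cqs$ (rather than just QCMA), I must verify that the reduction composes correctly with the alternation of quantifiers, i.e.\ that $\exists y \forall |\psi\rangle$ of the original problem maps faithfully to $\exists S \forall |\psi\rangle$ of the QSSC instance, which I expect to follow from the per-$y$ Kitaev encoding together with the way the disperser couples the $y$-indexed terms.
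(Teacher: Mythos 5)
Your containment argument (classical prover sends the subset, quantum prover sends a low-energy witness, verifier runs Kitaev's energy-estimation circuit on $H_S$) is essentially the paper's, and is fine. The hardness portion, however, has a genuine gap. You propose to build Hamiltonian terms indexed by pairs $(y,j)$ where $y$ ranges over assignments to the classical proof register. There are $2^{c(n)}$ such assignments, so the resulting instance has exponentially many terms and the map is not a polynomial-time reduction. The paper avoids exactly this trap by factoring the argument through an intermediate problem, QUANTUM MONOTONE MINIMUM WEIGHT WORD: the disperser $G=(L,R,E)$ has an exponentially large left side $L$ (one vertex per classical proof) but a \emph{polynomial-size} right side $R$, and one first builds a cQMA \emph{circuit} $W$ on $\abs{R}$ input bits that decodes a chosen subset of $R$ back to the (at most $2^k=\poly$) candidate proofs $y$ it encodes and runs the original verifier on each in parallel. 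Only after this circuit-level step is Kitaev's construction applied---once, to $W$---producing $n+2$ terms ($n=\abs{R}$): penalty terms $G_i$ that charge history states whose $i$-th classical input bit is $0$, a term enforcing the history-state structure, and the complemented output term. Cover size then tracks the Hamming weight of $W$'s accepted input, and monotonicity of $W$'s accepted set is what makes ``include $G_i$ in the cover'' synonymous with ``set bit $i$ to 1.'' Your sketch never explains how a sum of disperser-indexed terms would cover the whole Hilbert space in the YES case or how cover size would track any combinatorial quantity, and these are precisely the points the intermediate problem is designed to handle.

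Two further issues. First, the basic disperser reduction only yields a gap of $N^{\epsilon}$ for \emph{some} $\epsilon>0$; reaching $N^{1-\epsilon}$ for \emph{all} $\epsilon$ requires the self-composition $W^t$ of the cQMA circuit (which needs a deferred-measurement argument to make composition of bounded-error quantum circuits well-defined) together with the improved dispersers of Ta-Shma, Umans, and Zuckerman---your phrase ``tuned so that the parameter trade-off yields the $N^{1-\epsilon}$ gap'' is hiding a real step. Second, the spectral bookkeeping you correctly anticipate (combining the penalty terms with $\hin+\hprop+\hstab$) is done in the paper not only with Kitaev's Geometric Lemma but also with the Projection Lemma of Kempe, Kitaev, and Regev, applied after weighting the propagation terms by a large factor $\Delta$; without some such weighting the YES-case lower bound $\sum_{i\in T}G_i - \hout \succeq -(\zeta+1)\epsilon I$ on the history-state subspace does not transfer to the full space.
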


\noindent By \emph{hard to approximate}, we mean that any problem in $\cqs$ can be reduced to an instance of QSSC via a polynomial time mapping or Karp reduction such that the gap between the sizes of the optimal cover in the YES and NO cases scales as $N^{1-\epsilon}$. In other words, it is $\cqs$-hard to determine whether the smallest cover size of an arbitrary instance of QSSC is at most $g$ or at least $g'$ for $g'/g\in \Omega(N^{1-\epsilon})$ (where $g'\geq g$). We next define the problem QUANTUM IRREDUNDANT~(QIRR).

\begin{definition} {\QIRR~(QIRR) (informal)}
    Given a set of succinctly described orthogonal projection operators $\set{H_i}$ acting on $N$ qubits, and $\set{c_i\geq 0}\subseteq \reals$, define $H:=\sum_i c_iH_i$. Then, what is the size of the smallest subset $S\subseteq\set{H_i}$ such that for $H'=\sum_{H_i\in S}c_iH_i$, vectors achieving high and low energies against $H$ continue to obtain high and low energies against $H'$, respectively?
\end{definition}

\noindent Here, by a \emph{succinctly} described projector, we mean a possibly non-local operator which is the tensor product of $k$-local projectors for some $k\in\Theta(1)$. This non-local structure naturally generalizes IRR, where the DNF formula is allowed to be non-local. Our next result is the following.

\begin{theorem}\label{2_thm:QIRRhard}
    QIRR is $\cqs$-hard to approximate within $N^{\frac{1}{2}-\epsilon}$ for all $\epsilon>0$, where $N$ is the encoding size of the QIRR instance.
\end{theorem}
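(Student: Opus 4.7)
My plan is to prove Theorem~\ref{2_thm:QIRRhard} in two stages: first establish that $\text{QIRR} \in \cqs$, and then reduce $\text{QSSC}$ (known to be $\cqs$-hard to approximate within $N^{1-\epsilon}$ by Theorem~\ref{2_thm:QSSChard}) to $\text{QIRR}$ via a construction whose size blow-up is quadratic, thereby converting the $N^{1-\epsilon}$ gap into an $N^{1/2-\epsilon}$ gap. This parallels Umans' classical hardness reduction from $\SSC$ to $\IRR$~\cite{U99}, which is our guiding template.

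For containment, I will design a $\cqs$ verifier that takes a classical string $y$ (interpreted as the indicator of a candidate subcover $S \subseteq \{H_i\}$) and a quantum for-all witness $\ket{z}$. On input $\ket{z}$, the verifier probabilistically chooses between (i) rejecting $y$ if $\ket{z}$ achieves an extremal energy against $H$ but fails to do so against $H' := \sum_{H_i \in S} c_i H_i$, or (ii) accepting otherwise. Each of these energy tests reduces to estimating $\bra{z} H_i \ket{z}$ for individual terms, which can be done with a single-qubit projective measurement after a local basis-change circuit, exactly as in Kitaev's proof that $\klh \in \qma$ (see Section~\ref{0_sscn:5LH}). A convex-combination trick over the $H_i$ in $S$ (and their complement) yields inverse-polynomial promise gaps which are then amplified by the standard strong error-reduction machinery for QMA~\cite{MW05}; the classical string $y$ remains small since $|S|\le$ number of terms.

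For the hardness reduction, I will take a QSSC instance $\mathcal{I}$ on $n$ qubits with $r$ local Hamiltonian terms $\{H_i\}$ and construct a QIRR instance $\mathcal{I}'$ whose succinct projectors live on $n + O(\log r)$ qubits and whose number of terms is $\Theta(r^2)$. The idea, following Umans, is to append an auxiliary ``index register'' and, for each term $H_i$, introduce $\Theta(r)$ shifted copies $P_{i,j}$ obtained by tensoring $H_i$-derived projectors with gadget projectors onto basis states of the index register, chosen according to a disperser $D:[r]\times[r]\to[r]$ so that: (a) every small subcover of the original QSSC instance yields a small irredundant subset of the $P_{i,j}$, and (b) any candidate subset of size below the disperser's expansion threshold is witnessed as \emph{not} irredundant by some explicit quantum vector $\ket{z}$, which is constructed as a computational-basis state on the index register tensored with a frustrated/unfrustrated eigenvector for the relevant $H_i$. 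The disperser property, together with the spectral gap promised in the QSSC instance, translates classical ``covering'' into the operator statement that removing any single $P_{i,j}$ changes the set of states achieving extremal energies against $\sum_{i,j} c_{i,j} P_{i,j}$.

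The main obstacle, and where I expect most of the technical effort to concentrate, will be the \emph{quantum soundness} of this reduction: unlike the classical setting, the universal quantifier in $\cqs$ ranges over all unit vectors in a $2^{n + O(\log r)}$-dimensional space, not just computational basis strings, so one must rule out cheating ``superposition'' witnesses that could artificially make a small subset look irredundant. I plan to handle this by making the gadget projectors on the index register mutually orthogonal and diagonal in the computational basis, so that any witness $\ket{z}$ can be dephased onto a classical index without changing expected energies against the $P_{i,j}$; this reduces quantum soundness to a classical combinatorial claim on the residual $n$-qubit Hamiltonian, which is exactly what the disperser-based argument handles. Combining (a) the optimal cover size $k$ of $\mathcal{I}$ mapping to irredundant subsets of size $O(k)$ in $\mathcal{I}'$, with (b) ``non-covers'' of size $\omega(r^{1-\epsilon})$ forcing redundant subsets of size $\omega(r^{2(1-\epsilon)/(2-\epsilon')})$ in $\mathcal{I}'$, produces the desired $N^{1/2-\epsilon}$-gap after setting $N \in \Theta(r^2)$ and absorbing lower-order terms, completing the proof.
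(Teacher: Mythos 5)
Your top-level strategy --- a gap-preserving reduction from QSSC to QIRR whose quadratic size blow-up degrades the $N^{1-\epsilon}$ hardness of Theorem~\ref{2_thm:QSSChard} to $N^{1/2-\epsilon}$ --- is exactly the paper's route (Theorem~\ref{2_thm:QSSCtoQIRR} together with Corollary~\ref{2_cor:qsscampgap}), and the arithmetic converting the gap is fine. However, there are genuine gaps in the construction you sketch. First, you never confront the defining constraint of QIRR: each term must be a scalar multiple of a tensor product of local projectors, whereas the QSSC terms produced upstream are weighted sums of many non-commuting local terms (e.g.\ $G_{n+1}=(\Delta+1)(\hin+\hprop+\hstab)$ consists of $r$ constituent pieces). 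The entire technical content of the paper's reduction is a ``chaperone'' gadget --- extra tag and index qubits carrying mutually exclusive diagonal projectors --- that splits each such sum into $r$ individual projector terms while \emph{forcing} every candidate subset $T'$ of the relevant cardinality to contain all of them (any omission is exposed by an explicit witness such as $\ket{1}_A\ket{\psi}_B\ket{j}_C$). Without such a forcing mechanism you cannot control which constituent pieces a cheating subset retains, and the spectral analysis collapses. Your disperser $D:[r]\times[r]\to[r]$ does not supply this: in the paper, dispersers appear only in the earlier reduction to QMW, where they create the gap in the first place; at the QSSC-to-QIRR stage nothing requires expansion, and it is unclear what property of operator subsets your disperser would certify.

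Second, your soundness argument is too optimistic. Dephasing the witness on a diagonal index register is fine as far as it goes, but it does not reduce quantum soundness to ``a classical combinatorial claim'': the residual witness still ranges over arbitrary unit vectors on the $n$-qubit system register, measured against a sum of non-commuting Hamiltonians. The paper handles this by arranging the reduction so that, after the forcing step, the NO case literally re-runs the QSSC soundness analysis (Equations~(\ref{2_eqn:qirrsound1})--(\ref{2_eqn:qirrsound2})), which in turn rests on Kitaev's Geometric Lemma (Lemma~\ref{l:pluslemma}) and the Projection Lemma (Lemma~\ref{2_lem:projlemma}) --- a spectral, not combinatorial, argument. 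A smaller point: your accounting ``cover size $k\mapsto$ irredundant set of size $O(k)$'' fails if the mandatory forced terms number $\Theta(r)\in\omega(k)$, which is exactly the situation here; the paper must replicate each original term $r$ times precisely so that both thresholds scale as $gr$ and $g'r$ and the ratio survives. (Also, your containment argument for QIRR is unnecessary: the stated theorem asserts only hardness.)
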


\myparagraph{Hardness of approximation for QCMA} The techniques from above can also be used to show hardness of approximation for QCMA. Here, the class QCMA~\cite{AN02} is defined as $\cqs$ with the second (quantum) proof omitted, and can hence be thought of as the first level of our ``$cq$-hierarchy''. By defining the problem QUANTUM MONOTONE MINIMUM SATISFYING ASSIGNMENT (QMSA) (see Section~\ref{2_scn:QCMA}), we show:

\begin{theorem}\label{2_thm:QMSAhard}
        QMSA is QCMA-complete, and moreover is QCMA-hard to approximate within $N^{1-\epsilon}$ for all $\epsilon>0$, where $N$ is the encoding size of the QMSA instance.
\end{theorem}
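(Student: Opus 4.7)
The plan is to adapt Umans' disperser-based technique~\cite{U99} for proving hardness of approximation of the classical Monotone Minimum Satisfying Assignment problem to the quantum single-quantifier setting. First I would formulate QMSA concretely: given a succinctly specified family of local Hamiltonian constraints indexed by assignments $x\in\set{0,1}^N$, where the parameterization is monotone (i.e.\ flipping a $0$ to a $1$ in $x$ only adds constraints to the sum), the task is to find the minimum Hamming-weight $x$ such that the resulting Hamiltonian $H_x$ has smallest eigenvalue at least some threshold $\alpha$. This parallels QSSC but with assignments playing the role Umans' classical MMSA assigns to monotone formulas; the monotone structure is what will eventually allow disperser amplification to go through.

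For containment in QCMA, the classical witness is simply the assignment $x\in\set{0,1}^N$. The verifier deterministically constructs $H_x$ from $x$ and then performs the standard Kitaev-style QMA verification (see Section~\ref{0_sscn:LH}) to confirm $\lmin{H_x}\geq \alpha$; since $H_x$ is a local Hamiltonian, a single round of phase estimation on a suitable proof state suffices, but in QCMA we do not have that extra quantum proof, so instead the verifier would run the BQP simulation directly on the maximally mixed input or, more cleanly, we use the fact from Section~\ref{0_scn:qma} that this inclusion follows analogously to LH $\in$ QMA with only classical advice, since $x$ itself already pins down the Hamiltonian.

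The core technical step is QCMA-hardness with the $N^{1-\eps}$ gap. I would start from an arbitrary language $L\in\QCMA$ with verification circuit $V$ and classical witness length $m$, and use Kitaev's circuit-to-Hamiltonian construction (Section~\ref{0_sscn:5LH}) to build a $5$-local Hamiltonian $H_y$ for each candidate witness $y$ such that $\lmin{H_y}$ is small iff $V$ accepts $(x,y)$ with high probability. This gives a one-to-one style reduction where the QMSA minimum is $1$ in the YES case and $\geq 2$ in the NO case, but only an $O(1)$ gap. To boost the gap, I would compose this with a disperser $D:[N]\times[K]\to[2^m]$ with entropy loss tunable in $\eps$: each bit $x_i$ of the QMSA assignment is associated, via $D$, with a set of potential witnesses $\{D(i,k)\}_k$, and the constraints are constructed so that any cover (small-weight satisfying $x$) must in effect hit every possible witness-family through the disperser. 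Classically this is exactly Umans' amplification trick, and a disperser with the right parameters converts a YES/NO gap of $(1,2)$ into $(\polylog N,\;N^{1-\eps})$.

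The main obstacle I anticipate is preserving the monotone Hamiltonian-sum structure under the disperser composition. Classically the construction stays purely within monotone DNF/CNF, but quantumly one must ensure that (i) turning on additional bits of $x$ only adds positive semidefinite terms to $H_x$, so that $\lmin{H_x}$ is monotone nondecreasing in $x$, and (ii) the circuit-to-Hamiltonian gadgets for different witnesses combine without spurious low-energy history states mixing across witnesses. I expect to address (i) by working exclusively with projector-valued terms $H_j\succeq 0$ as in Definition~\ref{0_def:localhamiltonianproblem}, and (ii) by tagging each witness-gadget's clock register with the disperser image so that inadmissible cross-configurations are energy-penalized via a \hstab-style term as in Section~\ref{0_sscn:5LH}. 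Assuming these gadgets go through, the disperser parameters can be chosen exactly as in~\cite{U99} to yield the claimed $N^{1-\eps}$ inapproximability, and the completeness half of the theorem follows from the containment argument above.
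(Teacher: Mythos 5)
There is a genuine gap, and it starts with the problem definition. In the paper, QMSA is a \emph{circuit-level} problem: it is exactly QMW (Definition~\ref{2_def:qmmww}) with the quantum CHOICE register deleted, i.e.\ one is given a quantum circuit $V$ accepting a non-empty monotone subset of $\set{0,1}^n$ and asked for the minimum Hamming weight of an accepted string. With that definition, containment in \class{QCMA} is immediate (the witness is the low-weight accepted string; the verifier just runs $V$), and the hardness proof is purely the disperser argument of Theorem~\ref{2_thm:qmmwwHard} re-run from an arbitrary \class{QCMA} verifier --- the tree over witness space, the encoding of a witness by the union of neighbor sets along a root-to-leaf path, the breadth-first decoding enabled by the $(k,1/2)$-disperser, and the weight gap between $c(n)2^d$ (YES) and $\abs{R}/2$ (NO) --- followed by the self-composition $W^t$ of Theorem~\ref{2_thm:qmwampgap} and the disperser of~\cite{TUZ07} to reach $N^{1-\eps}$. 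No circuit-to-Hamiltonian construction appears anywhere in this theorem; your Kitaev gadgets, clock-register tagging, and worries about cross-witness history states are artifacts of a reformulation the paper does not make.

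More seriously, your Hamiltonian reformulation breaks the containment half of the theorem. You define a YES instance by the condition $\lmin{H_x}\geq\alpha$, but certifying a \emph{lower} bound on the smallest eigenvalue is the complement of the local Hamiltonian problem: it is a ``for all states'' statement, which is why QSSC sits in $\cqs$ (the universally quantified \emph{quantum} proof handles exactly this) and not in \class{QCMA}. A \class{QCMA} verifier holding only the classical string $x$ has no way to run ``Kitaev-style verification'' to confirm $\lmin{H_x}\geq\alpha$; phase estimation on the maximally mixed state estimates the average energy, not the minimum, and your own hedging in that paragraph reflects the problem. Your hardness sketch also inverts the disperser's role (bits of the QMSA assignment correspond to right-hand vertices of the disperser, not to sets of witnesses indexed through it) and misattributes the gap to a $(1,2)$ eigenvalue separation rather than to Hamming weights of encodings. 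To repair the proposal, drop the Hamiltonian layer entirely, take QMSA to be the circuit problem above, and observe that the proof of Theorem~\ref{2_thm:qmmwwHard} goes through verbatim once the for-all quantum register (and the parallel-repetition-plus-OR bookkeeping it necessitated) is removed.
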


\myparagraph{A canonical $\cqs$-complete problem} Our last results the canonical $\st$-complete problem $\ssat{i}$ and its generalization to the quantum setting. Specifically, given a boolean formula $\phi$, $\ssat{i}$ asks whether:
\begin{equation}
    \exists \ve{x}_1 \forall \ve{x}_2\exists \ve{x}_3 \cdots\forall \ve{x}_i {\rm ~~such~ that~~ } \phi(\ve{x}_1,\ve{x}_2,\ve{x}_3,\ldots,\ve{x}_i)=1.
\end{equation}
Here, we have assumed $i$ is even; for odd $i$, the last quantifier is a $\exists$. The terms $\ve{x}_j$ are vectors of boolean variables. For $i=2$, one can define a natural quantum generalization of this problem, denoted $\cqslh{k}$ and defined in Section~\ref{2_scn:anotherproblem}, using local Hamiltonians whose ground states are tensor products of a classical string and a quantum state. We show:

\begin{theorem}\label{2_thm:anothercomplete}
$\cqslh{3}$ is $\cqs$-complete.
\end{theorem}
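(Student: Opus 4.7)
For containment, my plan is to design a $\cqs$ verifier that, given an instance $(H,a,b)$ of $\cqslh{3}$, receives a classical proof $y$ and a quantum proof $\ket{\psi}$ and then uses the standard local-Hamiltonian verifier of Section~\ref{0_sscn:LH} to estimate $\bra{y,\psi}H\ket{y,\psi}$. Specifically, the verifier samples a random local term $H_j$, runs its associated test circuit on $\ket{\psi}$, and accepts iff the measurement indicates energy at least some threshold $t\in(a,b)$. In the YES case some $y$ satisfies $\lmin{H_y}\geq b$, so every adversarial $\ket{\psi}$ yields $\bra{y,\psi}H\ket{y,\psi}\geq b>t$; in the NO case, for every $y$ the adversary can pick $\ket{\psi}$ achieving energy $\leq a<t$. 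Standard weak error reduction for QMA-style verifiers (parallel repetition on $\ket{\psi}^{\otimes m}$ followed by majority vote, as in Section~\ref{0_sscn:classes}) then amplifies completeness and soundness to values exponentially close to $1$ and $0$.

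For hardness, I plan to adapt Kitaev's circuit-to-Hamiltonian construction of Section~\ref{0_sscn:5LH} to the $\exists y\forall \ket{z}$ alternation of $\cqs$. Given an arbitrary $\cqs$ verifier $V$ taking classical $y$ and quantum $\ket{z}$, I would first assume WLOG (after amplifying $V$ to inverse-exponential error) that $V$ begins by copying the classical-proof register $y$ qubit-by-qubit into a fresh work register via CNOTs, so that no subsequent gate of $V$ ever puts the original $y$-register into superposition. I would then apply Kitaev's construction to the \emph{complement} circuit $\tilde V:=\neg V$ (the circuit that flips $V$'s output bit), obtaining $H=\hin+\hprop+\hout+\hstab$. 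Every term of $H$ then acts diagonally on the $y$-register, so the Hilbert space splits into $y$-indexed invariant subspaces, and all ground states of $H$ have the cq tensor-product form $\ket{y}\otimes\ket{\psi}$. Applying Kitaev's spectral analysis inside each $y$-subspace yields: $\lmin{H_y}$ is small iff there exists $\ket{z}$ making $\tilde V(y,z)$ accept with high probability, equivalently iff some $\ket{z}$ makes $V(y,z)$ \emph{reject}. Therefore $\cqs$-YES ($\exists y\, \forall \ket{z}\colon V$ accepts) corresponds to $\exists y\colon \lmin{H_y}\geq b$, while $\cqs$-NO ($\forall y\, \exists \ket{z}\colon V$ rejects) corresponds to $\forall y\colon \lmin{H_y}\leq a$, which is precisely the promise structure defining $\cqslh{3}$.

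The main obstacle I expect is the locality reduction from Kitaev's native 5-local construction down to 3-local, while preserving the diagonality-in-$y$ that guarantees the cq ground-state structure. I plan to invoke the perturbation-gadget reductions of Kempe and Regev~\cite{KR03} (or Kempe, Kitaev, and Regev~\cite{KKR06}), which reduce 5-LH to 3-LH, and then verify that each gadget introduces only interactions whose action on the $y$-register remains computational-basis-diagonal; since the classical-to-work-register CNOT copy step is itself already diagonal in the $y$-basis and no gadget ever needs to entangle the $y$-register with the work register, this bookkeeping should be routine. A secondary concern is ensuring that the inverse-polynomial Kitaev promise gap survives both the complement step (which is immediate, since $\neg V$ adds only a constant-depth output flip) and the gadget reduction (which is standard), so that the final $b-a$ separation in $\cqslh{3}$ remains at least inverse polynomial in the input size.
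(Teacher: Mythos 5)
Your containment argument and the core of your hardness reduction match the paper's proof almost exactly: the paper likewise prepends a CNOT copy of the classical register, appends a Pauli $X$ to complement the verifier, applies the circuit-to-Hamiltonian construction, and observes that every term acts diagonally on the classical-proof register, so that conjugating by $\Pi_c=\ketbra{c}{c}\otimes I$ yields an effective Hamiltonian $H(c)$ which is precisely Kitaev's Hamiltonian for the hardwired circuit $V_c$; Lemma~\ref{2_lem:kitaev} then gives the YES/NO energy dichotomy. One small caution: you neither need nor should claim that \emph{all} ground states of $H$ have cq product form --- the $\cqslh{3}$ promise only quantifies over states $\ket{x}\otimes\ket{y}$, so the per-block bounds on $\lmin{H_x}$ are all that is required, and in degenerate cases a ground state could superpose across $x$-blocks.

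The one place you genuinely diverge is the locality step, and it is the weakest part of your plan. The paper does not reduce a $5$-local Hamiltonian to a $3$-local one after the fact; it substitutes the Kempe--Regev $3$-local \emph{circuit-to-Hamiltonian construction}~\cite{KR03} from the outset, which works because the rest of the argument never exploits the structure of the clock register or $\hstab$. Your proposed route --- build the $5$-local Kitaev Hamiltonian and then apply a ``perturbation-gadget reduction from 5-LH to 3-LH'' --- mischaracterizes the references: \cite{KR03} is a direct $3$-local construction, not a gadget reduction, while the gadgets of~\cite{KKR06} target $2$-locality and only reproduce the low-energy spectrum approximately, using large-norm terms and fresh mediator qubits. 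To make a gadget route work you would have to verify both that the perturbative effective Hamiltonian still block-diagonalizes over $x$ and that the spectral approximation holds within each block with error small compared to the (already inverse-polynomial) promise gap; that is doable but is real work, not routine bookkeeping. The cleaner fix is the paper's: run your entire argument with the construction of~\cite{KR03} in place of Kitaev's.
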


\noindent Moreover, by defining an appropriate variant of $\cqslh{3}$, denoted $\cqslhmin$ and also defined in Section~\ref{2_scn:anotherproblem}, where the goal is to minimize the Hamming weight of the classical portion of the ground states mentioned above, we obtain the following result.

\begin{theorem}\label{2_thm:anothercomplete2}
    $\cqslhmin$ is $\cqs$-complete, and moreover is $\cqs$-hard to approximate within $N^{1-\epsilon}$ for any $\epsilon>0$, for $N$ the encoding size of the $\cqslhmin$ instance.
\end{theorem}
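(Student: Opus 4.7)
The result has two parts: membership $\cqslhmin \in \cqs$, and $\cqs$-hardness of approximation within $N^{1-\epsilon}$, from which $\cqs$-completeness follows immediately. My plan is to handle membership by a straightforward adaptation of Kitaev's QMA verifier for the local Hamiltonian problem (Section~\ref{0_sscn:5LH}), and to handle hardness of approximation by combining the Karp reduction underlying Theorem~\ref{2_thm:anothercomplete} with a disperser-based gap amplification in the spirit of Umans~\cite{U99} and Theorem~\ref{2_thm:QSSChard}.

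For membership, the verifier receives a classical string $y$ (the existential proof, promised to have Hamming weight at most $k$) and a quantum state $\ket{z}$ (the universal proof). It first deterministically checks that $\hw(y) \leq k$, rejecting otherwise; this classical check enforces the Hamming-weight constraint without disturbing the quantum register. It then runs the standard $\klhh$ verifier of Section~\ref{0_sscn:5LH} on the hybrid state $\ket{y}\otimes\ket{z}$: it selects a local term $H_j$ uniformly at random and uses phase estimation on $\exp(iH_j)$ to estimate $\bra{y,z}H_j\ket{y,z}$, accepting or rejecting according to the usual YES/NO thresholds. The $\exists y\forall z$ structure of the $\cqslhmin$ problem matches the $\cqs$ quantifier structure exactly: YES instances possess a low-weight witness $y^{\ast}$ that keeps the energy below $a$ for every $\ket{z}$, while NO instances leave every low-weight $y$ exposed to some high-energy $\ket{z}$.

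For hardness of approximation, the plan is to start from the reduction in the proof of Theorem~\ref{2_thm:anothercomplete}, which maps an arbitrary $\cqs$ promise problem to a $\cqslh{3}$ instance via a Kitaev circuit-to-Hamiltonian construction in which the classical register of the history state encodes the existential proof $y$ and the quantum register encodes $\ket{z}$. This already yields an exact $\cqs$-hardness reduction for $\cqslhmin$; the task is to amplify the gap between the YES and NO Hamming weights to $N^{1-\epsilon}$. I would do this by inserting a disperser gadget on the classical register: replace the single classical register with a larger collection of classical bits connected through an explicit disperser $D$, and add local Hamiltonian terms that energetically penalize any assignment to the $D$-outputs that is inconsistent with a valid $D$-input. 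The defining combinatorial property of $D$, exploited in the same way as in the proof of Theorem~\ref{2_thm:QSSChard}, is that every small set of outputs can be ``covered'' only by a disproportionately large set of inputs, so in NO instances any low-energy assignment requires a factor $N^{1-\epsilon}$ more active classical bits than in YES instances, while the YES witness $y^{\ast}$ still corresponds to an input of essentially minimal weight.

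The principal obstacle I anticipate is verifying that the disperser-based gap amplification, which is an intrinsically classical construction, interacts correctly with the universal quantum proof $\ket{z}$. One must rule out cheating strategies in which coherent superpositions across the classical register allow a NO instance to masquerade as a low-Hamming-weight YES instance. I expect this to be handled by showing that any low-energy history state in the $\cqslh{3}$ construction is essentially diagonal on the classical register (a structural property already implicit in the proof of Theorem~\ref{2_thm:anothercomplete}), so that the analysis of the disperser reduces to the purely combinatorial argument of Umans~\cite{U99}. Once the quantum-to-classical reduction on the existential register is justified, the standard disperser parameters translate directly into an $N^{1-\epsilon}$ inapproximability ratio, and $\cqs$-completeness of $\cqslhmin$ follows as an immediate corollary.
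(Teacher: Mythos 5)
Your membership argument is fine and matches the paper's (the only detail to watch is that in $\cqslhmin$ the YES case corresponds to \emph{high} energy, so Kitaev's energy-estimation verifier must have its output bit flipped before accepting). The hardness argument, however, has the order of operations inverted, and this creates a genuine gap. You propose to first run the circuit-to-Hamiltonian reduction of Theorem~\ref{2_thm:anothercomplete} and then amplify the Hamming-weight gap by attaching a disperser gadget to the classical register, enforced by ``local Hamiltonian terms that energetically penalize'' encodings inconsistent with the disperser. Two things go wrong. First, consistency with Umans' disperser encoding is not a local condition: deciding whether a subset of the right vertex set $R$ contains the (polynomially large) neighbour set of every vertex on a root-to-leaf path is a global, polynomial-time computation (the breadth-first decoding in the proof of Theorem~\ref{2_thm:qmmwwHard}), so it cannot be expressed as a sum of $O(1)$-local penalty terms without being compiled through a circuit-to-Hamiltonian construction — which is precisely the step you were trying to append the gadget \emph{after}. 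Second, and more structurally, in $\cqslhmin$ the Hamming-weight bound lives in the \emph{quantifier} (``there exists $x$ of weight at most $g$''), not in the energy: adding penalty terms to $H$ changes the spectrum, not the set of classical strings being quantified over, so energetic penalties cannot by themselves manufacture a $g$ versus $g'$ weight gap. Your worry about coherent superpositions on the classical register is a symptom of this misplacement rather than the real obstacle. Also note that the Theorem~\ref{2_thm:anothercomplete} reduction applied to an \emph{arbitrary} $\cqs$ problem does not give ``exact hardness for $\cqslhmin$'' as a starting point, since a generic $\cqs$ problem carries no Hamming-weight promise on its witness.

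The paper's proof avoids all of this by performing the disperser amplification entirely at the circuit level, \emph{before} any Hamiltonian appears: Theorem~\ref{2_thm:qmwampgap} already establishes that QMW — a $\cqma$-circuit problem whose YES/NO cases are separated by witness Hamming weights $g$ and $g'$ with $g'/g\in\Omega(N^{1-\epsilon})$ — is $\cqs$-hard to approximate. One then observes that the reduction in Theorem~\ref{2_thm:anothercomplete} has the special property that the classical proof register of the verification circuit is mapped \emph{identically} onto the classical part $x$ of the candidate ground states $\ket{x}\ket{y}$ of the output Hamiltonian (the circuit merely copies $x$ into an ancilla via CNOTs and never modifies it). Consequently the weight thresholds $g,g'$ of the hard QMW instance transfer verbatim to the $\cqslhmin$ instance, and no fresh analysis of the interaction between the disperser and the universal quantum proof is needed — that interaction was already handled once and for all inside the QMW verifier of Theorem~\ref{2_thm:qmmwwHard}. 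If you reorganize your argument so that the disperser lives inside the $\cqma$ circuit (reduce your $\cqs$ problem to a gap-amplified QMW instance first, then apply the Theorem~\ref{2_thm:anothercomplete} reduction), it becomes the paper's proof.
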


\textbf{Proof ideas:} Our proofs are inspired by the classical work of Umans~\cite{U99,HSUL02}, and are achieved in a few steps. First, we show a \emph{gap-introducing} reduction from an arbitrary $\cqs$ problem to a problem we call QUANTUM MONOTONE MINIMUM WEIGHT WORD (QMW) using \emph{dispersers}~(see e.g.,~\cite{SZ94,TUZ07}). We then show the following \emph{gap-preserving} reductions, where $\leq_K$ denotes a mapping or Karp reduction:
\begin{equation}
    \class{QMW} \leq_K \class{QSSC} \leq_K \class{QIRR}\enspace.
\end{equation}
This yields hardness ratios of $N^\epsilon$ for some $\epsilon>0$. To obtain the stronger results claimed in Section~\ref{2_scn:intro}, we finally apply the gap amplification of Umans~\cite{U99} and improved disperser construction of Ta-Shma, Umans, and Zuckerman~\cite{TUZ07}.

In the classical setting, Umans~\cite{U99,HSUL02} used dispersers to attain hardness of approximation results relative to $\st$ for the classical problems MMWW (the classical version of QMW), SSC and IRR. To extend his techniques to the quantum setting, the most involved aspects of our work are the gap-preserving reductions from QMW to QSSC to QIRR. Here, an intricate balancing act involving carefully defined local Hamiltonian terms is needed to construct operators with the spectral properties required for our reductions. To analyze the resulting sums of non-commuting Hamiltonians, we require heavier machinery, such as the specific structure of Kitaev's local Hamiltonian construction~\cite{KSV02}, the Projection Lemma of Kempe, Kitaev, and Regev~\cite{KKR06}, and the Geometric Lemma of Kitaev~\cite{KSV02}.

Finally, to show $\cqs$-completeness of $\cqslh{3}$, we study the interplay between proofs of a classical-quantum structure and Kempe and Regev's~\cite{KR03} $3$-local Hamiltonian construction. Specifically, a careful analysis reveals that any $\cqs$ verification circuit can be modified in such a way that fixing the value $c$ of its classical proof register leads to an \emph{effective} Hamiltonian $H_c$. We then study the spectrum of $H_c$ to achieve the desired result. Moving on to $\cqslhmin$, hardness of approximation is now attained by combining our reduction for $\cqslh{3}$ with the result that QMW is hard to approximate.

\paragraph{Previous and related work:}

In terms of hardness of approximation, the related question of whether a \emph{quantum} PCP theorem holds is currently one of the biggest open problems in quantum complexity theory (see, e.g.,~\cite{A06,AALV09,A10,H12}). Regarding quantum generalizations of PH, the only previous work we are aware of is that of Yamakami~\cite{Y02}. However, the results of Yamakami are largely unrelated to ours (for example, complete problems are not studied), and the proposed definition of Reference~\cite{Y02} differs from ours in a number of ways: It is based on quantum Turing machines (whereas we work with quantum circuits), allows \emph{quantum} inputs (whereas here, like QMA, the input to a problem is a classical string), and considers quantum quantifiers at each level of the hierarchy (whereas in its full generality our scheme allows alternating between classical and quantum quantifiers between levels as desired).

\paragraph{Significance and open questions:}

The classical polynomial hierarchy plays an important role in classical complexity theory, both as a generalization of NP and as a proof tool in itself. It is hoped that the scheme we propose here for generalizing PH to the quantum setting will find similar applications in quantum complexity theory. Second, the problems we show to be $\cqs$-complete here are arguably natural, and in embodying a generalization of classical circuit minimization or optimization, may hopefully be related to practical scenarios in a lab. Further, although the alternation between classical and quantum quantifiers in $\cqs$ may \emph{a priori} seem odd, the notion of relating a classical proof to, say, subsets of local Hamiltonian terms, and the quantum proof to quantum states achieving certain energies is in itself quite natural, and in our opinion justifies the study of such a combination of quantifiers. Third, with respect to hardness of approximation, since whether a quantum PCP theorem holds remains a challenging open question, it is all the more interesting that one is able to prove hardness of approximation in a quantum setting here using an entirely different tool, namely that of dispersers. We remark that  dispersers and their two-sided analogues, extractors, have been used classically to amplify existing PCP inapproximability results~\cite{SZ94,Z96}. However, as far as we are aware, neither are known to directly yield PCP constructions.

We leave a number of questions open: What other natural problems are complete for $\cqs$ or higher levels? Can we say anything non-trivial about the relationship between $\st$ and $\cqs$? How do the different classes $\cqs$, $\qcs$, $\qqs$, and $\ccs$ relate to each other? Where do the quantum hierarchies obtained by extending $\cqs$ to higher levels sit relative to known complexity classes? We hope the answers to such questions will help establish classes like $\cqs$ as fundamental concepts in the setting of quantum computational complexity.

\paragraph{Organization of this chapter:}
We begin in Section~\ref{2_scn:def} by formally defining the classes and problems studied in this chapter. In Section~\ref{2_scn:approxhardness}, we prove that QSSC and QIRR are hard to approximate for $\cqs$ within $N^\epsilon$; this is further improved in Section~\ref{2_scn:improvements}. Section~\ref{2_scn:QCMA} presents hardness of approximation results for QCMA. We close in Section~\ref{2_scn:anotherproblem} by showing $\cqs$-completeness of $\cqslh{3}$ and $\cqs$-hardness of approximation for $\cqslhmin$.

\section{Definitions}\label{2_scn:def}
We now define relevant classes and problems, and state lemmas which prove useful in our analysis. Throughout our discussion, recall that $\B := \complex^2$, and for a set $S$ of matrices over $\complex$, let $H_S:= \sum_{H_i\in S} H_i$.

We begin with a formal definition of $\cqs$. Recall that a promise problem is a pair $A=(\ayes,\ano)$ such that $\ayes,\ano\subseteq\set{0,1}^\ast$ and $\ayes \cap \ano = \emptyset$.
\begin{definition}[$\cqs$]\label{2_def:cqs}
    Let $A=(\ayes,\ano)$ be a promise problem. We say that $A\in\cqs$ if there exist polynomially bounded functions $t,c,q:\nats\mapsto\nats$, and a deterministic Turing machine $M$ acting as follows. For every $n$-bit input $x$, $M$ outputs in time $t(n)$ a description of a quantum circuit $V_x$ such that $V_x$ takes in a $c(n)$-bit proof $\ket{c}$, a $q(n)$-qubit proof $\ket{q}$, and outputs a single qubit. We say $V_x$ \emph{accepts} $\ket{c}\ket{q}$ if measuring its output qubit in the computational basis yields $1$. Then:
    \begin{itemize}
     \item Completeness: If $x\in \ayes$, then $\exists$ $\ket{c}$ such that $\forall$ $\ket{q}$, $V_x$ accepts $\ket{c}\ket{q}$ with probability $\geq2/3$.
     \item Soundness: If $x\in \ano$, then $\forall$ $\ket{c}$, $\exists$ $\ket{q}$ such that ${V_x}$ rejects $\ket{c}\ket{q}$ with probability $\geq2/3$.
    \end{itemize}
\end{definition}

\noindent Note that the completeness and soundness parameters can be amplified to values exponentially close to $1$. Specifically, we use the standard approach of repeating $V_x$ polynomially many times in parallel (see ``Error reduction for QMA'' in Section~\ref{0_sscn:classes}), except that we only need one copy of the classical register $\mathcal{C}$ for all parallel runs. For any value $c$ placed in $\mathcal{C}$, we think of it as being ``hardwired'' into $V_x$, thus obtaining a quantum verification circuit $V_{x,c}$, which we now apply in parallel to the many copies of the quantum proof. The standard weak error reduction analysis for QMA now applies (see, e.g.~\cite{AN02}).  Throughout this chapter, we refer to this as \emph{error reduction}.

We next define the terms $\XX$ circuit, monotone set, QMW, QSSC, and QIRR.

\begin{definition}[$\XX$ circuit]\label{2_def:cQMA}
    Let $n,m\in\nats^+$. A $\XX$ circuit $V$ is a quantum circuit receiving $n$ bits in an \emph{INPUT} register and $m$ qubits in a \emph{CHOICE} register, and outputting a single qubit $\ket{a}$. We say:
    \begin{itemize}
        \item $V$ \emph{accepts}  $x\in\set{0,1}^n$ in INPUT if for all $\ket{y}\in\B^{\otimes m}$ in CHOICE, measuring $\ket{a}$ in the computational basis yields $1$ with probability at least $2/3$.
        \item $V$ \emph{rejects}  $x\in\set{0,1}^n$ in INPUT if there exists a $\ket{y}\in\B^{\otimes m}$ in CHOICE such that measuring $\ket{a}$ in the computational basis yields $0$ with probability at least $2/3$.
    \end{itemize}
\end{definition}

\begin{definition}[Monotone set]
    A set $S\subseteq\set{0,1}^n$ is called \emph{monotone} if for any $x\in S$, any string obtained from $x$ by flipping one or more zeroes in $x$ to one is also in $S$.
\end{definition}

\begin{definition}[\QMMWW~(QMW)]\label{2_def:qmmww}
    Given a $\XX$ circuit $V$ accepting exactly a non-empty monotone set $S\subseteq\set{0,1}^n$, and integer thresholds $0\leq g\leq g'\leq n$, output:
    \begin{itemize}
        \item YES if there exists an $x\in\set{0,1}^n$ of Hamming weight at most $g$ accepted by $V$.
        \item NO if all $x\in\set{0,1}^n$ of Hamming weight at most $g'$ are rejected by $V$.
    \end{itemize}
\end{definition}

\noindent Note that clearly ${\rm QMW}\in\cqs$.

\begin{definition}[\QSSC~(QSSC)]\label{2_def:qssc}
    Let $S:=\set{H_i}$ be a set of $5$-local Hamiltonians $H_i$ acting on $N$ qubits such that $\sum_{H_i\in S} H_i\succeq \alpha I$ for $\alpha>0$. Then, given $\beta\in\reals$ such that $\alpha-\beta \geq 1$ and integer thresholds $0\leq g\leq g'$, output:
\begin{itemize}
    \item YES if there exists $S^\prime\subseteq S$ of cardinality at most $g$ such that $\sum_{H_i\in S^\prime}H_i\succeq\alpha I$.
    \item NO if for all $S^\prime\subseteq S$ of size at most $g'$, $\sum_{H_i\in S^\prime}H_i$ has an eigenvalue at most $\beta$.
\end{itemize}
    Any $S'$ satisfying the YES case is called a \emph{cover}.
\end{definition}

\noindent Note that requiring $\alpha-\beta\in\Omega(1)$ above is without loss of generality, as any instance of QSSC with gap $1/p(N)$ for $p$ a polynomially bounded function can be modified to obtain an equivalent instance with constant gap by multiplying each $H_i$ by $p(N)$~\cite{W09_2} (see Section~\ref{0_sscn:LH}).

\begin{definition}[\QIRR~(QIRR)]\label{2_def:qirr}
  Given $S:=\set{c_iH_i}$, where each $H_i$ acts on $N$ qubits and is a tensor product of $5$-local orthogonal projection operators and $c_i\geq 0$ are real. Then, given $\alpha,\beta\in\reals$ such that $\alpha-\beta\geq 1$, and integer thresholds $0\leq g\leq g'$, output:
\begin{itemize}
    \item YES if there exists $S'\subseteq S$ of cardinality at most $g$ such that for all $\ket{\psi}\in\B^{\otimes N}$:
         \begin{itemize}
            \item If $\trace(H_S\ketbra{\psi}{\psi})\geq \alpha$, then $\trace(H_{S'} \ketbra{\psi}{\psi})\geq \alpha$, and
            \item If $\trace(H_S\ketbra{\psi}{\psi})\leq \beta$, then $\trace(H_{S'} \ketbra{\psi}{\psi})\leq \beta$.
         \end{itemize}
    \item NO if for all $S'\subseteq S$ of cardinality at most $g'$, there exists a state $\ket{\psi}\in\B^{\otimes N}$ with $\trace(H_S\ketbra{\psi}{\psi})\geq \alpha$ and $\trace(H_{S'} \ketbra{\psi}{\psi})\leq \beta$.
\end{itemize}
\end{definition}

\noindent Roughly, QSSC asks how many local interaction terms in a local Hamiltonian one can discard while maintaining the value of the worst assignment. This is intended to mimic the idea of maintaining a tautology for a $3$-DNF formula in SSC classically. Analogous to the relationship between SSC and IRR, QIRR allows possibly non-local Hamiltonian terms so long as they have a succinct description (this generalizes the use of superconstant arity in IRR) and are projectors up to scalar multiplication (this generalizes the requirement that each term $t_i$ in IRR is an AND of variables). QIRR then asks how many interaction terms can be discarded in a sum of such Hamiltonian terms while ensuring that any assignment $\ket{\psi}$ achieves approximately the same value on both the original and modified Hamiltonians.

Next, the key tool enabling the creation of a gap in our reductions is a \emph{disperser} (see e.g.~\cite{SZ94,TUZ07}).

\begin{definition}[Disperser]\label{2_def:disperser}
    Let $G=(L,R,E)$ be a bipartite graph with $\abs{L}=2^n$, $\abs{R}=2^m$ and left-degree $2^d$. Then, $G$ is called a \emph{$(k,\epsilon)$-disperser} if, for any subset $L'\subseteq L$ of size $\abs{L'}\geq 2^k$, $L'$ has at least $(1-\epsilon)\abs{R}$ neighbors in $R$. Moreover, if for any pair $(v,i)$ for $v\in L$, one can compute the $i$th neighbor of $v$ in time polynomial in $n$, then the disperser is called \emph{explicit}.
\end{definition}

Finally, in this chapter we use the following useful known facts from local Hamiltonian complexity theory. To begin, we have two lemmas used to bound the eigenvalues of a pair of non-commuting operators. The first of these is the Geometric Lemma of Kitaev, which we stated as Lemma~\ref{l:pluslemma} in Section~\ref{0_sscn:5LH}. The second is the Projection Lemma, stated below.

\begin{lemma}[Kempe, Kitaev, Regev~\cite{KKR06}, Projection Lemma]\label{2_lem:projlemma}
    Let $Y=Y_1+Y_2$ act on Hilbert space $\mathcal{H}=\mathcal{S}+\mathcal{S}^\perp$ for Hamiltonians $Y_1$ and $Y_2$. Denote the zero eigenspace of $Y_2$ as $\mathcal{S}$, and assume the $Y_2$ eigenvectors in $\mathcal{S}^\perp$ have eigenvalue at least $J>2\snorm{Y_1}$. Then, for $\lambda(Y)$ the smallest eigenvalue of $Y$ and $Y|_\spa{S}:=\Pi_{\spa{S}}Y\Pi_{\spa{S}}$,
    \begin{equation}
        \lambda(Y_1|_{\mathcal{S}})-\frac{\snorm{Y_1}^2}{J-2\snorm{Y_1}}\leq \lambda(Y)\leq\lambda(Y_1|_{\spa{S}})\enspace.
    \end{equation}
\end{lemma}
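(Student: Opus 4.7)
The plan is to prove the two inequalities separately, with the upper bound being essentially immediate and the lower bound requiring a careful decomposition of the ground state of $Y$ followed by a one-variable optimization.

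For the upper bound, I would simply take $\ket{\psi}\in\spa{S}$ to be a minimizing unit eigenvector of $Y_1|_{\spa{S}}$, noting that since $\spa{S}$ is the zero eigenspace of $Y_2$ we have $Y_2\ket{\psi}=0$, so $\bra{\psi}Y\ket{\psi}=\bra{\psi}Y_1\ket{\psi}=\lambda(Y_1|_{\spa{S}})$, and this is at least as large as $\lambda(Y)$ by the variational characterization of the minimum eigenvalue.

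For the lower bound, I would let $\ket{\phi}$ be a minimizing unit eigenvector of $Y$, and decompose it as $\ket{\phi}=\alpha\ket{\phi_1}+\beta\ket{\phi_2}$ with $\ket{\phi_1}\in\spa{S}$, $\ket{\phi_2}\in\spa{S}^\perp$ unit vectors and $\abs{\alpha}^2+\abs{\beta}^2=1$. The contribution of $Y_2$ is then $\bra{\phi}Y_2\ket{\phi}\geq \abs{\beta}^2 J$, since by hypothesis $Y_2\succeq J\,\Pi_{\spa{S}^\perp}$. For the $Y_1$ piece, expanding the inner product and using $\bra{\phi_1}Y_1\ket{\phi_1}\geq\lambda(Y_1|_{\spa{S}})$, $\bra{\phi_2}Y_1\ket{\phi_2}\geq -\snorm{Y_1}$, and $\abs{2\operatorname{Re}(\alpha^\ast\beta\bra{\phi_1}Y_1\ket{\phi_2})}\leq 2\abs{\alpha}\abs{\beta}\snorm{Y_1}$, together with the trivial inequality $\lambda(Y_1|_{\spa{S}})\geq -\snorm{Y_1}$, yields (with $t=\abs{\beta}^2$)
\begin{equation}
\bra{\phi}Y\ket{\phi}\;\geq\;\lambda(Y_1|_{\spa{S}})+t\bigl(J-2\snorm{Y_1}\bigr)-2\sqrt{t(1-t)}\,\snorm{Y_1}.
\end{equation}

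The main technical step is then to minimize the right-hand side over $t\in[0,1]$: bounding $\sqrt{t(1-t)}\leq\sqrt{t}$ (which costs nothing since we only want a lower bound), substituting $u=\sqrt{t}$, and taking a derivative gives the optimizer $u^\ast=\snorm{Y_1}/(J-2\snorm{Y_1})$, which is legitimate precisely because of the hypothesis $J>2\snorm{Y_1}$ (guaranteeing $u^\ast\in[0,1]$ when $\snorm{Y_1}$ is small relative to $J$, and otherwise the bound is vacuous). Plugging $u^\ast$ back in produces the additive loss $-\snorm{Y_1}^2/(J-2\snorm{Y_1})$, giving the claimed lower bound. The principal obstacle is to notice that the naive decomposition of $\bra{\phi}Y_1\ket{\phi}$ produces a cross term that is not controlled by $\abs{\beta}^2$ alone but only by $\abs{\alpha}\abs{\beta}$; this mismatch between a linear and a quadratic dependence on the ``leakage'' $\abs{\beta}$ is precisely what forces the additive penalty to scale as $\snorm{Y_1}^2/J$ rather than as $\snorm{Y_1}^2/J^2$, and is the reason the hypothesis $J>2\snorm{Y_1}$ (rather than merely $J>\snorm{Y_1}$) is required for the bound to be non-trivial.
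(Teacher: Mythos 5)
The thesis does not prove this lemma---it is imported verbatim from Kempe, Kitaev and Regev \cite{KKR06}---so there is no in-paper proof to compare against; your argument is, however, correct and is essentially the standard proof from that reference: the upper bound via the variational principle applied to a minimizer in $\spa{S}$, and the lower bound by decomposing the ground state of $Y$ across $\spa{S}\oplus\spa{S}^\perp$, bounding the cross term by $2\abs{\alpha}\abs{\beta}\snorm{Y_1}$, and optimizing the resulting one-variable expression. Two small points. First, the trivial inequality you actually need to pass from $(1-t)\lambda(Y_1|_{\spa{S}})-t\snorm{Y_1}$ to $\lambda(Y_1|_{\spa{S}})-2t\snorm{Y_1}$ is $\lambda(Y_1|_{\spa{S}})\leq\snorm{Y_1}$, not $\lambda(Y_1|_{\spa{S}})\geq-\snorm{Y_1}$ as written; both are trivially true, so nothing breaks, but the cited direction is not the one doing the work. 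Second, the worry about whether $u^\ast\in[0,1]$ is unnecessary: since $J-2\snorm{Y_1}>0$, the function $u^2(J-2\snorm{Y_1})-2u\snorm{Y_1}$ is an upward-opening parabola whose global minimum over all real $u$ equals $-\snorm{Y_1}^2/(J-2\snorm{Y_1})$, so the claimed bound holds for every $t\in[0,1]$ with no case distinction, and it is always a valid (if possibly weak) lower bound rather than ever being ``vacuous.''
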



We next briefly review the elements of Kitaev's circuit-to-Hamiltonian construction~\cite{KSV02} which play an important role in this chapter (see in Section~\ref{0_sscn:5LH} for an in-depth treatment). Given a $\cqs$ verification circuit $V=V_L\cdots V_1$ (where without loss of generality, each $V_i$ is a one- or two-qubit unitary) acting on $n$ proof bits (register $A$), $m$ proof qubits (register $B$), and $p$ ancilla qubits (register $C$), recall that this construction outputs a $5$-local Hamiltonian $H$ acting on $A\otimes B\otimes C\otimes D$, where $D$ is a {clock} register consisting of  $L$ qubits.  We then have $H:=\hin+\hout+\hprop+\hstab$, for \emph{penalty} terms as defined below:
\begin{eqnarray}
    \hin&:=&I_{A,B}\otimes
    \left(\sum_{i=1}^{p} \ketbra{1}{1}_{C_i}\right)\otimes \ketbra{0}{ 0}_D\\
    \hout&:=&I_A\otimes\ketbra{0}{0}_{B_1}\otimes
     I_C\otimes \ketbra{ L}{ L}_D\\
    \hprop &:=& \sum_{j=1}^{L} H_j {\rm,~where~ }H_j{\rm ~is ~defined~ as}\\
    && \hspace{-10mm}-\frac{1}{2}V_j\otimes\ketbra{ j}{ {j-1}}_D -\frac{1}{2}V_j^\dagger\otimes\ketbra{{j-1}}{ j}_D +\frac{1}{2}I\otimes(\ketbra{ j}{ j}+\ketbra{ {j-1}}{ {j-1}})_D\\
    \hstab&:=&I_{A,B,C}\otimes\sum_{i=1}^{L-1}\ketbra{01}{01}_{D_i,D_{i+1}}.
\end{eqnarray}
Above, the notation $A_i$ refers to the $i$th qubit of register $A$ (similarly for $B$, $C$, $D$). For any prospective proof $\ket{\psi}$ in $\trace(H\ketbra{\psi}{\psi})$, each penalty term has the following effect on the structure of $\ket{\psi}$: $\hin$ ensures that at time zero, the ancilla register is set to zero as it should be for $V$. $\hout$ ensures that at time step $L$ of $V$, measuring the output qubit causes acceptance with high probability. $\hprop$ forces all steps of $V$ appear in superposition in $\ket{\psi}$ with equal weights. Finally, note that for $\hin$, $\hout$, and $\hprop$ above, time $t$ in clock register $D$ is implicitly encoded in unary as $\ket{1^t0^{L-t}}$ (for $\hstab$ above, register $D$ is already explicitly written in unary); $\hstab$ is thus needed to prevent invalid encodings of time steps from appearing in $D$.

We use two important properties of this construction. First, the null space of $\hin+\hprop+\hstab$ is the space of \emph{history states}, which for arbitrary $\ket{\psi}_{A,B}$ are defined as
\begin{equation}\label{2_eqn:hist}
 \histstate:=\frac{1}{\sqrt{L+1}}\sum_{i=0}^L V_i\cdots V_1 \ket{\psi}_{A,B}\otimes\ket{0}_C\otimes\ket{i}_D.
\end{equation}
For $\cqs$ circuits $V$, it is convenient to define for $c\in\set{0,1}^n$ and $\ket{q}\in\B^{\otimes m}$ the shorthand $\histstatecq{c}{q}:=\histstate$ for $\ket{\psi}=\ket{c}\ket{q}$. The second important property of $H$ we use is that its spectrum is related to $V$ as follows.
\begin{lemma}[Kitaev~\cite{KSV02}]\label{2_lem:kitaev}
    The construction above maps $V$ to $(H,a,b)$ satisfying:
    \begin{itemize}
        \item If there exists a proof $\ket{\psi}$ accepted by $V$ with probability at least $1-\epsilon$, then $\histstate$ achieves $Tr(H\histstateketbra)\leq a$ for $a := \epsilon/(L+1)$.
        \item If $V$ rejects all proofs $\ket{\psi}$, then $H\succeq b I$ for $b\in\Omega\left(\frac{1-\sqrt{\epsilon}}{L^3}\right)$.
    \end{itemize}
\end{lemma}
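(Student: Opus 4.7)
The plan is to establish the two cases separately by essentially specializing Kitaev's construction analysis (as detailed in Section~0.4.6 of the excerpt) to the $\cqs$ setting, where the proof register decomposes as $A\otimes B$ with a classical and quantum part.

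For the YES case, I would construct the history state $\histstate$ from Equation~(\ref{2_eqn:hist}) with $\ket{\psi}$ taken to be the proof accepted by $V$ with probability at least $1-\epsilon$, then evaluate $\trace(H\histstateketbra)$ term-by-term. The terms $\hin$ and $\hstab$ contribute zero because by construction the history state initializes the ancilla register $C$ to $\ket{0}$ at clock time $0$ and the clock register $D$ is always in a valid unary encoding of some time step $t \in \{0,\ldots,L\}$. For $\hprop$, a direct computation shows that history states lie in its null space: applying the change of basis $W=\sum_t V_t\cdots V_1 \otimes \ketbra{t}{t}_D$ sends $\hprop$ to $I_{A,B,C}\otimes E_D$ with $E$ tridiagonal having $\ket{\gamma}=\frac{1}{\sqrt{L+1}}\sum_t \ket{t}_D$ as a zero eigenvector, and $W\histstate$ is exactly $\ket{\psi}_{A,B}\ket{0}_C\ket{\gamma}_D$. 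The only nonzero contribution comes from $\hout$, which only acts on the $t=L$ slice of the history state and hence picks up a factor $\frac{1}{L+1}$ times the rejection probability of $V$ on $\ket{\psi}$; this gives the bound $\epsilon/(L+1)$.

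For the NO case, I would mimic Kitaev's divide-and-conquer argument from Section~0.4.6 of the excerpt. Split $H = \hin+\hout+\hprop+\hstab$ into the subspaces $\spa{S}=\B^{\otimes m}\otimes \B^{\otimes N-m}\otimes \complex^{L+1}\subseteq \B^{\otimes m}\otimes \B^{\otimes N-m}\otimes \B^{\otimes L}$ of valid unary clock states and its orthogonal complement $\spa{S}^\perp$. On $\spa{S}^\perp$, $\hstab$ is a sum of commuting projectors with minimum nonzero eigenvalue $\geq 1$, while $\hin+\hout+\hprop\succeq 0$, yielding $H|_{\spa{S}^\perp}\succeq I$. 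On $\spa{S}$, $\hstab|_{\spa{S}}=0$ and I would apply Kitaev's Geometric Lemma (Lemma~\ref{l:pluslemma}) to $A_1 := (\hin+\hout)|_{\spa{S}}$ and $A_2 := \hprop|_{\spa{S}}$. The estimates needed are: (i) the smallest positive eigenvalue bound $v\in\Omega(1/L^2)$, following from $A_1$ being a sum of commuting projectors (eigenvalues $\geq 1$) and from the spectrum $1-\cos(\pi k/(L+1))$ of the tridiagonal operator $E$ after the $W$ change of basis (smallest positive eigenvalue $\geq c/L^2$); and (ii) the angle bound $\sin^2[\alpha(\mathcal{L}_1,\mathcal{L}_2)/2]\geq (1-\sqrt{\epsilon})/(4(L+1))$.

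The main technical step, and the one I expect to be the principal obstacle, is step (ii) -- bounding the angle between the null spaces $\mathcal{L}_1$ of $A_1$ and $\mathcal{L}_2=\B^{\otimes m}\otimes \B^{\otimes N-m}\otimes \operatorname{span}(\ket{\gamma})$ of $A_2$. Here I would exploit the explicit decomposition of $\mathcal{L}_1$ into initialization-valid, intermediate-time, and output-valid subspaces, and expand $\bra{y}\Pi_{\mathcal{L}_1}\ket{y}$ for $\ket{y}=\ket{\zeta}_{A,B,C}\otimes\ket{\gamma}_D\in\mathcal{L}_2$ as a sum of three projector contributions. The intermediate times contribute $(L-1)/(L+1)$ trivially. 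For the endpoints, letting $\mathcal{K}_1 = \B^{\otimes m}\otimes\B^{\otimes m}\otimes\ket{0}^{\otimes p}$ and $\mathcal{K}_2 = V^\dagger(\ket{1}\otimes\B^{\otimes N-1})$, I would use Equation~(\ref{eqn:5}) from the Geometric Lemma together with the crucial observation that $\cos^2\varphi(\mathcal{K}_1,\mathcal{K}_2)$ equals the maximum probability over proofs $\ket{\psi}$ that $V$ accepts -- which is at most $\epsilon$ by the NO-case hypothesis. Summing the three contributions gives $\cos^2\alpha(\mathcal{L}_1,\mathcal{L}_2)\leq 1-(1-\sqrt{\epsilon})/(L+1)$, and combining with $\sin^2(x/2)\geq \frac{1}{4}\sin^2 x$ yields the claimed angle bound. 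Plugging $v$ and the angle bound into Lemma~\ref{l:pluslemma} then gives $H|_{\spa{S}}\succeq \Omega((1-\sqrt{\epsilon})/L^3) I$, and taking the minimum of the two subspace bounds completes the lemma with $b\in\Omega((1-\sqrt{\epsilon})/L^3)$.
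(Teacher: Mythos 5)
Your proposal is correct and follows essentially the same route as the paper's own proof of this lemma (given in Section~\ref{0_sscn:5LH}): the history-state energy calculation for the YES case, and for the NO case the split into the valid-clock subspace $\spa{S}$ and its complement, followed by the Geometric Lemma applied to $\hin+\hout$ and $\hprop$ with the same parameters $v\in\Omega(1/L^2)$ and $\sin^2[\alpha(\nl,\nll)/2]\geq(1-\sqrt{\epsilon})/(4(L+1))$. The only blemish is a harmless typo in your definition of $\mathcal{K}_1$ (the first tensor factor should span the full proof register, not $\B^{\otimes m}\otimes\B^{\otimes m}$); everything else matches the paper's argument.
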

\section{Hardness of approximation for $\cqs$}\label{2_scn:approxhardness}

We now show hardness of approximation for $\cqs$ for the problems QMW, QSSC, and QIRR. We begin with a gap-introducing reduction from an arbitary problem in $\cqs$ to QMW. We remind the reader that the hardness ratios obtained here are further strengthened in Section~\ref{2_scn:improvements}.

\begin{theorem}\label{2_thm:qmmwwHard}
    There exists a polynomial time reduction which, given an instance of an arbitrary $\cqs$ problem, outputs an instance of QMW with thresholds $g$ and $g'$ satisfying $g'/g\in \Theta(N^\epsilon)$ for some $\epsilon>0$, where $N$ is the encoding size of the QMW instance.
\end{theorem}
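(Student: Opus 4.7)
The plan is to adapt Umans' classical reduction from $\ssat{2}$ to MMWW~\cite{U99} to the quantum setting. First, I would apply standard $\cqs$ error reduction by parallel repetition (sharing a single classical register across copies), producing a verifier $V$ for which in the YES case some classical proof $c^{\ast}$ is accepted with probability at least $1-2^{-s(n)}$ against every quantum proof, and in the NO case every classical proof $c$ admits a quantum proof $\ket{q_c}$ causing rejection with probability at least $1-2^{-s(n)}$, where $s(n)$ is a suitably large polynomial. Second, I would invoke an explicit $(k,\epsilon)$-disperser $G$ with $|L|=N=\poly(n)$ left vertices, $|R|=2^{p(n)}$ right vertices (matching the amplified classical proof length), left-degree $2^d=\poly(n)$, and constant $\epsilon$, using the explicit constructions of~\cite{TUZ07,SZ94}. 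Third, I would build a cQMA circuit $V'$ on $N$ classical input bits which, given a quantum proof of the form $\ket{\ell}\ket{j}\ket{q}$ with $\ell\in L$ and $j\in[2^d]$, measures $(\ell,j)$, computes the disperser's $j$-th neighbor $c(\ell,j)\in R$, and either accepts outright when $X_\ell=1$ or runs the amplified $V$ on $(c(\ell,j),q)$ when $X_\ell=0$. With this polarity, enlarging $X$ only removes verification constraints, so the accepting set is upward-monotone.

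The analysis then proceeds in two parts. For completeness (YES case), I would use the $\cqs$ witness $c^{\ast}$ to construct a small-weight $X^{\ast}$ by placing $1$'s at exactly those $\ell\in L$ whose disperser neighborhood intersects ``bad'' classical proofs of the $\cqs$ verifier. To make $X^{\ast}$ genuinely of small Hamming weight, this requires that the collection of good classical proofs be dense enough that the complementary bad set has only a small image under $G$; I would ensure this by a preliminary \emph{robust amplification} of $V$ via an error-correcting code, so that many classical proofs in a neighborhood of $c^{\ast}$ are accepted. For soundness (NO case), suppose some $X$ of Hamming weight at most $g'$ were accepted by $V'$. Then $|\{\ell : X_\ell=0\}|\geq N-g'$; choosing $k$ so that $N-g'\geq 2^k$, the disperser property forces $|N(\{\ell:X_\ell=0\})|\geq(1-\epsilon)|R|$, i.e.\ nearly every classical proof is activated. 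Since the amplified NO case supplies a rejecting quantum witness for every classical proof, the adversary can then select $(\ell,j,q)$ triggering rejection on one activated branch, contradicting acceptance. Balancing the disperser parameters with the amplification overhead and taking $k=\Theta(\log N)$ yields the claimed ratio $g'/g\in\Theta(N^{\epsilon'})$ for some fixed $\epsilon'>0$.

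The main obstacle I anticipate is the completeness half of the argument rather than the soundness half: while the disperser calculation for soundness is essentially Umans' classical argument ported over (with the quantum proof handled entirely by amplification of $V$), guaranteeing a small-weight witness $X^{\ast}$ in the YES case requires genuinely enlarging the set of good classical proofs from a single existential witness $c^{\ast}$ to a dense neighborhood, and calibrating the disperser so that neighborhoods are compatible with this enlarged good set. A secondary technical point will be to verify that standard $\cqs$ error reduction commutes with the circuit's classical read of $(\ell,j)$ and with the disperser gadget, so that the amplified completeness and soundness bounds survive composition inside $V'$.
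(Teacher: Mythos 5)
Your construction inverts the roles of the two sides of the disperser relative to what the reduction needs, and with your parameters the disperser cannot exist. You take $|L|=N=\poly(n)$ (the input bits of the cQMA circuit) and $|R|=2^{p(n)}$ (the classical proofs), with left-degree $2^d=\poly(n)$. The disperser property you invoke in the soundness step --- that a subset of $L$ of size $2^k\leq N$ has at least $(1-\epsilon)|R|$ neighbours --- is impossible: the total number of edges leaving $L$ is $N\cdot 2^d=\poly(n)$, whereas $(1-\epsilon)|R|$ is exponential in $n$. Independently of this, the completeness half fails for the reason you yourself flag: in a YES instance there may be only a single good classical proof $c^{\ast}$ among the $2^{p(n)}$ possible ones, so every $\ell$ having any bad neighbour must be set to $1$, i.e.\ essentially all of $X$; no error-correcting-code amplification can turn one existential witness into a $(1-\epsilon)$-dense subset of $\set{0,1}^{p(n)}$, which is what your covering argument would require. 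Note also that because you place $(\ell,j)$ in the universally quantified quantum register, soundness is trivially satisfied by any $X\neq 1^N$ (the adversary points at any $\ell$ with $X_\ell=0$ and supplies the rejecting quantum proof for $c(\ell,j)$), so the construction yields no completeness/soundness gap at all.

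The paper's reduction works the other way around. The left side $L$ has $2^{c(n)+1}$ vertices arranged as a complete binary tree whose leaves are the classical proofs, and the right side $R$ (of size $2^{k+d-O(1)}$ with $k=\gamma\log c(n)$) supplies the input bits of the QMW circuit $W$. A subset $R'\subseteq R$ \emph{encodes} a leaf $x$ if it contains the neighbourhoods of all vertices on the root-to-$x$ path; $W$ accepts outright if $|R'|>|R|/2$, and otherwise \emph{decodes} $R'$ and runs $V$ in parallel on the at most $2^k=c(n)^{\gamma}$ encoded leaves (this bound, and hence efficient decoding by pruned breadth-first search, is exactly what the disperser property with $\epsilon=1/2$ buys), coherently ORing the outcomes against separate blocks of the quantum proof. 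Completeness is then immediate --- encoding the good $c^{\ast}$ costs only about $c(n)2^d$ ones --- while in the NO case acceptance forces $|R'|>|R|/2\approx c(n)^{\gamma}2^d$, giving the ratio $\Omega(c(n)^{\gamma-1})=\Omega(N^{\epsilon})$. The disperser is used to make decoding efficient and to cap how many leaves a sub-half-weight input can encode; it is not used for the covering argument you describe, and no densification of the set of good classical proofs is needed.
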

\begin{proof}
The reduction follows Theorem 1 of Umans~\cite{U99} closely; the points where we deviate from~\cite{U99} are explicitly noted. Let $\Pi$ be an instance of an arbitrary promise problem $A=(\ayes,\ano)$ in $\cqs$ with encoding size $n$, and whose verification circuit $V$ has a $c(n)$-bit existential proof register and a $q(n)$-qubit for-all proof register. We wish to map $\Pi$ to a \cqma~circuit $W$ for QMW such that $W$ accepts strings of small or large Hamming weight depending on whether $\Pi\in\ayes$ or $\Pi\in\ano$, respectively. To do so, we follow~\cite{U99} and construct an explicit $(k,1/2)$-disperser $G=(L,R,E)$ with left-degree $2^d$ using Reference~\cite{SZ94}, where $\abs{L}=2^{c(n)+1}$, $\abs{R}=2^{k+d-O(1)}$, and $k:=\gamma\log c(n)$ for $\gamma\in\Theta(1)$ to be set as needed. Note that the value of $d$ depends on the specific disperser construction used --- for the construction of ~\cite{SZ94}, we have $d=4k+O(\log n)$.
Roughly, the idea of Umans is now to have $L$ correspond to assignments for the $c(n)$-bit classical register of $V$, and $R$ to assignments for the classical register of $W$ (in the setting of~\cite{U99}, note that $W$ is a classical circuit). We then \emph{encode} assignments from $L$ by instead choosing neighbor sets in $R$. By exploiting the properties of dispersers, one can ensure that the sizes of the neighbor sets in $R$ chosen vary widely between YES and NO cases for $\Pi$.

Specifically, imagine the vertices in $L$ are arranged into a complete binary tree whose $2^{c(n)}$ leaves denote the $2^{c(n)}$ possible assignments to $V$'s classical register. For convenience, we henceforth use $L$ to mean this tree. Now, let $x\in\set{0,1}^{c(n)}$ denote a leaf of $L$. Then, a subset of vertices $R'\subseteq R$ is said to \emph{encode} $x$ if it contains the union of the neighbor sets of all vertices in the unique path from the root of $L$ to $x$. Figure~\ref{2_fig:disp} illustrates this encoding scheme. How do the vertices of $R$ then relate to $W$? Each vertex $r\in R$ corresponds to an input bit of $W$ -- setting this $r$th bit to one means we ``choose'' vertex $r$.

\begin{figure}[t]\centering
  \includegraphics[height=5cm]{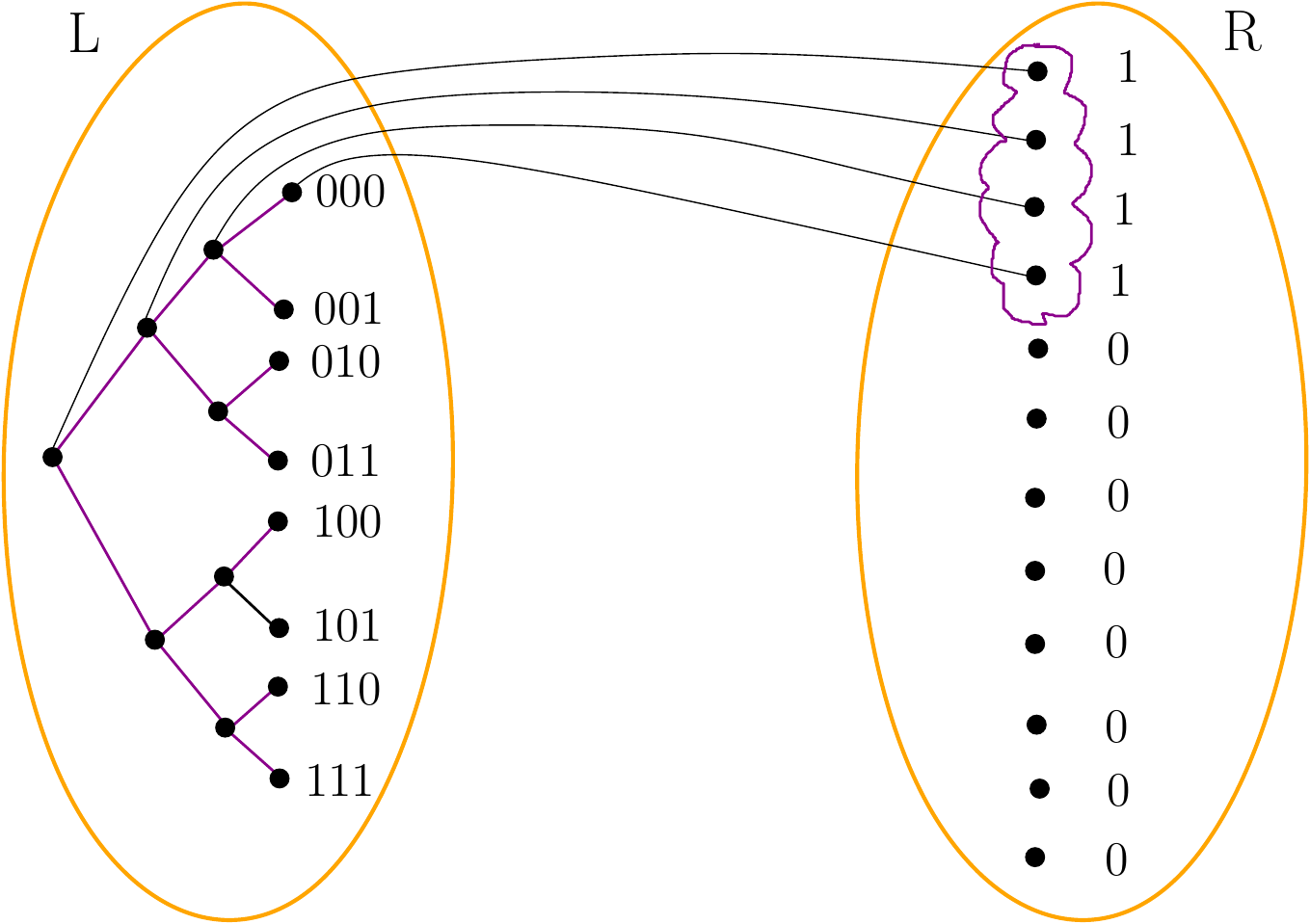}
  \caption{Here, the string $11110\cdots 0$ in $R$ encodes the string $000$ in $L$. (Note: This graph is not a disperser, but nevertheless illustrates the encoding scheme.)}\label{2_fig:disp}
\end{figure}

With the encoding scheme defined, we now construct the $\cqma$ circuit $W$. Given $y$ and $\ket{z}$ to its INPUT and CHOICE registers, respectively, $W$ acts as follows: (a) If $y$ corresponds to a subset $R_y\subseteq R$ such that $\abs{R_y}>\abs{R}/2$, then $W$ sets its output qubit to one. (b) If $\abs{R_y}\leq\abs{R}/2$, then $W$ first \emph{decodes} $R_y$ to obtain the set of leaves $L_y\subseteq L$. Roughly, it then outputs one if there exists $x\in L_y$ causing $\Pi$'s verification circuit $V$ to output one when fed the proofs $x$ and $\ket{z}$. These last two steps require further clarification, which we now provide.

First, given $R_y\subseteq R$, decoding it to obtain the set of leaves $L_y\subseteq L$ might \emph{a priori} require exponential time, as recall $\abs{L}=2^{c(n)+1}$. This, however, is precisely where dispersers play their part: Since we set $\epsilon=1/2$ in constructing our disperser, we know that for any $S\subseteq R$ with $\abs{S}\leq \abs{R}/2$, there are at most $2^k=c(n)^\gamma$ vertices in $L$ whose neighbor sets are completely contained in $S$. Thus, by starting at the root of $L$ and performing a breadth-first-search down the tree (where we prune any branches along which we encounter a vertex whose neighbor set is not contained in $R_y$, as by definition such vertices cannot encode any leaf $x$), we can efficiently decode $R_y$ to obtain $L_y$ while visiting only polynomially vertices in $L$. It remains to specify how $W$ checks whether there exists an $x\in L_y$ causing $V$ to accept, and here we must deviate from Umans' construction.

First, if $\abs{L_y}=1$, our task is straightforward -- simply run $V$ as a black box on proofs $x\in L_y$ and $\ket{z}$, and output the result. Then, $W$ outputs one with probability at least $2/3$ on input $y$ for all quantum proofs $\ket{z}$ if and only if $V$ also does so on proofs $x$ and $\ket{z}$. If , however, $\abs{L_y}>1$, a more involved construction of $W$ is necessary. Here, $W$ takes three inputs: a classical description of $V$, an $\abs{R}$-bit string $y$ to denote subsets in $R$, and a $2^kq(n)$-qubit proof $\ket{z}$. Then, for the $i$th candidate string $x_i\in L_y$, $W$ feeds $x_i$ and the $i$th block of $q(n)$ proof qubits of $\ket{z}$ into $V$. (If $\abs{L_y}<2^k$, we simply re-use values of $x\in L_y$ in the leftover parallel runs of $V$.) $W$ then coherently computes the OR of the output qubits of all parallel runs of $V$ and outputs this qubit as its answer.

Let us briefly justify why this works. For simplicity, assume the quantum proof to W can be written $\ket{z}=\ket{z_1}\otimes\cdots\otimes\ket{z_{2^k}}$; entangled proofs can be shown not to pose a problem via the same proof technique used in standard error reduction~\cite{AN02}. Now, if there exists an $x_i\in L_y$ causing $V$ to accept for all quantum proofs, then in the $i$th parallel run of $V$ in $W$ corresponding to $x_i$, $V$ outputs $1$ with probability at least $2/3$ on any $\ket{z_i}$, implying $W$ outputs $1$ with probability at least $2/3$. Conversely, if for all $x_i\in L_y$, there exists a quantum proof $\ket{z_i}$ rejected by $V$, then by standard error reduction for $V$ and the union bound, the state $\ket{z}=\ket{z_1}\otimes\cdots\otimes\ket{z_{2^k}}$ causes $W$ to output $1$ with probability at most $1/3$, as required.

Following Reference~\cite{U99} again, we now argue that $W$ accepts a \emph{non-empty monotone} set, and we analyze the hardness gap introduced by this reduction. The first of these is simple -- namely, $W$ accepts a set $R'\subseteq R$ if either $\abs{R}>\abs{R/2}$, in which case it also accepts any $R''\supseteq R'$, or if $R'$ encodes some $x\in L$ accepted by $V$, in which case any $R''\supseteq R'$ would also encode $x$ and hence be accepted. As for the gap, if $x\in L$ is an accepting assignment for $V$ when $\Pi\in\ayes$, then to encode $x$ using a subset of $R$ requires at most $c(n)2^d$ vertices in $R$, where recall $2^d$ is the left-degree of our disperser. On the other hand, if $\Pi\in\ano$, then the only way for $W$ to accept is to choose $R'\subseteq R$ with $\abs{R'}>\abs{R}/2\approx c(n)^\gamma2^d$. This yields a hardness ratio of $\Omega(c(n)^{\gamma-1})$. Since $W$'s encoding size $N$ is polynomial in $c(n)$, there exists some $\epsilon>0$ such that the ratio produced is of order $N^\epsilon$, as desired.
\end{proof}

We next show a gap-preserving reduction from QMW to QSSC. Its proof requires Lemmas~\ref{2_lem:y2bound} and~\ref{2_lem:ubound}, which are stated and proven subsequently.

\begin{theorem}\label{2_thm:qmmwwToQssc}
    QSSC is in $\cqs$. Further, there exists a polynomial time reduction which, given an instance of QMW with thresholds $f$ and $f'$, outputs an instance of QSSC with thresholds $g=f+2$ and $g'=f'+2$, respectively.
\end{theorem}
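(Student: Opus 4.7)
First I would show QSSC $\in\cqs$. On input $(\{H_i\},\alpha,\beta,g,g')$, the $\cqs$ verifier expects a classical proof $S'$ encoding a subset of $\{H_i\}$ of size at most $g$ and a quantum for-all proof $\ket{\psi}$. Using Kitaev-style phase estimation on the $5$-local Hamiltonian $H_{S'}=\sum_{H_i\in S'}H_i$, it estimates $\bra{\psi}H_{S'}\ket{\psi}$ to accuracy below the promise gap $(\alpha-\beta)/2\geq 1/2$, and accepts iff $|S'|\leq g$ and the estimate exceeds $(\alpha+\beta)/2$. In the YES case, a valid cover $S'$ makes $H_{S'}\succeq\alpha I$, so every for-all $\ket{\psi}$ gives an estimate above threshold. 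In the NO case, each $S'$ of size $\leq g$ has some eigenvector of eigenvalue at most $\beta$, which the prover sends as the for-all proof to cause rejection.

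For the reduction QMW $\to$ QSSC, I start from an instance $(V,f,f')$ with $\cqma$ circuit $V$ on INPUT register $A$ of $n$ classical bits and CHOICE register $B$ of $m$ qubits. After standard parallel-repetition error reduction so that both error probabilities of $V$ are inverse-exponentially small, I apply Kitaev's circuit-to-Hamiltonian construction (Section~\ref{0_sscn:5LH}) to $V$ on registers $A\otimes B\otimes C\otimes D$, where $C$ is an ancilla and $D$ is a unary clock of length $L=\poly(n)$, producing $\hin,\hprop,\hstab,\hout$. The QSSC instance will consist of $n+2$ appropriately scaled $5$-local Hamiltonians: a propagation term $H_{\mathrm{struct}}:=J(\hin+\hprop+\hstab)$ with $J$ large enough for the Projection Lemma to bite, an output-check term $H_{\mathrm{ver}}:=K\hout$ with $K$ chosen so the required gap $\alpha-\beta\geq 1$ survives the rescaling, and one bit-selection Hamiltonian $H_i$ per $i\in[n]$ of the form $\propto\ketbra{b}{b}_{A_i}\otimes\ketbra{0}{0}_{D_1}$ that ties the value of $A_i$ at clock time $0$ to whether $H_i$ is in the purported cover. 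All terms are $5$-local (in fact $3$-local), and each is positive semidefinite with operator norm at most~$1$ after the rescaling.

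The heart of the argument then proceeds in three steps. First I would show that any cover must contain both $H_{\mathrm{struct}}$ and $H_{\mathrm{ver}}$: omitting $H_{\mathrm{struct}}$ admits low-energy states outside the history-state subspace (with invalid clock configurations on which all other terms vanish), while omitting $H_{\mathrm{ver}}$ admits history states $\ket{\eta_{x,y}}$ built from an accepting input-witness pair, which incur no penalty from the $H_i$'s on their accepting-bit pattern. This pins any cover to the form $\{H_{\mathrm{struct}},H_{\mathrm{ver}}\}\cup T$ with $T\subseteq\{H_1,\ldots,H_n\}$. Second, with $H_{\mathrm{struct}}$ included, the Projection Lemma (Lemma~\ref{2_lem:projlemma}) reduces $\lambda_{\min}(H_{S'})$ to $\lambda_{\min}\bigl((H_{\mathrm{ver}}+\sum_{i\in T}H_i)|_{\mathrm{hist}}\bigr)$ up to error $O(1/J)$; on a history state $\ket{\eta_{x,y}}$ this restricted operator evaluates to $\tfrac{1}{L+1}\bigl[K\Pr[V(x,y)\text{ rejects}] + |\{i\in T : \text{bit-}i\text{ constraint violated by }x\}|\bigr]$, which is a purely combinatorial expression in $(x,y)$. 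Third, I would combine this with the fact that $V$ accepts a monotone set to argue that the minimum cover size equals $2+$(minimum Hamming weight of an accepting string of $V$), giving the threshold map $(f,f')\mapsto(f+2,f'+2)$. The scaling constants $J,K$ and the thresholds $\alpha,\beta$ are then fixed so the Projection-Lemma and phase-estimation errors are absorbed inside the gap $\alpha-\beta\geq 1$.

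\textbf{Main obstacle.} The delicate step is designing the bit Hamiltonians $H_i$ so the ``for all $\ket{\psi}$, $\bra{\psi}H_{S'}\ket{\psi}\geq\alpha$'' cover condition in QSSC matches the ``there exists a low-Hamming-weight accepting $x$'' condition in QMW, rather than some easier combinatorial invariant (for instance a minimum hitting set of the minimal accepting strings, which can be strictly smaller than the minimum-weight accepting string). The naive orientation ``include $H_i$ asserts $A_i=1$'' collapses because upward-closure of the accepting set forces $1^n$ to be accepted, so no cover exists; the reverse orientation ``include $H_i$ asserts $A_i=0$'' yields $\min\text{-cover}=2+n-M_{\mathrm{reject}}$ which does not generally equal $2+\min\text{-wt accepting}$. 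Resolving this requires carefully leveraging the interplay between the monotone structure of $V$'s accepting set and the worst-case behavior of $V$'s quantum CHOICE proof on the for-all side, most likely by choosing the $H_i$ so that a cover $\{H_{\mathrm{struct}},H_{\mathrm{ver}}\}\cup T$ encodes the statement ``every completion of the partial assignment specified by $T$ is rejected by $V$'' in a way that monotonicity then collapses to ``the partial assignment itself is a minimum accepting witness''. The rest of the proof is bookkeeping with Kitaev's Geometric Lemma (Lemma~\ref{l:pluslemma}) and the Projection Lemma.
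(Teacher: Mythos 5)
Your high-level skeleton matches the paper's: membership in $\cqs$ via Kitaev's energy-estimation verifier, a reduction built from Kitaev's circuit-to-Hamiltonian construction with $n$ bit-selection terms plus two mandatory structural terms, the Projection Lemma to pass to the history subspace, and monotonicity of $V$'s accepting set to convert the cover condition into a minimum-weight accepting string. But there is a genuine gap exactly where you flag the ``main obstacle,'' and it is not only the bit Hamiltonians --- your verification term has the wrong polarity. A cover must satisfy $H_{S'}\succeq\alpha I$ with $\alpha-\beta\geq 1$, so \emph{every} state, including the history state $\ket{\eta_{x,y}}$ built from the accepting witness $x=\chi_T$ you are encoding, must receive energy at least $\alpha\approx 1$. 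In your cover $\{H_{\mathrm{struct}},H_{\mathrm{ver}}\}\cup T$ that state gets energy $0$ from $H_{\mathrm{struct}}$ (it is a valid history state), $\approx 0$ from $H_{\mathrm{ver}}=K\hout$ (it is accepted, so the output qubit is $1$ with probability $1-\epsilon$), and $0$ from the bit terms (it violates none of them by design); no choice of $J$, $K$, $\alpha$, $\beta$ repairs this. The paper's fix is to take the verification term to be $G_{n+2}=I-(\hin+\hprop+\hstab+\hout)$, the \emph{complement} of the full Kitaev Hamiltonian: accepted history states then get energy $\approx 1$, and only rejected history states can drop to $\beta=1-b$, which is the polarity the cover semantics requires.

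With that correction the bit terms are $G_i=(L+1)\ketbra{0}{0}_{A_i}\otimes I\otimes\ketbra{0}{0}_D$, penalizing $x_i=0$ at time zero, so that including $G_i$ asserts $x_i=1$. The cover condition restricted to history states then reads ``every $x$ with $x_i=1$ for all $i\in T$ is accepted,'' which by monotonicity is equivalent to ``$\chi_T$ is accepted'' --- this is precisely why the hitting-set invariant you worry about collapses to minimum Hamming weight. Your dismissal of the ``include $H_i$ asserts $A_i=1$'' orientation is therefore misplaced (it is the one the paper uses and it does admit covers, since $1^n$ is accepted), and the polarity in your proposed resolution (``every completion of the partial assignment specified by $T$ is \emph{rejected}'') is reversed. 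Finally, the Projection Lemma only reduces the problem to the span of all history states, not to individual basis history states; the paper needs an additional quantitative bound (Lemma~\ref{2_lem:ubound}) controlling the cross terms of $\hout$ on arbitrary superpositions of history states, a step your outline omits.
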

\begin{proof}
That QSSC is in $\cqs$ follows using Kitaev's verifier~\cite{KSV02} for putting $k$-local Hamiltonian in QMA. Specifically, we construct a $\cqs$ verification circuit for QSSC which takes a description $c$ of some subset of local Hamiltonians $S:=\set{H_i}$ in its classical register, and estimates the energy achieved by $\ket{q}$ in its quantum register against $H_S$ using Kitaev's approach, outputting zero or one according to whether the measured energy is above or below the desired thresholds.

To reduce QMW to QSSC, suppose we are given a $\cqma$ circuit $V$ accepting exactly a non-empty monotone set $T\subseteq\set{0,1}^n$ and threshold parameters $f$ and $f'$. We assume without loss of generality that $V$ is represented as a sequence of one and two qubit unitary gates $V_i$ such that $V=V_L\cdots V_1$. We also assume using standard error reduction that if $V$ accepts (rejects) input $x\in\set{0,1}^n$, then it outputs one (zero) with probability at least $1-\epsilon:=1-2^{-4(n+m)}$.

We now state our instance $(S,\alpha,\beta, g, g')$ of QSSC as follows. We first apply Kitaev's circuit-to-Hamiltonian construction from Section~\ref{2_scn:def} to $V$ to obtain a $3$-tuple $(H,a,b)$. Note that $H=\sum_{i=1}^r H_i$ with $r$ terms $0\preceq H_i \preceq I$. Then, set $\alpha := 1-(\zeta+1)\epsilon$, and $\zeta:=2(1+2^{2(n+m)})/(L+1)$. Define $\beta:=1-b$. Note that for large $n+m$, this yields $\alpha\geq 1-2^{-(n+m)}$ and $\beta\leq1-c(1-2^{-(n+m)})/{L^3}$ for some constant $c$. Further, define $g:=f+2$, $g':=f'+2$, and let $S$ consist of the elements (intuition to follow)
\begin{eqnarray}
     G_1 &:=& (L+1)\ketbra{0}{0}_{A_1}\otimes I_{B,C}\otimes \ketbra{0}{0}_D \\
     &\vdots&\nonumber\\
     G_n &:=& (L+1)\ketbra{0}{0}_{A_n}\otimes I_{B,C}\otimes \ketbra{0}{0}_D \\
     G_{n+1} &:=&  (\Delta+1)(\hin+\hprop+\hstab)\\
     G_{n+2} &:=& I - (\hin+\hprop+\hstab+\hout),\label{2_eqn:coverdef}
\end{eqnarray}
for $\Delta\geq 0$ to be chosen as required, and where $A_i$ denotes the $i$th qubit of register $A$. Intuitively, the terms in $S$ play the following roles: $G_{n+1}$ penalizes assignments which are not valid history states. $G_{n+2}$ penalizes valid history states accepted by $V$. Finally, the $G_i$ for $i\in[n]$ penalize valid history states rejected by $V$ (recall that $V$ accepts a monotone set, and so flipping a one to a zero in register $A$ may lead $V$ to reject). Thus, we cover the entire space. We now make this rigorous.

As required by Definition~\ref{2_def:qssc}, we begin by showing that $S$ itself is a cover, i.e.\ that $G_S\succeq \alpha I_{A,B,C,D}$. First, note that
\begin{equation}
    G_S = I + \sum_{i=1}^n G_i -\hout + \Delta(\hin+\hprop+\hstab).\label{2_eqn:1}
\end{equation}
It thus suffices to prove that for large enough $\Delta$,
\begin{equation}
    \Delta (\hin+\hprop+\hstab)+\left(\sum_{i=1}^n G_i\right)-\hout\succeq -(\zeta+1)\epsilon I.\label{2_eqn:keyeqn}
\end{equation}
To show this, we use Lemma~\ref{2_lem:projlemma}, the Projection Lemma, with
\begin{eqnarray}
    Y_1 := \left(\sum_{i=1}^n G_i\right)-\hout,\hspace{8mm}
    Y_2 := \Delta (\hin+\hprop+\hstab).
\end{eqnarray}
Intuitively, the Projection Lemma tells us that by increasing our weight $\Delta$, we can force the smallest eigenvalue of $Y_1+Y_2$ to be approximately the smallest eigenvalue of $Y_1$ restricted to the null space of $Y_2$. In our setting, this implies it suffices to study the smallest eigenvalue of $Y_1$ restricted to the space of all \emph{valid} history states, i.e.\ states of the form of Equation~(\ref{2_eqn:hist}). Let $\shist$ denote the space of valid history states; note $\shist$ is the null space of $\hin+\hprop+\hstab$. Then, in the notation of Lemma~\ref{2_lem:projlemma}, to lower bound $\lambda(Y_1|_{\shist})$, we invoke Lemma~\ref{2_lem:ubound} to instead upper bound the largest eigenvalue of $(-Y_1)|_{\shist}$. This yields $\lambda(Y_1|_{\shist})\geq -\zeta\epsilon$. Noting that $\snorm{Y_1}\leq n(L+1)+1$, and since by Lemma~\ref{2_lem:y2bound} the smallest non-zero eigenvalue of $Y_2$ scales as $\Omega(\Delta/L^3)$, it follows by Lemma~\ref{2_lem:projlemma} that by setting $\Delta\in\Omega(n^2L^5/\epsilon)$, we have $Y_1+Y_2\succeq -(\zeta+1)\epsilon I$, as desired. This completes the proof that $S$ is a cover.

We now show the desired reduction. Assume first that $V$ accepts a string $x$ of Hamming weight $k$, and let $T\subseteq [n]$ be such that $i\in T$ if and only if $x_i=1$. We claim there exists a cover $S'\subseteq S$ of size $\abs{S'}=k+2$ which consists of $G_{n+1}$, $G_{n+2}$, and the $k$ terms $G_i$ such that $i\in T$. To show this, following the proof above, the analogue of Equation~(\ref{2_eqn:keyeqn}) which we must prove is
\begin{equation}
    \Delta (\hin+\hprop+\hstab)+\left(\sum_{i\in T} G_i\right)-\hout\succeq -(\zeta+1)\epsilon I.\label{2_eqn:keyeqn2}
\end{equation}
First, applying Lemma~\ref{2_lem:ubound} again, we lower bound the smallest eigenvalue of
\begin{equation}
    Y'_1:=\left(\sum_{i\in T} G_i\right)-\hout
\end{equation}
restricted to $\shist$ by $-\zeta\epsilon$. Since $\snorm{Y'_1}\leq \snorm{Y_1}$ for $Y_1$ from the previous case of $T=[n]$, the value of $\Delta$ from before still suffices to apply Lemma~\ref{2_lem:projlemma} and conclude that Equation~(\ref{2_eqn:keyeqn2}) holds, as desired.

Conversely, suppose $V$ rejects any string $x$ of Hamming weight at most $k$. For any $S'\subseteq S$ with $\abs{S'}\leq k+2$, we claim that $G_{S'}$ has an eigenvalue at most $\beta$. To see this, note first that if  $G_{n+2}\not\in S'$, then the state $\histstatecq{1^n}{y}$ attains expected value zero against $G_{S'}$, where note $\beta\geq 0$. Similarly, if $G_{n+1}\not \in S'$, then the state $\ket{1^n}_{A,B,C}\otimes\ket{0}_D$ obtains expected value at most zero against $G$. We conclude that in order to refute the claim that $G$ has an eigenvalue at most $\beta$, we must have $G_{n+1},G_{n+2}\in S'$. This implies that $S'$ contains at most $k$ terms $G_i$ for $i\in [n]$. Then, consider the string $x$ which has ones precisely at these at most $k$ positions $i\in [n]$ corresponding to $G_i\in S'$. It follows that the state $\histstatecq{x}{y}$ lies in the null space of all terms in $S'$ with the possible exception of $G_{n+2}$. Moreover, since $V$ rejects all strings of Hamming weight at most $k$, there exists by the definition of a $\cqma$ circuit and Lemma~\ref{2_lem:kitaev} a $\ket{y}\in\B^{\otimes m}$ such that
\begin{equation}
\trace\left(G_{n+2}\histstatecqketbra{x}{y}\right)=1-\trace\left(H\histstatecqketbra{x}{y}\right)\leq 1-b=\beta,
\end{equation}
completing the proof.
\end{proof}

The following two lemmas are required for the proof of Theorem~\ref{2_thm:qmmwwToQssc}. Their statements and proofs assume the notation of Theorem~\ref{2_thm:qmmwwToQssc}.

\begin{lemma}\label{2_lem:y2bound}
    The smallest non-zero eigenvalue of $Y_2 = \Delta (\hin+\hprop+\hstab)$ scales as $\Omega(\Delta/L^3)$.
\end{lemma}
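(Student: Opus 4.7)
The approach is to reuse the framework of Kitaev's original NO-case analysis (Section~\ref{0_sscn:5LH}), which proves an $\Omega(1/L^3)$ lower bound on the smallest eigenvalue of $\hin + \hprop + \hstab + \hout$ under the condition that no proof causes $V$ to accept. The present task is essentially to rerun that argument with the roles shifted: instead of using $\hout$ to rule out any null space, I want to bound from below what remains after dropping $\hout$ altogether. My first step would be to verify that the null space of $Y_2/\Delta = \hin + \hprop + \hstab$ is exactly $\shist$, since a state has zero expectation against all three terms iff its clock register is a valid unary encoding, it is a history state (any initial state allowed), and its ancilla is zero at time $0$. The goal is thus to show $\hin + \hprop + \hstab \succeq (c/L^3)\, I$ on $\shist^\perp$, after which multiplication by $\Delta$ gives the result.

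Second, I would split $\shist^\perp$ by whether the clock encodes a valid unary time step, using that all three penalty terms act invariantly on $\spa{S}$ (the valid-clock subspace) and on $\spa{S}^\perp$. On $\spa{S}^\perp$ the operator $\hstab$ is a sum of commuting projectors of which at least one fires, so its smallest non-zero eigenvalue is $1$; together with $\hin + \hprop \succeq 0$ this gives an $\Omega(1)$ lower bound on $\spa{S}^\perp$, well above our target. The interesting piece is $\spa{S} \cap \shist^\perp$, on which $\hstab = 0$ and we must bound $\hin + \hprop$ from below by $\Omega(1/L^3)$.

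Third, I would perform Kitaev's change of basis $W = \sum_j V_j \cdots V_1 \otimes \ketbra{j}{j}_D$, which preserves $\spa{S}$ and the spectra involved. Under $W$, the operator $\hin$ is unchanged, while $\hat{\hprop}|_{\spa{S}} = I_{A,B,C} \otimes E_D$ for the tridiagonal $E$ whose smallest non-zero eigenvalue is $1 - \cos(\pi/(L+1)) = \Omega(1/L^2)$; the smallest non-zero eigenvalue of $\hin$ is at least $1$. Their null spaces on $\spa{S}$ are $\calN_1$ (states at time $0$ with zero ancilla, or at time $\neq 0$) and $\calN_2$ (states of the form $\ket{\phi}_{A,B,C} \otimes \ket{\gamma}_D$). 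I expect the main obstacle here to be that Kitaev's Geometric Lemma (Lemma~\ref{l:pluslemma}) demands trivial intersection of these null spaces, whereas $\calN_1 \cap \calN_2 = \hat{\shist} \neq \set{\ve{0}}$. I would sidestep this by noting that $\hat{\shist} \subseteq \calN_1 \cap \calN_2$, so both operators vanish on $\hat{\shist}$, and restricting to $\spa{S} \cap \hat{\shist}^\perp$ makes the two residual null spaces $\calN_i \cap \hat{\shist}^\perp$ trivially intersecting while preserving the eigenvalue lower bounds above.

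Finally, I would estimate the angle $\alpha$ between $\calN_1 \cap \hat{\shist}^\perp$ and $\calN_2 \cap \hat{\shist}^\perp$ in close analogy with Kitaev's NO-case calculation. Using $\Pi_{\calN_1 \cap \hat{\shist}^\perp} = \Pi_{\calN_1} - \Pi_{\hat{\shist}}$ (valid since $\hat{\shist} \subseteq \calN_1$), and observing that any unit $\ket{y} = \ket{\phi}_{A,B,C} \otimes \ket{\gamma}_D \in \calN_2 \cap \hat{\shist}^\perp$ must have $\ket{\phi}$ orthogonal to the zero-ancilla subspace, a short calculation yields $\bra{y}\Pi_{\calN_1 \cap \hat{\shist}^\perp}\ket{y} = L/(L+1)$, so $\sin^2(\alpha/2) \geq \tfrac{1}{4}\sin^2\alpha \geq \tfrac{1}{4(L+1)}$. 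Plugging $v = \Omega(1/L^2)$ and this angle into Lemma~\ref{l:pluslemma} gives $\hin + \hat{\hprop} \succeq \Omega(1/L^3)\, I$ on $\spa{S} \cap \hat{\shist}^\perp$. Reverting the basis change and combining with the $\spa{S}^\perp$ bound yields $\hin + \hprop + \hstab \succeq \Omega(1/L^3)\, I$ on $\shist^\perp$, so that the smallest non-zero eigenvalue of $Y_2$ is $\Omega(\Delta/L^3)$, as claimed.
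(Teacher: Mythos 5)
Your proof is correct and follows essentially the same route as the paper: change of basis by $W$, the Geometric Lemma with $v\in\Omega(1/L^2)$, and the identical angle computation giving $\cos^2\alpha = L/(L+1)$. The only (equivalent) difference is that you handle the non-trivial common null space $\hat{\shist}$ by restricting both operators to $\hat{\shist}^\perp$, whereas the paper "lifts" it by adding $p\,\hist$ and $2\,\hist$ to $\hin$ and $\hprop$ respectively — both yield exactly the same null spaces $\nl$ and $\nll$ fed into Lemma~\ref{l:pluslemma} (and your explicit treatment of $\hstab$ via the $\spa{S}$ versus $\spa{S}^\perp$ split is the argument the paper defers to Reference~\cite{KSV02}).
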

\begin{proof}
    We bound the smallest non-zero eigenvalue of $\hin+\hprop$; it is straightforward to show using the approach of Reference~\cite{KSV02} that the addition of $\hstab$ does not affect this lower bound (see Section~\ref{0_sscn:5LH}). Our proof idea here is to ``lift'' the null space of $\hin+\hprop$ so that the smallest non-zero eigenvalue of $\hin+\hprop$ becomes the smallest eigenvalue of the lifted operator, and then apply the Geometric Lemma (Lemma~\ref{l:pluslemma}) to lower bound the latter.

    To begin, recall that the null space of $\hin+\hprop$ consists of all valid history states
    \begin{equation}
        \histstate=\frac{1}{\sqrt{L+1}}\sum_{i=0}^L V_i\cdots V_1 \ket{\psi}_{A,B}\otimes\ket{0}_C\otimes\ket{i}_D,
    \end{equation}
    for any $\ket{\psi}_{A,B}$. (Since we omit $\hstab$ for now, we assume here that the clock register is represented in binary, i.e.\ there are no invalid clock states.) As done in Reference~\cite{KSV02} and Section~\ref{0_sscn:5LH}, our analysis is simplified by first applying the unitary change of basis $W=\sum_{j=0}^L V_1^\dagger\cdots V_j^\dagger\otimes\ketbra{j}{j}$, yielding
    \begin{eqnarray}
        W\histstate &=& \ket{\psi}_{A,B}\otimes\ket{0}_C\otimes\ket{\gamma}_D\\
        W\hin W^\dagger &=& \hin= I_{A,B}\otimes \left(\sum_{i=1}^p \ketbra{1}{1}_{C_i}\right)\otimes \ketbra{0}{0}_D\\
        W\hprop W^\dagger &=& I_{A,B}\otimes I_C\otimes E_D
    \end{eqnarray}
    where $\ket{\gamma}:=\left(\frac{1}{\sqrt{L+1}}\sum_{i=0}^L\ket{i}\right)$, and for some operator $E_D$ whose eigenvalues are given by $\lambda_k=1-\cos(\pi k / (L+1))$ for $0\leq k\leq L$ and whose unique zero-eigenvector is $\ket{\gamma}$.

As alluded to above, we now lift the null space of $W(\hin+\hprop)W^\dagger$. Letting $\hist$ denote the projector onto the space of valid history states $\histstate$, this is accomplished by defining
    \begin{eqnarray}
        A_1 &:=&  W(\hin + p\hist)W^\dagger\\
        A_2 &:=&  W(\hprop + 2\hist)W^\dagger.
    \end{eqnarray}
Note that $[\hin,\hist]=[\hprop,\hist]=0$, $\snorm{\hin}\leq p$ and $ \snorm{\hprop}\leq 2$. It thus remains to lower bound the smallest eigenvalue of $A_1+A_2$, for which we apply Lemma~\ref{l:pluslemma} (Geometric Lemma) to $A_1+A_2$ via the approach of Reference~\cite{KSV02}. For this, we require values for the parameters $v$ and $\alpha(\nl,\nll)$.

For $v$, note that since $A_1$ is a sum of commuting orthogonal projectors, its smallest non-zero eigenvalue is at least $1$ (assuming $p\geq 1$). Similarly, one infers from the spectrum of $E_D$ stated above that the smallest non-zero eigenvalue of $A_2$ scales as $\Omega(1/L^2)$. It follows that $v\in\Omega(1/L^2)$. As for $\alpha(\nl,\nll)$, note that the null spaces $\nl$ and $\nll$ can be written as
    \begin{eqnarray}
        \nl &=& \B^{\otimes(n+m)}_{A,B}\otimes\operatorname{span}(\ket{\psi}~:~\braket{\psi}{0\cdots0}=0)_C\otimes\operatorname{span}(\ket{1},\ldots,\ket{L})_D\oplus\label{2_eqn:spa1}\\
        &&\B^{\otimes(n+m)}_{A,B}\otimes\ket{0\cdots 0}_C\otimes\operatorname{span}(\ket{\psi}~:~\braket{\psi}{\gamma}=0)_D,\label{2_eqn:spa2}\\
        \nll &=&
        \B^{\otimes(n+m)}_{A,B}\otimes\operatorname{span}(\ket{\psi}~:~\braket{\psi}{0\cdots0}=0)_C\otimes\ket{\gamma}_D.
    \end{eqnarray}

\noindent Observe that $\nl \cap \nll = \set{\ve{0}}$, as required by Lemma~\ref{l:pluslemma}. Then, letting $\Pi_{\nl}$ denote the projector onto $\nl$, we analyze
\begin{equation}
\cos^2\alpha(\nl,\nll)=\max_{\text{unit }\ket{x}\in\nl,\ket{y}\in\nll}\abs{\braket{x}{y}}^2=\max_{\text{unit }\ket{y}\in\nll}\bra{y}\Pi_{\nl}\ket{y}=\max_{\text{unit }\ket{y}\in\nll}\bra{y}\Pi_1+\Pi_2\ket{y},
\end{equation}
where $\Pi_1$ and $\Pi_2$ project onto the spaces in Equations~(\ref{2_eqn:spa1}) and~(\ref{2_eqn:spa2}), respectively. As $\bra{y}\Pi_2\ket{y}=0$, we simply need to maximize $\bra{y}\Pi_1\ket{y}$, which is equivalent to maximizing $\abs{\braket{\psi}{\gamma'}}^2$ for any unit vector $\ket{\psi}$ in register $D$ and for unnormalized state $\ket{\gamma'}:=(\frac{1}{\sqrt{L+1}}\sum_{i=1}^L\ket{i})$.
By the Cauchy-Schwarz inequality, this quantity is upper bounded by $L/(L+1)$. We thus obtain the bound $\cos\alpha(\nl,\nll)\leq \sqrt{L/(L+1)}$. Combining this with the identity $2\sin^2\frac{x}{2}=1-\cos x$ and the Maclaurin series expansion for $\sqrt{1+x}$ (where $\abs{x}\leq 1$) yields $2\sin^2\frac{\alpha\left(\nl,\nll\right)}{2}\geq\frac{1}{2(L+1)}$. Substituting into Lemma~\ref{l:pluslemma}, the desired result follows.
\end{proof}

\begin{lemma}\label{2_lem:ubound}
    Define $\hist:=\sum_{x\in\set{0,1}^n,y\in\set{0,1}^m}\histstatecqketbra{x}{y}$ as the projector onto $\shist$, let $\zeta:=2(1+2^{2(n+m)})/(L+1)$, and consider $T\subseteq[n]$. Then, if $V$ outputs one with probability at least $1-\epsilon$ for inputs $(x,\ket{y})$ with  $x\in\set{0,1}^n$ such that $x_i=1$ for all $i\in T$ and for all $m$-qubit $\ket{y}$, one has
\begin{equation}
    \hist\left[\hout-\sum_{i\in T} G_i \right]\hist \preceq \zeta\epsilon I.
\end{equation}
\end{lemma}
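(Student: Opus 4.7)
The plan is to translate the operator inequality into a bound on $\bra{\psi}[\hout-\sum_{i\in T}G_i]\ket{\psi}$ for unit $\ket{\psi}\in\shist$, parametrize $\ket{\psi}=\histstate$ by its unique ``input'' state $\ket{\Psi}_{AB}\in\B^{\otimes(n+m)}$, and then exploit that the good component of $\ket{\Psi}$ (the one supported on $x$ with $x_i=1$ for all $i\in T$) is penalized by $\hout$ via the hypothesis on $V$, while the bad component is penalized by $\sum_{i\in T}G_i$. Since $\hist$ is a projector onto $\shist$, the desired inequality is equivalent to the scalar bound above for every unit $\ket{\psi}\in\shist$. Because $\ket{\Psi}\mapsto\histstate$ is a linear isometry onto $\shist$ (whose image is precisely the null space of $\hin+\hprop+\hstab$), every such $\ket{\psi}$ corresponds to a unique unit $\ket{\Psi}_{AB}$, and a direct computation using the clock projectors $\ketbra{0}{0}_D$ and $\ketbra{L}{L}_D$ in $\hout$ and $G_i$ gives $\bra{\psi}\hout\ket{\psi}=p_0(\Psi)/(L+1)$ and $\bra{\psi}G_i\ket{\psi}=\bra{\Psi}(\ketbra{0}{0}_{A_i}\otimes I_B)\ket{\Psi}$, where $p_0(\Psi):=\Pr[V\text{ outputs }0\text{ on }\ket{\Psi}_{AB}\otimes\ket{0^p}_C]$.

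Next, I would set $S:=\sum_{i\in T}\ketbra{0}{0}_{A_i}\otimes I_B$, let $P_g$ be the projector onto $\operatorname{span}\set{\ket{x,y}:x_i=1\ \forall\,i\in T}$, and put $P_b:=I-P_g$. Since $S$ is diagonal in the computational basis with eigenvalue $\abs{T\setminus\set{i:x_i=1}}\geq 1$ on bad basis vectors and $0$ on good ones, $S\succeq P_b$. Writing $\ket{\Psi}=\alpha\ket{\Psi_g}+\beta\ket{\Psi_b}$ with $\ket{\Psi_g}\in\operatorname{Im}(P_g)$ and $\ket{\Psi_b}\in\operatorname{Im}(P_b)$, this gives $\bra{\Psi}S\ket{\Psi}\geq\abs{\beta}^2$, reducing the target inequality to $p_0(\Psi)/(L+1)-\abs{\beta}^2\leq\zeta\epsilon$.

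The hard part will be the cross-term coupling $\ket{\Psi_g}$ and $\ket{\Psi_b}$ inside $p_0(\Psi)=\bra{\Psi}M'\ket{\Psi}$, where $M':=(I_{AB}\otimes\bra{0^p}_C)V^\dagger(I_A\otimes\ketbra{0}{0}_{B_1}\otimes I_C)V(I_{AB}\otimes\ket{0^p}_C)$. I would bypass it by writing $p_0(\Psi)=\enorm{\sqrt{M'}\ket{\Psi}}^2$ and applying the triangle inequality to $\sqrt{M'}\ket{\Psi}=\alpha\sqrt{M'}\ket{\Psi_g}+\beta\sqrt{M'}\ket{\Psi_b}$. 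The bad piece satisfies $\enorm{\sqrt{M'}\ket{\Psi_b}}\leq 1$ since $M'\preceq I$; for the good piece, the hypothesis on $V$ gives $\bra{x,y}M'\ket{x,y}=p_0(x,\ket{y})\leq\epsilon$ for every good basis vector $\ket{x,y}$, so $\snorm{P_gM'P_g}\leq\trace(P_gM'P_g)=\sum_{\text{good }(x,y)}p_0(x,y)\leq 2^{n+m}\epsilon$ and hence $\enorm{\sqrt{M'}\ket{\Psi_g}}\leq\sqrt{2^{n+m}\epsilon}$. Combining via $(a+b)^2\leq 2(a^2+b^2)$ then gives $p_0(\Psi)\leq 2(2^{n+m}\epsilon+\abs{\beta}^2)$, so (for $L\geq 1$) $p_0(\Psi)/(L+1)-\abs{\beta}^2\leq 2\cdot 2^{n+m}\epsilon/(L+1)\leq\zeta\epsilon$, where the last inequality uses $2\cdot 2^{n+m}\leq 2(1+2^{2(n+m)})$.
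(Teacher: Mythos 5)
Your proof is correct, and while it follows the same overall strategy as the paper's --- decompose the history state into a ``good'' component (supported on strings $x$ with $x_i=1$ for all $i\in T$) and a ``bad'' component, charge the bad component to $\sum_{i\in T}G_i$, and charge the good component to $\hout$ via the hypothesis on $V$ --- the technical execution differs in two ways worth noting. First, you pull everything back through the isometry $\ket{\Psi}\mapsto\histstate$ and reason about the verifier's rejection operator $M'$ on the input register, whereas the paper works with $Z_1:=\hist(-\sum_{i\in T}G_i)\hist$ and $Z_2:=\hist\hout\hist$ directly in the history-state basis; these are equivalent reformulations, and your handling of the good--bad cross term (triangle inequality on $\sqrt{M'}\ket{\Psi}$ followed by $(a+b)^2\leq 2(a^2+b^2)$) is the same factor-of-two trick the paper implements via $\braket{a}{b}+\braket{b}{a}\leq\braket{a}{a}+\braket{b}{b}$. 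Second, and more substantively, the bound on the good-component contribution is obtained differently: the paper expands $\trace(Z_2\ketbra{\phi_1}{\phi_1})$ in the history-state basis and bounds each of the up-to-$2^{2(n+m)}$ cross terms individually by $\epsilon/(L+1)$ via Cauchy--Schwarz, yielding $(1+2^{2(n+m)})\epsilon/(L+1)$, whereas you bound $\bra{\Psi_g}M'\ket{\Psi_g}\leq\snorm{P_gM'P_g}\leq\trace(P_gM'P_g)\leq 2^{n+m}\epsilon$, which controls all the off-diagonal structure at once through the trace of a positive semidefinite operator and in fact gives a slightly sharper constant ($2^{n+m}$ in place of $1+2^{2(n+m)}$), still comfortably within the stated $\zeta$. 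Both arguments are valid; yours is a little cleaner and would permit a marginally smaller $\zeta$, though this has no effect on the downstream use of the lemma.
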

\begin{proof}
    Define $Z_1:=\hist(-\sum_{i\in T} G_i)\hist$ and $Z_2:=\hist \hout\hist$. Letting $z\in\set{0,1}^n$ denote the characteristic vector of $T$, i.e.\ the $i$th bit of $z$ is set to one if and only if $i\in T$, it follows that any state $\histstatecq{x}{y}$ is an eigenvector of $Z_1$ with eigenvalue $\braket{x}{z}-\abs{T}$. Hence, for example, $
        \trace\left(Z_1\histstatecqketbra{1^n}{y}\right)=0.
    $
    Further, since $V$ accepts a \emph{non-empty} monotone set, it must accept input $(1^n,\ket{y})$ with probability at least $1-\epsilon$, implying
    $
        \trace(Z_2\histstatecqketbra{1^n}{y})\leq \frac{\epsilon}{L+1}.
    $
    This yields an upper bound of
    \begin{equation}
        \trace((Z_1+Z_2)\histstatecqketbra{1^n}{y})\leq\frac{\epsilon}{L+1}
    \end{equation}
    in this simple case. We now show that deviating from $\histstatecq{1^n}{y}$ above cannot increase our expected value against $Z_1+Z_2$ by ``too much''.

    To do so, let $\ket{\phi}=\alpha_1\ket{\phi_1}+\alpha_2\ket{\phi_2}$ be an arbitrary valid history state where $\abs{\alpha_1}^2+\abs{\alpha_2}^2=1$, $\ket{\phi_1}$ is a (normalized) superposition of valid history states where each history state in the superposition has a string $x$ in register $A$ at time zero satisfying $x_i =1$ if $i\in T$, and where $\ket{\phi_2}$ is a valid history state in the space orthogonal to space of all possible states $\ket{\phi_1}$. We thus first have that
    \begin{equation}
        \trace\left(Z_1\ketbra{\phi}{\phi}\right)\leq0+\alpha_2^2\trace\left(Z_1\ketbra{\phi_2}{\phi_2}\right)\leq \alpha_2^2[(\abs{T}-1)-\abs{T}]\leq-\abs{\alpha_2}^2.
    \end{equation}
    Moving on to $Z_2$, observe that straightforward expansion yields
    \begin{eqnarray}
        \trace(Z_2\ketbra{\phi}{\phi})&=&\abs{\alpha_1}^2\trace(Z_2\ketbra{\phi_1}{\phi_1})+\abs{\alpha_2}^2\trace(Z_2\ketbra{\phi_2}{\phi_2})\\
        &+& \alpha_1\alpha_2^*\trace(Z_2\ketbra{\phi_1}{\phi_2})+\alpha_1^*\alpha_2\trace(Z_2\ketbra{\phi_2}{\phi_1}).
    \end{eqnarray}
    To upper bound this quantity, we use the fact that $\braket{a}{b} + \braket{b}{a} \leq \braket{a}{a} + \braket{b}{b}$ for complex vectors $\ket{a}$ and $\ket{b}$. Namely, setting $\ket{a}:=\alpha_1\sqrt{Z_2}\ket{\phi_1}$ and $\ket{b}:=\alpha_2\sqrt{Z_2}\ket{\phi_2}$ yields
    \begin{eqnarray}
        \trace(Z_2\ketbra{\phi}{\phi})&\leq& 2\abs{\alpha_1}^2\trace(Z_2\ketbra{\phi_1}{\phi_1})+2\abs{\alpha_2}^2\trace(Z_2\ketbra{\phi_2}{\phi_2})\\
        &\leq& 2\abs{\alpha_1}^2\trace(Z_2\ketbra{\phi_1}{\phi_1})+ 2\abs{\alpha_2}^2\frac{1}{L+1}\label{2_eqn:last},
    \end{eqnarray}
    where the second inequality follows since $\snorm{Z_2}\leq 1/(L+1)$. Finally, in order to upper bound the term $\trace(Z_2\ketbra{\phi_1}{\phi_1})$ in Equation~(\ref{2_eqn:last}), observe that since by assumption $\trace(Z_2\histstatecqketbra{x}{y})\leq \frac{\epsilon}{L+1}$ for all $x$ with $x_i=1$ for $i\in T$, and since $\hout$ is a projector, it follows that the norm of $\hout \histstatecq{x}{y}$ is at most $\sqrt{\epsilon/(L+1)}$. Using the Cauchy-Schwarz inequality, this implies that each cross term in the expansion of $\trace(Z_2\ketbra{\phi_1}{\phi_1})$ can contribute a value of magnitude at most $\epsilon/(L+1)$. Since there are at most $2^{2(n+m)}$ such cross terms, and since the non-cross terms are weighted by a convex combination, we hence have the upper bound of $\trace(Z_2\ketbra{\phi_1}{\phi_1})\leq (1+2^{2(n+m)})\epsilon/(L+1)$.
    Combining these bounds, we have
    \begin{eqnarray}
        \trace((Z_1+Z_2)\ketbra{\phi}{\phi})&\leq& -\abs{\alpha_2}^2+\frac{2\abs{\alpha_1}^2(1+2^{2(n+m)})\epsilon}{L+1}+\frac{2\abs{\alpha_2}^2}{L+1}\\
        &=&\frac{2\abs{\alpha_1}^2(1+2^{2(n+m)})\epsilon+\abs{\alpha_2}^2(1-L)}{L+1}\\
        &\leq&\frac{2(1+2^{2(n+m)})}{L+1}\epsilon\\
        &=&\zeta\epsilon
    \end{eqnarray}
    where the second inequality holds when $L\geq 1$.
\end{proof}

Finally, we show that QIRR is $\cqs$-hard to approximate.

\begin{theorem}\label{2_thm:QSSCtoQIRR}
    There exists a polynomial time reduction which, given an instance of an arbitrary $\cqs$ problem $\Pi$, outputs an instance of QIRR with threshold parameters $h$ and $h'$ satisfying $h'/h\in \Theta(N^\epsilon)$ for some $\epsilon>0$, where $N$ is the encoding size of the QIRR instance.
\end{theorem}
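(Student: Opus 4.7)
The plan is to prove the theorem by composing three polynomial-time reductions: the gap-introducing reduction from arbitrary $\cqs$ problems to $\class{QMW}$ with ratio $\Omega(N^\epsilon)$ already established in Theorem~\ref{2_thm:qmmwwHard}, the gap-preserving reduction $\class{QMW}\leq_K\class{QSSC}$ of Theorem~\ref{2_thm:qmmwwToQssc}, and a new gap-preserving reduction $\class{QSSC}\leq_K\class{QIRR}$ which I will construct here. Since every gap-preserving step in the chain adjusts thresholds only by an additive constant while the first step introduces a multiplicative gap of order $N^\epsilon$, the final $\class{QIRR}$ instance will inherit a gap $h'/h \in \Theta(N^\epsilon)$, as claimed.

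For the $\class{QSSC}\to\class{QIRR}$ step, I would proceed as follows. Given a $\class{QSSC}$ instance $(S=\{H_i\}_{i=1}^m,\alpha,\beta,g,g')$ on $N$ qubits, introduce an ancilla register $B$ of $\lceil\log m\rceil+1$ qubits, and for each $i$ define a $\class{QIRR}$ operator $\widetilde{H}_i = c_i\,H_i\otimes\ketbra{i}{i}_B$, where $\ketbra{i}{i}_B$ is a rank-one computational-basis projector on $B$ that labels index $i$ and $c_i$ is a suitably large scalar (chosen below). Because the $H_i$ produced by the reduction of Theorem~\ref{2_thm:qmmwwToQssc} are not themselves projectors, the first step is to rewrite each $H_i$ as a fixed linear combination $H_i=\sum_\ell \mu_{i,\ell}P_{i,\ell}$ of $5$-local orthogonal projectors (using the explicit block structure of $\hin,\hprop,\hstab,\hout$ together with the simple diagonal form of the terms $G_1,\ldots,G_n$ from the proof of Theorem~\ref{2_thm:qmmwwToQssc}); each $P_{i,\ell}\otimes\ketbra{i}{i}_B$ is then a tensor product of $5$-local projectors, as demanded by Definition~\ref{2_def:qirr}. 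The $\class{QIRR}$ thresholds are set to $\alpha'=\alpha$ and $\beta'=\beta$, the coefficients $c_i$ are chosen so that the ancilla projector dominates the spectral contribution of each block, and the new cardinality thresholds are taken as $h=g$ and $h'=g'$ (plus a constant accounting for the projector decomposition, which does not affect the ratio).

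The main obstacle is to verify that a subset of the $\widetilde{H}_i$ is irredundant in the $\class{QIRR}$ sense (preserving both high- and low-energy eigenvectors of the full sum) if and only if the corresponding subset of $H_i$ covers in the $\class{QSSC}$ sense. Completeness is straightforward: a $\class{QSSC}$ cover of size $\leq g$ gives a subset whose sum still satisfies $\succeq \alpha I$ in every ancilla block $\ketbra{i}{i}_B$ used, while the ancilla-block structure makes the states acting outside the selected blocks unaffected. Soundness is the delicate direction: if some index $i^\star$ is omitted from a candidate irredundant subset, then by the $\class{QSSC}$ NO-case guarantee there is a state $\ket{\psi}\in\B^{\otimes N}$ with $\trace(H_{S\setminus\{i^\star\}}\ketbra{\psi}{\psi})\leq\beta$, and tensoring with the ancilla label $\ket{i^\star}_B$ yields a state whose energy against $\widetilde{H}_{S\setminus\{i^\star\}}$ is at most $\beta$ while its energy against the full $\widetilde{H}_S$ remains $\geq\alpha$, violating irredundancy. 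To make this argument rigorous across the projector decomposition, I expect to reuse the eigenvalue machinery of the Projection Lemma (Lemma~\ref{2_lem:projlemma}), the Geometric Lemma (Lemma~\ref{l:pluslemma}), and Lemma~\ref{2_lem:ubound}, much as in Theorem~\ref{2_thm:qmmwwToQssc}; the hardest technical step will be controlling the cross-terms introduced by the ancilla coupling so that spurious ``low-energy cheat states'' consistent only with an invalid ancilla label do not arise and destroy the gap.
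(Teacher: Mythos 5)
Your overall plan (compose the gap-introducing reduction of Theorem~\ref{2_thm:qmmwwHard} with the QSSC step of Theorem~\ref{2_thm:qmmwwToQssc} and a new QSSC-to-QIRR step) matches the paper's, but your QSSC-to-QIRR construction has two gaps that would each sink the argument. First, tagging each term as $\widetilde{H}_i = c_i H_i\otimes\ketbra{i}{i}_B$ with mutually orthogonal ancilla labels makes the total operator block-diagonal in $B$: a state $\ket{\psi}\ket{i}_B$ sees only $c_iH_i$ and gets energy zero from every other block. This destroys the additive structure QSSC is about, since the cover condition $\sum_{H_i\in S'}H_i\succeq\alpha I$ concerns the spectrum of the terms summed on a \emph{common} register, whereas the restriction of your $\widetilde{H}_{S'}$ to block $i$ is just $c_iH_i$, whose smallest eigenvalue is typically $0$ (e.g.\ $\hout$ has a large null space). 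Hence your completeness claim ``$\succeq\alpha I$ in every ancilla block'' fails, and your soundness witness $\ket{\psi}\ket{i^\star}_B$ has energy $0$, not $\leq\beta$ in the intended sense, against the omitted-index sum. The paper instead keeps all the $G_i$ acting on the same register behind a single shared tag $\ketbra{0}{0}_A$, so that restricted to that tag the selected sum is literally $G_{S'}+(r-1)I$; the chaperone register is used only to force the mandatory terms to be chosen. Relatedly, your sketch never forces the projectors $P_{i,\ell}$ arising from the decomposition of one $H_i$ to be selected together, which a candidate $T'$ is free to violate.

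Second, and independently, your claim that the projector decomposition costs only ``a constant'' in the cardinality thresholds is false, and this is precisely the crux of the paper's proof of this theorem. Decomposing $G_{n+1}$ and $G_{n+2}$ into projectors yields $\Theta(r)$ terms, where $r$ is the number of terms in Kitaev's Hamiltonian, and tracing through Theorems~\ref{2_thm:qmmwwHard} and~\ref{2_thm:qmmwwToQssc} gives $r\in\omega(g),\omega(g')$. With thresholds of the form $h=g+2r-3$ and $h'=g'+2r-3$ the ratio $h'/h\rightarrow 1$, so the gap evaporates. The paper repairs this by splitting each $F_i$ with $i\in[n]$ into $r$ copies $F_{i,j}$, one per chaperone basis state, which multiplies the ``free'' part of the thresholds by $r$ and yields $h=gr-1$, $h'=g'r-1$, restoring $h'/h\approx g'/g\in\Theta(N^{\epsilon})$ (this is also why the theorem only claims a ratio of $N^{1/2-\epsilon}$ after amplification, rather than $N^{1-\epsilon}$ as for QSSC). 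Without some such amplification step your reduction cannot yield the claimed ratio.
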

\begin{proof}
We begin by applying Theorems~\ref{2_thm:qmmwwHard} and~\ref{2_thm:qmmwwToQssc} to reduce the instance of $\Pi$ to an instance $(S=\set{G_i}_{i=1}^{n+2},\alpha,\beta,g,g')$ of QSSC, and henceforth assume the terminology and definitions introduced in Theorem~\ref{2_thm:qmmwwToQssc}. Recall that any cover in this QSSC instance must include the terms $G_{n+1}$ and $G_{n+2}$. For ease of exposition, we first reduce this instance to QIRR with parameters $h=g+2r-3$ and $h'=g'+2r-3$, where recall $r$ is the number of terms in $H=\sum_{i=1}^r H_i$. This, however, does not suffice to obtain a hardness of approximation gap, as tracing through Theorems~\ref{2_thm:qmmwwHard} and~\ref{2_thm:qmmwwToQssc} yields $r\in\omega(g),\omega(g')$, implying $h'/h\rightarrow 1$ as the instance $\Pi$ in Theorem~\ref{2_thm:qmmwwHard} grows in size. We then slightly modify our reduction to improve the threshold parameters to $h=gr-1$ and $h'=g'r-1$, which yield the desired hardness of approximation gap.

We now state our instance $(T,\gamma,\delta,h,h')$ of QIRR, and follow with an intuitive explanation. For simplicity of exposition, we assume $r$ is a power of two, but our construction can be easily modified to handle the complementary case.
We also label $H_r = \hout$. We now introduce three registers: a ``tag'' qubit register (denoted $A$), the space the original cover $\mathcal{S}$ acts on (denoted $B$), and $\log r$ ``chaperone'' qubits (denoted $C$). The Hamiltonian terms we define for QIRR, $T:=\set{F_i}_{i=1}^{n+2r-1}$, act on $A\otimes B\otimes C = \B\otimes\B^{\otimes (n+m+p+q)}\otimes\B^{\otimes \log r}$, and are defined as:
\begin{eqnarray}
    F_1 &:=& \ketbra{0}{0}_A\otimes (G_1)_B\otimes I_C\\
    &\vdots&\nonumber\\
    F_n &:=& \ketbra{0}{0}_A\otimes (G_n)_B\otimes I_C\\
    F_{n+1} &:=& (\Delta+1)\left[\ketbra{0}{0}_A\otimes (H_1)_B\otimes I_C+\ketbra{1}{1}_A\otimes I_B\otimes\ketbra{0}{0}_C\right]\\
    &\vdots&\nonumber\\
    F_{n+r-1} &:=&  (\Delta+1)\left[\ketbra{0}{0}_A\otimes (H_{r-1})_B\otimes I_C+\ketbra{1}{1}_A\otimes I_B \otimes\ketbra{r-2}{r-2}_C\right]\nonumber\\
    F_{n+r} &:=& \ketbra{0}{0}_A\otimes (I-H_1)_B\otimes I_C + \ketbra{1}{1}_A\otimes I_B \otimes \ketbra{r-1}{r-1}_C\\
    &\vdots&\nonumber\\
    F_{n+2r-1} &:=& \ketbra{0}{0}_A\otimes (I-H_r)_B\otimes I_C + \ketbra{1}{1}_A\otimes I_B \otimes\ketbra{r-1}{r-1}_C.
\end{eqnarray}
We set $\gamma:=\alpha+r-1$, $\delta:=\beta+r-1$, $h:=g+2r-3$, and $h':=g'+2r-3$. Note that each $F_j$ is a projection up to scalar multiplication, as required. We now provide the intuition behind the construction. QIRR is stated in terms of projectors $F_j$ (up to scalar multiplication), whereas QSSC is stated in terms of Hermitian operators $G_i$. Hence, in order to move from the latter to the former, a natural idea is to treat each local Hamiltonian term in the sums comprising $G_{n+1}$ and $G_{n+2}$ as distinct terms $F_{n+1},\ldots,F_{n+r-1}$ and $F_{n+r},\ldots,F_{n+2r-1}$, respectively. The problem with this approach is that in order to rigorously argue that the gap between thresholds $g$ and $g'$ for QSSC is preserved when defining thresholds $h$ and $h'$ for QIRR, we would like, for example, that \emph{all} terms $F_j$ making up $G_{n+1}$ are chosen together in any candidate cover $T'\subseteq T$. To address this issue, we introduce the chaperone qubits, which ensure that any candidate $T'$ plays by these rules. In particular, we can make sure that all terms $F_{n+1},\ldots,F_{n+2r-1}$ are chosen in any $T'$, allowing us to rigorously apply our knowledge of the spectra of $G_{n+1}$ and $G_{n+2}$ to the analysis of $F_T$ versus $F_{T'}$.

We now show that if there exists a cover $S^\prime\subseteq S$ for QSSC of size $v$, then there exists a $T'\subseteq T$ such that $\abs{T'}=v+2r-3$ satisfying the conditions for a YES instance of QIRR. Namely, let
\begin{equation}
    T' = \set{F_i}_{i\in [n] \text{ and } G_i\in S'}\cup\set{F_{n+1},\ldots,F_{n+2r-1}}.\label{2_eqn:succint}
\end{equation}
Note that it suffices to show that $F_{T'}\succeq \gamma I$ (since if $F_{T'}\succeq \gamma I$, then $F_{T}\succeq \gamma I$ as well). To show this, observe first that we can write $F_{T'}=K_1+K_2$, for $K_1$ and $K_2$ defined as:
\begin{eqnarray}
    K_1 &:=& \ketbra{0}{0}_A\otimes\left(\sum_{i\in [n]\text{ and }G_i\in S'} G_i + (\Delta+1)\sum_{i=1}^{r-1}H_i + \sum_{i=1}^{r}(I-H_i)\right)_B\otimes I_C  \\&=&\ketbra{0}{0}_A\otimes\left(G_{S'}+(r-1)I\right)_B\otimes I_C\label{2_eqn:K1}\\
    K_2 &:=& \ketbra{1}{1}_A\otimes I_B\otimes \left((\Delta+1)\left(\sum_{i=0}^{r-2}\ketbra{i}{i}\right)+r\ketbra{r-1}{r-1}\right)_C\\ &=&\ketbra{1}{1}_A\otimes I_B\otimes\left(rI + (\Delta+1-r)\sum_{i=0}^{r-2}\ketbra{i}{i}\right)_C\label{2_eqn:K2},
\end{eqnarray}
where we can assume without loss of generality that $\Delta\geq r-1$. Let $\ket{\phi}=a_0\ket{0}_A\ket{\phi_0}_{BC}+a_1\ket{1}_A\ket{\phi_1}_{BC}$ be an arbitrary state acting on this space with $\abs{a_0}^2+\abs{a_1}^2=1$ and for some unit vectors $\ket{\phi_0}_{BC}$ and $\ket{\phi_1}_{BC}$. Then
\begin{eqnarray}
    \trace(F_{T'}\ketbra{\phi}{\phi})&=&\trace(K_1\ketbra{\phi}{\phi})+\trace(K_2\ketbra{\phi}{\phi})\\
    &=&\abs{a_0}^2\trace(K_1\ketbra{0}{0}\otimes\ketbra{\phi_0}{\phi_0})+\abs{a_1}^2\trace(K_2\ketbra{1}{1}\otimes\ketbra{\phi_1}{\phi_1})\\
    &\geq&\abs{a_0}^2(\alpha + r-1)+\abs{a_1}^2r\\
    &\geq&\gamma,
\end{eqnarray}
where the first inequality follows since $\trace(X_{AB}I_A\otimes Y_B)=\trace(\trace_A(X_{AB})Y_B)$ and since $G_{S'}$ is a cover by assumption, and the second inequality since $0\leq\alpha\leq 1$. We conclude that $H_{T'}\succeq \gamma I$, as desired.

We now prove the other direction, namely that if there does not exist a cover $S^\prime\subseteq S$ for QSSC of size $v$, then all subsets $T'\subseteq T$ of size $\abs{T'}=v+2r-3$ satisfy the conditions for a NO instance of QIRR. To see this, note first that any candidate $T'$ must include the terms $F_i$ for $n+1\leq i\leq n+r-1$. This is because if, for example, $F_{n+1}\not\in T'$, then vector $\ket{\phi}:=\ket{1}_A\ket{\psi}_B\ket{0}_C$ obtains expected value $\Delta+1\geq \gamma$ against $F_T$, but $\ket{\phi}$ is orthogonal to $F_{T'}$. A similar argument holds for the terms $F_i$ with indices $n+r\leq i\leq n+2r-1$, since state $\ket{\phi}:=\ket{1}_A\ket{\psi}_B\ket{r-1}_C$ obtains expected value $r\geq \gamma$ against $F_T$, but obtains value at most $r-1\leq \delta$ against $F_{T'}$ if there exists an $i\in [n+r,n+2r-1]$ such that $i\not\in T'$. Thus, for any candidate $T'$ of size $v+2r-3$, this leaves $v-2$ terms to be chosen from $\set{F_1,\ldots,F_n}$. If we now restrict ourselves to states of the form $\ket{0}_A\ket{\psi}_{BC}$, we find that we are reduced to the same argument in the NO direction of Theorem~\ref{2_thm:qmmwwToQssc} -- namely, as $S$ is a cover and any $S'\subseteq S$ of size $v$ is not a cover, there must exist a state $\ket{\phi}:=\ket{0}_A\ket{\psi}_{BC}$ such that
\begin{equation}\label{2_eqn:qirrsound1}
    \trace(\ketbra{\phi}{\phi}F_T)=\trace\left[\trace_C(\ketbra{\psi}{\psi})(G_S + (r-1)I)\right]\geq \alpha + (r-1)\geq \gamma,
\end{equation}
whereas
\begin{equation}\label{2_eqn:qirrsound2}
    \trace(\ketbra{\phi}{\phi}F_{T'})=\trace\left[\trace_C(\ketbra{\psi}{\psi})(G_{S'} + (r-1)I)\right]\leq \beta + (r-1)=\delta.
\end{equation}
This concludes the reduction from QSSC to QIRR with parameters $h=g+2r-3$ and $h'=g'+2r-3$.

To obtain improved parameters $h=gr-1$ and $h'=g'r-1$, we modify the construction above as follows (intuition to follow): The terms $F_i$ for $n+1\leq n+2r-1$ from the old construction remain unchanged. For $i\in[n]$, we replace each $F_i := \ketbra{0}{0}_A\otimes (G_i)_B\otimes I_C$ with the $r$ distinct terms:
\begin{eqnarray}
&F_{i,1}& := \ketbra{0}{0}_A\otimes (G_i)_B\otimes \ketbra{0}{0}_C,\\
    &F_{i,2}& := \ketbra{0}{0}_A\otimes (G_i)_B\otimes \ketbra{1}{1}_C,\\
    &\vdots&\nonumber\\
    &F_{i,r}& := \ketbra{0}{0}_A\otimes (G_i)_B\otimes \ketbra{r-1}{r-1}_C.
\end{eqnarray}
Thus, the total number of terms in our QIRR instance increases from $n+2r-1$ to $r(n+2)-1$. Intuitively, we have used the chaperone qubits to split each $F_i$ into $r$ terms $F_{i,j}$, such that if in the old construction we chose $F_i\in T'$, then in the new construction we must place all $r$ terms $F_{i,j}$ in $T'$ in order for the new $F_{T'}$ to maintain its desired spectrum. Thus, whereas the old construction chose $g-2$ terms $F_i$ to place in $T'$, the new construction chooses $r(g-2)$ terms $F_{i,j}$ to place in $T'$, yielding the desired thresholds $h=gr-1$ and $h'=g'r-1$.

The completeness and soundness proofs now follow similarly to the previous case. Namely, given a cover $S'\subseteq S$ for QSSC of size $v$, the set $T'\subseteq T$ with $\abs{T'}=vr-1$ we choose is
\begin{equation}
    T' = \set{F_{i,j}}_{i\in [n] \text{ and } G_i\in S',j \in [r]}\cup\set{F_{n+1},\ldots,F_{n+2r-1}}.
\end{equation}
Since $F_{T'}$ in this new reduction is precisely $F_{T'}$ in the old reduction, the remainder of this direction proceeds identically. Conversely, if there does not exist a cover $S'\subseteq S$ for QSSC of size $v$, we similarly first argue that $F_i$ for $n+1\leq i\leq n + 2r-1$ must be chosen in any candidate $T'\subseteq T$ of size $\abs{T'}=vr-1$, leaving $r(v-2)$ terms to be chosen from $\set{F_{1,1},\ldots, F_{n,r}}$. This implies that for any such $T'$, there must exist a $j\in [r]$ such that the number of terms $F_{i,j}$ in $T'$ is at most $v-2$. Since no cover of size $v$ exists for our QSSC instance, we conclude there exists an appropriate choice of $\ket{\phi}:=\ket{0}_A\ket{\psi}_B\ket{j}_C$ such that Equations~(\ref{2_eqn:qirrsound1}) and~(\ref{2_eqn:qirrsound2}) still hold.
\end{proof}

\section{Improvements to hardness gaps}\label{2_scn:improvements}

We now improve the hardness gaps of Theorems~\ref{2_thm:qmmwwHard},~\ref{2_thm:qmmwwToQssc}, and~\ref{2_thm:QSSCtoQIRR} to obtain the results claimed in Theorems~\ref{2_thm:QSSChard} and~\ref{2_thm:QIRRhard}. The key idea is to use the fact that the gap for QMW from Theorem~\ref{2_thm:qmmwwHard} can be amplified by composing the cQMA circuit $W$ with itself. The results here adapt Section 5 of~\cite{U99} in a simple manner to the quantum setting.

Specifically, assume for the moment that the output qubit of $W$ is actually a classical bit, i.e.\ that the output qubit is given \emph{after} being measured in the computational basis. Then, one can recursively define $W^1:=W$ and $W^t$ as $W^{t-1}$ with $n$ independent copies of $W$ at each of its $n$ INPUT bits. (Note that entanglement between quantum proofs for different copies of $W$ does not affect the soundness of $W^t$, as each $W$ outputs a classical bit, and no quantum proofs are reused.) Now, such a recursive composition of $W$ can easily be made well-defined even if $W$'s output qubit is a superposition of $\ket{0}$ and $\ket{1}$ using the principle of deferred measurement~\cite{NC00} -- namely, without loss of generality, we can assume $W$ first copies its $n$ classical INPUT bits to an ancilla, and henceforth acts only on its CHOICE and ancilla registers. Thus, the output qubit of each copy of $W$ in $W^t$ is effectively used only as a classical control in the remainder of the circuit, and so the measurement of all output qubits can be deferred to the end of $W^t$. Finally, since we can assume using standard error reduction that the completeness and soundness error of $W$ scale as $2^{-n}$, it follows by the union bound that with probability exponentially close to $1$, all the $W$ circuits comprising $W^t$ output the correct answer. In other words, with high probability, one can think of $W^t$ as a composition of zero-error circuits $W$ (where zero-error means zero completeness and soundness error). With this viewpoint, the proof of Lemma 3 of Reference~\cite{U99} directly yields the following result in the quantum setting.

\begin{lemma}\label{2_lem:umansamp}
        If W is a cQMA circuit accepting exactly a monotone set, it follows that:
    \begin{enumerate}
        \item $\abs{W^t}\leq n^t \abs{W}$, where $\abs{W}$ denotes the size of $W$,
        \item $W$ accepts an input of Hamming weight $k$ if and only if $W^t$ accepts an input of weight $k^t$,
        \item $W^t$ accepts exactly a monotone set.
    \end{enumerate}
\end{lemma}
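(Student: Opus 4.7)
The plan is to argue each of the three items separately by induction on $t$, after first carefully setting up the ``zero-error classical'' viewpoint of $W^t$ sketched in the paragraph preceding the lemma. Concretely, I would begin by amplifying $W$ so that its completeness and soundness errors are at most $2^{-2n}$ (this is standard error reduction for $\cqma$-type verifiers). Applying the principle of deferred measurement to each internal copy of $W$ in $W^t$ lets us treat every copy as if it outputs a classical bit feeding the next layer; the union bound over the at most $n^t$ copies of $W$ appearing in $W^{t}$ (by Part~1) shows that, with probability exponentially close to $1$, every copy behaves as a zero-error monotone classical function. The rest of the proof then proceeds as if we were composing honest monotone Boolean circuits, which brings us into the setting of Lemma~3 of Umans~\cite{U99}.

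Part~1 is a direct size calculation: the recursive definition gives $|W^t|=|W^{t-1}|+n\,|W|$, which unrolls to $|W^t|\le n^t\,|W|$. Part~3 follows by a clean induction: given $y\le y'$ (bitwise) accepted by $W^t$ as a weight-increase of some accepted $y$, split $y$ and $y'$ into the $n^{t-1}$ size-$n$ blocks feeding the bottom layer of copies of $W$; in every block the input to the $W$-copy has only weakly increased, so monotonicity of $W$ implies its output bit has only weakly increased; hence the induced input to $W^{t-1}$ has only weakly increased, and monotonicity of $W^{t-1}$ (inductive hypothesis) yields that $W^{t-1}$ still accepts.

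Part~2 is the substantive step. In the forward direction, I would argue by induction: if $W$ accepts some weight-$k$ string $x$ and $W^{t-1}$ accepts some weight-$k^{t-1}$ string $x'$ (inductive hypothesis), then build an input for $W^t$ by, for each of the $k^{t-1}$ positions where $x'_j=1$, feeding the corresponding copy of $W$ the string $x$, and for each of the remaining $n^{t-1}-k^{t-1}$ positions, feeding it the all-zeros string. Monotonicity and the assumption that $W$ accepts a monotone set guarantee that every ``$1$''-copy outputs $1$ and every ``$0$''-copy outputs $0$ (the boundary case where $W$ also accepts $0^n$ makes the whole monotone set trivial and the statement vacuous with $k=0$). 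The constructed input has total weight $k\cdot k^{t-1}=k^t$, as required. For the converse direction, I would let $\mu(C)$ denote the minimum Hamming weight of an accepting input of a circuit $C$, and show $\mu(W^t)\ge\mu(W)^t$: any accepting input to $W^t$ must make the top $W^{t-1}$-subcircuit accept, so the vector of bit outputs of the bottom $W$-copies has weight at least $\mu(W^{t-1})=\mu(W)^{t-1}$ by induction, and each of those $\mu(W)^{t-1}$ ones requires the corresponding $W$-copy to receive an input of weight at least $\mu(W)$, giving $\mu(W^t)\ge \mu(W)^{t-1}\cdot\mu(W)=\mu(W)^t$.

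The main obstacle I anticipate is a clean justification that, despite $W$ being a genuine $\cqma$ circuit (with a quantum \textit{CHOICE} register on which a potentially entangled proof across all copies of $W$ in $W^t$ may be placed), the monotone composition argument remains valid. Addressing this is where the error-reduction plus deferred-measurement preamble is crucial: after amplification, for any fixed classical \textit{INPUT} assignment to $W^t$ the distribution of outcomes of each copy of $W$ is exponentially concentrated on its zero-error value regardless of the global quantum proof, since the correctness probability of each $W$ on any classical input holds against every quantum \textit{CHOICE}-state. Once this is in place, the inductive arguments above for Parts~2 and~3 go through essentially verbatim from the classical monotone-composition analysis in~\cite{U99}.
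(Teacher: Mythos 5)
Your proof is correct and takes essentially the same route as the paper: the paper's entire argument consists of the zero-error setup you describe (error reduction, deferred measurement so that each copy of $W$ effectively outputs a classical bit, and a union bound over all copies of $W$ in $W^t$), after which it simply invokes the proof of Lemma~3 of Umans~\cite{U99}, whose content is exactly the monotone-composition induction you spell out. The only slip is the size recursion in Part~1, which for the tree-of-$W$'s construction should read $\abs{W^t}=\abs{W^{t-1}}+n^{t-1}\abs{W}$ (one fresh copy of $W$ per input bit of $W^{t-1}$) rather than $\abs{W^{t-1}}+n\abs{W}$, but this still unrolls to $\abs{W^t}\leq n^t\abs{W}$, so the stated conclusion is unaffected.
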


To improve the hardness gap of Theorem~\ref{2_thm:qmmwwHard}, we now simply replace the cQMA circuit $W$ constructed in the proof of Theorem~\ref{2_thm:qmmwwHard} with $W^t$ for an appropriate choice of $t$. The details and resulting analysis follow identically to the proof of Theorem 4 of Reference~\cite{U99}, which combined with the improved disperser construction of Reference~\cite{TUZ07} (see Theorem 7.2 therein) yields:

\begin{theorem}\label{2_thm:qmwampgap}
    QMW is $\cqs$-hard to approximate with gap $N^{1-\epsilon}$ for any $\epsilon>0$, for $N$ the encoding size of the QMW instance.
\end{theorem}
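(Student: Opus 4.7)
The plan is to strengthen the hardness gap of Theorem~\ref{2_thm:qmmwwHard} from $N^\epsilon$ to $N^{1-\epsilon}$ for any $\epsilon>0$, by (i)~replacing the SZ disperser used there with the improved construction of Ta-Shma, Umans, and Zuckerman (Theorem~7.2 of~\cite{TUZ07}), and (ii)~self-composing the resulting $\cqma$ circuit $W$ into $W^t$ via Lemma~\ref{2_lem:umansamp} for a sufficiently large constant~$t$. The argument follows Theorem~4 of Umans~\cite{U99} closely; the only genuinely new issue, which the excerpt above already addresses, is verifying that the composition $W^t$ is sound in our quantum setting.

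First, I would redo the reduction of Theorem~\ref{2_thm:qmmwwHard} with the TUZ disperser in place of the SZ disperser. For any constant~$\gamma$, the TUZ construction yields left-degree $2^d$ with $d=O(\log n)$ at the parameters we need, so that with $k=\gamma \log c(n)$ and $\epsilon'=1/2$ we obtain a $\cqma$ circuit $W$ of size $\poly(n)$ (the polynomial exponent depending on $\gamma$) accepting exactly a non-empty monotone set, with YES/NO Hamming-weight thresholds $g$ and $g'$ satisfying $g'/g = \Omega(c(n)^{\gamma-1})$.

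Next, I would form $W^t$ as described above Lemma~\ref{2_lem:umansamp}. The potential quantum subtlety is that a cheating prover might try to exploit entanglement across the fresh CHOICE blocks fed into different inner copies of $W$; as the excerpt explains, this is handled by first applying standard error reduction so that each copy of $W$ has completeness and soundness errors $2^{-\Omega(n)}$, and then using the principle of deferred measurement to treat each inner output qubit as a classical control on the next level. A union bound over the $\poly(n^t)$ copies of $W$ shows that with probability exponentially close to one every copy returns the correct answer, so that Lemma~\ref{2_lem:umansamp} applies verbatim from the classical analysis: $|W^t|\leq n^t|W|$, the composed thresholds are $g^t$ and $(g')^t$, and $W^t$ accepts exactly a non-empty monotone set.

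The amplified gap is therefore $(g'/g)^t = \Omega(c(n)^{(\gamma-1)t})$, while the encoding size of the resulting QMW instance satisfies $N=\poly(|W^t|)= n^{O(t+\gamma)}$, so the gap as a function of $N$ is $\Omega(N^{(\gamma-1)t/O(t+\gamma)})$. Given $\epsilon>0$, I would first fix $\gamma$ so that $(\gamma-1)/\gamma > 1-\epsilon/3$, and then choose $t$ large enough that the $O(t+\gamma)$ denominator correction absorbs at most another $\epsilon/3$ loss, yielding an overall exponent of at least $1-\epsilon$ and hence the desired gap $N^{1-\epsilon}$. The main obstacle I anticipate is precisely the quantum soundness of $W^t$ (entangled proofs across copies), but the deferred-measurement reduction to a probabilistically zero-error classical composition resolves it and lets Umans' analysis go through unchanged.
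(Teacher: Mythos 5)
Your proposal matches the paper's proof essentially verbatim: the paper likewise obtains Theorem~\ref{2_thm:qmwampgap} by substituting the improved disperser of Ta-Shma, Umans, and Zuckerman into Theorem~\ref{2_thm:qmmwwHard}, self-composing $W$ into $W^t$ via Lemma~\ref{2_lem:umansamp}, handling quantum soundness exactly as you do (standard error reduction to exponentially small error, deferred measurement so each inner output qubit acts as a classical control, and a union bound over all copies), and then deferring the parameter bookkeeping to Theorem~4 of Umans~\cite{U99}. No substantive differences.
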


 Using this as the starting point in our reduction chain to QSSC and QIRR, a closer analysis of the proofs of Theorems~\ref{2_thm:QSSChard} and~\ref{2_thm:QIRRhard} now yields:
\begin{corollary}\label{2_cor:qsscampgap}
    QSSC and QIRR are $\cqs$-hard to approximate with gaps $N^{1-\epsilon}$ and $N^{\frac{1}{2}-\epsilon}$ for any $\epsilon>0$, respectively, and where $N$ is the encoding size of the respective QSSC and QIRR instances.
\end{corollary}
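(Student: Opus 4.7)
The plan is to chain together the improved QMW hardness of Theorem~\ref{2_thm:qmwampgap} with the reductions of Theorems~\ref{2_thm:qmmwwToQssc} and~\ref{2_thm:QSSCtoQIRR}, while carefully tracking how the encoding sizes and the YES/NO thresholds transform at each step. Let $n$ denote the encoding size of the original $\cqs$ instance, $N_{\text{QMW}}$, $N_{\text{QSSC}}$, and $N_{\text{QIRR}}$ the encoding sizes of the QMW, QSSC and QIRR instances produced along the way, and $(f,f')$, $(g,g')$, $(h,h')$ the corresponding pairs of thresholds.

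First, for QSSC, I would invoke Theorem~\ref{2_thm:qmwampgap} to obtain a QMW instance with $f'/f\geq N_{\text{QMW}}^{1-\epsilon}$ for any target $\epsilon>0$. Inspecting Theorem~\ref{2_thm:qmmwwToQssc}, the QSSC instance consists of $n+2$ local Hamiltonian terms, each $5$-local, so $N_{\text{QSSC}}$ is nearly linear in $N_{\text{QMW}}$: more precisely, $N_{\text{QSSC}}\leq c_1\cdot N_{\text{QMW}}^{c_2}$ for some fixed constants $c_1,c_2$ arising from Kitaev's circuit-to-Hamiltonian overhead. Since $g'/g=(f'+2)/(f+2)\geq f'/(2f)$ once $f\geq 2$ (which is true by gap amplification), I can choose $\epsilon$ small enough so that $(1-\epsilon)/c_2\geq 1-\epsilon'$ for any desired $\epsilon'>0$. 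This yields $g'/g\in\Omega(N_{\text{QSSC}}^{1-\epsilon'})$, proving the QSSC claim.

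Second, for QIRR, I would start again from a QMW instance with gap $N_{\text{QMW}}^{1-\epsilon}$, pass through Theorem~\ref{2_thm:qmmwwToQssc} to obtain a QSSC instance with gap $g'/g\in\Omega(n^{1-\epsilon})$ for $n\in\Theta(N_{\text{QMW}})$, and then apply the improved variant of Theorem~\ref{2_thm:QSSCtoQIRR} giving thresholds $h=gr-1$ and $h'=g'r-1$, where $r$ is the number of local terms in Kitaev's Hamiltonian from Section~\ref{2_scn:def}. The gap ratio remains $h'/h=(g'r-1)/(gr-1)\approx g'/g\in\Omega(n^{1-\epsilon})$, essentially unchanged. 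However, the encoding now contains $r(n+2)-1$ projection-like terms acting on the tag qubit, the original space, and $\log r$ chaperone qubits; since $r=\operatorname{poly}(n)$, a direct count gives $N_{\text{QIRR}}\leq c_3\cdot (rn)^{c_4}$ for constants $c_3,c_4$. Writing $r\in\Theta(n^{\kappa})$ for the appropriate $\kappa$ determined by Kitaev's construction, one obtains $N_{\text{QIRR}}\leq c_3'\cdot n^{c_4(1+\kappa)}$, so that $h'/h\in\Omega(N_{\text{QIRR}}^{(1-\epsilon)/(c_4(1+\kappa))})$. Tuning $\epsilon$ and tracking the exact value of $c_4(1+\kappa)$, the target exponent works out to $1/2-\epsilon'$ for any desired $\epsilon'>0$, which is the claimed hardness for QIRR.

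The main obstacle, and where the asymmetry between the $1-\epsilon$ bound for QSSC and the $1/2-\epsilon$ bound for QIRR is forced upon us, is the additional factor of $r$ in the count of QIRR terms. Intuitively, the improved reduction from QSSC to QIRR multiplies both $g$ and $g'$ by $r$, so the multiplicative gap is preserved exactly, but the encoding size is also multiplied by roughly $r$, which is itself polynomial in the QMW size. Thus when we re-express the preserved gap as a function of $N_{\text{QIRR}}$ rather than $N_{\text{QSSC}}$, we pay a quadratic blowup, and so a gap of $n^{1-\epsilon}$ translates into at best $N_{\text{QIRR}}^{1/2-\epsilon'}$. The careful part of the write-up will be to verify that the disperser-based amplification from Lemma~\ref{2_lem:umansamp} can indeed be tuned finely enough so that the arbitrary $\epsilon'>0$ in both statements is attainable, and to confirm that the $N_{\text{QSSC}}/N_{\text{QMW}}$ and $N_{\text{QIRR}}/(rN_{\text{QSSC}})$ ratios are only polylogarithmic, so that no further losses occur.
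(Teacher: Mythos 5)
Your overall plan --- start from the amplified QMW hardness of Theorem~\ref{2_thm:qmwampgap} and push it through the gap-preserving reductions of Theorems~\ref{2_thm:qmmwwToQssc} and~\ref{2_thm:QSSCtoQIRR} while tracking thresholds and encoding sizes --- is exactly the route the paper takes, and your diagnosis that the extra factor of $r$ in the QIRR term count is what forces the exponent down from $1-\epsilon$ to $\tfrac{1}{2}-\epsilon$ is the right one. However, there is a genuine flaw in the mechanism you use to absorb the polynomial blow-up in encoding size. You write $N_{\text{QSSC}}\leq c_1 N_{\text{QMW}}^{c_2}$ for a \emph{fixed} constant $c_2$ coming from Kitaev's construction, and then claim to ``choose $\epsilon$ small enough so that $(1-\epsilon)/c_2\geq 1-\epsilon'$.'' This is impossible whenever $c_2>1$: for every $\epsilon>0$ one has $(1-\epsilon)/c_2<1/c_2$, which is bounded away from $1$, so no choice of $\epsilon$ reaches $1-\epsilon'$ once $\epsilon'<1-1/c_2$. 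The same problem recurs in your QIRR paragraph, where ``tuning $\epsilon$'' is again asked to control the exponent $1/(c_4(1+\kappa))$, which it cannot do.

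The parameter that actually does the work is not $\epsilon$ but the self-composition depth $t$ from Lemma~\ref{2_lem:umansamp} (together with the disperser parameter $\gamma$). The point of Umans' argument, which is what must carry over here, is that the number of input bits $n$ of $W^t$ grows like $n_0^t$ while $\abs{W^t}\leq n_0^t\abs{W}$, so the ratio of circuit size to input length stays fixed as $t$ grows; the gap lives in the input bits, and hence equals $N^{1-o(1)}$ with the $o(1)$ vanishing as $t\to\infty$. In other words, one must argue that the \emph{exponent} $c_2$ can itself be driven to $1$ by increasing $t$ and $\gamma$, not that $\epsilon$ compensates for a fixed $c_2>1$. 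You gesture at the right idea in your closing paragraph, but the claim there that $N_{\text{QSSC}}/N_{\text{QMW}}$ is ``only polylogarithmic'' also needs care: the error reduction in Theorem~\ref{2_thm:qmmwwToQssc} (down to $\epsilon=2^{-4(n+m)}$) and the unary clock make $L$ grow polynomially with the instance, so what actually has to be verified is that these overheads contribute only an $N^{\epsilon'}$ factor once $t$ and $\gamma$ are taken large relative to $1/\epsilon'$. Repairing this changes the bookkeeping but not your overall structure.
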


%
%

\section{Hardness of approximation for QCMA}\label{2_scn:QCMA}


We now briefly remark that the approach of Theorems~\ref{2_thm:qmmwwHard} and~\ref{2_thm:qmwampgap} can be adapted to show hardness of approximation for {QCMA}. Our result is a straightforward extension of Umans' classical result~\cite{U99} showing NP-hardness of approximation for the problem MONOTONE MINIMUM SATISFYING ASSIGNMENT.

Specifically, define the problem QUANTUM MONOTONE MINIMUM SATISFYING ASSIGNMENT (QMSA) analogously to QMW, except with the definition of a cQMA circuit $V$ modified to drop the second (quantum) proof, i.e.\ $V$ now only takes one input register comprised of $n$ classical bits. (For example, Definition~\ref{2_def:cQMA} is modified to say that $V$ \emph{accepts}  $x\in\set{0,1}^n$ in INPUT if measuring $\ket{a}$ in the computational basis yields $1$ with probability at least $2/3$.) Then, it is straightforward to re-run the proofs of Theorems~\ref{2_thm:qmmwwHard} and~\ref{2_thm:qmwampgap} without the existence of a second quantum proof register, leading to Theorem~\ref{2_thm:QMSAhard}.

\section{A canonical $\cqs$-complete problem}\label{2_scn:anotherproblem}

In this section, we first show that a quantum generalization of the canonical $\st$-complete problem $\ssat{2}$, denoted $\cqslh{k}$, is $\cqs$-complete. We then observe that a similar proof yields $\cqs$-hardness of approximation for an appropriately defined variant of $\cqslh{k}$.

\begin{definition}[$\cqslh{k}$]\label{2_def:cqslh}
    Given a $3$-local Hamiltonian $H$ acting on $N= n+m$ qubits, and $a,b\in\reals$ such that $a\leq b$ for $b-a \geq 1$, output:
\begin{itemize}
    \item YES if $\exists$ $x\in\set{0,1}^n$ such that $\forall$ $\ket{y}\in\B^{\otimes m}$, $\trace(H \ketbra{x}{x}\otimes\ketbra{y}{y})\geq b$.
    \item NO if $\forall$ $x\in\set{0,1}^n$, $\exists$ $\ket{y}\in\B^{\otimes m}$ such that $\trace(H \ketbra{x}{x}\otimes\ketbra{y}{y})\leq a$.
\end{itemize}
\end{definition}

\begin{theorem}
$\cqslh{3}$ is $\cqs$-complete.
\end{theorem}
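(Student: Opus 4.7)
\emph{Containment.} To show $\cqslh{3}\in\cqs$, we design a $\cqs$ verifier that takes as input the classical proof $x\in\{0,1\}^n$ and quantum proof $\ket{y}\in\B^{\otimes m}$, and applies the standard phase-estimation-based procedure (see Section~\ref{0_sscn:5LH}) to estimate $\trace(H\ketbra{x,y}{x,y})$ to within precision $(b-a)/4$. The verifier accepts if and only if the estimate is at least $(a+b)/2$. In the YES case, every $\ket{y}$ yields energy $\geq b$, so for the existentially chosen $x$ the verifier accepts with high probability on all $\ket{y}$; in the NO case, for every $x$ there is a $\ket{y}$ with energy $\leq a$ that drives the verifier to reject. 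Since $b-a\geq 1$, the acceptance gap is inverse-polynomially bounded, and we amplify to the standard thresholds by parallel repetition, reusing the single classical register across copies as in the $\cqs$ error-reduction argument of Section~\ref{2_scn:def}.

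\emph{Hardness.} Let $\Pi=(\ayes,\ano)$ be an arbitrary $\cqs$ problem with verifier $V$ acting on an $n$-bit classical proof register $A$, an $m$-qubit quantum proof register $B$, and a $p$-qubit ancilla register $C$ initialized to $\ket{0}^{\otimes p}$. By error reduction, we may assume $V$ has completeness and soundness errors at most some inverse-exponentially small $\epsilon$. Construct a modified circuit $\tilde V$ that (i) CNOTs each qubit of $A$ into a fresh ancilla register $A'\subseteq C$, (ii) runs $V$ using $A'$ in place of $A$, and (iii) flips the output qubit so that $\tilde V$ accepts iff $V$ rejects. Crucially, $\tilde V$ touches register $A$ only in step~(i), so $A$ remains in any computational basis state $\ket{x}$ throughout the computation. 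Applying the $3$-local construction of Kempe and Regev~\cite{KR03} to $\tilde V$ (which is made of one- and two-qubit gates) produces a $3$-local Hamiltonian $H$ on $A\otimes B\otimes C\otimes D$, where $D$ is the clock register. Because the gates of $\tilde V$ act trivially on $A$, every history state of this construction factors as
\begin{equation}
    \ket{x}_A\otimes\Bigl(\tfrac{1}{\sqrt{L+1}}\sum_{i=0}^{L}\tilde V_i\cdots\tilde V_1\ket{q}_B\otimes\ket{0}_C\otimes\ket{i}_D\Bigr).
\end{equation}
We output the $\cqslh{3}$ instance $(H,a',b')$ where $A$ plays the role of the $n$ classical qubits and $B\otimes C\otimes D$ plays the role of the quantum register; setting $a'=\epsilon/(L+1)$ and $b'\in\Omega((1-\sqrt{\epsilon})/L^3)$, then rescaling by a polynomial factor to enforce $b'-a'\geq 1$, yields the desired parameters.

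For the two-sided analysis, let $H_x:=(\bra{x}_A\otimes I)\,H\,(\ket{x}_A\otimes I)$ denote the effective Hamiltonian induced on $B\otimes C\otimes D$ when $A$ is clamped to $\ket{x}$. By the factorization above together with the fact that no gate of $\tilde V$ touches $A$, $H_x$ is exactly the Kempe-Regev Hamiltonian associated to $\tilde V$ with classical input hardwired to $x$. If $\Pi\in\ayes$ with witness $c$, then $\tilde V$ with $x=c$ hardwired rejects every quantum input with probability at least $1-\epsilon$, so the analog of Lemma~\ref{2_lem:kitaev} yields $H_c\succeq b'I$; equivalently, $\trace(H\ketbra{c,y}{c,y})\geq b'$ for every $\ket{y}$. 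Conversely, if $\Pi\in\ano$, then for each $x$ there exists $\ket{q_x}$ that $\tilde V$ accepts with probability at least $1-\epsilon$ when $x$ is hardwired; the corresponding history state has the form $\ket{x}_A\otimes\ket{\phi_x}_{BCD}$ and achieves $\trace(H\ketbra{x,\phi_x}{x,\phi_x})\leq a'$, so $\ket{y}=\ket{\phi_x}$ witnesses the NO condition for this $x$. The main technical hurdle to overcome is the realization that, without the CNOT-copy-and-freeze trick of step~(i), the unitaries of $V$ could entangle $A$ with the remaining registers during propagation, destroying the clean product structure $\ket{x}_A\otimes\ket{y}_{BCD}$ demanded by Definition~\ref{2_def:cqslh}; once that invariance is engineered, the remainder is essentially a book-keeping exercise applying the Kempe-Regev spectrum bounds to the ``effective'' circuit obtained by fixing $x$.
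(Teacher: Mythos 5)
Your proof is correct and follows essentially the same route as the paper's: the CNOT copy-and-freeze of the classical register, the output flip so that acceptance of the original verifier corresponds to high energy, the circuit-to-Hamiltonian construction yielding an effective Hamiltonian $H_x$ for each clamped classical string $x$, and the application of the Kitaev-style spectral bounds to the hardwired circuit in both the YES and NO directions. The only cosmetic difference is that you invoke the Kempe--Regev $3$-local construction directly, whereas the paper works through Kitaev's $5$-local construction in detail and remarks that the $3$-local case follows identically.
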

\begin{proof}
That $\cqslh{3}\in\cqs$ follows from Kitaev's verifier for placing $k$-local Hamiltonian in $\class{QMA}$~\cite{KSV02} (see Section~\ref{0_sscn:5LH}). As for $\cqs$-hardness, for simplicity we show the result for the case of $\cqslh{5}$ defined with $5$-local Hamiltonians. The proof for the $3$-local case follows identically by instead substituting the $3$-local circuit-to-Hamiltonian construction of Reference~\cite{KR03} below (this is possible because our proof does not exploit the structure of the clock register or $\hstab$).

To see that any instance $\Pi$ of a problem in $\cqs$ reduces to an instance of $\cqslh{5}$, let $V''$ denote the $\cqs$ verification circuit for $\Pi$. Recall that $V''$ acts on a classical proof register $A$, a quantum proof register $B$, and an ancilla register $C$. We begin by modifying $V''$ to obtain a new equivalent circuit $V'$ which first copies the (classical) contents of $A$ to its ancilla register $C$, and henceforth acts on this copied proof in $C$ throughout the verification. This ensures the contents of $A$ remain unchanged during the verification. Next, we modify $V'$ to obtain $V$ by concatenating to its end a Pauli $X$ on the output qubit; this swaps the cases in which $V'$ accepts and rejects, respectively. This is necessary because if $\ket{c}\otimes\ket{q}$ is accepted by $V'$, then $\histstatecq{c}{q}$ obtains low energy against Kitaev's Hamiltonian, whereas in our YES instance here we require high energy. Finally, we apply Kitaev's circuit-to-Hamiltonian construction from Section~\ref{2_scn:def} on $V$ to obtain a $5$-local Hamiltonian $H$.

Suppose now that we have a YES instance of $\Pi$, i.e.\ there exists bit string $\ket{c}$ such that for all quantum states $\ket{q}$, the circuit $V''$ accepts proof $\ket{c}\otimes\ket{q}$ with probability at least $1-\epsilon$ (and hence $V$ rejects $\ket{c}\otimes\ket{q}$ with probability at least $1-\epsilon$). We show that for all $\ket{\psi}_{B,C,D}$, the state $\ket{c}_A\otimes\ket{\psi}_{B,C,D}$ attains expectation value at least $b$ against $H$, for $b$ from Lemma~\ref{2_lem:kitaev}. In other words, letting $\Pi_c:=(\ketbra{c}{c}_A\otimes I_{B,C,D})$, we claim
\begin{eqnarray}
    \bra{c}\otimes\bra{\psi}H\ket{c}\otimes\ket{\psi}
    =\bra{c}\otimes\bra{\psi}\Pi_cH\Pi_c\ket{c}\otimes\ket{\psi}
    \geq b.
\end{eqnarray}
To see this, observe first that
\begin{eqnarray}
    \Pi_c \hin \Pi_c &=& \ketbra{c}{c}_A\otimes I_B\otimes\left(\sum_{i=1}^{p} \ketbra{1}{1}_{C_i}\right)\otimes \ketbra{0}{0}_D=:\ketbra{c}{c}_A\otimes\hin',\\
    \Pi_c \hout \Pi_c &=& \ketbra{c}{c}_A\otimes\ketbra{0}{0}_{B_1}\otimes
     I_C\otimes \ketbra{L}{L}_D=:\ketbra{c}{c}_A\otimes\hout',\\
    \Pi_c\hstab\Pi_c&=&\ketbra{c}{c}_A\otimes I_{B,C}\otimes\sum_{i=1}^{L-1}\ketbra{01}{01}_{D_i,D_{i+1}}=:\ketbra{c}{c}_A\otimes\hstab'.
\end{eqnarray}
As for $\Pi_c\hprop\Pi_c$, recall that the verification circuit $V$ consists of two phases: The \emph{copy} phase, consisting of $n$ CNOT gates copying the contents of $A$ to $C$, and the \emph{verification} phase, consisting of the remaining $L-n$ gates of $V$. In other words, we can write
\begin{equation}
    \hprop=\sum_{j=1}^{n}H_j + \sum_{j=n+1}^L H_j,
\end{equation}
where $\sum_{j=1}^{n}H_j$ corresponds to the copy phase and $\sum_{j=n+1}^L H_j$ to the verification phase. Since during the verification phase, $V$ does not act on $A$, we have for all $j> n$ that
\begin{eqnarray}
    \Pi_c H_j \Pi_c &=& \ketbra{c}{c}_A\otimes\left[-\frac{1}{2}(V_j)_{B,C}\otimes\ketbra{ j}{{j-1}}_D -\frac{1}{2}(V_j^\dagger)_{B,C}\otimes\ketbra{{j-1}}{{j}}_D +\right.\\ &&\left.\hspace{22mm}\frac{1}{2}I_{B,C}\otimes(\ketbra{{j}}{{j}}+\ketbra{{j-1}}{{j-1}})_D\right]\\
    &=:&\ketbra{c}{c}_A\otimes H_{j}'.
\end{eqnarray}
As for the copy phase, let $\ketbra{i}{i}\otimes I$ act on $\B\otimes\B$ for $i\in\set{0,1}$. Then, observe that
\begin{equation}
    (\ketbra{i}{i}\otimes I) \operatorname{CNOT}(\ketbra{i}{i}\otimes I)=\ketbra{i}{i}\otimes X^i,
\end{equation}
where $X$ is the Pauli $X$ operator and $X^i=X$ if $i=1$ and $X^i=I$ otherwise. This implies that for any step $j\leq n$, i.e.\ where $V$ applies a CNOT gate with qubit $A_j$ as control and $C_j$ as target, and letting $c_j$ denote the $j$th bit of $c$, we have
\begin{eqnarray}
    \Pi_c H_{j} \Pi_c &=& \ketbra{c}{c}_A\otimes\left[-\frac{1}{2}X_{C_j}^{c_j}\otimes\ketbra{{j}}{{j-1}}_D -\frac{1}{2}X_{C_j}^{c_j}\otimes\ketbra{{j-1}}{{j}}_D +\right.\\ &&\left.\hspace{22mm}\frac{1}{2}I\otimes(\ketbra{{j}}{{j}}+\ketbra{{j-1}}{{j-1}})_D\right]\\
    &=:&\ketbra{c}{c}_A\otimes H_{j}'(c),
\end{eqnarray}
where the notation $H_{j}'(c)$ means $H_{j}'$ is a function of $c$. Letting $\hprop'(c):=\sum_{i=1}^{n}H'_j+\sum_{i=n+1}^{L}H'_j(c)$ and $H(c):=\hin'+\hout'+\hstab'+\hprop'(c)$, we thus have that
\begin{equation}
\bra{c}\otimes\bra{\psi} H\ket{c}\otimes\ket{\psi}=\bra{\psi}H(c)\ket{\psi}.
 \end{equation}
 It thus suffices to show that
$
    H(c)\succeq b I.
$

To see this, we return to the circuit $V$, and think of $V$ not as accepting classical input $c$, but rather as corresponding to a set of circuits $\set{V_c}$, where each $V_c$ is just $V$ with $c$ hard-wired into register $A$. In particular, at time step $0\leq j\leq n$, $V_c$ applies $X^{c_j}$ to qubit $C_j$. Taking this interpretation, we observe that for any string $c$, plugging $V_c$ into Kitaev's circuit-to-Hamiltonian yields precisely the Hamiltonian $H(c)$. Thus, since by assumption for our particular choice of $c$, $V''$ accepts $\ket{c}\otimes\ket{q}$ for all quantum proofs $\ket{q}$, it follows that $V_c$ rejects all $\ket{q}$ with probability at least $1-\epsilon$. Hence, Lemma~\ref{2_lem:kitaev} implies $H(c)\succeq b I$, as desired.

The converse direction proceeds similarly. Namely, suppose we have a NO instance of $\Pi$, i.e.\ for all bit strings $\ket{c}$, there exists a quantum proof $\ket{q}$ such that $V''$ rejects $\ket{c}\otimes\ket{q}$ with probability at least $1-\epsilon$. Then, we wish to show that for all $c$, there exists a $\ket{\psi_{c}}$ such that $\bra{\psi_{c}}H(c)\ket{\psi_{c}}\leq a$, for $a$ from Lemma~\ref{2_lem:kitaev}. To show this, fix an arbitrary $c$. Since there exists a $\ket{q}$ such that $V_c$ accepts $\ket{q}$ with probability at least $1-\epsilon$, it follows that the history state $\ket{\psi_{c}}:=\sum_{i=0}^LV_i\cdots V_1 \ket{q}_B\otimes \ket{0\cdots 0}_C \otimes \ket{ i}_D$ indeed satisfies
\begin{equation}
    \bra{\psi_{c}}H(c)\ket{\psi_{c}}=\bra{\psi_{c}}\hin'+\hout'+\hstab'+\hprop'(c)\ket{\psi_{c}}\leq 0+a+0+0= a.
\end{equation}
\end{proof}

Note that the proof of Theorem~\ref{2_thm:anothercomplete} has a special property --- the string $c$ fed into the classical proof register of $V''$ is mapped directly in our reduction to the candidate ground states $\ket{c}\ket{q}$ for $3$-local Hamiltonian $H$. This means, for example, that if there exists a $c$ with the desired properties for a YES instance of our starting $\cqs$ problem, then setting $x=c$ in Definition~\ref{2_def:cqslh} yields that the $\cqslh{3}$ instance we have mapped to is also a YES instance. It follows that applying the reduction in the proof of Theorem~\ref{2_thm:anothercomplete} to our hard-to-approximate instance of QMW from Theorem~\ref{2_thm:qmwampgap} directly yields Theorem~\ref{2_thm:anothercomplete2}, i.e.\ that the following variant of $\cqslh{3}$, which we call $\cqslhmin$, is $\cqs$-hard to approximate. Intuitively, $\cqslhmin$ is defined analogously to $\cqslh{3}$, except that here the goal is to minimize the Hamming weight of $x$.

\begin{definition}[$\cqslhmin$]\label{def:cqslh}
    Given a $3$-local Hamiltonian $H$ acting on $N= n+m$ qubits, $a,b\in\reals$ such that $a\leq b$ for $b-a \geq 1$, and integer thresholds $0\leq g\leq g'$, output:
\begin{itemize}
    \item YES if there exists $x\in\set{0,1}^n$ of Hamming weight at most $g$ such that for all $\ket{y}\in\B^{\otimes m}$, $\trace(H \ketbra{x}{x}\otimes\ketbra{y}{y})\geq b$.
    \item NO if for all $x\in\set{0,1}^n$ of Hamming weight at most $g'$, there exists $\ket{y}\in\B^{\otimes m}$ such that $\trace(H \ketbra{x}{x}\otimes\ketbra{y}{y})\leq a$.
\end{itemize}
\end{definition}

\noindent \emph{Acknowledgements for this chapter.} We thank Richard Cleve, Ashwin Nayak, Sarvagya Upadhyay, and John Watrous for interesting discussions, and especially Oded Regev for many helpful insights, including the suggestion to think about a quantum version of PH.

\chapter{QMA variants with polynomially many provers}\label{chap:qmapoly}

\emph{This chapter is based on~\cite{GSU11}:}\\

\vspace{-4mm}
\noindent S. Gharibian, J.~Sikora, and S. Upadhyay. QMA variants with polynomially many provers. Available at arXiv.org e-Print quant-ph/1108.0617v1, 2011.

\vspace{3mm}
\noindent In this chapter, we study three variants of multi-prover quantum Merlin-Arthur proof systems. We first show that the class of problems that can be efficiently verified using polynomially many quantum proofs, each of logarithmic-size, is exactly \class{MQA} (also known as QCMA), the class of problems which can be efficiently verified via a classical proof and a quantum verifier. We then study the class $\class{BellQMA}(\poly)$, characterized by a verifier who first applies unentangled, nonadaptive measurements to each of the polynomially many proofs, followed by an arbitrary but efficient quantum verification circuit on the resulting measurement outcomes. We show that if the number of outcomes per nonadaptive measurement is a polynomially-bounded function, then the expressive power of the proof system is exactly \class{QMA}. Finally, we study a class equivalent to \class{QMA}($m$), denoted $\class{SepQMA}(m)$, where the verifier's measurement operator corresponding to outcome {\it accept} is a fully separable operator across the $m$ quantum proofs. Using cone programming duality, we give an alternate proof of a result of Harrow and Montanaro~\cite{HM10} that shows a perfect parallel repetition theorem for $\class{SepQMA}(m)$ for any $m$.

\section{Introduction and results}\label{3_scn:intro}

The study of classical proof systems has yielded some of the greatest achievements in theoretical computer science, from the Cook-Levin theorem~\cite{C72,L73}, which formally ushered in the age of NP verification systems and the now ubiquitous notion of NP-hardness, to the more modern PCP theorem~\cite{AS98,ALMSS98}, which led to significant advancements in our understanding of hardness of approximation. A natural generalization of the class \class{NP} to the quantum setting is the class quantum Merlin-Arthur (QMA)~\cite{KSV02}, where a computationally powerful but untrustworthy prover, Merlin, sends a \emph{quantum} proof to convince an efficient \emph{quantum} verifier, Arthur, that a given input string $x \in \{ 0,1 \}^n$ is a YES-instance for a specified promise problem. (See Definition~\ref{0_def:QMA} for a formal definition of QMA.) 
\noindent It is easy to see that QMA proof systems are at least as powerful as \class{NP}, since the ability to process and exchange quantum information does not prevent Arthur from choosing to act classically.

As discussed in Sections~\ref{0_sscn:classes} and~\ref{0_scn:qma}, much attention has been devoted to \class{QMA} over recent years. We now have a number of problems which are \emph{complete} for \class{QMA}, with the quantum analogue of classical constraint satisfaction,  the physically well-motivated $k$-local Hamiltonian problem
~\cite{KSV02}, being the canonical \class{QMA}-complete problem. Further, \class{QMA} is an extremely robust complexity class that satisfies strong error-reduction properties~\cite{MW05}. However, there still remain important open questions. One natural such question, which is the focus of this chapter, is: \emph{How does allowing multiple unentangled provers affect the expressive power of \class{QMA}?}

Specifically, unlike in the classical setting where allowing multiple proofs, each quantified by a distinct existential quantifier, is trivially equivalent to a single existentially quantified proof, whether the same logic holds in the quantum setting is a highly non-trivial open question due to the quantum phenomenon known as \emph{entanglement} (see Section~\ref{0_sscn:entanglement}). Intuitively, entanglement between multiple proofs can be used by cheating provers to correlate their proofs in a way stronger than possible classically. To this end, in this chapter, we are interested in studying the class $\class{QMA}(\poly)$~\cite{KMY03}, a.k.a. quantum Merlin-Arthur proof systems with polynomially many Merlins, where the verifier receives a polynomial number of quantum proofs which are promised to be unentangled with each other. Despite much effort, little is known (more details under \emph{Previous Work} below) about the structural properties of $\class{QMA}(\poly)$, except for the obvious containments $\class{QMA}\subseteq\class{QMA}(\poly)\subseteq\class{NEXP}$.


\paragraph{Our Results:} We show the following three results regarding variants of $\class{QMA}(\poly)$.

\myparagraph{A complete characterization in the logarithmic-size message setting} Let the class $\class{QMA}_{\log}(\poly)$ denote the restriction of the class $\class{QMA}(\poly)$ to the setting where each prover's proof is at most a \emph{logarithmic} number of quantum bits, or \emph{qubits}. We show:

\begin{theorem}\label{3_thm:qmalog}
    $\class{QMA}_{\log}(\poly)=\class{MQA}$.
\end{theorem}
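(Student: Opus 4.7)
The plan is to prove containment in both directions. The easier direction, $\class{MQA} \subseteq \class{QMA}_{\log}(\poly)$, goes as follows. Given an \class{MQA} verifier expecting a classical proof $y \in \{0,1\}^{p(n)}$ of polynomial length, we design a $\class{QMA}_{\log}(\poly)$ protocol in which each of $p(n)$ unentangled provers sends a single qubit (which is of logarithmic, indeed constant, size). The verifier first measures every incoming qubit in the computational basis, thereby obtaining a classical string $\tilde{y} \in \{0,1\}^{p(n)}$, and then simulates the \class{MQA} verifier on $(x,\tilde{y})$. Completeness is immediate: honest provers send the bits of the \class{MQA} witness. For soundness, the crucial point is that by the unentanglement promise the joint state of the provers is a product state, so measuring each qubit in the computational basis is equivalent to first sampling each bit $\tilde{y}_i$ from an independent distribution over $\{0,1\}$, and the \class{MQA} soundness condition rules out any such distribution causing acceptance with high probability.

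The harder direction, $\class{QMA}_{\log}(\poly) \subseteq \class{MQA}$, is the one I expect to be the main obstacle and follows the ``dequantization by enumeration of amplitudes'' paradigm. Suppose we have a $\class{QMA}_{\log}(\poly)$ verifier $V_x$ that expects $m = \poly(n)$ unentangled proofs $\ket{\psi_1}, \ldots, \ket{\psi_m}$, where each $\ket{\psi_i}$ lives in a Hilbert space $\mathcal{X}_i$ of dimension $d_i \leq \poly(n)$ (since each proof has at most $c\log n$ qubits). The \class{MQA} prover sends a classical description of each $\ket{\psi_i}$, i.e.\ the list of its $d_i$ complex amplitudes specified to polynomial precision $\delta = 1/q(n)$ for a sufficiently large polynomial $q$; this is a classical string of polynomial total length $\sum_i d_i \cdot \poly(\log(1/\delta)) = \poly(n)$. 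The \class{MQA} verifier then (i) deterministically prepares the product state $\ket{\tilde\psi_1}\otimes \cdots \otimes \ket{\tilde\psi_m}$ to precision $\delta$, which can be done efficiently since each factor lives in a $\poly(n)$-dimensional space, and (ii) runs $V_x$ on this state. Crucially, because the \class{MQA} verifier builds the global proof as a tensor product, there is no entanglement between the ``simulated provers'', so the soundness condition of $\class{QMA}_{\log}(\poly)$ applies.

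The remaining step is a standard precision argument: if $\| \ket{\psi_i} - \ket{\tilde\psi_i} \|_2 \leq \delta$ for every $i$, then by the triangle inequality and multiplicativity of the tensor product,
\begin{equation}
\Bigl\| \bigotimes_{i=1}^{m} \ket{\psi_i} - \bigotimes_{i=1}^{m} \ket{\tilde\psi_i} \Bigr\|_2 \leq m\delta,
\end{equation}
and therefore the acceptance probabilities of $V_x$ on the two states differ by at most $O(m\delta)$ (via the bound $|\tr(M\ketbra{u}{u}) - \tr(M\ketbra{v}{v})| \leq 2\|\ket{u}-\ket{v}\|_2$ for any POVM element $M$, as recalled in Section~\ref{0_scn:linalg}). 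Choosing $q$ large enough so that $m\delta$ is a sufficiently small constant (after applying strong error reduction for $\class{QMA}_{\log}(\poly)$ to widen the completeness/soundness gap to, say, $1-2^{-n}$ versus $2^{-n}$) preserves the gap, completing the reduction. The main obstacle to watch out for is simply ensuring that the precision $\delta$, the amplified \class{QMA} gap, and the prover count $m$ are balanced so that the final \class{MQA} protocol has a non-negligible gap; this is routine but must be made explicit.
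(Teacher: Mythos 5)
Your proposal is correct and follows essentially the same route as the paper: embedding classical bits into qubits and measuring for $\class{MQA}\subseteq\class{QMA}_{\log}(\poly)$, and having a single prover send polynomial-size classical descriptions of the logarithmic-size proofs, which the verifier re-prepares as an explicitly unentangled product state, for the reverse containment. The only cosmetic difference is that you invoke strong error reduction and polynomial precision where the paper simply encodes amplitudes to inverse-exponential precision; neither the error reduction nor a widened gap is actually needed, since making $m\delta$ a small constant already preserves the $2/3$ versus $1/3$ separation.
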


\noindent Here, recall from Chapter~\ref{chap:intro} that \class{MQA}, also known as QCMA, is defined as $\class{QMA}$ except Merlin's proof is a polynomial-size \emph{classical} string. Theorem~\ref{3_thm:qmalog} says that if each prover is restricted to sending short quantum proofs, then one can not only do away with multiple provers, but also of the need for \emph{quantum} proofs altogether.

%
%


\myparagraph{Towards a non-trivial upper bound on $\class{BellQMA}(\poly)$} One possible approach to the question of $\class{QMA}\stackrel{?}{=}\class{QMA(\poly)}$ is to study \class{BellQMA}(\poly)~\cite{B08,ABDFS09,CD10}. \class{BellQMA}(\poly) is defined analogously to $\class{QMA}(\poly)$, except that before applying his verification circuit to the polynomially many unentangled quantum proofs, Arthur must measure each proof using a nonadaptive and unentangled (across all proofs) measurement (we call this \emph{Stage 1} of the verification). He then feeds the resulting \emph{classical} outcomes induced by these measurements into an efficient quantum circuit (we call this \emph{Stage 2}), which implements a two-outcome measurement operation corresponding to outcomes {\it accept} and {\it reject}.

The significance of $\class{BellQMA}(\poly)$ here is that if $\class{QMA} \neq \class{BellQMA}(\poly)$, then it follows that $\class{QMA}\neq\class{QMA}(\poly)$, since $\class{QMA}\subseteq\class{BellQMA}(\poly)\subseteq\class{QMA}(\poly)$. To this end, Brand\~{a}o has shown that for constant $m$,  $\class{QMA}=\class{BellQMA}(m)$~\cite{B08}. Where $\class{BellQMA}(\poly)$ lies, however, remains open. For example, the techniques used to show $\class{QMA}(2)=\class{QMA}(\poly)$ \cite{HM10} do not seem to yield an analogous result $\class{BellQMA}(2)=\class{BellQMA}(\poly)$ as they require entangled measurements (i.e.\ SWAP test measurements) across multiple proofs, which violate the definition of $\class{BellQMA}$.

To make progress on $\class{BellQMA}(\poly)$, we introduce the class $\class{BellQMA}[r,m]$, which is defined to be $\class{BellQMA}(m)$ with $m$ provers and the additional restriction that in Stage 1 above, the number of outcomes per proof in Arthur's nonadaptive measurements is upper bounded by $r$. We then show the following:

\begin{theorem}\label{3_thm:bellqma}
For any polynomially bounded functions $r,m:\nats \rightarrow \nats$, it holds that $\class{BellQMA}[r,m] \subseteq \class{QMA}$ (where the containment holds with equality when $r\geq 2$).
\end{theorem}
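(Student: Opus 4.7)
The plan is to show $\class{BellQMA}[r,m]\subseteq\class{QMA}$ by designing a QMA verifier $V'$ that directly simulates a given $\class{BellQMA}[r,m]$ verifier $V$ on a single (possibly entangled) quantum proof. If $V$ applies POVMs $\{E_j^i\}_{j=1}^r$ in Stage~1 on the $i$-th register and Stage~2 accepts an outcome tuple $j\in[r]^m$ with probability $q_j\in[0,1]$, then the overall two-outcome measurement on $\bigotimes_{i=1}^m \calH_i$ has acceptance operator
\begin{equation}
M \;=\; \sum_{j\in[r]^m} q_j\,\bigotimes_{i=1}^m E_{j_i}^i, \qquad 0\preceq M\preceq I,
\end{equation}
which is separable and of a distinguished product-POVM form. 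The verifier $V'$ receives a single proof $\sigma$ on $\bigotimes_i\calH_i$ and runs Stage~1 followed by Stage~2 on it, so its acceptance probability is $\Tr(M\sigma)$. Completeness is immediate, since the honest prover can send $\sigma=\rho_1^\ast\otimes\cdots\otimes\rho_m^\ast$ with the $\rho_i^\ast$ the honest BellQMA proofs, reproducing the BellQMA acceptance probability exactly.

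The difficulty lives in soundness, which I would reduce to the key lemma
\begin{equation}
\max_\sigma \Tr(M\sigma) \;=\; \max_{\rho_1,\ldots,\rho_m}\Tr\!\left(M\,\bigotimes_{i=1}^m\rho_i\right),
\end{equation}
i.e., entangled proofs provide no advantage over product proofs against an acceptance operator of this product-POVM form. I would prove it by induction on $m$, decomposing $M = \sum_{j_1} E_{j_1}^1\otimes M'_{j_1}$ where each $M'_{j_1}$ has the same structure on $m-1$ registers and so is controlled by the inductive hypothesis. The remaining step is to show that although an entangled $\sigma$ can, by standard steering arguments, induce different post-measurement states $\sigma^{(j_1)}_{\neq 1}$ for different outcomes $j_1$, the POVM constraint $\sum_{j_1}E_{j_1}^1=I$ combined with $\Tr(M'_{j_1}\sigma^{(j_1)}_{\neq 1})\leq\snorm{M'_{j_1}}$ cannot be simultaneously saturated across all $j_1$; a convexity/extreme-point argument then forces the joint optimum to be attained by a pure product state.

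The main obstacle I anticipate is making this steering-plus-induction argument precise for non-projective POVMs, since the reachable set of ensembles $\{p_1(j_1),\sigma^{(j_1)}_{\neq 1}\}$ is characterized only implicitly and the constraints linking different $j_1$'s interact subtly with the product structure of $M$. Should this direct route prove too delicate, a backup plan is to compress each proof to $O(\log r)$ qubits---using Naimark dilation and the fact that only the induced $r$-outcome distribution on register $i$ affects the protocol---reducing to $\class{BellQMA}_{\log}[r,m]\subseteq\class{QMA}_{\log}(m)\subseteq\class{QMA}_{\log}(\poly)$ and then invoking Theorem~\ref{3_thm:qmalog} to obtain $\class{QMA}_{\log}(\poly)=\class{MQA}\subseteq\class{QMA}$.

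For the reverse containment when $r\geq 2$, I would use the trivial embedding: simulate a QMA verifier by letting the first of the $m$ BellQMA provers supply the QMA proof, replacing the QMA verdict by an $r$-outcome measurement whose first two outcomes encode accept/reject (and whose remaining $r-2$ outcomes are never accepted), while the remaining $m-1$ provers send arbitrary states that Arthur discards.
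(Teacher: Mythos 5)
Your reduction of the containment $\class{BellQMA}[r,m]\subseteq\class{QMA}$ to the ``key lemma''
\begin{equation}
\max_\sigma \Tr(M\sigma) \;=\; \max_{\rho_1,\ldots,\rho_m}\Tr\!\left(M\,\bigotimes_{i=1}^m\rho_i\right),
\qquad M=\sum_{j}q_j\bigotimes_{i=1}^m E^i_{j_i},
\end{equation}
does not work, because that lemma is false once $r\geq 3$. Concretely, take $m=2$ qubit registers, let $\ket{\phi_j}=\cos(j\pi/3)\ket{0}+\sin(j\pi/3)\ket{1}$ for $j=0,1,2$ be the trine states, let $E^1_j=E^2_j=\tfrac{2}{3}\ketbra{\phi_j}{\phi_j}$ (a valid $3$-outcome POVM since $\sum_j\ketbra{\phi_j}{\phi_j}=\tfrac{3}{2}I$), and let Stage~2 accept iff the two outcomes agree ($q_{j_1j_2}=\delta_{j_1j_2}$). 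Then $M=\tfrac{4}{9}\sum_{j}\ketbra{\phi_j}{\phi_j}\otimes\ketbra{\phi_j}{\phi_j}$. A direct computation (e.g.\ via the Gram matrix of $\{\ket{\phi_j}\otimes\ket{\phi_j}\}$, which is $\tfrac{3}{4}I+\tfrac{1}{4}J$) shows $\lambda_{\max}(M)=\tfrac{2}{3}$, attained by the maximally entangled state $(\ket{00}+\ket{11})/\sqrt{2}$, whereas over product states one can check $\Tr\bigl(M(\rho_1\otimes\rho_2)\bigr)\leq\tfrac{1}{2}$. So an entangled single proof strictly beats every product proof, and your direct-simulation verifier is unsound: its soundness error is not controlled by the soundness of the original $\class{BellQMA}$ protocol, which is only promised against product proofs. (This is precisely why $\class{BellQMA}$ versus $\class{QMA}$ is a nontrivial question at all.) The steering-plus-induction argument you sketch cannot be repaired; the obstruction you flag as a ``subtle interaction'' is a genuine counterexample. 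Your backup plan also does not close the gap: compressing register $i$ to $O(\log r)$ qubits requires exhibiting a small proof system whose set of achievable $r$-outcome distributions both contains the honest distribution and is contained in the set achievable by the original (large) POVM, and Naimark dilation goes in the wrong direction for this; the verifier cannot hard-wire the honest distribution since it depends on the unknown optimal proof. The reachable set is a projection of a high-dimensional state space and is not generically realizable on a $\poly(r)$-dimensional system. (Your argument for the reverse containment when $r\geq 2$ is fine.)

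The missing idea, and the one the paper actually uses, is to make the prover \emph{commit classically} to the $m$ claimed Stage-1 outcome distributions $p_1,\ldots,p_m$ (each of size $r\cdot\poly(n)$ bits, which is where the hypothesis that $r$ is polynomially bounded enters), and to supply $k=\poly(n)$ copies of each alleged proof in a quantum register. The verifier statistically cross-checks a randomly chosen $p_j(i)$ against the empirical frequency of outcome $i$ when the Stage-1 POVM is applied to each of the $k$ copies in block $j$, and only then runs Stage~2 on samples drawn from the classical $p_j$'s. Soundness follows because the expected empirical frequency equals $\ip{\Pi_j(i)}{\xi_j}$ for $\xi_j$ the \emph{average} of the $k$ reduced states, which is itself a legitimate unentangled strategy; hence any claimed distribution that Stage~2 would accept must deviate from an honestly realizable one by at least $1/(10mr)$ in trace distance on some coordinate, and the Markov-bound test detects this with inverse-polynomial probability. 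Entanglement across the copies cannot help the prover pass this test while lying, which is exactly the guarantee your direct simulation lacks.
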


\noindent In other words, $\class{BellQMA}(\poly)$ cannot be used to show that $\class{QMA}\neq\class{QMA}(\poly)$ if the verifier in the $\class{BellQMA}(\poly)$ protocol is restricted to have a polynomially bounded number of measurement outcomes per proof in Stage 1. We remark that, in general, the number of such measurement outcomes can be exponential in the input length --- the restriction that $r$ be a polynomially bounded function is crucial for the proof of Theorem~\ref{3_thm:bellqma}. For this reason, our result complements, rather than subsumes Brand\~{a}o's result~\cite{B08}. In other words, in our notation, Brand\~{a}o has shown that $\class{BellQMA}[\exp,\rm{const}] = \class{QMA}$, and we show $\class{BellQMA}[\poly,\poly] = \class{QMA}$.

Note that we allow the second stage of the $\class{BellQMA}(\poly)$ verification procedure above to be {\it quantum}, as per the definition suggested by Chen and Drucker~\cite{CD10}, as opposed to {\it classical}, as studied by Brand\~{a}o~\cite{B08}. The conclusion of Theorem~\ref{3_thm:bellqma} holds even if the second stage of verification is completely classical.

Finally, it is worth noting that by combining Theorems~\ref{3_thm:qmalog} and~\ref{3_thm:bellqma}, we conclude that in the setting of $\class{BellQMA}(\poly)$, if $\class{MQA} \neq \class{QMA}$, then having the Merlins send logarithmic-size proofs without any restriction on the number of local measurement outcomes of Arthur in Stage 1 has less expressive power than sending polynomial-size proofs but restricting the number of outcomes, even though the number of measurement outcomes in Stage 1 per Merlin in both cases is the same, i.e.\ polynomial in the input length.

\myparagraph{Perfect parallel repetition for $\class{SepQMA}(m)$} A key question in designing proof systems is how to improve the completeness and soundness parameters of a verification protocol without increasing the required number of rounds of communication. A natural approach for doing so is to repeat the protocol multiple times in parallel. With \class{QMA}, however, this raises the concern that Merlin might try to cheat by entangling his proofs across these parallel runs. If, though, \emph{perfect parallel repetition} holds, it means that for any input string $x$, if the verification procedure $V$ accepts with probability $p(|x|)$, then if we run $V$ $k$ times in parallel, the probability of accepting in all $k$ runs of $V$ is precisely $p(|x|)^k$. In other words, if perfect parallel repetition holds, there is no incentive for Merlin to cheat --- an honest proof which is a product state across all $k$ runs achieves the maximum success probability.

Our final contribution is an alternate proof of a perfect parallel repetition theorem for a class equivalent to \class{QMA}($m$), namely $\class{SepQMA}(m)$. The theorem was first proved in Harrow and Montanaro~\cite{HM10} in connection with an error reduction technique for $\class{QMA}(\poly)$. However, our proof is significantly different from theirs and uses the cone programming characterization of $\class{QMA}(\poly)$. Here, $\class{SepQMA}(m)$ is defined as \class{QMA}($m$) with the restriction that Arthur's measurement operator corresponding to acceptance is an unentangled, or \emph{separable}, operator across the $m$ unentangled proofs. We show:

\begin{theorem}
\label{3_thm:parrep} $\class{SepQMA}(m)$ admits perfect parallel repetition.
\end{theorem}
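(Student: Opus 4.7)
The plan is to formulate the maximum acceptance probability of $\class{SepQMA}(m)$ as a cone program over the separable cone and exploit its duality structure, together with closure properties of separability under tensor products and partial traces. For an input $x$, let $A_x \in \Sep(\X_1, \ldots, \X_m)$ denote the verifier's ``accept'' POVM element, where $\X_i$ is the Hilbert space of the $i$th prover's proof; by definition of $\class{SepQMA}(m)$, $A_x$ is separable across the $m$ parties. The maximum acceptance probability is the cone program
\begin{equation}
p^*(x) = \max\{\ip{A_x}{X} : X \in \Sep(\X_1, \ldots, \X_m),\ \trace(X) = 1\}.
\end{equation}
Strong duality applies because the set of separable density operators is convex, compact, and contains an interior point (a ball around the maximally mixed state, cf.~Section~\ref{0_sscn:entanglement}). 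The key consequence I would extract from duality is the weak-duality inequality that for every positive $Z \in \Sep(\X_1, \ldots, \X_m)$,
\begin{equation}
\ip{A_x}{Z} \leq p^*(x)\, \trace(Z).
\end{equation}

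For the $k$-fold parallel repetition, the ``accept'' POVM element becomes $A_x^{\otimes k}$, reinterpreted after regrouping so that each of the $m$ provers holds a $k$-register composite system. The corresponding maximum is
\begin{equation}
p_k^*(x) = \max\{\ip{A_x^{\otimes k}}{X} : X \in \Sep(\X_1^{\otimes k}, \ldots, \X_m^{\otimes k}),\ \trace(X) = 1\}.
\end{equation}
The lower bound $p_k^*(x) \geq p^*(x)^k$ is immediate: each prover sending $k$ independent copies of his optimal single-run proof produces a state that is product across the $m$ provers and attains acceptance probability exactly $p^*(x)^k$. The content of the theorem is therefore the matching upper bound $p_k^*(x) \leq p^*(x)^k$, which I would prove by induction on $k$.

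For the inductive step, fix any feasible $X \in \Sep(\X_1^{\otimes k}, \ldots, \X_m^{\otimes k})$ and decompose $A_x^{\otimes k} = A_x^{\otimes(k-1)} \otimes A_x$, where the first factor acts on the first $k-1$ registers of every prover and the second on the $k$th. Partial evaluation of the inner product yields $\ip{A_x^{\otimes k}}{X} = \ip{A_x}{Y}$, where $Y$ is the operator on $\X_1 \otimes \cdots \otimes \X_m$ defined by partially tracing out the first $k-1$ run-registers after contracting them against $A_x^{\otimes(k-1)}$. Two closure lemmas do the work: first, $A_x^{\otimes(k-1)}$ is itself separable across the $m$ parties, obtained by expanding a separable decomposition of $A_x$ and regrouping per-run tensor factors under each per-prover factor; and second, applying a separable positive operator to a separable state and partial-tracing preserves separability across the $m$ parties. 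Together these give $Y \in \Sep(\X_1, \ldots, \X_m)$, so the single-run weak-duality bound gives $\ip{A_x}{Y} \leq p^*(x)\, \trace(Y)$. Computing $\trace(Y) = \ip{A_x^{\otimes(k-1)}}{\trace_{\rm last}(X)}$, where $\trace_{\rm last}(X)$ is separable on $\bigotimes_i \X_i^{\otimes(k-1)}$ as a partial trace of a separable state, the inductive hypothesis bounds this by $p^*(x)^{k-1}$, and chaining yields $p_k^*(x) \leq p^*(x)^k$.

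The main obstacle I expect is careful bookkeeping of the tensor regroupings: $A_x^{\otimes k}$ is naturally written as a tensor across the $k$ runs, but in the cone program for $\class{SepQMA}(m)$ it must be viewed as separable across the $m$ provers, each holding a $k$-register composite space. Every invocation of separability---for $A_x^{\otimes(k-1)}$, for $Y$, and for $\trace_{\rm last}(X)$---requires rewriting the relevant decomposition under this regrouping, and one must not conflate ``run'' indices with ``prover'' indices. Once the regrouping and closure lemmas are formalized, the induction itself is just a chained application of weak duality.
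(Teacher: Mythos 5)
Your proof is correct, but it takes a genuinely different route from the paper's. The paper argues on the \emph{dual} side: it invokes strong duality to obtain optimal dual solutions $(t_1,W_1)$, $(t_2,W_2)$ for the individual protocols and then exhibits an explicit dual feasible point $\bigl(t_1 t_2,\ t_1 t_2\, \I - C_1\otimes C_2\bigr)$ for the repeated protocol, certifying membership of the witness in $\Sep^*$ by averaging $(t_1\I-C_1)\otimes(t_2\I+C_2)$ and $(t_1\I+C_1)\otimes(t_2\I-C_2)$. You instead argue entirely on the \emph{primal} side: you peel off one run at a time, showing that contracting a separable strategy $X$ against $A_x^{\otimes(k-1)}$ and partial-tracing yields an operator $Y\in\Sep(\X_1,\ldots,\X_m)$, and then chain the single-run bound with the inductive hypothesis. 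Notably, your "weak-duality" inequality $\ip{A_x}{Z}\leq p^*(x)\trace(Z)$ is really just homogeneity of the cone program (rescale $Z$ to trace one), so your argument does not actually need duality or the strict-feasibility discussion at all --- that remark is harmless but superfluous. Both proofs ultimately rest on the same closure fact, which is the computation inside the paper's Lemma~\ref{3_thm:parrep} machinery (Lemma on dual-cone tensor products) showing that $\trace_{\Y}\bigl[S(\I_{\X}\otimes C)\bigr]$ is separable whenever $S$ and $C$ are; and both crucially use separability of the accept operator (in your case to make $A_x^{\otimes(k-1)}$, and hence $Y$, separable). What each buys: your induction is more elementary and handles $k$-fold repetition directly, whereas the paper's version produces an explicit dual certificate (an entanglement-witness-like operator in $\Sep^*$), which is precisely the cone-programming-duality novelty the paper wants to highlight. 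Your own caveat about run-versus-prover regroupings is the right one to flag, and your two closure lemmas resolve it correctly.
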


\noindent Our alternate proof of Theorem~\ref{3_thm:parrep} is significant in that, to the best of our knowledge, it is the first use of duality theory for a cone program \emph{other} than a semidefinite program to establish a parallel repetition result (note that cone programming generalizes semidefinite programming). We remark that semidefinite programs have been previously used to show perfect or strong parallel repetition theorems for various other models of (single or two-prover) quantum interactive proof systems~\cite{CSUU08,G09,KRT10}, and that the alternate proof of Theorem~\ref{3_thm:parrep} of Harrow and Montanaro is not based on semidefinite programming. Perfect parallel repetition for $\class{SepQMA}(m)$ in itself is interesting, as it has been used to show that error reduction is possible for $\class{QMA}(m)$ proof systems~\cite{HM10}.

\paragraph{Proof ideas and tools:} The proof of our first result, Theorem~\ref{3_thm:qmalog}, is simple, and is an application of the facts that (1) quantum states of a logarithmic number of qubits can be described to within inverse exponential precision using a polynomial number of classical bits, and conversely that (2) given such a classical description, a logarithmic-size quantum state can be efficiently prepared by a quantum circuit. Hence, roughly speaking, one can replace a polynomial number of logarithmic-size quantum proofs with a single polynomial size classical proof, thereby avoiding the danger of a cheating Merlin using entanglement. Although the proof is simple, one cannot hope for a better characterization using other techniques because the reverse containment, i.e.\ $\class{MQA}\subseteq \class{QMA}_{\log}(\poly)$, also holds using similar ideas.

More technically challenging is our second result, Theorem~\ref{3_thm:bellqma}. To show the non-trivial direction $\class{BellQMA}[\poly,\poly]\subseteq \class{QMA}$, we simulate an arbitrary $\class{BellQMA}[\poly,\poly]$ protocol by a \class{QMA} protocol using the following observation: {Although consolidating $m$ quantum proofs into a single quantum proof raises the possibility of cheating using entanglement, if Arthur is {also} sent an appropriate classical ``consistency-check'' string, then a dishonest Merlin can be caught with non-negligible probability.} Specifically, in our \class{QMA} protocol, we ask a single Merlin to send the $m$ quantum proofs of the original \class{BellQMA} protocol (denoted by a single state $\ket{\psi}$), accompanied by a  ``consistency-check'' string $\ve{p}$ which is a classical description of the probability distributions obtained as the output of Stage 1. One can think of this as having the \class{QMA} verifier \emph{delegate} Stage 1 of the \class{BellQMA} verification to Merlin. Arthur then performs a consistency check between $\ket{\psi}$ and $\ve{p}$ based on the premise that if Merlin is honest, then $\ve{p}$ should arise from running Stage 1 of the original verification on $\ket{\psi}$. If this check passes, then Arthur runs Stage 2 of the \class{BellQMA} verification on $\ve{p}$. If Merlin tries to cheat, however, we show that the check detects this with non-negligible probability, hence achieving the desired containment. Note that the accuracy of the consistency check crucially uses the fact that there are at most polynomially many outcomes to check for each local measurement of Stage 1.

Our last result, Theorem~\ref{3_thm:parrep}, is shown using duality theory for cone programs. In particular, we phrase the maximum acceptance probability of a (possibly cheating) prover for the two-fold repetition of a $\class{SepQMA}(m)$ verification protocol as a cone program. We then demonstrate a feasible solution for its dual yielding an upper bound on the maximum acceptance probability. The objective value of this dual solution is precisely the product of the optimum values of the two instances of the $\class{SepQMA}(m)$ verification protocols. We conclude that one of the optimal strategies of the provers is to be faithful in the following sense: Each prover elects not to entangle his/her two quantum proofs for the two instances of the $\class{SepQMA}(m)$ protocol and instead sends a tensor product of optimal proofs for both the instances.

\paragraph{Previous work.} The expressive power of multiple Merlins was first studied by Kobayashi, Matsumoto and Yamakami~\cite{KMY03}, who showed that $\class{QMA}(2) = \class{QMA}(\poly)$ if and only if the class of \class{QMA}(2) protocols with completeness $c$ and soundness $s$ (with at least inverse polynomial gap) is exactly equal to $\class{QMA}(2)$ protocols with completeness $2/3$ and soundness $1/3$. Recently, Harrow and Montanaro~\cite{HM10} demonstrated a {\it product state test}, wherein given two copies of a {\it pure} quantum state on multiple systems, the test distinguishes between the cases when the quantum state is a {\it fully} product state across all the systems or {\it far} from any such state. Using this test, they answered a few important questions regarding $\class{QMA}(\poly)$. In particular, they showed that
\begin{equation}
\class{QMA}(2) = \class{QMA}(\poly)
\end{equation}
and that error reduction is possible for such proof systems. Prior to their result, the answers to both the questions were known to be affirmative assuming a {\it weak} version of the Additivity Conjecture~\cite{ABDFS09}. One of the crucial properties of the product state test is that it can be converted into a $\class{QMA}(2)$ protocol, where Arthur's measurement operator corresponding to outcome {\it accept} is a separable operator across the two proofs. Harrow and Montanaro established a perfect parallel repetition theorem for such proof systems, a crucial step in obtaining exponentially small error probabilities.

Blier and Tapp initiated the study of \emph{logarithmic}-size unentangled quantum proofs~\cite{BT10}.
They showed that two unentangled quantum proofs suffice to show that a 3-coloring of an input graph exists, implying that $\class{NP}$ has \emph{succinct} unentangled quantum proofs. A drawback of their protocol is that although it has {\it perfect} completeness, its soundness is only inverse polynomially bounded away from $1$. Shortly after, Aaronson, Beigi, Drucker, Fefferman and Shor~\cite{ABDFS09} showed that satisfiability of any 3-SAT formula of size $n$ can be proven by $\widetilde{O}(\sqrt{n})$ unentangled quantum proofs of $O(\log n)$ qubits with perfect completeness and constant soundness (see also~\cite{CD10}). In a subsequent paper~\cite{Beigi08}, Beigi improved directly on Blier and Tapp's result~\cite{BT10} by showing that by sacrificing perfect completeness, one can show that \class{NP} has two logarithmic-size quantum proofs with a better gap between completeness and soundness probabilities than in~\cite{BT10} (see also Chiesa and Forbes~\cite{CF11} and Le Gall, Nakagawa, and Nishimura~\cite{LNN12} for related improvements which do not sacrifice perfect completeness).

Finally, one of the open questions raised in Reference~\cite{ABDFS09} concerns  the power of Arthur's verification procedure. In particular, the paper introduces two different classes of verification procedures, \class{BellQMA} and \class{LOCCQMA} verification. Roughly speaking, \class{LOCCQMA} verification corresponds to Arthur applying a measurement operation that can be implemented by Local Operations and Classical Communication (LOCC) (with respect to the partition induced by the multiple proofs). The authors raised the question of whether $\class{BellQMA}(\poly) = \class{QMA}$ or not. Brand\~{a}o~\cite{B08} showed that $\class{BellQMA}(m)$ is equal to \class{QMA} for constant $m$. In a recent development, Brand\~{a}o, Christandl and Yard~\cite{BCY11} showed that $\class{LOCCQMA}(m)$ is equal to \class{QMA} for constant $m$.

\paragraph{Open problems.} 

A natural open question concerning the results presented in this chapter is the relationship between $\class{BellQMA}(\poly)$ and \class{QMA}. We believe that understanding the complexity of $\class{BellQMA}$ protocols will shed new light on the bigger question pertaining to \class{QMA}(2) and \class{QMA}. Another avenue of interest is to find further applications of the cone programming characterization of multi-prover quantum Merlin-Arthur proof systems. One question concerning the parallel repetition result presented in this chapter is to investigate whether cone programming duality can be used to analyze the product state test in Reference~\cite{HM10}. Finally, it would be interesting to find other classes of $\class{QMA}(m)$ protocols that admit a perfect parallel repetition theorem.

\paragraph{Organization of this chapter.} We begin in Section~\ref{3_scn:preliminaries} with background and notation, defining relevant complexity classes in Section~\ref{3_sscn:complexityclasses}, and reviewing cone programming in Section~\ref{3_sscn:conic}. Theorems~\ref{3_thm:qmalog},~\ref{3_thm:bellqma}, and~\ref{3_thm:parrep} are proved in Sections~\ref{3_scn:qmalog},~\ref{3_scn:bellqma}, and~\ref{3_scn:parrep}, respectively.

\section{Preliminaries}\label{3_scn:preliminaries}

In this section, we state useful lemmas, and discuss relevant complexity classes and cone programming. Throughout the chapter, we use $\abs{x}$ to denote the length of string $x \in \{ 0,1 \}^\ast$. The standard Hilbert-Schmidt inner product of operators $A$ and $B$ is denoted $\ip{A}{B}:=\tr(A^\dagger B)$, where $A^\dagger$ denotes the adjoint of $A$.

First, a useful lemma in this chapter regarding the trace norm (which is a Schatten $p$-norm with $p=1$) is the following:

\begin{lemma}[\cite{W02}]\label{3_lem:tracenorm}
Let $\{\rho_1 \dots, \rho_k\} \subset \density{\X}$ and $\{\sigma_1, \dots, \sigma_k\} \subset \density{\X}$. Then for any Schatten p-norm,
\begin{equation}
\norm{\bigotimes_{i=1}^k \rho_i - \bigotimes_{i=1}^k \sigma_i }_p \le \sum_{i=1}^k \norm{\rho_i - \sigma_i}_p.
\end{equation}
\end{lemma}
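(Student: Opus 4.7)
The plan is to use a telescoping decomposition of the difference $\bigotimes_{i=1}^k \rho_i - \bigotimes_{i=1}^k \sigma_i$ so that at each term only a single factor is replaced, followed by the triangle inequality and multiplicativity of Schatten $p$-norms under tensor products. Concretely, I would write
\begin{equation}
\bigotimes_{i=1}^k \rho_i - \bigotimes_{i=1}^k \sigma_i \;=\; \sum_{j=1}^k \left(\bigotimes_{i<j}\sigma_i\right) \otimes (\rho_j - \sigma_j) \otimes \left(\bigotimes_{i>j}\rho_i\right),
\end{equation}
which is easily verified by expanding the sum: all intermediate hybrid terms cancel in pairs, leaving only the first and last.

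Next, I would apply the triangle inequality for $\|\cdot\|_p$ to the right-hand side, obtaining a bound of $\sum_{j=1}^k \|A_j \otimes (\rho_j - \sigma_j) \otimes B_j\|_p$, where $A_j := \bigotimes_{i<j}\sigma_i$ and $B_j := \bigotimes_{i>j}\rho_i$. Using the fact that Schatten $p$-norms are multiplicative under tensor products, i.e.\ $\|X \otimes Y\|_p = \|X\|_p \|Y\|_p$, each such summand factors as $\|A_j\|_p \cdot \|\rho_j-\sigma_j\|_p \cdot \|B_j\|_p$.

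The final step is to observe that for any density operator $\tau \in \density{\X}$ and any $p \in [1,\infty]$, we have $\|\tau\|_p \le 1$. This follows since the singular values of $\tau$ are its eigenvalues $\lambda_i \in [0,1]$ with $\sum_i \lambda_i = 1$, so $\|\tau\|_p^p = \sum_i \lambda_i^p \le \sum_i \lambda_i = 1$ for $p \ge 1$ (and trivially for $p=\infty$). Applying multiplicativity once more gives $\|A_j\|_p, \|B_j\|_p \le 1$, so each summand is bounded by $\|\rho_j - \sigma_j\|_p$, yielding the claimed inequality.

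I do not anticipate a significant obstacle: the main idea is the telescoping identity, and all remaining inequalities are textbook properties of Schatten norms. The only point requiring minor care is verifying that $\|\tau\|_p \le 1$ for all $p \in [1,\infty]$ and density $\tau$, which I would briefly justify via the spectral decomposition as above.
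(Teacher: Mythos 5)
Your proof is correct. The paper itself does not prove this lemma---it is stated with a citation to an external reference---so there is no in-paper argument to compare against; your telescoping (hybrid) decomposition, followed by the triangle inequality, multiplicativity of Schatten $p$-norms under tensor products, and the bound $\norm{\tau}_p \le 1$ for density operators, is the standard and complete proof of this fact.
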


Next, generalizing Definition~\ref{0_eqn:sep}, we say a (possibly unnormalized) operator $A \in \pos{\X_1 \otimes \dots \otimes \X_m}$ is {\it fully separable} (i.e.\ unentangled) if it can be written as
\begin{equation}
A = \sum_{i=1}^k P_{1}(i) \otimes \dots \otimes P_{m}(i),
\end{equation}
where $P_j(i) \in \pos{\X_j}$, for every $j \in [m]$ and $i \in [k]$. We denote the cone of fully separable operators as $\sep{\X_1,\X_2, \dots ,\X_m}$. In the setting of quantum information, one typically also has $\tr(A)=1$. It will be useful to note that the set of fully separable density operators is convex, compact, and has non-empty interior since it contains a ball around the normalized identity operator~\cite{GB02,GB03,GB05}.


\subsection{Relevant complexity classes}\label{3_sscn:complexityclasses}

We now define the relevant complexity classes specific to this chapter. Recall that a promise problem $A = (A_{\yes}, A_{\no})$ is a partition of the set $\{0,1\}^*$ into three disjoint subsets: the set $A_{\yes}$ denotes the set of YES-instances of the problem, the set $A_{\no}$ denotes the set of NO-instances of the problem, and  $\{0,1\}^*\backslash (A_{\yes} \cup A_{\no})$ is the set of disallowed strings.

We begin by formally generalizing the definition of QMA (see Definition~\ref{0_def:QMA}) to the setting of $m$ unentangled provers.

\begin{definition}[$\class{QMA}(m)$]\label{3_def:qma-poly}
    A promise problem $A=(\ayes,\ano)$ is in $\class{QMA}(m)$ if there exist polynomials $p$, $q$ and a polynomial-time uniform family of quantum circuits $\set{Q_n}$, where $Q_n$ takes as input a string $x\in\Sigma^*$ with $\abs{x}=:n$, quantum proof $\ket{y}\in\sep{\X_1,\X_2, \dots ,\X_{m(n)}}$ where $ \X_i := (\complex^2)^{\otimes p(n)}$ for $i\in[m(n)]$, and $q(n)$ ancilla qubits in state $\ket{0}^{\otimes q(n)}$, such that:
    \begin{itemize}
    \item (Completeness) If $x\in\ayes$, then there exists a proof $\ket{y}\in\sep{\X_1,\X_2, \dots ,\X_{m(n)}}$ such that $Q_n$ accepts $(x,\ket{y})$ with probability at least $2/3$.
    \item (Soundness) If $x\in\ano$, then for all proofs $\ket{y}\in\sep{\X_1,\X_2, \dots ,\X_{m(n)}}$, $Q_n$ accepts $(x,\ket{y})$ with probability at most $1/3$.
    \end{itemize}

    \noindent The class $\class{QMA}(\poly)$ is defined as $\class{QMA}(\poly) := \bigcup_{m \in \poly} \class{QMA}(m)$.
\end{definition}

\noindent For clarity, note that $\ket{y}\in\sep{\X_1,\X_2, \dots ,\X_{m(n)}}$ must have the form $\ket{y}=\ket{y_1}\otimes\cdots\otimes\ket{y_{m(n)}}$ for $\ket{y_i}\in\X_i$. Hence, $\class{QMA}(m)$ can be thought of as having $m(n)$ unentangled provers. Note that like $\class{QMA}=\class{QMA}(1)$, the constants $2/3$ and $1/3$ above can be amplified to values exponentially close to $1$ and $0$, respectively, by having the verifier run the verification procedure polynomially times in parallel (this requires increasing the number of provers, however). Also, we will use the fact that corresponding to any $\class{QMA}(m)$ protocol is a two-outcome POVM (see Section~\ref{0_sscn:measurement}) consisting of operators $\set{C_{\rm accept},C_{\rm reject}}$, such that for any candidate proof $\ket{\psi}=\ket{\psi_1}\otimes \cdots\otimes\ket{\psi_{m(n)}}$, the probability of the verifier accepting (rejecting) is given by $\tr(C_{\rm accept}\ketbra{\psi}{\psi})$ ($\tr(C_{\rm reject}\ketbra{\psi}{\psi})$).

All complexity classes considered in this chapter are variants of $\class{QMA}(m)$ and satisfy the properties mentioned above in Definition~\ref{3_def:qma-poly}. The next two classes we define are:

\begin{mylist}{\parindent}
	\item [1.] \textbf{[$\class{QMA}_{\log}(\poly)$]} A subclass of $\class{QMA}(\poly)$ in which each Merlin's message to Arthur is $O(\log (|x|))$ qubits in length for input string $x$.
	\item [2.] \textbf{[\class{SepQMA}(\poly)]} A subclass of $\class{QMA}(\poly)$, wherein Arthur's measurement operator $C_{\rm accept}$ corresponding to outcome {\it accept} is a fully separable operator across the proofs.	
\end{mylist}

\noindent For clarity, we next give a more formal definition of the variant of $\class{BellQMA}$ we introduce, $\class{BellQMA}[r,m]$.

\begin{definition}[$\class{BellQMA}{[}r,m{]}$]
Let $r,m:\nats \rightarrow \nats$ be two functions. A promise problem $A = (A_{\yes}, A_{\no})$ is in class $\class{BellQMA}[r,m]$ if there exists a $\class{QMA}(m)$ verification protocol in which Arthur is restricted to act as follows.

\begin{mylist}{\parindent}
\item[1.]
Arthur performs a polynomial-time quantum computation on the input $x$ and generates a description of quantum circuits $V_1(x), \dots, V_m(x)$, one for each of the $m$ provers.

\item[2.]
(Stage 1) Arthur simultaneously measures all $m$ quantum proofs by applying $V_i(x)$ to the $i$-th quantum proof, where the action of $V_i(x)$ can be described by a unitary operator followed by measurement in the standard basis. The label of the $i$-th measurement outcome is stored as a classical string $y_i$ also identified as an element of $[r(|x|)]$.

\item[3.]
(Stage 2) Arthur runs an efficient quantum verification circuit on input $x$ and measurement outcomes $(y_1,\ldots,y_m)$ to decide whether to accept or reject.
\end{mylist}
\end{definition}

\noindent Note that the key distinction between $\class{BellQMA}[r, m]$ and $\class{BellQMA}(\poly)$ is that the former has the number of measurement outcomes in Stage 1 of the protocol bounded by $r(\abs{x})$, whereas the latter may allow exponentially many possible outcomes. Throughout this chapter, we use the notation $\class{BellQMA}[\poly, \poly]$ to denote
        \begin{equation}
        \class{BellQMA}[\poly, \poly] := \bigcup_{r \in \poly} \bigcup_{m \in \poly}\class{BellQMA}[r,m].
        \end{equation}

\subsection{Cone programming} \label{3_sscn:conic}

We now briefly review basic notions in conic optimization (or cone programming), which is a generalization of semidefinite optimization. The reader is referred to the text of Boyd and Vandenberghe~\cite{BV04} for further details.

To begin, recall that a set $K$ in an underlying Euclidean space is a \emph{cone} if $x \in K$ implies that $\lambda x \in K$ for all $\lambda \geq 0$. A cone $K$ is \emph{convex} if $x,y \in K$ implies that $x+y \in K$. \emph{Cone programs} are concerned with optimizing a linear function over the intersection of a convex cone and an affine space. It generalizes several well-studied models of optimization including semidefinite programming (where $K = \Pos (\X)$) and linear programming (where $K = \R_+^n$). In this chapter, we are primarily concerned with the cone of fully separable operators $K=\sep{\X_1,\X_2, \dots ,\X_m}$, which as stated in Section~\ref{3_scn:preliminaries} is a closed, convex cone with non-empty interior.

A cone program associates the following 4-tuple $(C, b, \calA, K)$ to an optimization problem, which we denote as the \emph{Primal} problem:
    \begin{center}
    \centerline{\underline{Primal problem (P)}}\vspace{-7mm}
    \begin{align}
			\text{supremum:}\quad & \ip{X}{C}\\
  		\text{subject to:}\quad & \calA(X) = b,\\
  		& X \in K,
		\end{align}
\end{center}
where $\calA: \Span(K) \to \R^{m}$ is a linear transformation, and $K$ lies in a real Euclidean space. (Note that the choice of inner product in $\ip{X}{C}$ depends on the Euclidean space $K$ lies in.) We say that the cone program is \emph{feasible} if $\{X: \calA(X) = b \} \cap K$ is non-empty and \emph{strictly feasible} if $\{X: \calA(X) = b \} \cap \interior(K)$ is non-empty, where $\interior(\cdot)$ denotes the interior of a set.

Next, associated with a cone $K$ is its dual cone $K^{\ast}$, defined as
\begin{equation}
    K^{\ast} = \left\{S: \ip{X}{S} \ge 0 \text{ for all } X \in K \right\}.
\end{equation}
Via the dual cone, for every Primal problem, one can define an associated \emph{Dual} problem as follows:
\begin{center}
    \centerline{\underline{Dual problem (D)}}\vspace{-7mm}
		\begin{align}
			\text{infimum:}\quad & \ip{b}{y}\\
  		\text{subject to:}\quad & \calA^{\ast}(y) = C + S,\\
  		& S \in K^{\ast},
  	\end{align}	
\end{center}
where $\calA^\ast$ is the adjoint of $\calA$. We remark that so long as $K$ is closed (which is the case for the cone of fully separable operators), the roles of the Primal and Dual problems can be freely interchanged, since a convex cone $K$ is closed if and only if $K = K^{**}$.

The problems (P) and (D) obey the following special relationship.

\begin{lemma}[Weak Duality]
If $X$ is primal feasible and $(y,S)$ is dual feasible then
\begin{equation}
\ip{b}{y} - \ip{X}{C} = \ip{X}{S} \geq 0.
\end{equation}
\end{lemma}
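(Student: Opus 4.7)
The proof is a direct computation exploiting three ingredients: primal feasibility ($\mathcal{A}(X) = b$ with $X \in K$), dual feasibility ($\mathcal{A}^*(y) = C + S$ with $S \in K^*$), and the defining property of the dual cone. The plan is to establish the equality $\ip{b}{y} - \ip{X}{C} = \ip{X}{S}$ first, and then separately argue the nonnegativity $\ip{X}{S} \geq 0$.

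For the equality, I would start from the right-hand side of the primal constraint and substitute $b = \mathcal{A}(X)$ into $\ip{b}{y}$:
\begin{equation}
\ip{b}{y} = \ip{\mathcal{A}(X)}{y} = \ip{X}{\mathcal{A}^{\ast}(y)},
\end{equation}
where the second equality is simply the definition of the adjoint $\mathcal{A}^{\ast}$ (using the two inner products on $\R^m$ and on $\Span(K)$ respectively). Dual feasibility then lets me replace $\mathcal{A}^{\ast}(y)$ by $C + S$, so that linearity of the inner product in its second argument gives
\begin{equation}
\ip{X}{\mathcal{A}^{\ast}(y)} = \ip{X}{C + S} = \ip{X}{C} + \ip{X}{S}.
\end{equation}
Rearranging yields the claimed identity $\ip{b}{y} - \ip{X}{C} = \ip{X}{S}$.

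For the inequality $\ip{X}{S} \geq 0$, I would invoke the definition of the dual cone stated in the excerpt: $K^{\ast} = \{ S : \ip{X}{S} \geq 0 \text{ for all } X \in K \}$. Since $X \in K$ by primal feasibility and $S \in K^{\ast}$ by dual feasibility, applying this definition directly gives $\ip{X}{S} \geq 0$, completing the proof.

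There is essentially no obstacle here; the lemma is a formal consequence of the setup and the dual cone definition, and the only subtle point worth flagging in writing is to be explicit about which inner product is meant on each side when invoking the adjoint identity $\ip{\mathcal{A}(X)}{y} = \ip{X}{\mathcal{A}^{\ast}(y)}$. This is exactly the kind of one-line calculation that motivates the whole duality framework used in Section~\ref{3_scn:parrep}, so I would keep the write-up short and make the three ingredients (primal feasibility, dual feasibility, definition of $K^{\ast}$) individually visible.
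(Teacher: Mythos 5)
Your proof is correct and is exactly the standard one-line argument: substitute $b=\calA(X)$, pass to the adjoint, use $\calA^{\ast}(y)=C+S$, and invoke the definition of $K^{\ast}$ together with $X\in K$ for nonnegativity. The paper states this lemma without proof (treating it as a standard fact of cone programming), so there is no divergence to report; your write-up fills in precisely the intended computation.
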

\noindent In other words, let the optimal values of (P) and (D) be denoted $p^*$ and $d^*$, respectively. Then $p^*\leq d^*$. This raises the important question: Does $p^*=d^*$? In general, this is not the case. However, if indeed $p^*=d^*$, we say that \emph{strong duality} holds. Below we give a condition which, if satisfied, guarantees that strong duality holds.

\begin{theorem}[Strong Duality]\label{3_thm:stduality}
If \textup{(P)} is strictly feasible, then strong duality holds, i.e.\ $p^*=d^*$. In particular, this implies that if $p^*$ is finite, then both \text{(P)} and \textup{(D)} attain their optimal values, which coincide.
\end{theorem}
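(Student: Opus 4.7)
The plan is to prove strong duality via a separating hyperplane argument, leveraging strict feasibility (a Slater-type condition) to control the geometry of a suitably constructed convex cone. By weak duality we already have $p^* \leq d^*$, so it suffices to exhibit a dual feasible solution $(y,S)$ whose objective value is at most $p^*$. The trivial cases $p^* = +\infty$ (in which weak duality forces $d^* = +\infty$) and $p^* = -\infty$ (ruled out since strict feasibility means \textup{(P)} is feasible) may be dispatched first, so I will assume $p^*$ is finite.

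The key construction is the convex cone
\begin{equation}
\mathcal{M} := \{(\mathcal{A}(X),\ \ip{C}{X} + s) : X \in K,\ s \leq 0\} \subseteq \R^m \times \R.
\end{equation}
The point $(b,\, p^*+\epsilon)$ lies outside $\mathcal{M}$ for every $\epsilon > 0$, because membership in $\mathcal{M}$ would force the existence of a primal feasible $X$ with $\ip{C}{X} \geq p^* + \epsilon$, contradicting the definition of the supremum. I then apply a separating hyperplane theorem: there exists a nonzero vector $(y,\lambda) \in \R^m \times \R$ such that
\begin{equation}
\ip{y}{\mathcal{A}(X)} + \lambda(\ip{C}{X} + s) \;\leq\; \ip{y}{b} + \lambda(p^*+\epsilon)
\end{equation}
for every $X \in K$ and $s \leq 0$. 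Since $X$ may be scaled arbitrarily in $K$ and $s$ taken arbitrarily negative, homogeneity forces $\lambda \geq 0$ and $\ip{\mathcal{A}^*(y) + \lambda C}{X} \leq 0$ for all $X \in K$; equivalently, $-\mathcal{A}^*(y) - \lambda C \in K^*$, or $\mathcal{A}^*(-y) = \lambda C + S$ for some $S \in K^*$.

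The main obstacle is ruling out the degenerate case $\lambda = 0$, which would give a ``vertical'' separating hyperplane yielding no dual feasibility information for $C$. This is exactly where strict feasibility enters: if $\lambda = 0$, the inequality reduces to $\ip{y}{\mathcal{A}(X)} \leq \ip{y}{b}$ for all $X \in K$, i.e.\ $\ip{y}{\mathcal{A}(X) - b} \leq 0$ on $K$. Plugging in a strictly feasible $X_0 \in \interior(K)$ with $\mathcal{A}(X_0) = b$ and perturbing $X_0$ by small elements of $\Span(K)$ in all directions forces $\mathcal{A}^*(y) = 0$, and then $(y,\lambda) = 0$, contradicting nontriviality of the separating vector. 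Hence $\lambda > 0$, and I may rescale so that $\lambda = 1$. Taking $\epsilon \to 0$ and $s=0$, $X=0$ in the separation inequality yields $\ip{b}{-y} \leq p^*$, so $(-y,S)$ is dual feasible with objective $\leq p^*$. Combined with weak duality this gives $d^* \leq p^* \leq d^*$, proving $p^* = d^*$, with attainment in the dual arising directly from the constructed $(-y,S)$; attainment on the primal side under finiteness follows by a symmetric closedness argument on the feasible slice $\{X \in K : \mathcal{A}(X) = b\}$ combined with the now-established strong duality.
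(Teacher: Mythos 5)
The paper states Theorem~\ref{3_thm:stduality} without proof (it is quoted from the cone-programming literature), so there is no in-paper argument to compare against; your separating-hyperplane strategy is the standard route to $p^*=d^*$ and its skeleton is sound, but two steps need repair and one step genuinely fails. On the repairs: to rule out $\lambda=0$ you deduce $\calA^*(y)=0$ on $\Span(K)$ and then assert $(y,\lambda)=0$, but that last implication requires $y\mapsto\calA^*(y)|_{\Span(K)}$ to be injective (equivalently, the $m$ constraint functionals to be linearly independent on $\Span(K)$), which is not a hypothesis. This is fixable either by discarding redundant constraints in advance, or by invoking \emph{proper} separation of a point from a convex set not containing it: if $\lambda=0$ and $\calA^*(y)|_{\Span(K)}=0$, then the point and all of $\mathcal{M}$ lie inside the separating hyperplane, contradicting properness. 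Also, your hyperplane $(y,\lambda)$ depends on $\epsilon$, so letting $\epsilon\to 0$ yields $d^*\le p^*+\epsilon$ for every $\epsilon$ (hence $p^*=d^*$) but does not by itself hand you a single dual-feasible point of value $p^*$; to get dual attainment in one stroke, separate $\mathcal{M}$ from the ray $\{(b,t):t>p^*\}$ rather than from a point.

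The genuine gap is the final sentence on primal attainment. The feasible slice $\{X\in K:\calA(X)=b\}$ is closed but in general unbounded, and a linear functional bounded above on a closed unbounded convex set need not attain its supremum, so no closedness argument can succeed. In fact primal attainment does not follow from primal strict feasibility at all: take $K$ the cone of $2\times 2$ positive semidefinite matrices, the single constraint $\calA(X)=2\operatorname{Re}(X_{12})=2$, and $C=\operatorname{diag}(-1,0)$. The positive definite matrix with diagonal $(2,2)$ and off-diagonal entries $1$ is strictly feasible, every feasible $X$ has off-diagonal entries of real part $1$ and hence diagonal $(a,c)$ with $ac\ge 1$, and the objective $\ip{X}{C}=-a$ has supremum $p^*=0$, approached as $a\to 0$, $c\to\infty$ but never attained (the dual does attain, at $y=0$). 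So under the stated hypothesis only the \emph{dual} is guaranteed to attain; the attainment of both programs needed in Section~\ref{3_scn:parrep} really comes from the fact that there the dual is \emph{also} strictly feasible, so the one-sided statement can be applied in each direction. Your proof should either establish only dual attainment, or add dual strict feasibility as the hypothesis for the primal-attainment claim.
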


\noindent Note that when $K$ is a closed, convex cone, one can flip the roles of primal and dual problems in Theorem~\ref{3_thm:stduality}.


\section{Equivalence of \class{MQA} and $\class{QMA}_{\blog}(\poly)$}\label{3_scn:qmalog}

We now prove Theorem~\ref{3_thm:qmalog}, i.e.\ that $\class{MQA}=\class{QMA}_{\log}(\poly)$. We first show the direction $\class{MQA} \subseteq \class{QMA}_{\log}(\poly)$. Let $A = (A_{\yes},A_{\no})$ be a promise problem in \class{MQA} and let $x \in \{ 0,1 \}^n$ be the input string. Suppose the \class{MQA} prover sends an $m$-bit classical proof to the verifier, for polynomially bounded $m$. Then the following straightforward $\class{QMA}_{\log}(m)$ protocol achieves the desired containment:

\begin{mylist}{\parindent}
\item[1.]
\textbf{Embed classical bits into qubits.} Each (unentangled) prover $i\in[m]$ sends a single qubit $\ket{\psi_i}\in \complex^2$ to Arthur. If the $i$-th prover is honest, his/her qubit is the computational basis state corresponding to the $i$-th bit of the classical \class{MQA} proof.

\item[2.] \textbf{Make things classical again.} Arthur measures all proofs in the computational basis, obtaining a classical string $y \in \{0,1\}^{m}$.

\item[3.]
\textbf{Run MQA verification.} Arthur runs the \class{MQA} verification circuit on $x$ and $y$ and accepts if and only if acceptance occurs in the \class{MQA} verification.
\end{mylist}

\noindent The completeness property follows straightforwardly. The soundness property is also easy to observe. Note that Arthur runs the \class{MQA} verification on a classical string $y$ and hence he accepts the string with probability at most $1/3$.

To show the reverse containment, let $A = (A_{\yes},A_{\no})$ be a promise problem in class $\class{QMA}_{\log}(\poly)$ and let $x \in \{ 0,1 \}^n$ be the input string. Suppose we have a $\class{QMA}_{\log}(m)$ protocol for polynomially bounded $m$, where prover $i$ sends a $\lceil c\log n\rceil$-qubit state $\ket{\psi_i}$ for some constant $c>0$. Let $r(n) = 2^{\lceil c\log n \rceil} = O(n^c)$. The \class{MQA} protocol proceeds as follows:

\begin{mylist}{\parindent}
\item[1.]
\textbf{Describe proofs classically.} The prover sends $m$ classical registers represented by the tuple $(\mathsf{C}_1, \mathsf{C}_2, \dots, \mathsf{C}_m)$, each of length $2n\cdot r(n)$ to Arthur. If the prover is honest, register $\mathsf{C}_i$ contains a classical description of the $i$-th quantum proof $\ket{\psi_i}$ of the $\class{QMA}_{\log}(m)$ protocol.

\item[2.]
\textbf{State preparation.} Using the contents of register $\mathsf{C}_i$, for every choice of $i \in [m]$, Arthur prepares the state $\ket{\psi_i}$ by first determining a unitary $U_i$ such that $U_i\ket{0\ldots 0}=\ket{\psi_i}$, and then implementing $U_i$ with high precision using a finite set of approximately universal gates, obtaining states $\ket{\psi_i^\prime}$.

\item[3.]
\textbf{Run $\class{QMA}_{\log}(m)$ verification.} Arthur runs the $\class{QMA}_{\log}(m)$ verification circuit on $\ket{\psi_1^\prime}\otimes\cdots\otimes\ket{\psi_m^\prime}$ and accepts if and only if acceptance occurs in $\class{QMA}_{\log}(m)$ verification.
\end{mylist}

\noindent Observe that each classical register $\mathsf{C}_i$ is of size polynomial in $n$, implying the overall proof length is of polynomial size. In Step~1, the prover uses $n$ bits to represent the real and imaginary parts of each of the polynomially many entities ($r(n)$ entries) required to describe each $\ket{\psi_i}$. Let the unit vector described by register $\mathsf{C}_i$ be denoted $\ket{\psi_i}$. In Step~2, $U_i$ is easily found, as the unitary that maps $\ket{0\ldots0}$ to $\ket{\psi_i}$ is the inverse of the unitary that maps $\ket{\psi_i}$ to $\ket{0\ldots0}$. Next, $U_i$ can be efficiently decomposed into a product of $U_i'$ one- and two-qubit unitary gates (see Bernstein and Vazirani~\cite{BV97} for details, or Section~\ref{0_sscn:circuit} under ``Universal gate sets'' for a brief discussion) such that $\snorm{U_i-U_i'}$ is inverse exponentially small. Since Steps 1 and 2 can be performed to within inverse exponential error, we thus can ensure $\norm{\ket{\psi_i} - \ket{\psi_i^\prime}} \le \epsilon$ for all $i \in [m]$ and for inverse exponential $\epsilon>0$. By Lemma~\ref{3_lem:tracenorm}, it follows that the overall precision error is at most $m\epsilon$ for polynomial $m$, and thus the completeness and soundness of the protocol are bounded from below and above by $\frac{2}{3} - m\epsilon$ and $\frac{1}{3} + m\epsilon$, respectively.


\section{Equivalence of $\class{BellQMA}[\poly, \poly]$ and $\class{QMA}$}\label{3_scn:bellqma}

We now show Theorem~\ref{3_thm:bellqma}, i.e.\ that $\class{BellQMA}[r, m]=\class{QMA}$ for polynomially-bounded functions $r$ and $m$. For notational convenience, let $\Pi_j(i)$ denote Arthur's $i$-th POVM element in Stage 1 of the \class{BellQMA} verification protocol for the $j$-th prover (i.e.\ $\sum_{i=1}^r \Pi_j(i) =\I$), where we assume without loss of generality that the number of possible outcomes is exactly $r$ for each prover, and where $j\in[m]$ for $m$ the number of provers.

We proceed as follows. Let $A = (A_{\yes}, A_{\no})$ be a promise problem, and $x$ be an input string of length $n := |x|$. Note first that the containment $\class{QMA} \subseteq \class{BellQMA}[\poly, \poly]$ follows since, by definition, $\class{QMA} \subseteq \class{BellQMA}[2,1]$. For the reverse containment, suppose we have a $\class{BellQMA}[r, m]$ protocol for polynomially bounded functions $r,m:\nats \rightarrow \nats$ with completeness $2/3$ and soundness $1/3$. We show that this protocol can be simulated by a \class{QMA} protocol as follows.

Merlin's proof consists of two registers $(\mathsf{X}, \mathsf{Y})$, which should be thought of as the \emph{classical} and \emph{quantum} registers, respectively. Suppose optimal proofs for the $\class{BellQMA}[r, m]$ protocol for input $x$ are given by $\rho_j$ for $j\in [m]$. Then, in the quantum register $\mathsf{Y}$, an honest Merlin should send many copies of the state $\rho_j$. Specifically, $\mathsf{Y}$ is partitioned into $m$ registers $\mathsf{Y}_j$, one for each original prover, and each $\mathsf{Y}_j$ should contain $k$ copies of $\rho_j$, for $k$ a carefully chosen polynomial. In other words, $\mathsf{Y}$ should contain the state $[\rho_1^{\otimes k}]_{\mathsf{Y}_1}\otimes\cdots\otimes[\rho_m^{\otimes k}]_{\mathsf{Y}_m}$. We further view each $\mathsf{Y}_j$ as a block of registers $(\mathsf{Y}_j^1, \dots, \mathsf{Y}_j^k)$ where $\mathsf{Y}_j^l$ should contain the $l$-th copy of $\rho_j$.

In the classical register $\mathsf{X}$, Merlin sends the classical ``consistency check'' string alluded to in Section~\ref{3_scn:intro}. Specifically, an honest Merlin prepares a quantum state in the computational basis, which intuitively corresponds to a bit string describing the $m$ classical probability distributions Arthur induces upon applying the measurement operation corresponding to Stage~1 of the \class{BellQMA} verification to each of the optimal proofs $\rho_j$, respectively. More formally, we partition $\mathsf{X}$ into $mr$ registers $\mathsf{X}_j^i$ corresponding to each of the $j\in[m]$ provers and $i\in[r]$ POVM outcomes per prover. The content of $\mathsf{X}_j^i$ should be $p_j(i) := \ip{\Pi_j(i)}{\rho_j}$, truncated to $\alpha$ bits of precision ($\alpha$ polynomially bounded), such that $\sum_{i=1}^r p_j(i)=1$. For example, if the $j$-th prover's proof was the single qubit state $\rho_j=\ketbra{0}{0}$, with $\Pi_j(1)=\ketbra{0}{0}$ and $\Pi_j(2)=\ketbra{1}{1}$, then $\mathsf{X}_j = (1, 0)$.

Of course, Merlin may elect to be dishonest and choose not to send a proof of the above form to Arthur by, e.g., sending a quantum state which is entangled across the registers $(\mathsf{X}, \mathsf{Y})$. To catch this, our \class{QMA} protocol is defined as follows:


\begin{mylist}\parindent
\item[1.] Merlin sends Arthur a quantum state in registers $(\mathsf{X},\mathsf{Y})$, for $\mathsf{X}$ and $\mathsf{Y}$ defined as above.

\item[2.]
\textbf{Force $\mathsf{X}$ to be classical.} Arthur measures register $\mathsf{X}$ in the computational basis and reads the measurement outcome. This forces $\mathsf{X}$ to essentially be a classical register of bits, and destroys any entanglement or correlations between $\mathsf{X}$ and $\mathsf{Y}$.

\item[3.]
\textbf{$\mathsf{X}$ should contain probability distributions.} Arthur checks whether the content of registers $\mathsf{X}_j$ form a probability distribution $p_j$. Arthur rejects if this is not the case.

\item[4.]
\textbf{Consistency check: Can the quantum states in $\mathsf{Y}$ reproduce the distributions in $\mathsf{X}$?} Arthur picks independently and uniformly at random, an index $j \in [m]$ and another index $i \in [r]$. He applies the measurement $\{ \Pi_j(i) \}_{i=1}^r$ separately to each register $\mathsf{Y}_j^1, \dots, \mathsf{Y}_j^k$, and counts the number of times outcome $i$ appears, which we denote henceforth as $n_j(i)$. Arthur rejects if
\begin{equation}
\left\arrowvert \frac{n_j(i)}{k} - p_j(i) \right\arrowvert \ge \frac{1}{p},
\end{equation}
for $p$ a carefully chosen polynomial.

\item[5.]
\textbf{Run Stage 2 of the \class{BellQMA} verification and repeat for error reduction.} For each prover $j$, Arthur samples an outcome from $[r]$ according to the distribution in $(\mathsf{X}_j^1, \dots, \mathsf{X}_j^r)$, and runs Stage 2 of the \class{BellQMA} verification on the resulting set of samples. He repeats this process independently a polynomial number of times $q$, and accepts if and only if the \class{BellQMA} procedure accepts on the majority of the runs.
\end{mylist}

Let us discuss the intuition behind the verification procedure above. The key is Step 4, where Arthur cross-checks that the classical distributions sent in $\mathsf{X}$ really can be obtained by measuring $m$ quantum proofs, which for an honest Merlin should be unentangled. In this sense, our protocol can alternatively be viewed as using \emph{quantum} proofs ($\mathsf{Y}$) to check validity of a \emph{classical} proof ($\mathsf{X}$). Intuitively, the reason why entanglement in $\mathsf{Y}$ does not help a dishonest Merlin in Step 3 is due to the local nature of Arthur's checks/measurements. Finally, once Arthur is satisfied that $\mathsf{X}$ contains valid distributions, he runs Step 5. We remark that repetition is used here in order to boost the probability of acceptance in the $x\in A_{\yes}$ case to exponentially close to $1$, which is required to separate it from the $x \in A_{\no}$ case, where the probability of catching a dishonest Merlin is only inverse polynomially bounded away from $1$. Once such a gap exists, standard error reduction techniques~\cite{KW00,MW05} (see Section~\ref{0_sscn:classes}) can be used to further improve completeness and soundness parameters.

To formally analyze completeness and soundness of the $\class{QMA}$ protocol, we assign the following values to the parameters, all of which are polynomial in $n$ in our setting:
\begin{equation}
q = 50n \qquad \text{and} \qquad p = 20mr \qquad \text{and} \qquad k = 5p^3\qquad \text{and} \qquad \alpha = 20nmr.
\end{equation}

\paragraph{Completeness.} Intuitively, when $x\in A_{\yes}$, Merlin passes Step 4 with probability exponentially close to $1$ since he has no incentive to cheat --- he can send an unentangled proof in Step 1 to Arthur corresponding to the optimal proofs $\rho_j$ in the \class{BellQMA} protocol, such that the expected value of $n_j(i)/k$ is indeed $p_j(i)$. Arthur's checks in Step 4 are then independent local trials, allowing a Chernoff bound to be applied. We then show that Merlin passes each run in Step 5 with constant probability, and applying the Chernoff bound a second time yields the desired completeness exponentially close to $1$ for the protocol.

To state this formally, suppose Merlin is honest and sends registers $(\mathsf{X}, \mathsf{Y})$ in the desired form, i.e., $\mathsf{X}_j^i$ contains $p_j(i)=\ip{\Pi_j(i)}{\rho_j}$ up to $\alpha$ bits of precision, and $\mathsf{Y}_j^l$ contains $\rho_j$. Then, the expected value of the random variable $n_j(i)$ is $\E[n_j(i)] = k \ip{\Pi_j(i)}{\rho_j}$, which is equal to $k\cdot p_j(i)$ up to the error incurred by representing $p_j(i)$ using $\alpha$ bits of precision. In other words,
\begin{equation}\label{3_eqn:precision}
    \abs{ \frac{\E[n_j(i)]}{k} - p_j(i)} < \frac{1}{2^\alpha} < \frac{1}{2p}.
\end{equation}
We can hence upper bound the probability of rejecting in Step 3 by
\begin{equation}
\Pr \left[ \left\arrowvert \frac{n_j(i)}{k} - p_j(i) \right \arrowvert \ge \frac{1}{p} \right] < \Pr \left[ \left\arrowvert \frac{n_j(i)}{k} - \frac{\E[n_j(i)]}{k} \right \arrowvert \ge \frac{1}{2p} \right]  \le 2\exp\left(-\frac{5p}{4}\right),
\end{equation}
where the first inequality follows from Equation~(\ref{3_eqn:precision}) and the second from the Chernoff bound. Thus, Merlin passes Step 4 with probability exponentially close to $1$.

We now turn to the final step. Since $x\in A_{\yes}$, we know that the optimal distributions, denoted $q_j := \left(\ip{\Pi_j(1)}{\rho_j}, \dots, \ip{\Pi_j(r)}{\rho_j}\right)$ for $j\in[m]$, obtained in Stage 1 of the original \class{BellQMA} protocol are now accepted in Stage 2 with probability at least $2/3$. However, in our case, Merlin was only able to specify each $q_j$ up to $\alpha$ bits of precision per entry as the distributions $p_j$. To analyze how this affects the probability of acceptance, let $P_j$ and $Q_j$ be diagonal operators with entries $P_j(i,i)=p_j(i)$ and $Q_j(i,i)=\ip{\Pi_j(i)}{\rho_j}$, respectively. Letting $C_{\rm accept}$ denote the POVM element corresponding to outcome {\it accept} in Stage 2 of the BellQMA protocol, we thus bound the change in acceptance probability by:
\begin{eqnarray}
    \left| \tr\left[C_{\rm accept} \left(\bigotimes_{j=1}^m P_j - \bigotimes_{j=1}^m Q_j\right)\right] \right| &\leq& \bigg\Vert \bigotimes_{j=1}^m P_j - \bigotimes_{j=1}^m Q_j \bigg\Vert_{\textup{tr}}\\
    &\le& \sum_{j=1}^m \trnorm{P_j - Q_j} \\
    &=& \sum_{j=1}^m \sum_{i=1}^r |p_j(i) - \ip{\Pi_j(i)}{\rho_j}|\\
    &\le& \frac{mr}{2^{20nmr}},
\end{eqnarray}
where the first inequality follows from the fact that $|\tr(AB)| \le \snorm{A} \cdot \trnorm{B}$ and the second inequality follows from Lemma~\ref{3_lem:tracenorm}. Therefore, the probability of success for each of the $q$ runs of the \class{BellQMA} protocol in Step 5 is at least
\begin{equation}
\left(\frac{2}{3} - \frac{mr}{2^{20nmr}}\right) > 0.6.
\end{equation}
Since each run is independent, applying the Chernoff bound yields that Arthur accepts Merlin's proof in Step 5 with probability at least $1 - 2\exp( -0.02q)$, as desired. There may be some error incurred in sampling, which can be assumed to be exponentially small so that the success probability of each run is still at least $0.6$.

\paragraph{Soundness.} We now prove that when $x\in A_{\no}$, a dishonest Merlin can win with probability at most inverse polynomially bounded away from $1$. To show this, we bound the probability of passing Step 4 by relating the quantity $p_j(i)$ to the expected value of $n_j(i)/k$, and then apply the Markov bound. The desired relationship follows by observing first that the expected value of $n_j(i)/k$ is precisely the probability of obtaining outcome $i$ when measuring proof $j$ of some (honest) unentangled strategy, followed by arguing that the distribution $p_j$ must hence be far from this latter (honest) distribution if Merlin is to pass Step 5 with probability at least $1/2$ (since $x\in A_{\no}$). Combining these facts, we find that Arthur detects a cheating Merlin with inverse polynomial probability in Step 4.

More formally, let the quantum register $\mathsf{Y}_j$ contain an arbitrary quantum state $\sigma_j$ whose reduced states in registers $\mathsf{Y}_j^l$ for $l\in[k]$ are given by $\sigma_j(l)$, and define
\begin{equation}
\xi_j := \frac{1}{k}\sum_{l=1}^k\sigma_j(l).
\end{equation}
By the linearity of expectation, the expected value of the random variable $n_j(i)/k$ is
\begin{equation}
\E\left[\frac{n_j(i)}{k}\right] = \frac{1}{k}\sum_{l=1}^k \ip{\Pi_j(i)}{\sigma_j(l)} = \ip{\Pi_j(i)}{\xi_j}.
\end{equation}
Our goal is to lower bound the expression
\begin{equation}
    \Pr \left[ \left\arrowvert \frac{n_j(i)}{k} - p_j(i) \right \arrowvert \ge \frac{1}{p} \right]. \label{3_eqn:prob}
\end{equation}
To achieve this, we first substitute $p_j(i)$ above with a quantity involving $\E[n_j(i)/k]$, and then apply the Markov bound.

To relate $\E[n_j(i)/k]$ to $p_j(i)$, we first remark that in order for Merlin to pass each run of Step 5 with probability exponentially close to $1$, he must send probability distributions $p_j$, which are accepted by Stage 2 of the \class{BellQMA} verification with probability at least $1/2$. Let
\begin{equation}
q_j(i):= \ip{\Pi_j(i)}{\xi_j}.
\end{equation}
Let us imagine a $\class{BellQMA}$ protocol where the $j$-th Merlin sends $\xi_j$ as his quantum proof. Since $x \in A_{\no}$, by the soundness property of the $\class{BellQMA}(m)$ proof system, the success probability of the Merlins is at most $1/3$. In other words, sampling outcomes from the probability distributions $(q_j(1), \ldots, q_j(r))$ and then running the second stage of the \class{BellQMA} verification will yield outcome {\it accept} with probability at most $1/3$. Also, observe that
\begin{equation}
\E\left[\frac{n_j(i)}{k}\right] = q_j(i).
\end{equation}
It follows that by letting $P_j$ and $Q_j$ be diagonal operators with the probability vectors $p_j$ and $q_j$ on their diagonals, respectively, and $C_{\rm accept}$ the POVM element corresponding to outcome {\it accept} in Stage 2 of the BellQMA protocol, we have
\begin{equation}
\frac{1}{10} < \abs{\tr\left[C_{\rm accept} \left(\bigotimes_{j=1}^m P_j - \bigotimes_{j=1}^m Q_j\right)\right]}\le \bigg\Vert \bigotimes_{j=1}^m P_j - \bigotimes_{j=1}^m Q_j \bigg\Vert_{\textup{tr}} \le \sum_{j=1}^m \trnorm{P_j - Q_j}.
\end{equation}
Here, the (loose) lower bound of $1/10$ comes from the following two observations. First, the distributions represented by the $Q_j$'s are derived from a \class{BellQMA} protocol and therefore achieve a success probability at most $1/3$ by the soundness property of the \class{BellQMA} verification. Second, the distributions represented by the $P_j$'s have to achieve a success probability strictly greater than $1/2$ per run to guarantee that Merlin wins Step 5 with probability exponentially close to $1$. Combining these two, we get that the difference between the success probabilities obtained by distributions $\set{P_j}$ and $\set{Q_j}$ should be at least $1/6$ modulo the error incurred due to finite precision when encoding the distributions $p_j$. The use of the constant $1/10$ overcompensates for this precision error.  Hence, there exists a $j$ such that
\begin{equation}
\trnorm{P_j - Q_j} = \sum_{i=1}^r |p_j(i) - q_j(i)| \ge \frac{1}{10m},
\end{equation}
implying the existence of an $i$ such that
\begin{equation}\label{3_eqn:relation}
    |p_j(i) - q_j(i)| \ge \frac{1}{10mr}.
\end{equation}
This is our desired relationship between $p_j(i)$ and $\E[n_j(i)/k]=q_j(i)$. Note that the probability of picking pair $(i,j)$ in Step 4 is $1/mr$.

We now substitute this relationship into Equation~(\ref{3_eqn:prob}) and apply the Markov bound. Specifically, choose $i$ and $j$ as in Equation~(\ref{3_eqn:relation}), and assume that $p_j(i) > \ip{\Pi_j(i)}{\xi_j}$. Then, we have
\begin{equation}
\Pr \left[ \left\arrowvert \frac{n_j(i)}{k} - p_j(i) \right\arrowvert < \frac{1}{p} \right] < \Pr \left[ \frac{n_j(i)}{k} - \E\left[\frac{n_j(i)}{k}\right] > \frac{1}{10mr} - \frac{1}{p} \right] \le 1-\frac{1}{2p}.
\end{equation}

\noindent The case of $p_j(i) < \ip{\Pi_j(i)}{\xi_j}$ is similar. We conclude that a dishonest Merlin is caught in Step 4 with probability at least $1/2p$. Therefore, the probability that Arthur proceeds to Step 5 is upper bounded by
\begin{equation}
\left(\frac{1}{mr}\right)\left(1 - \frac{1}{20mr}\right) + \left(1 - \frac{1}{mr}\right)(1) = 1 - \frac{1}{20m^2r^2},
\end{equation}
where the first term represents the case where Arthur selects the correct pair $(i,j)$ to check, and the second term the complementary case, in which we assume the cheating prover can win with probability $1$. Hence the overall success probability of Merlin is at most $1 - 1/20m^2r^2$.

Finally, as mentioned before, since $m$ and $r$ are polynomially bounded functions, we have that the completeness is exponentially close to $1$, while the soundness is bounded away from $1$ by an inverse polynomial. By known error reduction techniques for \class{QMA} protocols~\cite{KW00,MW05}, one can amplify the completeness and soundness errors to be exponentially close to 0. This proves our desired containment.


\section{Perfect parallel repetition for $\class{SepQMA}(\poly)$}\label{3_scn:parrep}

Using cone programming, we now show Theorem~\ref{3_thm:parrep}, i.e., that the class $\class{SepQMA}(m)$ admits perfect parallel repetition. Recall now that for $C$ the measurement operator corresponding to outcome {\it accept}, the maximum success probability of the Merlins in any $\class{QMA}(m)$ protocol can be written as the maximum of $\ip{\rho}{C}$, where $\rho$ is a density operator in the cone $\sep{\X_1, \dots, \X_m}$. This is a simple cone program and can be written as the following primal-dual pair:
\begin{center}
  \begin{minipage}{1.5in}
    \centerline{\underline{Primal problem ($\textup{P}$)}}\vspace{-7mm}
    \begin{align*}
			\text{max}\quad & \ip{\rho}{C} \\
  		\text{s.~t.}\quad & \tr(\rho) = 1,\\
  		& \rho \in \sep{\X_1, \dots, \X_m},
  	\end{align*}
  \end{minipage}
  \hspace*{15mm}
  \begin{minipage}{1.5in}
    \centerline{\underline{Dual problem ($\textup{D}$)}}\vspace{-7mm}
		\begin{align*}
			\text{min}\quad & t\\
  		\text{s.~t.}\quad & t\I_{\X} = C + W,\\
  		& W \in \Sep{(\X_1, \ldots, \X_m)}^\ast,
  	\end{align*}	
  \end{minipage}
\end{center}
where $\X$ denotes $\X_1 \otimes \cdots \otimes \X_m$, and $\Sep{(\X_1, \ldots, \X_m)}^\ast$ is the dual cone defined as
\begin{equation}
    \Sep{(\X_1, \ldots, \X_m)}^\ast := \left\{W: \ip{\rho}{W} \ge 0 \text{ for all } \rho \in \sep{\X_1, \dots, \X_m} \right\}.
\end{equation}
(Note that $\Sep{(\X_1, \ldots, \X_m)}^\ast$ contains the set of \emph{entanglement witnesses} in the theory of entanglement, see~\cite{HHHH09}.) Moreover, the use of ``maximum'' and ``minimum'' is justified in the above programs since $\overline{\rho} = \frac{\I_{\X}}{\dim(\X)}$ and $(\overline{t}, \overline{W}) = (2, 2\I_{\X} - C)$ are strictly feasible solutions for $(\textup{P})$ and $(\textup{D})$, respectively~\cite{GB02,GB03,GB05} (i.e.\ strong duality (Theorem~\ref{3_thm:stduality}) holds).

Given two protocols, the corresponding cone programs are completely specified by Arthur's POVM corresponding to outcome {\it accept} and the underlying cone:
\begin{equation}
(C_1, \sep{\X_1, \dots, \X_m}) \text{ and } (C_2, \sep{\Y_1, \dots , \Y_m}),
\end{equation}
while the parallel repetition protocol is specified by
$(C_1 \otimes C_2, \sep{\X_1 \otimes \Y_1, \dots, \X_m \otimes \Y_m})$.

To show Theorem~\ref{3_thm:parrep}, note first that if $\rho_1$ and $\rho_2$ are optimal solutions of the primal problems associated with the two individual protocols, then $\rho_1 \otimes \rho_2$ is a feasible solution of the primal problem associated with the parallel repetition protocol. Therefore the success probability of the parallel repetition is at least the product of the success probabilities of the individual protocols. We now show that \emph{no} other strategy for the prover can perform better than this honest strategy. To do so, we demonstrate a feasible solution for the dual problem associated with the parallel repetition protocol attaining the same objective value.

More formally, let $(t_1, W_1)$ and $(t_2, W_2)$ be respective dual optimal solutions corresponding to two protocols. We show that $(t_1 \cdot t_2, W)$ is a dual feasible solution corresponding to the two-fold repetition of protocols for some choice of $W \in \Sep{(\X_1 \otimes \Y_1, \dots, \X_m\otimes\Y_m)}^\ast$. To do so, we first require the following lemma.

\begin{lemma}\label{3_lem:dualconstruction}
For complex Euclidean spaces $\X_1, \ldots, \X_m,\Y_1, \ldots, \Y_m$:
\begin{itemize}
\item $\Sep{(\X_1, \ldots, \X_m)}^\ast \otimes \Sep(\Y_1, \ldots, \Y_m) \subseteq \Sep{(\X_1 \otimes \Y_1, \ldots, \X_m \otimes \Y_m)}^\ast$, and
\item $\Sep(\X_1, \ldots, \X_m) \otimes \Sep{(\Y_1, \ldots, \Y_m)}^\ast \subseteq \Sep{(\X_1 \otimes \Y_1, \ldots, \X_m \otimes \Y_m)}^\ast$.
\end{itemize}
\end{lemma}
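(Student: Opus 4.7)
\medskip

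\noindent\emph{Proof plan for Lemma~\ref{3_lem:dualconstruction}.} I will prove the first containment; the second then follows by the symmetric argument (simply swapping the roles of the $\X_i$'s and $\Y_i$'s throughout). The overall strategy is to unpack the definition of the dual cone and reduce, via bilinearity of the tensor product and the definition of full separability, to a single ``product of products'' case which can be handled by a partial-trace manipulation.

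The plan is as follows. Fix any $Z \in \Sep{(\X_1,\ldots,\X_m)}^\ast \otimes \Sep(\Y_1,\ldots,\Y_m)$, so that $Z = \sum_i W_i \otimes \sigma_i$ for some $W_i \in \Sep{(\X_1,\ldots,\X_m)}^\ast$ and $\sigma_i \in \Sep(\Y_1,\ldots,\Y_m)$. To show $Z \in \Sep{(\X_1 \otimes \Y_1,\ldots,\X_m \otimes \Y_m)}^\ast$, I must verify that $\langle Z, \rho \rangle \geq 0$ for every $\rho \in \Sep(\X_1 \otimes \Y_1,\ldots,\X_m \otimes \Y_m)$. By bilinearity, it suffices to handle a single summand $W \otimes \sigma$, and by the definition of separability applied to $\sigma$, it further suffices to take $\sigma = \tau_1 \otimes \cdots \otimes \tau_m$ with $\tau_k \in \pos{\Y_k}$. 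Similarly, by writing $\rho = \sum_j B_1^j \otimes \cdots \otimes B_m^j$ with each $B_k^j \in \pos{\X_k \otimes \Y_k}$, it suffices to verify non-negativity of $\langle W \otimes \tau_1 \otimes \cdots \otimes \tau_m, B_1^j \otimes \cdots \otimes B_m^j \rangle$ for each fixed $j$ and then sum.

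The core computation is then the following. After reordering tensor factors so that the $\X_k$-factors of $W$ are paired with the $\X_k$-parts of $B_k^j$ and similarly for $\Y_k$, I would rewrite
\begin{equation}
\langle W \otimes \tau_1 \otimes \cdots \otimes \tau_m,\, B_1^j \otimes \cdots \otimes B_m^j \rangle = \langle W,\, C_1^j \otimes \cdots \otimes C_m^j \rangle,
\end{equation}
where $C_k^j := \trace_{\Y_k}\!\bigl((I_{\X_k} \otimes \tau_k)\, B_k^j\bigr) \in \LL(\X_k)$. The key observation is then that each $C_k^j$ is positive semidefinite: since $\tau_k \succeq 0$ we have $I_{\X_k} \otimes \tau_k \succeq 0$, so $(I_{\X_k} \otimes \tau_k)^{1/2} B_k^j (I_{\X_k} \otimes \tau_k)^{1/2} \in \pos{\X_k \otimes \Y_k}$, and partial traces of positive semidefinite operators remain positive semidefinite. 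Hence $C_1^j \otimes \cdots \otimes C_m^j \in \Sep(\X_1,\ldots,\X_m)$, and so by the defining property of $W \in \Sep{(\X_1,\ldots,\X_m)}^\ast$, the inner product is non-negative, as required.

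The only real obstacle is bookkeeping: the tensor factors $\X_1 \otimes \cdots \otimes \X_m \otimes \Y_1 \otimes \cdots \otimes \Y_m$ appearing in $W \otimes \sigma$ must be carefully permuted to match the grouping $(\X_1 \otimes \Y_1) \otimes \cdots \otimes (\X_m \otimes \Y_m)$ used in the decomposition of $\rho$, and this reordering must be checked not to affect the trace inner product. Once this is stated cleanly (via an explicit system-swap unitary, under which both sides of the inner product transform identically and the trace is invariant), the proof reduces to the two elementary ingredients identified above — bilinearity of $\otimes$ and the fact that partial traces of positive semidefinite operators are positive semidefinite — and the conclusion follows for both containments in the lemma.
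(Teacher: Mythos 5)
Your proposal is correct and follows essentially the same route as the paper's proof: both reduce to the inner product $\ip{W \otimes C}{S} = \ip{W}{\tr_{\Y}\left[S(\I_{\X} \otimes C)\right]}$, decompose the separable operators into tensor products of positive semidefinite factors, and conclude by observing that $\tr_{\Y_k}\left[B(\I_{\X_k} \otimes \tau_k)\right] = \tr_{\Y_k}\left[(\I \otimes \tau_k^{1/2}) B (\I \otimes \tau_k^{1/2})\right] \succeq 0$. Your explicit attention to the reordering of tensor factors is a minor presentational addition that the paper leaves implicit.
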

\begin{proof}
We prove the first condition as the second is similar. Fix $W \in \Sep{(\X_1, \ldots, \X_m)}^\ast$ and $C \in \Sep(\Y_1, \ldots, \Y_m)$. Then for $S \in \Sep(\X_1 \otimes \Y_1, \ldots, \X_m \otimes \Y_m)$, we have
\begin{equation}
\ip{W \otimes C}{S} = \inner{W}{\tr_{\Y}\left[S(\I_{\X} \otimes C)\right]} \geq 0,
\end{equation}
if $\tr_{\Y}\left[S(\I_{\X} \otimes C)\right] \in \Sep(\X_1, \ldots, \X_m)$. 
To this end, let
\begin{equation}
S = \sum_{i=1}^k \bigotimes_{l=1}^m \rho_i(l) \qquad \text{ and } \qquad C = \sum_{j=1}^{k'} \bigotimes_{l=1}^m \sigma_j(l),
\end{equation}
where $\rho_i(l) \in \pos{\X_l \otimes \Y_l}$ and $\sigma_j(l) \in \pos{\Y_l}$ for all $i \in [k]$, $j \in [k']$, and $l \in [m]$. Now we can write $\tr_{\Y}\left[ S\left(\I_{\X} \otimes C \right) \right]$ as
\begin{eqnarray}
\tr_{\Y} \left[ \left( \sum_{i=1}^k \bigotimes_{l=1}^m \rho_i(l)\right) \left( \I_{\X} \otimes \sum_{j=1}^{k'} \bigotimes_{l=1}^m \sigma_j(l) \right) \right] & &= \\
& & \hspace{-5mm} \sum_{i=1}^k \sum_{j=1}^{k'} \bigotimes_{l=1}^m \tr_{\Y_l} \left[ \rho_i(l) \left(\I_{\X_l} \otimes \sigma_j(l) \right) \right].\nonumber
\end{eqnarray}
Hence, $\tr_{\Y}\left[ S\left(\I_{\X} \otimes C \right) \right]\in\sep{\X_1, \ldots, \X_m}$  since $\tr_{\Y_l} \left[ \rho_i(l) \left(\I_{\X_l} \otimes \sigma_j(l) \right) \right]$ is positive semidefinite for all $i,j,l$. The latter follows since for positive semidefinite $A_{\spa{X}\otimes\spa{Y}}$ and $B_{\spa{Y}}$,
\begin{equation}
    \trace_{\spa{Y}}(A_{\spa{X}\otimes\spa{Y}}I_{\spa{X}}\otimes B_{\spa{Y}})=\trace_{\spa{Y}}(I_{\spa{X}}\otimes B^{\frac{1}{2}}_{\spa{Y}}A_{\spa{X}\otimes\spa{Y}}I_{\spa{X}}\otimes B^{\frac{1}{2}}_{\spa{Y}})\succeq 0,
\end{equation}
which follows since $C^\dagger D C\succeq 0$ if $D\succeq 0$. This concludes the proof.
\end{proof}

\noindent We use Lemma~\ref{3_lem:dualconstruction} to construct two operators in $\Sep{(\X_1 \otimes \Y_1, \ldots, \X_m \otimes \Y_m)}^\ast$, the appropriate convex combination of which is the dual feasible solution we are seeking. Specifically, observe first that since for the two instances of the $\class{SepQMA}(m)$ protocol, we have $C_1 \in \sep{\X_1, \ldots, \X_m}$ and $C_2 \in \sep{\Y_1, \ldots, \Y_m}$, and since $\I_{\X}$ and $\I_{\Y}$ are fully separable operators, it follows that
\begin{equation}
s_1 \I_{\X} + C_1 \in \Sep(\X_1, \ldots, \X_m) \qquad \text{and} \qquad s_2 \I_{\Y} + C_2 \in \Sep(\Y_1, \ldots, \Y_m)
\end{equation}
for all $s_1, s_2 \geq 0$. Using Lemma~\ref{3_lem:dualconstruction}, we thus obtain operators
\begin{equation}\label{3_eqn:dualeq1}
(t_1 \I_{\X} - C_1) \otimes (t_2 \I_{\Y} + C_2) \in \Sep{(\X_1 \otimes \Y_1, \ldots, \X_m \otimes \Y_m)}^\ast
\end{equation}
and
\begin{equation} \label{3_eqn:dualeq2}
(t_1 \I_{\X} + C_1) \otimes (t_2 \I_{\Y} - C_2) \in \Sep{(\X_1 \otimes \Y_1, \ldots, \X_m \otimes \Y_m)}^\ast.
\end{equation}
Here we have used the fact that $W=t_1 \I_{\X} - C_1\in \Sep{(\X_1, \ldots, \X_m)}^\ast$ since $(t_1,W)$ is by assumption the optimal dual solution for the first protocol (and similarly for the second protocol). Since $\Sep{(\X_1 \otimes \Y_1, \ldots, \X_m \otimes \Y_m)}^\ast$ is a convex cone, it follows that the average of Equations~\eqref{3_eqn:dualeq1} and~\eqref{3_eqn:dualeq2} yields the desired operator
\begin{equation}
W := t_1 \cdot t_2 \, \I_{\X \otimes \Y} -  C_1 \otimes C_2 \in \Sep{(\X_1 \otimes \Y_1, \ldots, \X_m \otimes \Y_m)}^\ast.
\end{equation}
We conclude that $\left(t_1 \cdot t_2, W \right)$ is a feasible solution of the dual problem associated with parallel repetition of protocols with objective value $t_1 \cdot t_2$ as desired. This concludes the proof of Theorem~\ref{3_thm:parrep}.

%
\noindent \emph{Acknowledgements for this chapter.} We thank Richard Cleve, Tsuyoshi Ito, Iordanis Kerenidis, Ashwin Nayak, Oded Regev, and Levent Tun{\c c}el for insightful discussions. We also thank LIAFA, Paris for their hospitality, where part of this work was completed.

\chapter{Signatures of non-classicality in mixed-state quantum computation}\label{chap:dqc}

\emph{This chapter is based on~\cite{DG08}:}\\

\vspace{-4mm}
\noindent A. Datta and S. Gharibian. Signatures of nonclassicality in mixed-state quantum computation. \emph{Physical Review A}, 79:042325, 2009, DOI: 10.1103/PhysRevA.79.042325, \copyright~2009 American Physical Society, pra.aps.org.

\vspace{3mm}
\noindent In this chapter, we investigate signatures of non-classicality in quantum states,
in particular, those involved in the DQC1 model of mixed-state
quantum computation~\cite{kl98}.
To do so, we consider two known non-classicality criteria. The
first quantifies disturbance of a quantum state under locally
noneffective unitary operations (LNU), which are local unitaries
acting invariantly on a subsystem. The second quantifies
measurement induced disturbance (MID) in the eigenbasis of the
reduced density matrices. We study the role of both figures of
non-classicality in the exponential speedup of the DQC1 model and
compare them \textit{vis-a-vis} the interpretation provided in
terms of quantum discord. In particular, we prove that a non-zero
quantum discord implies a non-zero shift under LNUs. We also use
the MID measure to study the locking of classical correlations~\cite{dhlst04} using two mutually
unbiased bases (MUB). We find the MID measure to exactly
correspond to the number of locked bits of correlation.

\section{Introduction and results}

        A thorough understanding of classical and quantum correlations
underlies their successful exploitation in quantum information
science. Characterizing the relative roles and abilities of these two forms of
correlations in performing specific computational and information
processing tasks would be a valuable advance in the field.
Substantial progress in this direction has already been achieved.
The role of entangled states in quantum information processing and
computing is quite well studied. Jozsa and Linden \cite{jozsa03a}
showed that multipartite entanglement must grow unboundedly with
the problem size if a pure-state quantum computation is to attain
an exponential speedup over its classical counterpart. In the
context of information processing, Masanes has shown
\cite{masanes06a} that all bipartite entangled states can enhance
the teleporting power of some other state. In spite of these
successes, there are instances of quantum computations where the
quantum advantage cannot be attributed to entanglement. Meyer has
presented a quantum search algorithm that uses no entanglement
\cite{meyer00a}. Instances are also known of oracle based problems
that can be solved without entanglement, yet with certain
advantages over the best known classical algorithms
\cite{biham04a,kenigsberg06a}.

Given this scenario, it becomes a logical necessity to study the
essentialness of entanglement in quantum information science. 
A realistic motivation is that provided by mixed-state
quantum computation. Pure states in a quantum computation
inevitably get mixed due to decoherence. One way to
address this issue would be to study the prospects of quantum
computational speedup with mixed states themselves \cite{asv00}.
NMR quantum computation provides a good scenario for this. As a
simplified model for this, Knill and Laflamme proposed the DQC1 or
the `power of one qubit' model \cite{kl98}. Though not believed to
be as powerful as a pure-state quantum computer, it is believed to
provide an exponential speedup over the best known classical
algorithm for estimating the normalized trace of a unitary matrix.
The DQC1 model was found to have a limited amount of (bipartite)
entanglement that does not increase with the system size.
Additionally, for certain parameter settings, there is no
distillable entanglement present whatsoever, and yet the model
retains its exponential advantage. In this latter case the state
has a positive partial transpose, and thus possesses, at most,
just bound entanglement \cite{datta05a}. Looking for a more
satisfactory explanation for the exponential speedup, the quantum
discord \cite{ollivier01a,henderson01a} was calculated, of
which the amount found was a constant fraction of the maximum
possible \cite{datta08a}, regardless of the parameter settings for
the model. In this chapter, we study two alternative methods of
studying the quantum behavior of quantum computational and information tasks.

\paragraph{Our results:} This chapter studies the non-classical correlations found in the DQC1 states for trace estimation, as well as those used in the locking of classical correlations~\cite{dhlst04}, with respect to two quantification schemes abbreviated as LNU and MID.

\vspace{2mm}
\noindent\textbf{1. Locally noneffective unitaries (LNU).} \emph{Locally noneffective unitary} operations (LNU) have previously been
studied with the aim of developing an entanglement detection
criterion~\cite{f06,gkb08} (see Section for a definition~\ref{4_scn:lc}). Here, we study whether LNU can be used to quantify non-classicality, motivated by the disturbance
of a quantum state under unitary operations.  Specifically, we employ LNU in analyzing the DQC1 model, which has previously been studied using the quantum discord. Thus, we compare these two certificates of non-classicality, with the aim of contrasting \emph{disturbance
under measurement} with \emph{disturbance under unitary operations}. We also study a mixed-state task in the setting of quantum communication known as \emph{locking}~\cite{dhlst04}, which uses two mutually unbiased bases (MUB) to \emph{lock} classical correlations in a quantum state. For both tasks, we find that LNU do not indicate a high level of correlations.

\vspace{2mm}
\noindent\textbf{2. Measurement-Induced Disturbance (MID).} We then study the DQC1 model using the
Measurement-Induced Disturbance (MID) measure~\cite{Luo08} in
Section~\ref{S:middqc}. Regarding the MID measure, in Reference~\cite{Luo08}, a preliminary analysis
of the DQC1 model was begun. Here, we extend this analysis to the
entire parameter range for the DQC1 model, including those which
limit the DQC1 state to being at most bound entangled. This latter
case is of particular interest due to the lack of distillable
entanglement. We also study the task of locking. For the latter, the value of the MID measure is exactly the number of locked bits of correlation in the state. 

\paragraph{Discussion.} With regards to the LNU distance, we find (Equation~(\ref{4_eqn:dqclc})) that
there is little non-classicality in the $n+1$ qubit DQC1 state.
This behavior is very similar to that of negativity~\cite{VW02} in the DQC1
model which was used to characterize its
entanglement~\cite{datta05a}. The crucial difference is that the
bipartite split chosen in Section \ref{S:lcdqc1} is separable, and
therefore exhibits no entanglement at all. As the LNU distance
vanishes exponentially quickly with growing $n$, one is
hard-pressed to relegate the role of the resource exponentially
speeding up the DQC1 model to it. Similarly, the LNU distance
suggests vanishing non-classicality in the case of locking of
classical correlations in quantum states.

We find the MID measure, on the other hand, to be considerably more
satisfactory. The zero-entanglement split in the DQC1 model is
shown to have a non-zero amount of non-classicality as per the MID
measure. The magnitude of this measure, as shown in
Figure~(\ref{measdisc}), is a constant fraction of its maximum
possible value. Further, the MID measure performs well in quantifying non-classicality in the scenario of locking classical correlations in quantum states.
Further studies in this direction
are required before a comprehensive conclusion can be reached.

\paragraph{Organization of chapter.} We begin in Section~\ref{4_scn:lc} by defining LNU. Section~\ref{S:lcdqc1} studies LNU in the DQC1 model. Section~\ref{S:discordandlc} shows a one-way relationship between LNU and the quantum discord. In Section~\ref{S:middqc}, we define the MID measure, and use it to study the DQC1 model in Section~\ref{4_sscn:MIDDQC1}. In Section~\ref{S:locking}, we study both MID and LNU in the context of locking.

\paragraph{Notation.} Throughout this chapter, for $\dens$ we denote the dimensions of $\A$ and $\B$ as $M$ and $N$, respectively. All designations of a density matrix without any subscripts refers to a bipartite state. For example, $\rho$ stands for $\rho_{AB}$.

\section{Locally noneffective unitary (LNU) operations }
\label{4_scn:lc}

We begin by introducing locally noneffective unitary operations
(LNU), first proposed under the name local \emph{cyclic}
operations~\cite{f06}. For this, consider a bipartite quantum
state $\ro\in\dens$, shared between $A$ and $B$ such that
$\rho_A=\tr_B(\ro)$ and $\rho_B=\tr_A(\ro)$. Suppose now that
Alice performs a local unitary $U_A$ that does not change her
subsystem, that is, $\rho_A = U_A \rho_A U^{\dag}_A$, or
equivalently
 \be
 \label{4_eqn:lccond}
 [\rho_A,U_A] =0.
 \ee
This action can, however, affect the state of the total system,
such that if we define $\rof:=(U_A \otimes {I}_B) \ro(U_A
\otimes {I}_B)^{\dag}$, it is possible that $\ro\neq\rof$.
Unitaries satisfying Equation~(\ref{4_eqn:lccond}) are called
LNU~\cite{f06}. To quantify the difference between $\ro$ and
$\rof$, we use 
 \ben
  \label{lcmeasure}
 \fum := \max_{\scriptsize\begin{array}{c}
 U_A:\\
  $[$\rho_A$,$U_A$]=0$ \\
\end{array}}
  \frac{1}{\sqrt{2}}\fnorm{\ro-\rof}
  = \max_{\scriptsize\begin{array}{c}
 U_A:\\
  $[$\rho_A$,$U_A$]=0$ \\
\end{array}} \sqrt{\tr(\ro^2)-\tr(\ro\rof)},
 \een
where $\fnorm{A}=\sqrt{\tr(A^\dg A)}$ denotes the Frobenius norm.
From the latter expression, it is clear that $0\leq \fum \leq 1$.

For any product state $\rho_{prod}:=\rho_A\otimes\rho_B$,
$\fuma{\rho_{prod}}=0$. Closed form expressions for $\fum$ are known for (pseudo)pure states and Werner states~\cite{gkb08}. As with the quantum discord, it is possible to have
$\fuma{\rho_{sep}}>0$ for certain separable states, implying
$\fum$ is not a non-locality measure. Recall that a separable state
$\rho_{sep}\in\dens$ is defined as one of the form
\begin{equation}
    \rho_{sep}:=\sum_k p_k\ketbra{a_k}{a_k}\otimes\ketbra{b_k}{b_k},
\end{equation}
where $\sum_k p_k=1$, and the $\ket{a_k}\in\mathcal{A}$ and
$\ket{b_k}\in\mathcal{B}$ are vectors of Euclidean norm $1$. For
two-qubit separable states, the maximum LNU distance attainable
is~\cite{f06}
\begin{equation}\label{4_eqn:ccbound}
\fuma{\rho_{sep}}\leq \frac{1}{\sqrt{2}}.
\end{equation}

As an illustration, the maximum LNU distance for the two-qubit
isotropic state,
 \be
 \ro_{iso} = \frac{1-z}{4}I_4 +
 z\proj{\Psi},\;\;\;\;\;z\in [0,1]
  \ee
where $\ket{\Psi}=(\ket{00}+\ket{11})/\sqrt{2}$, is given by $\fuma{\ro_{iso}} = z$~\cite{gkb08}. By Equation~(\ref{4_eqn:ccbound}), we can conclude that the two-qubit
isotropic state is entangled for $z >1/\sqrt{2}$. The partial
transpose test, which in this case is necessary and sufficient,
shows that this state is actually entangled for all $z>1/3$,
showing that the LNU distance is weaker at detecting
entangled states than the former.

We remark that we have restricted our attention here to the case
where the LNU is applied to subsystem $A$ of $\ro$. Let us derive
a simple upper bound on $\fum$ which holds regardless
of which target subsystem we choose, and which proves useful throughout this chapter.

\begin{theorem}\label{4_thm:fuBound}
    For any $\ro\in\dens$,
    \be
        \fum\leq\sqrt{2\left(\tr(\rho^2)-\frac{1}{MN}\right)}.
    \ee
\end{theorem}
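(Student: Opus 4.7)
My plan is to prove the bound by a simple triangle-inequality argument using the maximally mixed state $I_{\A\otimes\B}/(MN)$ as a pivot, together with unitary invariance of the Frobenius norm. The key observation is that although the conjugation in $\rof = (U_A\otimes I_B)\rho(U_A\otimes I_B)^\dagger$ is only a \emph{partial} unitary action on $\rho$, it is a \emph{full} unitary action on the maximally mixed state (which is trivially fixed by it). This gives a convenient pivot that does not depend on the particular LNU $U_A$ chosen.

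Concretely, first I would use the triangle inequality for the Frobenius norm to write
\begin{equation}
\fnorm{\rho - \rof} \;\leq\; \fnorm{\rho - I/(MN)} + \fnorm{\rof - I/(MN)}.
\end{equation}
Then I would argue that the two terms on the right are equal: since $(U_A\otimes I_B)$ is unitary and commutes with $I/(MN)$, we have
\begin{equation}
\fnorm{\rof - I/(MN)} = \fnorm{(U_A\otimes I_B)(\rho - I/(MN))(U_A\otimes I_B)^\dagger} = \fnorm{\rho - I/(MN)},
\end{equation}
by unitary invariance of $\fnorm{\cdot}$. Hence $\fnorm{\rho - \rof} \leq 2\fnorm{\rho - I/(MN)}$.

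Next I would compute $\fnorm{\rho - I/(MN)}^2$ directly by expansion:
\begin{equation}
\fnorm{\rho - I/(MN)}^2 = \tr(\rho^2) - \frac{2}{MN}\tr(\rho) + \frac{1}{MN} = \tr(\rho^2) - \frac{1}{MN}.
\end{equation}
Combining this with the previous bound and the definition $\fum = \max\frac{1}{\sqrt{2}}\fnorm{\rho-\rof}$ (where the maximum is over LNUs $U_A$), I get
\begin{equation}
\fum \;\leq\; \frac{1}{\sqrt{2}}\cdot 2\sqrt{\tr(\rho^2) - \tfrac{1}{MN}} \;=\; \sqrt{2\left(\tr(\rho^2) - \tfrac{1}{MN}\right)},
\end{equation}
as claimed. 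Note that this bound actually holds for \emph{arbitrary} local unitaries $U_A$, not just LNUs; the LNU condition $[\rho_A,U_A]=0$ is not needed in this argument, and so the bound applies to the stronger maximization as well. There is no real obstacle here -- the proof is essentially a one-line calculation once one spots the right pivot.
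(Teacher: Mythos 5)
Your proof is correct and follows essentially the same route as the paper: a triangle inequality pivoting on the maximally mixed state $I/(MN)$, unitary invariance of the Frobenius norm, and the direct expansion $\fnorm{\rho - I/(MN)}^2 = \tr(\rho^2) - \tfrac{1}{MN}$. Your added remark that the LNU condition is never used (so the bound holds for arbitrary local unitaries) is a valid observation consistent with the paper's own comment that the bound holds regardless of which subsystem is targeted.
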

\begin{proof}
   Since $\fnorm{\rho-\frac{I}{MN}}$ is invariant under
unitary operations, we have via the triangle inequality that:
    \ben
        \fnorm{\rho-\rho_f}\leq \fnorm{\rho-\frac{I}{MN}} + \fnorm{\frac{I}{MN}-\rho_f}
        =2\fnorm{\rho-\frac{I}{MN}}  \nonumber
        =2\sqrt{\tr(\rho^2)-\frac{1}{MN}}
    \een
    Substituting this expression in Equation~(\ref{lcmeasure}) gives the desired result.
\end{proof}

Thus, if the purity of a state $\ro$ strictly decreases as a
function of the dimension, then $\fum\rightarrow0$ as
$MN\rightarrow\infty$.

\section{LNU in the DQC1 model}
\label{S:lcdqc1}

        We now study the non-classical features of the DQC1 model
of quantum computation, as quantified by $\fum$. The $n+1$ qubit
DQC1 state for given unitary $U_n\in\UU(\mathcal{B}^{\otimes n})$, as demonstrated in Figure~(\ref{4_fig:F:dqc1}), is given
by~\cite{datta05a}
\begin{figure}
\begin{center}
{\includegraphics[width=60mm,height=50mm,keepaspectratio,trim = 0mm 40mm 0mm 50mm, clip]{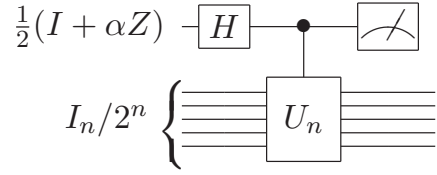}}
\caption[The `power of one qubit' model]{The DQC1 circuit} \label{4_fig:F:dqc1}
\end{center}
\end{figure}
 \be
 \label{4_eqn:E:dqc1}
\dqc =\frac{1}{2^{n+1}}\left(
\begin{array}{cc}
  I_n & \alpha U^{\dg}_n \\
  \alpha U_n & I_n \\
\end{array}%
\right).
 \ee
We will consider the top qubit to be system $A$ on which our local
unitary acts and the remaining $n$ qubits as system $B$. The
reduced state is then
  \be
\rho_A = \tr_B(\dqc) = \frac{1}{2}\left(
\begin{array}{cc}
  1 & \alpha \tau^* \\
  \alpha \tau & 1 \\
\end{array}%
\right)
  \ee
with $\tau = \tr(U_n)/2^n$. 
For an arbitrary $\mathrm{SU}(2)$
unitary $U_A$ acting on $A$
, which we characterize as
\be
 \label{su2}
U_A=\left(
\begin{array}{cc}
  e^{i\phi}\cos\theta & e^{i\chi}\sin\theta \\
  -e^{-i\chi}\sin\theta   & e^{-i\phi}\cos\theta \\
\end{array}%
\right),
 \ee
the LNU condition of
Equation~(\ref{4_eqn:lccond}) requires that $\chi =
\frac{\pi}{2}-\arg(\tau)$ and either $\phi =0$ or $\theta=\pi/2$.
Both cases lead to the same final expression, so set
$\phi=0$. Via Equation~(\ref{lcmeasure}) and simple algebra, we hence have
 \begin{equation}
 \fua{\dqc}{\theta} =
 \frac{\alpha\sin\theta}{2^{(n+1)/2}}\sqrt{1-\frac{\mathrm{Re}(\tr(e^{-2i\arg\tau}U^2_n))}{2^n}}.
 \end{equation}
The now trivial maximization over all $\theta$ gives
 \ben
 \label{4_eqn:dqclc}
 \fumdqc = \frac{\alpha}{2^{(n+1)/2}}\sqrt{1-\frac{\mathrm{Re}(\tr(e^{-2i\arg\tau}U^2_n))}{2^n}}
            \leq  \frac{\alpha}{2^{n/2}}.
 \een
Here, we have used the rough estimate
$\mathrm{Re}(\tr(e^{2i\arg\tau}U^2_n))\geq -2^n$. For a two-qubit
pure state ($n=1, \alpha=1$), we thus have $\fumdqc \leq
1/\sqrt{2}$, which conforms with Equation~(\ref{4_eqn:ccbound}).
A typical instance
of the DQC1 circuit is provided by that of a random unitary $U_n$
in the DQC1 circuit of Figure~(\ref{4_fig:F:dqc1}). For such instances of
large enough Haar distributed unitaries, $\tr(U_n^2)$ is bounded
above by a constant with high probability ~\cite{diaconis03}.
Thus, the second term inside the square root in Equation~(\ref{4_eqn:dqclc})
is approximately zero, and
 \be
 \fumdqc \approx \frac{\alpha}{2^{(n+1)/2}}.
 \ee

This shows that the DQC1 state experiences very little disturbance
under LNU, and in fact this disturbance vanishes asymptotically as
$n$ grows. As discussed in the introduction, it would appear that
the quantum discord is better suited~\cite{datta08a} to
quantifying non-classicality in the DQC1 model. This, however,
raises the question of how the discord and LNU distance are
related, and whether the paradigms of `disturbance under
measurement' and `disturbance under unitary operations' lead to
differing notions of non-classicality. We explore these questions
in the following section.

Before closing, for completeness, we invoke Theorem~(\ref{4_thm:fuBound}) to show that the LNU distance is exponentially decreasing for \emph{any} other choice of bi-partitions $A$ and $B$ of the qubits in $\dqc$. In fact, since
 \be\label{4_eqn:dqcMixedness}
    \tr(\dqc^2)=\frac{1+\alpha^2}{2^{n+1}},
 \ee
Theorem~(\ref{4_thm:fuBound}) immediately gives the same upper bound
of Equation~(\ref{4_eqn:dqclc}). 

\section{Quantum discord \lowercase{\textit{vs}} LNU distance}
\label{S:discordandlc}

Motivated by the fact that both the quantum discord and the LNU
distance are aimed at capturing the non-classical features in a
quantum state via an induced disturbance, we seek an answer to the
question of whether one implies the other in any sense or not.
Here, we show that non-zero quantum discord implies a
non-zero LNU distance, but that the converse is not necessarily
true. We begin by recalling the definition of quantum discord.


Given a quantum state $\rho\in\dens$, recall from Section~\ref{0_sscn:nonclassical} that the quantum discord~\cite{ollivier01a} is defined as
 \ben \label{4_eqn:discord_def}
    \discm :=
           S(\rho_A)-S(\rho)+\min_{\set{\Pi_j^A}}S\left(\rho_{B|\set{\Pi_j^A}}\right)
 \een
for $\set{\Pi_j^A}$ a rank-one projective measurement, and for
 \begin{equation}
    S\left(\rho_{B|\set{\Pi_j^A}}\right) := \sum_jp_jS\left((\Pi_j^A\otimes I^B)\rho(\Pi_j^A\otimes I^B)\Big/p_j\right),
 \end{equation}
where $p_j = \tr(\Pi_j^A\otimes I^B\rho)$. Intuitively, quantum discord captures purely quantum correlations in a quantum state. This is distinct from entanglement in the case
of mixed states. For pure states, quantum discord reduces to the
von Neumann entropy of the reduced density matrix, which is a
measure of entanglement. On the other hand, it is possible for
mixed separable states to have non-zero quantum discord. The main
theorem concerning the discord that we require here is the
following.
\begin{theorem}[Ollivier and Zurek~\cite{ollivier01a}]\label{4_thm:olzu}
        For $\rho\in\dens$, $\discm=0$ if and only if $\rho = \sum_j
(\Pi_j^A\otimes I^B)\rho(\Pi_j^A\otimes I^B)$, for some complete
set of rank one projectors $\set{\Pi_j^A}$.
\end{theorem}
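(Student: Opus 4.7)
The proof treats the two directions separately. The ($\Leftarrow$) direction is a direct computation. Suppose $\rho = \sum_j (\Pi_j^A \otimes I^B)\rho(\Pi_j^A \otimes I^B)$ for a complete set of rank-one projectors $\Pi_j^A = \ketbra{a_j}{a_j}$. Expanding, one has $\rho = \sum_j p_j \,\ketbra{a_j}{a_j}\otimes \rho_j^B$ with $p_j = \tr((\Pi_j^A\otimes I)\rho)$ and $\rho_j^B = \bra{a_j}\rho\ket{a_j}/p_j$. Standard facts about classical-quantum states yield $S(\rho_A) = H(\{p_j\})$ and $S(\rho) = H(\{p_j\}) + \sum_j p_j S(\rho_j^B)$, and measurement of $A$ in $\{\Pi_j^A\}$ gives $S(\rho_{B|\{\Pi_j^A\}}) = \sum_j p_j S(\rho_j^B)$. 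Substituting into Equation~(\ref{4_eqn:discord_def}), all terms cancel to produce $0$ for this particular measurement, and since $\discm$ is the minimum over measurements and is known to satisfy $\discm\geq 0$, one concludes $\discm = 0$.

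For the ($\Rightarrow$) direction, the key step I would carry out is the identity
\begin{equation}
    S(\rho_A) - S(\rho) + \sum_j p_j S(\rho_j^B) \;=\; S(\rho\|\rho') - S(\rho_A\|\rho'_A),
\end{equation}
valid for any rank-one PVM $\{\Pi_j^A\}$, where $\rho' := \sum_j (\Pi_j^A\otimes I)\rho(\Pi_j^A\otimes I)$, $\rho'_A = \tr_B(\rho')$, and $S(\sigma\|\tau) = \tr(\sigma\log\sigma) - \tr(\sigma\log\tau)$ is the quantum relative entropy. This follows from the block structure $\log \rho' = (\log\rho'_A)\otimes I_B + \sum_j \ketbra{a_j}{a_j}\otimes \log\rho_j^B$ and direct substitution into $S(\rho\|\rho') = -S(\rho) - \tr(\rho \log \rho')$, using that $\log \rho'$ commutes with the pinching. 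Monotonicity of relative entropy under the partial trace $\tr_B$ then gives $S(\rho\|\rho')\geq S(\rho_A\|\rho'_A)$, which simultaneously recovers non-negativity of discord and characterizes its vanishing as an equality condition in a data-processing inequality.

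Assume now $\discm = 0$, and let $\{\Pi_j^A\}$ be an optimizing PVM. The identity forces $S(\rho\|\rho') = S(\rho_A\|\rho'_A)$. Invoking the equality condition in monotonicity of relative entropy (Petz's theorem), there is a recovery map for the partial trace on $B$, with reference state $\rho'$, sending $\rho_A\mapsto \rho$. Computing the Petz recovery map explicitly yields the constraint $(\bra{a_j}\otimes I)\rho(\ket{a_m}\otimes I) = \alpha_{jm}(\rho_j^B)^{1/2}(\rho_m^B)^{1/2}$, where $\alpha_{jm} = \bra{a_j}\rho_A\ket{a_m}$. If the optimizing basis already diagonalizes $\rho_A$, then $\alpha_{jm} = 0$ for $j\neq m$ and hence $\rho = \rho'$ is of the claimed classical-quantum form in this basis. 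Otherwise, I would pass to the eigenbasis of $\rho_A$ and verify using the recovery constraint that off-diagonal blocks of $\rho$ across distinct eigenvalues of $\rho_A$ vanish, which reduces the problem to each eigenspace separately and produces the required rank-one PVM.

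The hardest part is this last step when $\rho_A$ has a degenerate spectrum: the eigenbasis is then non-unique, and not every eigenbasis choice is compatible with the CQ form. For instance $\tfrac{1}{2}\ketbra{0}{0}\otimes \rho_0 + \tfrac{1}{2}\ketbra{1}{1}\otimes \rho_1$ with $\rho_0 \neq \rho_1$ is classical-quantum only in the basis $\{\ket{0},\ket{1}\}$, even though its reduced state $I/2$ is diagonal in every basis. Within each degenerate eigenspace of $\rho_A$, the correct refinement of the basis must be read off from $\rho$ itself, and confirming that this is always possible requires careful bookkeeping with the Petz recovery constraints restricted to each eigenspace.
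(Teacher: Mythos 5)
First, a framing point: the paper does not prove this statement at all --- Theorem~\ref{4_thm:olzu} is imported from Ollivier and Zurek and used as a black box --- so there is no in-text proof to compare yours against; your attempt has to stand on its own. Your ($\Leftarrow$) direction is correct. In the ($\Rightarrow$) direction, your identity $S(\rho_A)-S(\rho)+\sum_j p_j S(\rho_j^B) = S(\rho\|\rho')-S(\rho_A\|\rho'_A)$ checks out (both relative entropies collapse to entropy differences because $\tr(\rho\log\rho')=\tr(\rho'\log\rho')$ and likewise on $A$), and the appeal to Petz's equality condition for the partial trace is legitimate and yields exactly the block constraint $(\bra{a_j}\otimes I)\rho(\ket{a_m}\otimes I)=\alpha_{jm}(\rho_j^B)^{1/2}(\rho_m^B)^{1/2}$ that you state. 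The genuine gap is that you stop at precisely the point you yourself flag as the hardest: you never show that this constraint forces $\rho$ to be invariant under \emph{some} complete rank-one PVM when the optimizing basis fails to diagonalize $\rho_A$, or when $\rho_A$ is degenerate. As written, the conclusion of the theorem is asserted rather than derived, so the proof is incomplete.

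The missing step can be closed cleanly, and without ever passing to the eigenbasis of $\rho_A$. Apply $\tr_B$ to the $(j,m)$ block of the Petz constraint: since $\tr_B\left[(\bra{a_j}\otimes I)\rho(\ket{a_m}\otimes I)\right]=\bra{a_j}\rho_A\ket{a_m}=\alpha_{jm}$, you obtain $\alpha_{jm}=\alpha_{jm}\,\tr\left[(\rho_j^B)^{1/2}(\rho_m^B)^{1/2}\right]$. By Cauchy--Schwarz for the Hilbert--Schmidt inner product, $\tr\left[(\rho_j^B)^{1/2}(\rho_m^B)^{1/2}\right]\leq\sqrt{\tr(\rho_j^B)\tr(\rho_m^B)}=1$, with equality if and only if $\rho_j^B=\rho_m^B$. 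Hence for every pair $j\neq m$ either $\alpha_{jm}=0$, in which case the entire $(j,m)$ block of $\rho$ vanishes, or the two conditional states coincide. Partitioning the indices into groups $C$ sharing a common conditional state $\sigma_C$, the constraint gives $\rho=\sum_C A_C\otimes\sigma_C$ where the positive semidefinite operators $A_C=\sum_{j,m\in C}\alpha_{jm}\ketbra{a_j}{a_m}$ have mutually orthogonal supports in $\spa{A}$; diagonalizing each $A_C$ within its support and completing to an orthonormal basis of $\spa{A}$ produces the complete set of rank-one projectors under whose pinching $\rho$ is manifestly invariant. With this paragraph added, your argument is a complete and self-contained proof of the theorem.
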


We now show the following.
\begin{theorem}\label{4_thm:1}
    For $\rho\in\dens$, if $\discm>0$, then $\fum>0$.
\end{theorem}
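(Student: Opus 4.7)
The plan is to prove the contrapositive: if $\fum = 0$, then $\discm = 0$, and then to invoke Theorem~\ref{4_thm:olzu}. Assume $\fum = 0$, so that $(U_A \otimes I_B)\rho(U_A^\dagger \otimes I_B) = \rho$ for every unitary $U_A$ satisfying $[U_A,\rho_A]=0$. Write the spectral decomposition $\rho_A = \sum_j \lambda_j \Pi_j^A$ with \emph{distinct} eigenvalues $\lambda_j$; the unitaries commuting with $\rho_A$ are precisely those of the form $U_A = \sum_j U_j$, where $U_j$ is an arbitrary unitary supported on $\mathrm{range}(\Pi_j^A)$.

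Next, I would extract structure on $\rho$ in two steps. First, restricting attention to phase unitaries $U_A = \sum_j e^{i\theta_j}\Pi_j^A$, invariance of $\rho$ forces the off-diagonal blocks $\Pi_j^A \rho \Pi_k^A$ for $j \neq k$ to pick up an arbitrary phase $e^{i(\theta_j-\theta_k)}$ and hence to vanish. Therefore $\rho = \sum_j \Pi_j^A \rho \Pi_j^A$ is block-diagonal with respect to the eigenspaces of $\rho_A$. Second, within the $j$-th block, the full non-abelian freedom to pick $U_j$ says that $\Pi_j^A \rho \Pi_j^A$ is invariant under $U_j \otimes I_B$ for every unitary $U_j$ on $\mathrm{range}(\Pi_j^A)$; by a standard Schur-type argument (any operator on $\mathcal{X}\otimes\mathcal{Y}$ that commutes with $U\otimes I_\mathcal{Y}$ for all $U\in\UU(\mathcal{X})$ is of the form $I_\mathcal{X}\otimes M$), this yields $\Pi_j^A \rho \Pi_j^A = \Pi_j^A \otimes \tau_j$ for some operator $\tau_j$ on $B$.

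Finally, I would refine each spectral projector into rank-one projectors by fixing an arbitrary orthonormal basis of $\mathrm{range}(\Pi_j^A)$, writing $\Pi_j^A = \sum_m \ketbra{\phi_{j,m}}{\phi_{j,m}}$. The combined family $\{\ketbra{\phi_{j,m}}{\phi_{j,m}}\}_{j,m}$ is a complete rank-one projective measurement on $A$, and
\[
\rho \;=\; \sum_{j,m} \ketbra{\phi_{j,m}}{\phi_{j,m}} \otimes \tau_j \;=\; \sum_{j,m} (\ketbra{\phi_{j,m}}{\phi_{j,m}} \otimes I_B)\,\rho\,(\ketbra{\phi_{j,m}}{\phi_{j,m}} \otimes I_B).
\]
By Theorem~\ref{4_thm:olzu}, this implies $\discm = 0$, completing the contrapositive. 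The main obstacle is handling possible degeneracy of $\rho_A$ cleanly: the phase-unitary argument alone only kills cross-blocks between distinct eigenspaces, and one must then run a separate non-abelian (Schur) argument within each degenerate eigenspace before refining into rank-one projectors to match the hypothesis of Theorem~\ref{4_thm:olzu}.
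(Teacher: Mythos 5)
Your proof is correct, and it takes a genuinely different route from the paper's. You argue the contrapositive: assuming $\fum=0$, you first use the abelian (phase) unitaries $\sum_j e^{i\theta_j}\Pi_j^A$ to kill the cross-blocks between distinct eigenspaces of $\rho_A$, then use the full non-abelian unitary freedom inside each degenerate eigenspace together with a commutant (Schur-type) argument to force $\Pi_j^A\rho\Pi_j^A=\Pi_j^A\otimes\tau_j$, and finally refine to rank-one projectors and invoke Theorem~\ref{4_thm:olzu}. The paper instead proves the implication directly: writing $\rho$ in the Fano form of Equation~(\ref{4_eqn:fano}) with the generalized Gell-Mann basis $\set{U_{pq},V_{pq},W_r}$ built on a fixed eigenbasis of $\rho_A$, it observes that $\discm>0$ forces some correlation-matrix entry $T_{st}$ with $\sigma_s^A\in\set{U_{pq},V_{pq}}$ to be non-zero, and then exhibits an explicit diagonal phase unitary with $\theta_p\neq\theta_q$ that rotates the $(U_{pq},V_{pq})$ pair so that $T^f\neq T$, hence $\fum>0$. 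The paper's argument is constructive (it produces a witnessing LNU) and never needs the non-abelian part of the commutant of $\rho_A$; your argument avoids the explicit Gell-Mann bookkeeping, makes the degenerate-spectrum case transparent, and actually yields strictly more: a complete structural characterization of the LNU-invariant states as $\rho=\sum_j\Pi_j^A\otimes\tau_j$, from which zero discord follows for \emph{any} rank-one refinement of the spectral projectors. Both handle degeneracy correctly, yours via the Schur step and the paper's by fixing a canonical eigenbasis up front.
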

\begin{proof}
    We begin by writing $\ro$ in Fano form~\cite{F83}, i.e.\
    \ben\label{4_eqn:fano}
        \ro = \frac{1}{MN}(I^A\otimes I^B + \bm{r}^A\cdot\bm{\sigma}^A\otimes{I^B}+
                    I^A\otimes\bm{r}^B\cdot\bm{\sigma}^B+\sum_{s=1}^{M^2-1}\sum_{t=1}^{N^2-1}T_{st}\sigma^A_s\otimes\sigma^B_t).
    \een
Here, $\bm{\sigma}^A$ denotes a $(M^2-1)$-component vector of
traceless orthogonal Hermitian basis elements (which
generalize the Pauli spin operators), $\bm{r}^A$ is the
$(M^2-1)$-dimensional Bloch vector for subsystem $A$ with
$r^A_s=\frac{M}{2}\tr(\rho_A\sigma^A_s)$, and $T$ is a real matrix
known as the correlation matrix with entries
$T_{st}=\frac{MN}{4}\tr(\sigma^A_s\otimes\sigma^B_t\ro)$. The
definitions for subsystem $B$ are analogous.

An explicit construction for the basis elements $\sigma_i$
for $M\geq 2$ is given as follows~\cite{he81}. Define
$\set{\sigma_i}_{i=1}^{M^2-1}= \set{U_{pq},V_{pq},W_{r}}$, such
that for $1\leq p<q\leq M$ and $1\leq r \leq M-1$, and
$\set{\ket{k}}_{k=1}^{M}$ some complete orthonormal basis for
$\A$:
    \begin{eqnarray}
        U_{pq}&=&\ket{p}\bra{q}+\ket{q}\bra{p}\label{4_eqn:Ugenerators}\\
        V_{pq}&=&-i\ket{p}\bra{q}+i\ket{q}\bra{p}\label{4_eqn:Vgenerators}\\
        W_{r} &=&\sqrt{\frac{2}{r(r+1)}}\left(\sum_{k=1}^{r}\ket{k}\bra{k}-r\ket{r+1}\bra{r+1}\right)\label{4_eqn:Wgenerators}.
    \end{eqnarray}
In our ensuing discussion, without loss of generality, we fix the choice of basis $\set{\ket{k}}_{k=1}^{M}$ above as the
eigenbasis of $\rho_A$. (Note that the set of orthonormal eigenvectors of
$\rho_A$ will not be unique if the eigenvalues of $\rho_A$ are
degenerate. Hence, we fix some choice of eigenbasis for $\rho_A$
as the ``canonical'' choice to be referred to throughout the rest
of our discussion.)

Assume now that $\discm>0$. Then, any choice of complete
measurement $\pmsmt$ must disturb $\ro$, i.e.\ by
Theorem~\ref{4_thm:olzu}, if we define \be
    \rof:= \sum_{j=1}^M(\Pi_j^A\otimes I)\ro(\Pi_j^A\otimes I),
\ee then $\rof\neq
\ro$.
Henceforth, when we discuss the action of $\pmsmt$ on $\rho_A$, we
are referring to the state $\sum_{j=1}^M\Pi_j^A\rho_A\Pi_j^A$.
Now, let $\pmsmt$ be a complete projective measurement onto the
eigenbasis of $\rho_A$. Then, $\pmsmt$ acts invariantly on
$\rho_A$, and thus must alter the last term in Equation
(\ref{4_eqn:fano}) to ensure $\rof\neq\ro$. To see this, recall that
one can write $\rho_A=\frac{1}{M}(I^A +
\bm{r}^A\cdot\bm{\sigma}^A)$, from which it follows that if
$\pmsmt$ acts invariantly on $\rho_A$, then it also acts
invariantly on $\bm{r}^A\cdot\bm{\sigma}^A$ from
Equation~(\ref{4_eqn:fano}). Since all basis elements
$\sigma^A_s\in\set{W_r}_r$ are diagonal, it follows that there
must exist some $T_{st}\neq 0$ such that
$\sigma^A_i\in\set{U_{pq},V_{pq}}_{pq}$. We now use this fact to
construct a LNU $U^A$ achieving $\fu>0$.

Define unitary $U^A$ as diagonal in the eigenbasis of $\rho_A$,
i.e.\ $U^A = \sum_{k=1}^{M}e^{i\theta_k}\ket{k}\bra{k}$, with
eigenvalues to be chosen as needed. Then, $[U^A,\rho_A]=0$ by
construction, and so $U^A\otimes I^B$ must alter $T$ through its
action on $\ro$ to ensure $\rof\neq \ro$. Focusing on the last
term from Equation~(\ref{4_eqn:fano}), we thus have:
    \ben\label{4_eqn:TshiftSetup}
\sum_{s=1}^{M^2-1}\sum^{N^2-1}_{t=1}T_{st}U^A\sigma^A_s{U^A}^\dg\otimes\sigma^B_t = \sum_{s=1}^{M^2-1}\sum^{N^2-1}_{t=1}T_{st} \Bigg(\sum_{m=1}^{M}\sum_{n=1}^{M}e^{i(\theta_m-\theta_n)}
        \bra{m}\sigma^A_s\ket{n}\ket{m}\bra{n}\Bigg)\otimes
        \sigma^B_t\nonumber
    \een
Analyzing each $\sigma^A_s$ case by case, we find, for
some $1\leq p<q\leq M$ or $1\leq r \leq M-1$:
    \ben
        \sum_{m=1}^{M}\sum_{n=1}^{M}e^{i(\theta_m-\theta_n)}\bra{m}\sigma_s\ket{n}\ket{m}\bra{n}=
        \begin{cases}
            \cos(\theta_p -\theta_q)U_{pq}-\sin(\theta_p -\theta_q)V_{pq}\text{\quad if $\sigma_s=U_{pq}$}\label{4_eqn:TshiftU}\\
            \sin(\theta_p -\theta_q)U_{pq}+\cos(\theta_p -\theta_q)V_{pq}\label{4_eqn:TshiftV}\text{\quad if $\sigma_s=V_{pq}$}\\
            W_r\hspace{53mm}\text{\quad if $\sigma_s=W_r$}\label{4_eqn:TshiftW}.
        \end{cases}
            \een
Denoting by $T^f$ the $T$ matrix for $\rof$, we have:
    \begin{eqnarray}
        T^f_{st}=
            \begin{cases}\label{4_eqn:Tfinal}
                \cos(\theta_p-\theta_q)T_{st}+\sin(\theta_p-\theta_q)T_{wt}\\
                    \hspace{28mm}\text{if $\sigma_s=U_{pq}$, where $\sigma_{w}=V_{pq}$}\\
                \cos(\theta_p-\theta_q)T_{st}-\sin(\theta_p-\theta_q)T_{wt}\\
                    \hspace{28mm}\text{if $\sigma_s=V_{pq}$, where $\sigma_{w}=U_{pq}$}\\
                T_{st}\hspace{23mm} \text{if $\sigma_s=W_r$}.\\
            \end{cases}
    \end{eqnarray}
Thus, if there exists an $s$ such that $T_{st}\neq0$ and
$\sigma^A_s\in\set{U_{pq},V_{pq}}_{pq}$, it follows that one can
easily choose appropriate eigenvalues $e^{i\theta_p}$ and
$e^{i\theta_q}$ for $U^A$ such that $T^f\neq T$, implying
$\fum>0$. By our argument above for $\discm>0$, such an $s$ does
in fact exist.
\end{proof}


To show that the converse of Theorem~\ref{4_thm:1} does not
hold, we present an example of a zero discord state that has
non-zero LNU measure. Consider the two qubit separable state
 \begin{equation}
\ro = \frac{1}{2}\left(\frac{{I}_2 +
\bm{a}.\bm{\sigma}}{2}\otimes\frac{{I}_2 + \bm{b}.\bm{\sigma}}{2}
+ \frac{{I}_2 - \bm{a}.\bm{\sigma}}{2}\otimes\frac{{I}_2 -
\bm{b}.\bm{\sigma}}{2} \right),
 \end{equation}
where $\enorm{\bm{a}}=\enorm{\bm{b}}=1$. This state, by
construction, has zero discord for a single qubit measurement on
either $A$ or $B$. To see this, consider the projective
measurements
    \begin{equation}
    \set{\frac{{I}_2 \pm
\bm{a}.\bm{\sigma}}{2}}
   \end{equation}
on $A$. Let us now study the LNU distance for this state, with the
local unitary being applied to say $A$. Notice that $\rho_A =
\rho_ B = {I}_2/2$, and $\tr(\ro^2)=1/2$. The former implies that any local unitary on $\A$ can be chosen, as characterized by Equation~(\ref{su2}). Let us for
convenience parameterize $\bm{a} = (0,0,1)$ and $\bm{b} =
(\sin\gamma \cos \delta, \sin\gamma\sin \delta, \cos\gamma)$.
Then, some algebra leads to
 \be
 \tr(\ro\rof)=\frac{1}{2}\cos^2\theta,
 \ee
whose minimum is 0, whereby
 \be
 \fum=\frac{1}{\sqrt{2}}.
 \ee

We thus have an example of a class of separable, zero discord
states which demonstrates a non-zero shift under LNU. In fact, it
attains the maximum shift possible for two-qubit separable states.
Hence, if one wishes to define notions of non-classicality in
quantum states in terms of `disturbance under measurement' versus
`disturbance under unitary operations', and one chooses discord
and the LNU distance as canonical quantifiers of such effects,
respectively, then the resulting respective notions of
non-classicality are not equivalent. As we have shown in
Theorem~\ref{4_thm:1}, however, the quantum discord is a stronger
notion of non-classicality than the LNU criterion. 

\section{Measuring correlations via measurement-induced disturbance}
\label{S:middqc}

        The measure we intend to use in this section was presented
by Luo in~\cite{Luo08}. It relies on the disturbance of a quantum
system under a generic measurement. In that sense, it is similar
in spirit to quantum discord, but not quite. In the case of
quantum discord, as per Equation~(\ref{4_eqn:discord_def}), one
maximizes over one-dimensional projective measurements on one of
the subsystems. For the measure used here, which we will call the
Measurement-Induced Disturbance (MID) measure, one performs
measurements on \emph{both} the subsystems, with the measurements
being given by projectors onto the eigenvectors of the reduced
subsystems. Then the MID measure of quantum correlations for a
quantum state $\rho\in\dens$ is given by~\cite{Luo08}
 \be
 \mathcal{M}(\ro) := \mathcal{I}(\ro) - \mathcal{I}(\mathcal{P}(\ro))
 \ee
where
 \be
 \label{4_eqn:E:midstate}
 \mathcal{P}(\ro):=\sum_{i=1}^M\sum_{j=1}^N(\Pi_i^A\otimes\Pi_j^B)
\ro(\Pi_i^A\otimes\Pi_j^B).
 \ee
Here $\{\Pi_i^A\},\{\Pi_j^B\}$ denote rank one projections onto
the eigenbases of $\rho_A$ and $\rho_B$, respectively, and
$\mathcal{I}(\sigma)$ is the quantum mutual information. The measurement induced by the local eigenvectors leaves the entropy of the reduced states invariant and
is, in a certain sense, the least disturbing. Actually, this
choice of measurement even leaves the reduced states
invariant~\cite{Luo08}. Interestingly, for pure states, both the quantum discord and the
MID measure reduce to the von Neumann entropy of the reduced
density matrix, which is a measure of bipartite entanglement. An advantage of the MID measure is that since no optimizations
are involved, it is much easier to calculate in
practice than the quantum discord or the LNU distance, which
involve optimizations over projective measurements and local
unitaries respectively. The corresponding disadvantage is that if the spectrum of either $\rho_A$ or $\rho_B$ is degenerate, there exist examples~\cite{WPM09} where the MID measure is not necessarily well-defined, as the choice of local eigenbases is no longer unique. In this case, the value of the MID measure should be interpreted moreso as a rough estimate or \emph{upper bound} on the non-classicality of a state. We remark that for this reason, it may be more reasonable to consider a quantity
\begin{equation}
    \mathcal{M}^*(\ro) := \mathcal{I}(\ro) - \max_{\{\Pi_i^A\},\{\Pi_j^B\}}\mathcal{I}(\mathcal{P}(\ro)),
\end{equation}
where $\{\Pi_i^A\},\{\Pi_j^B\}$ are again projections onto eigenbases of $\rho_A$ and $\rho_B$, respectively. (A quantity similar to $\mathcal{M}^*(\ro)$ was considered in~\cite{WPM09}, except the maximization there is over \emph{all} local POVMs. Also, note that it follows directly from the definition of $\mathcal{M}^*(\rho)$ that it is an upper bound on the \emph{distillable entanglement potential} of $\rho$ introduced in Chapter~\ref{chap:activation} (Equation~(\ref{5_eqn:DE})).) Computing $\mathcal{M^*(\ro)}$ is naturally much more difficult; we discuss $\mathcal{M^*(\ro)}$ in this section where appropriate in addition to our discussion of $\mathcal{M}(\ro)$.

\begin{figure}
\begin{center}
 \resizebox{9.5cm}{6cm}{\includegraphics{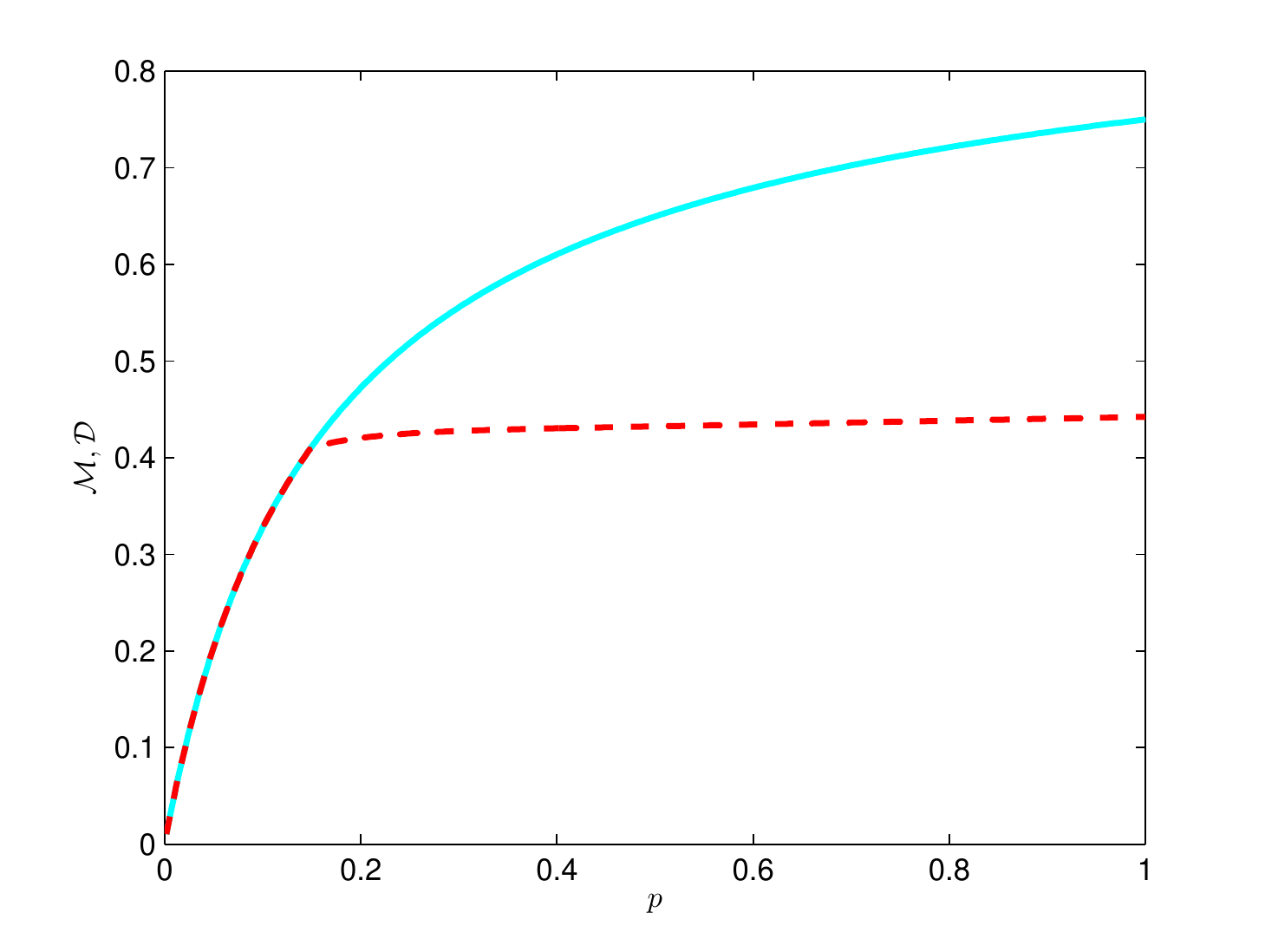}}
\caption{(Color online) The solid line is the MID measure
$\mathcal{M}$ for the $2\times 4$ Horodecki state from~\cite{H97_2}.
The dashed line is the quantum discord $\mathcal{D}$ for the same
state~\cite{dattathesis}. The kink in the latter curve occurs at
$p=1/7.$ We see here, as in the case of the DQC1 state, that the
MID measure is greater than or equal to the quantum discord.}
\label{Horod}
\end{center}
\end{figure}

                To demonstrate the MID measure on a non-trivial example, we first consider the well-known Horodecki
bound entangled state in $2\otimes 4$ dimensions~\cite{H97_2}. It is
bound entangled for all values of $0\leq p \leq 1$, and the state
is given as
 \be
\rho_H = \frac{1}{1+7p}\left(\!\!\!\!%
\begin{array}{cccccccc}
  \;p\;&\;0\; &\;0\;& \;0\; & \;0\; & \;p\; & \;0\; & \;0\; \\
  0 & p & 0  & 0 & 0 & 0 & p& 0 \\
  0 & 0 & p  & 0 & 0 & 0 & 0 & p \\
  0 & 0 & 0  & p & 0 & 0 & 0 & 0 \\
  0 & 0 & 0  & 0 & \frac{1+p}{2} & 0 & 0 & \frac{\sqrt{1-p^2}}{2} \\
  p & 0 & 0  & 0 & 0 & p & 0 & 0 \\
  0 & p & 0  & 0 & 0 & 0 & p & 0 \\
  0 & 0 & p  & 0 &\frac{\sqrt{1-p^2}}{2} & 0 & 0 & \frac{1+p}{2}  \\
\end{array}%
\!\!\!\!\right).
 \ee
The projectors onto the eigenvectors of the reduced density
matrices can be chosen as
 \ben
\{\Pi^A_1,\Pi^A_2\}&=&\left\{ \left(%
\begin{array}{cc}
  1 & 0 \\
  0 & 0 \\
\end{array}%
\right),
\left(%
\begin{array}{cc}
  0 & 0 \\
  0 & 1 \\
\end{array}%
\right)
\right\},\;\;\;\;\;\;\mbox{and}\\
\{\Pi^B_1,\cdots,\Pi^B_4\} &=&
\big\{\proj{\Psi^{+}},\proj{\Psi^{-}},
        \proj{\Phi^{+}},\proj{\Phi^{-}}
\big\}.
 \een
where $\ket{\Psi^{\pm}}=(\ket{1}\pm \ket{2})/\sqrt 2$ and
$\ket{\Phi^{\pm}}=(\ket{0}\pm \ket{3})/\sqrt 2,$ with
$\{\ket{0},\ket{1},\ket{2},\ket{3}\}$ forming the computational
basis for the second subsystem. Using these in Equation
(\ref{4_eqn:E:midstate}), we have
  \begin{equation}
\mathcal{P}(\rho_H)= \frac{1}{1+7p}\!\!\left(\!\!\!\!%
\begin{array}{cccccccc}
  \;p\;&\;0\; &\;0\;& \;0\; & \;0\; & \;0\; & \;0\; & \;0\; \\
  0 & p & 0  & 0 & 0 & 0 & 0& 0 \\
  0 & 0 & p  & 0 & 0 & 0 & 0 & 0 \\
  0 & 0 & 0  & p & 0 & 0 & 0 & 0 \\
  0 & 0 & 0  & 0 & \frac{1+p}{2} & 0 & 0 & \frac{\sqrt{1-p^2}}{2} \\
  0 & 0 & 0  & 0 & 0 & p & 0 & 0 \\
  0 & 0 & 0  & 0 & 0 & 0 & p & 0 \\
  0 & 0 & 0  & 0 &\frac{\sqrt{1-p^2}}{2} & 0 & 0 & \frac{1+p}{2}  \\
\end{array}
\!\!\!\!\right).
 \end{equation}
Note that this density matrix differs from $\rho_H$ in that
some of the off-diagonal terms $p$ have vanished. We have computed the MID
measure for $\rho_H$ as
$\mathcal{M}(\rho_H)=S(\mathcal{P}(\rho_H))-S(\rho_H)$ and
plotted it in Figure (\ref{Horod}). In the same figure, we also plot the
quantum discord for this state, when a measurement is made on the
two-dimensional subsystem~\cite{dattathesis}. As we see, there are non-classical
correlations in this state that are not distillable into maximally
entangled Bell pairs.

As a comparison, we remark that for $\rho_H$, $\mathcal{M}^*(\rho_H)$ behaves similarly to $\mathcal{M}(\rho_H)$. To see this, note that only $\rho_B$ has a degenerate eigenvalue, and this is on the space spanned by $\Pi^B_3$ and $\Pi^B_4$. Thus, in the minimization over local bases, one can more generally choose $\Pi^B_3$ and $\Pi^B_4$ to project onto an arbitrary basis for this space, $a \ket{0}+e^{i\theta}b\ket{3}$ and $b \ket{0}-e^{i\theta}a\ket{3}$ for $a,b,\theta\in\reals$, respectively. The eigenvalues of $\mathcal{P}(\rho_H)$ are then (up to normalization)
\begin{equation}
    \frac{1}{2}\left(p+1\pm\sqrt{1-p^2}\right), p, p, p\left(1\pm\abs{a}\abs{b}\sqrt{2(1-\cos(2\theta))}\right),    p\left(1\pm\abs{a}\abs{b}\sqrt{2(1-\cos(2\theta))}\right).
\end{equation}
In the expression $\mathcal{M}^*(\ro_H) = \min_{\{\Pi_i^A\},\{\Pi_j^B\}}S(\mathcal{P}(\ro_H))-S(\ro_H) $, the entropy $S(\mathcal{P}(\rho_H))$ is thus minimized by choosing $a=b=1/\sqrt{2}$ and $\theta=\pi/2$. A plot of the resulting value of $\mathcal{M}^*(\rho_H)$ is given in Figure~\ref{4_fig:Horod2}.
\begin{figure}
\begin{center}
{\includegraphics[width=70mm,trim = 0mm 40mm 0mm 60mm, clip]{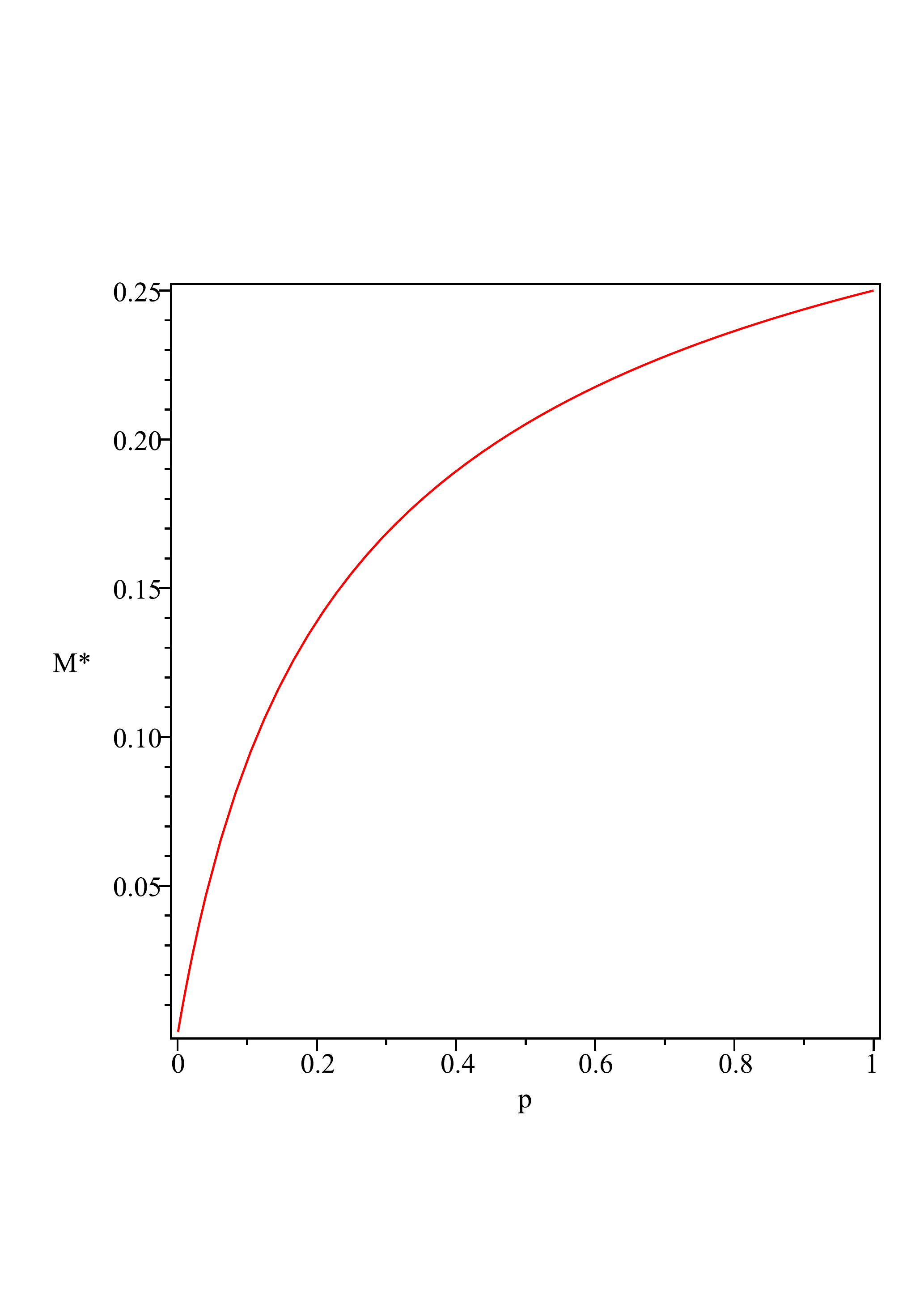}}
\caption{A plot of $\mathcal{M}^*$ for the $2\times 4$ Horodecki state from~\cite{H97_2}.}
\label{4_fig:Horod2}
\end{center}
\end{figure}

\subsection{MID measure in the DQC1 model}\label{4_sscn:MIDDQC1}

We now move on to calculate the MID measure in the DQC1 model. Our
analysis extends that of~\cite{Luo08}, where only the case of
$\alpha=1$ was considered. Considering $\alpha<1/2$ here will be
of particular interest, due to the lack of distillable
entanglement in the DQC1 state (in this regime, any bipartite split has a positive partial transpose). Consequently, we start with the
$(n+1)$-qubit DQC1 state, given by Equation~(\ref{4_eqn:E:dqc1}), wherefrom
 \be
 \rho_{A}= \frac{1}{2}\left(
\begin{array}{cc}
  1 & \alpha \tau^* \\
  \alpha \tau & 1 \\
\end{array}%
\right)\;\;\;\;\;\mbox{and}\;\;\;\; \rho_{B}= I_n/2^n,
   \ee
  where recall $\tau = \tr(U_n)/2^n$. The projectors onto $\rho_A$'s eigenvectors can be chosen as
    \begin{equation}
        \{\Pi_0^A,\Pi_1^A\}=\{\ketbra{\phi_0}{\phi_0},\ketbra{\phi_1}{\phi_1}\}
    \end{equation}
    for $\ket{\phi_0}:=(\ket{0}+e^{i\phi}\ket{1})/\sqrt{2}$ and $\ket{\phi_1}:=(\ket{0}-e^{i\phi}\ket{1})/\sqrt{2}$, respectively,
where $\tau=re^{i\phi}$ for $r=\abs{\tau}$. Similarly, set
 $\{\Pi_j^B\}=\{\ketbra{j}{j}\}$ for $\set{\ket{j}}_{j=1}^{2^n}$ the computational basis. Using this, we can calculate
 \ben
 \mathcal{P}(\dqc)&=&\sum_{j=1}^{2^n}\sum_{k=0}^1(\Pi_k^A\otimes\Pi_j^B)\dqc(\Pi_k^A\otimes\Pi_j^B) \\
                       &=&\frac{1}{2^{n+1}}\sum_{j=1}^{2^n}\sum_{k=0}^1\Pi_k^A\left(
                                                        \begin{array}{cc}
                                                        1 & \alpha \bra{j}U_n^\dagger\ket{j} \\
                                                      \alpha \bra{j}U_n\ket{j} & 1 \\
                                                        \end{array}%
                                                            \right)\Pi_k^A\otimes \ketbra{j}{j} .
 \een
Observing that
\begin{equation}
    \Pi_k^A\left(
              \begin{array}{cc}
              1 & \alpha \bra{j}U_n^\dagger\ket{j} \\
              \alpha \bra{j}U_n\ket{j} & 1 \\
              \end{array}%
              \right)\Pi_k^A=
    \left(1+(-1)^k\alpha\operatorname{Re}(\bra{j}U_n\ket{j}e^{-i\phi})\right)\ketbra{\phi_k}{\phi_k},
\end{equation}
we conclude that the spectrum of $\mathcal{P}(\dqc)$ is given by
 \be
{\bm \lambda}[\mathcal{P}(\dqc)]=\left\{\frac{1\pm \Delta_j}{2^{n+1}}\right\}
 \ee
for
\begin{equation}
    \Delta_j:=\alpha\operatorname{Re}\left(\bra{j}U_n\ket{j}e^{-i\phi}\right)
\end{equation}
and $j\in[2^n]$. Letting $\lambda_k$ denote the $k$th entry of ${\bm
\lambda}[\mathcal{P}(\dqc)]$, the von Neumann entropy of this
state is
 \ben
 S(\mathcal{P}(\dqc)) &=&-\sum_{k=1}^{2^{n+1}} \lambda_k\log(\lambda_k)\\
    &=& n+1 -\frac{1}{2^{n+1}}\sum_{j=1}^{2^n}\Bigg(\log(1-\Delta_j^2)
    +\Delta_j\log\left(\frac{1+\Delta_j}{1-\Delta_j}\right)\Bigg).
 \een
Now,
  \be
 S(\dqc)= n+H_2\left(\frac{1-\alpha}{2}\right),
  \ee
and since the entropies of the partial density matrices are invariant under the local measurements, we have
 \ben
 \label{mid}
\mathcal{M}_{DQC1} &=& \mathcal{I}(\dqc)-\mathcal{I}(\mathcal{P}(\dqc)) \\
            &=& S(\mathcal{P}(\dqc))-S(\dqc) \\
            &=& 1- H_2\left(\frac{1-\alpha}{2}\right) - \frac{1}{2^{n+1}}\sum_{j=1}^{2^n}\Bigg(\log(1-\Delta_j^2)
    +\Delta_j\log\left(\frac{1+\Delta_j}{1-\Delta_j}\right)\Bigg).\nonumber
 \een
For any unitary $U_n$, which is known in any implementation of the DQC1 circuit, the
above quantity can be computed easily. Bounding this quantity more generally, however, is difficult. If, however, in the asymptotic limit of large $n$, $|\Delta_j|\rightarrow 0$ (as might intuitively be expected when $U_n$ is a Haar distributed random
unitary matrix, since then we might expect $\abs{u_{jj}} \sim 1/2^{n/2}$), then the whole
quantity within the summation in Equation (\ref{mid}) goes to zero.
In this case,
 \be
 \label{4_eqn:E:midanal}
\mathcal{M}_{DQC1} \sim 1-H_2\left(\frac{1-\alpha}{2}\right).
 \ee
One fact immediately notable is that the above expression for the
MID measure is independent of $n$, for large $n$. The result for a
$n=5$ qubit Haar distributed random unitary matrix is shown in Figure
(\ref{measdisc}). As is evident, despite the approximations used
in the derivation of Equation (\ref{4_eqn:E:midanal}) the asymptotic
analytic expression matches the numerical result at $n=5$ quite
well. We remark that even for a version of $\mathcal{M}_{DQC1}$ where one minimizes over all local POVMs, the behavior one finds is quantitatively analogous to that of $\mathcal{M}_{DQC1}$ plotted in Figure~\ref{measdisc}~\cite{WPM09}.

\begin{figure}[t]
\begin{center}
{\includegraphics[width=80mm,keepaspectratio]{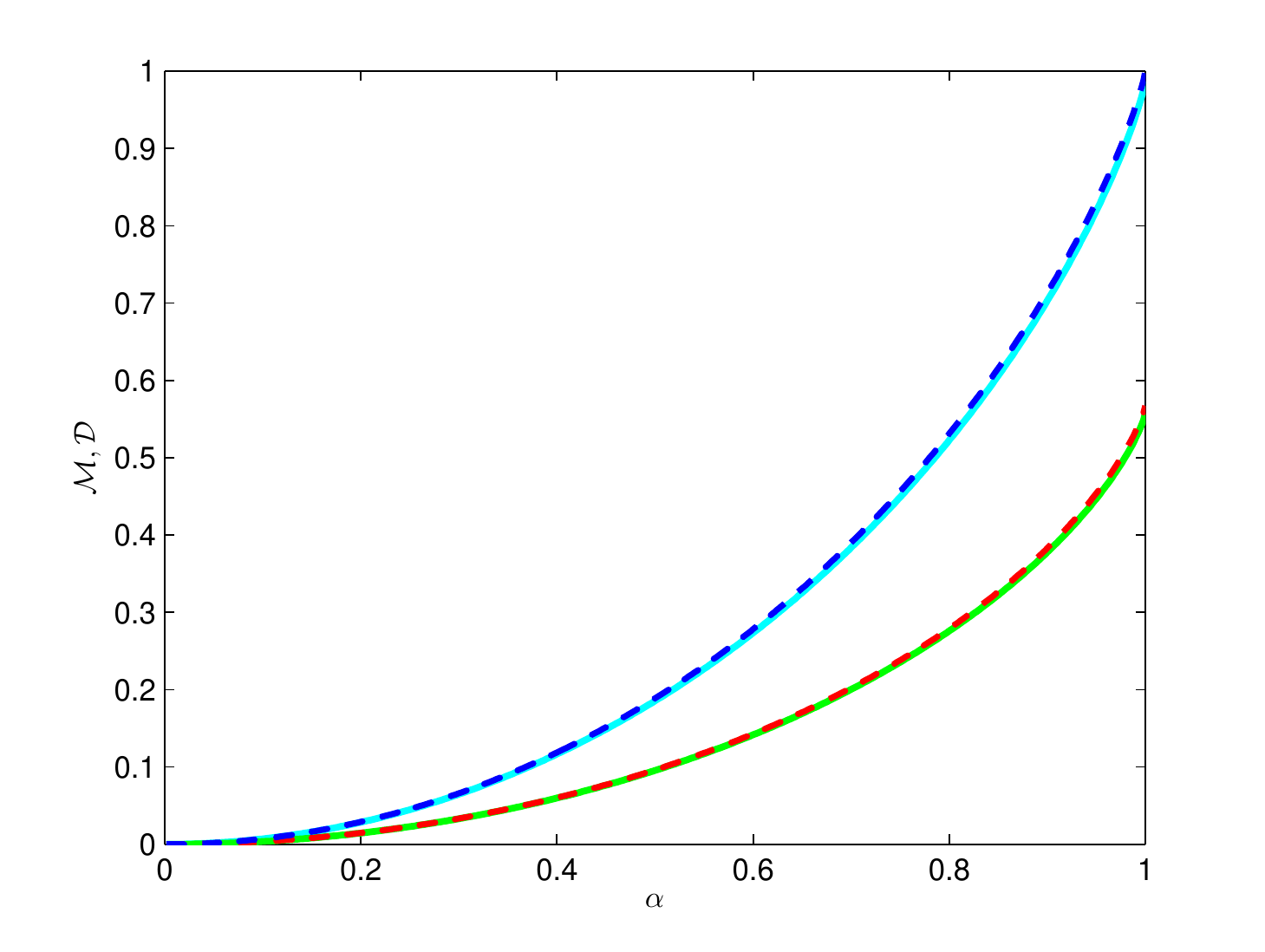}}
\caption{(Color online) The upper solid (cyan) line is the MID
measure $\mathcal{M}$ (Equation~(\ref{mid})) for the DQC1 circuit for
a $n=5$ qubit Haar distributed random unitary matrix. The upper
dashed (blue) line is the analytic expression for the MID measure
for certain DQC1 states from Equation (\ref{4_eqn:E:midanal}). The lower dashed (red) line shows the
discord $\mathcal{D}$ in the DQC1 circuit with the same unitary.
The lower solid (green) line shows the analytical expression of the
quantum discord from~\cite{datta08a}. All quantities are shown as
functions of the purity $\alpha$ of the control qubit. } \label{measdisc}
\end{center}
\end{figure}

    The MID measure for the DQC1 state across the bipartite split
separating the top qubit from the rest is non-zero for all
non-zero values of $\alpha$. Across this split, the DQC1
state is strictly separable~\cite{datta05a} and possesses no
entanglement. Hence, one might propose the MID measure as a
quantifier of the resource behind the quantum advantage in the
DQC1 model~\cite{Luo08}. Note that, as can be seen from Figure
(\ref{measdisc}), the behavior of the MID measure is qualitatively
quite similar to that of the quantum discord.

\subsection{Non-classical correlations in quantum communication}
\label{S:locking}

We now use the MID measure to study the locking of classical correlations in quantum states.
It has been shown~\cite{dhlst04} that there exist bipartite
quantum states which contain a large amount of locked classical
correlation which can be unlocked by a small amount of classical
communication. More precisely, there exist $(2n + 1)$-qubit states
for which the optimal classical mutual information between
measurement results on the subsystems can be increased from $n/2$
bits to $n$ bits via a single bit of classical communication.
Despite the impossibility of this feat classically, the states
used in the protocol are not entangled.

Here we use the MID measure to study this purely quantum
phenomenon. To do so, we evaluate the former on a generalization
of the state used in~\cite{dhlst04},
 \be
 \rho=\frac{1}{md}\sum_{k=1}^{d}\sum_{t=1}^{m}(\proj{k}\otimes\proj{t})_A\otimes(\proj{b_k^t})_B,
 \ee
where the set of $m$ orthonormal bases
$\set{\set{\ket{b_k^t}}_{k=1}^d}_{t=1}^{m}$ is mutually unbiased
(MUB), i.e.\ $\forall _{t\neq
t^{\prime},i,j}\abs{\braket{b_i^t}{b_j^{t^\prime}}}=1/\sqrt{d}$. As in
Reference~\cite{dhlst04}, when $d=2^n$ and $m=2$, the initial
correlations in this state amount to $n/2$ bits, and by Alice's
sending one bit (the bit $t$) to Bob, they end up with $n+1$
correlated bits. The state being separable, it has no
entanglement. Consequently, we cannot ascribe to entanglement the advantage
exhibited by this protocol.

To calculate the MID measure of this state, we need the reduced
states given by
\begin{equation}
\rho_A = \frac{I_{md}}{md},\;\;\;\;\rho_B =
\frac{I_d}{d}.
\end{equation}
  Choosing the local eigenvectors as the respective computational bases, we have that $\mathcal{P}(\rho)$
is simply the diagonal of $\rho.$ Thus,
 \be
{\bm
\lambda}[\mathcal{P}(\rho)]=\frac{1}{md}\big\{\!\underbrace{1,\cdots,1}_{d},\underbrace{1/d,\cdots,1/d}_{(m-1)d^2},\underbrace{0,0,\cdots,0}_{d(d-1)}\big\}
 \ee
whereby
 \be
S(\mathcal{P}(\rho)) = \log m +\left(2-\frac{1}{m}\right)\log d.
 \ee
The spectrum of $\rho$ is given by
\begin{equation}
{\bm\lambda}[\rho]=\frac{1}{md}\big\{\!\underbrace{1,1,\cdots,1}_{md},\underbrace{0,0,\cdots,0}_{md(d-1)}\big\}
\end{equation}
which leads to
 \be
S(\rho) = \log m+\log d.
 \ee
Finally, we have
 \be
\mathcal{M}(\rho) = S(\mathcal{P}(\rho)) - S(\rho) =
\left(1-\frac{1}{m}\right)\log d, \label{4_eqn:lockMID}
 \ee
which for $d=2^n$ and $m=2$ is the exactly equal to the gain
attained by this scheme. Moreover, once Bob receives Alice's bit,
the MID measure for their post-communication state drops to $0$,
the latter being diagonal in a local product basis. This suggests
the possibility that the MID measure quantifies the non-classical (yet
not entanglement-based) correlations in $\rho$ which were
initially locked. Moreover, we remark that for $d=2^n$ and $m=2$, we have $\mathcal{M}(\rho)=\mathcal{M}^*(\rho)$ --- this follows directly from the result~\cite{dhlst04} that the mutual information of any classical distribution induced via local measurements on $\rho$ is at most $(\log d)/2$.

A few remarks are in order. Equation~(\ref{4_eqn:lockMID}) might suggest that
a better locking effect may be possible for $m>2$. However, explicit
constructions to date using more than two MUBs have been unable to
achieve superior locking~\cite{BW07}, suggesting that the choice
of construction for the MUBs plays an important role. In contrast,
Equation~(\ref{4_eqn:lockMID}) holds irrespective of the specific choice
of MUBs. It is also known that if the bases above are constructed
using a large set of random unitaries chosen according to the Haar
measure, then the classical mutual information in $\rho$ between
Alice and Bob can be brought down to a
constant~\cite{HLSW04}. There is also numerical evidence (Appendix
of Reference~\cite{dhlst04}) that the dimension of the systems may
play a role in achieving better locking. Connections
between locking and non-classical correlations have since been discovered in References~\cite{WPM09,BACMW11}.

Finally, for completeness, we remark that
$\tr(\rho^2)=1/(md)$, and so by Theorem~\ref{4_thm:fuBound},
the LNU distance for $\rho$ is bounded by

\be
    \fum \leq \sqrt{\frac{2}{md}\left(1-\frac{1}{d}\right)}\leq \sqrt{\frac{2}{md}}.
\ee Thus, in contrast to the MID measure, the LNU distance once
again reveals vanishing non-classicality with growing $m$ or $d$.

%
\noindent \emph{Acknowledgements for this chapter.} We thank Carl Caves and Anil Shaji for numerous stimulating discussions, as well as an anonymous referee for raising certain points that
led to improvements in the paper this chapter is based on.
%

\chapter{Quantifying non-classicality with local unitary operations}\label{chap:localunitary}
\emph{This chapter is based on~\cite{G12}:}\\

\vspace{-4mm}
\noindent S. Gharibian. Quantifying non-classicality with local unitary operations.
Available at arXiv.org e-Print quant-ph/1202.1598v1, 2012.

\vspace{3mm}
\noindent In this chapter, we propose a measure of non-classical correlations in bipartite quantum states based on local unitary operations. We prove the measure is non-zero if and only if the quantum discord is non-zero; this is achieved via a new characterization of zero discord states in terms of the state's correlation matrix. Moreover, our scheme can be extended to ensure the same relationship holds even with a generalized version of quantum discord in which higher-rank projective measurements are allowed. We next derive a closed form expression for our scheme in the cases of Werner states and $(2\times N)$-dimensional systems. The latter reveals that for $(2\times N)$-dimensional states, our measure reduces to the geometric discord~\cite{DVB10}. A connection to the CHSH inequality is shown. We close with a characterization of all maximally non-classical, yet separable, $(2\times N)$-dimensional states of rank at most two (with respect to our measure).

\section{Introduction and results}\label{7_scn:intro}

One of the most intriguing aspects of quantum mechanics is quantum entanglement, which with the advent of quantum computing, was thrust into the limelight of quantum information theoretic research~\cite{HHHH09}. We now know that correlations in quantum states due to entanglement are necessary in order for \emph{pure-state} quantum computation to provide exponential speedups over its classical counterpart~\cite{jozsa03a}. With bipartite entanglement nowadays fairly well understood, however, attention has turned in recent years to a more general type of quantum correlation, dubbed simply \emph{non-classical correlations}. Unlike entanglement, such correlations \emph{can} be created via Local Operations and Classical Communication (LOCC), but nevertheless do not exist in the classical setting. Moreover, for certain \emph{mixed-state} quantum computational feats, the amount of entanglement present can be small or vanishing, such as in the DQC1 model of computing~\cite{kl98} and the locking of classical correlations~\cite{dhlst04}. In these settings, it is rather non-classical correlations which are the conjectured resource enabling such feats (see, e.g.~\cite{datta05a, datta08a, Luo08, DG08}). In fact, almost all quantum states possess non-classical correlations~\cite{ferraro}.

As a result, much attention has recently been devoted to the quantification of {non-classical correlations} (e.g.,~\cite{PhysRevA.71.062307,PhysRevA.72.032317,MPSVW10,groismanquantumness,PhysRevA.77.052101,Luo08,pianietal2008nolocalbrodcast,pianietal2009broadcastcopies,ADA,PhysRevA.82.052342,DVB10,streltsov2010,PGACHW11}, see~\cite{MBCPV11} for a survey, and Section~\ref{0_sscn:nonclassical} for a brief exposition). Here, we say a bipartite state $\rho$ acting on Hilbert space $\A\otimes \B$ is \emph{classically correlated} in $\A$ if and only if there exists an orthonormal basis $\set{\ket{a}}$ for $\spa{A}$ such that
\begin{equation}
    \rho = \sum_i p_i \ketbra{a_i}{a_i}\otimes\rho_i
\end{equation}
for $\set{p_i}$ a probability distribution and $\rho_i$ density operators. To quantify ``how far'' $\rho$ is from the form above, a number non-classicality measures, including perhaps the best-known such measure, the \emph{quantum discord}~\cite{ollivier01a,henderson01a}, ask the question of how drastically a bipartite quantum state is disturbed under local measurement on $\spa{A}$. In this chapter, we take a different approach to the problem. We ask: \emph{Can disturbance of a bipartite system under local unitary operations be used to quantify non-classical correlations?}

It turns out that not only is the answer to this question \emph{yes}, but that in fact for $(2\times N)$-dimensional systems, the measure we construct coincides with the \emph{geometric quantum discord}~\cite{DVB10}, a scheme based again on local measurements. Our measure is defined as follows. Given a bipartite quantum state $\rho$ and unitary $\UA$ acting on Hilbert spaces $\A\otimes\B$ and $\A$ with dimensions $MN$ and $M$, respectively, define
\begin{equation}\label{7_eqn:measure_U_def}
    D(\rho,\UA) := \frac{1}{\sqrt{2}}\fnorm{\rho - \left(\UA\otimes I_B\right) \rho\left( \UA^\dagger\otimes I_B\right)},
\end{equation}
where the Frobenius norm $\fnorm{A}=\sqrt{\trace{A^\dagger A}}$ is used due to its simple calculation. Then, consider the set of unitary operators whose eigenvalues are precisely the  $M$-th roots of unity, i.e.\ whose vector of eigenvalues equals $\ve{v}$ for $v_k = e^{2\pi ki/M}$ for $1\leq k \leq M$. (The corresponding eigenvectors can be chosen arbitrarily.) We call such operators \emph{Root-of-Unity} (RU) unitaries. They include, for example, the Pauli $X$, $Y$, and $Z$ matrices (see Section~\ref{0_sscn:evolution}). Then, letting $\rous$ denote the set of RU unitaries acting on $\A$, we define our measure as:
\begin{equation}\label{7_eqn:measure_def}
    D(\rho) := \min_{\UA \in\rous} D(\rho,\UA).
\end{equation}
Note that $0\leq D(\rho)\leq 1$ for all $\rho$ acting on $\A\otimes\B$.

\paragraph{Our results:} In this chapter, we show the following regarding $D(\rho)$.

\vspace{3mm}
\noindent\textbf{1. Closed form expressions.} Our first result is a closed-form expression for $D(\rho)$ for $(2\times N)$-dimensional systems (Theorem~\ref{7_thm:closedform}). This reveals that for $(2\times N)$-dimensional $\rho$, $D(\rho)$ coincides with the geometric discord of $\rho$. It also allows us to prove that, like the \emph{Fu distance}~\cite{f06,gkb08} (defined below in \emph{Previous Work}), if $D(\rho)>1/\sqrt{2}$, then $\rho$ violates the Clauser-Horne-Shimony-Holt (CHSH) inequality~\cite{CHSH69} (Corollary~\ref{7_cor:CHSH}). We also derive a closed form expression for $D(\rho)$ for Werner states, finding here that $D(\rho)$ in fact equals the Fu distance of $\rho$ (Theorem~\ref{7_thm:werner}).

\vspace{3mm}
\noindent\textbf{2. States achieving $D(\rho)=1$.} We next show that only pure maximally entangled states $\rho$ achieve the maximum value $D(\rho)=1$, as expected (Corollary~\ref{7_cor:max}).

\vspace{3mm}
\noindent\textbf{3. $D(\rho)$ is faithful.} We show that $D(\rho)$ is a \emph{faithful} non-classicality measure, i.e.\ it achieves a value of zero if and only if $\rho$ is classically correlated in $\spa{A}$ (Theorem~\ref{7_thm:discequiv}). To prove this, we first derive a new characterization of states with zero quantum discord based on the correlation matrix of $\rho$. We then show that the states achieving $D(\rho)=0$ can be characterized in the same way. More generally, by extending our scheme to allow the eigenvalues of $\UA$ to have multiplicity at most $k$, we prove a state is undisturbed under $\UA$ if and only if there exists a projective measurement on $\spa{A}$ of rank at most $k$ acting invariantly on the state (Theorem~\ref{7_thm:gendiscequiv}). This reproduces in a simple fashion a result of Reference~\cite{MAGGDI11} regarding entanglement quantification in the pure state setting. Based on this equivalence between disturbance under local unitary operations and local projective measurements, we propose a generalized definition of the quantum discord at the end of Section~\ref{7_scn:discord}.

\vspace{3mm}
\noindent\textbf{4. Maximally non-classical, yet separable states.} Finally, we characterize the set of maximally non-classical, yet separable, $(2\times N)$-dimensional $\rho$ of rank at most two, according to $D(\rho)$ (and hence according to the geometric discord) (Lemmas~\ref{7_lem:upperbound} and~\ref{7_lem:upperbound2}).

\paragraph{Previous work:} The Fu distance, defined as the \emph{maximization} of Equation~(\ref{7_eqn:measure_U_def}) over all $\UA$ such that $[\UA,\trace_{B}(\rho)]=0$, was defined in Reference~\cite{f06} and studied further in References~\cite{gkb08} and~\cite{DG08} with regards to quantifying entanglement and non-classicality. Despite its strengths, such as a closed form solution for two-qubit systems and Werner states, and a connection to the CHSH inequality, the distance has weaknesses: It can attain its maximum value even on non-maximally entangled pure states~\cite{gkb08}, and is not a faithful non-classicality measure~\cite{DG08}. Interestingly, our $D(\rho)$ eliminates these weaknesses while preserving the former strengths. Subsequent to the conception of our scheme, the present author learned that there has also been an excellent line of work studying (the square of) Equation~(\ref{7_eqn:measure_def}) in another setting --- that of \emph{pure state entanglement}. In Reference~\cite{GI07}, it was found that in $(2\times N)$ and $(3\times N)$ systems, $D(\ketbra{\psi}{\psi})^2$ coincides with the \emph{linear entropy of entanglement}. Reference~\cite{MAGGDI11} then showed that for arbitrary bipartite pure states, $D(\ketbra{\psi}{\psi})^2$ is a faithful entanglement monotone, and derived upper and lower bounds in terms of the linear entropy of entanglement. Finally, alternative characterizations of zero discord states have been given in~\cite{ollivier01a,DVB10,D10}. Maximally non-classical separable two-qubit states have been studied, for example, in~\cite{GPACH11,GA11}. For example, the set of such states found~\cite{GPACH11} with respect to the \emph{relative entropy of quantumness} matches our characterization for $D(\rho)$; we remark, however, that our analysis for $D(\rho)$ in this regard is more general than in~\cite{GPACH11} as it is based on a less restrictive ansatz. We remark that since the initial posting of the paper this chapter is based on, a related work by Streltsov \emph{et al.} has appeared~\cite{SGRBI12}.\\

\paragraph{Discussion and open questions:} Our results show that local unitary operations can indeed form the basis of a non-classicality measure with certain desirable properties. In particular, the scheme we consider is faithful, correctly identifies maximally non-classical states, and reveals interesting connections to a number of quantifiers of correlations, such as the Fu distance, the quantum discord, the geometric quantum discord, and the relative entropy of quantumness. As outlined above, the strengths of our scheme include a closed form for two-qubit states and Werner states, the former of which reveals a link between the paradigms of ``disturbance under local unitary operations'' and ``disturbance under local measurements'' by reducing to the geometric discord for two-qubit states. This link is further strengthened by the demonstration of connections to even generalized versions of the quantum discord.

We leave open the following questions. For what other interesting classes of quantum states can a closed form expression for $D(\rho)$ be found? Can a better intuitive understanding of the interplay between the notions of ``disturbance under local measurements'' and ``disturbance under local unitary operations'' be obtained in higher dimensions? We give an analytical characterization of all maximally non-classical rank-two $(2\times N)$-dimensional separable states --- we conjecture that higher rank two-qubit states, for example, achieve strictly smaller values of $D(\rho)$. Can this be proven rigorously and analytically? (We remark that a numerical proof for this conjecture was given in~\cite{GA11} for the geometric discord, for example.) What can the study of the generalized notion of quantum discord we define in Section~\ref{7_scn:discord}, $\delta_{\ve{v}}(\rho)$, tell us about non-classical correlations?

\paragraph{Organization of this chapter:} We begin in Section~\ref{7_scn:prelim} with necessary definitions and useful lemmas. Closed forms for $(2\times N)$-dimensional systems are given in Section~\ref{7_scn:twoqubitpure} and for Werner states in Section~\ref{7_scn:werner}. Section~\ref{7_scn:purearbdim} characterizes the set of states achieving $D(\rho)=1$. Section~\ref{7_scn:discord} shows that $D(\rho)$ is faithful. In Section~\ref{7_scn:maxnc}, we discuss maximally non-classical separable states.

\section{Preliminaries}\label{7_scn:prelim}
We begin by reviewing notation specific to this chapter, followed by relevant definitions and useful lemmas. Throughout this chapter, we use $\A$ and $\B$ to denote complex Euclidean spaces of dimensions $M$ and $N$, respectively. We define $\rho_A := \trace_{B}(\rho)$ and $\rho_B := \trace_{A}(\rho)$. The anti-commutator of $A$ and $B$ is $\set{A,B}=AB+BA$. The notation $\operatorname{diag}(\ve{v})$ for complex vector $\ve{v}$ denotes a diagonal matrix with $i$th diagonal entry $v_i$, and $\operatorname{span}(\set{\ve{v}_i})$ denotes the span of the set of vectors $\set{\ve{v}_i}$.

Moving to definitions, in this chapter we often decompose $\rho\in\dens$ in terms of a Hermitian basis for $\HERM$ (sometimes known as the Fano form~\cite{F83}):
\begin{eqnarray}\label{7_eqn:fano}
        \rho = &\frac{1}{MN}&(I^A\otimes I^B + \ve{r}^A\cdot\ve{\sigma}^A\otimes{I^B}+\hspace{10mm}\\&&
                    I^A\otimes\ve{r}^B\cdot\ve{\sigma}^B+\sum_{i=1}^{M^2-1}\sum_{j=1}^{N^2-1}T_{ij}\sigma^A_i\otimes\sigma^B_j).\nonumber
\end{eqnarray}
Here, $\ve{\sigma}^A$ is a $(M^2-1)$-component vector of
traceless orthogonal Hermitian basis elements $\sigma_i^A$ satisfying $\trace(\sigma_i^A\sigma_j^A)=2\delta_{ij}$, $\ve{r}^A\in\reals^{M^2-1}$
is the Bloch vector for subsystem $A$ with
$r^A_i=\frac{M}{2}\tr(\rho_A\sigma^A_i)$, and $T\in\reals^{(M^2-1)\times(N^2-1)}$ is the \emph{correlation matrix} with entries
$T_{ij}=\frac{MN}{4}\tr(\sigma^A_i\otimes\sigma^B_j\rho)$. For $M=2$, $\ve{r}_A$ satisfies $0\leq\enorm{\ve{r}_A}\leq 1$ with $\enorm{\ve{r}_A} = 1$ if and only if $\rho_A$ is pure. The
definitions for subsystem $B$ are analogous.

We now give a useful specific construction for the basis elements $\sigma_i^A$~\cite{he81}. Define
$\set{\sigma_i}_{i=1}^{M^2-1}= \set{U_{pq},V_{pq},W_{r}}$, such
that for $1\leq p<q\leq M$ and $1\leq r \leq M-1$, and
$\set{\ket{i}}_{i=1}^{M}$ some orthonormal basis for $\A$:
    \begin{eqnarray}
        U_{pq}&=&\ket{p}\bra{q}+\ket{q}\bra{p}\label{7_eqn:Ugenerators}\\
        V_{pq}&=&-i\ket{p}\bra{q}+i\ket{q}\bra{p}\label{7_eqn:Vgenerators}\\
        W_{r} &=&\sqrt{\frac{2}{r(r+1)}}\!\!\left(\sum_{k=1}^{r}\ket{k}\bra{k}-r\ket{r+1}\bra{r+1}\right).\label{7_eqn:Wgenerators}
    \end{eqnarray}
Note that when $M=2$, this construction yields the Pauli matrices $\ve{\sigma^A}=(X,Y,Z)$.

Regarding $D(\rho)$, defining $\rho_f := (\UA\otimes I_B) \rho(\UA^\dagger\otimes I_B)$, we often use the fact that Equation~(\ref{7_eqn:measure_def}) can be rewritten as:
\begin{equation}\label{7_eqn:measure_def2}
    D(\rho) = \min_{U_A\in\rous} \sqrt{\trace(\rho^2) -\trace(\rho\rho_f)}.
\end{equation}

Finally, we show a simple but important lemma.

\begin{lemma}\label{7_lem:invariantlocal}
    $D(\rho)$ is invariant under local unitary operations.
\end{lemma}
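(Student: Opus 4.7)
The plan is to show that conjugating $\rho$ by an arbitrary local unitary $V_A \otimes V_B$ leaves $D(\rho)$ unchanged. Let $\rho' := (V_A \otimes V_B)\rho(V_A^\dagger \otimes V_B^\dagger)$. The strategy is to show, for any fixed RU unitary $\UA$, that $D(\rho', \UA) = D(\rho, V_A^\dagger \UA V_A)$, and then argue that the map $\UA \mapsto V_A^\dagger \UA V_A$ is a bijection of $\rous$ onto itself, so that taking the minimum over $\UA \in \rous$ on both sides yields the claim.

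First, I would expand $\rho' - (\UA \otimes I_B)\rho'(\UA^\dagger \otimes I_B)$ and factor out $V_A \otimes V_B$ on the left and its adjoint on the right. Concretely, since $\UA \otimes I_B$ commutes with $I_A \otimes V_B$, one obtains
\begin{equation}
    \rho' - (\UA \otimes I_B)\rho'(\UA^\dagger \otimes I_B) = (V_A \otimes V_B)\bigl[\rho - (V_A^\dagger \UA V_A \otimes I_B)\rho(V_A^\dagger \UA^\dagger V_A \otimes I_B)\bigr](V_A^\dagger \otimes V_B^\dagger).
\end{equation}
Applying unitary invariance of the Frobenius norm (a standard property of the Schatten $p$-norms noted in Section~\ref{0_scn:linalg}) then immediately gives $D(\rho', \UA) = D(\rho, V_A^\dagger \UA V_A)$.

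The second step is to observe that conjugation by a unitary preserves the spectrum. Thus $V_A^\dagger \UA V_A$ has exactly the same eigenvalues as $\UA$, namely the $M$-th roots of unity, so $V_A^\dagger \UA V_A \in \rous$ whenever $\UA \in \rous$. Moreover, this conjugation map is clearly a bijection of $\rous$ onto itself (its inverse being conjugation by $V_A^\dagger$). Therefore
\begin{equation}
    D(\rho') = \min_{\UA \in \rous} D(\rho', \UA) = \min_{\UA \in \rous} D(\rho, V_A^\dagger \UA V_A) = \min_{\UA' \in \rous} D(\rho, \UA') = D(\rho).
\end{equation}

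This proof should be essentially routine, with no significant obstacle --- the main subtlety is simply recognizing that the set $\rous$ is closed under unitary conjugation on $\A$, which is exactly what makes the minimization in the definition of $D(\rho)$ invariant under local unitaries on the $A$ side; invariance on the $B$ side is automatic because only $I_B$ acts on $\B$ in the definition.
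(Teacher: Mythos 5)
Your proof is correct and follows essentially the same route as the paper: both establish $D(\rho',\UA)=D(\rho,V_A^\dagger\UA V_A)$ (the paper via the trace form of Equation~(\ref{7_eqn:measure_def2}), you via unitary invariance of the Frobenius norm) and then note that conjugation preserves the spectrum, so $\rous$ is mapped onto itself and the minimization is unaffected.
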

\begin{proof}
    Let $\rho' := (V_A\otimes V_B)\rho (V_A\otimes V_B)^\dagger$ for unitaries $V_A$, $V_B$. Then in Equation~(\ref{7_eqn:measure_def2}), $\trace(\rho'^2)=\trace(\rho^2)$, and $\trace(\rho'\rho'_f)$ becomes
    \begin{equation}
        \trace(\rho (V_A^\dagger\UA V_A\otimes I_B )\rho (V_A^\dagger\UA^\dagger V_A\otimes I_B )).
    \end{equation}
   Observe, however, that $V_A\UA V_A^\dagger$ is still an RU unitary, since we have simply changed basis. Hence, $D(\rho',\UA)=D(\rho,V_{A}^\dagger\UA V_{A})$, and since we are minimizing over all $\UA\in\rous$, the claim follows.
\end{proof}

\section{$(2\times N)$-dimensional states}\label{7_scn:twoqubitpure}
In this section, we study $D(\rho)$ for $\rho\in\denstn$, obtaining among other results a closed from expression for $D(\rho)$. To begin, note that any $U_A\in \rous$ must have the form
    \begin{equation}\label{7_eqn:2qU}
        U_A := \ketbra{c}{c}-\ketbra{d}{d}=2\ketbra{c}{c}-I_2,
    \end{equation}
    up to an irrelevant global phase which disappears upon application of $U_A$ to our system, and for some orthonormal basis $\set{\ket{c},\ket{d}}$ for $\complex^2$. Then, $D(\rho,\UA)$ can be rewritten as
    \begin{equation}\label{7_eqn:mx2start}
      2\sqrt{\trace[\rho^2 (\ketbra{c}{c}\otimes I) - \rho( \ketbra{c}{c}\otimes I)\rho (\ketbra{c}{c}\otimes I)]}.
    \end{equation}

We begin with a simple upper bound on $D(\rho)$.
\begin{theorem}\label{7_thm:upperbound1}
     For any $\rho\in \denstn$, one has
     \begin{equation}
        D(\rho)\leq 2\sqrt{\lambda_{\min}(\trace_{\B}(\rho^2))}.
     \end{equation}
\end{theorem}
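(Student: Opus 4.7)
The plan is to exploit the special form of RU unitaries on $\complex^2$ stated in Equation~(\ref{7_eqn:2qU}) together with the reformulation in Equation~(\ref{7_eqn:mx2start}). Specifically, define
\begin{equation}
f(\ket{c}) := \trace[\rho^2 (\ketbra{c}{c}\otimes I)] - \trace[\rho(\ketbra{c}{c}\otimes I)\rho(\ketbra{c}{c}\otimes I)],
\end{equation}
so that $D(\rho) = 2\sqrt{\min_{\ket{c}} f(\ket{c})}$, where the minimization is over unit vectors $\ket{c}\in\complex^2$.

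First I would simplify the first term by using the cyclic property of the trace and the partial trace: writing $P := \ketbra{c}{c}\otimes I_B$, we get $\trace[\rho^2 P] = \bra{c}\trace_B(\rho^2)\ket{c}$. Next, the key (easy) observation is that the \emph{second} term is always non-negative, since letting $X := \sqrt{\rho}\, P\, \sqrt{\rho}$ and using the cyclic property yields $\trace[\rho P \rho P] = \trace(X^\dagger X) = \fnorm{X}^2 \geq 0$. Dropping this non-negative term provides the upper bound
\begin{equation}
f(\ket{c}) \leq \bra{c}\trace_B(\rho^2)\ket{c}.
\end{equation}

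Finally, since $\trace_B(\rho^2)$ is a $2\times 2$ positive semidefinite operator, minimizing the right-hand side over unit $\ket{c}\in\complex^2$ yields $\lambda_{\min}(\trace_B(\rho^2))$, achieved by the corresponding eigenvector. Thus
\begin{equation}
\min_{\ket{c}} f(\ket{c}) \leq \lambda_{\min}(\trace_B(\rho^2)),
\end{equation}
and taking square roots and multiplying by $2$ gives the claim. There is no real obstacle here; the only subtle point to check is the non-negativity of the subtracted term, which is the reason we can drop it to obtain a clean upper bound.
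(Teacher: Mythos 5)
Your proposal is correct and follows essentially the same route as the paper's proof: both start from Equation~(\ref{7_eqn:mx2start}), drop the non-negative term $\trace[\rho(\ketbra{c}{c}\otimes I)\rho(\ketbra{c}{c}\otimes I)]$, reduce the remaining term to $\bra{c}\trace_{\B}(\rho^2)\ket{c}$, and minimize over unit $\ket{c}$ to obtain $\lambda_{\min}(\trace_{\B}(\rho^2))$. Your explicit justification of the non-negativity via $X=\sqrt{\rho}\,P\,\sqrt{\rho}$ is a nice touch that the paper leaves implicit, but it is not a different argument.
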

\begin{proof}
    Starting with Equation~(\ref{7_eqn:mx2start}), by noting that $\trace[\rho( \ketbra{c}{c}\otimes I )\rho (\ketbra{c}{c}\otimes I)]\geq 0$ and using the fact that $\trace(\rho(C_A\otimes I_B))=\trace(\rho_A C_A)$, we have that $D(\rho)$ is at most
    \begin{eqnarray}
        \min_{\text{unit }\ket{c}\in\complex^2}2\sqrt{\trace[\trace_{\B}(\rho^2)\ketbra{c}{c}]}
               = 2\sqrt{\lambda_{\min}(\trace_{\B}(\rho^2))}.\qedhere
    \end{eqnarray}
\end{proof}

Theorem~\ref{7_thm:upperbound1} implies that for pure product $\ket{\psi}\in\complex^2\otimes\complex^N$, $D(\ketbra{\psi}{\psi})=0$, in agreement with the results in Reference~\cite{GI07}. By next exploiting the structure of $\rho$ further, we obtain a closed form expression for $D(\rho)$.

\begin{theorem}\label{7_thm:closedform}
    For any $\rho\in \denstn$, define $G:=\ve{r}^A(\ve{r}^A)^T + \frac{2}{N}TT^T$, for $T$ the correlation matrix of $\rho$. Then, $D(\rho)$ equals
    \begin{equation}\label{7_eqn:2qd} \frac{1}{\sqrt{N}}\sqrt{\trace(G)-\lambda_{\max}(G)}=\frac{1}{\sqrt{N}}\sqrt{\lambda_2(G)+\lambda_3(G)}.
    \end{equation}
\end{theorem}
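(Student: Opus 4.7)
The plan is to carry out a direct Fano-form computation of $D(\rho, U_A)^2$ and then explicitly perform the minimization over RU unitaries. First, I would parameterize the relevant RU unitaries. For $M=2$, up to an irrelevant global phase any RU unitary can be written as $U_A = 2\ketbra{c}{c}-I_2$ for some unit $\ket{c}\in\complex^2$, and equivalently as $U_A = \ve{u}\cdot\ve{\sigma}^A$ where $\ve{u}\in\reals^3$ is the Bloch vector associated with $\ketbra{c}{c}$. Using the Pauli identity $(\ve{u}\cdot\ve{\sigma})(\ve{v}\cdot\ve{\sigma}) = (\ve{u}\cdot\ve{v})I + i(\ve{u}\times\ve{v})\cdot\ve{\sigma}$ twice, one obtains the key relation $U_A \sigma_j^A U_A = \sum_i (2u_iu_j - \delta_{ij})\sigma_i^A$, so conjugation by $U_A$ acts on the coordinate vector of any $\ve{v}\cdot\ve{\sigma}^A$ via the reflection $R := 2\ve{u}\ve{u}^T - I$.

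Second, I would expand $\rho$ in the Fano form of Equation~(\ref{7_eqn:fano}) (with $M=2$) and apply $U_A\otimes I_B$. By the reflection identity above, this sends $\ve{r}^A\mapsto R\ve{r}^A$ and $T\mapsto RT$, while leaving the $I\otimes I$ and $I\otimes\ve{r}^B\cdot\ve{\sigma}^B$ terms invariant. Then, using Equation~(\ref{7_eqn:measure_def2}) in the form $D(\rho,U_A)^2 = \trace(\rho^2) - \trace(\rho\rho_f)$ together with the orthogonality $\trace(\sigma_i^A\sigma_j^A)=2\delta_{ij}$ and $\trace(\sigma_i^B\sigma_j^B)=2\delta_{ij}$, all cross-terms vanish and a direct calculation yields
\begin{equation}
D(\rho,U_A)^2 \;=\; \frac{\enorm{\ve{r}^A}^2 - (\ve{u}\cdot\ve{r}^A)^2}{N} \;+\; \frac{2\bigl(\fnorm{T}^2 - \ve{u}^T T T^T \ve{u}\bigr)}{N^2}.
\end{equation}

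Third, I would perform the minimization over $\ve{u}$. Multiplying through by $N$ and regrouping, one sees that $N\cdot D(\rho,U_A)^2 = \trace(G) - \ve{u}^T G \ve{u}$, where $G := \ve{r}^A(\ve{r}^A)^T + \tfrac{2}{N}TT^T$ is a $3\times 3$ positive semidefinite matrix. Minimizing over unit $\ve{u}\in\reals^3$ is then the standard variational characterization of the largest eigenvalue, giving $\min_{\ve{u}} \ve{u}^T G \ve{u} \mapsto \trace(G) - \lambda_{\max}(G)$, which completes the proof of the first equality in Equation~(\ref{7_eqn:2qd}). The second equality $\trace(G)-\lambda_{\max}(G) = \lambda_2(G)+\lambda_3(G)$ is immediate since $G\in\pos{\reals^3}$ has exactly three (ordered) eigenvalues $\lambda_1\geq\lambda_2\geq\lambda_3\geq 0$.

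The main obstacles should be routine bookkeeping rather than conceptual: carefully tracking the asymmetric normalizations coming from $M=2$ versus general $N$ (e.g.\ $\trace(I_2^2)=2$ vs.\ $\trace(I_N^2)=N$) when expanding $\trace(\rho^2)$ and $\trace(\rho\rho_f)$, and verifying that one really does obtain the clean reflection $R=2\ve{u}\ve{u}^T-I$ on the Pauli coordinate vectors. Neither step requires a new idea; the structural point is simply that the restricted class of $M=2$ RU unitaries acts on $\rho$'s Fano data by a reflection in $\reals^3$, which turns the minimization into a largest-eigenvalue problem.
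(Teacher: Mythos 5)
Your proposal is correct and follows essentially the same route as the paper's proof: both expand $\rho$ in the Fano form of Equation~(\ref{7_eqn:fano}), reduce $N\cdot D(\rho,U_A)^2$ to $\trace(G)-\ve{v}^TG\ve{v}$ where $\ve{v}$ is the Bloch vector of the distinguished eigenvector of $U_A$, and finish with the variational characterization of $\lambda_{\max}(G)$. The only difference is cosmetic: the paper writes $U_A=2\ketbra{c}{c}-I$ and evaluates the terms $\bra{c}\sigma_i^A\ket{c}\bra{c}\sigma_j^A\ket{c}=v_iv_j$ directly, whereas you package the same information as the reflection $R=2\ve{u}\ve{u}^T-I$ acting on the Pauli coordinates.
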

\begin{proof}
Define $P:=\ketbra{c}{c}$. Then, beginning with Equation~(\ref{7_eqn:mx2start}), by rewriting $\rho$ using Equation~(\ref{7_eqn:fano}) and applying the fact that the basis elements $\sigma_i$ are traceless, we obtain that $\trace(\rho^2 P\otimes I-\rho P\otimes I\rho P\otimes I)$ equals
\begin{equation}
    \frac{1}{4N}\trace(A_1-A_2+A_3-A_4),
\end{equation}
where
\begin{eqnarray}
    A_1 &:=& \left(\sum_i r_i^A{\sigma_i}^A\right)^2P,\quad\quad\quad
    A_2 := \left(\sum_i r_i^A{\sigma_i}^AP\right)^2\\
    A_3 &:=& \frac{1}{N}\left(\sum_{ij}T_{ij}\sigma^A_i\otimes\sigma^B_j\right)^2(P\otimes I)\\
    A_4&:=&\frac{1}{N}\left(\sum_{ij}T_{ij}\sigma^A_i\otimes\sigma^B_j\right)\left(\sum_{ij}T_{ij}P\sigma^A_iP\otimes \sigma^B_j\right).
\end{eqnarray}
Using the facts that $(\sigma_i^A)^2=I$, $\set{\sigma^A_i,\sigma^A_j}=0$ for $i\neq j$, $\trace(\sigma_i\sigma_j)=2\delta_{ij}$, and $\trace(P)=1$, we thus have
\begin{eqnarray}
    \trace(A_1) &=& \enorm{\ve{r}^A}^2,\quad\quad\quad     \trace(A_3) = \frac{2}{N}\sum_{ij} T_{ij}^2\\
    \trace(A_2)&=&\sum_{ij} r_i^Ar_j^A\bra{c}\sigma^A_i\ket{c}\bra{c}\sigma^A_j\ket{c}\\
    \trace(A_4)&=&\frac{2}{N}\sum_{ij} \left(\sum_k T_{ik}T_{jk}\right)\bra{c}\sigma^A_i\ket{c}\bra{c}\sigma^A_j\ket{c}.
\end{eqnarray}
Now, $\bra{c}\sigma^A_i\ket{c}$ can be thought of as the $i$th component of the Bloch vector $\ve{v}\in\reals^3$ of pure state $\ket{c}$ with $\enorm{\ve{v}}=1$, implying
\begin{equation}
    \trace(A_2+A_4) = \ve{v}^T\left[\ve{r}^A(\ve{r}^A)^T+\frac{2}{N}TT^T\right]\ve{v}.
\end{equation}
Plugging these values into Equation~(\ref{7_eqn:mx2start}), we conclude $D(\rho)$ equals
\begin{equation} \min_{\substack{\ve{v}\in\reals^3\\\enorm{\ve{v}}=1}}\frac{1}{\sqrt{N}}\sqrt{\enorm{\ve{r}^A}^2+ \frac{2}{N}\sum_{ij} T_{ij}^2-\trace(A_2+A_4)}.
\end{equation}
The claim now follows since for any symmetric $A\in\reals^{n\times n}$,
$\max_{\text{unit }\ve{v}\in\reals^n}\ve{v}^T A\ve{v}=\lambda_{\max}(A)$.
\end{proof}

The expression for $D(\rho)$ in Theorem~\ref{7_thm:closedform} matches that for the \emph{geometric discord}~\cite{DVB10,VR12}. Specifically, defining the latter as $\delta_g(\rho)=\min_{\sigma\in\Omega} \sqrt{2}\fnorm{\rho-\sigma}$, where $\Omega$ is the set of zero-discord states, we have for $(2\times N)$-dimensional $\rho$ that $D(\rho)=\delta_g(\rho)$. (Note: The original definition of Reference~\cite{DVB10} was more precisely $\delta_g(\rho)=\min_{\sigma\in\Omega} \fnorm{\rho-\sigma}^2$.)

We now discuss consequences of Theorem~\ref{7_thm:closedform}, beginning with a lower bound which proves useful later.

\begin{corollary}\label{7_cor:lower}
    For $\rho\in \denstn$, we have
    \begin{equation}\label{7_eqn:22qd} D(\rho)\geq \frac{\sqrt{2}}{N}\sqrt{\lambda_2(TT^T)+\lambda_3(TT^T)}.
    \end{equation}
    This holds with equality if $\ve{r}^A=0$, i.e.\ $\rho_A=\frac{I}{2}$.
\end{corollary}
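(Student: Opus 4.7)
The plan is to reduce this corollary directly to Theorem~\ref{7_thm:closedform} combined with Weyl's monotonicity theorem for Hermitian matrices. By Theorem~\ref{7_thm:closedform}, we have $D(\rho) = \frac{1}{\sqrt{N}}\sqrt{\lambda_2(G)+\lambda_3(G)}$ where $G = \ve{r}^A(\ve{r}^A)^T + \frac{2}{N}TT^T$. Set $H := \frac{2}{N}TT^T$, so that $G = H + \ve{r}^A(\ve{r}^A)^T$. Since $\ve{r}^A(\ve{r}^A)^T \succeq 0$, we have $G \succeq H$.

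The key step is then to invoke Weyl's monotonicity theorem: if $A \succeq B$ for Hermitian $A, B$, then $\lambda_i(A) \geq \lambda_i(B)$ for all $i$ (where eigenvalues are listed in decreasing order). Applying this to $G \succeq H$ yields in particular $\lambda_2(G) \geq \lambda_2(H)$ and $\lambda_3(G) \geq \lambda_3(H)$, so
\begin{equation}
\lambda_2(G)+\lambda_3(G) \;\geq\; \lambda_2(H)+\lambda_3(H) \;=\; \frac{2}{N}\bigl(\lambda_2(TT^T)+\lambda_3(TT^T)\bigr).
\end{equation}
Substituting this into the closed-form expression for $D(\rho)$ gives
\begin{equation}
D(\rho) \;\geq\; \frac{1}{\sqrt{N}}\sqrt{\frac{2}{N}\bigl(\lambda_2(TT^T)+\lambda_3(TT^T)\bigr)} \;=\; \frac{\sqrt{2}}{N}\sqrt{\lambda_2(TT^T)+\lambda_3(TT^T)},
\end{equation}
which is the desired bound.

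For the equality statement, observe that when $\ve{r}^A = 0$ (equivalently $\rho_A = I/2$), the rank-one perturbation vanishes, so $G = H$ exactly, and the two eigenvalue sums coincide. There is no real obstacle here: the proof is essentially a one-line application of Weyl's inequality once Theorem~\ref{7_thm:closedform} is in hand. The only mild subtlety worth checking is the convention that eigenvalues are ordered $\lambda_1 \geq \lambda_2 \geq \lambda_3$, which is consistent throughout the chapter since Theorem~\ref{7_thm:closedform} itself uses $\lambda_{\max}(G)=\lambda_1(G)$ and writes $D(\rho)$ in terms of the two smaller eigenvalues $\lambda_2(G)+\lambda_3(G)=\trace(G)-\lambda_{\max}(G)$.
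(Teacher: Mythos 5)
Your proof is correct and takes essentially the same route as the paper: both reduce the claim to Theorem~\ref{7_thm:closedform} and view $G$ as a positive rank-one perturbation of $\frac{2}{N}TT^T$, with the equality case handled identically. The only cosmetic difference is that the paper phrases the eigenvalue step as subadditivity of $\lambda_{\max}$ (i.e.\ $\lambda_{\max}(G)\leq \enorm{\ve{r}^A}^2+\frac{2}{N}\lambda_{\max}(TT^T)$) combined with additivity of the trace, rather than Weyl monotonicity applied to $\lambda_2$ and $\lambda_3$; the two arguments give the same bound.
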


\begin{proof}
    The first claim follows from the fact that:
    \begin{eqnarray}
        \lambda_{\max}\left(\ve{r}^A(\ve{r}^A)^T + \frac{2}{N}TT^T\right)\leq\enorm{\ve{r}^A}^2+ \frac{2}{N}\lambda_{\max}\left(TT^T\right).
    \end{eqnarray}
    The second claim follows by substitution into Equation~(\ref{7_eqn:2qd}).
\end{proof}

For example, for maximally entangled $\ket{\psi}=(\ket{00}+\ket{11})/\sqrt{2}$, for which $\ve{r}^B=\ve{0}$ and $T=\operatorname{diag}(1,-1,1)$, Corollary~\ref{7_cor:lower} yields $D(\ketbra{\psi}{\psi})= 1$, as desired. We also remark that Equation~(\ref{7_eqn:2qd}) can further be simplified for two-qubit states, since by Reference~\cite{HH96,HH96_2}, one can assume without loss of generality that $T$ is diagonal. This relies on the facts that (1) applying local unitary $V_1\otimes V_2$ to $\rho$ has the effect of mapping $T\mapsto O_1TO_2^\dagger$, $\ve{r}^A\mapsto O_1\ve{r}^A$, and $\ve{r}^B\mapsto O_2\ve{r}^B$ for some orthogonal rotation matrices $O_1$ and $O_2$, and (2) $D(\rho)$ is invariant under local unitaries by Lemma~\ref{7_lem:invariantlocal}.

Using Corollary~\ref{7_cor:lower}, we next obtain a connection to the CHSH inequality for two-qubit $\rho$. Defining $M(\rho) := \lambda_1(T^T T)+\lambda_2(T^TT)$, it is known that $\rho$ violates the CHSH inequality if and only if $M(\rho)>1$~\cite{HHH95}. We thus have:

\begin{corollary}\label{7_cor:CHSH}
    For $\rho\in \denstt$, if $D(\rho)>1/\sqrt{2}$, then $M(\rho)>1$. The converse does not hold.
\end{corollary}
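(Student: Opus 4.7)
My plan is to leverage the closed form of Theorem~\ref{7_thm:closedform} and compare it term-by-term with the quantity $M(\rho)$. Recall that for $N=2$ the theorem gives $D(\rho)^2 = \tfrac{1}{2}(\lambda_2(G)+\lambda_3(G))$ where $G = \ve{r}^A(\ve{r}^A)^T + TT^T$. Also, since $T \in \reals^{3\times 3}$ in the two-qubit case, the matrices $TT^T$ and $T^T T$ share the same (three) eigenvalues, and hence $M(\rho) = \lambda_1(TT^T)+\lambda_2(TT^T)$. So the forward direction reduces to the inequality
\begin{equation}
\lambda_2(G) + \lambda_3(G) \;\le\; \lambda_1(TT^T) + \lambda_2(TT^T).
\end{equation}

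The key observation is that $G - TT^T = \ve{r}^A(\ve{r}^A)^T$ is a positive semidefinite rank-one perturbation of $TT^T$. I will invoke Cauchy's interlacing theorem (equivalently, Weyl's inequality specialized to rank-one perturbations), which yields $\lambda_i(G) \le \lambda_{i-1}(TT^T)$ for every $i \ge 2$. Applied with $i=2$ and $i=3$, this gives $\lambda_2(G) \le \lambda_1(TT^T)$ and $\lambda_3(G) \le \lambda_2(TT^T)$, and summing produces the desired inequality. Thus $2D(\rho)^2 \le M(\rho)$, so $D(\rho) > 1/\sqrt{2}$ forces $M(\rho) > 1$. I do not expect any subtlety here beyond correctly quoting the interlacing inequality.

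For the converse, I will exhibit an explicit counterexample via pure, weakly entangled two-qubit states. Consider $\ket{\phi_\theta} = \cos\theta\,\ket{00} + \sin\theta\,\ket{11}$. A direct computation of the Fano decomposition yields $T = \operatorname{diag}(\sin(2\theta), -\sin(2\theta), 1)$ and $\ve{r}^A = (0,0,\cos(2\theta))$, so $TT^T = \operatorname{diag}(\sin^2(2\theta), \sin^2(2\theta), 1)$ and $G = \operatorname{diag}(\sin^2(2\theta), \sin^2(2\theta), 1+\cos^2(2\theta))$. Consequently $M(\rho) = 1 + \sin^2(2\theta)$ and $D(\rho) = |\sin(2\theta)|$. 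For any $\theta \in (0, \pi/2)$ we have $M(\rho) > 1$ (CHSH is violated, consistent with Gisin's theorem for pure states), yet choosing $\theta$ small enough that $|\sin(2\theta)| \le 1/\sqrt{2}$ (e.g.\ $\theta = \pi/16$) gives $D(\rho) \le 1/\sqrt{2}$. This establishes the failure of the converse.

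The main obstacle I foresee is none of the steps being genuinely hard; the only point requiring mild care is identifying the correct form of the interlacing inequality in the right direction (applied to the PSD rank-one additive perturbation, not a rank-one deletion). The counterexample requires a routine Fano-form calculation for the Schmidt-form pure state, which I would verify by using $\bra{\phi_\theta}\sigma_i\!\otimes\!\sigma_j\ket{\phi_\theta}$ evaluated on the computational basis expansion.
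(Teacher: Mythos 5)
Your proof is correct, and on the forward direction it is actually more complete than what the paper writes. For the converse you reproduce exactly the paper's counterexample: the family $a\ket{00}+b\ket{11}$ (your $a=\cos\theta$, $b=\sin\theta$), with the same data $T=\operatorname{diag}(2ab,-2ab,1)$, $\ve{r}^A=(0,0,a^2-b^2)$, giving $M=1+4a^2b^2>1$ while $D=2ab$ can be pushed below $1/\sqrt{2}$ — identical to the paper's Theorem-7-of-\cite{gkb08}-style example. For the forward direction the paper declares the claim ``immediate from Corollary~\ref{7_cor:lower}'' plus cospectrality of $TT^T$ and $T^TT$; but Corollary~\ref{7_cor:lower} is a \emph{lower} bound, $D(\rho)\geq\tfrac{1}{\sqrt{2}}\sqrt{\lambda_2(TT^T)+\lambda_3(TT^T)}$, and a lower bound on $D$ cannot by itself turn the hypothesis $D>1/\sqrt{2}$ into a lower bound on $M$. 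What is really needed is the complementary \emph{upper} bound $2D(\rho)^2\leq M(\rho)$, which you supply by applying Weyl/Cauchy interlacing to the rank-one PSD perturbation $G=TT^T+\ve{r}^A(\ve{r}^A)^T$, giving $\lambda_2(G)\leq\lambda_1(TT^T)$ and $\lambda_3(G)\leq\lambda_2(TT^T)$ and hence $\lambda_2(G)+\lambda_3(G)\leq M(\rho)$. This is the right fix, and your interlacing inequalities are quoted in the correct direction. The trade-off is that you invoke the interlacing theorem where an even more elementary argument suffices: since $\lambda_{\max}(G)\geq \hat{r}^T G\,\hat{r}\geq \enorm{\ve{r}^A}^2+\lambda_3(TT^T)$ for $\hat{r}=\ve{r}^A/\enorm{\ve{r}^A}$ (the case $\ve{r}^A=0$ being trivial), subtracting from $\trace(G)=\enorm{\ve{r}^A}^2+\trace(TT^T)$ gives $\lambda_2(G)+\lambda_3(G)\leq\lambda_1(TT^T)+\lambda_2(TT^T)$ directly. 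Either way, your write-up closes a gap the paper leaves open.
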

\begin{proof}
    The first is immediate from Corollary~\ref{7_cor:lower} and the fact that $TT^T$ and $T^TT$ are cospectral (Theorem 1.3.20 of~\cite{HJ90}). The converse proceeds similarly to Theorem 7 of Reference~\cite{gkb08} --- namely, let $\ket{\psi}=a\ket{00}+b\ket{11}$ for real $a,b\geq 0$ and $a^2+b^2=1$. Then, for density operator $\ketbra{\psi}{\psi}$, we have $\ve{r}^B=(0,0,a^2-b^2)$ and $T=\operatorname{diag}(2ab,-2ab,1)$, implying $M(\ketbra{\psi}{\psi})>1$ for $a,b\neq 0$. In comparison, $D(\ketbra{\psi}{\psi})=2ab\leq1/\sqrt{2}$ when $a\leq \sqrt{\frac{1}{2}- \frac{1}{2\sqrt{2}}}$ or $a\geq \sqrt{\frac{1}{2}+ \frac{1}{2\sqrt{2}}}$.
\end{proof}
Interestingly, the exact same relationship as that in Corollary~\ref{7_cor:CHSH} was found between the Fu distance and the CHSH inequality in Reference~\cite{gkb08}.

\section{Werner states}\label{7_scn:werner}

We now derive a closed formula for $D(\rho)$ for Werner states $\rho\in \densdd$ where $d\geq 2$, which are defined as~\cite{W89}
\begin{equation} \rho:=\frac{2p}{d^2+d}P_{s} + \frac{2(1-p)}{d^2-d}P_a, \end{equation}
for $P_s:= (I+P)/2$ and $P_a:=(I-P)/2$ the projectors onto the symmetric and anti-symmetric subspaces, respectively, $P:=\sum_{i,j=1}^d\ketbra{i}{j}\otimes\ketbra{j}{i}$ the SWAP operator, and $0\leq p \leq 1$. Werner states are invariant under $U\otimes U$ for any unitary $U$, and are entangled if and only if $p< 1/2$.
\begin{theorem}\label{7_thm:werner}
    Let $\rho\in \densdd$ be a Werner state. Then
    \begin{equation}  D(\rho) = \frac{\abs{2pd-d-1}}{d^2-1}.
    \end{equation}
\end{theorem}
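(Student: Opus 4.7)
The plan is to exploit the $U\otimes U$ symmetry of Werner states (combined with Lemma~\ref{7_lem:invariantlocal}) to collapse the minimization over RU unitaries to a single canonical choice, and then compute $D(\rho,\UA)$ by direct expansion using the SWAP operator.

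First, I would argue that $D(\rho,\UA)$ takes the \emph{same} value for every $\UA\in\rous$. From the proof of Lemma~\ref{7_lem:invariantlocal}, $D(\rho',\UA)=D(\rho,V_A^\dagger \UA V_A)$ whenever $\rho'=(V_A\otimes V_B)\rho(V_A\otimes V_B)^\dagger$. Choosing $V_A=V_B=V$ and using that Werner states satisfy $(V\otimes V)\rho(V\otimes V)^\dagger=\rho$ for every unitary $V$, we get $D(\rho,\UA)=D(\rho,V^\dagger \UA V)$ for arbitrary $V$. Since any two unitaries with the same multiset of eigenvalues are unitarily conjugate, and every $\UA\in\rous$ has eigenvalues $\{1,\omega,\omega^2,\ldots,\omega^{d-1}\}$ for $\omega:=e^{2\pi i/d}$, this shows $D(\rho)=D(\rho,\tilde\UA)$ for the fixed diagonal choice $\tilde\UA:=\sum_{k=0}^{d-1}\omega^k\ketbra{k}{k}$.

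Next, I would rewrite $\rho$ in the compact form $\rho=\alpha I+\beta P$, where $P$ is the SWAP operator on $\complex^d\otimes\complex^d$, $\alpha:=(a+b)/2$, $\beta:=(a-b)/2$, $a:=2p/(d^2+d)$, and $b:=2(1-p)/(d^2-d)$. Setting $P_U:=(\tilde\UA\otimes I)P(\tilde\UA^\dagger\otimes I)$, we have $\rho_f=\alpha I+\beta P_U$. Using Equation~(\ref{7_eqn:measure_def2}) and $P^2=I$, $\trace(P)=\trace(P_U)=d$, the difference $\trace(\rho^2)-\trace(\rho\rho_f)$ simplifies dramatically: all terms containing at most one factor of $P$ or $P_U$ cancel, leaving
\begin{equation}
D(\rho,\tilde\UA)^2=\beta^2\bigl(\trace(P^2)-\trace(PP_U)\bigr)=\beta^2\bigl(d^2-\trace(PP_U)\bigr).
\end{equation}

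The key small identity is $\trace(PP_U)=0$. Expanding $P=\sum_{ij}\ketbra{i}{j}\otimes\ketbra{j}{i}$ and $P_U=\sum_{kl}\omega^{k-l}\ketbra{k}{l}\otimes\ketbra{l}{k}$, the product collapses to $\sum_{ij}\omega^{j-i}\ketbra{i}{i}\otimes\ketbra{j}{j}$, whose trace equals $\bigl(\sum_i\omega^{-i}\bigr)\bigl(\sum_j\omega^j\bigr)=0$ by the root-of-unity identity $\sum_{k=0}^{d-1}\omega^k=0$. Hence $D(\rho,\tilde\UA)=|\beta|\,d$, and a short algebraic simplification of
\begin{equation}
\beta=\frac{p(d-1)-(1-p)(d+1)}{d(d^2-1)}=\frac{2pd-d-1}{d(d^2-1)}
\end{equation}
yields the stated expression $|2pd-d-1|/(d^2-1)$. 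The main obstacle is essentially bookkeeping rather than any conceptual difficulty: once the $U\otimes U$ symmetry strips away the minimization, all that remains is a single roots-of-unity cancellation, which is what makes the Werner case tractable (in contrast to general states, where one must actually optimize over $\rous$).
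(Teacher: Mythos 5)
Your proposal is correct and follows essentially the same route as the paper: both reduce the computation to the decomposition $\rho=\alpha I+\beta P$ in terms of the SWAP operator and observe that the cross term $\trace\bigl(P(\UA\otimes I)P(\UA^\dagger\otimes I)\bigr)=\trace(\UA)\trace(\UA^\dagger)$ vanishes because every RU unitary is traceless (your roots-of-unity cancellation is exactly this identity for a diagonal representative). The only difference is your preliminary symmetry argument collapsing the minimization to one canonical $\tilde\UA$, which is correct but unnecessary, since the paper's expression for $D(\rho,\UA)$ already holds for arbitrary $\UA$ and the minimization is trivial once $\trace(\UA)=0$.
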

\begin{proof}
    As done in Theorem 3 of Reference~\cite{gkb08}, we first rewrite Equation~\ref{7_eqn:measure_def2} using the facts that $\trace(P)=d$, $\trace(P^2)=d^2$, and $\beta:=\trace(P(U_{A}\otimes I)P(U_{A}\otimes I)^\dagger)=\trace(U_{A})\trace(U_{A}^\dagger)$ to obtain that for any $U_{A}\in U(\A)$,
    \begin{equation}
        D(\rho, U_{A})=\frac{\sqrt{(2pd-d-1)^2(d^2-\beta)}}{d(d^2-1)}.
    \end{equation}
    Since $\trace(U_{A})=0$ for any $U_{A}\in\rous$, we have $\beta=0$ and the claim follows.
\end{proof}

Again, we find that this coincides exactly with the expression for the Fu distance for Werner states~\cite{gkb08}. Further, Theorem~\ref{7_thm:werner} implies that the quantum discord of Werner state $\rho$ is zero if and only if $p=(d+1)/2d$. This matches the results of Chitambar~\cite{C11}, who develops the following closed formula for the discord $\discm$ of Werner states:
\begin{eqnarray}
    \discm &=&\log (d+1) + (1-p)\log\frac{1-p}{d-1}+p\log\frac{p}{d+1} -\nonumber\\ &&\frac{2p}{d+1}\log p- \left(1-\frac{2p}{d+1}\right)\log \frac{d+1-2p}{2(d-1)}\label{7_eqn:discWerner}.
\end{eqnarray}
In Section~\ref{7_scn:discord}, we show that this is no coincidence --- it turns out that $D(\rho)= 0$ if and only if the discord of $\rho$ is zero for any $\rho$.

\section{Pure states of arbitrary dimension}\label{7_scn:purearbdim}

We now show that only pure maximally entangled states $\rho$ achieve $D(\rho)=1$. As mentioned in Section~\ref{7_scn:intro}, this is in contrast to the Fu distance~\cite{f06,gkb08}, whose maximal value is attained even for certain \emph{non-maximally} entangled $\ket{\psi}$. We remark that Theorem~\ref{7_thm:arbdimpure} below also follows from a more general non-trivial result that $D(\ketbra{\psi}{\psi})^2$ is tightly upper bounded by the linear entropy of entanglement of pure state $\ket{\psi}$~\cite{MAGGDI11}. However, our proof of Theorem~\ref{7_thm:arbdimpure} is much simpler and requires only elementary linear algebra.


To begin, assume without loss of generality that $M\leq N$, and let $\ket{\psi}\in\A\otimes\B$ be a pure quantum state with Schmidt decomposition $\ket{\psi}=\sum_{k=1}^{M}\alpha_k\ket{a_k}\otimes\ket{b_k}$, i.e.\ $\sum_k\alpha_k^2=1$ for $\alpha_k\in\reals$ and $\set{\ket{a_k}}$ and $\set{\ket{b_k}}$ the Schmidt bases for $\A$ and $\B$, respectively.

\begin{theorem}\label{7_thm:arbdimpure}
Let $\ket{\psi}\in\A\otimes\B$ with Schmidt decomposition as above. Then $D(\ketbra{\psi}{\psi})=1$ if and only if $\alpha_k=\frac{1}{\sqrt{M}}$ for all $1\leq k \leq M$ (i.e.\ $\ket{\psi}$ is maximally entangled).
\end{theorem}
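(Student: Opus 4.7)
The plan is to reduce the theorem to a condition about the overlap $\bra{\psi}(U_A\otimes I_B)\ket{\psi}$, and then exploit the freedom to assign the roots of unity to an orthonormal eigenbasis in any order.

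\textbf{Step 1: Reduce to an overlap condition.} Since $\ketbra{\psi}{\psi}$ is pure, Equation~(\ref{7_eqn:measure_def2}) yields
\begin{equation}
D(\ketbra{\psi}{\psi},U_A)^2 \;=\; 1 - \bigl|\bra{\psi}(U_A\otimes I_B)\ket{\psi}\bigr|^2,
\end{equation}
so $D(\ketbra{\psi}{\psi})=1$ if and only if $\bra{\psi}(U_A\otimes I_B)\ket{\psi}=0$ for every $U_A\in\rous$ (using that $D(\rho)\leq 1$ always, so the minimum over $U_A$ equals $1$ only if every $U_A$ achieves value $1$). Using the Schmidt decomposition, a quick calculation gives $\bra{\psi}(U_A\otimes I_B)\ket{\psi}=\trace(\rho_A U_A)$, where $\rho_A=\sum_{k=1}^{M}\alpha_k^2 \ketbra{a_k}{a_k}$. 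Thus the theorem reduces to showing: $\trace(\rho_A U_A)=0$ for every $U_A\in\rous$ if and only if all $\alpha_k=1/\sqrt{M}$.

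\textbf{Step 2: The forward direction.} If all $\alpha_k=1/\sqrt{M}$, then $\rho_A=I_A/M$, so $\trace(\rho_A U_A)=\trace(U_A)/M$. Since any $U_A\in\rous$ has eigenvalues $\{e^{2\pi i k/M}\}_{k=1}^{M}$, which sum to zero (geometric series), $\trace(U_A)=0$ for every $U_A\in\rous$. This handles the easy direction.

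\textbf{Step 3: The backward direction --- the main step.} Suppose $\trace(\rho_A U_A)=0$ for every $U_A\in\rous$. The main idea is to restrict attention to $U_A$ that are diagonal in the Schmidt basis $\{\ket{a_k}\}$, but with the roots of unity assigned to these eigenvectors in any order. Concretely, for any permutation $\pi$ of $[M]$, define $U_A^{(\pi)}:=\sum_{k=1}^{M} \omega^{\pi(k)} \ketbra{a_k}{a_k}$ where $\omega=e^{2\pi i/M}$; this lies in $\rous$. The condition becomes
\begin{equation}
\sum_{k=1}^{M} \omega^{\pi(k)} \alpha_k^2 \;=\; 0 \qquad \text{for every permutation }\pi.
\end{equation}
Setting $p_k=\alpha_k^2$ and comparing two permutations differing by a transposition $(i\;j)$, subtraction yields
\begin{equation}
\bigl(\omega^{\pi(i)}-\omega^{\pi(j)}\bigr)(p_i-p_j)=0.
\end{equation}
Since $\omega^{\pi(i)}\neq\omega^{\pi(j)}$, this forces $p_i=p_j$. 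Running over all pairs $(i,j)$ yields $p_k=1/M$ for all $k$ (using $\sum_k p_k=1$), hence $\alpha_k=1/\sqrt{M}$.

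\textbf{Expected obstacle.} The technical core is Step 3, and in particular the observation that one does not merely get to choose the eigenbasis of $U_A$, but also the assignment of the $M$ distinct eigenvalues to that basis. Without exploiting this freedom, the single equation $\sum_k \omega^{k}\alpha_k^2=0$ is far too weak to pin down $\alpha_k$ (e.g.\ for $M\geq 3$ many non-uniform distributions satisfy it). The permutation trick is what makes the argument go through and keeps the proof elementary, avoiding the heavier linear-entropy-of-entanglement machinery of \cite{MAGGDI11}.
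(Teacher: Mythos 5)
Your proposal is correct and follows essentially the same route as the paper: reduce $D(\ketbra{\psi}{\psi})=1$ to the condition $\sum_k\alpha_k^2\bra{a_k}U_A\ket{a_k}=0$ for all $U_A\in\rous$, then restrict to $U_A$ diagonal in the Schmidt basis with the $M$ roots of unity permuted arbitrarily. The paper asserts without detail that $\ve{w}^T\pi\ve{v}=0$ for all permutations forces $\ve{w}$ to be uniform; your transposition argument is exactly the missing justification for that assertion, so your write-up is if anything slightly more complete.
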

\begin{proof}
We begin by rewriting Equation~(\ref{7_eqn:measure_def2}) as
\begin{equation}\label{7_eqn:arbdimpureD}
    D(\ketbra{\psi}{\psi}) = \min_{\UA\in\rous}\sqrt{1- \abs{\sum_{k=1}^{M}\alpha_k^2\bra{a_k}\UA\ket{a_k}}^2}.
\end{equation}
If $\ket{\psi}$ is maximally entangled, then $\alpha_k = 1/\sqrt{M}$ for all $1\leq k\leq M$. Then, since $\UA\in\rous$, Equation~(\ref{7_eqn:arbdimpureD}) yields
\begin{eqnarray}
    D(\ketbra{\psi}{\psi}) =
     \min_{\UA\in\rous}\sqrt{1- \frac{1}{M^2}\abs{\trace(\UA)}^2}
    = 1.
\end{eqnarray}

For the converse, assume $D(\ketbra{\psi}{\psi})=1$. Then, by Equation~(\ref{7_eqn:arbdimpureD}), we must have that for all $\UA\in\rous$,
\begin{equation}\label{7_eqn:arbdim_pure_proof}
    \sum_{k=1}^{M}\alpha_k^2\bra{a_k}\UA\ket{a_k}=0.
\end{equation}
Thus, choosing $\UA$ as diagonal in basis $\set{\ket{a_k}}$, Equation~(\ref{7_eqn:arbdim_pure_proof}) says that $\ve{w}^T\pi\ve{v}=0$ for all permutations $\pi\in S_M$, where ${w}_k:=\alpha_k^2$ and ${v}_k := e^{2\pi ki/M}$. This can only hold, however, if all entries of $\ve{w}$ are the same, i.e.\ $\alpha_k=1/\sqrt{M}$ for all $1\leq k\leq M$, as desired.
\end{proof}

\begin{corollary}\label{7_cor:max}
    A quantum state $\rho\in\dens$ achieves $D(\rho)=1$ if and only if $\rho$ is pure and maximally entangled.
\end{corollary}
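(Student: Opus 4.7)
The plan is to reduce Corollary 5.5 to Theorem 5.4 by showing that the hypothesis $D(\rho) = 1$ forces $\rho$ to be pure, after which the pure-state characterization takes over.

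First, I would handle the easy direction: if $\rho = \ketbra{\psi}{\psi}$ for a maximally entangled pure state $\ket{\psi}$, then Theorem~\ref{7_thm:arbdimpure} immediately gives $D(\rho) = 1$. So the nontrivial content is the converse.

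For the converse, I would use the formulation of $D(\rho)$ given in Equation~(\ref{7_eqn:measure_def2}), namely
\begin{equation}
D(\rho) = \min_{U_A \in \rous} \sqrt{\trace(\rho^2) - \trace(\rho \rho_f)},
\end{equation}
where $\rho_f := (U_A \otimes I_B)\rho(U_A \otimes I_B)^\dagger$. Squaring the assumption $D(\rho) = 1$ and letting $U_A^\star$ be a minimizer (which exists by compactness of $\rous$), I get
\begin{equation}
\trace(\rho^2) - \trace(\rho \rho_{f^\star}) = 1.
\end{equation}
The key observation is that both terms are constrained: $\trace(\rho^2) \leq 1$ for any density operator, while $\trace(\rho \rho_{f^\star}) \geq 0$ since $\rho$ and $\rho_{f^\star}$ are both positive semidefinite (use $\trace(\rho \rho_{f^\star}) = \trace(\sqrt{\rho}\, \rho_{f^\star}\sqrt{\rho}) \geq 0$). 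Therefore the only way the difference can equal $1$ is if $\trace(\rho^2) = 1$, i.e.\ $\rho$ is pure, and simultaneously $\trace(\rho \rho_{f^\star}) = 0$.

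Writing $\rho = \ketbra{\psi}{\psi}$, I can then invoke Theorem~\ref{7_thm:arbdimpure} directly, which forces $\ket{\psi}$ to be maximally entangled. I do not expect any serious obstacle: the argument is essentially a squeeze between the spectral bounds $\trace(\rho^2) \leq 1$ and $\trace(\rho \rho_{f^\star}) \geq 0$, combined with the already-established pure-state case. The only subtlety worth flagging is confirming that the minimum over $\rous$ is actually attained (so that ``minimizing $U_A^\star$'' makes sense), which follows since $\rous$ is compact and $D(\rho, U_A)$ is continuous in $U_A$.
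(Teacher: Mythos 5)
Your proposal is correct and follows essentially the same route as the paper, whose proof is simply the remark that the result is "immediate from Theorem~\ref{7_thm:arbdimpure} and the $\trace(\rho^2)$ in Equation~(\ref{7_eqn:measure_def2})" --- i.e.\ exactly your squeeze argument forcing $\trace(\rho^2)=1$ (purity) before invoking the pure-state case. You have merely spelled out the details (nonnegativity of $\trace(\rho\rho_{f^\star})$ and attainment of the minimum) that the paper leaves implicit.
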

\begin{proof}
    Immediate from Theorem~\ref{7_thm:arbdimpure} and the $\trace(\rho^2)$ in Equation~(\ref{7_eqn:measure_def2}).
\end{proof}

\section{Relationship to quantum discord}\label{7_scn:discord}
We now show that for arbitrary $\rho\in\dens$, $D(\rho)$ is zero if and only if the quantum discord~\cite{ollivier01a,henderson01a} $\discm$ of $\rho$ is zero. (The discord $\discm$ was defined in Section~\ref{0_sscn:nonclassical}.)

The main fact we leverage about the discord here is the following.
\begin{theorem}[Ollivier and Zurek~\cite{ollivier01a}]\label{7_thm:olzu}
    For $\rho\in\dens$, $\discm=0$ if and only if
    \begin{equation}\label{7_eqn:disc_exp_proof_eq1}
        \rho = \sum_j \Pi_j^A\otimes I^B\rho \Pi_j^A\otimes I^B,
    \end{equation}
    for some complete set of rank $1$ projectors $\proja$.
\end{theorem}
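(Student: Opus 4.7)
A direct computation handles this direction. Assuming $\rho = \sum_j(\Pi_j^A\otimes I^B)\rho(\Pi_j^A\otimes I^B)$, one writes $\rho = \sum_j p_j\,\Pi_j^A\otimes\rho_B^j$ with $p_j := \trace[(\Pi_j^A\otimes I^B)\rho]$ and $\rho_B^j := \trace_A[(\Pi_j^A\otimes I^B)\rho]/p_j$. Since the $\Pi_j^A$ are orthogonal rank-one projectors, one verifies $S(\rho_A) = -\sum_j p_j\log p_j$ and $S(\rho) = -\sum_j p_j\log p_j + \sum_j p_j S(\rho_B^j)$, while by definition $S(\rho_{B|\set{\Pi_j^A}}) = \sum_j p_j S(\rho_B^j)$. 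This choice of measurement therefore attains discord objective value $0$. Combined with the standard nonnegativity $\discm\ge 0$ (which follows from monotonicity of mutual information under the local CPTP map $\sigma\mapsto\sum_j(\Pi_j^A\otimes I^B)\sigma(\Pi_j^A\otimes I^B)$), this yields $\discm = 0$.

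\textbf{Hard direction ($\Rightarrow$).} The plan is to reformulate the discord in a way that exposes a data-processing inequality. Writing $\rho' := \sum_j(\Pi_j^A\otimes I^B)\rho(\Pi_j^A\otimes I^B)$ and letting $S(\sigma\|\tau):=\trace(\sigma\log\sigma) - \trace(\sigma\log\tau)$ denote the quantum relative entropy, I would first prove
\begin{equation}
    \discm \;=\; \min_{\set{\Pi_j^A}}\bigl[S(\rho\|\rho') \;-\; S(\rho_A\|\rho'_A)\bigr].
\end{equation}
The identity follows by exploiting the classical-quantum block structure of $\rho' = \sum_j p_j\,\Pi_j^A\otimes\rho_B^j$: since its blocks have orthogonal supports, $\log\rho'$ decomposes along $\set{\Pi_j^A}$, and using $(\Pi_j^A\otimes I^B)\rho(\Pi_j^A\otimes I^B) = p_j\Pi_j^A\otimes\rho_B^j$ together with cyclicity of the trace, one computes $-\trace(\rho\log\rho') = S(\rho')$, hence $S(\rho\|\rho') = S(\rho') - S(\rho)$. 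An analogous argument at the level of system $A$ gives $S(\rho_A\|\rho'_A) = S(\rho'_A) - S(\rho_A)$. Substituting back into the discord expression yields the claim, and monotonicity of relative entropy under the partial trace $\trace_B$ guarantees the bracketed quantity is nonnegative for every measurement.

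\textbf{From saturation of data processing to invariance.} The assumption $\discm = 0$ now produces a measurement $\set{\Pi_j^A}$ saturating monotonicity of relative entropy under $\trace_B$. The main obstacle, and the most delicate step, is converting this saturation into the strict equality $\rho = \rho'$. My approach is to argue via an exchange argument that, without loss of generality, the optimizer may be refined to lie within the eigenspaces of $\rho_A$. For any such choice, $\rho'_A = \rho_A$ and thus $S(\rho_A\|\rho'_A) = 0$, which combined with saturation forces $S(\rho\|\rho') = 0$ and hence $\rho = \rho'$ by strict positivity of the relative entropy. Should the exchange argument prove subtle in the presence of degeneracies in the spectrum of $\rho_A$, a robust alternative is to invoke Petz's theorem on saturation of data processing: saturation produces a recovery CPTP map $R$ on $\mathcal{A}$ with $(R\otimes I^B)(\rho') = \rho$, and one then exploits the classical-quantum block structure of $\rho'$ --- together with the fact that $R$ must fix $\rho'_A$ --- to conclude that $\rho$ itself has the same block decomposition, i.e., $\rho = \rho'$.
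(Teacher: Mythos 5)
First, a point of comparison: the paper does not prove this statement. It is imported as a black box from Ollivier and Zurek~\cite{ollivier01a}, so there is no in-paper proof to measure your argument against; I can only assess it on its own terms.

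Your easy direction is correct, and so is the reformulation $\discm=\min_{\set{\Pi_j^A}}\bigl[S(\rho\|\rho')-S(\rho_A\|\rho'_A)\bigr]$: the block structure of $\rho'$ does give $-\trace(\rho\log\rho')=S(\rho')$ and $-\trace(\rho_A\log\rho'_A)=S(\rho'_A)$, so the bracket collapses to $S(\rho_A)-S(\rho)+\sum_j p_j S(\rho_B^j)$ as required. The genuine gap is the last step, which is where the entire difficulty of the forward direction sits. The ``exchange argument'' is circular as stated: a measurement diagonalizing $\rho_A$ need not achieve discord zero, and the only evident reason a zero-achieving, $\rho_A$-diagonalizing measurement exists is that $\rho$ is already of the claimed form. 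The Petz fallback is mis-stated for the channel appearing in your own identity: saturation of data processing for $\trace_B$ produces a recovery map $R$ from operators on $\mathcal{A}$ to operators on $\mathcal{A}\otimes\mathcal{B}$ satisfying $R(\rho_A)=\rho$ and $R(\rho'_A)=\rho'$, not a map of the form $R\otimes I^B$ acting on $\rho'$ (that form would arise from the \emph{local dephasing} channel applied to the pair $(\rho,\rho_A\otimes\rho_B)$, a reformulation you never make). Even with the correct Petz map, the assertion that $\rho$ ``has the same block decomposition'' is precisely the missing content: writing $R(\rho_A)=\sum_{j,k}\bra{j}\rho_A\ket{k}\,\ketbra{j}{k}\otimes(\rho_B^j)^{1/2}(\rho_B^k)^{1/2}$ and imposing $\trace_B(R(\rho_A))=\rho_A$ forces $\trace[(\rho_B^j)^{1/2}(\rho_B^k)^{1/2}]=1$, hence $\rho_B^j=\rho_B^k$ by Cauchy--Schwarz, for every pair with $\bra{j}\rho_A\ket{k}\neq0$; grouping indices into the resulting classes, $\rho$ becomes a direct sum of product blocks, and rediagonalizing $\rho_A$ within each class yields the claimed invariant basis. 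None of this appears in your proposal, so as written the forward direction is not yet established.
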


We now prove the main result of this section. The first part of the proof involves a new characterization of the set of zero discord quantum states $\rho$ in terms of the basis elements $\sigma^A_i$ from the Fano form of $\rho$. Key to this characterization is the absence of non-diagonal $\sigma^A_i$ in the expansion of $\rho$. In the proofs below, we assume the basis elements $\sigma_i^A$ for $\spa{A}$ come from the set $\set{I, U_{pq}, V_{pq},W_{r}}_{p,q,r}^A$ from Section~\ref{7_scn:prelim} (analogously for $\spa{B}$).

\begin{theorem}\label{7_thm:discequiv}
    Let $\rho\in\dens$. Then $\discm=0$ if and only if there exists a local unitary $V^A$ such that
    \begin{equation}
        \trace\left(\left(V^A\otimes I^B\right)\rho\left({V^A}^\dagger\otimes I^B\right) \left(\sigma_i^A\otimes \sigma_j^B\right)\right)=0
    \end{equation}
    for all $\sigma_i^A\in\set{U_{pq},V_{pq}}^A$ and all $\sigma^B_j\in\set{I, U_{pq}, V_{pq},W_{r}}^B$. The same characterization holds for $D(\rho)=0$.
\end{theorem}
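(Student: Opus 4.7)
The proof naturally splits into two parts based on the two characterizations (discord and $D(\rho)$), with a common unifying observation: the condition on the trace expressions is equivalent to a block-diagonal structure of the rotated state $\rho':=(V^A\otimes I^B)\rho(V^{A\dagger}\otimes I^B)$. The plan is to first prove this equivalence via the Fano decomposition, and then connect each of the two notions ($\discm=0$ and $D(\rho)=0$) to this block-diagonal form.

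\textbf{Step 1: Fano reformulation.} I would begin by expanding $\rho'$ in the Fano form of Equation~(\ref{7_eqn:fano}), using the explicit Hermitian basis $\{I,U_{pq},V_{pq},W_r\}$ from Equations~(\ref{7_eqn:Ugenerators})--(\ref{7_eqn:Wgenerators}). Since the $\sigma_i^A$ together with $I^A$ form an orthogonal basis for $\HERM$, and likewise for the B-side, the hypothesis that $\trace(\rho'(\sigma_i^A\otimes\sigma_j^B))=0$ for every off-diagonal $\sigma_i^A\in\{U_{pq},V_{pq}\}^A$ and every $\sigma_j^B$ (including $\sigma_j^B=I^B$) is exactly the statement that all coefficients multiplying off-diagonal A-side basis elements vanish in the Fano expansion of $\rho'$. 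Because $\{I,W_r\}$ spans the diagonal Hermitian matrices on $\spa{A}$, this is equivalent to $\rho'=\sum_{j=1}^{M}\ketbra{j}{j}\otimes\tau_j$ for some Hermitian $\tau_j\in\HH(\spa{B})$, which by positivity of $\rho$ forces $\tau_j\succeq 0$ with $\sum_j\trace(\tau_j)=1$. Equivalently, $\rho=\sum_j\Pi_j^A\otimes\tau_j$ with rank-one projectors $\Pi_j^A:=V^{A\dagger}\ketbra{j}{j}V^A$.

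\textbf{Step 2: Equivalence with $\discm=0$.} Combining Step~1 with Theorem~\ref{7_thm:olzu} gives this direction almost immediately. If $\discm=0$, Theorem~\ref{7_thm:olzu} yields a rank-one projective measurement $\{\Pi_j^A\}$ leaving $\rho$ invariant; letting $V^A$ be any unitary diagonalizing the $\Pi_j^A$'s puts $\rho'$ in the block-diagonal form from Step~1, and I verify directly that the trace conditions hold. Conversely, if the trace conditions hold then Step~1 gives exactly the form of Equation~(\ref{7_eqn:disc_exp_proof_eq1}), so $\discm=0$ by Theorem~\ref{7_thm:olzu}.

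\textbf{Step 3: Equivalence with $D(\rho)=0$.} The key observation is that because $\rous$ is the orbit of $\operatorname{diag}(e^{2\pi i/M},\ldots,e^{2\pi iM/M})$ under conjugation by $U(\spa{A})$ and hence compact, the infimum in Equation~(\ref{7_eqn:measure_def}) is attained. Thus $D(\rho)=0$ iff there exists $U_A\in\rous$ with $\fnorm{\rho-(U_A\otimes I^B)\rho(U_A^\dagger\otimes I^B)}=0$, i.e., $[U_A\otimes I^B,\rho]=0$. The critical feature of RU unitaries is that their eigenvalues (the $M$-th roots of unity) are \emph{distinct}, so $U_A=\sum_j e^{2\pi ij/M}\ketbra{c_j}{c_j}$ for some orthonormal basis $\{\ket{c_j}\}$, and commutation with $\rho$ forces $\rho=\sum_j\ketbra{c_j}{c_j}\otimes\tau_j$. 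Choosing $V^A$ to map $\ket{c_j}\mapsto\ket{j}$ reduces this to the block-diagonal form of Step~1. Conversely, given a state of that form, I explicitly construct $U_A:=V^{A\dagger}\operatorname{diag}(e^{2\pi i/M},\ldots,e^{2\pi iM/M})V^A\in\rous$, which commutes with $\rho$ and witnesses $D(\rho,U_A)=0$.

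\textbf{Main obstacle.} The technically subtle step is Step~3: I need that the distinct-eigenvalue structure of RU unitaries is what forces the commutation $[U_A\otimes I^B,\rho]=0$ to produce a \emph{rank-one} projective decomposition on $\spa{A}$, which is what matches the zero-discord characterization from Theorem~\ref{7_thm:olzu}. Had the eigenvalues of $U_A$ been allowed to repeat, commutation would only yield a coarser projective decomposition, breaking the equivalence to $\discm=0$. This is also the place that foreshadows the generalization alluded to in the chapter's introduction: relaxing the RU condition to allow eigenvalues of multiplicity $k$ leads precisely to a generalized discord based on rank-$k$ projective measurements.
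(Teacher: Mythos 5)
Your proposal is correct and follows essentially the same route as the paper's proof: both directions hinge on the Fano decomposition in a diagonalizing basis, Theorem~\ref{7_thm:olzu} for the discord half, and the fact that RU unitaries have non-degenerate spectra so that $[\UA\otimes I^B,\rho]=0$ forces the off-diagonal blocks of $\rho$ to vanish. Your Step~1 packaging of the trace conditions as block-diagonality of $\rho'$ is a slightly cleaner way of stating what the paper does coefficient-by-coefficient with $\Phi(U_{pq})$ and $\Phi(V_{pq})$, and your remark on attainment of the minimum via compactness of $\rous$ fills in a point the paper leaves implicit, but the substance is the same.
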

\begin{proof}
We prove the equivalent statement that $\delta(\rho)=0$ if and only if there exists an orthonormal basis $\set{\ket{k}}$ for $\A$ such that, for basis elements $\sigma_i^A$ constructed with respect to $\set{\ket{k}}$, we have $\trace(\rho (\sigma_i^A\otimes \sigma_j^B))=0$ for all $\sigma_i^A\in\set{U_{pq},V_{pq}}$ (and similarly for $D(\rho)=0$).

    Suppose $\discm=0$. Then by Theorem~\ref{7_thm:olzu}, there exists a complete set of rank 1 projectors $\proja$ such that Equation~(\ref{7_eqn:disc_exp_proof_eq1}) holds. Let $\set{\ket{k}}$ be the basis onto which $\proja$ projects, and define $\Phi(C):=\sum_j\Pi_j^AC\Pi_j^A$. By constructing the basis elements $\sigma_i^A$ in Equation~(\ref{7_eqn:fano}) using $\set{\ket{k}}$, we thus have
    \begin{eqnarray}\label{7_eqn:disc_exp_proof_eq2}
        \rho &=& \frac{1}{MN}\left[ I^A\otimes I^B + {I^A}\otimes\ve{r}^B\cdot\ve{\sigma}^B+\hspace{10mm}\right.\\&&
                    \left.\sum_{i=1}^{M^2-1}\Phi(\sigma^A_i)\otimes\left(r^A_iI^B+\sum_{j=1}^{N^2-1}T_{ij}\sigma^B_j\right) \right] .\nonumber
    \end{eqnarray}
    Now, for all $\sigma_i^A\in\set{W_r}$, we clearly have $\Phi(\sigma_i^A)=\sigma_i^A$. For $\sigma_i^A\in\set{U_{pq},V_{pq}}$, however, $\Phi(\sigma_i^A)=0$. Thus, in order for Equation~(\ref{7_eqn:disc_exp_proof_eq1}) to hold, we must have $r_i^A=T_{ij}=0$ for all basis elements $\sigma_i^A\in\set{U_{pq},V_{pq}}$, which by definition means $\trace(\rho(\sigma_i^A\otimes \sigma_j^B))=0$ for all $\sigma_i^A\in\set{U_{pq},V_{pq}}^A$, as desired. To show that this implies $D(\rho)=0$, construct $U^A\in\rous$ as diagonal in basis $\set{\ket{k}}$ and define $\Phi(C):=U^AC {U^A}^\dagger$. Then since in Equation~(\ref{7_eqn:disc_exp_proof_eq2}), we have $\Phi(\sigma_i^A)=\sigma_i^A$ for any $\sigma_i^A\in\set{I,W_r}$, the claim follows.

    To show the converse, assume $D(\rho,U^A)=0$ for some $U^A\in\rous$. Then, construct the basis elements $\sigma^A_i$ with respect to a diagonalizing basis $\set{\ket{k}}$ for $U^A$ and define $\Phi(C):=U^AC {U^A}^\dagger$. It follows that for any $p$ and $q$,
    \begin{eqnarray}
        \Phi(U_{pq})&=&e^{i(\theta_p-\theta_q)}\ketbra{p}{q}+e^{-i(\theta_p-\theta_q)}\ketbra{q}{p},\label{7_eqn:u1}\\
        \Phi(V_{pq})&=&-ie^{i(\theta_p-\theta_q)}\ketbra{p}{q}+ie^{-i(\theta_p-\theta_q)}\ketbra{q}{p}\label{7_eqn:u2}.
    \end{eqnarray}

    \noindent Consider now an arbitrary term $(c_{u}\sigma^A_u+c_v\sigma^A_v)\otimes\sigma^B_j$ from the Fano form of $\rho$ where $\sigma^A_u=U_{pq}$ and $\sigma^B_v=V_{pq}$ for some choice of $p$ and $q$. Since Equations~(\ref{7_eqn:u1}) and~(\ref{7_eqn:u2}) imply that $U^A$ can only map $U_{pq}$ to $V_{pq}$ and vice versa, it follows that in order for $D(\rho,U^A)=0$ to hold, we must have
$        \Phi(c_{u}\sigma^A_u+c_v\sigma^A_v)=c_{u}\sigma^A_u+c_v\sigma^A_v.
$
    This leads to the system of equations
    \begin{eqnarray}
        c_u-ic_v&=&e^{i(\theta_p-\theta_q)}(c_u-ic_v)\\
        c_u+ic_v&=&e^{-i(\theta_p-\theta_q)}(c_u+ic_v).
    \end{eqnarray}
    We conclude that if either $c_u\neq0$ or $c_v\neq0$, it must be that $\theta_p=\theta_q$ in order for $D(\rho)=0$ to hold. However, since all eigenvalues of $U^A$ are distinct by definition, this is impossible. Thus, $\trace(\rho (\sigma_i^A\otimes \sigma_j^B))=0$ for all $\sigma_i^A\in\set{U_{pq},V_{pq}}$, as desired. To see that this implies $\discm=0$, simply now choose $\proja$ as the projection onto $\set{\ket{k}}$. Then, defining $\Phi(C):=\sum_j\Pi_j^AC\Pi_j^A$ and applying the same arguments from the forward direction to Equation~(\ref{7_eqn:disc_exp_proof_eq2}), we conclude that $\rho$ is invariant under $\proja$. By Theorem~\ref{7_thm:olzu}, we have $\discm=0$, completing the proof.
\end{proof}

Theorem~\ref{7_thm:discequiv} shows that $D(\rho)$ defined in Equation~(\ref{7_eqn:measure_def}) is zero precisely for the set of states classically correlated in $\spa{A}$. In other words, unlike the Fu distance~\cite{DG08}, $D(\rho)$ is indeed a \emph{faithful} non-classicality measure. The proof of Theorem~\ref{7_thm:discequiv} does, however, have a curiosity --- the key property the proof relies on is that all $U^A\in\rous$ have non-degenerate spectra. Interestingly, this is the mixed-state analogue of the pure-state result of Reference~\cite{MAGGDI11}, where it was shown that a non-degenerate spectrum suffices to conclude $D(\ketbra{\psi}{\psi})$ is a faithful {entanglement monotone} for pure states $\ket{\psi}$. Specifically, Reference~\cite{MAGGDI11} shows that if in Equation~(\ref{7_eqn:measure_def}) we minimize over $U^A$ with eigenvalues of multiplicity at most $k$ (with at least one eigenvalue of multiplicity $k$), then $D(\ketbra{\psi}{\psi})=0$ if and only if $\ket{\psi}$ has Schmidt rank at most $k$. Could there be an analogue of this more general result in the mixed-state setting of non-classicality? It turns out the answer is \emph{yes}.

Let $\ve{v}\in\nats^M$ such that $\sum_{j=1}^M v_jj=M$. Then, consider an arbitrary (i.e.\ not necessarily RU) unitary $\UAv$ which has precisely $v_j$ distinct eigenvalues with multiplicity $j$. For example, $\UAv\in\rous$ has $\ve{v}=(M,0,\ldots,0)$ since it has $M$ distinct eigenvalues of multiplicity $1$. Similarly, if $\ve{v}=(0,0,\ldots,1)$, then $\UAv$ is just the identity (up to phase), and if $\ve{v}=(M-4,2,\ldots,0)$ then $\UAv$ has $M-4$ distinct eigenvalues of multiplicity $1$, and two distinct eigenvalues with multiplicity $2$ each. Now, corresponding to any $\UAv$ is a complete projective measurement $\projv$ which consists precisely of $v_j$ projectors of rank $j$. The correspondence is simple: Let $\lambda$ be an eigenvalue of $\UAv$ with multiplicity $j$, i.e.\ the projector $\Pi_\lambda$ onto its eigenspace has rank $j$. Then $\Pi_\lambda\in \projv$. It is easy to see that similarly, corresponding to any $\projv$ is a $\UAv$ (assuming we are not concerned with the precise eigenvalues of $\UAv$, as is this case here). We can now state the following.

\begin{theorem}\label{7_thm:gendiscequiv}
    Let $\rho\in\dens$ and $\ve{v}\in\nats^M$ such that $\sum_{j=1}^M v_jj=M$. Then, there exists a complete projective measurement $\projv$ such that
    \begin{equation}
        \rho = \sum_j \Pi_j^A\otimes I^B\rho \Pi_j^A\otimes I^B\label{7_eqn:invar}
    \end{equation}
    if and only if there exists a $\UAv\in \mathcal{U}(\spa{A})$ with $D(\rho,\UAv)=0$.
\end{theorem}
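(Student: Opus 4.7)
My plan is to establish both directions through the correspondence between a unitary $U^A_{\ve{v}}$ with spectral multiplicities prescribed by $\ve{v}$ and the complete projective measurement $\{\Pi^A_j\}_{\ve{v}}$ obtained from its spectral projectors. The key observation is that $D(\rho, U^A_{\ve{v}}) = 0$ is equivalent to the commutation relation $[\rho, U^A_{\ve{v}} \otimes I^B] = 0$, so the whole statement reduces to relating this commutator to invariance under the dephasing channel $\Phi(X) := \sum_j \Pi^A_j X \Pi^A_j$. In contrast to the proof of Theorem~\ref{7_thm:discequiv}, I would avoid expanding in the Fano basis (which becomes combinatorially awkward once repeated eigenvalues are allowed) and instead use a direct spectral argument.

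For the forward direction, suppose $\{\Pi^A_j\}_{\ve{v}}$ satisfies Equation~(\ref{7_eqn:invar}). I would construct $U^A_{\ve{v}} = \sum_j \lambda_j \Pi^A_j$ by choosing any set of distinct phases $\lambda_j = e^{i\theta_j}$; by construction, the multiplicity profile of $U^A_{\ve{v}}$ matches $\ve{v}$. Writing $\tilde{\Pi}_j := \Pi^A_j \otimes I^B$, invariance gives
\begin{equation}
(U^A_{\ve{v}} \otimes I^B)\, \rho\, (U^A_{\ve{v}} \otimes I^B)^\dagger = \sum_{j,k} \lambda_j \overline{\lambda_k}\, \tilde{\Pi}_j \rho \tilde{\Pi}_k = \sum_j \tilde{\Pi}_j \rho \tilde{\Pi}_j = \rho,
\end{equation}
where the middle equality uses the hypothesis $\tilde{\Pi}_j \rho \tilde{\Pi}_k = 0$ for $j \neq k$, which itself follows from Equation~(\ref{7_eqn:invar}) by sandwiching with $\tilde{\Pi}_j$ on the left and $\tilde{\Pi}_k$ on the right and using orthogonality of the $\Pi^A_j$. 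Hence $D(\rho, U^A_{\ve{v}}) = 0$.

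For the reverse direction, suppose $D(\rho, U^A_{\ve{v}}) = 0$, i.e.\ $[\rho, U^A_{\ve{v}} \otimes I^B] = 0$. Let $U^A_{\ve{v}} = \sum_j \lambda_j \Pi^A_j$ be its spectral decomposition, where the $\lambda_j$ are pairwise distinct and the projector ranks realize the profile $\ve{v}$; thus $\{\Pi^A_j\}_{\ve{v}}$ is a valid complete projective measurement of the prescribed type. Since the operators $U^A_{\ve{v}} \otimes I^B$ and $\rho$ are normal and commute, they diagonalize simultaneously, so $\rho$ is block-diagonal with respect to the eigenspaces of $U^A_{\ve{v}} \otimes I^B$, which are exactly $\mathrm{Im}(\Pi^A_j) \otimes \spa{B}$. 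Formally, for $j \neq k$, $\tilde{\Pi}_j \rho \tilde{\Pi}_k = \lambda_j^{-1}\overline{\lambda_k}^{-1}\, \tilde{\Pi}_j (U^A_{\ve{v}} \otimes I^B)\, \rho\, (U^A_{\ve{v}} \otimes I^B)^\dagger \tilde{\Pi}_k = (\lambda_j^{-1}\overline{\lambda_k}^{-1})\lambda_j \overline{\lambda_k}\, \tilde{\Pi}_j \rho \tilde{\Pi}_k$ after applying the commutation relation, and iterating this argument with the distinctness of the $\lambda_j$ forces $\tilde{\Pi}_j \rho \tilde{\Pi}_k = 0$. Summing $\sum_{j,k} \tilde{\Pi}_j \rho \tilde{\Pi}_k = \rho$ (since $\sum_j \Pi^A_j = I^A$) and retaining only the diagonal terms yields Equation~(\ref{7_eqn:invar}).

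The main obstacle I anticipate is purely notational: one must carefully track that the \emph{ranks} of the spectral projectors of $U^A_{\ve{v}}$ realize the multiplicity vector $\ve{v}$ in both directions, so that the constructed measurement $\{\Pi^A_j\}_{\ve{v}}$ (respectively the constructed unitary $U^A_{\ve{v}}$) is of the prescribed type. There is no deep technical difficulty here because, unlike the proof of Theorem~\ref{7_thm:discequiv}, we no longer need to distinguish between diagonal and off-diagonal $\sigma^A_i$ generators---the spectral/commutation argument handles all multiplicities uniformly, which is in fact why the proof simplifies in this generalized setting.
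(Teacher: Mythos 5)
Your proof is correct, but it takes a genuinely different route from the paper's. The paper proves this theorem by extending the Fano-form argument of Theorem~\ref{7_thm:discequiv}: it expands $\rho$ in the Hermitian generators $U_{pq},V_{pq},W_r$ built from a diagonalizing basis of $\UAv$, tracks which generators are preserved under conjugation by $\UAv$ versus dephasing by $\projv$, and then only works out the representative case $\ve{v}=(M-2,1,0,\ldots,0)$, asserting the rest is analogous. You instead observe that $D(\rho,\UAv)=0$ is equivalent to $[\rho,\UAv\otimes I^B]=0$ and run a direct spectral argument: in the forward direction Equation~(\ref{7_eqn:invar}) forces $\tilde{\Pi}_j\rho\tilde{\Pi}_k=0$ for $j\neq k$, so conjugation by $\sum_j\lambda_j\Pi_j^A\otimes I^B$ acts trivially; in the reverse direction, sandwiching the identity $(\UAv\otimes I^B)\rho(\UAv\otimes I^B)^\dagger=\rho$ between $\tilde{\Pi}_j$ and $\tilde{\Pi}_k$ gives $\tilde{\Pi}_j\rho\tilde{\Pi}_k=\lambda_j\overline{\lambda_k}\,\tilde{\Pi}_j\rho\tilde{\Pi}_k$, and distinctness of the unit-modulus eigenvalues kills the off-diagonal blocks. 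This buys uniformity over all multiplicity profiles $\ve{v}$ and avoids the generator bookkeeping entirely, at the cost of being less parallel to the paper's Theorem~\ref{7_thm:discequiv}. One small blemish: your displayed chain in the reverse direction, as literally written, reduces to the tautology $\tilde{\Pi}_j\rho\tilde{\Pi}_k=(\lambda_j^{-1}\overline{\lambda_k}^{-1})\lambda_j\overline{\lambda_k}\,\tilde{\Pi}_j\rho\tilde{\Pi}_k$; the non-trivial step is to equate $\tilde{\Pi}_j(\UAv\otimes I^B)\rho(\UAv\otimes I^B)^\dagger\tilde{\Pi}_k$ computed two ways (once via the eigenvalue relation, once via $D(\rho,\UAv)=0$), yielding $(1-\lambda_j\overline{\lambda_k})\tilde{\Pi}_j\rho\tilde{\Pi}_k=0$ in a single application rather than by ``iterating.'' With that line repaired, the argument is complete.
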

\begin{proof}
    The proof follows that of Theorem~\ref{7_thm:discequiv}, so we outline the differences. Here, $\UAv$ and $\projv$ will be related through the correspondence outlined above, and the basis elements $\sigma_i^A$ are constructed with respect to a diagonalizing basis $\set{\ket{k}}$ for $\UAv$ (which by definition also diagonalizes each $\Pi_j^A\in\projv$). For simplicity, we discuss the case of $\ve{v}=(M-2,1,0,\ldots,0)$; all other cases proceed analogously.

    Going in the forward direction, suppose $\Pi^A_j\in\projv$ projects onto $\spa{S}_{pq}:=\operatorname{span}(\ket{p},\ket{q})$. Then, in Equation~(\ref{7_eqn:disc_exp_proof_eq2}), $\Phi(\sigma_i^A)=\sigma_i^A$ for $\sigma_i^A=U_{pq}$ and $\sigma_i^A=V_{pq}$. In other words, now we can have $r_i^A\neq 0$ and $T_{ij}\neq 0$ (however, note we still have $r_{m\neq i}^A=0$ and $T_{m\neq i,j}=0$). Since $\UAv$ has a degenerate eigenvalue on $S_{pq}$, however, we have by Equations~(\ref{7_eqn:u1}) and~(\ref{7_eqn:u2}) that $\UAv$ acts invariantly on $\sigma_i^A$ as well (since $\theta_p=\theta_q$). The converse is similar; namely, suppose $\UAv$ has a degenerate eigenvalue on $\mathcal{S}_{pq}$. Then the projector onto the corresponding two-dimensional eigenspace $\Pi^A_j\in\projv$ is $\Pi^A_j=\ketbra{p}{p}+\ketbra{q}{q}$. It thus follows by the same argument as above that both $\UAv$ and $\Pi^A_j$ act invariantly on $U_{pq}$ and $V_{pq}$.
\end{proof}

From this general theorem, we can re-derive as a simple corollary the pure state result of Reference~\cite{MAGGDI11} mentioned earlier, which we rephrase in our terminology as follows.

\begin{corollary}
    Let $\ket{\psi}=\sum_{i=1}^r\alpha_i\ket{\psi^A_i}\ket{\psi^B_i}$ be the Schmidt decomposition of $\ket{\psi}\in\spa{A}\otimes\spa{B}$. Then, there exists $\UAv\in\mathcal{U}(\spa{A})$ with $v_k\geq 1$ (i.e.\ $\UAv$ has an eigenvalue of multiplicity $k$), $v_{k'>k}=0$ (all eigenvalues of $\UAv$ have multiplicity at most $k$), and $D(\ketbra{\psi}{\psi},\UAv)=0$ if and only if $k\geq r$.
\end{corollary}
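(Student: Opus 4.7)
The plan is to apply Theorem~\ref{7_thm:gendiscequiv} to the pure state $\rho=\ketbra{\psi}{\psi}$ and translate the resulting measurement-invariance condition into a statement about Schmidt ranks. By Theorem~\ref{7_thm:gendiscequiv}, the existence of $\UAv$ with $D(\ketbra{\psi}{\psi},\UAv)=0$ (with the prescribed spectral multiplicity profile: at least one eigenvalue of multiplicity $k$ and none of multiplicity greater than $k$) is equivalent to the existence of a complete projective measurement $\projv$ (with at least one projector of rank $k$ and all projectors of rank at most $k$) such that $\ketbra{\psi}{\psi}=\sum_j(\Pi_j^A\otimes I^B)\ketbra{\psi}{\psi}(\Pi_j^A\otimes I^B)$. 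So I would reduce the whole proof to analyzing when such a measurement exists.

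The key observation I would use is that the right-hand side above is a sum of positive semidefinite operators whose total equals a rank-one operator. Hence at most one term in the sum can be nonzero: there must exist a unique index $j^*$ with $(\Pi_{j^*}^A\otimes I^B)\ket{\psi}=\ket{\psi}$ and $(\Pi_j^A\otimes I^B)\ket{\psi}=0$ for $j\neq j^*$. Using the Schmidt decomposition $\ket{\psi}=\sum_{i=1}^r\alpha_i\ket{\psi_i^A}\ket{\psi_i^B}$ and linear independence of the Schmidt vectors $\{\ket{\psi_i^B}\}$, this condition is equivalent to $\Pi_{j^*}^A\ket{\psi_i^A}=\ket{\psi_i^A}$ for every $i\in[r]$, i.e., the range of $\Pi_{j^*}^A$ contains the ``Schmidt support'' $\spa{S}:=\operatorname{span}(\{\ket{\psi_i^A}\}_{i=1}^r)$ of $\ket{\psi}$ on $\spa{A}$.

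From here the two directions are easy. For the forward direction, if such a measurement exists then $\operatorname{rank}(\Pi_{j^*}^A)\geq\dim(\spa{S})=r$, and since every projector has rank at most $k$ by hypothesis, we conclude $k\geq r$. For the converse, given $k\geq r$ I would construct $\Pi_{j^*}^A$ of rank exactly $k$ whose range contains $\spa{S}$, then partition the orthogonal complement into projectors of ranks dictated by $\ve{v}$ (which is always possible because $\sum_j v_j j=M$ and one projector of rank $k$ has already been accounted for; the remaining projectors need only live in the $(M-k)$-dimensional orthogonal complement of $\Pi_{j^*}^A$). By construction this measurement leaves $\ketbra{\psi}{\psi}$ invariant, completing the equivalence.

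I do not anticipate any significant obstacle, as this is essentially a routine unpacking of Theorem~\ref{7_thm:gendiscequiv} in the pure-state case. The only minor technicality is verifying that the multiplicity profile $\ve{v}$ can always be realized on the orthogonal complement in the converse direction, but this follows immediately from the assumption $v_{k'>k}=0$ together with $\sum_j v_j j=M$.
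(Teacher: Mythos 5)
Your proposal is correct and follows essentially the same route as the paper: both reduce the corollary to Theorem~\ref{7_thm:gendiscequiv} and observe that invariance under $\projv$ forces one projector to contain the Schmidt support $\operatorname{span}(\{\ket{\psi_i^A}\})$, whence the rank comparison with $r$. The only difference is that you spell out the rank-one/positivity argument for why a unique projector must absorb $\ket{\psi}$, a step the paper simply calls ``clear.''
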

\begin{proof}
    Suppose $k\geq r$. Then, by defining $\projv^k$ such that $v_k\geq 1$ and $v_{k'>k}=0$, one can choose a $\projv^k$ such that Equation~(\ref{7_eqn:invar}) holds for $\rho=\ketbra{\psi}{\psi}$ (i.e.\ simply project onto $\operatorname{span}(\set{\ket{\psi^A_i}})$). By Theorem~\ref{7_thm:gendiscequiv}, this implies there exists a $\UAv$ with $v_k\geq 1$ and $v_{k'>k}=0$ achieving $D(\ketbra{\psi}{\psi},\UAv)=0$. Conversely, if $k<r$, then clearly no such $\projv^k$ such that Equation~(\ref{7_eqn:invar}) holds exists. By Theorem~\ref{7_thm:gendiscequiv}, this implies that no $\UA$ with an eigenvalue of multiplicity at most $k$ and $D(\ketbra{\psi}{\psi},\UA)=0$ exists, as desired.
\end{proof}

We close this section with two final comments. First, given Theorem~\ref{7_thm:discequiv}, one might ask whether a stronger relationship between $D(\rho)$ and $\discm$ holds. For example, could it be that $D(\rho)\geq\discm$ for all $\rho$? This simplest type of relationship is ruled out easily via Theorem~\ref{7_thm:werner} and Equation~(\ref{7_eqn:discWerner}), since for $d=2$ and $p=2/3$, $D(\rho)=1/9 \geq \discm \approx 0.01614$, while for $d=50$ and $p=2/3$, $D(\rho)\approx 0.00627 \leq \discm\approx 0.07111$.

Second, note that Theorem~\ref{7_thm:gendiscequiv} reduces to Theorem~\ref{7_thm:discequiv} if we choose $\ve{v}=(M,0,\ldots,0)$. This suggests defining a \emph{generalized quantum discord}, denoted $\delta_{\ve{v}}(\rho)$, which is analogous to $\discm$, except that now we use the class of measurements $\projv$ in the definition of discord (see Equation~\ref{4_eqn:discord_def}). For example, $\delta_{(M,0,\ldots,0)}(\rho)=\discm$. We hope the study of $\delta_{\ve{v}}(\rho)$ would prove fruitful in its own right.

\section{Maximally non-classical, yet separable, $(2\times N)$-dimensional states}\label{7_scn:maxnc}
In this section, we characterize the set of maximally non-classical, yet separable, $(2\times N)$-dimensional states of rank at most $2$, as quantified by $D(\rho)$. To do so, consider separable state
\begin{equation}\label{7_eqn:sepstate}
    \rho = \sum_{i=1}^{n} p_i \ketbra{a_i}{a_i}\otimes\ketbra{b_i}{b_i},
\end{equation}
where $\sum_i p_i = 1$, $\ket{a_i}\in\complex^2$, $\ket{b_i}\in\complex^N$. Via simple algebraic manipulation, one then finds that $D(\rho,U_A)$ for any given $U_A\in\mathcal{U}(\spa{A})$ is given by
\begin{equation}\label{7_eqn:sepstateD} \sqrt{\sum_{i=1}^n\sum_{j=1}^np_ip_j\abs{\braket{b_i}{b_j}}^2(\abs{\braket{a_i}{a_j}}^2-\abs{\bra{a_i}U_A\ket{a_j}}^2)}.
\end{equation}
We begin by proving a simple but useful upper bound on $D(\rho)$ which depends solely on $n$.

\begin{lemma}\label{7_lem:upperbound}
    Let $\rho$ be a separable state as given by Equation~(\ref{7_eqn:sepstate}). Then $D(\rho)\leq 1-\max_i p_i \leq1-\frac{1}{n}$.
\end{lemma}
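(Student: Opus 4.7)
The plan is to exhibit an explicit RU unitary $U_A$ that achieves the claimed bound, starting from the expression in Equation~(\ref{7_eqn:sepstateD}). Let $i^{*}:=\arg\max_i p_i$, and choose $U_A = 2\ketbra{a_{i^{*}}}{a_{i^{*}}} - I$, which is a valid element of $\rous$ since its eigenvalues are $\pm 1$ (the $2$nd roots of unity). The key observation I would exploit is that this $U_A$ acts trivially on inner products involving $\ket{a_{i^{*}}}$: for any $\ket{a_j}$,
\begin{equation}
\bra{a_{i^{*}}} U_A \ket{a_j} = 2\braket{a_{i^{*}}}{a_{i^{*}}}\braket{a_{i^{*}}}{a_j} - \braket{a_{i^{*}}}{a_j} = \braket{a_{i^{*}}}{a_j},
\end{equation}
so $|\bra{a_{i^{*}}} U_A \ket{a_j}|^2 = |\braket{a_{i^{*}}}{a_j}|^2$. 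Consequently, every term in the double sum of Equation~(\ref{7_eqn:sepstateD}) in which $i = i^{*}$ or $j = i^{*}$ contributes exactly zero.

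Next, for the remaining terms in which neither index equals $i^{*}$, I would apply the crude bounds $|\braket{b_i}{b_j}|^2 \leq 1$ and $|\braket{a_i}{a_j}|^2 - |\bra{a_i} U_A \ket{a_j}|^2 \leq 1$ (the latter follows since $|\braket{a_i}{a_j}|^2 \leq 1$ and $|\bra{a_i}U_A \ket{a_j}|^2 \geq 0$). This yields
\begin{equation}
D(\rho)^2 \leq D(\rho,U_A)^2 \leq \sum_{i\neq i^{*}}\sum_{j\neq i^{*}} p_i p_j = (1 - p_{i^{*}})^2,
\end{equation}
so that $D(\rho) \leq 1 - \max_i p_i$. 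The second inequality $1 - \max_i p_i \leq 1 - 1/n$ follows immediately from the pigeonhole observation $\max_i p_i \geq 1/n$, since $\sum_i p_i = 1$.

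I do not anticipate any serious obstacle: the entire argument hinges on noticing that the rank-one reflection $2\ketbra{a_{i^{*}}}{a_{i^{*}}}-I$ leaves every inner product of the form $\bra{a_{i^{*}}} U_A \ket{a_j}$ invariant, which is precisely the structure needed to kill all ``diagonal'' contributions from the dominant weight $p_{i^{*}}$. The only mild subtlety is keeping track of the fact that $U_A$ lies in $\rous$ (here, $M=2$, so the roots of unity are simply $\pm 1$), which is automatic from the construction. No appeal to Theorem~\ref{7_thm:closedform} or any convexity/optimization argument is required, making this the most direct route.
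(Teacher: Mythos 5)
Your proposal is correct and follows essentially the same route as the paper: the paper also chooses a $U_A$ having $\ket{a_{i^*}}$ as an eigenvector (of which your reflection $2\ketbra{a_{i^*}}{a_{i^*}}-I$ is the canonical instance for $M=2$), notes that every term of Equation~(\ref{7_eqn:sepstateD}) involving $i^*$ vanishes, and bounds the remainder by $\sum_{i\neq i^*}\sum_{j\neq i^*}p_ip_j=(1-p_{i^*})^2$. Your version merely makes the eigenvector-invariance computation explicit.
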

\begin{proof}
    Assume WLOG that $\max_i p_i = p_1$. Then $1/n \leq p_1\leq 1$. Choose any $U_A\in\mathcal{U}(\spa{A})$ such that $\ket{a_1}$ is an eigenvector of $U_A$. Then any term in the double sum of Equation~(\ref{7_eqn:sepstateD}) in which $\ket{a_1}$ appears vanishes. We can hence loosely upper bound the value of Equation~(\ref{7_eqn:sepstateD}) by
    $
        \sqrt{(\sum_{i\neq1,j\neq 1}p_ip_j)}=1-p_1.
    $
    Recalling that $p_1\geq1/n$ yields the desired bound.
\end{proof}

When $n=2$, i.e.\ when $\rho$ is rank at most two, observe from Lemma~\ref{7_lem:upperbound} that $D(\rho)\leq 1/2$, and this is attainable only when $p_1=p_2=1/2$. We now show that this bound can indeed be saturated, and characterize all states with $n=2$ that do so.

\begin{lemma}\label{7_lem:upperbound2}
    Let $\rho$ be a separable state as in Equation~(\ref{7_eqn:sepstate}) with $p_1=p_2=1/2$. Then $D(\rho)=1/2$ if and only if $\abs{\braket{a_1}{a_2}}=1/\sqrt{2}$ and $\braket{b_1}{b_2}=0$.
\end{lemma}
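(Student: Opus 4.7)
My plan is to directly compute and minimize the right-hand side of Equation~(\ref{7_eqn:sepstateD}) for the case $n = 2$ and $p_1 = p_2 = 1/2$, by parameterizing both the state and the candidate RU unitary. Since $D(\rho)$ is invariant under local unitaries on $\A$ (Lemma~\ref{7_lem:invariantlocal}), I may assume the symmetric parameterization $\ket{a_{1,2}} = \cos(\theta/2)\ket{0} \pm \sin(\theta/2)\ket{1}$, so that $t := \alpha^2 = \cos^2\theta$. Since $\A = \complex^2$, every $\UA \in \rous$ has (up to an irrelevant global phase) the form $\UA = 2\ketbra{c}{c} - I$, and I parameterize $\ket{c} = \cos(\phi/2)\ket{0} + e^{i\psi}\sin(\phi/2)\ket{1}$.

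Expanding the four contributing terms of Equation~(\ref{7_eqn:sepstateD}) using half-angle identities, and writing $\beta := |\braket{b_1}{b_2}|$ and $s := \cos^2\phi$, I expect the resulting expression to assemble into
\begin{equation*}
    4\,D(\rho,\UA)^2 \;=\; 2t(1+\beta^2) + 2s(1-\beta^2-2t) + 2(1-\beta^2)\sin^2\theta\sin^2\phi\sin^2\psi.
\end{equation*}
Since the coefficient of $\sin^2\psi$ is nonnegative (as $\beta \leq 1$), the minimum over $\psi$ is attained at $\psi = 0$, reducing the problem to minimizing a function that is \emph{linear} in $s \in [0,1]$. The minimum over $s$ is therefore attained at an endpoint, yielding the compact closed form
\begin{equation*}
    4\,D(\rho)^2 \;=\; 2\min\bigl\{\,t(1+\beta^2),\;(1-\beta^2)(1-t)\,\bigr\}.
\end{equation*}

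The lemma will then follow by inspecting the equation $D(\rho) = 1/2$. In either branch of the minimum, matching the value $1/2$ forces the branch-selection inequality to collapse into $\beta^4 \leq 0$, so $\beta = 0$, after which either branch immediately yields $t = 1/2$. Conversely, substituting $\beta = 0$ and $t = 1/2$ into the displayed formula returns $D(\rho) = 1/2$ directly. The main obstacle in executing this plan is the algebraic bookkeeping required to reduce Equation~(\ref{7_eqn:sepstateD}) to the displayed expression; the mildly subtle piece is the off-diagonal term, which I expect to evaluate to $|\bra{a_1}\UA\ket{a_2}|^2 = \cos^2\phi + \sin^2\theta\sin^2\phi\sin^2\psi$ and is the sole source of $\psi$-dependence --- its sign being precisely what guarantees that a real $\ket{c}$ is optimal.
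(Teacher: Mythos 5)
Your proposal is correct, and I verified the key computation: with $\ket{a_{1,2}}=\cos(\theta/2)\ket{0}\pm\sin(\theta/2)\ket{1}$ and $\ket{c}=\cos(\phi/2)\ket{0}+e^{i\psi}\sin(\phi/2)\ket{1}$ one indeed gets $\bra{a_1}\UA\ket{a_2}=\cos\phi+i\sin\theta\sin\phi\sin\psi$ and the displayed expression for $4D(\rho,\UA)^2$ assembles exactly as you state, so the endpoint analysis in $s$ and the collapse to $\beta^4\leq 0$ go through. The setup is the same as the paper's (parameterize the state and the RU unitary $2\ketbra{c}{c}-I$, expand Equation~(\ref{7_eqn:sepstateD})), but the execution differs in a worthwhile way: the paper uses an asymmetric gauge ($\ket{a_1}=\ket{0}$) and rules out non-optimal states by plugging in a few hand-picked test unitaries ($\phi=\theta=0$, then $\phi=0$), never computing the actual minimum over $\UA$; you instead carry out the full minimization over $(\psi,s)$ and obtain the closed form $4D(\rho)^2=2\min\{t(1+\beta^2),(1-t)(1-\beta^2)\}$ for every rank-two state of the form~(\ref{7_eqn:sepstate}) with $p_1=p_2=1/2$. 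Your route costs more bookkeeping up front but buys a stronger intermediate result (an explicit formula for $D$ on this whole family, from which the characterization of the maximizers is immediate), whereas the paper's targeted choices get to the answer with less algebra but yield only the boundary case. One small point worth making explicit in a write-up: the symmetric real ansatz for $\ket{a_1},\ket{a_2}$ is justified by Lemma~\ref{7_lem:invariantlocal} together with the fact that only $\ketbra{a_i}{a_i}$ enters $\rho$, and only $\abs{\braket{b_1}{b_2}}^2$ enters Equation~(\ref{7_eqn:sepstateD}), so no generality is lost in taking $\beta:=\abs{\braket{b_1}{b_2}}$.
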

\begin{proof}
    Since by Lemma~\ref{7_lem:invariantlocal}, $D(\rho)$ is invariant under local unitaries, we can assume without loss of generality that $\ket{a_1}=\ket{0}$, $\ket{b_1}=\ket{0}$, $\ket{a_2}=\cos\frac{\beta}{2}\ket{0} + \sin\frac{\beta}{2}\ket{1}$ and $\ket{b_2}=\sum_{i=0}^{N-1}\alpha_i\ket{i}$ for $\beta\in[0,\pi]$ and $\alpha_i\in\reals$ with $\sum_i\alpha_i^2=1$, i.e.\ we can rotate the local states so as to eliminate relative phases. Further, since $\UA\in\rous$ in Equation~(\ref{7_eqn:sepstateD}), we can write $\UA=2\ketbra{u}{u}-I$ for some $\ket{u}=\cos\frac{\theta}{2}\ket{0} + e^{i\phi}\sin\frac{\theta}{2}\ket{1}$, where $\theta,\phi\in[0,2\pi)$. Via the latter, we can rewrite Equation~(\ref{7_eqn:sepstateD}) as:
    \begin{equation}\label{7_eqn:sepstateD2} \frac{1}{2}\sqrt{\sum_{i,j=1}^2\braket{b_i}{b_j}^2(\braket{a_i}{a_j}^2-\abs{\braket{a_i}{a_j} - 2\braket{a_i}{u}\braket{u}{a_j}}^2)}.
    \end{equation}
    Letting $\Delta$ denote the expression under the square root above, we have by substituting in our expressions for $\ket{a_1}$, $\ket{a_2}$, $\ket{b_1}$, $\ket{b_2}$, and $\ket{u}$ and algebraic manipulation that
    \begin{eqnarray}
        \Delta &=& \alpha_0^2\left[2\cos\beta
        \sin^2\theta - \sin\beta\sin(2\theta)\cos\phi\right]+\nonumber\\&& 1 + \sin^2\theta - (\cos\beta\cos\theta + \sin\beta\sin\theta\cos\phi)^2.\label{7_eqn:delta}
    \end{eqnarray}
     Our goal is to maximize $\Delta$ with respect to $\alpha_0$ and $\beta$ (which define $\rho$), and then minimize with respect to $\theta$ and $\phi$ (which define $\UA$). Observe now that choosing $\phi=\theta=0$ reduces Equation~(\ref{7_eqn:delta}) to $\Delta = 1-\cos^2\beta$. Hence, unless $\beta=\pi/2$ (i.e.\ $\abs{\braket{a_1}{a_2}}=1/\sqrt{2}$), we can always achieve $D(\rho)<1/2$. Thus, set $\beta=\pi/2$. Consider next $\phi=0$, and leave $\theta$ unassigned. Then, Equation~(\ref{7_eqn:delta}) reduces to $\Delta = 1 - \alpha_0^2\sin(2\theta)$, from which it is clear that unless $\alpha_0=0$ (i.e.\ $\braket{b_1}{b_2}=0$), we can always achieve $D(\rho)<1/2$. Plugging these values of $\alpha$ and $\beta$ into Equation~(\ref{7_eqn:delta}), we have $\Delta = 1 +\sin^2\theta\sin^2\phi$, from which the claim follows.
\end{proof}

For two-qubit $\rho$, we thus have that with respect to $D(\rho)$ and the geometric discord, the maximally non-classical two qubit states of rank at most two are, up to local unitaries,
\[
    \frac{1}{2}\ketbra{0}{0}\otimes\ketbra{0}{0}+\frac{1}{2}\ketbra{+}{+}\otimes\ketbra{1}{1},
\]
where $\ket{+}=(\ket{0}+\ket{1})/\sqrt{2}$. As mentioned earlier, this matches known results with respect to the relative entropy of quantumness~\cite{GPACH11}. However, the latter analysis is not as general as it begins by with the assumption that $\braket{b_1}{b_2}=0$, whereas we allow arbitrary $\ket{b_1},\ket{b_2}$. It would be interesting to know whether this analysis can be extended to arbitrary rank two-qubit states.\\

%
\noindent\emph{Acknowledgements for this chapter.} We thank Gerardo Adesso, Dagmar Bru\ss, Davide Girolami and Marco Piani for helpful discussions.


\chapter{All non-classical correlations can be activated into distillable entanglement}\label{chap:activation}

\emph{This chapter is based on~\cite{PGACHW11}:}\\

\vspace{-4mm}
\noindent M.~Piani, S.~Gharibian, G.~Adesso, J.~Calsamiglia, P.~Horodecki and A.~Winter. All non-classical correlations can be activated into distillable entanglement.
\emph{Physical Review Letters}, 106:220403, 2011, DOI: 10.1103/PhysRevLett.106.220403, \copyright~2011 American Physical Society, prl.aps.org.


\vspace{3mm}
\noindent In this chapter, we introduce a protocol through which general non-classical multipartite correlations can be mapped or ``activated'' into bipartite entanglement. In particular, we provide an operational interpretation for the measure of non-classicality known as the relative entropy of quantumness, showing that it quantifies the minimum distillable entanglement generated between the initial system and ancillae in our protocol. Moreover, we show the following surprising fact: That mixed entangled states can be arbitrarily more non-classical than separable and pure entangled states.

\section{Introduction and results}
The study of quantum correlations has traditionally  focused on entanglement~\cite{HHHH09}. In particular, it is generally believed that entanglement is a necessary resource for quantum computers to outperform their classical counterparts. Indeed, it has been shown that for the setting of \emph{pure}-state computation, the amount of entanglement present must grow with the system size for an exponential speed-up to occur~\cite{jozsa03a}. In the context of \emph{mixed}-state quantum information processing, however, there are surprising quantum computational and communication feats which are seemingly impossible to achieve with a classical computer, and yet can be attained with a quantum computer using little or no entanglement. Examples include the DQC1 model of computing~\cite{kl98} and the locking of classical correlations~\cite{dhlst04}; see Section~\ref{0_sscn:nonclassical} for a brief exposition. In the case of locking, for example, the task involved is impossible classically, and yet the quantum states used are \emph{separable}. This raises the question: What is the fundamental resource enabling such feats?

One plausible explanation is the presence in (generic~\cite{ferraro}) quantum states of non-classical correlations \emph{beyond} entanglement. Indeed, as outlined in Section~\ref{0_sscn:nonclassical}, much attention has recently been devoted to understanding and quantifying such correlations for this reason
\cite{ollivier01a,henderson01a,PhysRevA.71.062307,PhysRevA.72.032317,MPSVW10,groismanquantumness,PhysRevA.77.052101,Luo08,Bravyi2003,pianietal2008nolocalbrodcast,pianietal2009broadcastcopies,ferraro,ADA,PhysRevA.82.052342}.
In particular, the separable quantum states of the systems involved in DQC1 and the locking protocol have been shown to possess non-zero amounts of such correlations (see e.g.~\cite{datta08a,DG08}), as measured by the {\it quantum discord}
\cite{ollivier01a,henderson01a}. The latter strives to capture non-classical correlations beyond entanglement and has recently received operational interpretations in terms of the quantum state merging protocol~\cite{CABMPW10,MD10}, but is unfortunately not a \emph{faithful} measure (here, a \emph{faithful} measure achieves a non-zero value of zero if and only if a state is ``non-classical'').  A more accurate quantification of non-classical correlations is provided by the so-called {\em relative entropy of quantumness} (REQ) \cite{Bravyi2003,PhysRevA.71.062307,groismanquantumness,PhysRevA.77.052101,MPSVW10}, defined as the minimum distance, in terms of relative entropy, between a multipartite quantum state and the closest strictly classically correlated state (see Definition~\ref{5_def:classical}). Such a measure is faithful \cite{groismanquantumness}, symmetric under permutation of the subsystems, and enables a unified approach to the quantification of classical, separable and entangled correlations \cite{MPSVW10}. However, to date it still lacks an operational interpretation.


More generally, in this chapter, we ask the following question: \emph{Is there a protocol by which general non-classical correlations produce a physically relevant effect that distinguishes them from purely classical ones?}

\noindent It turns out that the answer to the above question is not only \emph{yes}, but that among other results, the protocol we derive lends the desired operational interpretation to the REQ.

\paragraph{Our results:} In order to summarize our results, recall first from Equation~(\ref{eqn:NCdef1}) the definition of a \emph{strictly classically correlated} or \emph{classical} state in the bipartite setting. For completeness, we state the generalization of this definition to the multipartite setting below~\cite{pianietal2008nolocalbrodcast}.

\begin{definition}[Strictly classically correlated quantum state]
\label{5_def:classical}
Let $\rho\in\DD((\complex^d)^{\otimes n})$, i.e.\ $\rho$ acts on n $d$-dimensional systems. Let $\cBB_i:=\set{\ket{{\cB}_i(j)}}_{j=0}^{d-1}$ denote some orthonormal basis for $\complex^d$ for the $i$th system, and let ${\cBB}$ denote the orthonormal basis
\begin{equation}
\{\ket{\cB({k})}:=\ket{\cB_1(k_1)}\ket{\cB_2(k_2)}\cdots \ket{\cB_n(k_n)}\}
\end{equation}
for the entire space $(\complex^{d})^{\otimes n}$ formed by taking tensor products of all elements in bases $\set{{\cBB}_i}_{i=1}^n$. Here, ${k}:=k_1k_2\cdots k_n$ is a number written in base $d$.
We henceforth use the notation $\cBB$ to refer to such a \emph{local product basis}. Then, an $n$-qudit state $\rho$ is \emph{strictly classically correlated}, or \emph{classical}, if there exists a local product basis $\cBB$ with respect to which $\rho$ is diagonal.
\end{definition}
\noindent Recall that classical states correspond to the embedding of a multipartite classical probability distribution into the quantum formalism, and that states not of the form above are called \emph{non-classical}. We now summarize our results as follows.

\vspace{2mm}
\noindent\textbf{1. An ``activation'' protocol for non-classical correlations.} Our first result is a protocol through which non-classical correlations are mapped into entanglement. Roughly, given an input state $\rho\in\DD((\complex^d)^{\otimes n})$, the protocol first introduces an ancilla state $\ket{0\cdots 0}\in({\complex^d})^{\otimes n}$. We then show that $\rho$ is non-classically correlated if and only if applying local CNOT gates with system $i$ of $\rho$ as control and system $i$ of the ancilla as target always creates (distillable) entanglement across the system-ancilla split, \emph{even if} one adversarially applies local changes of basis to $\rho$ before applying the CNOT gates (Theorem~\ref{5_thm:quantumnessiff}).

We thus not only have a physical effect arising from non-classical correlations, as desired, but also an entire framework for designing non-classicality measures. Specifically, for each choice of entanglement measure one applies across the system-ancilla gap after the protocol is run, we have the potential for a new non-classicality measure for system $\rho$.

\noindent\textbf{2. Connections to non-classicality measures.} As mentioned above, by applying our favorite entanglement measure across the system-ancilla cut after our protocol is run, we have the potential for discovering new non-classicality measures for the initial system $\rho$. In this vein, we first find that applying the entanglement measure \emph{distillable entanglement}~\cite{reviewplenio}, we obtain a non-classicality measure we call the \emph{minimum distillable entanglement potential}, which turns out to equal the REQ (Corollary~\ref{5_cor:req}). We thus have an operational interpretation for the REQ. We also consider the \emph{negativity}~\cite{VW02} as an entanglement measure, obtaining various results of interest here (Section~\ref{5_sscn:neg}).

\noindent\textbf{3. Mixedness versus entanglement in non-classicality.} Our final result studies the minimum distillable entanglement potential (or equivalently, REQ). As might be expected, we first find that according to this non-classicality quantifier, pure entangled states are strictly ``more non-classical'' than separable states. However, perhaps surprisingly, we next show that in the asymptotic setting, (1) separable states can be as non-classical as pure entangled states (Theorem~\ref{5_prop:sep}), and (2) \emph{mixed} entangled states can be much more non-classical than pure entangled states (Theorem~\ref{5_prop:low-rank})! This suggests that non-classical correlations arise not just from the superposition principle of quantum mechanics, as is the case with (pure state) entanglement, but also due to the non-commutative nature of quantum physics. Our proofs here use ideas similar to known concentration of measure arguments~\cite{HLW:aspects,HLSW04}.
%

\paragraph{Previous work.} We refer the reader to Section~\ref{0_sscn:nonclassical} for a brief introduction to non-classical correlations. With regards to this chapter, we remark that after completion of the paper this chapter is based on, we became aware of related results by Streltsov, Kampermann and Bru\ss~\cite{streltsov2010}. They show that the quantumness of correlations (as measured, for example, by the quantum discord) is also related to the minimum entanglement generated between system and apparatus in a partial measurement process. In light of those results, our findings can be understood also as dealing with the interplay between system-apparatus entanglement and non-classicality of correlations when realizing local measurements.

\paragraph{Discussion and open questions.} The study of general non-classical correlations is currently a burgeoning area, but in many ways such correlations are still not well-understood. Our activation protocol lends new insight into the nature of these correlations by furnishing them with a new operational meaning in terms of resources for \emph{entanglement generation}. One natural and interesting open question is whether the ideas behind the protocol could lead to novel applications in quantum computation and information.

Furthermore, our novel framework for non-classicality measures reduces the problem of non-classicality quantification to the more familiar setting of entanglement quantification, for which a multitude of tools for analysis are already known (see e.g.~\cite{HHHH09}). An open question here is what further known non-classicality quantification schemes can be obtained as arising through our framework?

Finally, that mixing can actually help \emph{surpass} the quantumness of pure-state entanglement, and that the latter can be asymptotically matched by fully separable states is, in our opinion, quite a surprising result. It would be good to better understand the non-commutative nature of states in a quantum mixture, both from the perspective of non-classical correlations, as well as with regard to computational and information theoretic feats.

\paragraph{Organization of chapter.}
In Section~\ref{5_scn:protocol}, we describe our activation protocol, and show how it yields a connection between entanglement and non-classical correlations. Section~\ref{5_scn:quantify} then exploits this connection further by introducing an entire family of non-classicality quantifiers, demonstrating along the way an operational interpretation of the REQ. In Section~\ref{5_scn:mixed}, we show two surprising results in systems of large local dimensions: That mixed separable states can be asymptotically as non-classical as pure maximally entangled states, and that mixed entangled states can be asymptotically twice as non-classical as pure maximally entangled states.


\section{Preliminaries}\label{5_scn:prelim}
We now state notation and a lemma specific to this chapter. Regarding notation, given a local product basis $\cBB=\set{\ket{b(k)}}$ and multipartite quantum state $\rho$, we define
\begin{equation}
    \rho^{\wek{\cBB}}:=\sum_{\wek{k}}\proj{\wek{\cB}(\wek{k})}\rho \proj{\wek{\cB}(\wek{k})},
    \end{equation}
    and
\begin{equation}
    \rho_{\wek{k}\wek{l}}^\wek{\cBB}:=\bra{\wek{\cB}(\wek{k})}\rho \ket{\wek{\cB}(\wek{l})}.
\end{equation}

We next state Levy's Lemma, which is useful in Section~\ref{5_scn:mixed}. For this, we first define the \emph{Lipschitz constant} of a function $f$. Given function $f:X\mapsto Y$ for metric spaces $(X,d_{X})$ and $(Y,d_{Y})$, where $d_{X}$ and $d_{Y}$ are metrics on the sets $X$ and $Y$, respectively, we say that $f$ has Lipschitz constant $m\geq 0$ if the distance between any two input points in $X$ does not increase by more than $m$ after going through $f$. In other words, for all $x_1,x_2\in X$,
\begin{equation}
    d_Y(f(x_1),f(x_2))\leq m\cdot d_X(x_1,x_2).
\end{equation}

Then, for $\mathbb{S}^k$ the $k$-sphere and $\mathbb{E}(f)$ the expected value of function $f$, we can state the following useful Lemma, known as Levy's Lemma.

\begin{lemma}[Levy's Lemma, see e.g.~\cite{HLW:aspects}]\label{5_lem:levy}
    Let $f: \mathbb{S}^k\mapsto\reals$ be a function whose Lipschitz constant with respect to the Euclidean norm is $m\geq 0$. Let $x\in\mathbb{S}^k$ be chosen uniformly at random. Then, for some constant $c>0$,
    \begin{equation}
        \pr\left({f(x)-\mathbb{E}(f)} \gtrless \pm \alpha\right) \leq 2\exp \left(\frac{-c(k+1)\alpha^2}{m^2}\right).
    \end{equation}
\end{lemma}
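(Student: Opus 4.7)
The plan is to prove Levy's Lemma via the classical route of spherical isoperimetry combined with the equivalence of concentration around the median and around the mean. First I would reduce to the case $m = 1$ by considering $\tilde{f} = f/m$, which has Lipschitz constant $1$ and expected value $\mathbb{E}(f)/m$; any concentration bound on $\tilde f$ at scale $\alpha/m$ translates immediately into the desired bound for $f$ at scale $\alpha$.

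Next I would establish concentration about the \emph{median} $M_f$ (rather than the mean) using Lévy's spherical isoperimetric inequality. Consider the set $A = \{x \in \mathbb{S}^k : f(x) \leq M_f\}$, which by definition of the median satisfies $\mu(A) \geq 1/2$, where $\mu$ is the uniform probability measure on the sphere. The isoperimetric inequality states that among all measurable subsets of $\mathbb{S}^k$ of a given measure, geodesic balls (spherical caps) minimize the measure of their $\alpha$-neighborhoods; applied to $A$, this yields $\mu(A_\alpha) \geq 1 - \exp(-c_0(k+1)\alpha^2)$ for an absolute constant $c_0 > 0$, where $A_\alpha$ is the $\alpha$-neighborhood in geodesic distance. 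Since Euclidean and geodesic distance on the sphere satisfy $d_{\mathrm{geo}}(x,y) \geq d_{\mathrm{Euc}}(x,y)$ and differ only by a bounded multiplicative factor in the relevant regime, the 1-Lipschitz property of $f$ with respect to Euclidean distance gives $x \in A_\alpha \Rightarrow f(x) \leq M_f + \alpha$. Applying the same argument to $-f$ (whose median is $-M_f$) controls the lower tail, producing $\Pr(|f(x) - M_f| \geq \alpha) \leq 2\exp(-c_1(k+1)\alpha^2)$.

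Finally I would pass from the median to the mean by bounding $|M_f - \mathbb{E}(f)|$. Using Jensen's inequality together with the median concentration bound,
\begin{equation}
|M_f - \mathbb{E}(f)| \leq \mathbb{E}|f(x) - M_f| = \int_0^\infty \Pr(|f(x) - M_f| > t)\,dt \leq \int_0^\infty 2 e^{-c_1(k+1)t^2}\,dt = O(1/\sqrt{k+1}).
\end{equation}
For $\alpha$ large compared with $1/\sqrt{k+1}$ the discrepancy between median and mean is absorbed into the exponent at the cost of shrinking $c_1$ to some smaller constant $c$; for the (uninteresting) regime where $\alpha$ is comparable to $1/\sqrt{k+1}$ the bound $2\exp(-c(k+1)\alpha^2/m^2)$ is trivially true (up to adjusting the constant) because its right-hand side exceeds $1$. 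Combining these pieces yields the stated inequality.

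The main obstacle is the spherical isoperimetric inequality itself, which is the genuinely nontrivial ingredient; I would invoke it as a black box (citing a standard reference such as Ledoux's concentration-of-measure monograph) rather than reprove it, since its proof via two-point symmetrization or via the Brunn–Minkowski inequality on $\mathbb{S}^k$ is classical and lengthy. A secondary technical point to handle carefully is the distinction between geodesic and Euclidean Lipschitz constants, but since $d_{\mathrm{Euc}}(x,y) \leq d_{\mathrm{geo}}(x,y) \leq (\pi/2)\,d_{\mathrm{Euc}}(x,y)$ on the sphere this only affects absolute constants and does not change the form of the bound.
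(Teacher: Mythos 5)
The paper does not prove this lemma at all: it is imported as a black box with the citation to Hayden--Leung--Winter, so there is no in-paper argument to compare yours against. Your proposal is the standard textbook proof (isoperimetry on the sphere, concentration about the median, then median-to-mean transfer), and it is correct. The reduction to Lipschitz constant $1$ is fine; the direction of the metric comparison works in your favour (a function that is $1$-Lipschitz for the Euclidean metric is automatically $1$-Lipschitz for the larger geodesic metric, so membership in the geodesic $\alpha$-neighbourhood of $A=\{f\leq M_f\}$ really does force $f(x)\leq M_f+\alpha$); and the median-to-mean step is handled correctly, including the observation that for $\alpha\lesssim 1/\sqrt{k+1}$ the claimed bound is vacuous once $c$ is taken small enough, so the additive $O(1/\sqrt{k+1})$ shift can be absorbed into the constant. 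The only ingredient you do not supply is L\'evy's isoperimetric inequality itself, which is exactly the right thing to cite rather than reprove; this is the same level of rigor as the paper, which cites the whole lemma. One cosmetic remark: the lemma as stated in the paper gives the one-sided tails (the $\gtrless\pm\alpha$ notation), whereas you prove the two-sided version, which is stronger and is what the paper actually uses, so nothing is lost.
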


\section{The activation protocol}\label{5_scn:protocol}
We now describe our protocol for the activation of non-classical correlations, which maps relatively ``not-well-understood'' non-classical correlations into ``more familiar'' bipartite entanglement, allowing one to employ tools from entanglement theory \cite{HHHH09} to study general non-classical correlations.
The protocol can be thought of as a game between an adversary and $n$ players, where the $n$ players together aim to generate an entangled state between a system $\wek{A}$ they control and an ancillary system $\wek{A'}$, and the adversary's goal is to thwart their efforts by locally rotating each subsystem of $\wek{A}$ before system and ancilla undergo a pre-defined interaction.

\begin{figure}[t]
\begin{center}
\includegraphics[width=8cm]{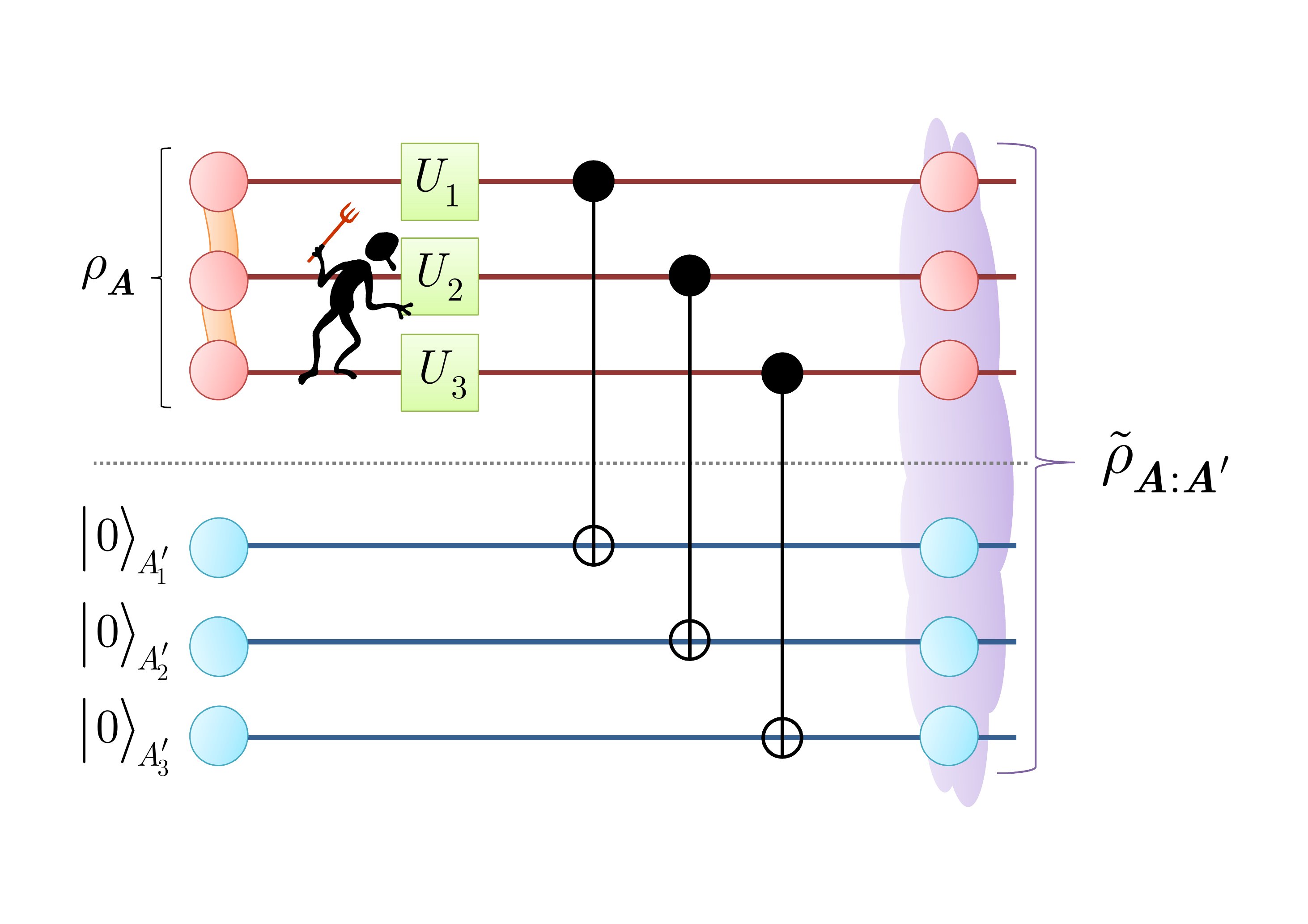}
\caption{(Color online) Scheme of the activation protocol for $n=3$.}\label{figact}
\end{center}
\end{figure}
More precisely, the  protocol proceeds as follows (see Figure~\ref{figact}).
We consider $n$ players ${\cal P}_i$, each controlling a system-ancilla pair of qudits $(A_i, A'_i)$.
We indicate by $\wek{A}$ the joint register  $A_1,\ldots,A_n$, henceforth called the ``system'', and by $\wek{A'}$ the joint register $A'_1,\ldots,A'_n$, henceforth called the ``ancilla''. The initial state of the total $2n$ qudits is a tensor product  $\rho_{\wek{A}\wek{A'}} = \rho_{\wek{A}} \otimes \ket{0}\bra{0}^{\otimes n}_{\wek{A'}}$.
For a given  $\rho_\wek{A}$, an adversary is first allowed to apply a local unitary $U_i$ of his choice to each $A_i$.
With the adversary's turn complete, each player ${\cal P}_i$ now  lets their subsystem $A_i$ (control qudit) interact with the corresponding ancillary party $A'_i$ (target qudit) via a CNOT gate,
whose action  on the computational basis states $\ket{j}\ket{j^\prime}$ of $\complex^d \otimes \complex^d$ is defined as
$\ket{j}\ket{j^\prime}\mapsto\ket{j}\ket{j^\prime\oplus j}$,
with $\oplus$ denoting addition modulo $d$. The final state of system plus ancilla is
\begin{equation}\label{5_eqn:finalstate}
\rho^f_{{A}{A'}} = V (\rho_{\wek{A}} \otimes \ket{0}\bra{0}^{\otimes n}_{\wek{A'}}) V^\dagger\,,
\end{equation}
where $V = {{CNOT}}_{\wek{A}\wek{A'}} (U_{\wek{A}} \otimes I_{\wek{A'}})$, $U_{\wek{A}} = \otimes_{i=1}^n U_i$ and ${{CNOT}}_{\wek{A}\wek{A'}}=\bigotimes_{i=1}^n CNOT_{A_iA_i'}$.
We ask: At the end of the protocol, have the $n$ players succeeded in generating bipartite entanglement across the split $\wek{A}:\wek{A'}$, and, if so, how much entanglement was created? It is natural to expect that the answer will depend on the initial state $\rho_{\wek {A}}$ of the $n$-qudit system. For simplicity of notation, in the remainder of this chapter, we shall take $\rho$ and $\rho^f$ to denote the states $\rho_A$ and $\rho^f_{AA'}$, respectively.

Although we cast the activation protocol as a game, from a more physical perspective our aim is to understand precisely how the nature and amount of correlations between the parts $A_i$ of the system $\wek{A}$ affects the entanglement that can be created with an ancilla $\wek{A'}$ via the paradigmatic entangling operation --- the CNOT; we are considering here the worst case scenario with respect to the choice of the control bases. We then find the following.

\begin{theorem}
\label{5_thm:quantumnessiff}
A state $\rho$ of an $n$-qudit system is classical if and only if there exists an adversarial choice of local unitaries $U_{\wek{A}}$ such that the state $\rho^f$ output by the activation protocol is separable across the system-ancilla (i.e.\ $A:A'$) split.
\end{theorem}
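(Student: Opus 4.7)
My plan is to exploit the ``maximally correlated'' structure that the protocol naturally produces in the output state $\rho^f$. Writing $\sigma := U_{\wek{A}}\rho U_{\wek{A}}^\dagger = \sum_{\wek{k},\wek{l}} \sigma_{\wek{k}\wek{l}}\ket{\wek{k}}\bra{\wek{l}}$ in the computational basis of $\wek{A}$, the action of the $n$ local CNOTs on $\sigma \otimes \ket{0\cdots 0}\bra{0\cdots 0}_{\wek{A'}}$ yields
\begin{equation}
\rho^f = \sum_{\wek{k},\wek{l}} \sigma_{\wek{k}\wek{l}}\,(\ket{\wek{k}}\bra{\wek{l}})_{\wek{A}} \otimes (\ket{\wek{k}}\bra{\wek{l}})_{\wek{A'}},
\end{equation}
which is supported on the ``diagonal subspace'' $\Span\{\ket{\wek{k}}_{\wek{A}}\ket{\wek{k}}_{\wek{A'}}\}$. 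Once this form is in hand, both directions of the theorem follow from simple algebra plus one well-known necessary condition for separability.

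For the forward direction (classical $\Rightarrow$ separable), if $\rho$ is classical in some local product basis $\wek{\cBB}$, the adversary simply picks each $U_i$ to be the unitary mapping $\ket{\cB_i(j)}\mapsto \ket{j}$. Then $\sigma$ is diagonal, so $\rho^f = \sum_{\wek{k}} \sigma_{\wek{k}\wek{k}}\,\ket{\wek{k}}\bra{\wek{k}}_{\wek{A}} \otimes \ket{\wek{k}}\bra{\wek{k}}_{\wek{A'}}$, manifestly a separable (indeed classical) state across $\wek{A}:\wek{A'}$.

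For the reverse direction, suppose there exist local unitaries $U_{\wek{A}}$ making $\rho^f$ separable across $\wek{A}:\wek{A'}$. I would invoke the Peres PPT criterion (Section~\ref{0_sscn:entanglement}): separability implies $(\rho^f)^{T_{\wek{A'}}}\succeq 0$. A direct computation of the partial transpose on $\wek{A'}$ gives
\begin{equation}
(\rho^f)^{T_{\wek{A'}}} = \sum_{\wek{k},\wek{l}} \sigma_{\wek{k}\wek{l}}\,(\ket{\wek{k}}\bra{\wek{l}})_{\wek{A}} \otimes (\ket{\wek{l}}\bra{\wek{k}})_{\wek{A'}},
\end{equation}
which is block-diagonal with respect to the decomposition of $\complex^{d^n}\otimes\complex^{d^n}$ into the spans of $\{\ket{\wek{k}}\ket{\wek{k}}\}$ and, for each unordered pair $\wek{i}\neq \wek{j}$, the two-dimensional block $\Span\{\ket{\wek{i}}\ket{\wek{j}},\ket{\wek{j}}\ket{\wek{i}}\}$. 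The diagonal blocks are the non-negative numbers $\sigma_{\wek{k}\wek{k}}$, and the two-dimensional blocks take the form $\bigl(\begin{smallmatrix} 0 & \sigma_{\wek{i}\wek{j}} \\ \overline{\sigma_{\wek{i}\wek{j}}} & 0\end{smallmatrix}\bigr)$, whose eigenvalues are $\pm|\sigma_{\wek{i}\wek{j}}|$. Positivity of $(\rho^f)^{T_{\wek{A'}}}$ therefore forces $\sigma_{\wek{i}\wek{j}}=0$ whenever $\wek{i}\neq \wek{j}$, i.e.\ $\sigma$ is diagonal in the computational basis. Undoing the local rotation, $\rho$ is diagonal in the local product basis $\{U_1^\dagger \ket{k_1}\otimes\cdots\otimes U_n^\dagger \ket{k_n}\}$, hence classical.

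I do not anticipate any serious obstacle: the main content is recognising that CNOTs controlled on ancillae in $\ket{0}$ build precisely a maximally correlated state, and that PPT on such states is equivalent to diagonality of the ``source'' operator. The only mildly delicate point is bookkeeping the two-dimensional blocks of $(\rho^f)^{T_{\wek{A'}}}$ correctly, and verifying that this is enough -- we are using PPT only in its necessary direction, so we never have to argue that PPT implies separability (which fails in higher dimensions anyway). Everything else is routine.
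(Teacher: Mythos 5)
Your proof is correct, but the reverse direction takes a genuinely different route from the paper's own argument. The paper proves ``separable $\Rightarrow$ classical'' directly from a separable decomposition $\rho^f=\sum_i q_i\proj{\psi_i}_{\wek{A}}\otimes\proj{\phi_i}_{\wek{A'}}$: since $V$ is unitary, each ensemble member satisfies $V\ket{v_i}\ket{0}=\ket{\psi_i}\ket{\phi_i}$, and expanding $\ket{v_i}$ in the basis $\set{U_{\wek{A}}^\dagger\ket{j}}$ shows $V\ket{v_i}\ket{0}=\sum_j\alpha_{ij}\ket{j}\ket{j}$ is a product state only if exactly one $\alpha_{ij}$ is nonzero; hence every $\ket{v_i}$ is a rotated computational basis state and $\rho$ is classical. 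Your argument instead computes the partial transpose of the maximally correlated output state, reads off the $2\times2$ blocks with eigenvalues $\pm\abs{\sigma_{\wek{i}\wek{j}}}$, and invokes PPT as a necessary condition for separability to force all off-diagonal entries to vanish. This is exactly the observation the paper itself makes later, in the negativity-of-quantumness section, where it notes that the computation of $\mathcal{N}(\rho^f)=\frac{1}{2}\sum_{\wek{i}\neq\wek{j}}\abs{\rho^{\wek{\cBB}}_{\wek{i}\wek{j}}}$ ``yields yet another proof'' of the theorem. The trade-off: the paper's direct argument is more elementary (no separability criterion needed) and exposes the structure of the separable decomposition itself, while your PPT route is shorter and comes with a quantitative bonus --- the same block computation immediately gives the negativity generated across the $\wek{A}:\wek{A'}$ cut, which the paper exploits to define $Q_{\cN}$. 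You are also right that only the necessary direction of PPT is used, so bound entanglement poses no issue.
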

\begin{proof} The ``if'' part is trivial, as given a strictly classically correlated state, one can choose $U_{\wek{A}}$ to rotate the diagonalizing local product basis for $\rho=\sum_{i}p_{i}\proj{\wek{\cB}(i)}$ into the computational basis, so that applying the CNOTs in our protocol straightforwardly yields the separable state
\begin{equation}
\rho^f=\sum_{i}p_{i}\proj{i}_{\wek{A}}\otimes\proj{i}_{\wek{A'}}.
\end{equation}
As for the ``only if'' part, consider the separable decomposition
\begin{equation}
\rho^f=\sum_i q_i\proj{\psi_i}_{\wek{A}}\ot \proj{\phi_i}_{\wek{A'}},
\end{equation}
which exists by hypothesis for some choice of $U_{\wek{A}}$.
Since the transformation $V$ in Equation~\eqref{5_eqn:finalstate} is unitary, we must be able to write $\rho=\sum_i{q_i}\proj{v_i}_{\wek{A}}$ for some ensemble (not necessarily a spectral decomposition) $\set{q_i,\ket{v_i}}$ such that
\beq
\label{5_eqn:purecond}
V\ket{v_i}_{\wek{A}}\ket{0}_{\wek{A'}}=\ket{\psi_i}_{\wek{A}} \ket{\phi_i}_{\wek{A'}}.
\eeq
Letting $\set{\ket{j}}$ denote the computational basis, we now expand $\ket{v_i}$ in the basis $\set{U^\dagger_{\wek{A}}\ket{j}}$, such that
\begin{equation}
\ket{v_i}_{\wek{A}}=\sum_{j}\alpha_{ij}U^\dagger_{\wek{A}}\ket{j}_{\wek{A}},
\end{equation}
from which it follows that
\begin{equation}
V\ket{v_i}_{\wek{A}}\ket{0}_{\wek{A'}}=\sum_{j}\alpha_{ij}\ket{j}_{\wek{A}}\ket{j}_{\wek{A'}}.
\end{equation}
Combining this with Equation~\eqref{5_eqn:purecond}, we conclude that for all $i$, there must exist a $j$ such that $\alpha_{ij}=1$. Denote this value of $j$ as $j_i$, and note hence that
 \begin{equation}
    \ket{v_i}_{\wek{A}}\ket{0}_{\wek{A'}}=V^\dagger\left(\ket{j_i}_{\wek{A}}\ket{j_i}_{\wek{A'}}\right)=U_{\wek{A}}^\dagger\ket{j_i}_{\wek{A}}\ket{0}_{\wek{A'}}.
 \end{equation}
 We can now write
\begin{equation}
\rho
=\sum_i{q_i}\proj{v_i}_{\wek{A}}
=\sum_i{q_i}U_{\wek{A}}^\dagger\ketbra{j_i}{j_i}_{\wek{A}}U_{\wek{A}},
\end{equation}
which is a spectral decomposition for $\rho$ with respect to the computational basis up to local unitary $U_{\wek{A}}$, as desired.
\end{proof}

In other words, the system \emph{always} becomes (for any choice of $U_{\wek{A}}$) entangled with the ancilla as a result of the activation protocol, if and only if the input state of the system is non-classically correlated. This establishes a qualitative \emph{equivalence} between multipartite non-classical correlations among components of a quantum system, and bipartite entanglement between the system and an ancilla.


\section{Quantifying non-classicality}\label{5_scn:quantify}
We now exploit the spirit of Theorem~\ref{5_thm:quantumnessiff} further to \emph{quantify}, rather than simply detect, the presence of non-classical correlations in a quantum state. To do so, our approach is to apply entanglement measures across the $\wek{A}:\wek{A^\prime}$ split to study the amount of entanglement generated whenever $\wek{A}$ is initially in a non-classically correlated state. It is worth remarking here that this framework is general enough to possibly uncover a full zoology of non-classicality measures, as each choice of a different entanglement monotone~\cite{reviewplenio} we adopt (at the output) has the potential to lead to a unique non-classicality measure (for the input state), the  association being provided exactly by the activation protocol.

More precisely, let $E$ denote some entanglement measure of choice and $\rho^f$ the system-ancilla state at the end of the protocol as in Equation~(\ref{5_eqn:finalstate}), and define by \begin{equation}\label{5_eqn:QE}Q_E({\rho}):=\min_{U_{{\wek{A}}}}E_{\wek{A}:\wek{A'}}(\rho^f)\,
 \end{equation}
 the minimum entanglement generated across the $\wek{A}:\wek{A^\prime}$ split over all choices of adversarial local unitaries $U_\wek{A}$. We call $Q_E({\rho})$ the \emph{minimum entanglement potential} of ${\rho}$ with respect to $E$. As a consequence of Theorem~\ref{5_thm:quantumnessiff}, $Q_E$ is a  measure of non-classical correlations for arbitrary multipartite qudit states $\rho$, induced by the entanglement monotone $E$. In fact, the condition  $Q_E({\rho})=0$ perfectly characterizes the set of classically correlated states $\rho$ if $E$ is a \emph{faithful} entanglement measure (i.e.\ if $E$ vanishes only for separable states). However, even certain non-faithful entanglement measures can be plugged in to obtain a faithful measure of non-classical correlations.
The reason is that the output state $\rho^f$ has the so-called \emph{maximally correlated} form~\cite{R01} between $\wek{A}$ and $\wek{A'}$; namely,
\begin{equation}\label{5_eqn:maxcorr}
\rho^f=\sum_{{k}{l}}\rho_{{k}{l}}^\wek{\cBB}\ket{{k}}\bra{l}_{\wek{A}}\otimes \ket{{k}}\bra{{l}}_{\wek{A'}}
\end{equation}
with $\rho_{{k}{l}}^\wek{\cBB}=\bra{\wek{\cB}({k})}\rho \ket{\wek{\cB}({l})}$,
$\ket{{\cB}({k})}=U_\wek{A}^\dagger \ket{{k}}$ and $\ket{{k}}=\ket{k_1}\ket{k_2}\cdots\ket{k_n}$. We now exploit this observation in the next section.

\subsection{Minimum distillable entanglement potential}

Let us consider the non-faithful but physically motivated distillable entanglement $E_{\textup{D}}$~\cite{reviewplenio} as a bipartite entanglement monotone (recall $E_{\textup{D}}$ is non-faithful as it vanishes on so-called bound entangled states). Note that the precise definition of $E_{\textup{D}}$ is not required here; rather we utilize results of~\cite{Hiroshima2004} linking $E_{\textup{D}}(\rho)$ to the relative entropy of entanglement. Specifically, we have the following.

\begin{theorem}\label{5_thm:distill}
    The minimum distillable entanglement potential $Q_{E_D}({\rho})$ equals
\beq
Q_{E_{\textup{D}}}(\rho)=\min_{{\cBB}}\Big(S(\rho^{\wek{\cBB}})-S(\rho)\Big),\label{5_eqn:DE}
\eeq
where the minimization is over the choice of local product bases $\wek{\cBB}$.
\end{theorem}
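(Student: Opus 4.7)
The strategy is to reduce the problem to a known formula for the distillable entanglement of maximally correlated states, which is exactly the form that arises at the output of the activation protocol. Specifically, parametrize the adversary's choice of $U_{\bm{A}}$ by the induced local product basis $\wek{\cBB}$ via $\ket{\wek{\cB}(\wek{k})}=U_{\wek{A}}^\dagger\ket{\wek{k}}$; then the output state $\rho^f$ has the maximally correlated form of Eq.~(\ref{5_eqn:maxcorr}). The plan is therefore to (i) invoke the result from~\cite{Hiroshima2004} that for maximally correlated states $E_{\textup{D}}$ coincides with a closed-form entropic quantity, (ii) compute the two entropies involved in terms of $\rho$ and $\rho^{\wek{\cBB}}$, and (iii) perform the minimization over $U_{\wek{A}}$, which becomes a minimization over local product bases $\wek{\cBB}$.

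First I would observe that the total transformation $V$ in Eq.~(\ref{5_eqn:finalstate}) is unitary and is applied to $\rho\otimes\ket{0}\bra{0}^{\otimes n}$, so $S(\rho^f)=S(\rho\otimes\ket{0}\bra{0}^{\otimes n})=S(\rho)$ for every choice of $U_{\wek{A}}$. Next, compute the marginal on the ancilla: tracing out $\wek{A}$ in Eq.~(\ref{5_eqn:maxcorr}) yields the diagonal state $\rho^f_{\wek{A'}}=\sum_{\wek{k}}\rho^{\wek{\cBB}}_{\wek{k}\wek{k}}\ket{\wek{k}}\bra{\wek{k}}$, whose von Neumann entropy equals the Shannon entropy of the diagonal entries $\{\rho^{\wek{\cBB}}_{\wek{k}\wek{k}}\}$. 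But $\rho^{\wek{\cBB}}=\sum_{\wek{k}}\proj{\wek{\cB}(\wek{k})}\rho\proj{\wek{\cB}(\wek{k})}$ is by definition diagonal in the basis $\wek{\cBB}$ with those same eigenvalues, so $S(\rho^f_{\wek{A'}})=S(\rho^{\wek{\cBB}})$.

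Then I would apply the key ingredient from~\cite{Hiroshima2004}: for any maximally correlated state $\sigma=\sum_{\wek{k}\wek{l}}\sigma_{\wek{k}\wek{l}}\ket{\wek{k}\wek{k}}\bra{\wek{l}\wek{l}}$, the distillable entanglement is known to equal the relative entropy of entanglement, and in turn $E_{\textup{D}}(\sigma)=S(\sigma_{\wek{A'}})-S(\sigma)$, i.e.\ the dephased-marginal minus joint entropy. Applying this to $\rho^f$ and substituting the computations above gives
\begin{equation}
E_{\textup{D}}(\rho^f)=S(\rho^{\wek{\cBB}})-S(\rho).
\end{equation}
Taking the minimum over $U_{\wek{A}}$ is equivalent to taking the minimum over the induced local product bases $\wek{\cBB}$, yielding the claimed formula for $Q_{E_{\textup{D}}}(\rho)$.

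The only genuinely non-trivial step is the invocation of the maximally-correlated-state formula; everything else is bookkeeping. The main potential subtlety is to make sure the correspondence between $U_{\wek{A}}$ and $\wek{\cBB}$ is a bijection onto the set of local product bases (which it manifestly is, since each $U_i$ independently rotates the computational basis on $A_i$ to an arbitrary orthonormal basis), so that the two minimizations are genuinely equivalent. Once this is in place, the identity in Eq.~(\ref{5_eqn:DE}) follows.
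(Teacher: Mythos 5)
Your proposal is correct and follows essentially the same route as the paper: both invoke the Hiroshima result that the distillable entanglement of a maximally correlated state equals the marginal entropy minus the joint entropy, identify these with $S(\rho^{\wek{\cBB}})$ and $S(\rho)$ respectively, and convert the minimization over $U_{\wek{A}}$ into one over local product bases. The only cosmetic difference is that you compute the marginal on $\wek{A'}$ while the paper uses the marginal on $\wek{A}$, which for a maximally correlated state gives the same diagonal distribution.
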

\begin{proof}
The claim follows by observing that for any choice of $\wek{\cBB}$, the $\wek{A}:\wek{A'}$ distillable entanglement of $\rho^f$ is equal to
\begin{equation}
E_{\textup{D}}(\rho^f)=S(\trace_{A'}(\rho^f))-S(\rho^f)=S({\rho}^\wek{\cBB})-S(\rho),
\end{equation}
where $S(\sigma)=-\Tr( \sigma \log \sigma)$ is the von Neumann entropy of a state $\sigma$. In the first equality we used the results of~\cite{Hiroshima2004} about distillable entanglement for maximally correlated states --- for which it happens to coincide with the relative entropy of entanglement~\cite{PhysRevLett.78.2275,PhysRevA.57.1619}. The second equality is justified by the fact that $\rho^{\wek{{\cBB}}}$ is the state resulting from local projective measurements in the local bases $\wek{\cBB}$ on $\rho$ and is unitarily equivalent to $\trace_{A'}(\rho^f)$ (seen by considering Equation~(\ref{5_eqn:maxcorr})), while $\rho^f$ is obtained from $\rho$ via the activation protocol isometry, Equation~(\ref{5_eqn:finalstate}).
\end{proof}
This yields the following nice corollary regarding the REQ, which is defined as (see also Section~\ref{0_sscn:nonclassical})
\beq
\label{5_eqn:req}
Q(\rho)=\min_{\textrm{classical}~\sigma}S(\rho\|\sigma),
\eeq
for $S(\rho\|\sigma)=\trace(\rho \log \rho - \rho \log \sigma)$ the relative entropy and where the minimization is over all strictly classically correlated states $\sigma$.
\begin{corollary}\label{5_cor:req}
    The REQ of $\rho$ equals its minimum distillable entanglement potential, i.e.\ \begin{equation}Q(\rho)=Q_{E_{\textup{D}}}(\rho).\end{equation}
\end{corollary}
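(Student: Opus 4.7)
The plan is to combine Theorem~\ref{5_thm:distill}, which gives the closed-form expression $Q_{E_{\textup{D}}}(\rho)=\min_{\wek{\cBB}}(S(\rho^{\wek{\cBB}})-S(\rho))$, with a direct computation of $Q(\rho)$ that reduces its definition to the same expression. The key observation is that every strictly classical state $\sigma$ is, by Definition~\ref{5_def:classical}, diagonal in some local product basis $\wek{\cBB}$, so we may split the minimization in Equation~(\ref{5_eqn:req}) into an outer minimization over local product bases $\wek{\cBB}$ and an inner minimization over probability distributions on the corresponding basis elements.

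For the inner minimization, fix $\wek{\cBB}$ and write $\sigma=\sum_{\wek{k}}q_{\wek{k}}\proj{\wek{\cB}(\wek{k})}$ with $\set{q_{\wek{k}}}$ a probability distribution. Using $\log\sigma=\sum_{\wek{k}}(\log q_{\wek{k}})\proj{\wek{\cB}(\wek{k})}$, the relative entropy becomes
\begin{equation}
S(\rho\|\sigma)=-S(\rho)-\trace(\rho\log\sigma)=-S(\rho)-\sum_{\wek{k}} p_{\wek{k}}\log q_{\wek{k}},
\end{equation}
where $p_{\wek{k}}:=\rho^{\wek{\cBB}}_{\wek{k}\wek{k}}=\langle\wek{\cB}(\wek{k})|\rho|\wek{\cB}(\wek{k})\rangle$ are the diagonal matrix elements of $\rho$ in $\wek{\cBB}$. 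By Gibbs' inequality (equivalently, non-negativity of the classical Kullback--Leibler divergence), $-\sum_{\wek{k}}p_{\wek{k}}\log q_{\wek{k}}\geq -\sum_{\wek{k}}p_{\wek{k}}\log p_{\wek{k}}=S(\rho^{\wek{\cBB}})$, with equality when $q_{\wek{k}}=p_{\wek{k}}$, i.e.\ when $\sigma=\rho^{\wek{\cBB}}$. Hence for each fixed $\wek{\cBB}$, the optimal classical $\sigma$ diagonal in $\wek{\cBB}$ is $\rho^{\wek{\cBB}}$ itself, yielding $S(\rho\|\rho^{\wek{\cBB}})=S(\rho^{\wek{\cBB}})-S(\rho)$.

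Taking the outer minimization over local product bases $\wek{\cBB}$ and invoking Theorem~\ref{5_thm:distill}, we then obtain
\begin{equation}
Q(\rho)=\min_{\wek{\cBB}}\big(S(\rho^{\wek{\cBB}})-S(\rho)\big)=Q_{E_{\textup{D}}}(\rho),
\end{equation}
as claimed. There is no substantive obstacle to this argument: the only nontrivial input is Theorem~\ref{5_thm:distill} (already proven), and the reduction of the inner minimization to $\sigma=\rho^{\wek{\cBB}}$ is just Gibbs' inequality applied to the diagonal distributions. The slightly subtle point worth emphasizing in the write-up is that the set of strictly classical $\sigma$ is exactly the union over local product bases $\wek{\cBB}$ of the probability simplex on $\set{|\wek{\cB}(\wek{k})\rangle}$, which justifies splitting the single minimization in Equation~(\ref{5_eqn:req}) into the two-stage optimization above.
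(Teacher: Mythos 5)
Your proof is correct and follows essentially the same route as the paper: both reduce the claim to the identity $Q(\rho)=\min_{\wek{\cBB}}\big(S(\rho^{\wek{\cBB}})-S(\rho)\big)$ and then invoke Theorem~\ref{5_thm:distill}. The only difference is that the paper obtains this identity by citing Theorem~2 of \cite{MPSVW10}, whereas you derive it directly by splitting the minimization over strictly classical states into an outer choice of local product basis plus an inner simplex minimization resolved by Gibbs' inequality --- a correct, self-contained substitute for that citation.
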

\begin{proof}
    The claim follows immediately from the result that, as proven for example in Theorem 2 of \cite{MPSVW10}, the REQ can alternatively be expressed as the expression in Equation~(\ref{5_eqn:DE}).
\end{proof}

This finding immediately provides a clear-cut \emph{operational interpretation} for the REQ, which therefore emerges as a natural, mathematically sound  and physically motivated measure of non-classical correlations in  quantum states of arbitrary-dimensional composite systems. The degree of non-classical correlations as quantified by the REQ, a measure whose original definition was purely geometric \cite{MPSVW10}, is quantitatively reinterpreted as the resource power of such correlations for the task of generating distillable entanglement with an ancilla in the worst case scenario.
Incidentally, since the REQ is faithful \cite{groismanquantumness}, this can be considered an alternate proof of Theorem~\ref{5_thm:quantumnessiff}.

Before closing this section, we prove a strict upper bound on the non-classicality of \emph{separable} bipartite quantum states with respect to $Q_{E_{\textup{D}}}$.

\begin{theorem}\label{5_thm:sepuppbnd}
  Consider bipartite separable state $\rho_{AB}=\sum_ip_i\proj{\alpha_i}\otimes\proj{\beta_i}$, for $\{p_i\}$ a probability distribution and $\set{\ket{\alpha_i}},\set{\ket{\beta_i}}\subseteq\complex^d$. Then, $Q(\rho_{AB}) < \log d$.
\end{theorem}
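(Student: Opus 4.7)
The plan is to leverage Corollary~\ref{5_cor:req}, which identifies $Q(\rho_{AB})$ with $\min_{\cBB}[S(\rho_{AB}^{\cBB})-S(\rho_{AB})]$, so it suffices to exhibit a local product basis $\cBB$ achieving $S(\rho_{AB}^{\cBB})-S(\rho_{AB})<\log d$. The canonical first candidate is the product of marginals $\sigma=\rho_A\otimes\rho_B$: this state is diagonal in the product of local eigenbases of $\rho_A$ and $\rho_B$ and hence classical in the sense of Definition~\ref{5_def:classical}, giving $Q(\rho_{AB})\leq S(\rho_{AB}\|\rho_A\otimes\rho_B)=I(A{:}B)_\rho$.

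To bound $I(A{:}B)_\rho$ in the separable case, I would introduce the tripartite CQ extension $\rho_{ABC}=\sum_i p_i\proj{\alpha_i}\otimes\proj{\beta_i}\otimes\proj{i}$, with $C$ a classical index register. Since the eigenvectors are the orthonormal product states $\ket{\alpha_i,\beta_i,i}$, a direct computation gives $S(\rho_{ABC})=S(\rho_{BC})=H(\{p_i\})$, so $I(A{:}BC)_\rho=S(\rho_A)$. Monotonicity of mutual information under partial trace over $C$ then yields $I(A{:}B)_\rho\leq S(\rho_A)\leq\log d$, and by symmetry $I(A{:}B)_\rho\leq S(\rho_B)$. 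Whenever $\min(S(\rho_A),S(\rho_B))<\log d$, i.e., at least one marginal is not maximally mixed, we immediately conclude $Q(\rho_{AB})<\log d$.

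The residual case is $\rho_A=\rho_B=I/d$, in which $I(A{:}B)_\rho=2\log d-S(\rho_{AB})$, while strong subadditivity applied to $\rho_{ABC}$ (using $S(A|BC)_\rho=0$, since conditional on the classical index $A$ is pure) gives $S(\rho_{AB})\geq S(\rho_B)=\log d$. If this inequality is strict, $I(A{:}B)_\rho<\log d$ and we are done; the only hard subcase is the equality, i.e.\ $S(A|B)_\rho=0$. Here the known characterization of states with vanishing quantum conditional entropy (equivalently, states admitting a deterministic recovery channel from $B$ to $AB$) forces
\begin{equation}
\rho_{AB}=\tfrac{1}{d}\sum_{b=1}^{d}\proj{\psi_b}_A\otimes\proj{b}_B
\end{equation}
for some orthonormal basis $\{\ket{b}\}$ and pure states $\ket{\psi_b}$, with the uniform weights forced by $H(\{1/d\})=\log d=S(\rho_B)$.

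In this last subcase I would abandon the eigenbasis-of-marginals ansatz. Pick $\cBB_B=\{\ket{b}\}$ and any $\cBB_A$ whose first element is $\ket{\phi_1}=\ket{\psi_1}$. The measurement then leaves the $b=1$ block undisturbed, and a direct calculation gives
\begin{equation}
S(\rho_{AB}^{\cBB})-S(\rho_{AB})=\tfrac{1}{d}\sum_{b=1}^{d} S\!\left((\proj{\psi_b})^{\cBB_A}\right)\leq\tfrac{1}{d}\bigl(0+(d-1)\log d\bigr)=\left(1-\tfrac{1}{d}\right)\log d<\log d,
\end{equation}
which completes the proof. The principal technical obstacle is justifying the rigid CQ structure implied by $S(A|B)_\rho=0$; this can either be cited as a standard consequence of the equality condition in quantum data processing, or derived directly from the observation that any deterministic recovery channel $B\to AB$ must act as $\proj{b}\mapsto\sigma^b_A\otimes\proj{b}$, with the vanishing conditional entropy then forcing each $\sigma^b_A$ to be pure.
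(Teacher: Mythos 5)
Your route is genuinely different from the paper's, and most of it is sound. The chain $Q(\rho_{AB})\leq S(\rho_{AB}\|\rho_A\otimes\rho_B)=I(A{:}B)_\rho$, the classical-flag extension $\rho_{ABC}$ with $S(\rho_{ABC})=S(\rho_{BC})=H(\{p_i\})$, and the data-processing step giving $I(A{:}B)_\rho\leq\min\{S(\rho_A),S(\rho_B)\}$ are all correct, and they dispose of every case except $\rho_A=\rho_B=I/d$ with $S(\rho_{AB})=\log d$. The computation you do in that residual case is also correct \emph{granting} the claimed form $\rho_{AB}=\frac{1}{d}\sum_b\proj{\psi_b}\otimes\proj{b}$; in fact you could conclude more, since $\rho_A=I/d$ forces $\sum_b\proj{\psi_b}=I$, hence orthonormal $\ket{\psi_b}$, hence a classical state with $Q=0$.

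The gap is the structural claim itself, which you flag but do not close. There is no standard theorem asserting that $S(A|B)_\rho=0$ forces a classical-quantum form, and for general states the claim is false: on two qubits the isotropic state $p\proj{\phi^+}+(1-p)I/4$ with $p$ tuned so that $S(\rho_{AB})=1$ has vanishing conditional entropy and maximally mixed marginals but is entangled, hence nowhere near CQ. The recovery-channel statement you cite parenthetically is also not the one attached to $S(A|B)=0$; Petz-type recoverability characterizes $I(A{:}C|B)=0$, not vanishing conditional entropy. Any rigidity argument must therefore use separability in an essential way, e.g.\ via the equality case of the Nielsen--Kempe majorization $\lambda(\rho_{AB})\prec\lambda(\rho_B)$ (strict Schur concavity of $S$ then gives $\rho_{AB}=\Pi/d$ for a rank-$d$ projector $\Pi$, after which one must still show that a separable $\Pi/d$ with maximally mixed marginals has the claimed form), or via the Markov-chain structure theorem applied to $I(A{:}C|B)=I(A{:}BC)-I(A{:}B)=0$. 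Either way this is a nontrivial missing piece. The paper sidesteps it with a more elementary argument: measuring $A$ in the eigenbasis of $\rho_A$ and using $S(\rho_{AB})\geq S(\rho_A)$ gives $Q(\rho_{AB})\leq\min_{\cBB_B}\sum_i p_i^A S(\sigma_i^{\cBB_B})$, and saturating $\log d$ would force every conditional state $\sigma_i^{\cBB_B}$ to be maximally mixed in every basis, hence $\rho_{AB}=I/d^2$, which is classical with $Q=0$ --- a contradiction.
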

\begin{proof}
We have
\begin{eqnarray}
Q(\rho_{AB})&=&\min_{\wek{\cBB}}\Big(S(\rho_{AB}^{\wek{\cBB}})-S(\rho_{AB})\Big)\\
				&\leq& \min_{\wek{\cBB}}\Big(S(\rho_{AB}^{\wek{\cBB}})-S(\rho_{A})\Big)\\
				&=&\min_{\wek{\cBB}} \Big(S(\rho_{A}^{\cBB_A}) + \sum_i \bra{\cB_A(i)}\rho_A\ket{\cB_A(i)} S(\sigma^{\cBB_B}_{i})-S(\rho_{A})\Big)\\
				&\leq&\min_{\cBB_B} \sum_i p_i^A S(\sigma^{\cBB_B}_{i}),
\end{eqnarray}
where
\begin{equation}
\begin{split}
\sigma_i^{\cBB_B}:=\sum_j \frac{\bra{\cB_A(i)\cB_B(j)}\rho_{AB} \ket{\cB_A(i)\cB_B(j)}}{\bra{\cB_A(i)}\rho_A\ket{\cB_A(i)}}
\ket{\cB_B(j)}\bra{\cB_B(j)},
\end{split}
\end{equation}
where $\{p_i^A\}$ are the eigenvalues of $\rho_A$, the first inequality follows since for any separable state, $S(\rho_{AB})\geq \max\{S(\rho_A),S(\rho_B)\}$~\cite{nielsenkempe}, and the second inequality by choosing $\cBB_A$ as an eigenbasis of $\rho_A$ (yielding $S(\rho_{A}^{\cBB_A})=S(\rho_A)$).

Suppose now, for sake of contradiction, that this upper bound is equal to $\log d$. Then, it must be the case that $\sigma_i^{\cBB_B}$ is maximally mixed for all $i$, implying that $\rho_B$ is also maximally mixed. Reversing the role of $A$ and $B$, an analogous argument yields that $\rho_A$ must be maximally mixed as well. This means that the basis chosen in the second inequality is arbitrary, and we find that for the last line to be equal to $\log d$, it must be that $\bra{\cB_A(i)\cB_B(j)}\rho_{AB}\ket{\cB_A(i)\cB_B(j)}=1/d^2$ for all $\cB_A,\cB_B$ and all $i,j$. Thus, $\rho_{AB}=I/d^2$. However, this state is classical, and hence achieves $Q(\rho_{AB})=0$, yielding the desired contradiction.
\end{proof}

%

\subsection{Negativity of quantumness}\label{5_sscn:neg}
The next entanglement monotone we consider in our scheme is the \emph{Negativity}~\cite{VW02}. The latter is defined for a bipartite state $\rho_{AB}$ as $\cN(\rho_{AB}):=(\trnorm{{\rho}_{AB}^{T_{A}}}-1)/2$, for ${\rho}_{AB}^{T_{A}}$ the partially transposed state. Plugging $\cN$ into our framework, we obtain a non-classicality measure we call the \emph{negativity of quantumness}, $Q_\cN(\rho)$.

\begin{theorem}
    For the negativity of quantumness, $Q_\cN$, we have that
    \begin{equation}\label{5_eqn:negat0}
    Q_\cN(\rho)=\frac{1}{2}\min_{\wek{\cBB}} \sum_{\wek{i}\neq \wek{j}}|\rho^{\wek{\cBB}}_{\wek{i}\wek{j}}|.
    \end{equation}
\end{theorem}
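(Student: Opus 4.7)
The plan is to directly compute the partial transpose of the output state $\rho^f$ of the activation protocol, exploit its maximally correlated form (Equation~(\ref{5_eqn:maxcorr})), and then minimize over the adversarial choice of local unitaries $U_{\wek{A}}$.

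First, recall from Equation~(\ref{5_eqn:maxcorr}) that after the activation protocol, we have
\begin{equation*}
\rho^f=\sum_{\wek{k}\wek{l}}\rho^{\wek{\cBB}}_{\wek{k}\wek{l}}\ket{\wek{k}}\bra{\wek{l}}_{\wek{A}}\otimes\ket{\wek{k}}\bra{\wek{l}}_{\wek{A'}},
\end{equation*}
where the local product basis $\wek{\cBB}$ is determined by $U_{\wek{A}}$ via $\ket{\wek{\cB}(\wek{k})}=U_{\wek{A}}^\dagger\ket{\wek{k}}$. Taking the partial transpose with respect to $\wek{A}$ yields
\begin{equation*}
(\rho^f)^{T_{\wek{A}}}=\sum_{\wek{k}\wek{l}}\rho^{\wek{\cBB}}_{\wek{k}\wek{l}}\ket{\wek{l}}\bra{\wek{k}}_{\wek{A}}\otimes\ket{\wek{k}}\bra{\wek{l}}_{\wek{A'}}.
\end{equation*}

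The second step is to analyze the block structure of $(\rho^f)^{T_{\wek{A}}}$ in the computational basis $\{\ket{\wek{m}}_{\wek{A}}\ket{\wek{n}}_{\wek{A'}}\}$. A short calculation shows that its only nonzero matrix entries are of the form $\bra{\wek{m}\wek{n}}(\rho^f)^{T_{\wek{A}}}\ket{\wek{n}\wek{m}}=\rho^{\wek{\cBB}}_{\wek{n}\wek{m}}$. Consequently $(\rho^f)^{T_{\wek{A}}}$ decomposes as a direct sum of (i) the $1\times 1$ diagonal blocks $[\rho^{\wek{\cBB}}_{\wek{m}\wek{m}}]$ indexed by $\wek{m}$, and (ii) the $2\times 2$ blocks
\begin{equation*}
\begin{pmatrix} 0 & \rho^{\wek{\cBB}}_{\wek{n}\wek{m}} \\ \rho^{\wek{\cBB}}_{\wek{m}\wek{n}} & 0 \end{pmatrix}
\end{equation*}
acting on the subspace $\mathrm{span}\{\ket{\wek{m}\wek{n}},\ket{\wek{n}\wek{m}}\}$ for each unordered pair $\wek{m}\neq\wek{n}$. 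By Hermiticity of $\rho$, each such $2\times 2$ block has eigenvalues $\pm|\rho^{\wek{\cBB}}_{\wek{m}\wek{n}}|$.

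The third step is to compute the trace norm. Summing the absolute values of all eigenvalues gives
\begin{equation*}
\trnorm{(\rho^f)^{T_{\wek{A}}}}=\sum_{\wek{m}}\rho^{\wek{\cBB}}_{\wek{m}\wek{m}}+\sum_{\wek{m}\neq\wek{n}}|\rho^{\wek{\cBB}}_{\wek{m}\wek{n}}|=1+\sum_{\wek{m}\neq\wek{n}}|\rho^{\wek{\cBB}}_{\wek{m}\wek{n}}|,
\end{equation*}
where the factor of $2$ from the off-diagonal block eigenvalues is absorbed into the sum over all ordered pairs $\wek{m}\neq\wek{n}$. Applying the definition $\cN(\sigma)=(\trnorm{\sigma^{T_A}}-1)/2$ then gives $\cN_{\wek{A}:\wek{A'}}(\rho^f)=\tfrac{1}{2}\sum_{\wek{m}\neq\wek{n}}|\rho^{\wek{\cBB}}_{\wek{m}\wek{n}}|$.

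Finally, observe that as $U_{\wek{A}}$ ranges over all tensor products of local unitaries, the basis $\wek{\cBB}$ defined by $\ket{\wek{\cB}(\wek{k})}=U_{\wek{A}}^\dagger\ket{\wek{k}}$ ranges over all local product bases. Minimizing over $U_{\wek{A}}$ in the definition of $Q_\cN$ therefore yields the claimed expression. The main routine obstacle is the indexing bookkeeping needed to identify the block structure of the partial transpose cleanly; no nontrivial inequalities or optimization arguments are required, since the minimum of a sum of moduli of off-diagonal entries over local product bases is exactly what the protocol produces.
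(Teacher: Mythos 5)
Your proposal is correct and follows essentially the same route as the paper: exploit the maximally correlated form of $\rho^f$, read off the eigenvalues of $(\rho^f)^{T_{\wek{A}}}$ as $\rho^{\wek{\cBB}}_{\wek{m}\wek{m}}$ together with $\pm|\rho^{\wek{\cBB}}_{\wek{m}\wek{n}}|$ for $\wek{m}\neq\wek{n}$, compute the trace norm, and minimize over the adversarial local unitaries. Your write-up just makes the $2\times 2$ block structure of the partial transpose more explicit than the paper does.
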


\begin{proof}
Thanks to the maximally correlated form of the output of our protocol, by directly applying the definition of the partial transpose to $\rho^f$, we can calculate the eigenvalues of $(\rho^f)^{T_{A}}$ as $\rho^{\wek{\cB}}_{\wek{i}\wek{i}}$ for all $\wek{i}$, and $\pm|\rho^{\wek{\cB}}_{\wek{i}\wek{j}}|$ for $\wek{i}>\wek{j}$. Thus,
\begin{equation}\label{5_eqn:negat}
\mathcal{N}(\rho^f)=\frac{\trnorm{(\rho^f)^{T_{A}}}-1}{2}=\frac{\sum_{\wek{i}\neq\wek{j}}|\rho^{\wek{\cBB}}_{\wek{i}\wek{j}}|}{2}.
\end{equation}
\end{proof}
\noindent We remark that since, by definition, a non-classical state must have some non-vanishing off-diagonal terms $\rho^{\wek{\cBB}}_{\wek{i}\wek{j}}$ in any local product basis $\wek{\cBB}$, we thus obtain yet another proof of Theorem~\ref{5_thm:quantumnessiff}, i.e.\ that $\rho^f$ is entangled for any local rotation $U_\wek{A}$ if and only if ${\rho}$ is not classical.

Next, for the special case of \emph{pure} bipartite states $\ket{\psi}$, we find that $Q_\cN$ has a particularly simple form, in that it reduces to the negativity of $\ket{\psi}$.
\begin{corollary}\label{5_cor:neg}
    For rank one bipartite states $\ketbra{\psi}{\psi}$,
    \begin{equation}
        Q_\cN(\ketbra{\psi}{\psi})=\cN(\ketbra{\psi}{\psi}).
    \end{equation}
\end{corollary}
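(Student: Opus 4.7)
Starting from Equation~(\ref{5_eqn:negat0}), the problem reduces to showing
\begin{equation}
\min_{\wek{\cBB}} \sum_{\wek{i}} |\langle \wek{\cB}(\wek{i})|\psi\rangle| = \sum_k \alpha_k,
\end{equation}
where $\ket{\psi}=\sum_k \alpha_k \ket{a_k}\ket{b_k}$ is the Schmidt decomposition with $\alpha_k\geq 0$. Indeed, setting $c_\wek{i}:=|\langle \wek{\cB}(\wek{i})|\psi\rangle|$ we have $\sum_\wek{i} c_\wek{i}^2 = \enorm{\psi}^2 = 1$, so
\begin{equation}
\sum_{\wek{i}\neq\wek{j}} |\rho^{\wek{\cBB}}_{\wek{i}\wek{j}}| = \sum_{\wek{i}\neq\wek{j}} c_\wek{i} c_\wek{j} = \Big(\sum_\wek{i} c_\wek{i}\Big)^2 - 1.
\end{equation}
Combined with the standard identity $\cN(\ketbra{\psi}{\psi})=\tfrac{1}{2}\bigl[(\sum_k \alpha_k)^2-1\bigr]$ for pure states, the corollary would follow.

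My first step is to reformulate $\sum_\wek{i} c_\wek{i}$ via the matrix (or $\operatorname{vec}$) representation of $\ket{\psi}$. Working in the bipartite case for clarity (the $n$-partite case is analogous), fix any computational product basis and let $M\in\LL(\complex^{d_B},\complex^{d_A})$ be the matrix with $M_{ij}=\langle ij|\psi\rangle$; a local product basis $\wek{\cBB}=\set{\ket{e_i}}\otimes\set{\ket{f_j}}$ is specified by unitaries $U,V$, and the matrix of $\ket{\psi}$ in this new basis is $M':=U^\dagger M \overline{V}$. Thus $\sum_\wek{i} c_\wek{i}$ is exactly the entrywise $\ell_1$ norm of $M'$, while by unitary invariance the trace norm obeys $\trnorm{M'}=\trnorm{M}=\sum_k \alpha_k$.

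The heart of the argument is therefore the comparison between the entrywise $\ell_1$ norm and the trace norm. The upper bound
\begin{equation}
\sum_{ij}|M'_{ij}|\;\geq\;\trnorm{M'}\;=\;\sum_k \alpha_k
\end{equation}
is obtained by writing $M'=\sum_{ij}M'_{ij}\,\ket{i}\bra{j}$ and invoking the triangle inequality for $\trnorm{\cdot}$ together with $\trnorm{\ket{i}\bra{j}}=1$. Achievability is immediate: taking $\wek{\cBB}$ to be the tensor product of the Schmidt bases $\set{\ket{a_k}}\otimes\set{\ket{b_k}}$ gives $M'=\operatorname{Diag}(\alpha_1,\ldots,\alpha_d)$, for which $\sum_{ij}|M'_{ij}|=\sum_k \alpha_k$. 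This simultaneously proves the minimum and shows it is attained at the Schmidt basis, completing the proof.

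I do not anticipate a serious obstacle: everything reduces cleanly to the entrywise-$\ell_1$/trace-norm inequality, which is a one-line consequence of the triangle inequality. The only minor subtlety is keeping track of conjugations when translating local unitary changes of basis into left/right action on $M$, and extending the bipartite matrix picture to the $n$-partite case (where one uses an analogous tensor-network/vec unfolding and the nuclear norm along the $\wek{A}$--$\wek{A'}$ cut induced by the maximally correlated form of $\rho^f$).
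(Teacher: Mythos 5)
Your proof is correct and follows essentially the same route as the paper's: both reduce the problem to minimizing the entrywise $\ell_1$ norm of the coefficient matrix of $\ket{\psi}$ over local basis changes, lower-bound it by the trace norm (the sum of Schmidt coefficients) and note achievability at the Schmidt basis. The only difference is cosmetic — you prove the $\ell_1$-versus-trace-norm inequality by the triangle inequality applied to $M'=\sum_{ij}M'_{ij}\ketbra{i}{j}$, whereas the paper uses the variational characterization of the trace norm together with H\"{o}lder; your one-line version is slightly more elementary but establishes the same lemma.
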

\begin{proof}
Note first that for $\ket{\psi}=\sum_i\alpha_i\ket{a_i}\ket{b_i}$ the Schmidt decomposition of $\ket{\psi}$, one has $\cN(\ketbra{\psi}{\psi})=\sum_{i\neq j}\alpha_i\alpha_j$. We now show that $Q_\cN(\ketbra{\psi}{\psi})$ matches this expression.

For any local product basis $\wek{\cBB}$, one can write
$    \ket{\psi}=\sum_{i,j=0}^{d-1}\alpha_{ij}\ket{\cB_1(i)}\ket{\cB_2(j)}
$. Then, letting $\rho=\ketbra{\psi}{\psi}$ and beginning from Equation~(\ref{5_eqn:maxcorr}),
straightforwardly applying the definitions of the trace norm and partial transpose yields in Equation~(\ref{5_eqn:negat}) that
\begin{equation}\label{5_eqn:temp1}
    \trnorm{(\rho^f)^{T_{A}}}=\left(\sum_\wek{ij} \abs{\alpha_{ij}}\right)^2.
\end{equation}
Note here that the coefficients $\alpha_{ij}$ are specific to the choice of basis $\wek{\cBB}$ --- thus, our goal is to choose $\wek{\cBB}$ so as to minimize $\sum_\wek{ij} \abs{\alpha_{ij}}$. We claim that this minimizing basis is in fact just the tensor product of the local Schmidt bases for $\ket{\psi}$.

To see this, we use the $\operatorname{vec}$ mapping~\cite{W08_2}, which can be defined such that $\operatorname{vec}(\ket{a}\bra{b})=\ket{a}\ket{b}$ (and analogously, $\operatorname{vec}^{-1}(\ket{a}\ket{b})=\ket{a}\bra{b}$), and the $l_1$ norm, defined as $\norm{C}_{l_1}:=\sum_{ij}\abs{C_{ij}}$. Define now
\begin{equation}
    C:=\operatorname{vec}^{-1}\left(\sum_{ij=0}^{d-1}\alpha_{ij}\ket{\cB_1(i)}\ket{\cB_2(j)}\right)=\sum_{ij=0}^{d-1}\alpha_{ij}\ket{\cB_1(i)}\bra{\cB_2(j)}.
\end{equation}
Then, we have
\begin{equation}\label{5_eqn:temp2}
    \sum_\wek{ij} \abs{\alpha_{ij}}=\norm{C}_{l_1}\geq \trnorm{C}=\sum_i\sigma_i,
\end{equation}
for $\set{\sigma_i}$ the singular values of $C$. Here, the claim $\norm{C}_{l_1}\geq \trnorm{C}$ follows since
\begin{eqnarray}
    \trnorm{C} &=& \max_{0\preceq M\preceq I} \trace((2M-I)C)\\&\leq& \max_{0\preceq M\preceq I} \abs{\langle{\operatorname{vec}(2M-I)},{\operatorname{vec}(C)}\rangle}\\&\leq& \max_{0\preceq M\preceq I} \snorm{\operatorname{vec}(2M-I)}\norm{\operatorname{vec}(C)}_1\\&\leq&\norm{C}_{l_1},
\end{eqnarray}
where the second inequality follows from the H\"{o}lder inequality, the third inequality from the fact that $\snorm{\operatorname{vec}(2M-I)}$ is at most the spectral norm of ${2M-I}$, and where $\norm{v}_\infty := \max_i \abs{v_i}$ and $\norm{v}_1:=\sum_i \abs{v_i}$.

The final step is to observe that the $\sigma_i$ are in fact the Schmidt coefficients of $\ket{\psi}$, since if $C=\sum_i\sigma_i\ketbra{a_i}{b_i}$ is the singular value decomposition of $C$, then $\operatorname{vec}(C)=\sum_i\sigma_i\ket{a_i}\ket{b_i}$ is a Schmidt decomposition for $\ket{\psi}$. Since $\set{\ket{a_i}\otimes\ket{b_j}}$ is a valid local product basis $\wek{\cB}$ in which to expand $\ket{\psi}$, by combining Equations~\ref{5_eqn:temp1} and~\ref{5_eqn:temp2} the claim follows.
\end{proof}

We finally extend our analysis for pure states to the setting of \emph{pseudo-pure} states
\begin{equation}
    \rho(\psi,p):= \frac{(1-p)}{d^2}I+p\proj{\psi}
\end{equation}
where $0\leq p \leq 1$.

\begin{corollary}
    For pseudo-pure state $\rho(\psi,p)$, we have
    \begin{equation}
        Q_\cN(\rho)=p\cN(\ketbra{\psi}{\psi}).
        \end{equation}
\end{corollary}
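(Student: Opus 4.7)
The plan is to reduce the statement to Corollary~\ref{5_cor:neg} by exploiting the fact that the maximally mixed part of $\rho(\psi,p)$ is diagonal in \emph{every} local product basis and hence contributes nothing to the off-diagonal sum in Equation~(\ref{5_eqn:negat0}).

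Specifically, I would begin from the closed-form expression $Q_\cN(\rho) = \frac{1}{2}\min_{\wek{\cBB}} \sum_{\wek{i}\neq \wek{j}} |\rho^{\wek{\cBB}}_{\wek{i}\wek{j}}|$. Fix an arbitrary local product basis $\wek{\cBB}=\{\ket{\wek{\cB}(\wek{i})}\}$ and expand $\ket{\psi}=\sum_{\wek{i}}\alpha_{\wek{i}}\ket{\wek{\cB}(\wek{i})}$ in this basis. A direct calculation of the matrix elements of $\rho(\psi,p)=\frac{1-p}{d^2}I+p\ketbra{\psi}{\psi}$ in $\wek{\cBB}$ yields
\begin{equation}
\rho(\psi,p)^{\wek{\cBB}}_{\wek{i}\wek{j}} = \frac{1-p}{d^2}\delta_{\wek{i}\wek{j}} + p\,\alpha_{\wek{i}}\overline{\alpha_{\wek{j}}},
\end{equation}
so for $\wek{i}\neq\wek{j}$ the identity term drops out entirely, leaving $|\rho(\psi,p)^{\wek{\cBB}}_{\wek{i}\wek{j}}| = p\,|\alpha_{\wek{i}}||\alpha_{\wek{j}}|=p\,|\ketbra{\psi}{\psi}^{\wek{\cBB}}_{\wek{i}\wek{j}}|$.

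Factoring the $p$ out of the minimization then gives
\begin{equation}
Q_\cN(\rho(\psi,p))=\frac{p}{2}\min_{\wek{\cBB}}\sum_{\wek{i}\neq \wek{j}}|\ketbra{\psi}{\psi}^{\wek{\cBB}}_{\wek{i}\wek{j}}|=p\,Q_\cN(\ketbra{\psi}{\psi}),
\end{equation}
and invoking Corollary~\ref{5_cor:neg} to identify $Q_\cN(\ketbra{\psi}{\psi})=\cN(\ketbra{\psi}{\psi})$ completes the proof. In particular, the minimizing basis is the same as in the pure case --- the tensor product of the local Schmidt bases of $\ket{\psi}$ --- because the identity term is basis-independent and contributes a fixed zero to the objective.

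There is essentially no obstacle here: the argument is a one-line observation about how the maximally mixed component interacts with the off-diagonal sum, followed by a direct appeal to the already-established pure-state corollary. The only mild subtlety worth checking is that nothing peculiar happens for degenerate Schmidt spectra of $\ket{\psi}$, but since Corollary~\ref{5_cor:neg} holds for arbitrary pure $\ket{\psi}$, this is automatic.
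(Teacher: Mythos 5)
Your proof is correct and follows essentially the same route as the paper: the paper's proof likewise observes that $\rho_{\wek{i}\wek{j}}^{\wek{\cBB}}=\bra{\wek{\cB}(\wek{i})}\rho\ket{\wek{\cB}(\wek{j})}$, so the maximally mixed part contributes only to the diagonal and the off-diagonal entries pick up exactly a factor of $p$, after which Equation~(\ref{5_eqn:negat0}) and Corollary~\ref{5_cor:neg} finish the argument. Your write-up simply spells out the matrix-element computation that the paper leaves implicit.
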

\begin{proof}
    Follows immediately from Equation~(\ref{5_eqn:negat0}) and Corollary~\ref{5_cor:neg} by recalling that $\rho_{\wek{i}\wek{j}}^\wek{\cBB}=\bra{\wek{\cB}(\wek{i})}\rho \ket{\wek{\cB}(\wek{j})}$.
\end{proof}
\noindent Hence, as already observed in, for example, Reference~\cite{groismanquantumness}, $\rho(\psi,p)$ is non-classical as long as $p>0$ and $\psi$ is entangled.

\section{Non-classicality, mixedness, and entanglement}\label{5_scn:mixed}
Equipped with a faithful and operational measure of non-classical correlations, $Q_{E_{\textup{D}}}$, which we henceforth refer to as $Q$, we now investigate the interplay between non-classicality, entanglement and mixedness of general states  $\rho_{\wek{A}}$. For the sake of simplicity, from now on we restrict to the bipartite case $A_1=A$, $A_2=B$.  We begin by setting the stage with a few simple but general observations following from the definition of $Q$.

For \emph{pure} states $\rho_{AB} = \proj{\psi}$, $Q(\rho_{AB})$ reduces to the von Neumann entropy of entanglement $S(\rho_A)=S(\rho_B)$ \cite{Bravyi2003}, and is thus at most equal to  $\log d$. On the other hand, for arbitrary mixed $\rho_{AB}$, we have that $Q(\rho_{AB})$ is at most $2\log d$, since from Equation~(\ref{5_eqn:req}) one has $Q(\rho_{AB}) \le S(\rho_{AB}\|\rho_A \otimes \rho_B) = S(\rho_A)+S(\rho_B)-S(\rho_{AB}) \equiv I(\rho_{AB})$, where $I$ denotes the mutual information, a measure of \emph{total} correlations. From this and the results of~\cite{nielsenkempe}, one realizes that for a \emph{separable} state a bound $Q(\rho_{AB}^{\textup{sep}}) \leq\log d$ holds.
Now recall from Theorem~\ref{5_thm:sepuppbnd} that this inequality is always sharp for separable states, i.e.\ the bound $\log d$ cannot be exactly saturated for separable non-classical states, while it is instead trivially reached by pure maximally entangled states $\ket{\psi} = d^{-1/2} \sum_{j=0}^{d-1} \ket{j}\ket{j}$.

Surprisingly, what we now show is that as $d \rightarrow \infty$, this upper bound is in fact \emph{asymptotically} attained by separable states. More precisely, we show that there exist separable states such that $Q(\rho^{\textup{sep}}_{AB}) / \log d \rightarrow 1$ with growing $d$. Even more intriguingly, we can show that the upper bound on general mixed bipartite states $\rho_{AB}$ is also asymptotically tight; specifically, there exist families of \emph{mixed} states for which as $d \rightarrow \infty$, $Q(\rho_{AB}) / \log d\rightarrow 2$.

More formally, we prove the following two results, where $m := \lceil (\log d)^4 \rceil$.
\begin{theorem}
  \label{5_prop:sep}
  Define the random separable state:
   \begin{equation} \sigma_{AB} = \frac{1}{dm} \sum_{{i=1,\ldots,d}\atop{j=1,\ldots,m}}
                                    \proj{i}_{A} \ox \left(U_j\proj{i}U_j^\dagger\right)_{B},\end{equation}
  for unitaries $U_j$ drawn independently from the Haar measure.
  Then, with high probability, $Q(\sigma_{AB}) \geq \log d - O(\log \log d)$.
\end{theorem}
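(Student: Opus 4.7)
The plan is, via Theorem~\ref{5_thm:distill} (which gives $Q(\sigma_{AB})=\min_{\cBB}S(\sigma_{AB}^\cBB)-S(\sigma_{AB})$), to establish separately (i) the deterministic bound $S(\sigma_{AB})\le\log d+O(\log\log d)$ and (ii) the uniform lower bound $\min_{\cBB}S(\sigma_{AB}^\cBB)\ge 2\log d-O(\log\log d)$ with high probability over the Haar-random $U_j$. Subtracting gives the desired $Q(\sigma_{AB})\ge \log d-O(\log\log d)$. Bound (i) is immediate: $\sigma_{AB}$ is a uniform mixture of $dm$ pure product states, so $S(\sigma_{AB})\le\log(dm)=\log d+4\log\log d+O(1)$.

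For (ii), since $\sigma_A=\sigma_B=I/d$, any local product basis $\cBB=\cBB_A\otimes\cBB_B$ yields uniform outcome marginals, giving $S(\sigma_{AB}^\cBB)=H(P_{AB})=2\log d-I(P_A{:}P_B)$. The task thus reduces to showing $\max_{\cBB} I(P_A{:}P_B)\le O(\log\log d)$ with high probability -- a statement exactly in the spirit of ``locking of classical correlations''~\cite{dhlst04}: while knowledge of the key $j$ would unlock $\log d$ bits of classical correlation in $\sigma_{AB}$, the random unitaries $U_j$ should, in their absence, leave only $O(\log\log d)$ bits extractable by any local measurement.

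For a fixed basis $\{\ket{e_k}\otimes\ket{f_l}\}$ I would write $P(k,l)=\tfrac{1}{d^2}(1+\epsilon_{kl})$ with $\epsilon_{kl}=\sum_i|\langle e_k|i\rangle|^2\delta_{li}$, $\delta_{li}:=d\,c_{li}-1$, and $c_{li}:=\tfrac{1}{m}\sum_j|\langle f_l|U_j|i\rangle|^2$. Unitarity of $U_j$ gives $\sum_l\delta_{li}=0$, so the linear term in the expansion $(1+x)\log(1+x)=x+\tfrac12 x^2+O(x^3)$ vanishes after summing over $(k,l)$. Cauchy--Schwarz on the probability vector $\{|\langle e_k|i\rangle|^2\}_i$ then yields
\[
I(P_A{:}P_B)\;\le\;\frac{1}{2d^2}\sum_{l,i}\delta_{li}^2\;+\;\text{higher-order terms}.
\]
Each $c_{li}$ is the empirical mean of $m$ i.i.d.\ variables $|\langle f_l|U_j|i\rangle|^2\in[0,1]$ of mean $1/d$ and variance $O(1/d^2)$, so Bernstein's inequality yields sub-Gaussian concentration of $\delta_{li}$ about $0$ at rate $cm$; with $m=\lceil(\log d)^4\rceil$, this gives $\EE[I(P_A{:}P_B)]=O(1/m)=O((\log d)^{-4})$ for any single basis.

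The hard part, and the heart of the argument, will be promoting this per-basis bound to a uniform bound over \emph{all} local product bases. My plan is to combine (a) concentration of $I(P_A{:}P_B)$ about its mean, for each fixed $\cBB$, via a Levy-type concentration-of-measure estimate on the product manifold $U(d)^m$ (in the spirit of Lemma~\ref{5_lem:levy} and of~\cite{HLW:aspects,HLSW04}), with (b) an $\varepsilon$-net over local basis changes $U(d)\times U(d)$ of size $(C/\varepsilon)^{2d^2}$, coupled to Lipschitz continuity of $I(P_A{:}P_B)$ in the basis. With the specific scalings $m=(\log d)^4$ and target deviation $t=O(\log\log d)$, the resulting concentration exponent $\sim md^2 t^2/\eta^2$ (where $\eta$ bounds the Lipschitz constant of $I$ in $(U_1,\dots,U_m)$) should comfortably dominate the net entropy $\sim 2d^2\log(1/\varepsilon)$ for moderate $\varepsilon$, yielding $\max_\cBB I(P_A{:}P_B)\le O(\log\log d)$ with probability $1-o(1)$. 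The most delicate part is the estimate of $\eta$; this follows the standard template of the randomization-of-quantum-states and data-hiding literature~\cite{HLSW04}, which is exactly why ``ideas similar to known concentration of measure arguments'' suffice.
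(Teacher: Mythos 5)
Your proposal is correct in outline and shares the paper's skeleton exactly: both split $Q$ via Theorem~\ref{5_thm:distill} into the deterministic rank bound $S(\sigma_{AB})\le\log(dm)=\log d+O(\log\log d)$ and a high-probability uniform lower bound on $S(\sigma_{AB}^{\wek{\cBB}})$, and both reduce the latter, using the fact that $\sigma_A=\sigma_B=I/d$ forces uniform marginals, to showing that the classical mutual information extractable by any local product measurement is small. Where you diverge is in how that last, uniform bound is obtained. The paper proves nothing new here: it observes that $\sigma_{AB}$ arises from the information-locking state of~\cite{HLSW04} (which carries the key $j$ in an extra classical register on $A$) by tracing out that register --- a local operation --- so the known bound on the locally accessible mutual information from Theorem V.1 of~\cite{HLSW04} transfers by data processing. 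You instead propose to prove the statement directly for $\sigma_{AB}$: per-basis Bernstein concentration of the empirical frequencies $c_{li}$, a quadratic expansion of the entropy in which the linear term cancels, and a Levy-plus-net argument to make the bound uniform over all local product bases. Your per-basis computation is sound (the marginals are indeed uniform, $\sum_{kl}\epsilon_{kl}=0$, and Jensen gives $I\lesssim\frac{1}{2d^2}\sum_{l,i}\delta_{li}^2$ with expectation $O(1/m)$), but the step you yourself flag as delicate --- the Lipschitz estimate for $I(P_A{:}P_B)$ and the union bound over an $\varepsilon$-net of bases --- is precisely the content of the cited locking theorem, so completing your route amounts to reproducing its proof. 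The paper's data-processing shortcut buys brevity; your route buys self-containedness and makes explicit where $m=\lceil(\log d)^4\rceil$ enters. If you keep your route, the Lipschitz bound must actually be supplied, since without it the uniformization is only a plan; otherwise, simply cite~\cite{HLSW04} after the data-processing observation, as the paper does.
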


\begin{theorem}
  \label{5_prop:low-rank}
  For $C$ a system of dimension $m$, let
    $\rho_{AB} = \Tr_C \proj{\psi}_{ABC}$,
  where $\ket{\psi} \in \mathbb{C}^d \ox \mathbb{C}^d \ox \mathbb{C}^m$ is uniformly distributed
  (with probability induced by the Haar measure). Then, with high probability, $Q(\rho_{AB}) \geq 2\log d - O(\log \log d)$.
\end{theorem}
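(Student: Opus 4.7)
The plan is to use the expression $Q(\rho_{AB}) = \min_{\wek{\cBB}} \big(S(\rho^{\wek{\cBB}}) - S(\rho_{AB})\big)$ from Theorem~\ref{5_thm:distill} and attack the two terms separately. For the second term, an easy deterministic bound will do: since $\rho_{AB} = \Tr_C \proj{\psi}$ has rank at most $\dim C = m = \lceil(\log d)^{4}\rceil$, one immediately gets $S(\rho_{AB}) \leq \log m = 4\log\log d$. All of the real work is in obtaining a high-probability \emph{uniform} lower bound on $S(\rho^{\wek{\cBB}})$ over all local product bases $\wek{\cBB}$.

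For a fixed local product basis $\wek{\cBB}$, I would first show that the diagonal probabilities $p_{ij} = \bra{\cB_A(i)\cB_B(j)}\rho_{AB}\ket{\cB_A(i)\cB_B(j)}$ are close to the uniform value $1/d^{2}$ on average. Unitary invariance of the Haar measure lets one take $\wek{\cBB}$ to be the computational basis without loss of generality; then $p_{ij} = \|(\bra{ij}\otimes I_{C})\ket{\psi}\|^{2}$ is the squared norm of the projection of a Haar-random unit vector in $\mathbb{C}^{d^{2}m}$ onto an $m$-dimensional subspace, so $(p_{ij})_{ij}$ follows a Dirichlet distribution with all parameters equal to $m$. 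Standard digamma-function identities give
\begin{equation}
\mathbb{E}[S(\rho^{\wek{\cBB}})] = \psi(d^{2}m+1) - \psi(m+1) = 2\log d - O(1/m),
\end{equation}
so the expectation is already within $O(1/(\log d)^{4})$ of the maximum. The map $\psi \mapsto S(\rho^{\wek{\cBB}}(\psi))$ is Lipschitz with constant $O(\log d)$: one composes the $2$-Lipschitz map $\psi \mapsto \proj{\psi}$ (in trace norm), the contractive operations of partial trace and dephasing in $\wek{\cBB}$, and the Fannes--Audenaert continuity bound $|S(\rho)-S(\sigma)| \leq \tfrac{1}{2}\trnorm{\rho-\sigma}\log(d^{2}-1) + h(\trnorm{\rho-\sigma}/2)$. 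Applying Lemma~\ref{5_lem:levy} with this Lipschitz constant gives, for any $\alpha>0$,
\begin{equation}
\Pr\!\left[\,S(\rho^{\wek{\cBB}}) < 2\log d - O(1/m) - \alpha\,\right] \leq 2\exp\!\left(-\,c\,\tfrac{d^{2}m\,\alpha^{2}}{(\log d)^{2}}\right).
\end{equation}

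The genuinely delicate step is the union bound over the continuous set of local product bases. I would build an $\epsilon$-net $\mathcal{N}$ on $U(d)\times U(d)$ (modulo local phases, which do not affect $\rho^{\wek{\cBB}}$); by standard volumetric arguments $|\mathcal{N}| \leq \exp(O(d^{2}\log(1/\epsilon)))$. A short perturbation estimate using Fannes--Audenaert again shows that moving each local unitary by at most $\epsilon$ in operator norm changes $S(\rho^{\wek{\cBB}})$ by at most $O(\epsilon\log d)$, so it suffices to prove the deviation bound on $\mathcal{N}$ with a slightly stronger $\alpha$. Choosing $\alpha$ of order $\log\log d$ and $\epsilon$ of order $1/d$, the exponent $c\,d^{2}m\alpha^{2}/(\log d)^{2}$ scales like $d^{2}(\log d)^{2}(\log\log d)^{2}$, while $\log|\mathcal{N}|$ scales only like $d^{2}\log d$, so the union bound still gives a failure probability that decays super-polynomially in $d$. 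Putting these ingredients together yields $S(\rho^{\wek{\cBB}}) \geq 2\log d - O(\log\log d)$ simultaneously for all $\wek{\cBB}$ with high probability, and combining with the bound on $S(\rho_{AB})$ proves the theorem.

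The main obstacle will be the union bound: the Lipschitz constant $O(\log d)$ coming from Fannes--Audenaert is unavoidable, and it is only because $m$ was chosen as large as $(\log d)^{4}$ that the concentration exponent beats the $d^{2}\log(1/\epsilon)$ net entropy with room to spare. A secondary technical point that needs care is ensuring that the Lipschitz estimate for changes of basis is quantitatively tight enough to allow $\epsilon$ as large as $1/d$; a looser estimate would force a finer net and break the balance. Once these two quantitative ingredients are set correctly, the remaining steps are routine.
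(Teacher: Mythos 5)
Your proposal follows essentially the same architecture as the paper's proof: bound $S(\rho_{AB})\leq\log m$ by the rank argument, lower-bound $\mathbb{E}[S(\rho^{\wek{\cBB}})]$ for a fixed local basis, apply Levy's Lemma with Lipschitz constant $O(\log d)$, and finish with a union bound over an $\epsilon$-net of local bases whose cardinality $\exp(O(d^2\log(d/\epsilon)))$ is beaten by the concentration exponent precisely because $m=\lceil(\log d)^4\rceil$. The one step you do genuinely differently is the expectation bound: you identify the diagonal probabilities $(p_{ij})$ as Dirichlet$(m,\dots,m)$ and invoke the exact digamma formula $\mathbb{E}[H]=\psi(d^2m+1)-\psi(m+1)$, whereas the paper lower-bounds $S$ by the R\'enyi-2 entropy and computes the expected purity with the swap-operator trick, obtaining $2\log d-\log(1+1/m)$; your route is tighter and arguably cleaner, the paper's is more elementary and self-contained. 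One small caution: Fannes--Audenaert by itself does not yield a genuine Lipschitz constant for $\psi\mapsto S(\rho^{\wek{\cBB}}(\psi))$, since the binary-entropy term behaves like $\sqrt{T}$ near $T=0$; the $O(\log d)$ Lipschitz bound is correct but should be taken from the gradient estimate in Hayden--Leung--Winter (as the paper does) rather than derived from the continuity bound. With that citation substituted, the argument goes through as planned.
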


What these results tell us is that, first, there are separable states that asymptotically (in $d$) are as non-classical as the most non-classical pure state (which is the maximally entangled state); second, mixed entangled states can be much more non-classical (namely, twice as much) than pure entangled states. We remark that therefore \emph{both} entanglement \emph{and} mixedness are required to ``break the barrier'' of $\log d$. This goes against the intuition that entanglement by itself is the strongest form of non-classicality:
We demonstrate that mixedness also plays a prominent role and can make correlations maximally non-classical.

One possible explanation for these findings may be the following: Traditionally, the study of quantum correlations has focused on quantum entanglement, which arises from the superposition principle of quantum mechanics. Yet, there is another ``non-classical'' feature of quantum mechanics to be reckoned with; namely, that quantum systems can be in probabilistic mixtures of \emph{non-commuting} states. What Theorem~\ref{5_prop:sep} thus quantifies is the extent to which non-commutativity alone can give rise to non-classical correlations. When non-commutativity is then combined with the superposition principle, Theorem~\ref{5_prop:low-rank} tells us that the non-classical correlations generated are stronger than possible with either principle alone.

We close this section with the proofs of Theorems~\ref{5_prop:sep} and~\ref{5_prop:low-rank}.
\begin{proof}[Proof of Theorem~\ref{5_prop:sep}.]
   The claim will follow by showing that in Equation~(\ref{5_eqn:DE}), $S(\sigma_{AB}) \leq \log d + \log m$,
  whereas for $d$ sufficiently large and with high probability,
     $S\bigl(\sigma^{\wek{\cBB}}_{AB} \bigr) \geq 2\log d - \textup{const.}$ for all $\wek{\cBB}$. The first of these bounds is easy to prove --- it follows by observing that the rank of $\sigma_{AB}$ is at most $dm$.

As for the second bound, note first that
\begin{equation}
S\bigl(\sigma^{\wek{\cBB}}_{AB} \bigr)=S(\sigma^{\wek{\cBB}}_{A})+S(\sigma^{\wek{\cBB}}_{B})-I(\sigma^{\wek{\cBB}}_{AB})=2\log d-I(\sigma^{\wek{\cBB}}_{AB}),
\end{equation}
which follows since $\sigma^{\wek{\cBB}}_{A}=\sigma^{\wek{\cBB}}_{B}=I/d$. Hence, it suffices to show that $I(\sigma^{\wek{\cBB}}_{AB})\leq \textup{const.}$. To see this, note that $\sigma_{AB}$ is almost identical to the information locking states considered in~\cite{HLSW04} (see Theorem~V.1, Equation~(64)), which take the form
\begin{equation}
    \rho_{AB}=\frac{1}{dm} \sum_{{i=1,\ldots,d}\atop{j=1,\ldots,m}}
                                    \proj{ij}_{A} \ox \left(U_j\proj{i}U_j^\dagger\right)_{B}.
\end{equation}
Letting $x$ and $y$ denote random variables corresponding to the outcomes of local measurements $X$ and $Y$ on $A$ and $B$, respectively, define $I_c(\rho_{AB}):=\max_{X\otimes Y}I(x:y)$. Then, it is known that for large enough $d$ and with high probability over the choice of local unitaries $\set{U_j}$, $I_c(\rho_{AB})\leq const.$ (specifically, for our choice of $m$ here, set the parameter $\epsilon$ in Equation (66) of~\cite{HLSW04} to scale  as $1/\log d$). Observing that $\sigma_{AB}$ is attainable from $\rho_{AB}$ via a local operation (namely, we trace out the register containing label $j$ in $A$), and recalling that the mutual information is non-increasing under partial trace completes the proof.
%
%
\end{proof}

\begin{proof}[Proof of Theorem~\ref{5_prop:low-rank}.]
  The claim will follow by showing that, in Equation~(\ref{5_eqn:DE}), $S(\rho) \leq \log m$,
  whereas for $d$ sufficiently large and with high probability, $S\bigl( \rho^\wek{\cBB} \bigr) \geq 2\log d - \textup{const.}$ for all $\wek{\cBB}$. Again, the first of these bounds follows simply because the rank of $\rho$ is bounded by $m$.

Now, let $M$ and $N$ denote arbitrary complete von Neumann measurements on $A$ and $B$, respectively, such that
\begin{equation}
  M = \set{ M_x=\proj{m_x} }_{x=1}^{d}, \quad
  N = \set{ N_y=\proj{n_y} }_{y=1}^{d}.
\end{equation}
For such a measurement $M$, let $\Pi_M(\sigma)$ denote the completely positive trace-preserving linear map
\begin{equation}
  \Pi_M(\sigma) = \sum_{x=1}^d M_x \sigma M_x.
\end{equation}

%
To prove the desired bound of $S\bigl( \rho^\wek{\cBB} \bigr) \geq 2\log d - \textup{const.}$, we use the concentration of measure results of~\cite{HLW:aspects} (see also~\cite{HLSW04}). The intuition is as follows. We first consider a \emph{fixed} set of local measurement bases $M$ and $N$. Then, one can show that for the random state $\ket{\psi}$ and the corresponding state $\rho$ in the statement of the theorem, the expected value of $S(\Pi_M\otimes\Pi_N(\rho))$ is at least roughly $2\log d$. We then convert this into a high-probability statement using Levy's Lemma (Lemma~\ref{5_lem:levy}), which yields that with high probability, $S(\Pi_M\otimes\Pi_N(\rho))$ will indeed be close to its expected value. This was for a \emph{fixed} choice of local measurements $M$ and $N$ --- to extend this statement to \emph{all} such measurements, we use the union bound together with a net argument. Specifically, we cast a $\delta$-net over all choices of local measurements $M$ and $N$, and apply the union bound to conclude that for all measurements from this net, $S(\Pi_M\otimes\Pi_N(\rho))$ will still be close to its expected value with probability bounded away from $1$.

To begin, let $M$ and $N$ be a fixed choice of local measurement bases. We first lower bound the expected value of $S(\Pi_M\otimes\Pi_N(\rho))$ over random choices of $\ket{\psi}$ as follows:
\begin{equation}
  S\bigl( \Pi_M\otimes\Pi_N(\rho) \bigr) \geq S_2\bigl( \Pi_M\otimes\Pi_N(\rho)  \bigr)
                              =    -\log\sum_{x,y=1}^d \left[\Tr\left(\ketbra{\psi}{\psi}(M_x \ox N_y \ox \II)\right)\right]^2,
\end{equation}
where $S_2(\sigma)=-\log(\Tr (\sigma^2))$ is the quantum Renyi entropy of order 2, and the last equality follows since for rank one $M_x$ and $N_y$,
\begin{equation}
    \trace\left[(M_x \ox N_y \rho)^2\right]=\left[\trace(M_x \ox N_y \rho)\right]^2.
\end{equation}
Hence,
\begin{eqnarray}
  \EE_\psi \left[S\bigl(  \Pi_M\otimes\Pi_N(\rho) \bigr)\right]
   &\geq& -\log\EE_\psi\left[ \sum_{x,y=1}^d \left[ \Tr(\ketbra{\psi}{\psi}(M_x \ox N_y \ox \II)) \right]^2 \right]\\
       &=&    -\log\left( d^2 \EE_\psi \left[\bigl[ \Tr(\ketbra{\psi}{\psi}(\proj{0} \ox \proj{0} \ox \II)) \bigr]^2\right] \right)\label{5_eqn:temp3}
\end{eqnarray}
where the first statement follows by the convexity of $-\log$, and the second since the distribution of $\ket{\psi}$ is invariant under unitaries. Now,
\begin{eqnarray}
    &&d^2\EE_\psi\left[ \bigl[ \Tr(\ketbra{\psi}{\psi}_{ABC}(\proj{0}_A \ox \proj{0}_B \ox \II_C)) \bigr]^2\right]\\&=&    d^2\EE_\psi\left[ \Tr(\ketbra{\psi}{\psi}_{ABC}\otimes\ketbra{\psi}{\psi}_{A'B'C'}(\proj{00}_{AA'} \ox \proj{00}_{BB'} \ox \II_{CC'})) \right]\\
    &=&d^2\Tr(\EE_\psi\left[ \ketbra{\psi}{\psi}_{ABC}\otimes\ketbra{\psi}{\psi}_{A'B'C'}\right](\proj{00}_{AA'}  \ox \proj{00}_{BB'} \ox \II_{CC'}))\\
      &=&\frac{1}{m(d^2m+1)}\Tr(\left[ \II_{ABC:A'B'C'}+W_{ABC:A'B'C'}\right](\proj{00}_{AA'}  \ox \proj{00}_{BB'} \ox \II_{CC'}))\nonumber\\
      &=&\frac{m+1}{d^2m+1}
      \label{5_eqn:swap},
\end{eqnarray}
where the first equality uses the fact that $(\trace(AB))^2=\trace[(A\otimes A)(B\otimes B)]$, and the third equality follows since $\EE_\psi\left[ \ketbra{\psi}{\psi}\otimes\ketbra{\psi}{\psi}\right]$ is proportional to $I+W$ for $W=\sum_{ij}\ketbra{i}{j}\otimes\ketbra{j}{i}$ the swap gate. (One way to see the latter is to note that $\sigma=\EE_\psi\left[ \ketbra{\psi}{\psi}\otimes\ketbra{\psi}{\psi}\right]$ is invariant under $U\otimes U$ for any unitary $U$, and is hence a Werner state~\cite{W89}. Werner states, in turn, can be written as mixtures of the projectors onto the symmetric and antisymmetric spaces, $\Pi_{S}=(I+W)/2$ and $\Pi_{A}=(I-W)/2$, respectively. Observing that $\trace(\sigma\Pi_A)=0$ yields the claim.) Substituting Equation~(\ref{5_eqn:swap}) in Equation~(\ref{5_eqn:temp3}) thus yields
\begin{equation}
    \EE_\psi \left[S\bigl(  \Pi_M\otimes\Pi_N(\rho) \bigr)\right]
   \geq \log\frac{d^2m+1}{m+1}
        \geq  2\log d - \log\left(1+\frac{1}{m}\right)=:\Delta.
\end{equation}

With a lower bound on the expected value of $S(\Pi_M\otimes\Pi_N(\rho))$ in hand, we would now like to show that with high probability, $S(\Pi_M\otimes\Pi_N(\rho))$ indeed takes a value close to its expected value. To show this, we apply Levy's Lemma (Lemma~\ref{5_lem:levy}), which requires an upper bound on the Lipschitz constant of the entropy $S\bigl(\Pi_M\otimes\Pi_N(\rho) \bigr)$. The latter is given by the proof of Lemma III.2 of~\cite{HLW:aspects}, which demonstrates an upper bound on the constant of $\sqrt{8}\log d$. Thus, by Levy's Lemma:
\begin{equation}\label{5_eqn:levy}
  \Pr\left\{ S\bigl( \Pi_M\otimes\Pi_N(\rho) \bigr)
                            < \Delta - \epsilon \right\}\\
        \leq \exp\left( -\frac{c\epsilon^2d^2m}{(\log d)^2}  \right),
\end{equation}
for some constant $c>0$. This shows that the entropy is indeed large with high probability for a fixed choice of local measurement basis $M\otimes N$.

To extend this to \emph{all} local measurement bases $M$ and $N$, suppose we had a net of $T$ basis pairs $M^t$ and $N^t$
for $t\in[T]$ able to approximate the quantity $S\bigl( \Pi_{M}\otimes\Pi_{N}(\rho) \bigr)$ for \emph{any} local measurement $M\otimes N$ within precision $2\epsilon$. Then, by Equation~(\ref{5_eqn:levy}) and the union bound, we would have:
\begin{equation}
  \Pr\left\{ \exists t\mbox{ s.t. } S\bigl( \Pi_{M^t}\otimes\Pi_{N^t}(\rho) \bigr)
                            < \Delta - \epsilon \right\}\\
        \leq T \exp\left( -\frac{c\epsilon^2d^2m}{(\log d)^2}  \right),
\end{equation}
implying
\beq
  \Pr\left\{ \exists M\otimes N\mbox{ s.t. } S\bigl( \Pi_{M}\otimes\Pi_{N}(\rho) \bigr)
                            < \Delta - 3\epsilon \right\}\\
        \leq T \exp\left( -\frac{c\epsilon^2d^2m}{(\log d)^2}  \right).
  \label{5_eqn:probability}
\eeq
Thus, if such a $2\epsilon$-net with small enough $T$ exists, then we are done. Indeed, Lemma~\ref{5_lem:net1} shows that such a $2\epsilon$-net exists with
\begin{equation}
  T \leq \left( \frac{c'd^{3/2}(\log d)^2}{\epsilon^2} \right)^{4d^2}
\end{equation}
for $c'\in\Theta(1)$. For this value of $T$, since we set $m = \lceil (\log d)^4 \rceil$, the probability on the right side of Equation~(\ref{5_eqn:probability}) is bounded away from $1$ for large enough $d$, completing the proof.
\end{proof}

In order to complete the proof of Theorem~\ref{5_prop:low-rank}, we finally show three lemmas required for the net argument above.

\begin{lemma}\label{5_lem:net1}
    For any constant $\epsilon>0$, there exists a set $S:=\set{M^t\otimes N^t}_{t=1}^T$of $T$ local measurement bases (where $M^t$ and $N^t$ are rank one von Neumann measurements each acting on $d$-dimensional spaces) with
    \begin{equation}
        T \leq \left( \frac{cd^{3/2}(\log d)^2}{\epsilon^2} \right)^{4d^2},
    \end{equation}
    (for $c>0$ a constant) such that for all $\rho\in \mathcal{D}(\complex^d\otimes\complex^d)$ and local measurements $M\otimes N$, there exists a $M^t\otimes N^t\in S$ such that
    \begin{equation}
        \abs{S\bigl( \Pi_{M}\otimes\Pi_{N}(\rho) \bigr)-S\bigl( \Pi_{M^t}\otimes\Pi_{N^t}(\rho) \bigr)}\leq 2\epsilon.
    \end{equation}
\end{lemma}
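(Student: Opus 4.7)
The plan is to parametrize rank-one von Neumann measurements on $\complex^d$ by unitary operators, cover the unitary group by a fine net, and then transfer the covering property to the entropies $S(\Pi_M\otimes\Pi_N(\rho))$ via (i) Lipschitz continuity of the dephasing channels in the underlying unitary, followed by (ii) Fannes--Audenaert continuity of the von Neumann entropy.

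First, I would identify each rank-one von Neumann measurement $M$ with a unitary $U\in U(d)$ via $M_U=\{U\proj{i}U^\dagger\}_{i=1}^d$, noting that the associated channel can be written as $\Phi_U(\sigma)=U\,\Delta(U^\dagger\sigma U)\,U^\dagger$, where $\Delta$ is the completely dephasing channel in the computational basis. A standard volume argument (e.g.\ Szarek) gives, for any $\eta>0$, an $\eta$-net $\calN_\eta\subset U(d)$ in the operator norm with $\abs{\calN_\eta}\leq (c_0/\eta)^{d^2}$ for some absolute constant $c_0>0$. I take the candidate net of measurement pairs to be $S=\{(M_U,M_V)\,:\,U,V\in\calN_\eta\}$, so that $T=\abs{\calN_\eta}^2\leq (c_0/\eta)^{2d^2}$.

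Next I would show that $\Phi_U$ is Lipschitz in $U$. Using the triangle-inequality decomposition $\Phi_U-\Phi_{U'}= \mathrm{Ad}_U\circ\Delta\circ(\mathrm{Ad}_{U^\dagger}-\mathrm{Ad}_{U'^\dagger}) + (\mathrm{Ad}_U-\mathrm{Ad}_{U'})\circ\Delta\circ\mathrm{Ad}_{U'^\dagger}$, together with the bound $\trnorm{U\sigma U^\dagger - U'\sigma {U'}^\dagger}\leq 2\snorm{U-U'}\trnorm{\sigma}$ and the fact that $\Delta$ is trace-norm contractive, gives $\norm{\Phi_U-\Phi_{U'}}_\diamond\leq 4\snorm{U-U'}$. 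Tensoring and applying the triangle inequality once more yields, for any $\rho\in\DD(\complex^d\otimes\complex^d)$ and $(U,V)\in\calN_\eta\times\calN_\eta$ chosen within $\eta$ of $(U_M,U_N)$,
\begin{equation}
\trnorm{(\Pi_M\otimes\Pi_N)(\rho)-(\Pi_{M_U}\otimes\Pi_{M_V})(\rho)}\leq 8\eta.
\end{equation}

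Now I would invoke the Fannes--Audenaert inequality on the $d^2$-dimensional output space: with $T_\eta:=\tfrac{1}{2}\trnorm{\cdots}\leq 4\eta$,
\begin{equation}
\bigl|S(\Pi_M\otimes\Pi_N(\rho))-S(\Pi_{M_U}\otimes\Pi_{M_V}(\rho))\bigr|\leq 4\eta\log(d^2-1)+H_2(4\eta)\leq 8\eta\log d + H_2(4\eta).
\end{equation}
To force the right-hand side below $2\epsilon$, I would choose $\eta$ of the form $\eta=\epsilon^2/(c_1 (\log d)^2)$ with a suitable constant $c_1$, which ensures both $8\eta\log d\leq\epsilon$ and $H_2(4\eta)\leq\epsilon$ (using $H_2(x)\leq 2\sqrt{x}$ for small $x$). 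With this choice,
\begin{equation}
T\leq (c_0/\eta)^{2d^2}=\left(\tfrac{c_0 c_1 (\log d)^2}{\epsilon^2}\right)^{2d^2},
\end{equation}
which is well within the bound $(c d^{3/2}(\log d)^2/\epsilon^2)^{4d^2}$ claimed in the lemma.

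The main obstacle I anticipate is purely bookkeeping: getting the Lipschitz constant from $U$ to $\Phi_U$ and then to the entropy without losing a power of $d$. The dangerous step is the trace-norm stability of the channel $\Phi_U$, because a naive basis-by-basis triangle inequality gives a bound growing like $d\eta$; writing $\Phi_U$ in the $\mathrm{Ad}_U\circ\Delta\circ\mathrm{Ad}_{U^\dagger}$ form, however, removes this $d$-factor and yields the desired dimension-free bound $8\eta$. Everything else is routine application of Fannes--Audenaert and the standard covering number of $U(d)$.
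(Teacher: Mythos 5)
Your proof is correct, and its overall skeleton (a net on the unitary group, a dimension-free continuity estimate, then Fannes--Audenaert on the $d^2$-dimensional output) is the same as the paper's; the interesting differences are in how the two sub-steps are implemented. For the continuity step, the paper transfers the change of basis onto the \emph{state}: writing $V$ for the unitary rotating the net basis $M^s\otimes N^t$ into $M\otimes N$, it uses $S(\Pi_{M^s}\otimes\Pi_{N^t}(\rho))=S(\Pi_{M}\otimes\Pi_{N}(V\rho V^\dagger))$ and bounds $\trnorm{V\rho V^\dagger-\rho}\leq 4\delta$ from $\snorm{I-V}\leq 2\delta$; you instead bound the diamond-norm distance between the dephasing \emph{channels} via the factorization $\Phi_U=\mathrm{Ad}_U\circ\Delta\circ\mathrm{Ad}_{U^\dagger}$, which is an equally valid way to avoid the $d$-factor that a naive projector-by-projector triangle inequality would incur (you are right to flag that as the dangerous step; the paper's state-rotation trick sidesteps it for the same reason). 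For the net itself, the paper does not invoke an off-the-shelf covering bound but proves its own (Lemma~\ref{5_lem:net2}), building the net column-by-column from a net on pure states and re-orthogonalizing via $B^{-1/2}$ in the style of the pretty good measurement; this is what produces the $d^{3/2}$ factor and the exponent $2d^2$ per party, hence the $4d^2$ in the lemma statement. Your appeal to the standard $(c_0/\eta)^{d^2}$ metric-entropy bound for $U(d)$ is legitimate and yields a smaller net, comfortably inside the claimed cardinality; the paper's version is self-contained at the cost of a looser constant. Both routes land on the same Fannes--Audenaert application and the same choice $\eta\sim\epsilon^2/(\log d)^2$.
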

\begin{proof}
To construct the set $S$, we embed each local measurement basis into a unitary matrix, and then cast a net over unitary matrices. Specifically, recall that each local measurement is described by an orthonormal basis $M=\set{\ket{b_i}}_{i=1}^d$. Then, by arranging the vectors $\ket{b_i}$ as columns of a matrix, we obtain a $d\times d$ unitary matrix, denoted $U_M$, which rotates the standard basis to $\set{\ket{b_i}}_{i=1}^d$. By Lemma~\ref{5_lem:net2}, there exists a $\delta$-net (with respect to the spectral norm) for $\mathcal{U}(\complex^d)$ of size
\begin{equation}
  T_0 \leq \left( \frac{c'd^{3/2}}{\delta} \right)^{2d^2}.
\end{equation}
Thus, by setting $\delta = \epsilon^2/32(\log d)^2$ and picking $T_0$ elements $\{M^s\}$ and $T_0$ elements $\{N^t\}$, we obtain our set $S$ of local measurement bases with size $T=T_0^2$, as in the statement of our claim.

We now show that $S$ is a $2\epsilon$-net. Let $M$ and $N$ be arbitrary local measurements with corresponding unitaries $U_M$ and $U_N$. Then, there exist $U_{M^s}$ and $U_{N^t}$ in the net from Lemma~\ref{5_lem:net2} such that $\snorm{U_M-U_{M^s}}\leq \delta$ and $\snorm{U_N-U_{N^t}}\leq \delta$. Let $V$ denote the unitary mapping basis $M^s\otimes N^t$ to basis $M\otimes N$.

Now, if it were true that for all $\rho\in\DD(\complex^d\otimes\complex^d)$,
\begin{equation}\label{5_eqn:temp5}
    \trnorm{V\rho V^\dagger-\rho}\leq 4\delta,
\end{equation}
then our desired $2\epsilon$-net property would follow since
\begin{eqnarray}
  \abs{S\bigl( \Pi_{M}\otimes\Pi_{N}(\rho) \bigr)-S\bigl( \Pi_{M^t}\otimes\Pi_{N^t}(\rho) \bigr)}
         &=&    \left| S\bigl( \Pi_{M}\otimes\Pi_{N}(\rho) \bigr)-S\bigl( \Pi_{M}\otimes\Pi_{N}(V\rho V^\dagger) \bigr) \right| \nonumber\\
       &\leq& H(2\delta,1-2\delta) +2\delta \log d^2\\
       &\leq& (2\sqrt{2\delta}+2\delta) \log d^2\\
       &=& \epsilon+\frac{\epsilon^2}{8\log d}\\
        &\leq& 2\epsilon
\end{eqnarray}
for large enough $d$ (or alternatively for $0<\epsilon < 1$). Here, the second inequality follows from the estimate for the Shannon entropy $H(x,1-x) \leq 2\sqrt{x(1-x)}$, and the first inequality uses the Fannes-Audenaert inequality~\cite{Au06}, which states that for $\rho,\sigma\in \mathcal{D}(\complex^d)$ with $r:=\trnorm{\rho-\sigma}/2$,
\begin{equation}
    \abs{S(\rho)-S(\sigma)}\leq r\log(d-1)+H(r,1-r).
\end{equation}

Thus, it remains to show that Equation~(\ref{5_eqn:temp5}) indeed holds. To see this, note first that by Lemma~\ref{3_lem:tracenorm},
\begin{equation}
    \snorm{U_M\otimes U_N-U_{M^s}\otimes U_{N^t}}\leq \snorm{U_M-U_{M^s}}+\snorm{U_N- U_{N^t}}\leq 2\delta,
\end{equation}
where note the columns of $U_M\otimes U_N$ are now elements of the local product basis $M\otimes N$. We thus have
\begin{equation}\label{5_eqn:temp4}
    2\delta\geq\snorm{U_M\otimes U_N-U_{M^s}\otimes U_{N^t}}=\snorm{U_M\otimes U_N-V^\dagger(U_{M}\otimes U_{N})}=\snorm{I-V},
\end{equation}
where the last equality follows since the spectral norm is invariant under unitaries. But this implies
\begin{eqnarray}
    \trnorm{V\rho V^\dagger- \rho}&=&    \trnorm{V\rho V^\dagger -V\rho + V\rho - \rho}\\
    &\leq& \trnorm{\rho(V^\dagger-I)}+\trnorm{(V-I)\rho}\\
    &\leq& \trnorm{\rho}\snorm{V^\dagger-I}+\snorm{V-I}\trnorm{\rho}\\
    &\leq& 4\delta,
\end{eqnarray}
where the first inequality follows from the triangle inequality, the second from the fact that for Schatten $p$-norms, $\pnorm{ABC}\leq\snorm{A}\pnorm{B}\snorm{C}$ (see Section~\ref{0_scn:linalg}), and the third inequality from Equation~\ref{5_eqn:temp4}. This concludes the proof.

\end{proof}

\begin{lemma}   \label{5_lem:net2}
    For any constant $\delta>0$, there exists a set $S:=\set{U_t}_{t=1}^{T_0}$ of unitaries $U_t\in \mathcal{U}(\complex^d)$ with
    \begin{equation}
        T_0 \leq \left( \frac{cd^{3/2}}{\delta} \right)^{2d^2},
    \end{equation}
    (for $c>0$ a constant) and such that for all $U\in \mathcal{U}(\complex^d)$, there exists a $U_t\in S$ satisfying $\snorm{U-U_t}\leq \delta$.
\end{lemma}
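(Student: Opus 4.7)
The plan is to prove this via a standard volumetric covering argument applied to a Frobenius-ball enclosing $\mathcal{U}(\mathbb{C}^d)$, after which one converts back to the spectral norm using the elementary inequality $\snorm{M} \leq \fnorm{M}$.

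First, I would identify $\mathcal{L}(\mathbb{C}^d)$ with the real Euclidean space $\mathbb{R}^{2d^2}$, under which identification the Frobenius norm corresponds to the standard Euclidean norm. Since every $U \in \mathcal{U}(\mathbb{C}^d)$ satisfies $\fnorm{U} = \sqrt{d}$, we have $\mathcal{U}(\mathbb{C}^d) \subseteq B_{\sqrt{d}}$, the Euclidean ball of radius $\sqrt{d}$ in $\mathbb{R}^{2d^2}$.

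Next I would apply the standard result that for any $R,\eta > 0$ and $n \in \nats$, the ball $B_R \subseteq \mathbb{R}^n$ admits an $\eta$-covering (in Euclidean norm) of size at most $(3R/\eta)^n$; this is proved by packing disjoint balls of radius $\eta/2$ around the covering points, noting they all lie in $B_{R+\eta/2}$, and comparing volumes. Applying this with $R = \sqrt{d}$, $n = 2d^2$, and $\eta = \delta/2$ yields a subset $T \subseteq B_{\sqrt{d}}$ with $|T| \leq (6\sqrt{d}/\delta)^{2d^2}$ such that every $A \in B_{\sqrt{d}}$ lies within Frobenius distance $\delta/2$ of some element of $T$.

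The third step is to turn $T$ into a net of genuine unitaries: for each $A \in T$ within Frobenius distance $\delta/2$ of $\mathcal{U}(\mathbb{C}^d)$, pick some $U_A \in \mathcal{U}(\mathbb{C}^d)$ achieving this, and discard the other $A$'s; let $S := \{U_A\}$. By the triangle inequality, for every $U \in \mathcal{U}(\mathbb{C}^d)$ there is $A \in T$ with $\fnorm{U-A} \leq \delta/2$, whence $U_A$ exists and $\fnorm{U - U_A} \leq \delta$. Combining with $\snorm{U - U_A} \leq \fnorm{U - U_A}$ gives $\snorm{U - U_A} \leq \delta$. Finally, $|S| \leq |T| \leq (6\sqrt{d}/\delta)^{2d^2}$, which is bounded above by $(cd^{3/2}/\delta)^{2d^2}$ for any constant $c \geq 6$ (and the exponent $3/2$ in $d$ is in fact looser than needed, so the claim follows comfortably).

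There is essentially no hard step here — the argument is a routine volume-counting plus norm-comparison exercise. The only minor subtlety is ensuring the net consists of unitary matrices rather than arbitrary matrices in $B_{\sqrt{d}}$, which is handled by the projection-to-nearest-unitary step at the cost of a factor of $2$ in the covering radius.
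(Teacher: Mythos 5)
Your proof is correct, but it takes a genuinely different route from the paper's. You cover the Frobenius ball $B_{\sqrt{d}}\subseteq\reals^{2d^2}$ containing $\mathcal{U}(\complex^d)$ by a standard volume-comparison argument, obtaining a net of size $(6\sqrt{d}/\delta)^{2d^2}$, and then replace each net point lying near the unitary group by a nearby unitary, finishing with $\snorm{M}\leq\fnorm{M}$. The paper instead builds the net column-by-column: it takes the $\epsilon$-net on pure states in $\complex^d$ of size $(5/\epsilon)^{2d}$ from Lemma~III.6 of~\cite{HLW:aspects}, replaces each column of $U$ by a net vector, and then restores unitarity via the ``pretty good measurement''-style orthogonalization $\ket{b_i''}=B^{-1/2}\ket{b_i'}$ with $B=\sum_i\proj{b_i'}$. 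That route requires a separate linear-independence argument (Lemma~\ref{5_lem:LI}) and a perturbation analysis of $B^{-1/2}$, and it is what forces the paper to take $\epsilon=\delta/(6d^{3/2})$, yielding the bound $(30d^{3/2}/\delta)^{2d^2}$; your bound of $(6\sqrt{d}/\delta)^{2d^2}$ is tighter in the $d$-dependence (both comfortably satisfy the stated $(cd^{3/2}/\delta)^{2d^2}$). The trade-off is that the paper's construction reuses the pure-state net machinery already in play elsewhere in the chapter, whereas yours is self-contained and avoids the orthogonalization step entirely. Two small points worth making explicit if you write this up: the existence of a unitary within distance $\delta/2$ of a retained net point is guaranteed because $\mathcal{U}(\complex^d)$ is compact (or simply because you only retain points for which such a unitary exists), and the volume bound $(3R/\eta)^n$ presumes $\eta\leq R$, which is harmless here since for $\delta\geq 2\sqrt{d}$ a single-element net suffices.
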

\begin{proof}
For any unitary $U\in \mathcal{U}(\complex^d)$, the idea is to replace the columns of $U$ with vectors taken from a net on the set of pure states in $\complex^d$. Of course, the resulting operator $U'$ is in general not unitary --- however, this can be corrected by an appropriate orthogonalization procedure inspired by the ``pretty good measurement''~\cite{HW94}, yielding a unitary $U''$ such that $\snorm{U-U''}\leq \delta$, as desired.

More specifically, by Lemma III.6 of~\cite{HLW:aspects}, there exists an $\epsilon$-net $N$ on the set of pure state vectors in $\CC^d$ (with respect to the Euclidean norm) such that $\abs{N}\leq(5/\epsilon)^{2d}$. Set $\epsilon = \frac{\delta}{6d^{3/2}}$. Now, for each column $\ket{b_i}$ of a given unitary $U$, we first find an $\epsilon$-close vector $\ket{b_i'}\in N$, and embed the latter as columns into a matrix $U'$. Of course, the vectors $\set{\ket{b_i'}}$ are not orthogonal in general, so $U'$ is not unitary. To correct this, define the operator $B = \sum_{i=1}^d \proj{b_i'}$ and let
\begin{equation}
  \ket{b_i''} = B^{-1/2} \ket{b_i'}.
\end{equation}
Note that if the $\ket{b_i'}$ are linearly independent, then $B$ is invertible, and moreover $\set{\ket{b_i''}}_{i=1}^d$ is an orthonormal basis since
\begin{equation}
\sum_{i=1}^d\ketbra{b_i''}{b_i''}=\sum_{i=1}^dB^{-1/2}\ketbra{b_i'}{b_i'}B^{-1/2}=B^{-1/2}BB^{-1/2}=I.
\end{equation}
Note that by Lemma~\ref{5_lem:LI} below, the $\ket{b_i'}$ are indeed linearly independent (for large enough $d$) for our choice of $\epsilon\in\Theta(d^{-3/2})$.

Now, in order to show that this construction constitutes a $\delta$-net, we must show that $\snorm{U-U''}\leq \delta$. To do so, we first bound $\enorm{\ket{b}-\ket{b''}}$ as
\begin{equation}
    \enorm{\ket{b}-\ket{b''}}\leq     \enorm{\ket{b}-\ket{b'}}+    \enorm{\ket{b'}-\ket{b''}}\leq \epsilon + \enorm{(I-B^{-\frac{1}{2}})\ket{b'}}\leq\epsilon + \snorm{I-B^{-\frac{1}{2}}},
\end{equation}
where the second inequality follows from our $\epsilon$-net, and the third inequality from the definition of the spectral norm. To bound this latter quantity, note that since
\begin{equation}
\snorm{\proj{b_i'}-\proj{b_i}}\leq\trnorm{ \proj{b_i'}-\proj{b_i}} \leq 2\norm{ \ket{b_i'}-\ket{b_i} }_2 \leq 2\epsilon,
\end{equation}
where the second inequality follows from Equation~(\ref{0_eqn:traceeuclid}), we have
\begin{equation}
  \snorm{ B- \II } =    \snorm{ \sum_{i=1}^d \left(\proj{b_i'}-\proj{b_i}\right) }
              \leq \sum_{i=1}^d \snorm{ \proj{b_i'}-\proj{b_i} } \leq 2d\epsilon,
\end{equation}
and consequently $\snorm{B^{-1/2} - \II} \leq \frac{2d\epsilon}{1-2d\epsilon}$. The latter can be seen by applying the definition of the spectral norm in terms of the singular values of its argument and showing that if $\abs{x-1}\leq y$ for $x\neq0$, then $\abs{\frac{1}{\sqrt{x}}-1}\leq y/(1-y)$. We conclude that
\begin{eqnarray}
    \enorm{\ket{b}-\ket{b''}}\leq \epsilon+ \frac{2d\epsilon}{1-2d\epsilon}\leq \frac{(1+2d)\epsilon}{1-2d\epsilon}\leq (4d+2)\epsilon,\label{5_eqn:twonorm}
\end{eqnarray}
where the last inequality holds when $2d\epsilon\leq 1/2$.

With this bound in hand, we can now upper bound $\snorm{U-U''}$ as
\begin{eqnarray}
    \snorm{U-U''}&=&\snorm{U^\dagger-(U'')^\dagger}\\
    &=&\max_{\ket{x}\in\complex^d\mbox{ s.t. } \enorm{x}= 1}\enorm{(U^\dagger-(U'')^\dagger)\ket{x}}\\
    &\leq&\sqrt{d}\max_{1\leq i\leq d}\abs{(\bra{b_i}-\bra{b_i''})\ket{x}}\\
    &\leq&\sqrt{d}(4d+2)\epsilon\\
    &\leq&\delta,
\end{eqnarray}
where the first inequality follows since $\enorm{x}\leq \sqrt{d}\snorm{x}$ for $\ket{x}\in\complex^d$, the second inequality follows from the Cauchy-Schwarz inequality and Equation~(\ref{5_eqn:twonorm}), and the third inequality from our definition of $\epsilon$, as desired.

It remains to bound the cardinality of our $\delta$-net: The number of different $U''$ in this net is at most
$T_0 \leq (5/{\epsilon})^{2d^2}=(30 d^{3/2}/\delta)^{2d^2}$, since for each of the $d$ columns of $U''$, we have at most $(5/\epsilon)^{2d}$ vectors in our pure state net $N$ to choose from.

\end{proof}

\begin{lemma}\label{5_lem:LI}
    Let $S=\set{\ket{b_i}}_{i=1}^d\subseteq\complex^d$ be an orthonormal basis for $\complex^d$, and let $S'=\set{\ket{b'_i}}_{i=1}^d\subseteq\complex^d$ satisfy $\enorm{\ket{b_i}-\ket{b_i'}}\leq \epsilon$ for all $i$. Then if $\epsilon\in o(1/\sqrt{d})$, $S'$ is a linearly independent set for large enough $d$.
\end{lemma}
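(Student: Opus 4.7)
The plan is to convert the question of linear independence into a question about the invertibility of a perturbed unitary matrix. Let $U$ be the unitary matrix whose $i$th column is $\ket{b_i}$ and let $V$ be the matrix (not necessarily unitary) whose $i$th column is $\ket{b_i'}$. Then $S'$ is linearly independent if and only if $V$ has full column rank, which is equivalent to $V^\dagger V$ being positive definite. I would therefore aim to bound the spectral norm of $V^\dagger V - I$ and show it is strictly less than $1$ for large enough $d$.

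Writing $E := V - U$, so that the $i$th column of $E$ is $\ket{b_i'} - \ket{b_i}$, I would expand
\begin{equation}
V^\dagger V \;=\; (U+E)^\dagger(U+E) \;=\; I + U^\dagger E + E^\dagger U + E^\dagger E,
\end{equation}
using that $U^\dagger U = I$. By the triangle inequality and the fact that the spectral norm is submultiplicative and unitarily invariant, we have
\begin{equation}
\snorm{V^\dagger V - I} \;\leq\; 2\snorm{E} + \snorm{E}^2.
\end{equation}
The key estimate is now a bound on $\snorm{E}$. Since $\snorm{E} \leq \fnorm{E}$, and since the columns of $E$ have Euclidean norm at most $\epsilon$, we obtain
\begin{equation}
\snorm{E} \;\leq\; \fnorm{E} \;=\; \sqrt{\sum_{i=1}^d \enorm{\ket{b_i'} - \ket{b_i}}^2} \;\leq\; \sqrt{d}\,\epsilon.
\end{equation}

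The hypothesis $\epsilon \in o(1/\sqrt{d})$ gives $\sqrt{d}\,\epsilon \to 0$ as $d \to \infty$, so for large enough $d$ we have $2\sqrt{d}\,\epsilon + d\epsilon^2 < 1$, hence $\snorm{V^\dagger V - I} < 1$. This forces all eigenvalues of $V^\dagger V$ to be strictly positive, so $V^\dagger V$ is invertible, $V$ has full column rank, and the columns $\set{\ket{b_i'}}$ are linearly independent, as desired. The only step that requires any care is the Frobenius-norm bound on $\snorm{E}$, but this is a routine consequence of $\snorm{\cdot} \leq \fnorm{\cdot}$ combined with summing the columnwise bounds, so no real obstacle is expected.
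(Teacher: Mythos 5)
Your proof is correct, but it takes a genuinely different route from the paper's. The paper argues by contradiction: it assumes a vanishing linear combination $\sum_i\alpha_i\ket{b_i'}=0$, writes $\ket{b_i'}=\ket{b_i}+\ket{\epsilon_i}$, equates $\enorm{\sum_i\alpha_i\ket{b_i}}^2=\enorm{\sum_i\alpha_i\ket{\epsilon_i}}^2$, and then uses orthonormality of the $\ket{b_i}$ together with Cauchy--Schwarz and $\onorm{\ve{\alpha}}\leq\sqrt{d}\enorm{\ve{\alpha}}$ to derive $1\leq\epsilon^2(d+1)$, contradicting $\epsilon\in o(1/\sqrt{d})$. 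You instead work at the level of the Gram matrix: writing $V=U+E$ and bounding $\snorm{V^\dagger V - I}\leq 2\snorm{E}+\snorm{E}^2$ with $\snorm{E}\leq\fnorm{E}\leq\sqrt{d}\,\epsilon$, you conclude $V^\dagger V\succ 0$. Each step checks out (the passage from $\snorm{V^\dagger V-I}<1$ to positive definiteness, and from there to full column rank, is standard), and the quantitative threshold you obtain, $\epsilon<(\sqrt{2}-1)/\sqrt{d}$, is of the same order as the paper's $\epsilon<1/\sqrt{d+1}$. What your approach buys is slightly more than the statement asks for: a lower bound on the smallest singular value of $V$, i.e., quantitative control on how far $V$ is from singular. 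Since $B=\sum_i\proj{b_i'}=VV^\dagger$ has the same nonzero spectrum as $V^\dagger V$, your bound in fact also recovers the estimate $\snorm{B-I}\leq 2\sqrt{d}\epsilon+d\epsilon^2$, which is comparable to the separate bound $\snorm{B-I}\leq 2d\epsilon$ that the paper proves independently inside Lemma~\ref{5_lem:net2}; in that sense your argument unifies two steps of the surrounding construction. The paper's proof is marginally more elementary (no matrix norms needed) and gives a marginally better constant, but both are fine.
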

\begin{proof}
    We proceed by contradiction. Assume $S'$ is a linearly dependent set, i.e.\ there exist coefficients $\alpha_i$, at least two of which are non-zero, such that
    \begin{equation}\label{5_eqn:temp6}
        0=\sum_i\alpha_i\ket{b'_i}=\sum_i\alpha_i(\ket{b_i}+\ket{\epsilon_i}).
    \end{equation}
    for $\ket{\epsilon_i}\in\complex^d$ with $\enorm{\ket{\epsilon_i}}\leq \epsilon$. It follows that $\enorm{\sum_i\alpha_i\ket{b_i}}^2=\enorm{\sum_i\alpha_i\ket{\epsilon_i}}^2$, implying
    \begin{equation}
        \sum_i\abs{\alpha_i}^2\leq\sum_{ij}\abs{\alpha_i}\abs{\alpha_j}\abs{\braket{\epsilon_i}{\epsilon_j}}\leq\epsilon^2\sum_{ij}\abs{\alpha_i}\abs{\alpha_j}=\epsilon^2\left(\sum_i\abs{\alpha_i}^2+\sum_{i\neq j}\abs{\alpha_i}\abs{\alpha_j}\right),
    \end{equation}
    where the second inequality follows from the Cauchy-Schwarz inequality. Thus,
    \begin{equation}
        (1-\epsilon^2)\sum_i\abs{\alpha_i}^2\leq \epsilon^2\left(\sum_{i\neq j}\abs{\alpha_i}\abs{\alpha_j}\right)\leq \epsilon^2\left(\sum_i\abs{\alpha_i}\right)^2\leq \epsilon^2 d\left(\sum_i\abs{\alpha_i}^2\right),
    \end{equation}
    where the third inequality follows since $\onorm{\ket{v}}\leq \sqrt{d}\enorm{\ket{v}}$ for any $\ket{v}\in\complex^d$. Since $S'$ is linearly dependent, $\sum_i\abs{\alpha_i}^2\neq 0$, and so dividing both end sides of the chain above by this quantity yields
    \begin{equation}
        1\leq \epsilon^2(d+1),
    \end{equation}
    which for $\epsilon\in o(1/\sqrt{d})$ yields a contradiction for large enough $d$.
\end{proof}
\noindent \emph{Acknowledgements for this chapter.} We thank  Fernando~Brand\~ao, Nicolas Brunner, Dagmar Bru\ss,  Hermann Kampermann, Debbie Leung and Alexander Streltsov for helpful discussions. 

\chapter{Characterizing quantumness via entanglement creation}\label{chap:entanglementswap}

\emph{This chapter is based on~\cite{GPACH11}:}\\

\vspace{-4mm}
\noindent S.~Gharibian, M.~Piani, G.~Adesso, J.~Calsamiglia and P.~Horodecki. Characterizing quantumness via entanglement creation.
\emph{International Journal of Quantum Information}, 9(7 \& 8):1701--1713, 2011, DOI: 10.1142/S0219749911008258, \copyright~2011 World Scientific Publishing Company, www.worldscientific.com/worldscinet/ijqi.

\vspace{3mm}
\noindent In Chapter~\ref{chap:entanglementswap}, we introduced an activation protocol which maps general non-classical (multipartite) correlations between given quantum systems into bipartite entanglement between the systems and an ancilla. Here, we study how this activation protocol can be used to entangle the starting systems themselves via entanglement swapping through a measurement on the ancilla. Furthermore, we bound the relative entropy of quantumness (a naturally arising measure of non-classicality in the scheme of Chapter~\ref{chap:entanglementswap}) for a special class of separable states, the so-called classical-quantum states. In particular, we fully characterize the classical-quantum two-qubit states that are maximally non-classical.

\section{Introduction and results}
\label{6_sec:intro}

In this chapter, we continue our study of non-classical correlations (see Section~\ref{0_sscn:nonclassical} for a brief survey). Specifically, recall that the non-classicality of correlations present in multipartite quantum states is not due solely to the presence of entanglement. Namely, there exist quantum states which are unentangled, but nevertheless exhibit traits that have \emph{no} counterpart in the classical world. Such traits include no-local broadcasting~\cite{pianietal2008nolocalbrodcast} and the locking of correlations~\cite{dhlst04,DG08} (see Section~\ref{0_sscn:nonclassical}). Much effort has been devoted in recent years to characterize and quantify the non-classicality --- or \emph{quantumness} --- of  correlations~\cite{ollivier01a,henderson01a,PhysRevA.71.062307,PhysRevA.72.032317,MPSVW10,groismanquantumness,PhysRevA.77.052101,Luo08,Bravyi2003,pianietal2008nolocalbrodcast,pianietal2009broadcastcopies,ferraro,ADA,PhysRevA.82.052342,streltsov2010} believed to be behind such feats.

In this context, we proposed an \emph{activation} protocol in Chapter~\ref{chap:activation} which maps general non-classical (multipartite) correlations between input systems into bipartite entanglement between the systems and an ancilla. This was accomplished by letting the ancilla and input systems interact via CNOT gates with the systems acting as controls (see Section~\ref{5_scn:protocol} for a formal description of the protocol). One advantange of this mapping is that it allows us to apply the tools and concepts of entanglement theory to the study of the quantumness of correlations. As an added bonus, the activation protocol, when considered in an adversarial context where the control bases are chosen so as to create the minimal amount of system-ancilla entanglement, provides an operational interpretation of the \emph{relative entropy of quantumness}~\cite{Bravyi2003,PhysRevA.71.062307,groismanquantumness,PhysRevA.77.052101,MPSVW10} as being the minimum distillable entanglement\cite{reviewplenio} \emph{necessarily} (i.e.\ in the worst case scenario) created between the input systems and the ancilla.

In this chapter, we continue our study of the activation protocol of Chapter~\ref{chap:activation}, and present two main contributions towards a better understanding of the quantumness of correlations.

\paragraph{Our Results:} Here, we show the following.

\vspace{2mm}
\noindent\textbf{1. Upper bounds on the non-classicality of separable states.}
We first give a non-trivial upper bound on the relative entropy of quantumness for a special class of separable states, the so-called \emph{classical-quantum states} (see Section~\ref{0_sscn:nonclassical}) (Lemma~\ref{6_lem:cqbound}). Using this, we then fully characterize the classical-quantum two-qubit states which are maximally non-classical with respect to the relative entropy of quantumness (Lemma~\ref{6_lem:maxCQ}).

\vspace{2mm}
\noindent\textbf{2. Entangling the input systems via entanglement swapping.}
The activation protocol of Chapter~\ref{chap:activation} demonstrates how to map non-classical correlations in an initial quantum system into entanglement between the system and an ancilla. However, one might prefer not to generate entanglement with an ancilla, but rather within the subsystems of the initial system itself.

We thus next study an approach for extending the activation protocol in order to entangle the input systems in such a manner as follows: We first run the original activation protocol (i.e.\ we let each system interact with an ancilla). Next, we try to ``swap''\cite{swapping} the entanglement created between the input systems and ancilla back into entanglement among the input systems by performing a measurement on the ancilla alone. Note that we assume a worst-case scenario in performing this mapping: We ask, does there \emph{exist} a choice of control bases for the activation protocol for which no entanglement can be created between the input systems with this approach, even if we allow post-selection after measuring the ancilla?

For this mapping, we derive conditions (Theorem~\ref{6_thm:suffswap}, Corollary~\ref{6_thm:iso}, discussion in Sections~\ref{6_scn:CQ} and~\ref{6_scn:QQ}) under which entanglement can or cannot be swapped back into the input system. In particular, we find that there exist non-classical states which, despite \emph{necessarily} leading to the creation of entanglement between systems and the ancilla in the activation protocol, may nevertheless fail to allow entanglement swapping back onto the initial system for a crafty choice of control bases.

\paragraph{Discussion and open questions.} In this chapter, we first find bounds on the non-classicality (as measured by the relative entropy of quantumness) of classical-quantum states, and we characterize the maximally non-classical two-qubit classical-quantum states. It would be interesting to find bounds on the non-classicality of general separable states: from Chapter~\ref{chap:activation} we know that, for example, a separable state of two qubits can never be as non-classical as a maximally entangled pure state, but at present we do not know how large the gap between the two is. We remark that the maximally non-classical two-qubit CQ states found here (with respect to the relative entropy of quantumness) in Lemmas~\ref{6_lem:cqbound} and~\ref{6_lem:maxCQ} match those found in Chapter~\ref{chap:localunitary} for the measure defined therein based on local unitary operations.

With respect to the swapping of the post-activation ancilla-system entanglement onto the original systems, we have both necessary conditions and sufficient conditions for the swapping to be possible in an adversarial scenario, but we lack conditions which are \emph{simultaneously} necessary and sufficient. In finding such conditions, we suspect it would be beneficial to study the problem which arises in our swapping scheme: when is it possible to make a state entangled by rescaling rows and columns as in Equation~\eqref{6_eqn:state}?

Finally, most of our results (e.g. Lemma~\ref{6_lem:cqbound}, Theorem~\ref{6_thm:suffswap}, and Corollary~\ref{6_thm:iso}) apply to higher dimensional systems. However, it would be nice to extend Lemma~\ref{6_lem:maxCQ} to this more general setting by characterizing the maximally non-classical classical-quantum states of higher dimension than qubits. Unfortunately, our approach here does not seem to apply in a straightforward manner to this setting, and further investigation is needed.

\paragraph{Organization of chapter.} We begin in Section~\ref{6_sec:definitions} with definitions and background information. In Section~\ref{6_sec:bounds}, we provide bounds on non-classicality for classical-quantum states, as measured by the relative entropy of quantumness. In Section~\ref{6_sec:swapping}, we present several results and observations regarding entangling input systems via the activation protocol and entanglement swapping.

\section{Preliminaries}
\label{6_sec:definitions}

Throughout this chapter, we continue to use the notation and definitions from Chapter~\ref{chap:activation}, which we briefly outline now. Recall from Definition~\ref{5_def:classical} that a \emph{strictly classically correlated} or \emph{classical} quantum state $\rho$ is one which diagonalizes in a local product basis $\cBB$.

We now outline the activation protocol of Chapter~\ref{chap:activation} (see Section~\ref{5_scn:protocol} for further details). Consider an arbitrary state $\rho\in \mathcal{D}((\complex^d)^{\otimes n})$ living in register $A$, where the $i$th local $d$-dimensional system lives in register $A_i$. We refer to $A$ as the \emph{system}. We further introduce a joint register ${A'}$ of $d$-dimensional registers $A'_1,\ldots,A'_n$ of $n$ ancilla qudit registers each initialized to the state $\ketbra{0}{0}_{A'}$, henceforth called the \emph{ancilla} (see Figure~\ref{figact}).
The initial state of the joint $A:A'$ system is thus $\rho_{\wek{A}\wek{A'}} = \rho_{\wek{A}} \otimes \ket{0}\bra{0}^{\otimes n}_{\wek{A'}}$.
For a given input $\rho_\wek{A}$, we first consider for each $i$ an adversarial application of a local unitary $U_i$ to each $A_i$ (i.e.\ this chooses the control basis for system $i$), and follow by applying one CNOT gate on each subsystem $A_i$ (control qudit) and the corresponding ancillary party $A'_i$ (target qudit). The final state of system plus ancilla at the end of this protocol is
\begin{equation}\label{6_eqn:finalstate}
\rho^f_{\wek{A}:\wek{A'}} = V (\rho_{\wek{A}} \otimes \ket{0}\bra{0}^{\otimes n}_{\wek{A'}}) V^\dagger\,,
\end{equation}
with $V = {{CNOT}}_{\wek{A}:\wek{A'}} (U_{\wek{A}} \otimes I_{\wek{A'}})$, $U_{\wek{A}} = \otimes_{i=1}^n U_i$, and ${{CNOT}}_{\wek{A}\wek{A'}}=\bigotimes_{i=1}^n CNOT_{A_iA_i'}$. Recall that by Theorem~\ref{5_thm:quantumnessiff}, the output $\rho^f_{AA'}$ is separable across the $A:A'$ split if and only if $\rho_A$ is classical. As done in Chapter~\ref{chap:activation}, we henceforth refer to $\rho_A$ and $\rho^f_{AA'}$ as $\rho$ and $\rho^f$ for simplicity, respectively.

It will be useful to also recall that $\rho^f$ can be written as
\beq
\label{6_eqn:maxcorr}
\rho^f=\sum_{\wek{i}\wek{j}}\rho_{\wek{i}\wek{j}}^\wek{\cBB}\ket{\wek{i}}\bra{\wek{j}}_{\wek{A}}\otimes \ket{\wek{i}}\bra{\wek{j}}_{\wek{A'}},
\eeq
where
\beq
\label{6_eqn:matrixelement}
\rho_{\wek{i}\wek{j}}^\wek{\cBB}:=\bra{\wek{\cB}(\wek{i})}\rho \ket{\wek{\cB}(\wek{j})},
\eeq
for $\ket{\wek{\cB}(\wek{i})}=U_\wek{A}^\dagger \ket{\wek{i}}$. In other words, $\rho^f$ is of the maximally correlated~\cite{R01} form in the $\wek{A}:\wek{A'}$ cut. Using this observation, we showed (Theorem~\ref{5_thm:distill}) that if one quantifies the minimum distillable entanglement generated across the $A:A'$ split in this protocol, the corresponding measure of non-classicality we obtain is given by
\begin{equation}\label{6_eqn:q}
Q(\rho)=\min_{{\cBB}}\Big(S(\rho^{\wek{\cBB}})-S(\rho)\Big),
\end{equation}
for $S(\rho)$ the von Neumann entropy, where the minimization is over all local product bases $\cBB$, and where $\rho^{\wek{\cBB}}:=\sum_{\wek{i}}\proj{\wek{\cB}(\wek{i})}\rho \proj{\wek{\cB}(\wek{i})}$. It turned out (Corollary~\ref{5_cor:req}) that $Q(\rho)$ in fact coincides with the measure of non-classicality known as the \emph{relative entropy of quantumness (REQ)}~\cite{Bravyi2003,PhysRevA.71.062307,groismanquantumness,PhysRevA.77.052101,MPSVW10}, bestowing the latter with an operational interpretation.

\section{Upper bounds for separable states}
\label{6_sec:bounds}

In Theorem~\ref{5_thm:sepuppbnd}, we showed that for a bipartite state $\rho_{AB}$ (where in the bipartite case we adopt the notational convention that $A_1=A$ and $A_2=B$), the quantity $Q(\rho_{AB})$ can achieve its maximum value of $\log d$ only for entangled states. We also showed that for increasing local dimension $d$, $Q(\rho_{AB})$ for certain separable $\rho_{AB}$ can asymptotically approach $\log d$. What can be said, however, in the non-asymptotic setting? In other words, for a fixed local dimension $d$, how non-classical can separable $\rho_{AB}$ be?

In this section, we first obtain a simple upper bound on $Q(\rho_{AB})$ for the subclass of separable states known as classical-quantum (CQ) states, that holds for arbitrary local dimensions. Recall from Section~\ref{0_sscn:nonclassical} that CQ states are those which can be written as $\rho_{AB} = \sum_{i=1}^{d_A} p_i \ketbra{i}{i}\otimes\rho_i$ for $\set{\ket{i}}_{i=1}^{d_A}$ an orthonormal basis, $\set{p_i}$ a probability distribution, and $d_A$ and $d_B$ the local dimensions of systems $A$ and $B$. We then completely characterize the set of maximally non-classical two-qubit CQ states with respect to the relative entropy of quantumness $Q(\rho_{AB})$, and show that such states achieve $Q(\rho_{AB})=1/2$.

We begin with our claimed upper bound, which holds even when the local dimensions of $A$ and $B$ differ.

\begin{lemma}\label{6_lem:cqbound}
For any CQ state $\rho_{AB} = \sum_{i=1}^{d_A} p_i \ketbra{i}{i}\otimes\rho_i$, one has
\begin{equation}
    Q (\rho_{AB})\leq \left(1-\frac{1}{d_A}\right)\log_2 d_B.
\end{equation}

\end{lemma}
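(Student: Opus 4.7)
The plan is to produce a concrete upper bound by exhibiting a carefully chosen local product basis $\wek{\cBB} = \cBB_A \otimes \cBB_B$ and then applying Equation~\eqref{6_eqn:q}, which characterizes $Q$ as the minimum of $S(\rho^\wek{\cBB}) - S(\rho_{AB})$ over all such bases.

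The first step is to fix $\cBB_A = \set{\ket{i}}_{i=1}^{d_A}$, the distinguished orthonormal basis of $A$ in which the classical part of $\rho_{AB}$ is diagonal. Since $\rho_{AB} = \sum_i p_i \ketbra{i}{i}\otimes\rho_i$ is already block-diagonal with respect to $\cBB_A$, I would next observe via a direct calculation of matrix elements that for any choice of $\cBB_B = \set{\ket{\cB_B(j)}}_{j=1}^{d_B}$, the dephased state takes the form
\begin{equation}
    \rho^\wek{\cBB} = \sum_{i,j} p_i \bra{\cB_B(j)}\rho_i\ket{\cB_B(j)}\ketbra{i,\cB_B(j)}{i,\cB_B(j)},
\end{equation}
whose entropy factors as
\begin{equation}
    S(\rho^\wek{\cBB}) = H(\set{p_i}) + \sum_i p_i\, S(\rho_i^{\cBB_B}),
\end{equation}
where $\rho_i^{\cBB_B}$ denotes $\rho_i$ dephased in $\cBB_B$. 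Combined with the standard identity $S(\rho_{AB}) = H(\set{p_i}) + \sum_i p_i S(\rho_i)$ for CQ states (valid because $\rho_{AB}$ is a direct sum of the $p_i\rho_i$), Equation~\eqref{6_eqn:q} yields
\begin{equation}\label{eq:intermediate}
    Q(\rho_{AB}) \leq \sum_i p_i\bigl(S(\rho_i^{\cBB_B}) - S(\rho_i)\bigr).
\end{equation}

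The decisive step is now the choice of $\cBB_B$. Let $i^\ast := \arg\max_i p_i$ and take $\cBB_B$ to be any eigenbasis of $\rho_{i^\ast}$. Then $\rho_{i^\ast}^{\cBB_B} = \rho_{i^\ast}$, so the $i = i^\ast$ term in Equation~\eqref{eq:intermediate} vanishes. For every $i \neq i^\ast$ I would simply apply the trivial estimate $S(\rho_i^{\cBB_B}) \leq \log_2 d_B$ and $S(\rho_i) \geq 0$, obtaining
\begin{equation}
    Q(\rho_{AB}) \leq \sum_{i \neq i^\ast} p_i \log_2 d_B = (1 - p_{i^\ast})\log_2 d_B.
\end{equation}
Finally, since $i^\ast$ maximizes the $p_i$'s over a distribution on $d_A$ outcomes, $p_{i^\ast} \geq 1/d_A$, which gives the desired bound $(1 - 1/d_A)\log_2 d_B$.

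There is no serious obstacle in this proof; the delicate point is really just verifying that dephasing in the block-diagonalizing basis $\cBB_A$ preserves the additive structure of the entropy, so that the problem cleanly reduces to minimizing $\sum_i p_i S(\rho_i^{\cBB_B})$ over $\cBB_B$. A useful side remark is that this argument actually yields the slightly stronger inequality $Q(\rho_{AB}) \leq (1 - \max_i p_i)\log_2 d_B$, of which Lemma~\ref{6_lem:cqbound} is the worst-case specialization (attained when the $p_i$ are uniform). This stronger form will presumably be convenient for isolating the maximally non-classical two-qubit CQ states in Lemma~\ref{6_lem:maxCQ}, since saturating the bound forces both a uniform $\set{p_i}$ and a rigid configuration of the conditional states $\rho_i$.
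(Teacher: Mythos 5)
Your proposal is correct and follows essentially the same route as the paper's proof: restrict $\cBB_A$ to $\set{\ket{i}}$, reduce to $\sum_i p_i\bigl(S(\rho_i^{\cBB_B})-S(\rho_i)\bigr)$, dephase $B$ in an eigenbasis of the $\rho_i$ with largest $p_i$, and bound the remaining terms by $\log d_B$ using $\max_i p_i \geq 1/d_A$. Your closing remark about the stronger bound $(1-\max_i p_i)\log_2 d_B$ is also exactly how the paper leverages this argument in Lemma~\ref{6_lem:maxCQ}.
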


\begin{proof}
We have
\begin{eqnarray}
    Q (\rho_{AB}) &=& \min_{\wek{\mathcal{B}}} S(\rho_{AB}^{\wek{\mathcal{B}}})-S(\rho_{AB})\\
        &=& \min_{\mathcal{B}_B} S\left(\sum_{i=1}^{d_A} p_i \ketbra{i}{i}\otimes\left(\sum_{{j=1}}^{d_B}\proj{{\cB}_B({j})}\rho_i \proj{{\cB}_B({j})} \right )\right)-S (\rho_{AB})\nonumber\\
        &=& \min_{\mathcal{B}_B} \left(H(p) + \sum_{i=1}^{d_A} p_i S \left(\rho^{\cBB_B}_{i} \right )\right) - \left(H(p) + \sum_{i=1}^{d_A} p_i S (\rho_i)\right)\\
        &=& \min_{\mathcal{B}_B}\sum_{i=1}^{d_A} p_i \left[S \left(\rho^{\cBB_B}_{i} \right ) -S (\rho_i)\right]\label{6_eqn:CQbound1},
\end{eqnarray}
where $H(p)$ denotes the Shannon entropy of the probability distribution $p=\set{p_i}_i$, and the second equality follows from choosing $\mathcal{B}_A$ to coincide with the basis $\set{\ket{i}}$. Let $p_m:=\max_i p_i$. Our strategy is to let $\mathcal{B}_B$ project onto an eigenbasis of $\rho_m$, yielding:
\begin{eqnarray}
    Q (\rho_{AB})
        &\leq& \sum_{i\neq m} p_i \left[S \left(\rho^{\cBB_B}_{i}\right ) -S (\rho_i)\right]\label{6_eqn:simplestrat1}\\
        &\leq& \sum_{i\neq m} p_i S \left(\rho^{\cBB_B}_{i} \right )\label{6_eqn:simplestrat2} \\
        &\leq& \left(1-\frac{1}{d_A}\right)\log d_B,
\end{eqnarray}
where the second inequality follows since $S(\rho_i)\geq 0$, and the third inequality follows since $p_m \geq 1/d_A$ and $S(\sigma_B)\leq \log d_B$ for any density operator $\sigma_B$.
\end{proof}

For a two-qubit CQ state $\rho_{AB}$, Lemma~\ref{6_lem:cqbound} implies $Q(\rho_{AB}) \leq 1/2$. We now show that this bound is tight by characterizing the set of CQ states attaining $Q (\rho_{AB}) = 1/2$.

\begin{lemma}\label{6_lem:maxCQ}
    Consider CQ state $\rho_{AB}\in \mathcal{D}(\complex^2\otimes\complex^2)$ such that $\rho_{AB} = \sum_{i=1}^{2} p_i \ketbra{i}{i}\otimes\rho_i$. Then $Q (\rho_{AB}) = 1/2$ if and only if $p_1=p_2=1/2$ and $\rho_1=\proj{\psi_1}$ and $\rho_2=\proj{\psi_2}$ for some $\ket{\psi_1},\ket{\psi_2}\in\complex^2$ such that $\abs{\braket{\psi_1}{\psi_2}}^2=1/2$.
\end{lemma}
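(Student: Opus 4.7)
The plan is to prove both directions separately. Throughout, the key fact is Lemma~\ref{6_lem:cqbound}, which for $d_A = d_B = 2$ yields the upper bound $Q(\rho_{AB}) \leq 1/2$.

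\emph{Necessity.} Assuming $Q(\rho_{AB}) = 1/2$, I would trace through the chain of inequalities in the proof of Lemma~\ref{6_lem:cqbound}: schematically, $Q(\rho_{AB}) \leq X \leq Y \leq (1 - 1/d_A)\log d_B = 1/2$, where $X := \sum_{i \neq m} p_i[S(\rho_i^{\cBB_B^*}) - S(\rho_i)]$ and $Y := \sum_{i \neq m} p_i S(\rho_i^{\cBB_B^*})$ for $\cBB_B^*$ an eigenbasis of $\rho_m$ with $p_m = \max_i p_i$. Equality throughout forces (i) $X = Y$, i.e.\ $\sum_{i \neq m} p_i S(\rho_i) = 0$, so $\rho_i = \ketbra{\psi_i}{\psi_i}$ is pure for every $i \neq m$; and (ii) $Y = 1/2$, which requires both $p_m = 1/d_A = 1/2$ (hence $p_1 = p_2 = 1/2$) and $\rho_i^{\cBB_B^*}$ maximally mixed for $i \neq m$. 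Since $p_1 = p_2$, either index may play the role of $m$, so both $\rho_1$ and $\rho_2$ are pure and each is maximally mixed when measured in the eigenbasis of the other. Diagonalizing $\rho_2$ in its own eigenbasis $\{\ket{\psi_2}, \ket{\psi_2^\perp}\}$ and equating the resulting two probabilities for $\rho_1 = \ketbra{\psi_1}{\psi_1}$ immediately yields $|\langle \psi_1|\psi_2\rangle|^2 = 1/2$.

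\emph{Sufficiency.} Now assume the three conditions hold. Using the invariance of $Q$ under local unitaries, I would take $\ket{\psi_1} = \ket{0}$ and $\ket{\psi_2} = \ket{+}$; then $\rho_{AB}$ has two equal nonzero eigenvalues, so $S(\rho_{AB}) = 1$. It therefore suffices to show $S(\rho^{\cBB_A \otimes \cBB_B}) \geq 3/2$ for every local product basis. Parametrizing $\cBB_A$ by $t := |\langle a_1|1\rangle|^2 \in [0,1]$ and $\cBB_B$ by $p := |\langle b_1|\psi_1\rangle|^2$, $q := |\langle b_1|\psi_2\rangle|^2$, a direct computation of the diagonal entries of $\rho^\cBB$ together with the decomposition $H(K,J) = H(K) + H(J|K)$ gives $S(\rho^\cBB) = 1 + \tfrac{1}{2}[H(r_1) + H(r_2)]$, where $r_1 := tp + (1-t)q$ and $r_2 := (1-t)p + tq$. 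The constraint $|\langle\psi_1|\psi_2\rangle|^2 = 1/2$ translates into $(p, q)$ lying in (and sweeping out) the closed disk $D$ of radius $1/2$ centered at $(1/2, 1/2)$, so the problem reduces to showing $H(r_1) + H(r_2) \geq 1$ on $D \times [0,1]$.

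\emph{Key inequality and main obstacle.} Concavity of the binary entropy makes $t \mapsto H(r_1(t)) + H(r_2(t))$ concave on $[0,1]$ and symmetric under $t \leftrightarrow 1 - t$, so its minimum is attained at $t \in \{0,1\}$, where it equals $H(p) + H(q)$. Since $H(p) + H(q)$ is itself concave on $D$, its minimum on $D$ is attained on the boundary circle. To finish, I plan to establish the scalar bound $h(u) := H((1+u)/2) \geq 1 - u^2$ for $u \in [-1,1]$ via a brief analysis of $f(u) := h(u) - 1 + u^2$, which satisfies $f(0) = f(\pm 1) = 0$ and whose sign can be controlled using $f''(u) = 2 - [(1-u^2)\ln 2]^{-1}$. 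Applying this bound with $u = 2p-1$, $v = 2q-1$, and the disk condition $u^2 + v^2 \leq 1$, gives $H(p) + H(q) \geq 2 - (u^2 + v^2) \geq 1$, with equality precisely at the four boundary points $(p,q) \in \{(1,\tfrac{1}{2}),(0,\tfrac{1}{2}),(\tfrac{1}{2},1),(\tfrac{1}{2},0)\}$, which correspond exactly to $\cBB_B$ being the eigenbasis of $\rho_1$ or $\rho_2$. This yields $Q(\rho_{AB}) \geq 1/2$, and combining with Lemma~\ref{6_lem:cqbound} completes the proof. The only nontrivial step is the scalar entropy bound $h(u) \geq 1 - u^2$, which requires some care near $u = \pm 1$ where $h$ is singular; modulo this inequality the reduction from a two-variable optimization on the disk to a one-variable comparison is the crux of the argument.
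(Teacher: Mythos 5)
Your proof is correct, but it takes a genuinely different route from the paper's, in both directions. For necessity, the paper argues the three conditions separately (an ad hoc perturbation for $p_1\neq 1/2$, a basis-choice argument for purity, and then the overlap condition via a direct basis choice), whereas you derive all three at once by forcing equality throughout the chain of inequalities in Lemma~\ref{6_lem:cqbound} and exploiting the tie $p_1=p_2$ to run the chain with either index as $m$; this is arguably cleaner and more unified. For sufficiency, the paper invokes the Maassen--Uffink entropic uncertainty relation applied to the two measurement bases $\set{\ket{\psi_1},\ket{\psi_1^\perp}}$ and $\set{\ket{\psi_2},\ket{\psi_2^\perp}}$, which immediately gives the lower bound $1/2$ when these are mutually unbiased; you instead carry out the optimization by hand, reducing it via concavity in $t$ and in $(p,q)$ to the scalar inequality $H\bigl((1+u)/2\bigr)\geq 1-u^2$ (equivalently the standard bound $H(p)\geq 4p(1-p)$), which your inflection-point analysis of $f''$ does establish. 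The paper's route is shorter but leans on a nontrivial imported result and, more subtly, on the assertion in Equation~(\ref{6_eqn:CQbound1}) that the optimal $\cBB_A$ for a CQ state is the classical basis $\set{\ket{i}}$ (true, but justified only in passing there); your route is elementary and self-contained, and by explicitly minimizing over $\cBB_A$ through the parameter $t$ it does not need that assertion at all. The trade-off is length: your sufficiency argument is longer and requires verifying the scalar entropy bound, which the uncertainty relation delivers for free.
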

\begin{proof}
That $\rho_{AB}$ with $p_1\neq 1/2$ implies $Q(\rho_{AB}) < 1/2$ follows immediately from Equation~\eqref{6_eqn:simplestrat2} and the fact that $0 \leq S(\sigma)\leq 1$ for any 1-qubit density operator $\sigma$. We thus henceforth assume $p_1=p_2=1/2$. That $\rho_1$ and $\rho_2$ must be pure now also follows analogously, for if, say, $\rho_1$ is mixed, then we simply choose $\mathcal{B}_B$ in Equation~\eqref{6_eqn:simplestrat1} to instead project onto an eigenbasis of $\rho_2$, and use the fact that $S(\rho_1)>0$ to achieve $Q(\rho_{AB}) < 1/2$. We thus henceforth assume $\rho_1=\proj{\psi_1}$ and $\rho_2=\proj{\psi_2}$ for some $\ket{\psi_1},\ket{\psi_2}\in\complex^2$. It remains to show that we must have $\abs{\braket{\psi_1}{\psi_2}}^2=1/2$.

Plugging $\rho_{AB}$ into Equation~\eqref{6_eqn:CQbound1} and noting that $S(\rho_1)=S(\rho_2)=0$, we have
\begin{eqnarray}
    Q (\rho_{AB}) &=&\frac{1}{2}\min_{\mathcal{B}_B} \left[S([\ketbra{\psi_1}{\psi_1}]^{\cBB_B}) + S([\ketbra{\psi_2}{\psi_2}]^{\cBB_B})\right]\\
    &=& \frac{1}{2}\min_{\mathcal{B}_B} \Big[H\left(\abs{\braket{{\cB}_B({0})}{\psi_1}}^2,\abs{\braket{{\cB}_B({1})}{\psi_1}}^2\right)
+H\left(\abs{\braket{{\cB}_B({0})}{\psi_2}}^2,\abs{\braket{{\cB}_B({1})}{\psi_2}}^2\right)\Big]\nonumber\\
    &=& \frac{1}{2}\min_{\ket{{\cB}_B({0})}} \Big[H\left(\abs{\braket{{\cB}_B({0})}{\psi_1}}^2,\abs{\braket{{\cB}_B({0})}{\psi_1^\perp}}^2\right)
+\nonumber \\ &&\hspace{16mm}H\left(\abs{\braket{{\cB}_B({0})}{\psi_2}}^2,\abs{\braket{{\cB}_B({0})}{\psi_2^\perp}}^2\right)\Big],\label{6_eqn:CQmin}
\end{eqnarray}
where $\braket{\psi_1}{\psi_1^\perp}=\braket{\psi_2}{\psi_2^\perp}=0$, and where the last equality follows since $\proj{{\cB}_B({j})}$ are rank-one projectors. Note that one can think of the last equality as effectively switching the roles of the measurement and the target state, so that the minimization can be thought of as being taken over all pure \emph{target} states $\ket{{\cB}_B({0})}$ with respect to \emph{measurements} in the bases $\mathcal{B}_1:=\set{\ket{\psi_1},\ket{\psi_1^\perp}}$ and $\mathcal{B}_2:=\set{\ket{\psi_2},\ket{\psi_2^\perp}}$. We can now plug Equation~\eqref{6_eqn:CQmin} into the well-known entropic uncertainty relation of Maassen and Uffink~\cite{Maassen1988,WW10}, which states that for classical distributions $P_c$ and $P_d$ obtained by measuring pure state $\ket{\psi}$ with respect to orthonormal bases $\mathcal{C}=\set{\ket{c}}$ and $\mathcal{D}=\set{\ket{d}}$, respectively, we have
\begin{equation}
    \frac{1}{2}(H(P_c)+H(P_d))\geq -\log f(\mathcal{C},\mathcal{D}),
\end{equation}
where $f(\mathcal{C},\mathcal{D}):=\max\set{\abs{\braket{c}{d}}\mid \ket{c}\in\mathcal{C}, \ket{d}\in\mathcal{D}}$. We thus obtain:
\begin{equation}
    Q(\rho_{AB}) \geq \max _{\ket{\phi_1}\in \mathcal{B}_1,\ket{\phi_2}\in \mathcal{B}_2}-\log\abs{\braket{\phi_1}{\phi_2}}.
\end{equation}
Note that this lower bound attains its maximum value of $1/2$ if $\mathcal{B}_1$ and $\mathcal{B}_2$ are mutually unbiased, i.e.\ when $\abs{\braket{\psi_1}{\psi_2}}^2=1/2$. On the other hand, suppose $\mathcal{B}_1$ and $\mathcal{B}_2$ are \emph{not} mutually unbiased, i.e.\ suppose without loss of generality that $\abs{\braket{\psi_1}{\psi_2}}^2 > 1/2$. Then choosing $\ket{{\cB}_B({0})}=\ket{\psi_1}$ in Equation~\eqref{6_eqn:CQmin} yields $Q (\rho_{AB})<1/2$. The claim follows.
\end{proof}

Combining Lemmas~\ref{6_lem:cqbound} and~\ref{6_lem:maxCQ}, we obtain a characterization of the set of two-qubit CQ states which are deemed maximally non-classical by $Q$. Such states include, for example, the CQ state
\begin{eqnarray}
    \rho &=& \frac{1}{2}\ketbra{0}{0}\otimes\ketbra{0}{0} + \frac{1}{2}\ketbra{1}{1}\otimes\ketbra{+}{+}\label{6_eqn:CQexample}
    =\frac{1}{2}\left(
         \begin{array}{cccc}
           1 & 0 & 0 & 0 \\
           0 & 0 & 0 & 0 \\
           0 & 0 & \frac{1}{2} & \frac{1}{2} \\
           0 & 0 & \frac{1}{2} & \frac{1}{2} \\
         \end{array}
       \right),
\end{eqnarray}
where $\ket{+}=(\ket{0}+\ket{1})/\sqrt{2}$.

\section{Swapping the ancilla-system entanglement onto the system}
\label{6_sec:swapping}

We now explore the possibility of generating entanglement in the \emph{original} system $AB$ by projecting the ancilla systems $\wek{A}'$ of the state $\rho^f$ of \eqref{6_eqn:finalstate} jointly onto an entangled pure state. In other words, we consider an entanglement swapping process~\cite{swapping} that maps the system-ancilla entanglement onto the systems $AB$. As we are only interested in knowing whether this is possible (rather than, say, in the probability of success), the filtering via a pure state is not restrictive and corresponds to the best possible strategy. Our results indicate that this feat is possible for some, but not all, separable non-classical states.

We begin by noting that thanks to the maximally-correlated form of $\rho^f$ (Equation~\eqref{6_eqn:maxcorr}), we have that the (unnormalized) final state of system $AB$ after projecting the ancilla system onto (normalized) state $\ket{\phi}=\sum_\wek{i}\alpha_{\wek{i}}\ket{\wek{i}}\in(\complex^d)^{\otimes n}$ is given by
\begin{equation}
    \rho_\phi
    :=
    \Tr_{\wek{A'}}(\rho^f\proj{\phi}_\wek{A'})
    =
    \sum_{\wek{i}\wek{j}}   \left[\rho_{\wek{ij}}^{\wek{\cBB}}\alpha_{\wek{i}}\alpha^*_{\wek{j}}\right]\ketbra{\wek{i}}{\wek{j}}\label{6_eqn:state},
\end{equation}
with $\rho_{\wek{ij}}^{\wek{\cBB}}$ defined in Equation~\eqref{6_eqn:matrixelement}.
Hence, the resulting (unnormalized) state $\rho_\phi$ is simply the Hadamard product of the original state (represented in the $\wek{\cBB}$ basis) and $\ketbra{\phi}{\phi}$ (represented in the computational basis), i.e.\ $\rho_\phi=\rho^{\cBB}\circ\ketbra{\phi}{\phi}$ for $\circ$ the Hadamard product defined such that $(C\circ D)(i,j):=C(i,j)D(i,j)$. 

As previously mentioned, our goal is to answer the following question: For a given input $\rho$, is it true that for \emph{any} choice of starting local bases for the CNOT gates in the activation protocol, there exists a state $\ket{\phi}$ such that $\rho_\phi$ is entangled (across its constituent local $d$-dimensional systems)?

In Section~\ref{6_scn:iso} we provide a simple sufficient condition under which the generation of entanglement in the original system is always possible with an appropriate choice of $\ket{\phi}$, regardless of the choice of adversarial local unitary. We then observe that this condition holds for all pseudo-isotropic states as in Equation~\eqref{6_eqn:pseudoisotropic}, with $\ket{\psi}$ entangled and $p>0$.
In Sections~\ref{6_scn:CQ} and \ref{6_scn:QQ}, we  provide examples of classical-quantum (CQ) and quantum-quantum (QQ) separable states, respectively, for which entanglement in $AB$ cannot be generated in this fashion, i.e.\ there exists a choice of $U_{AB}$ that prevents the generation of entanglement in $AB$ via the swapping of system-ancilla entanglement, \emph{even if} there is necessarily entanglement between the system-ancilla cut after the activation protocol is run.

\subsection{Sufficient condition for entanglement swapping}
\label{6_scn:iso}


We focus again on the bipartite case $A_1=A$, $A_2=B$. We have the following simple condition which ensures the swapping of entanglement is possible.
\begin{theorem}
\label{6_thm:suffswap}
If for any choice of local basis $\wek{\cBB}$, there exists a non-zero off-diagonal element of an off-diagonal block of $\rho_{AB}^{\wek{\cBB}}$, i.e.\ if for all $\wek{\cBB}=\cBB_A\cBB_B$ there exists a choice of $i\neq j$ and $k\neq l$ such that $\bra{\cB_A(i)\cB_B(k)}\rho_{AB}\ket{\cB_A(j)\cB_B(l)}\neq0$, then it is possible to swap entanglement back into the input systems (regardless of the choice of $\wek{\cBB}$), i.e.\ there exists a $\ket{\phi}$ such that $\rho_\phi$ is entangled.
\end{theorem}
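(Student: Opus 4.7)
The plan is to constructively build a swap state $\ket{\phi}$ supported on only two computational-basis vectors, and then verify that the resulting $\rho_\phi$ violates the PPT criterion. Since by hypothesis, for the given choice of $\wek{\cBB}$ there exist indices $i\neq j$ and $k\neq l$ such that $c := \rho^{\wek{\cBB}}_{(i,k),(j,l)} = \bra{\cB_A(i)\cB_B(k)}\rho_{AB}\ket{\cB_A(j)\cB_B(l)} \neq 0$, this coherence term is the ``resource'' we wish to isolate via the Hadamard product structure of Equation~\eqref{6_eqn:state}.

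First, I would choose
\begin{equation}
\ket{\phi} = \tfrac{1}{\sqrt{2}}\bigl(\ket{i}\ket{k} + \ket{j}\ket{l}\bigr),
\end{equation}
so that the only nonzero coefficients $\alpha_{\wek{m}}$ appearing in Equation~\eqref{6_eqn:state} are $\alpha_{(i,k)} = \alpha_{(j,l)} = 1/\sqrt{2}$. Consequently, $\rho_\phi$ is supported on the two-dimensional subspace $\mathrm{span}\{\ket{i}\ket{k},\ket{j}\ket{l}\}$ and has the explicit form
\begin{equation}
\rho_\phi = \tfrac{1}{2}\bigl(a\,\ketbra{ik}{ik} + b\,\ketbra{jl}{jl} + c\,\ketbra{ik}{jl} + c^{*}\ketbra{jl}{ik}\bigr),
\end{equation}
with $a := \rho^{\wek{\cBB}}_{(i,k),(i,k)}$ and $b := \rho^{\wek{\cBB}}_{(j,l),(j,l)}$.

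Next, I would apply the partial transpose on $A$. Since $i\neq j$ and $k\neq l$, the four vectors $\ket{i}\ket{k},\ket{j}\ket{l},\ket{j}\ket{k},\ket{i}\ket{l}$ are mutually orthogonal, and so
\begin{equation}
\rho_\phi^{T_A} = \tfrac{1}{2}\bigl(a\,\ketbra{ik}{ik} + b\,\ketbra{jl}{jl} + c\,\ketbra{jk}{il} + c^{*}\ketbra{il}{jk}\bigr)
\end{equation}
decomposes in the spanned subspace into two diagonal contributions and a $2\times 2$ off-diagonal block on $\mathrm{span}\{\ket{j}\ket{k},\ket{i}\ket{l}\}$ whose eigenvalues are $\pm|c|/2$. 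The negative eigenvalue $-|c|/2 < 0$ shows that $\rho_\phi^{T_A}\not\succeq 0$, and hence $\rho_\phi$ is entangled across the $A:B$ cut by the Peres criterion.

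The only potential subtlety, which I would address as a brief remark, is ensuring that $\rho_\phi$ is itself a meaningful (nonzero) state. This follows from the positivity of $\rho^{\wek{\cBB}}_{AB}$: if either $a = 0$ or $b = 0$, then the corresponding row and column of the principal submatrix must vanish entirely, forcing $c = 0$ and contradicting the hypothesis. Thus $a,b > 0$, and $\rho_\phi$ after normalization is a genuine bipartite entangled state. I do not expect any substantial obstacle here beyond bookkeeping --- the main conceptual point is the recognition that the ``off-diagonal-of-off-diagonal-block'' entries are precisely those that survive the PPT argument on the two-dimensional support of $\ket{\phi}$, while the ``diagonal-of-off-diagonal-block'' or ``off-diagonal-of-diagonal-block'' positions would instead produce PPT matrices and no detectable entanglement.
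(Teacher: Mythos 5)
Your proposal is correct and follows essentially the same route as the paper's own proof: the same choice $\ket{\phi} = \tfrac{1}{\sqrt{2}}(\ket{ik}+\ket{jl})$ isolating the non-zero coherence as the four corners of a square, followed by the observation that the partial transpose acquires a negative eigenvalue. Your explicit eigenvalue computation and the remark on $a,b>0$ are just slightly more detailed bookkeeping than the paper provides.
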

\begin{proof}
   The strategy of the proof is to choose $\ket{\phi}$ so that the result of the Hadamard product in Equation~(\ref{6_eqn:state}) is non-positive under partial transposition (NPT)~\cite{peresPT,HorodeckiPT}. Fix any choice of local basis $\wek{\cBB}$. By assumption, we know there exist indices $i\neq j$ and $k\neq l$ such that $\bra{\cB_A(i)\cB_B(k)}\rho_{AB}\ket{\cB_A(j)\cB_B(l)}\neq0$. In order to ensure that $\rho_\phi$ is NPT, we thus choose $\ket{\phi}$ to single out these non-zero off-diagonal terms by setting
       \begin{equation}
        \ket{\phi} = \frac{1}{\sqrt{2}}(\ket{ik} + \ket{jl}).
    \end{equation}
    With this choice of $\ket{\phi}$, $\rho_\phi$ becomes a Hermitian matrix with only four non-zero entries, two of which lie on the diagonal at positions $\ketbra{i}{i}\otimes\ketbra{k}{k}$ and $\ketbra{j}{j}\otimes\ketbra{l}{l}$, and two of which lie at off-diagonal positions of off-diagonal blocks at $\ketbra{i}{j}\otimes\ketbra{k}{l}$ and $\ketbra{j}{i}\otimes\ketbra{l}{k}$ (i.e.\ the four entries form the four corners of a square). It follows that the partial transpose of $\rho_\phi$ is not positive.
\end{proof}

\begin{corollary}\label{6_thm:iso}
    For any
    \beq
\label{6_eqn:pseudoisotropic}
\rho(\psi,p)_{AB}:=(1-p)\frac{I_{AB}}{D}+p\proj{\psi}_{AB},
\eeq
with $I_{AB}/D$ the maximally mixed state for $AB$ and $D$ the dimension of $AB$, if $\ket{\psi}$ is entangled and $p>0$, then there exists a choice of $\ket{\phi}$ such that $\rho_\phi$ is entangled.
\end{corollary}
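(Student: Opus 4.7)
The plan is to invoke Theorem~\ref{6_thm:suffswap} and reduce the claim to a structural statement about the coefficient matrix of $\ket{\psi}$ in an arbitrary local product basis. Specifically, I would first observe that the contribution of the identity component $(1-p)I_{AB}/D$ to any matrix element $\bra{\cB_A(i)\cB_B(k)}\rho(\psi,p)_{AB}\ket{\cB_A(j)\cB_B(l)}$ vanishes whenever $i\neq j$ or $k\neq l$, since the identity is diagonal in every basis. Hence for such indices, this matrix element equals $p\,c_{ik}\bar{c}_{jl}$, where $\ket{\psi}=\sum_{mn}c_{mn}\ket{\cB_A(m)\cB_B(n)}$. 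Since $p>0$, the hypothesis of Theorem~\ref{6_thm:suffswap} will be satisfied provided that, for every choice of local product basis $\wek{\cBB}=\cBB_A\cBB_B$, there exist indices $i\neq j$ and $k\neq l$ with both $c_{ik}\neq 0$ and $c_{jl}\neq 0$.

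Next, I would translate this into a combinatorial condition on the support $\mathcal{S}\subseteq \set{1,\ldots,d_A}\times\set{1,\ldots,d_B}$ of the coefficient matrix $C=(c_{mn})$. Fixing an arbitrary local basis, I would proceed by contraposition: assume that no such pair of indices exists, i.e.\ any two elements of $\mathcal{S}$ share either the first coordinate or the second coordinate. A short case analysis should then force $\mathcal{S}$ to lie entirely in a single row or a single column of $C$; indeed, if $\mathcal{S}$ contained two elements $(i_1,j_1)$ and $(i_2,j_2)$ with $i_1\neq i_2$, then they must share the second coordinate, and any third support element would need to share a coordinate with both, pinning it to the same column. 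This implies $\operatorname{rank}(C)\leq 1$, so $\ket{\psi}$ is a product state, contradicting the assumption that $\ket{\psi}$ is entangled.

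Finally, combining the above, for any local product basis the coefficient matrix of the entangled $\ket{\psi}$ contains support elements $(i,k)$ and $(j,l)$ with $i\neq j$ and $k\neq l$, which produces the required non-zero off-diagonal element of an off-diagonal block of $\rho(\psi,p)_{AB}^{\wek{\cBB}}$. Applying Theorem~\ref{6_thm:suffswap} then yields a state $\ket{\phi}$ such that $\rho_\phi$ is entangled, completing the proof. The only nontrivial step is the combinatorial/rank argument linking the support structure of $C$ to the entanglement of $\ket{\psi}$, but this is elementary and does not require any heavy machinery.
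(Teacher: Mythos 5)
Your proof is correct and follows essentially the same route as the paper: reduce to Theorem~\ref{6_thm:suffswap} by noting that the maximally mixed component is diagonal in every local product basis, and then use the entanglement of $\ket{\psi}$ to verify the hypothesis. The only difference is that where the paper simply asserts this "easily follows" from the Schmidt decomposition having at least two nonzero coefficients, you spell out the underlying support/rank argument showing that a coefficient matrix confined to a single row or column would force $\ket{\psi}$ to be a product state -- a worthwhile elaboration, but not a different proof.
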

\begin{proof}
Since the maximally mixed component of~(\ref{6_eqn:pseudoisotropic}) is diagonal with respect to any choice of local bases, it suffices to argue that $|\psi\rangle$ satisfies the condition of Theorem 1. This easily follows from the fact $|\psi\rangle$ is entangled, and thus has, up to local unitaries, a Schmidt decomposition $\sum_{k=0}^{d_A-1} \sqrt{\lambda_k} \ket{k}\ket{k}$, with $\lambda_0\geq\lambda_1>0$.
\end{proof}

Corollary~\ref{6_thm:iso} shows that for any value of $p>0$, entanglement can be transferred to the original system for the pseudo-isotropic state $\rho(p,\psi)$ of Equation~\eqref{6_eqn:pseudoisotropic}, even for values of $p$ which correspond to \emph{separable} states (recall that for $p$ small enough, the state $\rho(p,\psi)$ is separable due to the existence of a separable ball around the maximally mixed state~\cite{zyczkowski1998sepball,GB02}). We remark that for all $p>0$ and entangled $\ket{\psi}$, $\rho(p,\psi)$ is known to be non-classical~\cite{groismanquantumness}, and so here the non-classicality of the starting state allows us to create entanglement in the original systems $AB$ by applying the activation protocol followed by our entanglement swapping procedure.

\subsection{Classical-quantum separable states}\label{6_scn:CQ}

In Section~\ref{6_scn:iso}, we demonstrated that for certain non-classically correlated states, entanglement can be mapped back into the original system after the activation protocol is run. Can this be achieved with \emph{any} type of non-classically correlated input $\rho$? We now show that the answer is no --- there exist separable non-classical $\rho$ such that, while entanglement is always generated in the activation protocol between systems and ancilla independently of the local unitaries $U_A$ and $U_B$, a proper adversarial choice of local unitaries $U_A$ and $U_B$ can nevertheless prevent entanglement from being mapped back to the system.

Consider the separable non-classical CQ state of Equation~\eqref{6_eqn:CQexample}.
%
%
By Equation~(\ref{6_eqn:state}), note that when the adversarial local unitaries are chosen as $U_A=U_B=I$, we have
\begin{equation}
    \rho_\phi= \rho \circ \ketbra{\phi}{\phi}.
\end{equation}
Since $\rho$ is block diagonal, it hence follows that $\rho_\phi$ is block diagonal, since the Hadamard product cannot change this block diagonal structure regardless of the choice of $\ket{\phi}$. We conclude that there exists a choice of local bases (i.e the computational basis) with respect to which $\rho_\phi$ is always separable for all $\ket{\phi}$, i.e.\ it is not possible to project the (necessarily present) system-ancilae entanglement generated in the activation protocol back onto the system. In fact, this proof approach holds for \emph{any} CQ (or QC) state that is not strictly classically correlated, implying that for such states, there is a choice of local unitaries for which, even if entanglement is created between system and ancilla in the activation protocol, such entanglement cannot be swapped back into the input system.

\subsection{Quantum-quantum separable states}\label{6_scn:QQ}
Based on the results in Section~\ref{6_scn:CQ}, one might hope that entanglement generation in separable starting systems is possible if $\rho$ is not CQ nor QC (i.e.\ $\rho$ is what we might call \emph{QQ separable}). We provide a counterexample to this conjecture here --- namely, we show that there exist QQ separable states for which an adversarial choice of local bases in the activation protocol prevents the swapping of ancilla-system entanglement back into the input systems.

To do so, consider the separable QQ operator:
\begin{equation}
    \rho_{AB} = \frac{1}{2}\ketbra{0}{0}\otimes\ketbra{+}{+} + \frac{1}{2}\ketbra{+}{+}\otimes\ketbra{0}{0}
    =\frac{1}{4}\left(
                                                   \begin{array}{cccc}
                                                     2 & 1 & 1 & 0 \\
                                                     1 & 1 & 0 & 0 \\
                                                     1 & 0 & 1 & 0 \\
                                                     0 & 0 & 0 & 0 \\
                                                   \end{array}
                                                 \right).
\end{equation}
\noindent To prove our claim, as in Section~\ref{6_scn:CQ}, we choose local adversarial unitaries $U_A=U_B=I$ and show that $\rho^f= \rho_{AB} \circ \ketbra{\phi}{\phi}$ is separable for any choice of $\ket{\phi}$. The latter is shown by first deriving a condition under which the eigenvalues of Hermitian operators with a structure similar to $\rho$ remain invariant under partial transposition. We then show that $\rho$ fulfills this condition for any choice of $\ket{\phi}$, implying $\rho$ always remains separable, since the partial transpose is a necessary and sufficient condition for separability of two-qubit states~\cite{HorodeckiPT}.

\begin{lemma}\label{6_lem:invarSpec}
    Given any Hermitian operator $X$ acting on $\complex^2\otimes\complex^2$ with off-diagonal blocks which are diagonal, i.e.\
\begin{equation}
    X = \left(
      \begin{array}{cccc}
        a_{11} & a_{12} & a_{13} & 0 \\
        a_{12}^* & a_{22} & 0 & a_{24} \\
        a_{13}^* & 0 & a_{33} & a_{34} \\
        0 & a_{24}^* & a_{34}^* & a_{44} \\
      \end{array}
    \right),
\end{equation}
    if either $a_{12}a_{34}^\ast\in\reals$ or $a_{13}a_{24}^\ast\in\reals$, then the spectrum of $A$ is invariant under partial transposition.
\end{lemma}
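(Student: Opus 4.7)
\textbf{Proof plan for Lemma~\ref{6_lem:invarSpec}.} The plan is to exhibit, in each case, a diagonal unitary $V$ such that $V X V^\dagger$ equals one of the two partial transposes of $X$. Since the full transpose preserves the spectrum and $(X^{T_A})^T = X^{T_B}$, it holds that $X^{T_A}$ and $X^{T_B}$ are cospectral, so it suffices to show $X$ is unitarily equivalent to \emph{either} $X^{T_A}$ or $X^{T_B}$.

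The first step is to compute $X^{T_A}$ and $X^{T_B}$ explicitly and compare them with $X$. Under the block decomposition with respect to the first system, $X^{T_A}$ swaps the (diagonal) off-diagonal blocks, so compared to $X$ it simply conjugates $a_{13}$ and $a_{24}$ while leaving $a_{12}$ and $a_{34}$ fixed. Dually, $X^{T_B}$ transposes each $2 \times 2$ block, which fixes the (diagonal) off-diagonal blocks but conjugates $a_{12}$ and $a_{34}$ in the diagonal blocks. Thus in both cases the target matrix differs from $X$ only in the phases of a pair of off-diagonal entries, and a diagonal unitary $V = \operatorname{diag}(e^{i\theta_1},e^{i\theta_2},e^{i\theta_3},e^{i\theta_4})$ acts on entry $a_{mn}$ by multiplying it by $e^{i(\theta_m - \theta_n)}$.

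Next, I would translate the condition $V X V^\dagger = X^{T_B}$ into a small system of phase equations. Requiring $a_{13}$ and $a_{24}$ to be fixed forces $\theta_1=\theta_3$ and $\theta_2=\theta_4$, while requiring $a_{12}\mapsto a_{12}^*$ and $a_{34}\mapsto a_{34}^*$ forces $\theta_1-\theta_2 \equiv -2\arg(a_{12})$ and $\theta_3-\theta_4 \equiv -2\arg(a_{34}) \pmod{2\pi}$. The combined system is consistent precisely when $\arg(a_{12}) \equiv \arg(a_{34}) \pmod{\pi}$, which is exactly the hypothesis $a_{12}a_{34}^* \in \reals$. A symmetric calculation aimed at $V X V^\dagger = X^{T_A}$ yields consistency precisely when $a_{13}a_{24}^* \in \reals$, handling the second case.

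Finally, I would dispose of degenerate sub-cases in which one or more of $a_{12}, a_{34}, a_{13}, a_{24}$ vanish, since then the corresponding phase equations become vacuous and additional freedom is available for choosing $V$, so cospectrality is even easier to establish. The main obstacle in executing the plan is merely bookkeeping: keeping straight which partial transpose conjugates which pair of entries, and verifying that the two hypotheses of the lemma correspond \emph{exactly} (not just up to a sign) to the solvability of the two phase systems. No deeper structural obstruction arises, since the local diagonal unitary does all the required work.
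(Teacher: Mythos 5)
Your proposal is correct, and it takes a genuinely different route from the paper's proof. The paper argues by brute force: it computes the characteristic polynomials of $X$ and its partial transpose and shows that their difference is the constant $4\,\operatorname{Im}(a_{13}^*a_{24})\operatorname{Im}(a_{12}a_{34}^*)$, which vanishes under either hypothesis. You instead exhibit an explicit diagonal unitary conjugating $X$ to one of its partial transposes, reducing everything to the solvability of a small system of phase congruences; your bookkeeping of which partial transpose conjugates which pair of entries is correct ($T_A$ conjugates $a_{13},a_{24}$ and fixes $a_{12},a_{34}$; $T_B$ does the opposite), the consistency condition $\arg(a_{12})\equiv\arg(a_{34})\pmod{\pi}$ is exactly $a_{12}a_{34}^*\in\reals$, and the degenerate sub-cases are handled as you say. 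Each approach buys something: your argument is essentially computation-free and gives a structural explanation — under the phase-alignment hypothesis, partial transposition acts on $X$ as conjugation by a (product of) diagonal unitaries — and it adapts more gracefully to higher-dimensional analogues where expanding characteristic polynomials is painful. The paper's identity, on the other hand, is sharper than the lemma it proves: it shows the spectra actually \emph{differ} whenever both $a_{12}a_{34}^*$ and $a_{13}a_{24}^*$ are non-real, which is the direction implicitly invoked for the final counterexample of Section~\ref{6_scn:QQ}; your proof establishes only the sufficient direction claimed in the lemma statement, which is all that is asked.
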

\begin{proof}
    Let $p_X(\lambda)$ and $p_{X^\Gamma}(\lambda)$ denote the characteristic polynomials of $X$ and $X^\Gamma$, the partial transpose of $X$, respectively. Then
    \begin{equation}
    p_X(\lambda)-p_{X^\Gamma}(\lambda)=2\operatorname{Re}(a_{12}a_{34}^\ast a_{13} a_{24}^\ast-a_{12}^\ast a_{34} a_{13} a_{24}^\ast)=4 \operatorname{Im}( a_{13}^\ast a_{24} )\operatorname{Im}( a_{12} a_{34}^\ast),
    \end{equation}
    where $\operatorname{Re}(x)$ ($\operatorname{Im}(x))$ denotes the real (imaginary) part of $x$. The claim follows for $a_{12}a_{34}^\ast\in\reals$. An analogous calculation yields the $a_{13}a_{24}^\ast\in\reals$ case.
\end{proof}

With Lemma~\ref{6_lem:invarSpec} in hand, it is easy to see that $\rho^f$ has a positive partial transpose (and is hence separable) for all $\ket{\phi}$ --- specifically, we observe that $\rho$ satisfies the conditions of Lemma~\ref{6_lem:invarSpec} since $a_{12}a_{34}^\ast=(1/4)(0)=0$, and this in particular holds even after taking the Hadamard product with any $\ketbra{\phi}{\phi}$. Since $\rho$ is positive semidefinite, it thus follows from Lemma~\ref{6_lem:invarSpec} that $\rho^f$ must also be positive semidefinite under partial transposition and hence separable. Thus, there exist QQ separable states for which system-ancilla entanglement cannot be mapped back to the system.



Theorem \ref{6_thm:suffswap} tells us that if a two-qubit state $\rho$ has off-diagonal terms on its off-diagonal blocks for any choice of local bases, then entanglement can be created among the systems via swapping. On the other hand, if $\rho$ is restricted to having off-diagonal blocks which are diagonal, as was seen with the CQ and QQ counterexamples considered in Sections~\ref{6_scn:CQ} and~\ref{6_scn:QQ}, then there are choices of local initial rotations such that entanglement generation among the systems is not necessarily possible (actually, in the CQ case, entanglement generation is not possible for \emph{any} choice of local initial rotations).

One could ask whether this ``diagonal off-diagonal'' block structure is sufficient to rule out the possibility of entanglement generation. The answer is negative. Consider the following (un-normalized) positive semidefinite operator which has diagonal off-diagonal blocks:
\begin{equation}
    \rho=\left(
      \begin{array}{cccc}
        \frac{3}{2} & i & 1 & 0 \\
        -i & \frac{3}{2} & 0 & i \\
        1 & 0 & \frac{3}{2} & 1 \\
        0 & -i & 1 & \frac{3}{2} \\
      \end{array}
    \right).
\end{equation}
It turns out that the partial transposition of $\rho$ has a negative eigenvalue (observe that $\rho$ thus also necessarily violates the conditions of Lemma~\ref{6_lem:invarSpec}). Hence, despite the fact that $\rho$ has off-diagonal blocks which are diagonal, it is nevertheless entangled, implying entanglement transfer to the system is possible for any choice of local bases: indeed, the Hadamard product can be chosen to be trivial, so that the projection simply gives back (a locally rotated and unnormalized) $\rho_\wek{A}$.

%

\chapter{Conclusion}\label{chap:conclusion}


In this thesis, we have studied three areas in quantum computation and information: the approximability of quantum problems, quantum proof systems, and non-classical correlations.  Our results in each of these areas are summarized as follows.

With respect to approximation, we have completed some of the first works in an area aiming to understand the computational complexity of efficiently and rigorously computing approximate solutions to problems which are complete for quantum complexity classes. In Chapter~\ref{chap:approx}, we demonstrated a polynomial time approximation algorithm for dense instances of the canonical QMA-complete problem, the local Hamiltonian problem. This required the derivation of a lower bound on the approximation ratio achievable by product state assignments, which as discussed in Chapter~\ref{chap:approx}, can be seen as negative progress towards a sought-after quantum PCP theorem. Among other open questions discussed therein, perhaps the most natural direction here is the pursuit of further new approximation algorithms for problems complete for QMA, the quantum generalization of NP. In Chapter~\ref{chap:hardnessapprox}, we then proceeded in the opposite direction by demonstrating hardness of approximation results for a new quantum complexity class we defined, $\cqs$. This class is an arguably natural generalization of $\st$, and the hard-to-approximate problems for $\cqs$ we considered are generalizations of classical covering problems obtained via the notion of \emph{quantum} constraint satisfaction (i.e.\ local Hamiltonian constraints). Aside from the obvious open questions here regarding further hardness of approximation results such as a quantum PCP theorem, we would be interested to know to what extent the class $\cqs$ itself may play an important role in quantum complexity theory, just as $\st$ has proven valuable in the classical setting.

With respect to quantum proof systems, in Chapter~\ref{chap:qmapoly} we focused on the question of whether multiple unentangled provers can be simulated by a single prover in the context of QMA proof systems. As the question of whether two provers are as good as one (i.e.\ is $\class{QMA}=\class{QMA}(2)$?) remains open despite much effort, we focused our attention on variants of QMA in which the verification protocol is suitably restricted. In this setting, we showed various results, including a collapse to QMA for a restricted variant of $\class{QMA}(\poly)$, and an alternate proof of a parallel repetition theorem for $\class{SepQMA}(2)$. Understanding the non-trivial ``power of unentanglement''~\cite{ABDFS09} between quantum provers remains a challenging and interesting direction of work.

Finally, with respect to non-classical correlations, in Chapter~\ref{chap:dqc}, we first motivated the study of such correlations by examining their role in the DQC1 trace estimation algorithm, as well as the quantum communication task of locking. Above all, understanding the precise role such correlations play in mixed-state quantum computing remains an important open question. In Chapter~\ref{chap:localunitary}, we then proposed a novel scheme for quantifying non-classical correlations based on a special class of local unitary operations. This raised the question as to how the notions of ``disturbance under measurement'' and ``disturbance under unitary operations'' differ in their characterization of non-classical correlations. Finally, Chapters~\ref{chap:activation} and~\ref{chap:entanglementswap} introduced and studied a protocol which ``activates'' non-classical correlations present in a multipartite quantum system into entanglement between the system and an ancilla. Aside from yielding a new framework through which new non-classicality measures can be discovered, our study here also revealed a surprising result: That mixedness in quantum states can play a very important role in giving rise to non-classical correlations, both for separable and entangled states. We would be interested to see how this framework may be further developed, and moreover whether the ideas behind it may prove useful in a quantum computational setting.

In conclusion, the field of quantum computation and information is, after over two decades of study, arguably no longer in its infancy. With a solid theoretical base and formalism in place, including the quantum circuit model, quantum complexity classes and proof systems, and foundations for quantum information theory, the field now covers a large number of areas of study, of which our focus here is but a small part. Yet, whether quantum computers will, at a practical level, indeed be the wave of the future, is in our opinion not yet entirely clear. What \emph{is} clear, however, is that no matter the outcome, the lessons learned through this line of work have already taught us much about the physical world around us. Indeed, the study of this field has united the physics and computer science communities towards a common ultimate goal: To probe the physical limits of nature and computing themselves. This in itself is no small feat. As it stands, information \emph{is} physical. We would not (and \emph{could not}) have it any other way.



\bibliographystyle{plain}
\cleardoublepage 
\phantomsection  
\renewcommand*{\bibname}{References}

\addcontentsline{toc}{chapter}{\textbf{References}}

\bibliography{Sevag_Gharibian_Phd_Thesis_Bibliography_Abbrv,Sevag_Gharibian_Phd_Thesis_Bibliography}


%
%
%

\end{document}